\crefname{line}{Line}{Lines}
\crefname{rule}{rule}{rules}
\Crefname{rule}{Rule}{Rules}
\let\ifdraft\iffalse
\definecolor{bluebell}{rgb}{0.25,0.5,0.75}
  \newtcolorbox{result}{
    blanker,
    extras={interior engine=spartan},
    grow to left by=3.1mm,left*=0mm,
    grow to right by=2mm,right*=0mm,
    top=1mm,bottom=1mm,
    beforeafter skip balanced=0.1\baselineskip plus 2pt,
    borderline west={1.2mm}{0pt}{bluebell},
    breakable,
    colback=bluebell!10,
  }
  \newtcolorbox{result}{
    blanker,
    extras={interior engine=spartan},
    grow to left by=2pt,left*=0mm,
    grow to right by=2pt,right*=0mm,
    top=1mm,bottom=1mm,
    beforeafter skip balanced=0.1\baselineskip plus 2pt,
    breakable,
    colback=bluebell!10,
  }
  \def\@mkbibcitation{\bgroup \par\medskip\small\noindent{\bfseries This article is an extended version of:}\par\nobreak
      \noindent
      Jialu Bao, Emanuele D’Osualdo, and Azadeh Farzan. 2025.
      \textsc{Bluebell}: An Alliance of Relational Lifting and Independence for Probabilistic Reasoning.
      \textit{Proc. ACM Program. Lang.} 9, POPL, Article 58 (January 2025), 31~pages.
      \url{https://doi.org/10.1145/3704894}
    \par\egroup }
\newenvironment{mathfig}[1][]{\begin{figure}[tp]\adjustfigure[#1]}{\end{figure}}
\newcommand\relabel[1]{\label{app:#1}}
  \renewcommand\relabel{\label}\fi
\begin{document}

\title{\thelogic: An Alliance of Relational Lifting and Independence for Probabilistic Reasoning}
\ifappendix
  \subtitle{(Extended Version)}
\fi

\author{Jialu Bao}
\orcid{0000-0002-2353-350X}
\email{jb965@cornell.edu}
\affiliation{\institution{Cornell University}
 \city{Ithaca}
 \country{NY, US}
}
\author{Emanuele D'Osualdo}
\email{emanuele.dosualdo@uni-konstanz.de}
\orcid{0000-0002-9179-5827}
\affiliation{\institution{MPI-SWS}
 \city{Saarland Informatics Campus}
 \country{Germany}
}
\affiliation{\institution{University of Konstanz}
\country{Germany}
}
\author{Azadeh Farzan}
\orcid{0000-0001-9005-2653}
\email{azadeh@cs.toronto.edu}
\affiliation{\institution{University of Toronto}
 \city{Toronto}
 \country{Canada}
}

\begin{abstract}
We present \thelogic, a program logic for reasoning about probabilistic programs where unary and relational styles of reasoning come together to create new reasoning tools.
Unary-style reasoning is very expressive and is powered by foundational mechanisms to reason about probabilistic behavior like \emph{independence} and \emph{conditioning}.
The relational style of reasoning, on the other hand, naturally shines when the properties of interest \emph{compare} the behavior of similar programs
(e.g. when proving differential privacy)
managing to avoid having to characterize the output distributions of the individual programs.
So far, the two styles of reasoning
have largely remained separate in the many program logics designed for the deductive verification of probabilistic programs.
In \thelogic, we unify these styles of reasoning through the introduction of a new modality called ``\supercond'' that can encode and illuminate the rich interaction between \emph{conditional independence} and \emph{relational liftings};
the two powerhouses from the two styles of reasoning.
\end{abstract}

\begin{CCSXML}
<ccs2012>
   <concept>
       <concept_id>10003752.10003790.10011742</concept_id>
       <concept_desc>Theory of computation~Separation logic</concept_desc>
       <concept_significance>500</concept_significance>
       </concept>
   <concept>
       <concept_id>10003752.10003790.10002990</concept_id>
       <concept_desc>Theory of computation~Logic and verification</concept_desc>
       <concept_significance>500</concept_significance>
       </concept>
   <concept>
       <concept_id>10003752.10003753.10003757</concept_id>
       <concept_desc>Theory of computation~Probabilistic computation</concept_desc>
       <concept_significance>500</concept_significance>
       </concept>
   <concept>
       <concept_id>10011007.10010940.10010992.10010998.10010999</concept_id>
       <concept_desc>Software and its engineering~Software verification</concept_desc>
       <concept_significance>300</concept_significance>
       </concept>
   <concept>
       <concept_id>10003752.10010124.10010138.10010142</concept_id>
       <concept_desc>Theory of computation~Program verification</concept_desc>
       <concept_significance>500</concept_significance>
       </concept>
 </ccs2012>
\end{CCSXML}

\ccsdesc[500]{Theory of computation~Separation logic}
\ccsdesc[500]{Theory of computation~Logic and verification}
\ccsdesc[500]{Theory of computation~Probabilistic computation}
\ccsdesc[300]{Software and its engineering~Software verification}
\ccsdesc[500]{Theory of computation~Program verification}

\keywords{
  Deductive Verification,
  Relational Logic,
  Conditional Independence,
  Relational Lifting,
  Weakest Precondition
}

\maketitle

\section{Introduction}\label{sec:intro}
Probabilistic programs are pervasive, appearing as
  machine learned subsystems,
  implementations of randomized algorithms,
  cryptographic protocols, and
  differentially private components,
among many more.
Ensuring reliability of such programs requires formal frameworks
in which correctness requirements can be formalized and verified
for such programs.
Similar to the history of classical program verification,
a lot of progress in this has come in the form of program logics
for probabilistic programs.
In the program logic literature,
there are two main styles of reasoning for probabilistic programs:
\emph{unary} and \emph{relational},
depending on the nature of the property of interest.
For instance, for differential privacy or cryptographic protocols correctness,
the property of interest is naturally expressible relationally.
In contrast, for example, specifying the expected cost of a
randomized algorithm is naturally done in the unary style.

Unary goals are triples $ \{P\}\ t\ \{Q\}$ where
$t$ is a probabilistic program,
$P$ and $Q$ are the pre- and post-conditions,
\ie \emph{predicates over distributions of stores}.
Such triples assert that
running~$t$ on an input store drawn from a distribution satisfying~$P$
results in a distribution over output stores which satisfies~$Q$.
Unary reasoning for probabilistic programs has made great strides,
producing logics for reasoning about
  expectations~\cite{kozen1983PDL,Morgan:1996,kaminski2016weakest,kaminski2019thesis,aguirre2021pre,Bartocci2022moment}
  and probabilistic independence~\cite{barthe2019probabilistic}.
  DIBI~\cite{bao2021bunched} and Lilac~\cite{lilac},
  which are the most recent, made a strong case for adding power to reason
about conditioning and independence in one logic.
Intuitively, conditioning on some random variable~\p{x}
allows to focus on the distribution of other variables
assuming~$\p{x}$ is some deterministic outcome~$v$;
two variables are (conditionally) independent if
knowledge of one does not give any knowledge of the other (under conditioning).
Lilac argued for (conditional) independence as the fundamental source of
modularity in the probabilistic setting.

Relational goals, in contrast, specify a desired relation between the output
distributions of \emph{two} programs~$t_1$ and~$t_2$,
for example, that~$t_1$ and~$t_2$ produce the same output distribution.
In principle, proving such goals can be approached in a unary style:
if the output distributions can be characterized
individually for each program,
then they can be compared after the fact.
More often than not, however,
precisely characterizing the output distribution of a program
can be extremely challenging.
Relational program logics like pRHL~\cite{barthe2009formal}
and its successors~\cite{barthe2009formal,barthe2015coupling,hsu2017probabilistic,gregersen2023asynchronous,AguirreBGGS19},
allow for a different and often more advantageous strategy.
The idea is to consider the two programs side by side,
and analyse their code as if executed in lockstep.
If the effect of a step on one side is ``matched'' by the corresponding
step on the other side,
then the overall outputs would also ``match''.
This way, the proof only shows that whatever is computed on one side,
will be matched by a computation on the other, without having to characterise
what the actual output is.

This idea of ``matching'' probabilistic steps is formalised in these logics
via the notion of \emph{couplings}~\cite{barthe2009formal,barthe2015coupling}.
The two programs can be conceptually considered to execute in
two ``parallel universes'', where they are oblivious to each other's randomness.
It is therefore sound to pretend their executions draw samples from
a common source of randomness (called a \emph{coupling})
in any way that eases the argument,
as long as the marginal distribution of the correlated runs in each universe
coincides with the original one.
For example, if both programs flip a fair coin,
one can force the outcomes of the coin flips to be the same
(or the opposite of each other,
depending on which serves the particular line of argument better).
Relating the samples in a specific way helps with
relating the distributions step by step, to support a relational goal.
Couplings, when applicable, permit relational logics to elegantly sidestep
the need to characterize the output distributions precisely.
As such, relational logics hit an ergonomic sweet spot in reasoning style
by restricting the form of the proofs that can be carried out.

Consider, for example, the code in
\cref{fig:between-code}.
The \code{BelowMax($x, S$)} procedure takes~$N$ samples
from a non-empty set~$S \subs \Int$,
according to an (arbitrary) distribution $\prob_S \of \Dist(S)$;
if any of the samples is larger than the given input~$x$
it declares~$x$ to be below the maximum of~$S$.
The \code{AboveMin($x, S$)} approximates in the same way
whether~$x$ is above the minimum of~$S$.
These are Monte Carlo style algorithms with a \emph{false bias}:
if the answer is false, they always correctly produce it,
and if the answer is true, then they correctly classify it
with a probability that depends on~$N$
(i.e., the number of samples).
It is a well-known fact that Monte Carlo style algorithms can be composed.
For example, \p{BETW\_SEQ} runs
\code{BelowMax($x, S$)} and \code{AboveMin($x, S$)}
to produce a \emph{false-biased} Monte Carlo algorithm
for approximately deciding whether~$x$ lies \emph{within} the extrema of~$S$.
Now, imagine a programmer proposed \p{BETW},
as a way of getting more mileage out of the number of samples drawn;
both procedures take~$2N$ samples,
but \p{BETW} performs more computation for each sample.
Such optimisations are not really concerned about
what the precise output distribution of each code is,
but rather that a \emph{true} answer is produced
with higher probability by \p{BETW};
in other words, its \emph{stochastic dominance} over \p{BETW\_SEQ}.

A unary program logic has only one way of reasoning
about this type of stochastic-dominance:
it has to analyze each code in isolation,
characterize its output distribution,
and finally assert/prove that one dominates the other.
In contrast, there is a natural \emph{relational strategy}
for proving this goal:
we can match the~$N$ samples of \p{BelowMax}
with~$N$ of the samples of \p{BETW}, and the~$N$ samples of \p{AboveMin}
with the remaining samples of \p{BETW} in lockstep,
and for each of these aligned steps,
\p{BETW} has more chances of turning \p{l} and \p{r} to~1
(and they can only increase).

\begin{figure*}
  \adjustfigure[\small]\lstset{
belowskip=-5pt,
    gobble=2,
  }\setlength\tabcolsep{0pt}\begin{tabular*}{\textwidth}{
      @{\extracolsep{\fill}}
      *{4}{p{\textwidth/4}}@{}
    }
\begin{sourcecode*}
  def BelowMax($x$,$S$):
    repeat $N$:
      q :~ $\prob_S$
      r := r || q >= $x$
  \end{sourcecode*}
&
  \begin{sourcecode*}
  def AboveMin($x$,$S$):
    repeat $N$:
      p :~ $\prob_S$
      l := l || p <= $x$
  \end{sourcecode*}
&
  \begin{sourcecode*}
  def BETW_SEQ($x$, $S$):
    BelowMax($x$,$S$);
    AboveMin($x$,$S$);
    d := r && l
  \end{sourcecode*}
  &
  \begin{sourcecode*}
  def BETW($x$,$S$):
    repeat $2 N$:
      s :~ $\prob_S$
      l := l || s <= $x$
      r := r || s >= $x$
    d := r && l
  \end{sourcecode*}
  \end{tabular*}
  \caption{A stochastic dominance example: composing Monte Carlo algorithms two different ways.
    \ifappendix All variables are initially~0,
      $N\in\Nat$ is some fixed constant,
      $S$ is a set of integers, and
      $\prob_S$ is some given distribution of elements of~$S$.
      Both \p{BETW\_SEQ($x$,$S$)} and \p{BETW($x$,$S$)}
      approximately decide whether~$x$ lies within the extrema of~$S$.\else $N\in\Nat$ is some fixed constant, and all variables are initially~0.\fi }
  \label{fig:between-code}
\end{figure*}

Unary logics can express information about distributions
with arbitrary levels of precision;
yet none can encode the simple natural proof idea outlined above.
This suggests an opportunity:
Bring native relational reasoning support to an expressive unary logic,
like Lilac.
Such a logic can be based on assertions over distributions, offering the precision and expressiveness of unary logics while natively supporting relational reasoning.
As a result, it would be able to encode the argument outlined above
at the appropriate level of abstraction.
To explore this idea,
let us outline the central principle that we would need
to import from relational reasoning:
\emph{relational lifting}.

Relational logics use variants of judgments of the form
$ \{R_1\} \m[\I1: t_1, \I2: t_2] \{R_2\}$, where
$t_1$ and $t_2$ are the two programs we are comparing and
$R_1$ and $R_2$ are the relational pre- and post-conditions.
$R_1$ and $R_2$ differ from unary assertions in two ways:
first they are used to relate two distributions
instead of constraining a single one.
Second, they are predicates over \emph{pairs of stores},
and not of distributions directly.
Let us call predicates of this type ``deterministic relations''.
If~$R$ was a deterministic predicate over a single store,
requiring it to hold with probability~1 would naturally lift it
to a predicate $\sure{R}$ over distributions of stores.
When~$R$ is a deterministic relation between pairs of stores,
its \emph{relational lifting}~$\cpl{R}$ relates two distributions over stores
$\prob_1,\prob_2 \of \Dist(\Store)$,
if (1) there is a distribution over \emph{pairs} of stores
$\prob \of \Dist(\Store\times\Store)$
such that its marginal distributions on the first and second store
coincide with $\prob_1$ and $\prob_2$ respectively,
(\ie $\prob$ is a \emph{coupling} of $\prob_1$ and $\prob_2$)
and (2) $\prob$ samples pairs of stores
satisfying the relation~$R$ with probability~1.
Such relational liftings can
encode a variety of useful relations between distributions.
For instance, let $R = (\Ip{x}{1} = \Ip{x}{2})$
relate stores~$s_{\I{1}}$ and~$s_{\I{2}}$
if they both assign the same value to \p{x};
then the lifting $\cpl{R}$ holds for two distributions
$\prob_{\I{1}},\prob_{\I{2}} \of \Dist(\Store)$
if and only if they induce the same distributions in \p{x}.
Similarly, the lifting $\cpl{\Ip{x}{1} \leq \Ip{x}{2}}$
encodes stochastic dominance of the distribution of \p{x}
in $\prob_{\I{2}}$ over the one in $\prob_{\I{1}}$.

Relational proofs built out of relational lifting
then work by using deterministic relations as assertions,
and showing that a suitably coupled lockstep execution
of the two programs satisfies each assertion with probability~1.
To bring relational reasoning to unary logics,
we want to preserve the fact that assertions are over distributions,
and yet support relational lifting as the key abstraction
to do relational reasoning.
This new logic can equally be viewed as a relational logic
with assertions over pairs of distributions
(rather than over pairs of stores).
With such a view,
seeing relational lifting as one among many constructs
to build assertions seems like a very natural, yet completely unexplored, idea.

What is entirely non-obvious is whether relational lifting
works well as an abstraction together with the other key ``unary'' constructs,
such as independence and conditioning,
that are the source of expressive power of unary logics.
\ifappendix The properties of couplings already provide a source of examples
for one way in which unary and relational facts might interact.
For example, from a well-known property of couplings,
\else
For example, from the properties of couplings,
\fi we know that establishing $\cpl{\Ip{x}{1} = \Ip{x}{2}}$ implies that
$\Ip{x}{1}$ and $\Ip{x}{2}$ are identically distributed;
this can be expressed as an entailment:
\begin{equation}
  \cpl{\Ip{x}{1} = \Ip{x}{2}}
  \lequiv
  \E \prob.
    \distAs{\Ip{x}{1}}{\prob}
    \land
    \distAs{\Ip{x}{2}}{\prob}
  \label{eq:rl-id-conv}
\end{equation}
The equivalence says that establishing a coupling that can
(almost surely) equate the values of $\Ip{x}{1}$ and $\Ip{x}{2}$,
amounts to establishing that the two variables are identically distributed.
The equivalence can be seen as a way to interface ``unary'' facts
and relational liftings.

Probability theory is full of lemmas of this sort and it is clearly undesirable to admit any lemma that is needed for one proof or another as an axiom in the program logic.
Can we have a logic in which they are derivable without having to abandon its nice abstractions?
Can the two styles be interoperable at the level of the logic?
In this paper, we provide an affirmative answer to this question by proposing a new program logic called \thelogic.

We propose that relational lifting does in fact have non-trivial and useful
interactions with independence and conditioning.
Remarkably, \thelogic's development is unlocked by
a more fundamental observation:
once an appropriate notion of conditioning is defined in \thelogic,
relational lifting and its laws can be derived from this foundational
conditioning construct.

The key idea is a new characterization of relational lifting as a form of
conditioning:
whilst relational lifting is usually seen as a way to induce a relation over distributions from a deterministic relation,
\thelogic\ sees it as a way to go from
a tuple of distributions to a relation between the values of some conditioned variables.
More precisely:
\begin{itemize}
  \item
    We introduce a new \emph{\supercond} modality in \thelogic\
    which can be seen, in hindsight,
    as a natural way to condition when dealing with tuples of distributions.
  \item
    We show that \supercond\ can represent uniformly
    both, conditioning \emph{à la} Lilac,
    and relational lifting as derived notions in \thelogic.
  \item
    We prove a rich set of general rules for \supercond,
    from which we can derive both known and novel proof principles
    for conditioning and for relational liftings in \thelogic.
\end{itemize}

Interestingly, our \supercond\ modality can replicate the same reasoning
style of Lilac's modality, while having a different semantics
(and validating an overlapping but different set of rules as a result).
This deviation in the semantics is a stepping stone for obtaining an
adequate generalization to the \pre n-ary case (unifying unary and binary as special cases).
We expand on these ideas in \cref{sec:overview}, using a running example.
More importantly, our \supercond\  enables \thelogic\ to
\begin{itemize}
\item Accommodate unary and relational reasoning
  in a fundamentally interoperable way: For instance, we showcase the interaction between lifting and conditioning in the derivation of our running example in \cref{sec:overview}.
\item Illuminate known reasoning principles: For instance, we discuss how \thelogic\ emulates pRHL-style reasoning
  in~\cref{sec:ex:prhl-style}.
\item Propose new tools to build program proofs: For instance, we discuss out-of-order coupling of samples through \ref{rule:seq-swap} in \cref{sec:overview:obox}.\item Enable the exploration of the theory of high-level constructs
   like relational lifting (via the laws of independence and \supercond): For instance, novel broadly useful rules \ref{rule:rl-merge} and \ref{rule:rl-convex}, discussed in \cref{sec:overview} can be derived within \thelogic.
\end{itemize}
All proofs, omitted details, and additional examples can be found in~\ifappendix the Appendix.\else\cite{fullversion}.\fi  \ifappendix\pagebreak\fi
\section{A Tour of \thelogic}
\label{sec:overview}

In this section we will highlight the main key ideas behind
\thelogic, using a running example.

\subsection{The Alliance}
\label{sec:overview:intro}

\begin{wrapfigure}[7]{R}{20ex}\begin{sourcecode*}[linewidth=.9\linewidth,xleftmargin=.1\linewidth,
    belowskip=-.5em,aboveskip=-1em,
    gobble=2
  ]
  def encrypt():
    k :~ Ber(1/2)
    m :~ Ber($p$)
    c := k xor m
  \end{sourcecode*}\caption{One time pad.}
  \label{fig:onetime-code}
\end{wrapfigure}
We work with a first-order imperative probabilistic programming language
consisting of programs~$t\in\Term$ that mutate a variable store~$\store\in\Store$
(\ie a finite map from variable names~$\Var$ to values~$\Val$).
We only consider discrete distributions
(but with possibly infinite support).
In \cref{fig:onetime-code} we show a simple example adapted from~\cite{barthe2019probabilistic}:
the \p{encrypt} procedure uses a fair coin flip to generate an encryption key
\p{k}, generates a plaintext message in boolean variable \p{m}
(using a coin flip with some bias~$p$)
and produces the ciphertext \p{c} by XORing the key and the message.
A desired property of the program is that the ciphertext should be
indistinguishable from an unbiased coin flip; as a binary triple:
\begin{equation}
  \{\True\}
  \m[
    \I1: \code{encrypt()},
    \I2: \code{c:~Ber(1/2)}
  ]
  \{
    \cpl{ \Ip{c}{1}=\Ip{c}{2} }
  \}
  \label{ex:xor:goal}
\end{equation}
where we use the $\at{i}$ notation to indicate the index of the program store that an expression references.
In \cref{sec:ex:one-time-pad}, we discuss a unary-style proof of this goal in \thelogic. Here, we focus on a relational argument, as a running example. The natural (relational) argument goes as follows.
When computing the final XOR,
  if $\p{m}=0$ then \code{c=k},
  and if $\p{m}=1$ then \code{c=!k}.
Since both $\Ip{k}{1}$ and $\Ip{c}{2}$ are distributed as \emph{unbiased} coins,
they can be coupled either so that they get the same value,
or so that they get opposite values (the marginals are the same).
One or the other coupling must be established
\emph{conditionally} on~$\Ip{m}{1}$, to formalize this argument.
Doing so in pRHL faces the problem that the logic is too rigid to permit one to
condition on $\Ip{m}{1}$ before $\Ip{k}{1}$ is sampled; rather it forces one to establish a coupling of $\Ip{k}{1}$ and $\Ip{c}{2}$ right when the two samplings happen.
This rigidity is a well-known limitation of relational logics,
which we overcome by ``immersing'' relational lifting
in a logic with assertions on distributions.
Recent work~\cite{gregersen2023asynchronous}
proposed workarounds based on ghost code for pre-sampling
(see~\cref{sec:relwork}).
We present a different solution based on framing, to the generic problem of out-of-order coupling, in~\cref{sec:overview:obox}.

Unconstrained by the pRHL assumption that every assertion has to be represented as a relational lifting, we observe three crucial components in the proof idea:
\begin{enumerate}
\item
  \emph{Probabilistic independence}
  between the sampling of $\Ip{k}{1}$ and $\Ip{m}{1}$,
  which makes conditioning on $\Ip{m}{1}$ preserve the
  distribution of $\Ip{k}{1}$;
\item
  \emph{Conditioning} to perform case analysis
  on the possible values of $\Ip{m}{1}$;
\item
  \emph{Relational lifting}
  to represent the existence of couplings imposing the desired
  correlation between $\Ip{k}{1}$ and $\Ip{c}{2}$.
\end{enumerate}
Unary logics like
  Probabilistic Separation Logics~(PSL)
  \cite{barthe2019probabilistic} and
  Lilac
explored how probabilistic independence
can be represented as \emph{separating conjunction},
obtaining remarkably expressive and elegant reasoning principles.
In \thelogic, we import the notion of independence from Lilac:
\thelogic's assertions are interpreted over
tuples of probability spaces~$\m{\psp}$,
and $ Q_1 * Q_2 $  holds on $\m{\psp}$ if
$\m{\psp}(i)$ can be seen as the \emph{independent product}
of $ \m{\psp}_1(i) $ and $\m{\psp}_2(i)$,
for each~$i$,
such that the tuples $\m{\psp}_1$ and $\m{\psp}_2$ satisfy $Q_1$ and $Q_2$
respectively.
This means that
  $\distAs{\Ip{x}{1}}{\prob} * \distAs{\Ip{y}{1}}{\prob}$
states that $\Ip{x}{1}$ and $\Ip{y}{1}$ are independent and identically distributed,
as opposed to
  $\distAs{\Ip{x}{1}}{\prob} \land \distAs{\Ip{y}{1}}{\prob}$
which merely declares the two variables as identically distributed
(but possibly correlated).
For a unary predicate over stores~$R$
we write $\sure{R\at{i}}$ to mean that
the predicate~$R$ holds with probability~1
in the distribution at index~$i$.

With these tools it is easy to get through the first two assignments
of \p{encrypt} and the one on component $\I2$ and get to a state
satisfying the assertion
\begin{equation}
  P =
  \distAs{\Ip{k}{1}}{\Ber{\onehalf}}
  *
  \distAs{\Ip{m}{1}}{\Ber{p}}
  *
  \distAs{\Ip{c}{2}}{\Ber{\onehalf}}
  \label{ex:xor:start}
\end{equation}

The next ingredient we need is conditioning.
We introduce a new modality $\CMod{\prob}$ for conditioning,
in the spirit of Lilac.
The modality takes the form
$
  \CC\prob v.K(v)
$
where~$\prob$ is a distribution, $v$ is a logical variable bound
by the modality (ranging over the support of~$\prob$),
and $K(v)$ is a family of assertions indexed by~$v$.
Before exploring the general meaning of the modality
(which we do in \cref{sec:overview:supercond}),
let us illustrate how we would represent conditioning on $\Ip{m}{1}$
in our running example.
We know $\Ip{m}{1}$ is distributed as~$\Ber{p}$;
conditioning on $\Ip{m}{1}$ in \thelogic\ would give us
an assertion of the form
$\CC{\Ber{p}} v. K(v)$
where~$v$ ranges over~$\set{0,1}$,
and the assertions
$K(0) = \sure{\Ip{m}{1}=0} * P_0 $
and
$K(1) = \sure{\Ip{m}{1}=1} * P_1 $
(for some $P_0,P_1$)
represent the properties of the distribution conditioned on
$ \Ip{m}{1}=0 $ and $ \Ip{m}{1}=1 $, respectively.
Intuitively, the assertion states that the current distribution
satisfies $K(v)$ when conditioned on $\Ip{m}{1}=v$.
Semantically, a distribution $\prob_0$ satisfies the assertion
$\CC{\Ber{p}} v. K(v)$
if there exists distributions $ \krnl_0, \krnl_1$ such that
$\krnl_0$ satisfies $ K(0)$,
$\krnl_1$ satisfies $ K(1)$, and
$\prob_0$ is the convex combination
$
  \prob_0 = p \cdot \krnl_1 + (1-p) \cdot \krnl_0.
$
When $K(v)$ constrains, as in our case, the value of a variable
(here $\Ip{m}{1}$) to be~$v$,
the only $\krnl_0$ and $\krnl_1$ satisfying the above constraints
are the distribution $\prob_0$ conditioned on the variable being 0 and 1 respectively.

Combining independence and conditioning with the third ingredient,
relational lifting~$\cpl{R}$ (we discuss more about how to define it
in~\cref{sec:overview:supercond}), we can now express with an assertion the desired
conditional coupling we foreshadowed in the beginning:
\begin{equation}
  Q =
  \CC{\Ber{p}} v.
    \left(
      \sure{\Ip{m}{1}=v}
      *
      \begin{cases}
        \cpl{ \Ip{k}{1} = \Ip{c}{2} }     \CASE v=0 \\
        \cpl{ \Ip{k}{1} = \neg\Ip{c}{2} } \CASE v=1
      \end{cases}
    \right)
  \label{ex:xor:ccouple}
\end{equation}
The idea is that we first condition on $\Ip{m}{1}$
so that we can see it as the deterministic value~$v$,
and then we couple $\Ip{k}{1}$ and $\Ip{c}{2}$ differently
depending on~$v$.

To carry out the proof idea formally, we are left with two subgoals.
The first is to formally prove the entailment
$
  P \proves Q.
$
Then, it is possible to prove that after
the final assignment to \p{c} at index \I{1},
the program is in a state that satisfies
$Q * \sure{\Ip{c}{1} = \Ip{k}{1} \xor \Ip{m}{1}}$.
To finish the proof we would need to prove that
$
  Q * \sure{\Ip{c}{1} = \Ip{k}{1} \xor \Ip{m}{1}}
  \proves
  \cpl{ \Ip{c}{1} = \Ip{c}{2} }.
$
These missing steps need laws
governing the interaction among independence, conditioning and relational lifting in this \pre n-ary setting.

\begin{result}
A crucial observation of \thelogic{}
is that, by choosing an appropriate definition for the
\supercond\ modality~$\CMod{\prob}$,
relational lifting can be encoded as a form of conditioning.
Consequently, the laws governing relational lifting can be derived
from the more primitive laws for \supercond.
Moreover, the interactions between relational lifting and independence can be derived through the primitive laws for the interactions between \supercond\ and independence.
\end{result}

\subsection{\SuperCond\ and Relational Lifting}
\label{sec:overview:supercond}

Now we introduce the \emph{joint} conditioning modality and its general \pre n-ary version.
Given
$\prob \of \Dist(A)$ and
a function $\krnl \from A \to \Dist(B)$ (called a \emph{Markov kernel}),
define the distribution $\bind(\prob, \krnl) \of \Dist(B)$ as
$
  \bind(\prob, \krnl) =
    \fun b.\Sum*_{a\in A} \prob(a) \cdot \krnl(a)(b)
$ and $
  \return(v) = \dirac{v}.
$
The $\bind$ operation represents a convex combination with coefficients in
$\prob$, while $\dirac{v}$ is the Dirac distribution, which assigns probability~1
to the outcome~$v$.
These operations form a monad with the distribution functor $\Dist(\hole)$,
a special case of the Giry monad~\cite{giry1982categorical}.
Given a distribution $\prob \of \Dist(A)$,
and a predicate~$K(a)$ over pairs of distributions
parametrized by values~$a\in A$,
we define
$
  \CMod{\prob} a\st K(a)
$
to hold on some $(\prob_1,\prob_2)$ if
\begin{align*}
  \exists \krnl_1,\krnl_2 \st
  \forall i \in \set{1,2} \st
    \prob_i = \bind(\prob, \krnl_i)
  \land
  \forall a \in \psupp(\prob) \st
    K(a) \text{ holds on }
    (\krnl_1(a), \krnl_2(a))
\end{align*}
Namely, we decompose the pair $(\prob_1,\prob_2)$ component wise
into convex compositions of $\prob$ and some kernels~$\krnl_1,\krnl_2$,
one per component.
Then for every $a$ with non-zero probability in~$\prob$, we require the predicate~$K(a)$ to hold for the pair of distributions
$ (\krnl_1(a), \krnl_2(a)) $.
The definition naturally extends to any number of indices.

One powerful application of the \supercond modality is to encode relational
liftings $\cpl{R}$.
Imagine we want to express $\cpl{ \Ip{k}{1} = \Ip{c}{2} }$.
It suffices to assert that there exists some distribution
$\prob \of \Dist(\Val\times\Val)$ over pairs of values such that
$
  \CC\prob (v_1,v_2).\bigl(
    \sure{\Ip{k}{1} = v_1} \land
    \sure{\Ip{c}{2} = v_2} \land
    \pure{v_1=v_2}
  \bigr),
$
where $\pure{\phi}$ denote the embedding of a pure fact~$\phi$ (\ie a meta-level statement) into the logic. Let us digest the formula step-by-step.
$\CC\prob (v_1,v_2). \bigl(\sure{\Ip{k}{1} = v_1} \land \sure{\Ip{c}{2} = v_2}\bigr)$
conditions the distribution at index \I{1} on $\Ip{k}{1} = v_1$ and
conditions the distribution at index \I{2} on $\Ip{c}{2} = v_2$;
such simultaneous conditioning is possible only if $\prob$ projected
to its first index, $\prob \circ \inv{\proj_1}$, is the marginal distribution
of $\Ip{k}{1}$ and $\prob$ projected to its second index, $\prob \circ
\inv{\proj_2}$, is the marginal distribution of $\Ip{c}{2}$.
Thus, $\prob$ is a joint distribution -- a.k.a. coupling -- of the marginal distributions of $\Ip{k}{1}$ and $\Ip{c}{2}$.
The full assertion $\CC\prob (v_1,v_2).(
    \sure{\Ip{k}{1} = v_1} \land
    \sure{\Ip{c}{2} = v_2} \land
    \pure{v_1=v_2})$
ensures that the coupling $\prob$ has non-zero probabilities only on pairs
$(v_1, v_2)$ where $v_1 = v_2$, and this is exactly the requirement of the relational lifting $\cpl{\Ip{k}{1} = \Ip{c}{2}}$.

The idea generalizes to arbitrary relation lifting $\cpl{R}$,
which are encoded using assertions of the form
$
  \E\prob.\CC{\prob} (\vec{v}_1,\vec{v}_2).\bigl(
    \sure{\vec{\p{x}}\at{\I1}=\vec{v}_1}
    \land
    \sure{\vec{\p{x}}\at{\I2}=\vec{v}_2}
    \land
    \pure{R(\vec{v}_1,\vec{v}_2)}
  \bigr).
$
The encoding hinges on the crucial decision in the design of the
\supercond modality of using the same distribution~$\prob$ to
decompose the distributions at all indices.
Then, the assertion inside the conditioning can force $\prob$ to be
a joint distribution of (marginal) distributions of program states at
different indices; and it can further force $\prob$ to have non-zero
probability only on pairs of program states that satisfy the relation~$R$.

\begin{result}
The remarkable fact is that our formulation of
relational lifting directly explains:
\begin{enumerate}
\item How the relational lifting can be \emph{established}:
  that is, by providing some joint distribution~$\prob$ for
  $\Ip{k}{1}$ and $\Ip{c}{2}$ ensuring~$R$ (the relation being lifted)
  holds for their joint outcomes;
  and
\item
  How the relational lifting can be \emph{used} in entailments:
  that is, it guarantees that if one conditions on the store,
  $R$~holds between the (now deterministic) variables.
\end{enumerate}
\end{result}

To make these definitions and connections come to fruition we need to
study which laws are supported by the \supercond\ modality
and whether they are expressive enough to reason about distributions pairs
without having to drop down to the level of semantics.

\subsection{The Laws of \SuperCond}

We survey the key laws for \supercond\ in this section, and explore a vital consequence of defining both conditional independence and relational lifting based on the \supercond modality: the laws of both can be derived from a set of expressive laws about \supercond\ alone. To keep the exposition concrete,
we focus on a small subset of laws that are enough to prove the example
of \cref{sec:overview:intro}.
Let us focus first on proving:
\begin{equation}
  \distAs{\Ip{k}{1}}{\Ber{\onehalf}}
  *
  \distAs{\Ip{m}{1}}{\Ber{p}}
  *
  \distAs{\Ip{c}{2}}{\Ber{\onehalf}}
  \proves
  \CC{\Ber{p}} v.
    \left(
      \sure{\Ip{m}{1}=v}
      *
      \begin{cases}
        \cpl{ \Ip{k}{1} = \Ip{c}{2} }     \CASE v=0 \\
        \cpl{ \Ip{k}{1} = \neg\Ip{c}{2} } \CASE v=1
      \end{cases}
    \right)
  \label{ex:xor:entail1}
\end{equation}

\noindent
We need the following primitive laws of \supercond:
\begin{proofrules}
  \infer*[lab=c-unit-r]{}{
  \distAs{\aexpr\at{i}}{\mu}
  \lequiv
  \CC\prob v.\sure{\aexpr\at{i}=v}
}   \label{rule:c-unit-r}

  \infer*[lab=c-frame]{}{
  P * \CC\prob v.K(v)
  \proves
  \CC\prob v.(P * K(v))
}   \label{rule:c-frame}

  \infer*[lab=c-cons]{
  \forall v\st
  K_1(v) \proves K_2(v)
}{
  \CC\prob v.K_1(v)
  \proves
  \CC\prob v.K_2(v)
}   \label{rule:c-cons}
\end{proofrules}

\Cref{rule:c-unit-r} can convert back and forth from
ownership of an expression~$E$ at~$i$ distributed as~$\prob$,
and the conditioning on~$\prob$ that makes~$E$ look deterministic.
\Cref{rule:c-frame} allows to bring inside conditioning
any resource that is independent from it.
\Cref{rule:c-cons} simply allows to apply entailments inside \supercond.
We can use these laws to perform conditioning on~$\Ip{m}{1}$:
\begin{eqexplain}
&
  \distAs{\Ip{k}{1}}{\Ber{\onehalf}}
  *
  \distAs{\Ip{m}{1}}{\Ber{p}}
  *
  \distAs{\Ip{c}{2}}{\Ber{\onehalf}}
\whichproves
  \distAs{\Ip{k}{1}}{\Ber{\onehalf}}
  *
  (\CC{\Ber{p}} v.\sure{\Ip{m}{1}=v})
  *
  \distAs{\Ip{c}{2}}{\Ber{\onehalf}}
\byrule{c-unit-r}
\whichproves
  \CC{\Ber{p}} v.
  \left(
    \sure{\Ip{m}{1}=v}
    *
    \distAs{\Ip{k}{1}}{\Ber{\onehalf}}
    *
    \distAs{\Ip{c}{2}}{\Ber{\onehalf}}
  \right)
\byrule{c-frame}
\end{eqexplain}
Here we use \ref{rule:c-unit-r} to convert ownership of~$\Ip{m}{1}$
into its conditioned form.
Then we can bring the other independent variables inside the conditioning
with \ref{rule:c-frame}.
This derivation follows in spirit the way in which Lilac
introduces conditioning, thus inheriting its ergonomic elegance.
Our rules however differ from Lilac's in both form and substance.
First, Lilac's rule for introducing conditioning (called \textsc{C-Indep}),
requires converting ownership of a variable into conditioning, and bringing some independent resources inside conditioning, as a single monolithic step.
In \thelogic\ we accomplish this pattern as a combination of our
\ref{rule:c-unit-r} and \ref{rule:c-frame},
which are independently useful.
Specifically, \ref{rule:c-unit-r} is bidirectional,
which makes it useful to recover unconditional facts from conditional ones.
Furthermore, we recognize that \ref{rule:c-unit-r} is nothing but
a reflection of the right unit law of the monadic structure of distributions
(which we elaborate on in \cref{sec:logic}).
This connection prompted us to provide rules that reflect the remaining
monadic laws (left unit~\ref{rule:c-unit-l} and
associativity~\ref{rule:c-fuse}). It is noteworthy that these rules do not follow from Lilac's proofs:
our modality has a different semantics, and our rules seamlessly apply to
assertions of any arity.

To establish the conditional relational liftings of the entailment in \eqref{ex:xor:entail1},
\thelogic\ provides a way to introduce relational liftings from ownership of the distributions
of some variables:
\begin{proofrule}
  \infer*[lab=coupling]{
  \prob \circ \inv{\proj_1} = \prob_1
  \\
  \prob \circ \inv{\proj_2} = \prob_2
  \\
  \prob(R) = 1
}{
  \distAs{\p{x}_1\at{\I1}}{\prob_1} *
  \distAs{\p{x}_2\at{\I2}}{\prob_2}
  \proves
  \cpl{R(\p{x}_1\at{\I1}, \p{x}_2\at{\I2})}
}
   \label{rule:coupling}
\end{proofrule}
The side conditions of the rule ask the prover to provide a coupling
$\prob \of \Dist(\Val\times\Val)$
of $\prob_1 \of \Dist(\Val)$ and $\prob_2 \of \Dist(\Val)$,
which assigns probability~1 to a (binary) relation~$R$.
If $\p{x}_1\at{\I1}$ and $\p{x}_2\at{\I2}$ are distributed as $\prob_1$ and $\prob_2$, respectively, then the relational lifting of $R$ holds between them
(as witnessed by the existence of~$\prob$).
Note that for the rule to apply, the two variables need to live in distinct indices.
\begin{result}
Interestingly, \ref{rule:coupling} can be derived from the
encoding of relational lifting and the laws of \supercond.
\end{result}
Remarkably, although the rule mirrors the
step of coupling two samplings in a pRHL proof,
it does not apply to the code doing the sampling itself,
but to the assertions representing the effects of those samplings.
This allows us to delay the forming of coupling to until
all necessary information is available (here, the outcome of $\Ip{m}{1}$).
We can use \ref{rule:coupling} to prove both entailments:
{\begin{equation}
  \distAs{\Ip{k}{1}}{\Ber{\onehalf}} *
  \distAs{\Ip{c}{2}}{\Ber{\onehalf}}
  \proves
  \cpl{ \Ip{k}{1} = \Ip{c}{2} }
\text{\; and \; }
  \distAs{\Ip{k}{1}}{\Ber{\onehalf}} *
  \distAs{\Ip{c}{2}}{\Ber{\onehalf}}
  \proves
  \cpl{ \Ip{k}{1} = \neg\Ip{c}{2} }
  \label{ex:xor-two-cpl}
\end{equation}}
In the first case we use the coupling which flips a single coin and returns
the same outcome for both components, in the second we flip a single coin
but return opposite outcomes.
Thus we can now prove:
\[
  \CC{\Ber{p}} v.
  \left(
    \sure{\Ip{m}{1}=v} *
    \begin{pmatrix}
    \distAs{\Ip{k}{1}}{\Ber{\onehalf}}
    \\ {}*
    \distAs{\Ip{c}{2}}{\Ber{\onehalf}}
    \end{pmatrix}
  \right)
  \proves
  \CC{\Ber{p}} v.
    \left(
      \sure{\Ip{m}{1}=v}
      *
      \begin{cases}
        \cpl{ \Ip{k}{1} = \Ip{c}{2} }     \CASE v=0 \\
        \cpl{ \Ip{k}{1} = \neg\Ip{c}{2} } \CASE v=1
      \end{cases}
    \right)
\]
by using \ref{rule:c-cons},
and using the two couplings of~\eqref{ex:xor-two-cpl} in the $v=0$ and $v=1$ respectively.
Finally, the assignment to \p{c} in \p{encrypt} generates the fact
$\sure{\Ip{c}{1} = \Ip{k}{1} \xor \Ip{m}{1}}$.
By routine propagation of this fact we can establish
$
  \CC{\Ber{p}} v. \cpl{ \Ip{c}{1} = \Ip{c}{2} }.
$
To get an unconditional lifting,
we need a principle explaining the interaction between lifting and conditioning.
\thelogic\ can derive the general rule:
\begin{proofrule}
  \infer*[lab=rl-convex]{}{
  \CC\prob \wtv.\cpl{R} \proves \cpl{R}
}
   \label{rule:rl-convex}
\end{proofrule}
which states that relational liftings are \emph{convex},
\ie closed under convex combinations.
\begin{result}
\ref{rule:rl-convex} is an instance of many rules on the interaction between relational lifting
and the other connectives (conditioning in this case)
that can be derived in \thelogic\ by exploiting the encoding of liftings as \supercond.
\end{result}

Let us see how this is done for \ref{rule:rl-convex} based on two other rules of \supercond:
\begin{proofrules}\small \infer*[lab=c-skolem]{
  \prob \of \Dist(\Full{A})
}{
  \CC\prob v. \E x \of X. Q(v, x)
  \proves
  \E f \of A \to X. \CC\prob v. Q(v, f(v))
}   \label{rule:c-skolem}

\infer*[lab=c-fuse]{}{
  \CC{\prob} v.
  \CC{\krnl(v)} w.
    K(v,w)
  \lequiv
  \CC{\prob \fuse \krnl} (v,w). K(v,w)
}   \label{rule:c-fuse}
\end{proofrules}
\Cref{rule:c-skolem} is a primitive rule which
follows from Skolemization of the implicit universal
quantification used on~$v$ by the modality.
\Cref{rule:c-fuse}
can be seen as a way to merge two nested conditioning or split one conditioning into two.
The rule uses $
  \prob \fuse \krnl \is
    \fun(v,w).{ \prob(v) \cdot \krnl(v)(w)},
$
a variant of $\bind$ that does not forget the intermediate~$v$.
\Cref{rule:c-fuse} is an immediate consequence of two primitive rules
\ifappendix(\ref{rule:c-assoc} and \ref{rule:c-unassoc}) \fi
that reflect the associativity of the~$\bind$ operation.

To prove \ref{rule:rl-convex}
we start by unfolding the definition of relational lifting
(we write $K(v)$ for the part of the encoding inside the conditioning):
\begin{eqexplain}
  \CC\prob v. \cpl{R}
  \lequiv {} &
  \CC\prob v.
    \E \hat{\prob}_0.
      \CC{\hat{\prob}_0} w. K(w)
\whichproves
  \E \krnl.
    \CC\prob v.
      \CC{\krnl(v)} w. K(w)
\byrule{c-skolem}
\whichproves
  \E \krnl.
    \CC {\prob \fuse \krnl} (v,w). K(w)
\byrule{c-fuse}
\whichproves
  \E \hat\prob_1.
    \CC{\hat\prob_1} (v,w). K(w)
\proves
  \cpl{R}
\bydef
\end{eqexplain}
The application of \ref{rule:c-skolem} commutes the existential
quantification of the joint distribution~$\hat{\prob}_0$ and the outer modality.
By \ref{rule:c-fuse} we are able to merge the two modalities and obtain again
something of the same form as the encoding of relational liftings.

\subsection{Outside the Box of Relational Lifting}
\label{sec:overview:obox}

One of the well-known limitations of pRHL is that it requires
a very strict structural alignment between the order of samplings
to be coupled in the two programs.
A common pattern that pRHL rules cannot handle is
showing that reversing the order of execution of two blocks of code
does not affect the output distribution,
\eg running
  \code{x:=Ber(1/2);y:=Ber(2/3)} versus
  \code{y:=Ber(2/3);x:=Ber(1/2)}.
In \thelogic, we can establish this pattern using a \emph{derived} general rule:
\begin{proofrule}
  \infer*[lab=seq-swap]{
    \{P_1\} {\m[\I1: t_1, \I2: t_1']} \{\cpl{R_1}\}
    \\
    \{P_2\} {\m[\I1: t_2, \I2: t_2']} \{\cpl{R_2}\}
  }{
    \{P_1 * P_2\}
    {\m[
     \I1: (t_1\p; t_2),
     \I2: (t_2'\p; t_1')
    ]}
    \{\cpl*{
      R_1
      \land
      R_2
    }\}
  }
  \label{rule:seq-swap}
\end{proofrule}
The rule assumes that the lifting of $R_1$ (resp. $R_2$) can be established
by analyzing $t_1$ and $t_1'$ ($t_2$ and~$t_2'$)
side by side from precondition $P_1$ ($P_2)$.
The standard sequential rule of pRHL would force an alignment
between the wrong pairs ($t_1$ with~$t_2'$, and $t_2$ with~$t_1'$).
Crucial to the soundness of the rule is the assumption
(expressed by the precondition~$P_1*P_2$ in the conclusion)
that $P_1$ and~$P_2$ are probabilistically independent.\footnote{
  In the full model of \thelogic,
  to ensure safe mutation, assertions also include ``write/read permissions''
  on variables (in the ``variables as resource''-style~\cite{BornatCY06}).
  In \ref{rule:seq-swap} the separation between $P_1$ and $P_2$ ensures,
  in addition to probabilistic independence, that if $t_1$ has write permissions
  on a variable~$\p{x}$, $t_2$ does not have read permissions on it and viceversa (and analogously for $t_1'$ and $t_2'$).
  The full model incurs in some permissions bookkeeping,
  which we omit in this section for readability;
  \cref{ex:perm-triples} shows how to fill in the omitted details.}
In contrast, because pRHL lacks the construct of independence,
it simply cannot express such a rule.

\begin{result}
\thelogic's treatment of relational lifting enables the study of the interaction
between lifting and independence,
unlocking a novel solution for forfeiting strict structural similarities between components required by relational logics.
\end{result}
Two ingredients of \thelogic\ cooperate to prove
\ref{rule:seq-swap}:
the adoption of a \emph{weakest precondition}~(WP) formulation of triples
(and associated rules)
and a novel property of relational lifting. Let us start with WP.
In \thelogic, a triple $\{P\}\ \m{t}\ \{Q\}$ is actually encoded as
the entailment
$ P \proves \WP {\m{t}} {Q} $.
Here, $P$ and $Q$ are both assertions on \pre n-nary tuples of distributions;
and throughout, we use the bold $\m{t}$ to denote an \pre n-nary
tuple of program terms.
The formula $ \WP {\m{t}} {Q}$  is a natural generalization of WP assertion to \pre n-nary programs:
$\WP {\m{t}} {Q}$
holds on
a \pre n-nary tuple of distributions~$\m{\prob}$,
if the tuple of output distributions
obtained by running each program in~$\m{t}$ on the corresponding component
of~$\m{\prob}$,
satisfies~$Q$.
\thelogic\ provides a number of rules for manipulating WP;
here is a selection needed for deriving \ref{rule:seq-swap}:
\begin{proofrules}
  \infer*[lab=wp-cons]{
  Q \proves Q'
}{
  \wpc{\m{t}}{Q}
  \proves
  \wpc{\m{t}}{Q'}
}   \label{rule:wp-cons}

  \infer*[lab=wp-frame]{}{
  P \sepand \wpc{\hpt}{Q}
  \proves
  \wpc{\hpt}{\liftA{P} \sepand Q}
}   \label{rule:wp-frame}

  \infer*[lab=wp-seq]{}{
  \WP {\m[i: t]}[\big]{
    \WP {\m*[i: \smash{t'}]} {Q}
  }
  \proves
  \WP {\m[i: (t\code{;}\ t')]} {Q}
}   \label{rule:wp-seq}

  \infer*[lab=wp-nest]{}{
  \wpc{\m{t}_1}{
    \wpc{\m{t}_2}{Q}
  }
  \lequiv
  \wpc{(\m{t}_1 \m. \m{t}_2)}{Q}
}   \label{rule:wp-nest}
\end{proofrules}
\Cref{rule:wp-cons,rule:wp-frame} are the usual consequence and framing rules
of Separation Logic, in WP form.
By adopting Lilac's measure-theoretic notion of independence as the interpretation for separating conjunction, we obtain a clean frame rule.
Among the WP rules for program constructs,
\cref{rule:wp-seq} takes care of sequential composition.
Notably, we only need to state it for unary WPs,
in contrast to other logics where supporting relational proofs
requires building the lockstep strategy into the rules.
We use the more flexible approach from the Logic for Hypertriple Composition (LHC)~\cite{d2022proving},
where a handful of arity-changing rules allow seamless integration
of unary and relational judgments.
One such rule is the \ref{rule:wp-nest} rule, which
establishes the equivalence of a WP on $ \m{t}_1 \m. \m{t}_2 $,
where $(\m.)$ is union of indexed tuples with disjoint indexes,
and two nested WPs involving $\m{t}_1$, and $\m{t}_2$ individually.
This for instance allows us to lift the unary \ref{rule:wp-seq}
to a binary lockstep rule:
\begin{derivation}
  \infer*[Right=\ref{rule:wp-nest}]{
  \infer*[Right={\ref{rule:wp-seq},\ref{rule:wp-cons}}]{
  \infer*[Right=\ref{rule:wp-nest}]{
  \infer*[Right=\ref{rule:wp-cons}]{
    P \proves
    \WP {\m[\I1: t_1]} {
      \WP {\m[\I2: t_2]}{Q'}
    }
    \\
    Q' \proves \WP {\m[\I1: t_1']} {
      \WP {\m[\I2: t_2']} {Q}
    }
  }{
    P \proves
    \WP {\m[\I1: t_1]} {
      \WP {\m[\I2: t_2]}{
        \WP {\m[\I1: t_1']} {
          \WP {\m[\I2: t_2']} {Q}
        }
      }
    }
  }}{
    P \proves
    \WP {\m[\I1: t_1]} {
      \WP {\m[\I1: t_1']} {
        \WP {\m[\I2: t_2]}{
          \WP {\m[\I2: t_2']} {Q}
        }
      }
    }
  }}{
    P \proves
    \WP {\m[\I1: (t_1\p; t_1')]} {
      \WP {\m[\I2: (t_2\p;t_2')]} {Q}
    }
  }}{
    P \proves \WP {\m[\I1: (t_1\p; t_1'), \I2: (t_2\p;t_2')]} {Q}
  }
\end{derivation}

The crucial idea behind \ref{rule:seq-swap} is that the two programs
$t_1$ and $t_2$ we want to swap rely on \emph{independent} resources,
and thus their effects are independent from each other.
In Separation Logic this kind of reasoning is driven by framing:
which is done through framing in Separation Logic:
while executing~$t_1$, frame the resources
needed for~$t_2$, which remain intact in the state left by~$t_1$.
Multiple applications of~\ref{rule:wp-frame} and other basic rules
get us to the post-condition $\cpl{R_1} \ast \cpl{R_2}$, but
we want to combine them into one relational lifting.
This is accommodated~by:
\begin{proofrule}
  \infer*[lab=rl-merge]{}{
  \cpl{R_1} * \cpl{R_2}
  \proves
  \cpl{R_1 \land R_2}
}
   \label{rule:rl-merge}
\end{proofrule}
We do not show the derivation here for space constraints,
but essentially it consists in unfolding the encoding of lifting,
and using \ref{rule:c-frame} and \ref{rule:c-fuse}
to merge the two \supercond\ modalities.

Using these rules we can construct the following derivation:
\begin{derivation}[\small]
  \infer*[Right=\ref{rule:rl-merge}]{
  \infer*[Right=\ref{rule:wp-nest}]{
  \infer*[Right={\ref{rule:wp-seq},\ref{rule:wp-nest}}]{
  \infer*[Right=\ref{rule:wp-frame}]{
  \infer*[Right=\ref{rule:wp-frame}]{
  \infer*[Right={\ref{rule:wp-frame},\ref{rule:wp-nest}}]{
  \infer*{
    P_1 \proves \WP {\m[\I1: t_1, \I2: t_1']}{\cpl{R_1}}
    \\
    P_2 \proves \WP {\m[\I1: t_2, \I2: t_2']}{\cpl{R_2}}
  }{
    P_1 * P_2
    \proves
    \WP{\m[\I1: t_1,\I2: t_1']}{
      \cpl{R_1}
    }
    *
    \WP{\m[\I1: t_2, \I2: t_2' ]}{
      \cpl{R_2}
    }
  }}{
    P_1 * P_2
    \proves
    \WP{\m[\I1: t_1]}[\big]{
\WP{\m[
       \I1: t_2,
       \I2: t_2'
      ]}*{
        \cpl{R_2}
      } * {}
      \WP{\m[\I2: t_1']}{
        \cpl{R_1}
      }
}
  }}{
    P_1 * P_2
    \proves
    \WP{\m[\I1: t_1]}[\Big]{
    \WP{\m[
     \I1: t_2,
     \I2: t_2'
    ]}[\big]{
\cpl{R_2} * {}
      \WP{\m[\I2: t_1']}{
        \cpl{R_1}
      }
} }
  }}{
    P_1 * P_2
    \proves
    \WP{\m[\I1: t_1]}[\Big]{
    \WP{\m[
     \I1: t_2,
     \I2: t_2'
    ]}[\big]{
    \WP{\m[\I2: t_1']}{
\cpl{R_1} * {}
\cpl{R_2}
} } }
  }}{P_1 * P_2
    \proves
    \WP{\m[\I1: (t_1\p; t_2)]}[\big]{
      \WP{\m[\I2: (t_2'\p; t_1')]}{
        \cpl{R_1} *
        \cpl{R_2}
      }
    }
  }}{P_1 * P_2
    \proves
    \WP{\m[
     \I1: (t_1\p; t_2),
     \I2: (t_2'\p; t_1')
    ]}[\big]{
\cpl{R_1} *
\cpl{R_2}
}
  }}{P_1 * P_2
    \proves
    \WP{\m[
     \I1: (t_1\p; t_2),
     \I2: (t_2'\p; t_1')
    ]}*{\cpl*{
R_1 \land
      R_2
}}
  }
\end{derivation}
We explain the proof strategy from bottom to top.
We first apply \ref{rule:rl-merge} to the postcondition
(thanks to \ref{rule:wp-cons}).
This step reduces the goal to proving the two relational liftings
can be established independently from each other.
Then we apply \ref{rule:wp-nest} and \ref{rule:wp-seq}
to separate the two indices, break the sequential compositions and recombine
the two inner WPs.
We then proceed by three applications of the \ref{rule:wp-frame} rule:
the first brings $\cpl{R_2}$ out of the innermost WP;
the second brings the WP on $\m[\I1:t_1']$ outside the middle WP;
the last brings the WP on $\m[\I1:t_2,\I2:t_2']$ outside the topmost WP.
An application of \cref{rule:wp-nest} merges the resulting nested WPs
on $t_1$ and $t_1'$.
We thus effectively reduced the problem to showing that the two WPs
can be established independently, which was our original goal.

The \ref{rule:rl-merge} rule not only provides an elegant way of overcoming
the long-standing alignments issue with constructing relational lifting,
but also shows how fundamental the role of probabilistic independence is
for compositional reasoning:
the same rule with standard conjunction is unsound!
Intuitively, if we just had $ \cpl{R_1} \land \cpl{R_2} $,
we would know there exist two couplings
$\prob_1$ and $\prob_2$,
justifying $\cpl{R_1}$ and $\cpl{R_2}$ respectively,
but the desired consequence $\cpl{R_1 \land R_2}$
requires the construction of a single coupling that justifies both relations
at the same time.
We can see this is not always possible by looking back at
\eqref{ex:xor-two-cpl}:
for two fair coins we can establish
$
  \cpl{ \Ip{k}{1} = \Ip{c}{2} }
  \land
  \cpl{ \Ip{k}{1} = \neg\Ip{c}{2} }
$,
but
$
  \cpl{
    \Ip{k}{1} = \Ip{c}{2}
    \land
    \Ip{k}{1} = \neg\Ip{c}{2}
  }
$ is equivalent to false.

 \section{Preliminaries: Programs and Probability Spaces}
\label{sec:prelims}

To formally define the model of \thelogic\ and validate its rules,
we introduce a number of preliminary notions.
Our starting point is the measure-theoretic approach of
\cite{lilac} in defining probabilistic separation.
We recall the main definitions here.
The main additional assumption we make throughout
is that the set of outcomes of distributions is countable.

\begin{definition}[Probability spaces]
\label{def:prob-sp}
  Given a set of possible \emph{outcomes} $\Outcomes$,
  a \salgebra\ $ \salg \in \SigAlg(\Outcomes) $ is
  a set of subsets of $\Outcomes$
    that is closed under countable unions and
    complements, and such that
    $\Outcomes \in \salg$.
  The \emph{full} \salgebra\ over~$\Outcomes$ is
  $ \Full{\Outcomes} = \powerset(\Outcomes) $,
		the powerset of $\Outcomes$.
  For $F \subs \powerset(\Outcomes)$, we write $\sigcl{F} \in \SigAlg(\Outcomes)$
    for the smallest \salgebra\ containing $F$.
Given~$\salg \in \SigAlg(\Outcomes)$,
  a \emph{probability distribution}~$\prob \in \Dist(\salg)$,
    is a countably additive function~$ \prob \from \salg \to [0,1] $
    with $\prob(\Outcomes)=1$.
  The support of a distribution $\prob \in \Dist(\Full{\Outcomes})$
  is the set of outcomes with non-zero probability
  $ \psupp(\prob) \is \set{ a \in \Outcomes | \prob(a) > 0 } $,
  where $\prob(a)$ abbreviates $\prob(\set{a})$.

  A \emph{probability space} $ \psp \in \ProbSp(\Outcomes) $ is
  a pair $ \psp = (\salg, \prob) $ of
    a \salgebra\ $\salg \in \SigAlg(\Outcomes)$ and
    a probability distribution~$\prob \in \Dist(\salg)$.
The \emph{trivial} probability space $\Triv{\Outcomes} \in \ProbSp(\Outcomes)$
  is the trivial \salgebra\ $ \set{\Outcomes,\emptyset} $
  equipped with the trivial probability distribution.
Given $\salg_1 \subs \salg_2$ and $\prob \in \Dist(\salg_2)$,
  the distribution $ \restr{\prob}{\salg_1} \in \Dist(\salg_1) $
  is the restriction of $\prob$ to $\salg_1$.
The \emph{extension pre-order}~$(\extTo)$ over probability spaces is defined as
  $
    (\salg_1, \prob_1) \extTo (\salg_2, \prob_2) \is
      \salg_1 \subs \salg_2 \land \prob_1 = \restr{\prob_2}{\salg_1}.
  $

  A function $f \from \Outcomes_1 \to \Outcomes_2$ is
  \em{measurable} on
  $ \salg_1\in\SigAlg(\Outcomes_1)$ and $\salg_2\in\SigAlg(\Outcomes_2) $
  if\/
  $ \forall \event \in \salg_2 \st {\inv{f}(\event) \in \salg_1} $.
When $\salg_2 = \Full{\Outcomes_2}$
	we simply say~$f$ is measurable~in~$\salg_1$.
\end{definition}

\begin{definition}[Product and union spaces]
\label{def:prod-union-sp}
  Given $ \salg_1 \in \SigAlg(\Outcomes_1),\salg_2 \in \SigAlg(\Outcomes_2) $,
  their product is the \salgebra{}
  $ \salg_1 \pprod \salg_2 \in \SigAlg(\Outcomes_1 \times \Outcomes_2) $
  defined as
  $ \salg_1 \pprod \salg_2 \is \sigcl{\set{\event_1 \times \event_2 | \event_1 \in \salg_1, \event_2 \in \salg_2}} $,
  and their union is the \salgebra{}
  $ \salg_1 \punion \salg_2 \is \sigcl{\salg_1 \union \salg_2} $.
The product of two probability distributions
  $ \prob_1 \in \Dist(\salg_1) $ and
  $ \prob_2 \in \Dist(\salg_2) $ is
  the distribution
  $ (\prob_1 \pprod \prob_2) \in \Dist(\salg_1 \pprod \salg_2) $
  defined by
  $ (\prob_1 \pprod \prob_2)(\event_1 \times \event_2) = \prob_1(\event_1)\prob_2(\event_2) $
  for all $\event_1 \in \salg_1$, $\event_2 \in \salg_2$.
\end{definition}

\begin{definition}[Independent product~\cite{lilac}]
\label{def:indep-comb}
  Given $ (\salg_1, \prob_1),(\salg_2, \prob_2) \in \ProbSp(\Outcomes) $,
  their \emph{independent product} is
  the probability space
  $(\salg_1 \punion \salg_2, \prob) \in \ProbSp(\Outcomes)$
  where for all~$ \event_1 \in \salg_1, \event_2 \in \salg_2 $,
  $
    \prob(\event_1 \inters \event_2) = \prob_1(\event_1) \cdot \prob_2(\event_2)
  $. It is unique, if it exists~\cite[Lemma 2.3]{lilac}. Let $ \psp_1 \iprod \psp_2 $ be the unique independent product
  of $\psp_1$ and $\psp_2$ when it exists, and be undefined otherwise.
\end{definition}

\paragraph{Indexed tuples}
To handle unary and higher-arity relational assertions in a uniform way,
we consider finite sets of indices $I \subs \Nat$,
and \pre I-indexed tuples of values of type~$X$,
represented as (finite) functions $\Hyp[I]{X}$.
We use boldface to range over such functions.
The syntax $ \m{x} = \m[i_0: x_0,\dots,i_n: x_n] $ denotes the function
$ \m{x} \in \Hyp[\set{i_0,\dots,i_n}]{X} $ with $\m{x}(i_k) = x_k$.
We often use comprehension-style notation \eg $\m{x} = \m[i: x_i | i\in I]$.
For $\m{x} \in \Hyp[I]{A}$ we let $\supp{\m{x}} \is I$.
Given some $ \m{x} \in \Hyp[I]{A} $ and some $J \subs I$,
the operation $ \m{x} \setminus J \is \m[i: \m{x}(i) | i \in I \setminus J] $
removes the components with indices in~$J$ from $\m{x}$.

\paragraph{Programs}
We consider a simple first-order imperative language.
We fix a finite set of \emph{program variables}~$\p{x} \in \Var$
and countable set of \emph{values}~$\val \in \Val \is \Int$
and define the program \emph{stores} to be
$ \store \in \Store \is \Var \to \Val $
(note that $\Store$ is countable).

\begin{mathfig}
  \begin{grammar}
\Expr \ni \expr \is
        \val | \code{x} | \prim(\vec{\expr})
    \qquad
    \vec{\expr} \grammIs \expr_1,\dots,\expr_n
    \qquad
    \prim \grammIs + | - | < | \dots
    \qquad
    \dist \grammIs \p{Ber} | \p{Unif} | \dots
    \\
    \Term \ni \term \is
        \Assn{x}{\expr}
      | \Sample{x}{\dist}{\vec{\expr}}
      | \Skip
| \Seq{\term_1}{\term_2}
      | \Cond{\expr}{\term_1}{\term_2}
      | \Loop{\expr}{\term}
  \end{grammar}
  \caption{Syntax of program terms.}
  \label{fig:term-syntax}
\end{mathfig}

Program \emph{terms}~$ \term \in \Term $ are formed according to the
grammar in \cref{fig:term-syntax}.
For simplicity,
booleans are encoded by using~$0 \in \Val$ as false and any other value as true.
We will use the events
  $\false \is \set{0}$ and
  $\true \is \set{n \in \Val | n\ne 0}$.
Programs use standard deterministic primitives~$\prim$,
which are interpreted as functions
$ \sem{\prim} \from \Val^n \to \Val $, where $n$ is the arity of $\prim$.
Expressions~$\expr$ are effect-free deterministic numeric expressions,
and denote, as is standard, a function
$ \sem{\expr} \from \Store \to \Val $,
\ie a random variable of $\Full{\Store}$.
We write $\pvar(\expr)$ for the set of program variables that occur
in~$\expr$.
Programs can refer to some collection of known
\emph{discrete} distributions~$\dist$,
each allowing a certain number of parameters.
Sampling assignments $ \code{x:~$\dist$($\vec{v}$)} $
sample from the distribution $\Sem{\dist}(\vec{v}) \from \Dist(\Full{\Val})$.
The distribution $ \Sem{\p{Ber}}(p) = \Ber{p} \of\Dist(\Full{\set{0,1}}) $
is the Bernoulli distribution assigning probability~$p$ to outcome~1.

Similar to Lilac, we consider a simple iteration construct
$ \code{repeat}\; e\; t $ which evaluates $e$ to a value $n \in \Val$
and, if $n>0$, executes $t$ in sequence $n$ times.
This means we only consider a subset of terminating programs.
The semantics of programs is
entirely standard and is defined in \appendixref{sec:appendix:definition}.
It associates each term~$t$ to a function
$
  \sem{t} \from \Dist(\Full{\Store}) \to \Dist(\Full{\Store})
$
from distributions of input stores to
distributions of output stores.

In the relational reasoning setting, one would consider multiple
programs at the same time and relate their semantics.
Following LHC~\cite{d2022proving},
we define \emph{hyper-terms} as $ \m{t} \in \Hyp[J]{\Term} $
for some finite set of indices~$J$.
Let~$I$ be such that~$\supp{\m{t}} \subs I$; the semantics
$
  \sem{\m{t}}_I \from
      \Hyp[I]{\Dist(\Full{\Store})} \to \Hyp[I]{\Dist(\Full{\Store})}
$
takes a \pre I-indexed family of distributions as input and outputs
another \pre I-indexed family of distributions:
\[
  \sem{\m{t}}_I(\m{\prob}) \is
    \fun i.
    \ITE{i \in \supp{\m{t}}}{
      \sem{\m{t}(i)}(\m{\prob}(i))
    }{
      \m{\prob}(i)
    }
\]
Note that the store distributions at indices in $ I \setminus \supp{t} $
are preserved as is.
We omit~$I$ when it can be inferred from context.
To refer to program variables in a specific component we use
elements of $I\times \Var$, writing $\ip{x}{i}$ for $(i,\p{x})$.
 \section{The \thelogic\ Logic}
\label{sec:logic}

We are now ready to define \thelogic's semantic model,
and formally prove its laws.

\subsection{A Model of (Probabilistic) Resources}

As a model for our assertions we use a modern presentation of
partial commutative monoids, adapted from \cite{KrebbersJ0TKTCD18},
called ``ordered unital resource algebras'' (henceforth~RA).
Instead of a partial binary operation, RAs are equipped with a \emph{total}
binary operation and a predicate $\raValid$ indicating which elements of the
carrier are considered \emph{valid} resources.
Partiality of the operation manifests as mapping some combinations of arguments
to invalid elements.

\begin{definition}[Ordered Unital Resource Algebra]
\label{def:ra}
  An \emph{ordered unital resource algebra}~(RA) is a tuple
  $
    (M, \raLeq, \raValid, \raOp, \raUnit)
  $
  where
  $ \raLeq \from M \times M \to \Prop $ is the reflexive and transitive
  \emph{resource order},
  $ \raValid \from M \to \Prop $ is the \emph{validity predicate},
  $ (\raOp) \from M \to M \to M $ is the \emph{resource composition},
  a commutative and associative binary operation on $M$,
  and
  $ \raUnit \in M $ is the \emph{unit} of $M$,
  satisfying, for all $a,b,c\in M$:
  \begin{mathpar}
  \raValid(\raUnit)

  \raUnit \raOp a = a

  \infer{
    \raValid(a \raOp b)
  }{
    \raValid(a)
  }

  \infer{
    a \raLeq b
    \\
    \raValid(b)
  }{
    \raValid(a)
  }

  \infer{
    a \raLeq b
  }{
    a \raOp c \raLeq b \raOp c
  }
  \end{mathpar}
\end{definition}

Any RA can serve as a model of basic connectives of separation logics;
in particular, $P*Q$ will hold on $a \in M$ if and only if
there are $a_1,a_2 \in M$ such that $ a = a_1 \raOp a_2 $ and
$P$ holds on $a_1$ and $Q$ holds on $a_2$.

\thelogic's assertions will be interpreted over a specific RA,
which we construct as the combination of other basic RAs.
The main component is the Probability Spaces RA,
which uses independent product as the RA operation.

\begin{definition}[Probability Spaces RA]
  The \emph{probability spaces RA}
  $ \PSpRA_\Outcomes $
  is the resource algebra
  ${(\ProbSp(\Outcomes) \dunion \set{\invalid}, \raLeq, \raValid, \raOp, \Triv{\Outcomes})}$
  where
  $\raLeq$ is the extension order with $\invalid$ added as the top element,
  that is,
    $ \psp_1 \raLeq \psp_2 \is \psp_1 \extTo \psp_2 $ and
    $ \forall a \in \PSpRA_\Outcomes\st
        a \raLeq \invalid$;
  $\raValid(a) \is a \neq \invalid$;
  composition is independent product:\[
    a \raOp b \is
      \begin{cases}
        \psp_1 \iprod \psp_2
          \CASE a=\psp_1, b=\psp_2, \text{ and }
                \psp_1 \iprod \psp_2 \text{ is defined}
        \\
        \invalid \OTHERWISE
      \end{cases}
  \]
\end{definition}

The fact that $\PSpRA_\Outcomes$ satisfies the axioms of RAs is
established in~\appendixref{sec:appendix:model} and builds on the analogous
construction in Lilac.
In comparison with the coarser model of PSL,
independent product represents a more sophisticated way of separating
probability spaces.
In PSL, separation of distributions requires the distributions to
involve disjoint sets of variables, ruling out
assertions like
$ \distAs{\p{x}}{\prob} * \sure{\p{x}=\p{y}} $
or
$ \own{\p{x}} * \own{\code{x xor y}} $,
which are satisfiable in Lilac and \thelogic.

\begin{example}
\label{ex:indip-prod}
  Assume there are only two variables $\p{x}$ and $\p{y}$.
  Let $X_v = \set{ \store | \store(\p{x}) = v}$
  and  $\psp_1 = (\salg_1,\prob_1) $ with
  $ \salg_1 = \sigcl{\set{ X_v | v \in \Val}} $
  and let $ \prob_1 $ give \p{x} the distribution of a fair coin,
  \ie $ \prob_1 $ is
  the extension to $\salg_1$ of
  $ \prob_1(X_0) = \prob_1(X_1) = \onehalf $.
  Intuitively, the assertion $\distAs{\p{x}}{\Ber{\onehalf}}$ holds
  on $\psp_1$
  (we will define the assertion's interpretation in \cref{sec:prob-assert}).
  Similarly, $\sure{\p{x}=\p{y}}$ holds on $\psp_2 = (\salg_2, \prob_2)$
  where
  $
    \salg_2 = \set{\emptyset, \Store, \set{E}, \Store\setminus E}
  $ with $ E = \set{\store | \store(\p{x})=\store(\p{y})} $
  and $\prob_2(E) = 1$.
  Note that $\salg_2$ is very coarse:
  it does not contain events that can pin the value of \p{x} precisely;
  thanks to this, $\prob_2$ does not need to specify what is the distribution of \p{x}, but only that \p{y} will coincide on \p{x} with probability~1.
  It is easy to see that the independent product of $\psp_1$ and $\psp_2$
  exists and is $\psp_3 = (\salg_1 \punion \salg_2, \prob_3)$
  where $\prob_3$ is determined by
  $ \prob_3(X_0 \inters E) = \prob_3(X_1 \inters E) = \onehalf $,
  \ie makes \p{x} \p{y} the outcomes of the same fair coin.
  This means $\psp_3$ is a model of $\distAs{\p{x}}{\Ber{\onehalf}} * \sure{\p{x}=\p{y}}$.
\end{example}

When state is immutable, like in Lilac (which uses a functional language),
the $\PSpRA_\Outcomes$ RA is adequate to support a logic for probabilistic independence.
There is however an obstacle in adopting independent product in a language
with mutable state.

\begin{example}
  Consider a simple assignment \code{x:=0}.
  In the spirit of separation logic's local reasoning,
  we would want to prove a
  small-footprint triple for the assignment, \ie one where the precondition
  only involves ownership of the variable \p{x}.
  We could try with $ \distAs{x}{\prob} $ (for arbitrary $\prob$)
  but we would run into problems proving the frame property:
  as we remarked, an assertion like $\sure{\p{x}=\p{y}}$ is
  a valid frame of $ \distAs{x}{\prob} $;
  yet if $y \ne 0$, such frame would not hold after the assignment.
\end{example}

We solve this problem by combining $\PSpRA$ with an RA of permissions over variables.
The idea is that in addition to information about the distribution,
assertions can indicate which ``write permissions'' we own on variables.
An assertion that owns write permissions on \p{x} would be incompatible
with any frame predicating on \p{x}.
Then a triple for assignment just needs to require write permission
to the assigned variable.
We model permissions using a standard fractional permission RA.

\begin{definition}The \emph{permissions RA}
  is defined as
  $(\Perm, \raLeq, \raValid, \raOp, \raUnit)$
  where
  $ \Perm \is \Var \to \PosRat $,
  $ {a \raLeq b} \is {\forall \p{x} \in \Var \st a(\p{x}) \leq b(\p{x})} $,
  $ \raValid(a) \is (\forall \p{x} \in \Var \st a(\p{x}) \leq 1) $,
  $ a_1 \raOp a_2 \is \fun \p{x}. a_1(\p{x}) + a_2(\p{x})$ and
  $ \raUnit = \fun \wtv.0 $.
\end{definition}

We now want to combine permissions with probability spaces.
The goal is to allow probability spaces to contain only information
about variables of which we have some non-zero permission.
This gives rise to the following definition.

\begin{definition}[Compatibility]
  Given a probability space $\psp \in \ProbSp(\Store)$ and
  a permission map $\permap \in \Perm$,
we say that $\psp$ is \emph{compatible} with $\permap$,
  written $\psp\compat\permap$,
  if there exists
  $\psp' \in \ProbSp((\Var \setminus S) \to \Val)$
  such that
  $\psp = \psp' \pprod \Triv{S \to \Val}$,
  where
  $S = \set{x \in \Var | \permap(x) = 0}.$
  Note that we are exploiting the isomorphism
  $
    \Store \iso
    ((\Var \setminus S) \to \Val)
      \times
      (S \to \Val).
  $
  We extend the notion to $ \PSpRA_{\Store} $
  by declaring $ \invalid \compat \permap \is \True$.
\end{definition}

We can now construct an RA which combines probability spaces and permissions.

\begin{definition}Let
  $
    \PSpPmRA \is
      \set{
        (\maybePsp, \permap)
          | \maybePsp \in \PSpRA_{\Store},
            \permap \in \Perm,
            \maybePsp \compat \permap
      }.
  $
  We associate with $\PSpPmRA$ the
  \emph{Probability Spaces with Permissions}~RA
  $
    (\PSpPmRA, \raLeq, \raValid, \raOp, \raUnit)
  $
  where
\begin{align*}
    \raValid((\maybePsp, \permap)) &\is
      \maybePsp \neq \invalid
      \land \forall\p{x}.\permap(\p{x}) \leq 1
    &
    (\maybePsp, \permap) \raOp (\maybePsp', \permap') &\is
      (\maybePsp \raOp \maybePsp', \permap \raOp \permap')
    \\
    (\maybePsp, \permap) \raLeq (\maybePsp', \permap') &\is
      \maybePsp \raLeq \maybePsp'
      \text{ and }
      \permap \raLeq \permap'
    &
    \raUnit &\is ( \Triv{\Store}, \fun \p{x}. 0)
  \end{align*}
\end{definition}

\begin{example}
\label{ex:perm-ra}
  Using $\PSpPmRA$,
  we can refine an assertion
  $\distAs{\p{x}}{\prob}$ into
  $(\distAs{\p{x}}{\prob})\withperm{x:q}$,
  which holds on resources $(\psp, \permap)$ where $\psp$ distributes \p{x}
  as $\prob$ and $\permap(\p{x})\geq q$.
  What this achieves is to be able to differentiate between an assertion
  $(\distAs{\p{x}}{\prob})\withperm{x:\onehalf}$ which allows frames to predicate on $\p{x}$ (\eg $\sure{\p{x}=\p{y}}$) and
  an assertion $(\distAs{\p{x}}{\prob})\withperm{x:1}$ which does not allow the
  frame and consequently allows mutation of~\p{x}.
\end{example}

While this allows for a clean semantic treatment of mutation and independence,
it does incur some bookkeeping of permissions in practice,
which we omitted in the examples of \cref{sec:overview}.
The necessary permissions are however easy to infer
from the variables used in the assertions,
as we will illustrate later in \cref{ex:perm-triples}.

Finally, to build \thelogic's model we simply construct an RA
of \pre I-indexed tuples of probability spaces with permissions.

\begin{definition}[\thelogic\ RA]
Given a set of indices~$I$ and a RA~$M$,
  the \emph{product RA} $ \Hyp[I]{M} $ is the pointwise lifting
  of the components of~$M$.
\thelogic's model is $\Model_I \is \Hyp{\PSpPmRA}$.
\end{definition}

\subsection{Probabilistic Hyper-Assertions}
\label{sec:prob-assert}

Now we turn to the assertions in our logic.
We take a semantic approach to assertions:
we do not insist on a specific syntax and instead characterize
what constitutes an assertion by its type.
In Separation Logic, assertions are defined relative to some RA~$M$,
as the upward closed functions $ M \to \Prop $.
An assertion $ P \from M \to \Prop $ is upward closed if
$ \forall a, a' \in M\st a \raLeq[M] a' \implies P(a) \implies P(a'). $
We write $ M \ucto \Prop $ for the type of upward closed assertions on $M$.
We define hyper-assertions to be assertions over $\Model_I$,
\ie $P \in \HAssrt_I \is \Model_I \ucto \Prop $.
Entailment is defined as
$
  (P \proves Q) \is
    \forall a \in M\st
      \raValid(a) \implies (P(a) \implies Q(a)).
$
Logical equivalence is defined as entailment in both directions:
$ P \lequiv Q \is (P \proves Q) \land (Q \proves P) $.
We inherit the basic connectives
(conjunction, disjunction, separation, quantification)
from SL, which are well-defined on arbitrary RAs, including~$\Model_I$.
In particular:
\begin{align*}
  P * Q &\is \fun a.
    \exists b_1,b_2 \st
      (b_1 \raOp b_2) \raLeq a \land
      P(b_1) \land
      Q(b_2)
  &
  \pure{\varphi} &\is \fun \wtv. \varphi
  &
  \Own{b} &\is \fun a. b \raLeq a
\end{align*}
Pure assertions $\pure{\varphi}$ lift meta-level propositions~$\varphi$
to assertions (by ignoring the resource).
$\Own{b}$ holds on resources that are greater or equal than~$b$ in the RA order;
this means~$b$ represents a lower bound on the available resources.

We now turn to assertions
that are specific to probabilistic reasoning in \thelogic{},
\ie the ones that can only be interpreted in $\Model_I$.
We use the following two abbreviations:
\begin{align*}
  \Own{\m{\salg}, \m{\prob}, \m{\permap}} &\is
    \Own{((\m{\salg}, \m{\prob}), \m{\permap})}
&
  \Own{\m{\salg}, \m{\prob}} &\is
    \E \m{\permap}. \Own{\m{\salg}, \m{\prob}, \m{\permap}}
\end{align*}

To start, we define \emph{\pre A-typed assertion expressions}~$ \aexpr $
which are of type
$ \aexpr \from \Store \to A $.
Note that the type of the semantics of a program expression $\sem{\expr} \from \Store \to \Val$ is a \pre\Val-typed assertion expression; because of this we seamlessly use program expressions in assertions, implicitly coercing them to their semantics.
Since in general we deal with hyper-stores $\m{\store} \in \Hyp{\Store}$,
we use the notation $\aexpr\at{i}$ to denote the application of $\aexpr$ to the
store $\m{\store}(i)$.
Notationally, it may be confusing to read composite expressions like
$ (\p{x}-\p{z})\at{i} $, so we write them for clarity with each program variable annotated with the index, as in $\ip{x}{i} - \ip{z}{i}$.

\paragraph{The meaning of owning~$ \distAs{\Ip{x}{1}}{\prob} $}
A function $ f \from A \to B $ is \emph{measurable} in a \salgebra{}
$ \salg \of \SigAlg(A) $ if $ \inv{f}(b) = \set{a \in A | f(a) = b} \in \salg $
for all $b\in B$.
An expression $\aexpr$ always defines a measurable function
(\ie a \emph{random variable})
in $\Full{\Store}$,
but might not be measurable in some sub-algebras of $\Full{\Store}$.
Lilac proposed to use measurability as the notion of ownership:
an expression~$\aexpr$ is owned in any resources that contains enough
information to determine its distribution, \ie that makes $\aexpr$ measurable.
While this makes sense conceptually,
we discovered it made another important connective of Lilac,
almost sure equality, slightly flawed
(in that it would not support the necessary laws).\footnote{In fact, a later revision~\cite{lilac2} corrected the issue,
  although with a different solution from ours.
  See~\cref{sec:relwork}.}
We propose a slight weakening of the notion of measurability which solves
those issues while still retaining the intent behind the meaning of ownership in relation to independence and conditioning.
We call this weaker notion ``almost measurability''.

\begin{definition}[Almost-measurability]
  \label{def:almost-meas}
  Given a probability space $ (\salg,\prob) \in \ProbSp(\Outcomes) $
  and a set $\event \subs \Outcomes$,
  we say $ \event $ is \emph{almost measurable} in $(\salg, \prob)$,
  written $ \almostM{\event}{(\salg, \prob)} $,
  if \[
    \exists \event_1,\event_2 \in \salg \st
    \event_1 \subs \event \subs \event_2
    \land
    \prob(\event_1)=\prob(\event_2)
  \]
  We say a function~$ \aexpr \from \Outcomes \to A $,
  is \emph{almost measurable} in $(\salg, \prob)$,
  written $ \almostM{\aexpr}{(\salg, \prob)} $,
  if $
  {\almostM{\inv{\aexpr}(a)}{(\salg, \prob)}}
  $
  for all $a \in A$.
When
  $ \event_1 \subs \event \subs \event_2$
  and
  $ \prob(\event_1)=\prob(\event_2)=p $,
  we can unambiguously assign probability~$p$ to $\event$,
  as any extension of~$\prob$ to $\Full{\Outcomes}$ must
  assign~$p$ to $\event$;
  then we write~$\prob(\event)$ for $p$.
\end{definition}

While almost-measurability does not imply measurability,
it constrains the current probability space to contain enough information
to uniquely determine the distribution of~$\aexpr$ in any
extension where~$\aexpr$ becomes measurable.
For example let $X=\set{\store | \store(\p{x}) = 42}$ and
$ \salg = \sigma(\set{\event}) = \set{\Store,\emptyset,X,\Store\setminus X}$.
If $ \prob(X) = 1 $, then $ \almostM{\p{x}}{(\salg,\prob)} $
holds but $\p{x}$ is not measurable in~$\salg$, as $\salg$ lacks events
for $\p{x}=v$ for all~$v$ except~$42$.
Nevertheless, any extension $(\salg',\prob') \extOf (\salg,\prob)$
where $\p{x}$ is measurable,
would need to assign $\prob'(X) = 1$ and
$\prob(\p{x}=v) = 0$ for every $v \ne 42$.

We arrive at the definition of the assertion
$\distAs{\aexpr\at{i}}{\prob}$ which requires
$\aexpr\at{i}$ to be almost-measurable,
determining its distribution as~$\prob$ in any extension
of the local probability space.
Formally, given $ \prob \of \Dist(\Full{A}) $ and $\aexpr \from \Store \to A$,
we define:
\begin{align*}
  \distAs{\aexpr\at{i}}{\prob} & \is
  \E \m{\salg},\m{\prob}.
  \Own{\m{\salg},\m{\prob}} *
  \pure{
    \almostM{\aexpr}{(\m{\salg}(i),\m{\prob}(i))}
    \land
    \prob = \m{\prob}(i) \circ \inv{\aexpr}
  }
\end{align*}
The assertion states that we own just enough information about the probability
space at index~$i$, so that its distribution is uniquely determined as~$\prob$ in any extension of the space.

Using the $\distAs{\aexpr\at{i}}{\prob}$ assertion
we can define a number of useful derived assertions:
\begin{align*}
  \expectOf{\aexpr\at{i}} = r & \is
    \E \prob.
      \distAs{\aexpr\at{i}}{\prob} *
      \pure{
r = \Sum*_{a\in\psupp(\prob)} \prob(a) \cdot a
      }
  &
  \sure{\aexpr\at{i}} &\is
\distAs{(\aexpr \in \true)\at{i}}{\dirac{\True}}
  \\
  \probOf{\aexpr\at{i}} = r & \is
    \E \prob.
    \distAs{\aexpr\at{i}}{\prob} *
    \pure{
      \prob(\true) = r
    }
  &
  \own{\aexpr\at{i}} &\is
    \E \prob. \distAs{\aexpr\at{i}}{\prob}
\end{align*}
Assertions about
expectations ($\expectOf{\aexpr\at{i}}$) and
probabilities ($\probOf{\aexpr\at{i}}$),
simply assert ownership of some distribution with the desired (pure) property.
The ``almost surely'' assertion
$\sure{\aexpr\at{i}}$ takes a boolean-valued expression $\aexpr$ and
asserts that it holds (at~$i$) with probability 1.
As remarked in \cref{ex:indip-prod}, an assertion like
$\sure{\Ip{x}{1}=\Ip{y}{1}}$ owns the expression~$(\Ip{x}{1}=\Ip{y}{1})$ but not necessarily~$\Ip{x}{1}$ itself:
the only events needed to make the expression almost measurable are
$ \Ip{x}{1}=\Ip{y}{1} $ and $\Ip{x}{1}\ne\Ip{y}{1}$,
which would be not enough to
make~$\Ip{x}{1}$ itself almost measurable.
This means that an assertion like
$ \own{\Ip{x}{1}} * \sure{\Ip{x}{1}=\Ip{y}{1}} $ is satisfiable.

\paragraph{Permissions}
The previous example highlights the difficulty with supporting mutable state:
owning $ \distAs{\Ip{x}{1}}{\prob} $ is not enough to allow safe mutation,
because the frame can record information like~$\sure{\Ip{x}{1}=\Ip{y}{1}}$,
which could be invalidated by an assignment to $\p{x}$.
Our solution is analogous to the ``variables as resource''
technique in Separation Logic~\cite{BornatCY06},
and uses the permission component of \thelogic's~RA.
To manipulate permissions we define the assertions:
\begin{align*}
  \perm{\ip{x}{i}:q} &\is
    \E\m{\psp},\m{\permap}.
      \Own{\m{\psp},\m{\permap}}
      * \pure{\m{\permap}(i)(\p{x}) = q}
  &
  P\withp{\m{\permap}} &\is
    P \land \E\m{\psp}.\Own{\m{\psp}, \m{\permap}}
\end{align*}
Now owning $\perm{\Ip{x}{1}:1}$ forbids any frame to retain information
about $\Ip{x}{1}$: any resource compatible with $\perm{\Ip{x}{1}:1}$
would have a \salgebra\ which is trivial on $\Ip{x}{1}$.
In practice, preconditions are always of the form
$ P\withp{\m{\permap}} $ where $\m{\permap}$ contains full permissions
for every variable the relevant program mutates,
and non-zero permissions for the other variables referenced in the assertions or program.
When framing, one would distribute evenly the permissions to each separated
conjunct, according to the variables mentioned in the assertions.
We illustrate this pattern concretely in \cref{ex:perm-triples}.

\begin{wrapfigure}[4]{R}{23ex}\centering
$
  \let\LabTirName\RuleNameLbl \infer*[lab=and-to-star]{
  \idx(P) \inters \idx(Q) = \emptyset
}{
  P \land Q \proves P \sepand Q
}   \label{rule:and-to-star}
$
\end{wrapfigure}
\paragraph{Relevant indices}
Sometimes it is useful to determine which indices are relevant for an assertion.
Semantically, we can determine if the indices $J \subs I$ are irrelevant to~$P$
by
$
\irrel_J(P) \is
  \forall a \in \Model_I \st
    \bigl(
      \exists \pr{a} \st
        \raValid(\pr{a})
        \land
        a = \pr{a} \setminus J
        \land P(\pr{a})
    \bigr)
    \implies P(a).
$
The set $\idx(P)$ is the smallest subset of $I$ so that
$ \irrel_{I\setminus \idx(P)}(P) $ holds.
\Cref{rule:and-to-star} states that separation between resources
that live in different indexes is the same as normal conjunction:
distributions at different indexes are neither independent nor correlated;
they simply live in ``parallel universes'' and can be related as needed.

\subsection{\SuperCond}

As we discussed in \cref{sec:overview},
the centerpiece of \thelogic{} is the \supercond\ modality,
which we can now define fully formally.

\begin{definition}[\Supercond\ modality]
  \label{def:c-mod}
  Let $ \prob \in \Dist(\Full{A}) $
  and $ K \from A \to \HAssrt_I $,
  then we define the assertion
  $ \CMod{\prob} K \of \HAssrt_I $
  as follows
  (where $ \m{\krnl}(I)(v) \is \m[i: \m{\krnl}(i)(v) | i \in I] $):
  \begin{align*}
    \CMod{\prob} K &\is
    \fun a.
    \begin{array}[t]{@{}r@{\,}l@{}}
      \E \m{\sigmaF}, \m{\mu}, \m{\permap}, \m{\krnl}.
      & (\m{\sigmaF}, \m{\mu}, \m{\permap}) \raLeq a
      \land
      \forall i\in I\st
      \m{\mu}(i) = \bind(\prob, \m{\krnl}(i))
   \\ & \land \;
   \forall v \in \psupp(\prob).
   K(v)(\m{\sigmaF}, \m{\krnl}(I)(v), \m{\permap})
    \end{array}
  \end{align*}

\end{definition}
The definition follows the principle we explained in
\cref{sec:overview:supercond}:
$ \CMod{\prob} K $ holds on resources where we own some
tuple of probability spaces which can all be seen
as the convex combinations of the same~$\prob$ and some kernel.
Then the conditional assertion~$K(v)$ is required to hold on the
tuple of kernels evaluated at~$v$.
Note that the definition is upward-closed by construction.

\begin{mathfig}[\small]
  \begin{proofrules}
    \infer*[lab=c-true]{}{
  \proves \CC\prob \wtv.\True
}     \label{rule:c-true}

    \infer*[lab=c-unit-l]{}{
  \CC{\dirac{v_0}} v.K(v)
  \lequiv
  K(v_0)
}     \label{rule:c-unit-l}

    \infer*[lab=c-transf]{
f \from \psupp(\prob') \to \psupp(\prob)
  \;\text{ bijective}
  \\\\
  \forall b \in \psupp(\prob') \st
    \prob'(b) = \prob(f(b))
}{
  \CC\prob a.K(a)
  \proves
  \CC{\prob'} b.K(f(b))
}     \label{rule:c-transf}

    \infer*[lab=c-and]{
  \idx(K_1) \inters \idx(K_2) = \emptyset
}{
  \CC{\prob} v. K_1(v)
    \land
  \CC{\prob} v. K_2(v)
  \proves
  \CC\prob v.
    (K_1(v) \land K_2(v))
}     \label{rule:c-and}

    \infer*[lab=sure-str-convex]{}{
  \CC\prob v.(K(v) * \sure{\aexpr\at{i}})
  \proves
  \sure{\aexpr\at{i}} * \CC\prob v.K(v)
}     \label{rule:sure-str-convex}

    \infer*[lab=c-pure]{}{
  \pure{\prob(\event)=1} * \CC\prob v.K(v)
  \lequiv
  \CC\prob v.(\pure{v \in \event} * K(v))
}     \label{rule:c-pure}
  \end{proofrules}
  \caption{Primitive Conditioning Laws.}
  \label{fig:cond-laws}
\end{mathfig}

We discussed a number of \supercond\ laws in \cref{sec:overview}.
\Cref{fig:cond-laws} shows some important primitive laws
that were left out.
\Cref{rule:c-true} allows to introduce a trivial modality;
together with \ref{rule:c-frame} this allows for the introduction
of the modality around any assertion.
\Cref{rule:c-unit-l} is a reflection of the left unit rule of the underlying
monad: conditioning on the Dirac distribution can be eliminated.
\Cref{rule:c-transf} allows for the transformation of the
convex combination using~$\prob$
into using~$\prob'$ by applying a bijection between their support
in a way that does not affect the weights of each outcome.
\Cref{rule:c-and} allows to merge two modalities using the same~$\prob$,
provided the inner conditioned assertions do not overlap
in their relevant indices.

\Cref{rule:sure-str-convex} internalizes a stronger version of convexity of
$ \sure{\aexpr\at{i}} $ assertions.
When $K(v) = \True$ we obtain convexity
$
  \CC\prob v.\sure{\aexpr\at{i}}
  \proves
  \sure{\aexpr\at{i}}.
$
Additionally the rule asserts that the unconditional~$\sure{\aexpr\at{i}}$
keeps being independent of the conditional $K$.

Finally, \cref{rule:c-pure} allows to translate facts that hold with probability~1 in~$\prob$ to predicates that hold on every~$v$ bound by conditioning on~$\prob$.

We can now give the general encoding of relational lifting
in terms of \supercond.

\begin{definition}[Relational Lifting]
\label{def:rel-lift}
  Let $X \subs I \times \Var$;
  given a relation~$R$ between variables in $X$,
  \ie $R \subs \Val^{X}$,
  we define
  (letting
  $
    \sure{\ip{x}{i} = \m{v}(\ip{x}{i})}_{\ip{x}{i}\in X} \is
      \LAnd_{\ip{x}{i}\in X}
        \sure{\ip{x}{i} = \m{v}(\ip{x}{i})}
  $):
  \begin{align*}
    \cpl{R} &\is
      \E \prob \of \Dist(\Val^{X}).
        \pure{\prob(R) = 1} *
        \CC\prob \m{v}.
          \sure{\ip{x}{i} = \m{v}(\ip{x}{i})}_{\ip{x}{i}\in X}
  \end{align*}
\end{definition}

\begin{example}
Let us expand \cref{def:rel-lift} on
$\cpl{ \Ip{k}{1} = \Ip{c}{2} }$.
The assertion concerns the variables
$X = \set{\Ip{k}{1}, \Ip{c}{2}}$;
to instantiate the definition we see the assertion as
the lifting $ \cpl{R_{=}} $
of the relation $ R_{=} \subs \Val^X $
defined as
$
  R_{=} =
    \set{ \m{v} \in \Val^{X}
        | \m{v}(\Ip{k}{1}) = \m{v}(\Ip{c}{2}) }
  ,
$
giving rise to the assertion
\[
  \E \prob.
    \pure{\prob(R_{=}) = 1} *
    \CC\prob \m{v}.
      \sure{\Ip{k}{1} = \m{v}(\Ip{k}{1})}
      \land
      \sure{\Ip{c}{2} = \m{v}(\Ip{c}{2})}
\]
Here, $\Val^X$ can be alternatively presented as
$ \Val^2 $, giving
$ R_{=} \equiv \set{(v,v) | v \in \Val} $.
With this reformulation, the encoding of \cref{def:rel-lift} becomes
\[
  \E \prob.
    \pure{\prob(R_{=}) = 1} *
    \CC\prob (v_1,v_2).
      \sure{\Ip{k}{1} = v_1}
      \land
      \sure{\Ip{c}{2} = v_2}
\]
Thanks to \ref{rule:c-pure}, the assertion can be rewritten as
$
  \E \prob.
    \CC\prob (v_1,v_2).
      \pure{R_{=}(v_1,v_2)} *
      \sure{\Ip{k}{1} = v_1}
      \land
      \sure{\Ip{c}{2} = v_2}
$
which can be simplified to
$
  \E \prob.
  \CC\prob (v_1,v_2).\bigl(
    \sure{\Ip{k}{1} = v_1} \land
    \sure{\Ip{c}{2} = v_2} \land
    \pure{v_1=v_2}
  \bigr)
$
(which is how we presented the encoding in \cref{sec:overview:supercond}).
Since $R_{=}$ is so simple,
by \ref{rule:c-transf} we can simplify the assertion even further and obtain
$
  \E \prob.
  \CC\prob v.\bigl(
    \sure{\Ip{k}{1} = v} \land
    \sure{\Ip{c}{2} = v}
  \bigr).
$
\end{example}

In \cref{rule:rl-merge},
the two relations might refer to different indexed variables,
\ie $R_1\in \Val^{X_1}$ and $R_2\in \Val^{X_2}$;
the notation $R_1 \land R_2$ is defined as
$
  R_1 \land R_2 \is
    \set*{ \m{s} \in \Val^{X_1\union X_2}
      | \restr{\m{s}}{X_1} \in R_1 \land \restr{\m{s}}{X_2} \in R_2
    }.
$

\subsection{Weakest Precondition}

To reason about (hyper-)programs,
we introduce a \emph{weakest-precondition assertion}~(WP)
$\WP {\m{t}} {Q}$, which intuitively states:
given the current input distributions (at each index),
if we run the programs in $\m{t}$ at their corresponding index
we obtain output distributions that satisfy~$Q$;
furthermore, every frame is preserved.
We refer to the number of indices of $\m{t}$ as the \emph{arity} of the WP.

\begin{definition}[Weakest Precondition]
\label{def:wp}
For $a\in\Model_I$ and $\m{\prob} \of \Dist(\Full{\Hyp{\Store}})$
let $ a \raLeq \m{\prob} $ mean
$
  a \raLeq (\Full{\Hyp{\Store}},\m{\prob},\fun x.1).
$
\[
  \WP {\m{t}} {Q} \is
    \fun a.
      \forall \m{\prob}_0.
        \forall c \st
        (a \raOp c) \raLeq \m{\prob}_0
        \implies
        \exists b \st
        \bigl(
          (b \raOp c) \raLeq \sem{\m{t}}(\m{\prob}_0)
          \land
          Q(b)
        \bigr)
\]
\end{definition}
The assertion holds on the resources~$a$ such that
if, together with some frame~$c$,
they can be seen as a fragment of the global distribution
$\m{\prob}_0$, then it is possible to update the resource
to some~$b$ which still composes with the frame~$c$,
and~$b\raOp c$ can be seen as a fragment of the output distribution
$\sem{\m{t}}(\m{\prob}_0)$.
Moreover, such~$b$ needs to satisfy the postcondition~$Q$.

We discussed some of the WP rules of \thelogic\ in \cref{sec:overview};
the full set of rules is produced in \appendixref{sec:appendix:rules}.
Let us briefly mention the axioms for assignments:
\begin{proofrules}\small
    \infer*[lab=wp-samp]{}{
  \perm{\ip{x}{i} : 1}
  \proves
  \WP {\m[i: \code{x:~$\dist$($\vec{v}$)}]}
      {\distAs{\ip{x}{i}}{\dist(\vec{v})}}
}     \label{rule:wp-samp}

    \infer*[lab=wp-assign]{
  \p{x} \notin \pvar(\expr)
  \\
  \forall \p{y} \in \pvar(\expr) \st
    \m{\permap}(\ip{y}{i}) > 0
  \\
  \m{\permap}(\ip{x}{i})=1
}{
  (\m{\permap})
  \proves
  \WP {\m[i: \code{x:=}\expr]}[\big] {
    \sure{\p{x}\at{i} = \expr\at{i}}\withp{\m{\permap}}
  }
}
     \label{rule:wp-assign}
  \end{proofrules}
\Cref{rule:wp-samp} is the expected ``small footprint'' rule for
sampling; the precondition only requires full permission on the variable
being assigned, to forbid any frame to record information about it.
\Cref{rule:wp-assign} requires full permission on \p{x},
and non-zero permission on the variables on the RHS of the assignment.
This allows the postcondition to assert that \p{x} and the expression~$\expr$
assigned to it are equal with probability~1.
The condition $\p{x} \notin \pvar(\vec{\expr})$ ensures $\expr$ has the same
meaning before and after the assignment, but is not restrictive:
if needed the old value of \p{x} can be stored in a temporary variable,
or the proof can condition on \p{x} to work with its pure value.

The assignment rules are the only ones that impose constraints on the owned
permissions.
In proofs, this means that most manipulations simply thread through permissions
so that the needed ones can reach the applications of the assignment rules.
To avoid cluttering derivations with this bookkeeping,
we mostly omit permission information from assertions.
The appropriate permission annotations can be easily inferred,
as we show in the following example.

\begin{example}
\label{ex:perm-triples}
  Consider the following triple with permissions omitted:
  \[
    \distAs{\Ip{x}{1}}{\prob_1} *
    \sure{\Ip{x}{1}=\Ip{y}{1}} *
    \distAs{\Ip{z}{1}}{\prob_2}
    \proves
    \WP {\m[\I1: \code{x:=z}]} {
      \sure{\Ip{x}{1}=\Ip{z}{1}} *
      \distAs{\Ip{z}{1}}{\prob_2}
    }
  \]
  To be able to apply \cref{rule:wp-assign},
  we need to get $\perm{x:1, z:q}$ from the precondition,
  for some $q>0$.
  To do so, formally, we need to be more explicit about the permissions owned.
  The pattern is that whenever a triple is considered, the
  precondition should own full permissions on the variables
  assigned to in the program term, and non-zero permission on the other relevant variables.
  In our example we need permission~1 for $\Ip{x}{1}$ and arbitrary permissions
  $q_1,q_2>0$ for $\Ip{y}{1}$ and $\Ip{z}{1}$ respectively.
  Since we have two separated sub-assertions that refer to $\Ip{x}{1}$,
  we would split the full permission into two halves,
  obtaining the precondition:
  \[
    (\distAs{\Ip{x}{1}}{\mu_1})\withperm{\Ip{x}{1}:\onehalf} *
    \sure{\Ip{x}{1}=\Ip{y}{1}}\withperm{\Ip{x}{1}:\onehalf,\Ip{y}{1}:q_1} *
    (\distAs{\Ip{z}{1}}{\prob_2})\withperm{\Ip{z}{1}:q_2}
  \]
  To obtain the full permission on $\Ip{x}{1}$ we are now forced to consume
  both the first two resources, weakening the precondition to:
  \[
    \perm{\Ip{x}{1}:1} *
    \perm{\Ip{y}{1}:q_1} *
    (\distAs{\Ip{z}{1}}{\prob_2})\withperm{\Ip{z}{1}:q_2}
  \]
  This step in general forces the consumption of any frame
  recording information about the assigned variables.
  To obtain non-zero permission for $\Ip{z}{1}$ while still being able to
  frame $\distAs{\Ip{z}{1}}{\prob_2}$,
  we let~$q = q_2/2$ and weaken the precondition to:
  \[
    \perm{\Ip{x}{1}:1, \Ip{z}{1}:q} *
    \perm{\Ip{y}{1}:q_1} *
    (\distAs{\Ip{z}{1}}{\prob_2})\withperm{\Ip{z}{1}:q}
  \]
  Now an application of \ref{rule:wp-frame} and \ref{rule:wp-assign}
  would give us a postcondition:
  \[
    \sure{\Ip{x}{1}=\Ip{z}{1}}\withperm{\Ip{x}{1}:1, \Ip{z}{1}:q} *
    \perm{\Ip{y}{1}:q_1} *
    (\distAs{\Ip{z}{1}}{\prob_2})\withperm{\Ip{z}{1}:q}
  \]
  which is strong enough to imply the desired postcondition.
  In a fully expanded proof, one would keep the permissions
  owned in the postcondition so that they can be used in
  proofs concerning the continuation of the program.
\end{example}

 \section{Case Studies for \thelogic}
\label{sec:discussion}
\label{sec:examples}
Our evaluation of \thelogic\ is based on two main lines of enquiry: (1) Are high-level principles about probabilistic reasoning provable from the core constructs of \thelogic? (2) Does \thelogic, through enabling new reasoning patterns, expand the horizon for verification of probabilistic programs
\emph{beyond} what was possible before?
We include case studies that try to highlight the contribution of \thelogic\ each question, and sometimes both at the same time.
Specifically, our evaluation is guided by the following research questions:
\begin{enumerate}[label=\textbf{RQ\arabic*:},ref=\textbf{RQ\arabic*}]
\item Do joint conditioning and independence offer a good abstract interface over the underlying semantic model?
\label{rq:abstract}
\item Can known unary/relational principles be reconstructed from \thelogic's primitives?
\label{rq:known}
\item Can new unary/relational principles be discovered (as new lemmas) and proved from \thelogic's primitives?
\label{rq:new}
\item Can \thelogic's primitives be successfully incorporated in an effective \emph{program} logic?\label{rq:programs}
\end{enumerate}
We already demonstrated positive answers to some of these questions
in \cref{sec:overview}:
for example, the proof of the One-time pad addresses
  \labelcref{rq:abstract,rq:known},
the proof of \ref{rule:seq-swap} addresses \labelcref{rq:new,rq:programs}.
In this section we provide a more detailed evaluation
through a number of challenging examples.
The full proofs of the case studies and additional examples
are in \appendixref{sec:appendix:examples}.
Here, we summarize some highlights to frame the key contributions of \thelogic.

\subsection{pRHL-style Reasoning}
\label{sec:ex:prhl-style}

Our first example is an encoding of pRHL's judgments in \thelogic,
sketching how pRHL-style reasoning can be
effectively embedded and extended in \thelogic{}
(\labelcref{rq:abstract,rq:known,rq:new,rq:programs}).

In pRHL, the semantics of triples implicitly always conditions
on the input store,
so that programs are always seen as running from a pair of
\emph{deterministic} input stores satisfying the relational precondition.
Judgments in pRHL have the form
$
  \proves t_1 \sim t_2 : R_0 \implies R_1
$
where $R_0,R_1$ are two relations on states
(the pre- and postcondition, respectively)
and $t_1,t_2$ are the two programs to be compared.
Such a judgment can be encoded in \thelogic\ as:
\begin{equation}
  \cpl{R_0}
  \proves
  \E \prob.
    \CC\prob \m{s}.(
      \var{St}(\m{s}) \land
      \WP {\m[\I1:t_1,\I2:t_2]} {\cpl{R_1}}
    )
  \quad
  \text{where}
  \quad
  \var{St}(\m{s}) \is
    \sure{\ip{x}{i}=\m{s}(\ip{x}{i})}_{\ip{x}{i}\in I\times\Var}
  \label{triple:prhl}
\end{equation}
As the input state is always conditioned,
and the precondition is always a relational lifting,
one is always in the position of applying \ref{rule:c-cons}
to eliminate the implicit conditioning of the lifting and the one wrapping the
WP, reducing the problem to a goal where the input state is deterministic
(and thus where the primitive rules of WP laws apply without need for
further conditioning).
As noted in \cref{sec:overview:obox},
LHC-style WPs allow us to lift our unary WP rules
to binary with little effort.

An interesting property of the encoding in~\eqref{triple:prhl} is that
anything of the form $ \CC\prob \m{s}.(\var{St}(\m{s}) \land \dots) $
has ownership of the full store (as it conditions on every variable).
We observe that WPs (of any arity) which have this property
enjoy an extremely powerful rule.
Let $ \ownall \is \A \ip{x}{i} \in I\times\Var.\own{\ip{x}{i}} $.
The following is a valid (primitive) rule in \thelogic:
\begin{proofrule}
  \infer*[lab=c-wp-swap]{}{
  \CC\prob v.
    \WP{\m{t}}{Q(v)}
    \land \ownall
  \proves
  \WP{\m{t}}*{\CC\prob v.Q(v)}
}   \label{rule:c-wp-swap}
\end{proofrule}

\Cref{rule:c-wp-swap},
allows the shift of the conditioning on the input to the conditioning of the output.
This rule provides a powerful way to make progress in lifting
a conditional statement to an unconditional one.
To showcase \ref{rule:c-wp-swap},
consider the two programs in \cref{fig:cond-swap-code}, which
are equivalent:
if we couple the \p{x} in both programs,
the other two samplings can be coupled under conditioning on \p{x}.
Formally, let $ P \gproves Q \is P \land \ownall \proves Q \land \ownall $.
We process the two assignments to $\p{x}$, which we can couple
$
  \distAs{\Ip{x}{1}}{d_0} *
  \distAs{\Ip{x}{2}}{d_0}
  \proves
  \CC{d_0} v.(\sure{\Ip{x}{1}=v} \land \sure{\Ip{x}{2}=v})
$.
Then, let $t_1$ ($t_2$) be the rest of \p{prog1} (\p{prog2}).
We can then derive:
\par
\begin{derivation}
\infer*[right=\ref{rule:rl-convex}]{
\infer*[Right=\ref{rule:c-wp-swap}]{
\infer*[Right=\ref{rule:c-cons}]{
\infer*[Right=\ref{rule:rl-merge}]{
\infer*[Right=\ref{rule:coupling}]{
  \forall v\st
  \sure{\Ip{x}{1}=v} \land \sure{\Ip{x}{2}=v}
  \gproves
  \WP {\m[\I1: t_1, \I2: t_2]}*{
  \begin{matrix*}[l]
    \cpl{\Ip{x}{1} = \Ip{x}{2}} *
    \distAs{\Ip{y}{1}}{d_1(v)} *
    \distAs{\Ip{y}{2}}{d_1(v)} *
    {}\\
    \distAs{\Ip{z}{1}}{d_2(v)} *
    \distAs{\Ip{z}{2}}{d_2(v)}
  \end{matrix*}
  }
}{
  \forall v\st
  \sure{\Ip{x}{1}=v} \land \sure{\Ip{x}{2}=v}
  \gproves
  \WP {\m[\I1: t_1, \I2: t_2]} {
    \cpl{\Ip{x}{1} = \Ip{x}{2}} *
    \cpl{\Ip{y}{1} = \Ip{y}{2}} *
    \cpl{\Ip{z}{1} = \Ip{z}{2}}
  }
}}{
  \forall v\st
  \sure{\Ip{x}{1}=v} \land \sure{\Ip{x}{2}=v}
  \gproves
  \WP {\m[\I1: t_1, \I2: t_2]} {
    \cpl{\Ip{x}{1} = \Ip{x}{2} \land
         \Ip{y}{1} = \Ip{y}{2} \land
         \Ip{z}{1} = \Ip{z}{2}}
  }
}}{
  \CC{d_0} v.(\sure{\Ip{x}{1}=v} \land \sure{\Ip{x}{2}=v})
  \gproves
  \CC{d_0} v.
  \WP {\m[\I1: t_1, \I2: t_2]} {
    \cpl{\Ip{x}{1} = \Ip{x}{2} \land
         \Ip{y}{1} = \Ip{y}{2} \land
         \Ip{z}{1} = \Ip{z}{2}}
  }
}}{
  \CC{d_0} v.(\sure{\Ip{x}{1}=v} \land \sure{\Ip{x}{2}=v})
  \gproves
  \WP {\m[\I1: t_1, \I2: t_2]} {
  \CC{d_0} v.
    \cpl{\Ip{x}{1} = \Ip{x}{2} \land
         \Ip{y}{1} = \Ip{y}{2} \land
         \Ip{z}{1} = \Ip{z}{2}}
  }
}}{
  \CC{d_0} v.(\sure{\Ip{x}{1}=v} \land \sure{\Ip{x}{2}=v})
  \gproves
  \WP {\m[\I1: t_1, \I2: t_2]} {
    \cpl{\Ip{x}{1} = \Ip{x}{2} \land
         \Ip{y}{1} = \Ip{y}{2} \land
         \Ip{z}{1} = \Ip{z}{2}}
  }
}
\end{derivation}
\vspace{\belowdisplayskip}

Where the top triple can be easily derived using standard steps.
Reading it from bottom to top, we start by invoking convexity of
relational lifting to introduce a conditioning modality in the postcondition
matching the one in the precondition.
\Cref{rule:c-wp-swap} allows us to bring the whole WP under the modality,
allowing \cref{rule:c-cons} to remove it on both sides.
From then it is a matter of establishing and combining
the couplings on \p{y} and \p{z}.
Note that these couplings are only possible because the coupling
on \p{x} made the parameters of $d_1$ and of $d_2$ coincide on both indices.
In \cref{sec:ex:von-neumann} we show this kind of derivation can be useful
for unary reasoning too.

While the $\ownall$ condition is restricting,
without it the rule is unsound in the current model.
We leave it as future work to study whether there is a model
that validates this rule without requiring~$\ownall$.

\subsection{One Time Pad Revisited}
\label{sec:ex:one-time-pad}
In \cref{sec:overview}, we prove the \p{encrypt} program correct relationally
(missing details are in \appendixref{sec:appendix:examples:onetimerel}).
An alternative way of stating and proving the correctness of \p{encrypt}
is to establish that in the output distribution \p{c} and \p{m} are independent,
which can be expressed as the \emph{unary} goal (also studied in~\cite{barthe2019probabilistic}):
$
(\m{\permap})
  \proves
  \WP {\m[\I1: \code{encrypt()}]} {
    \distAs{\Ip{c}{1}}{\Ber{1/2}} *
    \distAs{\Ip{m}{1}}{\Ber{p}}
  }
$
(where $\m{\permap} = \m[\Ip{k}{1}:1,\Ip{m}{1}:1,\Ip{c}{1}:1]$).
The triple states that after running \p{encrypt},
the ciphertext \p{c} is distributed as a fair coin,
and---importantly---is \emph{not} correlated with the plaintext in \p{m}.
The PSL proof in~\cite{barthe2019probabilistic} performs some of the steps
within the logic, but needs to carry out some crucial entailments at the meta-level, which is a symptom of unsatisfactory abstractions~(\labelcref{rq:abstract}). The same applies to the Lilac proof in~\cite{lilac2} which requires  ad-hoc lemmas proven on the semantic model.
The stumbling block is proving the valid entailment:
\[
  \distAs{\Ip{k}{1}}{\Ber{\onehalf}} *
  \distAs{\Ip{m}{1}}{\Ber{p}} *
  \sure{\Ip{c}{1} = \Ip{k}{1} \xor \Ip{m}{1}}
  \proves
  \distAs{\Ip{m}{1}}{\Ber{p}} *
  \distAs{\Ip{c}{1}}{\Ber{\onehalf}}
\]
In \thelogic\ we can prove the entailment in two steps:
(1) we condition on \p{m} and \p{k} to compute the result of
the \p{xor} operation and obtain that \p{c} is distributed as $\Ber{\onehalf}$;
(2) we carefully eliminate the conditioning while preserving the independence
of \p{m} and \p{c}.

The first step starts by conditioning on \p{m} and \p{k} and proceeds as follows:
\begin{eqexplain}
  & \CC{\Ber{p}} m.
    \bigl(
      \sure{\Ip{m}{1}=m} *
      \CC{\Ber{\onehalf}} k.
        (\sure{\Ip{k}{1}=k} *
        \sure{\Ip{c}{1} = k \xor m})
    \bigr)
\whichproves
  \CC{\Ber{p}} m.
    \left(
    \sure{\Ip{m}{1}=m} *
    \begin{cases}
      \CC{\Ber{\onehalf}} k. \sure{\Ip{c}{1}=k} \CASE m=0
      \\
      \CC{\Ber{\onehalf}} k. \sure{\Ip{c}{1}=\neg k} \CASE m=1
    \end{cases}
    \right)
  \byrule{c-cons}
\whichproves
  \CC{\Ber{p}} m.
    \bigl(
      \sure{\Ip{m}{1}=m} *
      \CC{\Ber{\onehalf}} k. \sure{\Ip{c}{1}=k}
    \bigr)
  \byrule{c-transf}
\end{eqexplain}The crucial entailment is the application of \ref{rule:c-transf} to the $m=1$ branch,
by using negation as the bijection
(which satisfies the premises of the rules since $\Ber{\onehalf}$ is unbiased).

The second step uses the following primitive rule of \thelogic:
\begin{proofrule}
  \infer*[lab=prod-split]{}{
  \distAs{(\aexpr_1\at{i}, \aexpr_2\at{i})}{\prob_1 \otimes \prob_2}
  \proves
  \distAs{\aexpr_1\at{i}}{\prob_1} *
  \distAs{\aexpr_2\at{i}}{\prob_2}
}
   \label{rule:prod-split}
\end{proofrule}
with which we can prove:
\begin{eqexplain}
& \CC{\Ber{p}} m.
    \bigl(
      \sure{\Ip{m}{1}=m} *
      \CC{\Ber{\onehalf}} k. \sure{\Ip{c}{1}=k}
    \bigr)
\whichproves
  \CC{\Ber{p}} m.
  \CC{\Ber{\onehalf}} k.
    \sure{\Ip{m}{1}=m \land \Ip{c}{1}=k}
  \ifappendix \byrules{c-frame,sure-merge}\else \byrules{c-frame}\fi \whichproves
  \CC{\Ber{p} \pprod \Ber{\onehalf}} (m,k).
    \sure{(\Ip{m}{1},\Ip{c}{1})=(m,k)}
  \byrules{c-fuse}
\whichproves
  \distAs{(\Ip{m}{1},\Ip{c}{1})}{(\Ber{p} \pprod \Ber{\onehalf})}
  \byrule{c-unit-r}
\whichproves
  \distAs{\Ip{m}{1}}{\Ber{p}} *
  \distAs{\Ip{c}{1}}{\Ber{\onehalf}}
  \byrule{prod-split}
\end{eqexplain}

As this is a common manipulation needed to extract unconditional independence
from a conditional fact, we can formulate it as the more general
derived rule
\begin{proofrule}
  \infer*[lab=c-extract]{}{
  \CC{\prob_1} v_1. \bigl(
    \sure{\aexpr_1\at{i} = v_1} *
    \distAs{\aexpr_2\at{i}}{\prob_2}
  \bigr)
  \proves
  \distAs{\aexpr_1\at{i}}{\prob_1} *
  \distAs{\aexpr_2\at{i}}{\prob_2}
}   \label{rule:c-extract}
\end{proofrule}

\subsection{Markov Blankets}
\label{sec:ex:markov-blanket}

In probabilistic reasoning, introducing conditioning is easy,
but deducing unconditional facts from conditional ones is not immediate.
The same applies to the \supercond\ modality: by design, one cannot eliminate it for free.
Crucial to \thelogic's expressiveness is the inclusion of rules that can
soundly derive unconditional information from conditional assertions.

We use the concept of a \emph{Markov Blanket}---a very common tool in Bayesian reasoning
  for simplifying conditioning---to illustrate \thelogic's expressiveness (\labelcref{rq:abstract,rq:known}).
Intuitively, Markov blankets identify a set of variables that affect the distribution of a random variable \emph{directly}:
this is useful because by conditioning on those variables
we can remove conditional connection between the random variable
and all the variables on which it \emph{indirectly} depends.

For concreteness, consider the program
\code{x1:~$\dist_1$;
x2:~$\dist_2(\p{x1})$;
x3:~$\dist_3(\p{x2})$}.
The program describes a Markov chain of three variables.
One way of interpreting this pattern is that the joint output distribution
is described by the program as a product of conditional distributions:
the distribution over \p{x2} is described conditionally on \p{x1},
and the one of \p{x3} conditionally on \p{x2}.
This kind of dependencies are ubiquitous in, for instance, hidden Markov models and Bayesian network representations of distributions.

A crucial tool for the analysis of such models is the concept of a
Markov Blanket of a variable~\p{x}: the set of variables that are direct dependencies of~\p{x}.
Clearly~\p{x3} depends on~\p{x2} and, indirectly, on~\p{x1}.
However, Markov chains enjoy the memorylessness property:
when fixing a variable in the chain, the variables that follow it are independent from the variables that preceded it.
For our example this means that if we condition on~\p{x2}, then
\p{x1} and~\p{x3} are independent (\ie we can ignore the indirect dependencies).

In \thelogic\ we can characterize the output distribution with the assertion
\[
  \CC{\dist_1} v_1. \Bigl(
    \sure{\p{x1}=v_1} *
    \CC{\dist_2(v_1)} v_2. \bigl(
      \sure{\p{x2}=v_2} *
      \distAs{\p{x3}}{\dist_3(v_2)}
    \bigr)
  \Bigr)
\]
Note how this postcondition represents the output distribution
as implicitly as the program does.
We want to transform the assertion into:
\[
  \CC{\prob_2} v_2.
  \bigl(
    \sure{\p{x2}=v_2} *
    \distAs{\p{x1}}{\prob_1(v_2)} *
    \distAs{\p{x3}}{\dist_3(v_2)}
  \bigr)
\]
for appropriate $\prob_2$ and $\prob_1$.
This isolates the conditioning to the direct dependency of \p{x1}
and keeps full information about \p{x3},
available for further manipulation down the line.

In probability theory, the proof of memorylessness is an application
of Bayes' law: we are computing
the distribution of \p{x1} conditioned on \p{x2},
from the distribution of \p{x2} conditioned on \p{x1}.

In \thelogic\ we can produce the transformation using the \supercond\ rules,
in particular the right-to-left direction of \ref{rule:c-fuse}
and the primitive rule that is behind its left-to-right
direction:
\begin{proofrule}
  \infer*[lab=c-unassoc]{}{
  \CC{\bind(\prob,\krnl)} w.K(w)
  \proves
  \CC\prob v. \CC{\krnl(v)} w.K(w)
}   \label{rule:c-unassoc}
\end{proofrule}

Using these we can prove:
\begin{eqexplain}
  &
  \CC{\dist_1} v_1. \Bigl(
    \sure{\p{x1}=v_1} *
    \CC{\dist_2(v_1)} v_2. \bigl(
      \sure{\p{x1}=v_2} *
      \distAs{\p{x3}}{\dist_3(v_2)}
    \bigr)
  \Bigr)
\whichproves
  \CC{\dist_1} v_1. \Bigl(
    \CC{\dist_2(v_1)} v_2. \bigl(
      \sure{\p{x1}=v_1} *
      \sure{\p{x1}=v_2} *
      \distAs{\p{x3}}{\dist_3(v_2)}
    \bigr)
  \Bigr)
  \byrules{c-frame}
\whichproves
  \CC{\prob_0} (v_1,v_2). \bigl(
      \sure{\p{x1}=v_1} *
      \sure{\p{x2}=v_2} *
      \distAs{\p{x3}}{\dist_3(v_2)}
  \bigr)
  \byrules{c-fuse}
\whichproves
  \CC{\prob_2} v_2. \Bigl(
    \CC{\prob_1(v_2)} v_1.
    \bigl(
      \sure{\p{x1}=v_1} *
      \sure{\p{x2}=v_2} *
      \distAs{\p{x3}}{\dist_3(v_2)}
    \bigr)
  \Bigr)
  \byrules{c-unassoc}
\whichproves
  \CC{\prob_2} v_2. \Bigl(
    \sure{\p{x2}=v_2} *
    \CC{\prob_1(v_2)} v_1.
    \bigl(
      \sure{\p{x1}=v_1} *
      \distAs{\p{x3}}{\dist_3(v_2)}
    \bigr)
  \Bigr)
  \byrules{sure-str-convex}
\whichproves
  \CC{\prob_2} v_2. \bigl(
    \sure{\p{x2}=v_2} *
    \distAs{\p{x1}}{\prob_1(v_2)} *
    \distAs{\p{x3}}{\dist_3(v_2)}
  \bigr)
  \byrules{c-extract}
\end{eqexplain}
where
$
  \dist_1 \fuse \dist_2 = \prob_0 =
  \bind(\prob_2,\prob_1).
$
The existence of such $\prob_2$ and $\prob_1$ is a simple application
of Bayes' law:
$
  \prob_2(v_2) =
    \Sum_{v_1 \in \Val} \prob_0(v_1,v_2),
$
and
$
  \prob_1(v_2)(v_1) =
    \frac{\prob_0(v_1,v_2)}{\prob_2(v_2)}.
$
We see the ability of \thelogic\ to perform these manipulations
as evidence that \supercond\ and independence form a sturdy abstraction
over the semantic model~(\labelcref{rq:abstract}).
The amount of meta-reasoning required to manipulate the distributions
indexing the conditioning modality are minimal and localized,
and offer a good entry-point to inject facts about distributions
without interfering with the rest of the proof context.

\subsection{Multi-party Secure Computation}
\label{sec:ex:multiparty}

In \emph{multi-party secure computation}, the goal is to for~$N$
parties to compute a function~$f(x_1,\dots,x_N)$ of
some private data~$x_i$ owned by each party~$i$,
without revealing any more information about~$x_i$ than the output of~$f$
would reveal if computed centrally by a trusted party.
For example, if $f$ is addition, a secure computation of~$f$ can be used
to compute the total number of votes without revealing who voted positively:
some information would leak (e.g. if the total is non-zero then \emph{somebody} voted positively) but only what is revealed by knowing the total and nothing more.

To achieve this objective, multi-party secure addition~(\p{MPSAdd})
works by having the parties break their secret into~$N$ \emph{secret shares}
which individually look random, but the sum of which amounts to the original secret.
These secret shares are then distributed to the other parties so that each party knows an incomplete set of shares of the other parties.
Yet, each party can reliably compute the result of the function by computing a function of the received shares.

As it is very often the case, there is no single ``canonical'' way of specifying
this kind of security property.
For  \p{MPSAdd}, for instance, we can formalize security
(focusing on the perspective of party~1)
in two ways:
as a unary or as a relational specification.

The \emph{unary specification} says that,
  conditionally on the secret of party~$1$
  and the sum of the other secrets,
  all the values received by~$1$ (we call this the \emph{view} of~$1$)
  are independent from the secrets of the other parties.
Roughly:
\begin{equation*}
  \distAs{(\p{x}_1, \p{x}_2, \p{x}_3)\at{\I1}}{\prob_0}
  \proves
  \WP {\m[\I1: \p{MPSAdd}]}*{
    \E\prob.
    \CC {\prob} {(v, s)}.
    \begin{grp}
      \sure{\p x_1\at{\I1} = v \land (\p x_2 + \p x_3)\at{\I1}=s} * {}
      \\
      \own{\p{view}_1\at{\I1}} * \own{\p x_2\at{\I1},\p x_3\at{\I1}}
\end{grp}
  }
\end{equation*}
where $\prob_0$ is an arbitrary distribution of the three secrets.
Notice how conditioning nicely expresses that the acceptable leakage is just the sum.

The \emph{relational specification} says that
    when running the program from two initial states
    differing only in the secrets of the other parties,
    but not in their sum,
    the views of party~$i$ would be distributed in the same way.
Roughly:
\begin{equation*}
  \cpl*{
  \begin{conj*}
    \p x_1\at{\I1} = \p x_1\at{\I2}
    \land
    (\p x_2+\p x_3)\at{\I1} = (\p x_2+\p x_3)\at{\I2}
  \end{conj*}
  }
\proves
  \WP {\m<1:\p{MPSAdd},2:\p{MPSAdd}>}[\bigg]{
    \cpl*{
\p{view}_1\at{\I1} = \p{view}_1\at{\I2}
}
  }
\end{equation*}

The two specifications look quite different and also suggest quite different
proof strategies: the unary judgment suggests a proof by manipulating independence and conditioning; the relational one hints at a proof by relational lifting. Depending on the program, each of these strategies could have their merits.
As a first contribution, we show that \thelogic\ can not only specify in both styles (\labelcref{rq:abstract}), but also provide
\emph{proofs} in both styles (\labelcref{rq:programs}).

Having two very different specifications for the same security goal, however begs the question:
are they equivalent?
After all, as the \emph{prover} of the property one might prefer one proof style over the other, but as a \emph{consumer} of the specification the choice might
be dictated by the proof context that needs to use the specification for
proving a global goal.
To decouple the proof strategy from the uses of the specification,
we would need to be able to convert one specification into the other
\emph{within} the logic, thus sparing the prover from having to forsee
which specification a proof context might need in the future.

Our second key result is that in fact the equivalence between
the unary
and the relational specification
can be proven in \thelogic.
This is enabled by the powerful \supercond\ rules and the encoding
of relational lifting as \supercond.
This is remarkable as this type of result has always been justified
entirely at the level of the semantic model in other logics (\eg pRHL, Lilac).
This illustrates the fitness of \thelogic\ as a tool for abstract
meta-level reasoning (\labelcref{rq:abstract}).

In~\appendixref{sec:appendix:ex:multiparty} we provide \thelogic\ proofs for:
\begin{enumerate*}
  \item the unary specification;
  \item the relational specification (independently of the unary proof);
  \item the equivalence of the two specifications.
\end{enumerate*}
Although the third item would spare us from proving one of the first two,
we provide direct proofs in the two styles to provide a point of comparison
between them.

\subsection{Von Neumann Extractor}
\label{sec:ex:von-neumann}

A randomness extractor is a mechanism that transforms a stream of
``low-quality'' randomness sources into a stream of ``high-quality''
randomness sources.
The von Neumann extractor~\cite{vonNeumann}
is perhaps the earliest instance of such mechanism,
and it converts a stream of independent coins with the same bias~$p$
into a stream of independent \emph{fair} coins.
Verifying the correctness of the extractor requires careful reasoning
under conditioning, and showcases the use of \cref{rule:c-wp-swap} in a
unary setting (\labelcref{rq:known,rq:programs}).

We can model the extractor, up to $N \in \Nat$ iterations, in our language\footnote{While technically our language does not support arrays,
  they can be easily encoded as a collection of~$N$ variables.
}
as shown in \cref{fig:von-neumann}.
The program repeatedly flips two biased coins, and outputs the outcome of the first coin if the outcomes where different, otherwise it retries.
As an example, we prove in \thelogic~ that the bits produced in \p{out} are independent fair coin flips.
Formally, for $\ell$ produced bits, we want the following to hold:
\[
  \var{Out}_\ell \is
  \distAs{\p{out}[0]\at{\I1}}{\Ber{\onehalf}} *
  \dots *
  \distAs{\p{out}[\ell-1]\at{\I1}}{\Ber{\onehalf}}.
\]
To know how many bits were produced, however,
we need to condition on \p{len}
obtaining the specification
(recall $ P \gproves Q \is P \land \ownall \proves Q \land \ownall $):
\[
  \gproves \WP {\m[\I1: \p{vn}(N)]}*{
    \E \prob. \CC \prob \ell. \bigl(
      \sure{\Ip{len}{1} = \ell \leq N} *
      \var{Out}_\ell
    \bigr)
  }
\]

\begin{wrapfigure}[12]{R}{30ex}\begin{sourcecode*}[gobble=2,aboveskip=0pt,belowskip=0pt]
  def vn($N$):
    len := 0
    repeat $N$:
      coin_1 :~ Ber($p$)
      coin_2 :~ Ber($p$)
      if coin_1 != coin_2 then:
        out[len] := coin_1
        len := len+1
  \end{sourcecode*}
  \caption{Von Neumann extractor.}
  \label{fig:von-neumann}
\end{wrapfigure}

The postcondition straightforwardly generalizes to a loop invariant
\[
  P(i) =
  \E \prob. \CC \prob \ell. \bigl(
    \sure{\Ip{len}{1} = \ell \leq i} *
    \var{Out}_\ell
  \bigr)
\]
The main challenge in the example is handling the if-then statement.
Intuitively we want to argue that if $ \p{coin}_1 \ne \p{coin}_2 $,
the two coins would have either values $(0,1)$ or $(1,0)$,
and both of these outcomes have probability $ p(1-p) $;
therefore,
conditionally on the `if' guard being true,
$\p{coin}_1$ is a fair coin.

Two features of \thelogic\ are crucial to implement the above intuition.
The first is the ability of manipulating conditioning given by the \supercond\ rules.
At the entry point of the if-then statement in \cref{fig:von-neumann}
we obtain
$
  P(i) *
  \distAs{\p{coin}_1\at{\I1}}{\Ber{p}} *
  \distAs{\p{coin}_2\at{\I1}}{\Ber{p}}.
$
Using \thelogic's rules we can easily derive
$
  P(i) *
  \distAs{(\p{coin}_1 \ne \p{coin}_2,\p{coin}_1)\at{\I1}}{\prob_0}
$
for some $\prob_0$.
The main insight of the algorithm then can be expressed as the fact that
$\prob_0 = \beta \fuse \krnl$ for some $\beta\of\Dist(\set{0,1})$
which is the distribution of $\p{coin}_1 \ne \p{coin}_2$,
and some~$\krnl$ describing the distribution of the first coin in the two
cases, which we know is such that $\krnl(1)=\Ber{\onehalf}$.
Then, thanks to \ref{rule:c-unit-r} and \ref{rule:c-fuse}, we obtain:
\begin{align*}
&\distAs{(\p{coin}_1 \ne \p{coin}_2,\p{coin}_1)\at{\I1}}{(\beta \fuse \krnl)}
\\ {}\proves{}&
\CC \beta b. \bigl(
  \sure{(\p{coin}_1 \ne \p{coin}_2)\at{\I1} = b} *
  \pure{b=1} \implies
    \distAs{\p{coin}_1\at{\I1}}{\Ber{\onehalf}}
\bigr)
\end{align*}
Then, if we could reason about the `then' branch under conditioning,
since the guard $\p{coin}_1 \ne \p{coin}_2$ implies $b=1$ we would obtain
$ \distAs{\p{coin}_1\at{\I1}}{\Ber{\onehalf}} $, which is the key to the proof.
The ability of reasoning under conditioning is the second feature
of \thelogic\ which unlocks the proof.
In this case, the step is driven by \cref{rule:c-wp-swap},
which allows us to prove the if-then statement by case analysis on~$b$.

\subsection{Monte Carlo Algorithms}
\label{sec:ex:monte-carlo}

By elaborating on the Monte Carlo example of \cref{sec:intro},
we want to show the fitness of \thelogic\ as a program logic
(\labelcref{rq:programs}) and its specific approach for dealing
with the structure of a program.
Recall the example in Figure \ref{fig:between-code} and the goal
outlined in \cref{sec:intro} of comparing the accuracy of the two
Monte Carlo algorithms \p{BETW\_SEQ} and \p{BETW}.
This goal can be encoded as
\[
  \begin{conj}
  \sure{\Ip{l}{1}=\Ip{r}{1}=0} *{}\\
  \sure{\Ip{l}{2}=\Ip{r}{2}=0}
  \end{conj}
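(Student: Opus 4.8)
The plan is to establish the relational goal
\[
  \sure{\Ip{l}{1}=\Ip{r}{1}=0} * \sure{\Ip{l}{2}=\Ip{r}{2}=0}
  \proves
  \WP{\m[\I1: \p{BETW\_SEQ}, \I2: \p{BETW}]}{\cpl{\Ip{d}{1} \leq \Ip{d}{2}}},
\]
where the lifting $\cpl{\Ip{d}{1} \leq \Ip{d}{2}}$ expresses the desired stochastic dominance of \p{BETW}'s verdict over \p{BETW\_SEQ}'s. The whole argument is organized around a single relational invariant $J \is \cpl{\Ip{r}{1} \leq \Ip{r}{2} \land \Ip{l}{1} \leq \Ip{l}{2}}$, maintained across both loops. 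It holds at the start because both sides are deterministically $0$ (so $0 \leq 0$), and once it survives the loops the postcondition is immediate: running the shared final conjunction assignment on both indices and invoking monotonicity of boolean conjunction ($\Ip{r}{1}\leq\Ip{r}{2}$ and $\Ip{l}{1}\leq\Ip{l}{2}$ give $\Ip{d}{1}\leq\Ip{d}{2}$) closes the goal via \ref{rule:wp-assign} and \ref{rule:c-cons}.

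First I would put the two programs into aligned shape. \p{BETW\_SEQ} runs \p{BelowMax} (an $N$-fold loop touching only \p{r}) and then \p{AboveMin} (an $N$-fold loop touching only \p{l}), whereas \p{BETW} runs one $2N$-fold loop touching both. So I split \p{BETW}'s loop into two successive $N$-fold loops (a routine program identity that preserves the weakest precondition), and then align \p{BelowMax} with the first half and \p{AboveMin} with the second half. Crucially these blocks occur in the \emph{same order} on both sides, so, unlike the situation addressed by \ref{rule:seq-swap}, no reordering is needed. Using \ref{rule:wp-nest} and \ref{rule:wp-seq} in the LHC style of \cref{sec:overview:obox}, I reduce the goal to a lockstep obligation for each of the two $N$-iteration blocks (threading $J$ between them) followed by the final assignment.

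Each block is handled by a lockstep loop that couples corresponding iterations and preserves $J$. A single coupled iteration follows the pRHL-style pattern of \cref{sec:ex:prhl-style} and the one-time pad: couple the two fresh samples to be \emph{equal} via \ref{rule:coupling} (legitimate since both draw from the same $\prob_S$, so the diagonal coupling has the correct marginals), fuse this fresh coupling with the carried invariant using \ref{rule:rl-merge}, condition on the coupled value so the whole body becomes deterministic, discharge the deterministic assignments with \ref{rule:wp-assign}, apply \ref{rule:c-cons} to re-establish $J$ from the pure boolean monotonicity facts, and finally remove the conditioning with \ref{rule:rl-convex}. Under conditioning the reasoning is elementary: in the first block both indices add the same disjunct $\p{s}\geq x$ to \p{r} (preserving $\Ip{r}{1}\leq\Ip{r}{2}$) while only index~\I{2} additionally grows \p{l} (preserving $\Ip{l}{1}\leq\Ip{l}{2}$, with $\Ip{l}{1}$ still $0$); the second block is symmetric, with the matched \p{l}-updates and an extra monotone \p{r}-update on index~\I{2}.

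I expect the main obstacle to be the loop handling rather than any individual entailment. The two aligned loops have \emph{different body shapes} --- one sample and one assignment on the left against one sample and two assignments on the right --- so the lockstep loop rule must pair bodies of unequal length while carrying a relational invariant. The reason the coupling can still be done \emph{per iteration} (rather than post hoc on the accumulated \p{r}, whose sample history is overwritten and thus lost) is precisely Bluebell's ability to couple the \emph{effects} of samplings through \ref{rule:coupling} rather than the sampling code itself. The one genuinely delicate point is the double use of the right-hand sample \p{s}, which feeds both the \p{l}- and the \p{r}-update: conditioning on the coupled value turns this into two deterministic reads, after which the difficulty disappears. The remainder is permission bookkeeping --- tracking full and read permissions on \p{l}, \p{r}, \p{q}, \p{p}, \p{s} across the framed blocks so that \ref{rule:wp-assign} applies --- exactly as spelled out in \cref{ex:perm-triples}.
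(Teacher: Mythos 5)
Your overall strategy matches the paper's: reduce to the relational-lifting postcondition $\cpl{\Ip{r}{1}\leq\Ip{r}{2}\land\Ip{l}{1}\leq\Ip{l}{2}}$ for the loops, align \p{BelowMax} with the first $N$ iterations of \p{BETW} and \p{AboveMin} with the remaining $N$, and discharge each aligned iteration by a diagonal coupling of the two fresh samples (via \ref{rule:coupling}, applied to the ownership assertions after sampling) merged into the carried invariant with \ref{rule:rl-merge}. The one substantive divergence is how you justify the alignment. You rewrite $\Loop{(2N)}{t}$ into $\Loop{N}{t}\p;\Loop{N}{t}$ by appeal to ``a routine program identity that preserves the weakest precondition.'' That identity is semantically true, but it is precisely the external, meta-level code transformation that this example is designed to avoid: \thelogic\ has no rule for replacing a term inside a WP by a semantically equivalent one, and the paper's stated point is that the alignment can be obtained \emph{inside} the logic. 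The paper instead derives \ref{rule:wp-loop-split} from \ref{rule:wp-loop} and \ref{rule:wp-loop-unf}; notably, that derivation never splits the $2N$-fold loop at all --- it runs \ref{rule:wp-loop} on the full loop with a staged invariant $P(k)$ that, for $k\leq N$, is $\WP{\m[\I1:\Loop{k}{t_1}]}{P_1(k)}$ and, for $k>N$, nests a second WP over the left-hand loops. If you replace your syntactic split by an appeal to \ref{rule:wp-loop-split} (or reproduce its derivation), your proof essentially coincides with the paper's.

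The remaining differences are harmless. You use a single invariant $J=\cpl{\Ip{r}{1}\leq\Ip{r}{2}\land\Ip{l}{1}\leq\Ip{l}{2}}$ where the paper stages two invariants, the first recording $\Ip{l}{1}=0\leq\Ip{l}{2}$; since boolean disjunction is monotone your weaker $J$ is still preserved through both blocks, so this is a legitimate simplification. Your worry about pairing ``bodies of unequal length'' is a non-issue: the lockstep rule \ref{rule:wp-loop-lockstep} pairs arbitrary body terms. Finally, for the per-iteration step (and the closing assignment to \p{d}) the paper uses \ref{rule:wp-rl-assign} to push assignment facts directly into the lifting, which is lighter than your route of conditioning (which needs \ref{rule:c-wp-swap} and hence $\ownall$ --- available here, since the precondition carries full permissions), applying \ref{rule:wp-assign}, and closing with \ref{rule:rl-convex}; both work.
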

  \withp{\m{\permap}}
  \proves
  \WP {\m<
    \I1: \code{BETW_SEQ($x$, $S$)},
    \I2: \code{BETW($x$, $S$)}
  >}[\bigg]{
    \cpl{\Ip{d}{1} \leq \Ip{d}{2}}
  }
\]
(where $\m{\permap}$ contains full permissions for all the variables)
which, through the relational lifting, states that it is more likely
to get a positive answer from \p{BETW} than from \p{BETW\_SEQ}.
The challenge is implementing the intuitive relational argument
sketched in \cref{sec:intro},
in the presence of very different looping structures.
More precisely, we want to compare the sequential composition of two loops
$ l_1 = (\Loop{N}{\tA}\p;\Loop{N}{\tB}) $
with a single loop
$ l_2 = \Loop{(2N)}{t} $
considering the $N$ iterations of~$\tA$ in lockstep with the first~$N$ iterations of~$l_2$, and the $N$ iterations of $\tB$ with the remaining~$N$ iterations of $l_2$.
It is not possible to perform such proof purely in pRHL, which can only handle loops that are perfectly aligned, and tools based on pRHL overcome this limitation by offering a number of code transformations, proved correct externally to the logic, with which one can rewrite the loops so that they syntactically align. In this case such a transformation could look like
$ \Loop{(M+N)}{t} \equiv \Loop{M}{t}\p;\Loop{N}{t} $,
using which one can rewrite $l_2$ so it aligns with the two shorter loops.
What \thelogic\ can achieve is to avoid the use of such ad-hoc syntactic transformations, and produce a proof structured in two steps: first, one can prove, \emph{within the logic}, that it is sound to align the loops as described; and then proceed with the proof of the aligned loops.

The key idea is that the desired alignment of loops can be expressed
as a (derived) rule, encoding the net effect of the syntactic loop splitting,
without having to manipulate the syntax:
\begin{proofrule}
  \infer*[lab=wp-loop-split]{
  P_1(N_1) \proves P_2(0)
  \\\\
  \forall i < N_1 \st
    P_1(i) \proves \WP{\m[\I1: t_1, \I2: t]}{P_1(i+1)}
  \\\\
  \forall j < N_2 \st
    P_2(j) \proves \WP{\m[\I1: t_2, \I2: t]}{P_2(j+1)}
}{
  P_1(0) \proves
  \WP{\m[
    \I1: (\Loop{N_1}{t_1}\p;\Loop{N_2}{t_2}),
    \I2: \Loop{(N_1+N_2)}{t}
  ]}{P_2(N_2)}
}   \label{rule:wp-loop-split}
\end{proofrule}
The rule considers two programs: a sequence of two loops, and a single loop
with the same cumulative number of iterations.
It asks the user to produce two relational loop invariants~$P_1$ and~$P_2$
which are used to relate $N_1$ iterations of $t_1$ and $t$ together,
and $N_2$ iterations of $t_2$ and $t$ together.

Crucially,
such rule is \emph{derivable}
from the primitive rules of looping of \thelogic:
\begin{proofrules}\small
  \infer*[lab=wp-loop,right=$n\in\Nat$]{
  \forall i < n\st
  P(i) \proves
  \WP {\m[j: t]} {P(i+1)}
}{
  P(0) \proves
  \WP {\m[j: \Loop{n}{t}]} {P(n)}
}   \label{rule:wp-loop}

  \infer*[lab=wp-loop-unf]{}{
  {\begin{array}{@{}r@{}l@{}}
    &\WP {\m[i: \Loop{n}{t}]} {
      \WP {\m[i: t]} { Q }
    }
    \\\proves{}&
    \WP {\m[i: \Loop{(n+1)}{t}]} {Q}
  \end{array}}
}   \label{rule:wp-loop-unf}
\end{proofrules}
\Cref{rule:wp-loop} is a standard unary invariant-based rule;
\ref{rule:wp-loop-unf} simply reflects the
semantics of a loop in terms of its unfoldings.
Using these
we can prove \ref{rule:wp-loop-split}
avoiding semantic reasoning all together,
and fully generically on the loop bodies,
allowing it to be reused in any situation fitting the pattern.

In our example, we can prove our goal by instanting it with the loop invariants:
\begin{align*}
  P_1(i) &\is
    \cpl{
      \Ip{r}{1}\leq\Ip{r}{2}
      \land
      \Ip{l}{1}=0\leq\Ip{l}{2}
    }
  &
  P_2(j) &\is
    \cpl{
      \Ip{r}{1}\leq\Ip{r}{2}
      \land
      \Ip{l}{1}\leq\Ip{l}{2}
    }
\end{align*}

\begingroup \newcommand{\tBet}[1]{t_{\p{M}}^{#1}}

\begin{figure*}
  \lstset{gobble=2}\hfill
  \hbox{\begin{sourcecode*}
  def BETW_MIX($x$,$S$):
    repeat $N$:
      p :~ $\prob_S$; l := l || p <= $x$
      q :~ $\prob_S$; r := r || q >= $x$
    d := r && l
  \end{sourcecode*}}
  \hfill
  \begin{tabular}{cc}
  \begin{sourcecode*}
  def prog1:
    x :~ $d_0$
    y :~ $d_1$(x)
    z :~ $d_2$(x)
  \end{sourcecode*}
  &
  \begin{sourcecode*}
  def prog2:
    x :~ $d_0$
    z :~ $d_2$(x)
    y :~ $d_1$(x)
  \end{sourcecode*}
  \end{tabular}
  \hfill\null

  \begin{minipage}{.4\linewidth}
    \caption{A variant of the \p{BETW} program.}
    \label{fig:betw-mix-code}
  \end{minipage}\begin{minipage}{.4\linewidth}
    \caption{Conditional Swapping}
    \label{fig:cond-swap-code}
  \end{minipage}
\end{figure*}

This handling of structural differences as derived proof patterns
is more powerful than syntactic transformations:
it can, for example, handle transformations that are sound only under some
assumptions about state.
To show an instance of this,
we consider a variant of the previous example:
\p{BETW\_MIX} (in \cref{fig:betw-mix-code})
is another variant of \p{BETW\_SEQ}
which still makes~$2N$ samples but interleaves sampling
for the minimum and for the maximum.
We want to prove that this is equivalent to \p{BETW\_SEQ}.
Letting $\m{\permap}$ contain full permissions for the relevant variables,
the goal is \[
  P_0\withp{\m{\permap}}
  \proves
  \WP {\m[
    \I1: \code{BETW_SEQ($x, S$)},
    \I2: \code{BETW_MIX($x, S$)}
  ]} {
    \cpl{\Ip{d}{1} = \Ip{d}{2}}
  }
\]
with $P_0 = \sure{{\Ip{l}{1}=\Ip{r}{1}=0}}*\sure{{\Ip{l}{2}=\Ip{r}{2}=0}}$.

Call $\tBet{1}$ and $\tBet{2}$ the first and second half of the body of the loop
of \p{BETW\_MIX}, respectively.
The strategy
is to consider together one execution of $\tA$
(the body of the loop of \p{AboveMin}),
and $\tBet{1}$;
and one of~$\tB$ (of \p{BelowMax}),
and~$\tBet{2}$.
The strategy relies on the observation that every iteration of the three loops
is \emph{independent} from the others.
To formalize the proof idea we thus first prove a derived proof pattern
encoding the desired alignment, which we can state for generic~$t_1,t_2,t_1',t_2'$:
\begin{proofrule}
  \infer*[lab=wp-loop-mix]{
  \forall i < N \st
    P_1(i) \proves \WP{\m[\I1: t_1, \I2: t_1']}{P_1(i+1)}
  \\
  \forall i < N \st
    P_2(i) \proves \WP{\m[\I1: t_2, \I2: t_2']}{P_2(i+1)}
}{
  P_1(0) * P_2(0)
  \proves
  \WP{\m[
    \I1: (\Loop{N}{t_1}\p;\Loop{N}{t_2}),
    \I2: \Loop{N}{(t_1';t_2')}
  ]}{P_1(N) * P_2(N)}
}   \label{rule:wp-loop-mix}
\end{proofrule}
The rule matches on two programs: a sequence of two loops,
and a single loop with a body split into two parts.
The premises require a proof that $t_1$ together with $t_1'$ (the first half of the body of the second loop) preserve the invariant $P_1$;
and that the same is true for $t_2$ and $t_2'$ with respect to an invariant~$P_2$.
The precondition $P_1(0)*P_2(0)$ in the conclusion ensures that the two
loop invariants are independent.
The rule \ref{rule:wp-loop-mix} can be again entirely derived
from \thelogic's primitive rules.
We can then apply it to our example
using as invariants
$
  P_1 \is \cpl{\Ip{l}{1} = \Ip{l}{2}}
$ and $
  P_2 \is \cpl{\Ip{r}{1} = \Ip{r}{2}}.
$
Then, \ref{rule:rl-merge} closes the proof.

\endgroup

 \section{Related Work}
\label{sec:relwork}

Research on deductive verification of probabilistic programs has developed a
wide range of techniques that employ {\em unary} and  {\em relational} styles of reasoning. \thelogic\ advances the state of the art in both styles, by coherently unifying the strengths of both. We limit our comparison here to deductive techniques only, and focus most of our attention on explaining how \thelogic\ offers new  reasoning tools compared to these.

\paragraph{\bfseries Unary-style Reasoning.}
Early work in this line focuses more on analyzing marginal distributions and probabilities, and features like harnessing the power of  probabilistic independence and conditioning have been more recently added to make more expressive program logics~\cite{ramshaw1979formalizing,rand2015vphl,barthe2016ellora,barthe2019probabilistic,bao2022separation,lilac}.

Much work in this line has been inspired by {\em Separation Logic}~(SL),
a powerful tool
for reasoning about pointer-manipulating programs,
known for its support of \emph{local reasoning}
of separated program components~\cite{reynolds2000intuitionistic}.
PSL~\cite{barthe2019probabilistic} was the first logic to present a SL model for reasoning about the probabilistic independence of program variables, which facilitates modular reasoning about independent components within a probabilistic program.
In~\cite{bao2021bunched} and~\cite{bao2022separation} SL variants are used for reasoning about conditional independence and negative dependence, respectively;
both are used in algorithm analysis as relaxations of
independence.

\paragraph{Lilac}
Lilac~\cite{lilac} is the most recent addition to this group and introduces a new foundation of probabilistic separation logic based on measure theory.
It enables reasoning about independence and conditional independence uniformly in one logic and supports continuous distributions.
\thelogic\ also uses a measure-theory based model, similar to Lilac,
although limited to discrete distributions.
While \thelogic\ uses Lilac's independent product as a model of separating conjunction, it differs from Lilac in three aspects:
(1) the treatment of ownership,
(2) support for mutable state, and
(3) the model of conditioning.

Ownership as almost-measurability is required to support inferences like
$
  \own{\p{x}} * \sure{\p{x}=\p{y}}
  \proves
  \own{\p{y}},
$
which were implicitly used in the first version of Lilac,
but were not valid in its model.
\citet{lilac2} fixes the issue by changing the meaning of $\sure{\p{x}=\p{y}}$,
while our fix acts on the meaning of ownership
(and we see $\sure{E}$ assertions as an instance of regular ownership).

Lilac works with
immutable state~\cite{staton2020},
which simplifies reasoning in certain contexts
(e.g., the frame rule and the if rule).
\thelogic's model supports mutable state through a creative use of permissions,
obtaining a clean frame rule, at the cost of some predictable bookkeeping.

The more significant difference with Lilac is however in the definition of the conditioning modality.
Lilac's modality~$\LC{v}{E}\,P(v)$ is indexed by a random variable~$E$,
and roughly corresponds to the \thelogic\ assertion
$ \E \prob. \CC \prob v.(\sure{E=v} * P(v)) $.
The difference is not merely syntactic,
and requires changing the model of the modality.
For example, Lilac's modality satisfies
$
  {\LC{v}{E}\,P_1(v) \land \LC{v}{E}\,P_2(v)}
  \proves
  {\LC{v}{E}\,(P_1(v) \land P_2(v))},
$
but the analogous rule
$
  {\CC{\prob} v. K_1(v)
    \land
  \CC{\prob} v. K_2(v)}
  \proves
  {\CC\prob v.
    (K_1(v) \land K_2(v))}
$
(corresponding to \ref{rule:c-and} without the side condition)
is unsound in \thelogic:
The meaning of the modalities in the premise ensures
the \emph{existence} of two kernels $\krnl_1$ and $\krnl_2$ supporting
$K_1$ and $K_2$ respectively,
but the conclusion requires the existence of a \emph{single} kernel
supporting both~$K_1$ and~$K_2$.
Lilac's rule holds because when one conditions on a random variable,
the corresponding kernels are unique.
We did not find losing this rule limiting.
On the other hand,
Lilac's conditioning has two key disadvantages:
(i)  it does not record the distribution of~$E$,
     losing this information when conditioning,
(ii) it does not generalize to the relational setting.
Even considering only the unary setting,
having access to the distribution~$\prob$ in fact unlocks a number of new
rules (\eg \ref{rule:c-unit-r} and \ref{rule:c-fuse})
that are key to the increased expressivity of \thelogic.
In particular, the rules of \thelogic\ provide a wider arsenal of tools
that can convert a conditional assertion back into an unconditional one.
This is especially important when conditioning is used as a reasoning tool,
regardless of whether the end goal is a conditional statement.

\paragraph{\bfseries Relational Reasoning}
\citet{barthe2009formal} extend relational Hoare logic~\cite{benton2004simple} to reason about probabilistic programs in a logic called pRHL (probabilistic Relational Hoare Logic).
In pRHL, assertions on pairs of deterministic program states are lifted to assertions on pairs of distributions, and on the surface, the logic simply manipulates the deterministic assertions.
A number of variants of pRHL were successfully applied to proving various cryptographic protocols and differential privacy algorithms~\cite{barthe2009formal,barthe2015coupling,hsu2017probabilistic,wang2019proving, zhang2017lightdp}.
When a natural relational proof for an argument exists, these logics are simple and elegant to use. However, they fundamentally
trade expressiveness for ease of use.
A persisting problem with them has been that they rely on a strict structural alignment between the order of samples in the two programs. Recall our discussion in \cref{sec:overview:obox} for an example of this that \thelogic\ can handle.
\citet{gregersen2023asynchronous} recently proposed Clutch,
a logic to prove contextual refinement in a probabilistic higher-order language,
where ``out of order'' couplings between samplings are achieved by
using ghost code that pre-samples some assignments,
a technique inspired by \emph{prophecy variables}~\cite{jung2019future}.
In \cref{sec:overview} we showed how \thelogic\ can resolve the issue
without ghost code
(in the context of first-order imperative programs) by using framing and probabilistic independence creatively.
In contrast to \thelogic, Clutch can only express relational properties;
it also uses separation but with its classical interpretation as disjointness
of deterministic state.

Polaris~\cite{tassarotti2019polaris}, is an early instance of a probabilistic relational (concurrent) separation logic.
However, separation in Polaris is again classic disjointness of state.

Our \pre n-ary WP is inspired by LHC~\cite{d2022proving},
which shows how arity-changing rules (like \ref{rule:wp-nest})
can accommodate modular and flexible relational proofs of deterministic programs.

\paragraph{\bfseries Other Techniques.}
Expectation-based approaches, which reason about expected quantities of probabilistic programs via a weakest-pre-expectation operator that propagates information about expected values backwards through the program, have been classically used to verify randomized algorithms~\cite{kozen1983PDL,Morgan:1996,kaminski2016weakest,kaminski2019thesis,aguirre2021pre,Bartocci2022moment}.
These logics offer ergonomic dedicated principles for expectations, but do not aim at unifying principles for analyzing more general classes of properties or proof techniques, like we attempt here.
Ellora~\cite{barthe2016ellora} proposes an assertion-based logic (without separation nor conditioning) to overcome the limitation of working only with expectations.

 \section{Conclusions and Future Work}
\thelogic's journey started as a quest to integrate unary and relational
probabilistic reasoning and ended up uncovering \supercond{}
as a key fundational tool.
Remarkably, to achieve our goal we had to deviate from Lilac's previous
proposal in both the definition of conditioning,
  to enable the encoding of relational lifting,
and of ownership (with almost measurability),
  to resolve an issue with almost sure assertions
(recently corrected~\cite{lilac2} in a different way).
In addition, our model supports mutable state without sacrificing
expressiveness.
One limitation of our current model is lack of support for continuous
distributions.
Lilac's model and recent advances in it~\cite{LiAJ0H24} could suggest a pathway for a continuous extension of \thelogic,
but it is unclear if all our rules would be still valid;
for example \cref{rule:c-fuse}'s soundness hinges on properties of discrete distributions that we could not extend to the general case in an obvious way.
\thelogic's encoding of relational lifting and the novel proof principles it uncovered for it are a demonstration of the potential
of \supercond\ as a basis for deriving high-level logics on top of an ergonomic
core logic.
Obvious candidates for such scheme are approximate couplings~\cite{apRHL}
(which have been used for \eg differential privacy),
and expectation-based calculi (à la Ellora).
 
\begin{acks}
  We would like to thank Justin Hsu
  for connecting the authors and the many discussions.
  We also thank Derek Dreyer and Deepak Garg for the
  discussions and support.
  We are grateful to the POPL'25 reviewers for
  their constructive feedback.
  Jialu Bao was supported by the
  \grantsponsor{nsf}
    {NSF}
    {https://doi.org/10.13039/100000001}
  Award No.~\grantnum{nsf}{2153916}.
  Emanuele D'Osualdo was supported by a
  \grantsponsor{persist}
    {European Research Council (ERC)}
    {https://doi.org/10.13039/501100000781}
  Consolidator Grant for the project ``PERSIST'' under the European Union's Horizon 2020 research and innovation programme
  (grant No.~\grantnum{persist}{101003349}).
Azadeh Farzan was supported by the
  \grantsponsor{canada}
    {National Science and Engineering Research Council of Canada}
    {https://doi.org/10.13039/501100000038}
  Discovery Grant.
\end{acks}

\setlabel{LAST}
\label{paper-last-page}

\newcommand{\SortNoop}[1]{}

\appendix

\allowdisplaybreaks

\section{The Rules of \thelogic}
\label{sec:appendix:rules}

In this section we list all the rules of \thelogic,
including some omitted (but useful) rules in addition to those
that appear in the main text.
Since our treatment is purely semantic,
rules are simply lemmas that hold in the model.
Although we do not aim for a full axiomatization,
we try to identify the key proof principles that apply to each of our connectives.
For brevity, we omit the rules that apply to the basic connectives of separation logic, as they are well-known and have been proven correct for any model that is an RA. For those we refer to \cite{KrebbersJ0TKTCD18}.

In \cref{fig:assertions} we summarize the notation we use for
assertions over \thelogic's model.
Recall that \thelogic's assertions
$P \in \HAssrt_I \is \Model_I \ucto \Prop $
are the upward-closed predicates over elements of
the RA $\Model_I$.

\begin{figure}[h]
\adjustfigure \begin{align*}
    \pure{\varphi} &\is \fun \wtv. \varphi
    \\
    \Own{b} &\is \fun a. b \raLeq a
    \\
    P \land Q &\is \fun a.
        P(a) \land Q(a)
    \\
    P * Q &\is \fun a.
      \exists b_1,b_2 \st
        (b_1 \raOp b_2) \raLeq a \land
        P(b_1) \land
        Q(b_2)
    \\
    \E x \of X. K &\is \fun a.
      \exists x \of X \st
        K(x)(a)
    & (K\from X \to \HAssrt_I)
    \\
    \A x \of X. K &\is \fun a.
      \forall x \of X \st
        K(x)(a)
    & (K\from X \to \HAssrt_I)
    \\
    \Own{\m{\salg}, \m{\prob}} &\is
      \E \m{\permap}. \Own{\m{\salg}, \m{\prob}, \m{\permap}}
    \\
    \distAs{\aexpr\at{i}}{\prob} & \is
      \E \m{\salg},\m{\prob}.
      \Own{\m{\salg},\m{\prob}} *
      \pure{
        \almostM{\aexpr}{(\m{\salg}(i),\m{\prob}(i))}
        \land
        \prob = \m{\prob}(i) \circ \inv{\aexpr}
      }
    \\
    \CMod{\prob} K &\is
    \fun a.
      \begin{array}[t]{@{}r@{\,}l@{}}
        \E \m{\sigmaF}, \m{\mu}, \m{\permap}, \m{\krnl}.
        & (\m{\sigmaF}, \m{\mu}, \m{\permap}) \raLeq a
        \land
        \forall i\in I\st
        \m{\mu}(i) = \bind(\prob, \m{\krnl}(i))
       \\ & \land \;
        \forall v \in \psupp(\prob).
          K(v)(\m{\sigmaF}, \m{\krnl}(I)(v), \m{\permap})
      \end{array}
    & (\prob \of \Dist(A), K\from A \to \HAssrt_I)
    \\
    \WP {\m{t}} {Q} &\is
      \fun a.
        \forall \m{\prob}_0.
          \forall c \st
          (a \raOp c) \raLeq \m{\prob}_0
          \implies
          \exists b \st
          \bigl(
            (b \raOp c) \raLeq \sem{\m{t}}(\m{\prob}_0)
            \land
            Q(b)
          \bigr)
    \\
    \sure{\aexpr\at{i}} &\is
\distAs{(\aexpr \in \true)\at{i}}{\dirac{\True}}
    \\
    \own{\aexpr\at{i}} &\is
      \E \prob. \distAs{\aexpr\at{i}}{\prob}
    \\
    \perm{\ip{x}{i}:q} &\is
      \E\m{\psp},\m{\permap}.
        \Own{\m{\psp},\m{\permap}}
        * \pure{\m{\permap}(i)(\p{x}) = q}
    \\
    P\withp{\m{\permap}} &\is
      P \land \E\m{\psp}.\Own{\m{\psp}, \m{\permap}}
    \\
    \cpl{R} &\is
      \E \prob \of \Dist(\Val^{X}).
        \pure{\prob(R) = 1} *
        \CC\prob \m{v}.
          \sure{\ip{x}{i} = \m{v}(\ip{x}{i})}_{\ip{x}{i}\in X}
    & (R \subs \Val^{X}, X \subs I \times \Var)
  \end{align*}
\caption{The assertions used in \thelogic.}
\label{fig:assertions}
\end{figure}

\begin{proposition}[Upward-closure]
  All the assertions in \cref{fig:assertions} are upward-closed.
\end{proposition}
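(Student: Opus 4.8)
The plan is to exploit the fact that upward-closure is a \emph{compositional} property: rather than verifying each line of \cref{fig:assertions} independently, I would first isolate the handful of assertion-forming operators from which every entry is built, show that each of them either outright guarantees or at least preserves upward-closure, and then read off the conclusion for each concrete assertion in its order of definition. The primitive building blocks are $\pure{\cdot}$ and $\Own{\cdot}$; the combinators are conjunction, separating conjunction, the quantifiers $\E$ and $\A$, the \supercond\ modality $\CMod{\prob}$, and the weakest precondition $\WP{\m{t}}{Q}$. Every permission- and distribution-level assertion in the figure is a definitional abbreviation over these.

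For the standard separation-logic fragment there is almost nothing to check, and these facts hold in any RA, so they can be cited from~\cite{KrebbersJ0TKTCD18}. A pure assertion $\pure{\varphi}$ ignores its resource and is trivially upward-closed, while $\Own{b}$ is upward-closed by transitivity of $\raLeq$. Conjunction and both quantifiers preserve upward-closure pointwise. The key observation for $P * Q$ is that its definition already \emph{bakes in} closure: it asks for $(b_1 \raOp b_2) \raLeq a$ rather than equality, so from $a \raLeq a'$ we get $(b_1 \raOp b_2) \raLeq a'$ by transitivity, and hence $P * Q$ is upward-closed \emph{irrespective} of whether $P$ and $Q$ are. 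Exactly the same ``$\raLeq a$'' pattern occurs in the \supercond\ modality: in \cref{def:c-mod} the argument $a$ appears only in the side condition $(\m{\sigmaF}, \m{\mu}, \m{\permap}) \raLeq a$, whereas the inner obligation on $K(v)$ is evaluated at the fixed resource $\m{\krnl}(I)(v)$ and never mentions $a$. Thus upward-closure of $\CMod{\prob}K$ again follows from a single application of transitivity, independently of $K$, as already remarked after \cref{def:c-mod}.

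The one genuinely non-routine case is the weakest precondition, because there $a$ appears in the \emph{antecedent} of an implication, $(a \raOp c) \raLeq \m{\prob}_0$, i.e.\ in a seemingly negative position. Here I would argue directly: assume $a \raLeq a'$ and $\WP{\m{t}}{Q}(a)$, and let $\m{\prob}_0$ and a frame $c$ be given with $(a' \raOp c) \raLeq \m{\prob}_0$. By the compatibility axiom of \cref{def:ra}, $a \raLeq a'$ yields $a \raOp c \raLeq a' \raOp c$, and transitivity with the hypothesis gives $(a \raOp c) \raLeq \m{\prob}_0$. Instantiating $\WP{\m{t}}{Q}(a)$ at this same $\m{\prob}_0$ and $c$ produces a witness $b$ with $(b \raOp c) \raLeq \sem{\m{t}}(\m{\prob}_0)$ and $Q(b)$, and this very $b$ witnesses $\WP{\m{t}}{Q}(a')$ unchanged. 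The only ingredient beyond transitivity is monotonicity of $\raOp$ with respect to $\raLeq$, so no property of the probabilistic model is needed. This frame-threading step is where I expect the main (though still mild) subtlety to lie.

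With the operators dispatched, the remaining rows follow by traversing the figure in dependency order. The assertions $\Own{\m{\salg}, \m{\prob}}$, $\own{\aexpr\at{i}}$, $\perm{\ip{x}{i}:q}$ and $\cpl{R}$ are existential quantifications over combinations of $\Own{\cdot}$, $\pure{\cdot}$, $*$ and $\CMod{\prob}$; the assertion $\distAs{\aexpr\at{i}}{\prob}$ is an existential over $\Own{\cdot} * \pure{\cdot}$; and $\sure{\aexpr\at{i}}$ is a special instance of $\distAs{\cdot}{\cdot}$. Each is therefore upward-closed by the closure lemmas above. The single assertion requiring an inductive hypothesis is $P\withp{\m{\permap}} \is P \land \E\m{\psp}.\Own{\m{\psp}, \m{\permap}}$, which is a conjunction and hence upward-closed provided its parameter $P$ already is; this is exactly the standing assumption whenever $\withp{\cdot}$ is formed. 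This completes the argument.
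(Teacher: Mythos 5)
Your proof is correct and follows essentially the same route as the paper, whose own argument is just ``easy by inspection of the definitions,'' noting that the less obvious cases (\eg\ \supercond) are upward-closed by construction through the explicit use of $\raLeq$. You have simply carried out that inspection in full --- including the one case the paper leaves implicit, the frame-threading argument for $\WP{\m{t}}{Q}$ via monotonicity of $\raOp$ --- and your compositional organization and conclusions are all sound.
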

\begin{proof}
  Easy by inspection of the definitions.
  The definitions where upward-closedness is less obvious
  (\eg \supercond)
  are made upward-closed by construction by explicit use of
  the order $\raLeq$ in the definition.
\end{proof}

Although we adopt a ``shallow embedding'' approach to assertions
(and thus we do not provide separate syntax),
the rules of \thelogic\ provide an axiomatic treatment
of these assertions so that the user should never manipulate raw predicates
over the semantic model.
We consider the connectives listed above WP
(included) in \cref{fig:assertions} to be the ones that the user
should never need to unfold into their definitions and only manipulate
through rules.

We make a distinction between
``primitive'' and ``derived'' rules.
The primitive rules require proofs that manipulate the semantic model definitions directly; these are the ones we would consider part of a proper axiomatization.
The derived rules can be proved sound by staying at the level of the logic,
\ie by using the primitive rules of \thelogic.

\Cref{fig:primitive-rules} lists the primitive rules for distribution ownership assertions and for the \supercond\ modality.
\Cref{fig:wp-rules} lists the primitive rules for the weakest precondition modality.
In \cref{fig:derived-rules} we list some useful derived rules,
and in \cref{fig:derived-wp-rules} we show some derived rules for WP.

We provide proofs for each rule in the form of lemmas in \cref{sec:appendix:soundness}.
The name labelling each rule is a link to the proof of soundness of the rule.

\let\RuleName\RuleNameProofLink

\begin{figure}[btp]
\adjustfigure[\small]
  \rulesection*{Distribution ownership rules}
  \begin{proofrules}
    \infer*[lab=and-to-star]{
  \idx(P) \inters \idx(Q) = \emptyset
}{
  P \land Q \proves P \sepand Q
}     \relabel{rule:and-to-star}

    \infer*[lab=dist-inj]{}{
  \distAs{\aexpr\at{i}}{\prob}
  \land
  \distAs{\aexpr\at{i}}{\prob'}
  \proves
  \pure{\prob=\prob'}
}     \label{rule:dist-inj}

    \infer*[lab=sure-merge]{}{
  \sure{\aexpr_1\at{i}} *
  \sure{\aexpr_2\at{i}}
  \lequiv
  \sure{(\aexpr_1 \land \aexpr_2)\at{i}}
}
     \label{rule:sure-merge}

    \infer*[lab=sure-and-star]{
  \psinv(P, \pvar(E\at{i}))
}{
  \sure{E\at{i}} \land P
  \proves
  \sure{E\at{i}} \ast P
}
     \label{rule:sure-and-star}

    \infer*[lab=prod-split]{}{
  \distAs{(\aexpr_1\at{i}, \aexpr_2\at{i})}{\prob_1 \otimes \prob_2}
  \proves
  \distAs{\aexpr_1\at{i}}{\prob_1} *
  \distAs{\aexpr_2\at{i}}{\prob_2}
}
     \relabel{rule:prod-split}
  \end{proofrules}
\rulesection{\Supercond rules}
\begin{proofrules}
    \infer*[lab=c-true]{}{
  \proves \CC\prob \wtv.\True
}     \relabel{rule:c-true}

    \infer*[lab=c-false]{}{
  \CC\prob v.\False
  \proves
  \False
}     \label{rule:c-false}

    \infer*[lab=c-cons]{
  \forall v\st
  K_1(v) \proves K_2(v)
}{
  \CC\prob v.K_1(v)
  \proves
  \CC\prob v.K_2(v)
}     \relabel{rule:c-cons}

    \infer*[lab=c-frame]{}{
  P * \CC\prob v.K(v)
  \proves
  \CC\prob v.(P * K(v))
}     \relabel{rule:c-frame}

    \infer*[lab=c-unit-l]{}{
  \CC{\dirac{v_0}} v.K(v)
  \lequiv
  K(v_0)
}     \relabel{rule:c-unit-l}

    \infer*[lab=c-unit-r]{}{
  \distAs{\aexpr\at{i}}{\mu}
  \lequiv
  \CC\prob v.\sure{\aexpr\at{i}=v}
}     \relabel{rule:c-unit-r}

    \infer*[lab=c-assoc]{
\prob_0 = \bind(\prob,\fun v.(\bind(\krnl(v), \fun w.\return(v,w))))
}{
  \CC{\prob} v.\CC{\krnl(v)} w.K(v,w)
  \proves
  \CC{\prob_0} (v,w).K(v,w)
}     \label{rule:c-assoc}

    \infer*[lab=c-unassoc]{}{
  \CC{\bind(\prob,\krnl)} w.K(w)
  \proves
  \CC\prob v. \CC{\krnl(v)} w.K(w)
}     \relabel{rule:c-unassoc}

    \infer*[lab=c-and]{
  \idx(K_1) \inters \idx(K_2) = \emptyset
}{
  \CC{\prob} v. K_1(v)
    \land
  \CC{\prob} v. K_2(v)
  \proves
  \CC\prob v.
    (K_1(v) \land K_2(v))
}     \relabel{rule:c-and}

    \infer*[lab=c-skolem]{
  \prob \of \Dist(\Full{A})
}{
  \CC\prob v. \E x \of X. Q(v, x)
  \proves
  \E f \of A \to X. \CC\prob v. Q(v, f(v))
}     \relabel{rule:c-skolem}

    \infer*[lab=c-transf]{
f \from \psupp(\prob') \to \psupp(\prob)
  \;\text{ bijective}
  \\\\
  \forall b \in \psupp(\prob') \st
    \prob'(b) = \prob(f(b))
}{
  \CC\prob a.K(a)
  \proves
  \CC{\prob'} b.K(f(b))
}     \relabel{rule:c-transf}

    \infer*[lab=sure-str-convex]{}{
  \CC\prob v.(K(v) * \sure{\aexpr\at{i}})
  \proves
  \sure{\aexpr\at{i}} * \CC\prob v.K(v)
}     \relabel{rule:sure-str-convex}

    \infer*[lab=c-for-all]{}{
  \CC\prob v. \A x \of X. Q(v, x)
  \proves
  \A x \of X. \CC\prob v. Q(v, x)
}     \label{rule:c-for-all}

    \infer*[lab=c-pure]{}{
  \pure{\prob(\event)=1} * \CC\prob v.K(v)
  \lequiv
  \CC\prob v.(\pure{v \in \event} * K(v))
}     \relabel{rule:c-pure}
  \end{proofrules}
  \rulesectionend
\caption{The primitive rules of \thelogic.}
\label{fig:primitive-rules}
\end{figure}

\begin{figure}[tp]
\adjustfigure[\small]
  \rulesection*{Structural WP rules}
  \begin{proofrules}
    \infer*[lab=wp-cons]{
  Q \proves Q'
}{
  \wpc{\m{t}}{Q}
  \proves
  \wpc{\m{t}}{Q'}
}     \relabel{rule:wp-cons}

    \infer*[lab=wp-frame]{}{
  P \sepand \wpc{\hpt}{Q}
  \proves
  \wpc{\hpt}{\liftA{P} \sepand Q}
}     \relabel{rule:wp-frame}

    \infer*[lab=wp-nest]{}{
  \wpc{\m{t}_1}{
    \wpc{\m{t}_2}{Q}
  }
  \lequiv
  \wpc{(\m{t}_1 \m. \m{t}_2)}{Q}
}     \relabel{rule:wp-nest}

    \infer*[lab=wp-conj]{
  \idx(Q_1) \inters \supp{\m{t}_2} \subs \supp{\m{t}_1}
    \\
  \idx(Q_2) \inters \supp{\m{t}_1} \subs \supp{\m{t}_2}
}{
  \wpc{\m{t}_1}{Q_1}
  \land
  \wpc{\m{t}_2}{Q_2}
  \proves
  \wpc{(\m{t}_1 \m+ \m{t}_2)}{Q_1 \land Q_2}
}     \label{rule:wp-conj}

    \infer*[lab=c-wp-swap]{}{
  \CC\prob v.
    \WP{\m{t}}{Q(v)}
    \land \ownall
  \proves
  \WP{\m{t}}*{\CC\prob v.Q(v)}
}     \relabel{rule:c-wp-swap}
  \end{proofrules}
  \rulesection{Program WP rules}
  \begin{proofrules}
    \infer*[lab=wp-skip]{}{
  P \proves \WP {\m[i: \code{skip}]} {P}
}     \label{rule:wp-skip}

    \infer*[lab=wp-seq]{}{
  \WP {\m[i: t]}[\big]{
    \WP {\m*[i: \smash{t'}]} {Q}
  }
  \proves
  \WP {\m[i: (t\code{;}\ t')]} {Q}
}     \relabel{rule:wp-seq}

    \infer*[lab=wp-assign]{
  \p{x} \notin \pvar(\expr)
  \\
  \forall \p{y} \in \pvar(\expr) \st
    \m{\permap}(\ip{y}{i}) > 0
  \\
  \m{\permap}(\ip{x}{i})=1
}{
  (\m{\permap})
  \proves
  \WP {\m[i: \code{x:=}\expr]}[\big] {
    \sure{\p{x}\at{i} = \expr\at{i}}\withp{\m{\permap}}
  }
}
     \relabel{rule:wp-assign}

    \infer*[lab=wp-samp]{}{
  \perm{\ip{x}{i} : 1}
  \proves
  \WP {\m[i: \code{x:~$\dist$($\vec{v}$)}]}
      {\distAs{\ip{x}{i}}{\dist(\vec{v})}}
}     \relabel{rule:wp-samp}

    \infer*[lab=wp-if-prim]{}{
{\begin{array}{@{}r@{}l@{}}
  &\ITE{v}
    {\WP{\m[i: t_1]}{Q(1)}}
    {\WP{\m[i: t_2]}{Q(0)}}
\\\proves{}&
  \WP{
    \m[i: \code{if $\;v\;$ then $\;t_1\;$ else $\;t_2$}]
  }{Q(v)}
\end{array}}
}     \label{rule:wp-if-prim}

    \infer*[lab=wp-bind]{}{
  \sure{\expr\at{i}=v} *
    \WP{\m*[i: {\Ectxt[v]}]}{Q}
  \proves
  \WP{\m*[i: {\Ectxt[\expr]}]}{Q}
}     \label{rule:wp-bind}

    \infer*[lab=wp-loop-unf]{}{
  {\begin{array}{@{}r@{}l@{}}
    &\WP {\m[i: \Loop{n}{t}]} {
      \WP {\m[i: t]} { Q }
    }
    \\\proves{}&
    \WP {\m[i: \Loop{(n+1)}{t}]} {Q}
  \end{array}}
}     \relabel{rule:wp-loop-unf}

    \infer*[lab=wp-loop,right=$n\in\Nat$]{
  \forall i < n\st
  P(i) \proves
  \WP {\m[j: t]} {P(i+1)}
}{
  P(0) \proves
  \WP {\m[j: \Loop{n}{t}]} {P(n)}
}     \relabel{rule:wp-loop}
  \end{proofrules}
  \rulesectionend
\caption{The primitive WP rules of \thelogic.}
\label{fig:wp-rules}
\end{figure}

\begin{figure}[tp]
\adjustfigure[\small]
  \rulesection*{Ownership and distributions}
  \begin{proofrules}
    \infer*[lab=sure-dirac]{}{
  \distAs{\aexpr\at{i}}{\dirac{v}}
  \lequiv
  \sure{\aexpr\at{i}=v}
}     \label{rule:sure-dirac}

    \infer*[lab=sure-eq-inj]{}{
  \sure{\aexpr\at{i} = v}
  *
  \sure{\aexpr\at{i} = v'}
  \proves
  \pure{v=v'}
}
     \label{rule:sure-eq-inj}

    \infer*[lab=sure-sub]{}{
  \distAs{\aexpr_1\at{i}}{\prob}
  *
  \sure{(\aexpr_2 = f(\aexpr_1))\at{i}}
  \proves
  \distAs{\aexpr_2\at{i}}{\prob \circ \inv{f}}
}
     \label{rule:sure-sub}

    \infer*[lab=dist-fun]{}{
  \distAs{\aexpr\at{i}}{\prob}
  \proves
  \distAs{(f\circ\aexpr)\at{i}}{\prob \circ \inv{f}}
}
     \label{rule:dist-fun}

    \infer*[lab=dirac-dup]{}{
  \distAs{\aexpr\at{i}}{\dirac{v}}
  \proves
  \distAs{\aexpr\at{i}}{\dirac{v}} *
  \distAs{\aexpr\at{i}}{\dirac{v}}
}     \label{rule:dirac-dup}

    \infer*[lab=dist-supp]{}{
  \distAs{\aexpr\at{i}}{\prob}
  \proves
  \distAs{\aexpr\at{i}}{\prob} * \sure{\aexpr\at{i} \in \psupp(\prob)}
}
     \label{rule:dist-supp}

    \infer*[lab=prod-unsplit]{}{
  \distAs{\aexpr_1\at{i}}{\prob_1} *
  \distAs{\aexpr_2\at{i}}{\prob_2}
  \proves
  \distAs{(\aexpr_1\at{i}, \aexpr_2\at{i})}{\prob_1 \otimes \prob_2}
}
     \label{rule:prod-unsplit}
    \end{proofrules}
    \rulesection{\Supercond}
    \begin{proofrules}
      \infer*[lab=c-fuse]{}{
  \CC{\prob} v.
  \CC{\krnl(v)} w.
    K(v,w)
  \lequiv
  \CC{\prob \fuse \krnl} (v,w). K(v,w)
}       \relabel{rule:c-fuse}

      \infer*[lab=c-swap]{}{
  \CC{\prob_1} v_1.
    \CC{\prob_2} v_2.
      K(v_1, v_2)
  \proves
  \CC{\prob_2} v_2.
    \CC{\prob_1} v_1.
      K(v_1, v_2)
}       \label{rule:c-swap}

      \infer*[lab=sure-convex]{}{
  \CC\prob v.\sure{\aexpr\at{i}}
  \proves
  \sure{\aexpr\at{i}}
}       \label{rule:sure-convex}

      \infer*[lab=dist-convex]{}{
  \CC\prob v.\distAs{\aexpr\at{i}}{\prob'}
  \proves
  \distAs{\aexpr\at{i}}{\prob'}
}       \label{rule:dist-convex}

      \infer*[lab=c-sure-proj]{}{
  \CC\prob (v, w).\sure{\aexpr(v)\at{i}}
  \lequiv
  \CC{\prob\circ\inv{\proj_1}} v.\sure{\aexpr(v)\at{i}}
}       \label{rule:c-sure-proj}

      \infer*[lab=c-sure-proj-many]{}{
  \CC\prob (\m{v}, w).
    \sure{\ip{x}{i}=\m{v}(\ip{x}{i})}_{\ip{x}{i}\in X}
  \lequiv
  \CC{\prob\circ\inv{\proj_1}} \m{v}.
    \sure{\ip{x}{i}=\m{v}(\ip{x}{i})}_{\ip{x}{i}\in X}
}       \label{rule:c-sure-proj-many}

      \infer*[lab=c-extract]{}{
  \CC{\prob_1} v_1. \bigl(
    \sure{\aexpr_1\at{i} = v_1} *
    \distAs{\aexpr_2\at{i}}{\prob_2}
  \bigr)
  \proves
  \distAs{\aexpr_1\at{i}}{\prob_1} *
  \distAs{\aexpr_2\at{i}}{\prob_2}
}       \relabel{rule:c-extract}

      \infer*[lab=c-dist-proj]{}{
  \CC\prob (x, y).
    \distAs{\aexpr\at{i}(x)}{\prob(x)}
  \proves \CC{\prob\circ\inv{\proj_1}} x.
    \distAs{\aexpr\at{i}(x)}{\prob(x)}
}       \label{rule:c-dist-proj}
    \end{proofrules}
    \rulesection{Relational Lifting}
    \begin{proofrules}
    \infer*[lab=rl-cons]{
  R_1 \subs R_2
}{
  \cpl{R_1} \proves \cpl{R_2}
}
     \label{rule:rl-cons}

    \infer*[lab=rl-unary]{
  R \subs \Val^{\set{\p{x}_1\at{i},\dots,\p{x}_n\at{i}}}
}{
  \cpl{R} \proves \sure{R(\p{x}_1\at{i},\dots,\p{x}_n\at{i})}
}
     \label{rule:rl-unary}

    \infer*[lab=rl-eq-dist]{i \ne j}{
  \cpl{\ip{x}{i}=\ip{y}{j}}
  \proves
  \E \prob.
    \distAs{\ip{x}{i}}{\prob}
    *
    \distAs{\ip{y}{j}}{\prob}
}     \label{rule:rl-eq-dist}

    \infer*[lab=rl-convex]{}{
  \CC\prob \wtv.\cpl{R} \proves \cpl{R}
}
     \relabel{rule:rl-convex}

    \infer*[lab=rl-merge]{}{
  \cpl{R_1} * \cpl{R_2}
  \proves
  \cpl{R_1 \land R_2}
}
     \relabel{rule:rl-merge}

    \infer*[lab=rl-sure-merge]{
  R \subs \Val^{X}
  \\
  \pvar(\expr\at{i}) \subs X
}{
  \cpl{R} * \sure{\ip{x}{i} = \expr\at{i}}
  \proves
  \cpl{R \land \p{x}\at{i} = \expr\at{i}}
}     \label{rule:rl-sure-merge}

    \infer*[lab=coupling]{
  \prob \circ \inv{\proj_1} = \prob_1
  \\
  \prob \circ \inv{\proj_2} = \prob_2
  \\
  \prob(R) = 1
}{
  \distAs{\p{x}_1\at{\I1}}{\prob_1} *
  \distAs{\p{x}_2\at{\I2}}{\prob_2}
  \proves
  \cpl{R(\p{x}_1\at{\I1}, \p{x}_2\at{\I2})}
}
     \relabel{rule:coupling}\end{proofrules}
  \rulesectionend
\caption{Derived rules.}
\label{fig:derived-rules}
\end{figure}

\begin{figure}[tp]
\adjustfigure[\small]
\begin{proofrules}
    \infer*[lab=wp-loop-0]{}{
  P \proves \WP {\m[i: \Loop{0}{t}]} {P}
}     \label{rule:wp-loop-0}

    \infer*[lab=wp-loop-lockstep,right=$n\in\Nat$]{
  \forall k < n\st
    P(k) \proves \WP {\m[i: t, j: t']}{P(k+1)}
}{
  P(0) \proves
  \WP {\m[i: (\Loop{n}{t}), j: (\Loop{n}{t'})]} {P(n)}
}     \label{rule:wp-loop-lockstep}

    \infer*[lab=wp-rl-assign]{
  R \subs \Val^{X}
  \\
  \ip{x}{i} \notin \pvar(\expr\at{i}) \subs X
  \\
  \forall \p{y} \in \pvar(\expr) \st
    \m{\permap}(\ip{y}{i}) > 0
  \\
  \m{\permap}(\ip{x}{i})=1
}{
  \cpl{R}\withp{\m{\permap}}
  \proves
  \WP {\m[i: \code{x:=}\expr]}[\big] {
    \cpl{R \land \p{x}\at{i} = \expr\at{i}}\withp{\m{\permap}}
  }
}     \label{rule:wp-rl-assign}

    \infer*[lab=wp-if-unary]{
  P * \sure{\ip{e}{\I1} = 1}
  \gproves
  \WP {\m[\I1: t_1]}{Q(1)}
  \\
  P * \sure{\ip{e}{\I1} = 0}
  \gproves
  \WP {\m[\I1: t_2]}{Q(0)}
}{
  P * \distAs{\ip{e}{\I1}}{\beta}
  \gproves
  \WP {\m[\I1: \Cond{\p{e}}{t_1}{t_2}]}[\big]{\CC{\beta} b.Q(b)}
}     \label{rule:wp-if-unary}
  \end{proofrules}
\caption{Derived WP rules.}
\label{fig:derived-wp-rules}
\end{figure}
 \section{Auxiliary Definitions}
\label{sec:appendix:definition}

\begin{definition}[Bind and return]
  Let $A$ be a countable set and $\salg$ a \salgebra.
  We define the following functions:
  \begin{align*}
    \return &\from A \to \Dist(\Full{A})
    &
    \bind &\from
      \Dist(\Full{A}) \to
      (A \to \Dist(\salg))
        \to \Dist(\salg)
    \\
    \return&(a) \is \dirac{a}
    &
    \bind&(\prob, \krnl) \is
      \fun \event \in \salg.
\sum_{a\in A} \prob(a) \cdot \krnl(a)(\event)
  \end{align*}
  We will use throughout Haskell-style notation for monadic expressions,
  for instance:
  \[
    \bigl(\DO{x <- \prob; y <- f(x); \return(x+y)}\bigr)
    \equiv
    \bind(\prob,
      \fun x.
        \bind(f(x),
          \fun y.
            \return(x+y)
        )
    )
  \]
\end{definition}

The $\bind$ and $\return$ operators form a well-known monad with
$\Dist$, and thus satisfy the monadic laws:
\begin{align*}
  \bind(\prob,\fun x.\return(x)) &= \prob
  \tag{\textsc{unit-r}}
  \label{prop:bind-unit-r}
  \\
  \bind(\return(v),\krnl) &= \krnl(v)
  \tag{\textsc{unit-l}}
  \label{prop:bind-unit-l}
  \\
  \bind(\bind(\prob,\krnl_1),\krnl_2) &=
    \bind(\prob,\fun x.\bind(\krnl_1(x),\krnl_2))
  \tag{\textsc{assoc}}
  \label{prop:bind-assoc}
\end{align*}

By \cref{thm:countable-partition-generated},
for any sigma algebra $\sigmaF'$ on countable underlying set, there exists
a partition $S$ of the underlying space that generated it, so we can transform
any such $\sigmaF'$ to a full sigma algebra over $S$.
Since we are working with countable underlying set throughout,
the requirement of $\prob$ to be over the full sigma algebra $\Full{A}$ is not an extra restriction.

\begin{definition}[Fusion operation]
  Given $\prob \of \Dist(A)$ and $\krnl \from A \to \Dist(B)$,
  we define their fusion $(\prob \fuse \krnl) \of \Dist(A \times B)$
  by:
  \[
    \prob\fuse\krnl \is
      \fun(v,w). \prob(v)\krnl(v)(w).
  \]
\end{definition}

\begin{proposition}
  $
    \prob\fuse\krnl =
    \bind(\prob,\fun v.(\bind(\krnl(v), \fun w.\return(v,w)))).
  $
\end{proposition}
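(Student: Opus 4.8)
The plan is to prove the equality of these two objects of $\Dist(A \times B)$ by comparing their values on singletons. Since the paper works throughout with distributions on the full $\sigma$-algebra over a \emph{countable} set, each side is completely determined by the mass it assigns to each point $(v_0, w_0) \in A \times B$, and all the sums appearing below range over countable sets of nonnegative terms (hence are unconditionally summable). So it suffices to fix an arbitrary $(v_0, w_0)$ and show that both sides assign it the same probability.

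First I would unfold the right-hand side one monadic layer at a time. Abbreviating $g \is \fun v.\bind(\krnl(v), \fun w.\return(v,w))$, I evaluate the inner bind on the point $(v_0,w_0)$: by the definition of $\bind$ and the fact that $\return(v,w) = \dirac{(v,w)}$ assigns the point $(v_0,w_0)$ the indicator value $[v = v_0 \land w = w_0]$, the sum over $w$ collapses to the single term $w = w_0$ and simultaneously forces $v = v_0$. This gives $g(v)(\{(v_0,w_0)\}) = \krnl(v_0)(w_0)$ when $v = v_0$ and $0$ otherwise. Feeding this into the outer bind, $\bind(\prob,g)(\{(v_0,w_0)\}) = \sum_{v \in A} \prob(v)\cdot g(v)(\{(v_0,w_0)\})$ again collapses to its $v = v_0$ term, yielding $\prob(v_0)\cdot\krnl(v_0)(w_0)$. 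This is exactly $(\prob \fuse \krnl)(v_0,w_0)$ by the definition of $\fuse$, so the two sides coincide.

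The argument is entirely routine: there is no real obstacle beyond the two nested indicator collapses and the (standard) reduction to point masses justified by countability. Indeed, the right-hand side is nothing but the do-block $\DO{v <- \prob; w <- \krnl(v); \return(v,w)}$ in the notation introduced above, and one may equivalently read the result off from that presentation, which makes transparent that $\fuse$ is simply the $\bind$ that retains the intermediate sample $v$ alongside the result $w$.
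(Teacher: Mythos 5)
Your proof is correct and follows essentially the same route as the paper's: both evaluate the right-hand side at an arbitrary point, unfold the two nested binds into a double sum, and collapse it via the Dirac indicator to $\prob(v_0)\cdot\krnl(v_0)(w_0)$, matching the definition of $\fuse$. The extra remark about countability justifying the reduction to point masses is a harmless (and correct) elaboration of what the paper leaves implicit.
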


\begin{proof}
  By unfolding the definitions, we obtain
  \begin{align*}
    &\bind(\prob,\fun v.(\bind(\krnl(v), \fun w.\return(v,w))))(a,b)
    \\
    &=
    \Sum_{a'\in A}
      \prob(a') \cdot
      \Sum_{b'\in B}
        \krnl(a')(b')\cdot\dirac{(a',b')}(a,b)
    \\&= \prob(a)\krnl(a)(b)
    \qedhere
  \end{align*}
\end{proof}

\begin{lemma}\label{lm:fuse-split}
  For all $\prob \of \Dist(A\times B)$,
  there exists a $\krnl \from A \to \Dist(B)$
  such that
  $ \prob = (\prob \circ \inv{\proj_1})\fuse\krnl $.
\end{lemma}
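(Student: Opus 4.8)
The plan is to prove this by explicit \emph{disintegration} of the discrete distribution $\prob$ into its first marginal and a conditional kernel; in the discrete setting this is elementary. Write $\prob_1 \is \prob \circ \inv{\proj_1}$ for the first marginal, so that $\prob_1(a) = \Sum_{b \in B} \prob(a,b)$. Since $\prob$ is a probability distribution on the nonempty set $A \times B$, both $A$ and $B$ are nonempty, so I would fix some $b_0 \in B$ to serve as a default outcome.

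I would then define the candidate kernel $\krnl \from A \to \Dist(B)$ by
\[
  \krnl(a) \is
    \begin{cases}
      \fun b. \dfrac{\prob(a,b)}{\prob_1(a)} & \text{if } \prob_1(a) > 0, \\
      \dirac{b_0} & \text{otherwise.}
    \end{cases}
\]
First I would check that each $\krnl(a)$ is a genuine element of $\Dist(B)$: in the default case this is immediate, and when $\prob_1(a) > 0$ the values $\prob(a,b)/\prob_1(a)$ are nonnegative and sum to $\prob_1(a)/\prob_1(a) = 1$ by definition of the marginal.

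It remains to verify $\prob = \prob_1 \fuse \krnl$ pointwise, i.e.\ that $\prob(a,b) = \prob_1(a)\cdot\krnl(a)(b)$ for every $(a,b)$. When $\prob_1(a) > 0$ this is immediate from the definition of $\krnl$. The only subtlety -- and the one step worth isolating -- is the case $\prob_1(a) = 0$: here $\Sum_{b\in B} \prob(a,b) = 0$ with all summands nonnegative forces $\prob(a,b) = 0$ for every $b$, so both sides equal $0$ regardless of how $\krnl(a)$ was chosen, which is exactly why the arbitrary default distribution is harmless. Since both distributions then agree on every singleton of the countable set $A\times B$, they coincide, yielding $\prob = \prob_1 \fuse \krnl = (\prob \circ \inv{\proj_1}) \fuse \krnl$ as required. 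I do not expect any genuine obstacle: the argument is entirely routine once the zero-marginal case is absorbed into the default branch of $\krnl$.
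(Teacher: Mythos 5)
Your proof is correct and follows essentially the same route as the paper: define the kernel by normalizing $\prob(a,\cdot)$ by the marginal $\prob_1(a)$ and treat the zero-marginal case separately. The only difference is that you use $\dirac{b_0}$ as the default where the paper uses the constant-zero function; your choice is in fact the more careful one, since it keeps $\krnl(a)$ a genuine element of $\Dist(B)$ for every $a$, while the identity $\prob = \prob_1 \fuse \krnl$ holds either way because $\prob_1(a)=0$ forces $\prob(a,b)=0$ for all $b$.
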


\begin{proof}
  Let $\prob_1 = \prob \circ \inv{\proj_1}$.
  Then the result is immediate by letting
  $
    \krnl(a)(b) =
      \begin{cases}
        \frac{\prob_0(a,b)}{\prob_1(a)}
          \CASE \prob_1(a) > 0
        \\
        0 \OTHERWISE
      \end{cases}
  $.
\end{proof}

\subsection{Program Semantics}

We assume each primitive operator $\prim \in \set{\p+,\p-,\p<,\dots}$
has an associated arity $\arity(\prim) \in \Nat$, and
is given semantics as some function
$ \sem{\prim} \from \Val^{\arity(\prim)} \to \Val. $
Expressions $\expr \in \Expr$ are given semantics as a function
$ \sem{\expr} \from \Store \to \Val $
as standard:
\begin{align*}
  \sem{v}(s) &\is v
  &
  \sem{\p{x}}(s) &\is s(\p{x})
  &
  \sem{\prim(\expr_1,\dots,\expr_{\arity(\prim)})}(s) &\is
    \sem{\prim}(\sem{\expr_1},\dots,\sem{\expr_{\arity(\prim)}})
\end{align*}

\begin{definition}[Term semantics]
\label{def:semantics}
  Given $\term \in \Term$ we define its \emph{kernel semantics}
  $\Sem[K]{\term} \from \Store \to \Dist(\Full{\Store}) $
  as follows:
  \begin{align*}
    \Sem[K]{\code{skip}}(s) &\is
      \return(s)
    \\
    \Sem[K]{\code{x:=}\expr}(s) &\is
        \return(s\upd{\p{x}->\sem{\expr}(s)})
    \\
    \Sem[K]{\code{x:~$\dist$($\expr_1,\dots,\expr_n$)}}(s) &\is
      \DO{
        v <- \sem{\dist}(\sem{\expr_1}(s),\dots,\sem{\expr_n}(s));
        \return(s\upd{\p{x}->v})
      }
    \\
    \Sem[K]{\term_1\p;\term_2}(s) &\is
      \DO{
        s' <- \Sem[K]{\term_1}(s);
        \Sem[K]{\term_2}(s')
      }
    \\
    \Sem[K]{\code{if$\;\expr\;$then$\;\term_1\;$else$\;\term_2\;$}}(s) &\is
      \DO{
        \ITE{\sem{\expr}(s) \ne 0}
          {\Sem[K]{\term_1}(s)}
          {\Sem[K]{\term_2}(s)}
      }
    \\
    \Sem[K]{\Loop{\expr}{\term}}(s) &\is
      \var{loop}_{\term}(\sem{\expr}(s), s)
\end{align*}
  where $\var{loop}_{\term}$ simply iterates $\term$:
  \[
    \var{loop}_{\term}(n, s) \is
      \begin{cases}
        \return(s) \CASE n \leq 0 \\
        \DO{s' <- \var{loop}_{\term}(n-1, s); \Sem[K]{\term}(s')} \OTHERWISE
      \end{cases}
  \]
The semantics of a term is then defined as:
  \begin{align*}
    \sem{\term} &\from \Dist(\Full{\Store}) \to \Dist(\Full{\Store})
    \\
    \sem{\term}(\prob) &\is \DO{s <- \prob; \Sem[K]{\term}(s)}
  \end{align*}
\end{definition}

Evaluation contexts~$\Ectxt$ are defined by the following grammar:
\begin{grammar}
  \Ectxt \is
       \Assn{x}{\pr{\Ectxt}}
    | \Sample{x}{\dist}{\vec{\expr}_1,\pr{\Ectxt},\vec{\expr}_2}
| \Cond{\pr{\Ectxt}}{\term_1}{\term_2}
    | \Loop{\pr{\Ectxt}}{\term}
  \\
  \pr{\Ectxt} \is
      [\hole]
    | \prim(\vec{\expr}_1,\pr{\Ectxt},\vec{\expr}_2)
\end{grammar}

A simple property holds for evaluation contexts.

\begin{lemma}
  \label{lemma:context-binding}
  $\Sem[K]{\Ectxt[\expr]}(s) = \Sem[K]{\Ectxt[\sem{\expr}(s)]}(s).$
\end{lemma}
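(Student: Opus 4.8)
The plan is to prove the statement by \emph{structural induction} on the evaluation context, after first isolating the corresponding fact for the expression-level contexts~$\pr{\Ectxt}$. The underlying intuition is that an evaluation context places its hole in \emph{redex position}, where the surrounding expression is fully evaluated in the current store~$s$ before any probabilistic effect takes place; since expression evaluation is deterministic and depends only on~$s$, replacing the active subexpression~$\expr$ by its value $\sem{\expr}(s)$ (injected as a literal via the $\Expr\ni\expr\is\val\mid\dots$ production) cannot alter the resulting kernel.

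First I would establish the expression-level analogue as a helper lemma:
\[
  \sem{\pr{\Ectxt}[\expr]}(s) = \sem{\pr{\Ectxt}[\sem{\expr}(s)]}(s),
\]
by induction on~$\pr{\Ectxt}$. In the base case $\pr{\Ectxt}=[\hole]$ the claim reduces to $\sem{\expr}(s) = \sem{\sem{\expr}(s)}(s)$, which holds because $v=\sem{\expr}(s)$ is a value and $\sem{v}(s)=v$ by the denotation of literals. In the inductive case $\pr{\Ectxt}=\prim(\vec{\expr}_1,\pr{\Ectxt}_0,\vec{\expr}_2)$, unfolding the semantics of primitive application leaves all arguments except the one passing through the hole unchanged, the inductive hypothesis handles the hole argument, and applying $\sem{\prim}$ to the (now equal) argument tuples yields the result.

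With the helper in place, I would then induct on the term-level context~$\Ectxt$, unfolding the relevant clause of \cref{def:semantics} in each case. For $\Ectxt=\Assn{x}{\pr{\Ectxt}}$ and $\Ectxt=\Sample{x}{\dist}{\vec{\expr}_1,\pr{\Ectxt},\vec{\expr}_2}$, the active expression (the right-hand side of the assignment, resp.\ a distribution parameter) is evaluated in~$s$, so the helper lemma makes the two sides agree on the evaluated value and hence on the resulting point-mass (resp.\ sampling) kernel. For $\Ectxt=\Cond{\pr{\Ectxt}}{\term_1}{\term_2}$, the helper guarantees that the guard evaluates identically on both sides, so the same branch is selected and the kernels coincide. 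For $\Ectxt=\Loop{\pr{\Ectxt}}{\term}$, the helper ensures the iteration count $\sem{\pr{\Ectxt}[\cdot]}(s)$ is the same on both sides, so $\var{loop}_{\term}$ is invoked with an identical bound from the identical store~$s$.

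I do not expect a genuine obstacle: the statement is essentially a determinism and context-locality property, and the only point requiring care is the \emph{order of evaluation} enforced by the grammar of contexts, namely that the hole never sits under a binder or to the right of an unevaluated effectful step. This is exactly what the context productions above guarantee, so the store at the moment the hole's expression is evaluated is precisely~$s$, the induction goes through uniformly, and no measure-theoretic reasoning about $\bind$ is needed beyond unfolding its definition on the point-mass outputs produced by each clause.
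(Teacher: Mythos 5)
Your proof is correct and follows exactly the route the paper takes: the paper's entire proof of \cref{lemma:context-binding} is ``easy by induction on the structure of evaluation contexts,'' and your write-up is a faithful elaboration of that, with the helper lemma for $\pr{\Ectxt}$ supplying the actual inductive content and the term-level cases reducing to it by unfolding \cref{def:semantics}.
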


\begin{proof}
  Easy by induction on the structure of evaluation contexts.
\end{proof}

\subsection{Permissions}
\label{sec:appendix:permissions}

\Cref{rule:sure-and-star}
needs a side-condition on assertions which concerns how
an assertion constrains permission ownership.
In \thelogic, most manipulations do not concern permissions,
except for when a mutation takes place, where permissions are used
to make sure the frame forgets all information about the variable to be mutated.
The notion of \emph{permission-abstract assertion} we now define
characterises the assertions which are not chiefly concerned about permissions.

An assertion~$P \in \HAssrt_I$ is \emph{permission-abstract}
with respect to some~$X \subs I \times \Var$,
written $\psinv(P, X)$, if it is invariant under scaling of permission of~$X$;
that is:
\[
  \psinv(P, X) \is
    \forall \m{\salg},\m{\prob},\m{\permap}, q, n\in \Nat\setminus\set{0}.
      P(\m{\salg},\m{\prob},\m{\permap}\m[\ip{x}{i}: q])
      \implies
        P(\m{\salg},\m{\prob},\m{\permap}\m[\ip{x}{i}: q/n]).
\]
For example,
  fixing $X=\set{\ip{x}{i}}$ then
  $ \distAs{\ip{x}{i}}{\prob} $,
  $ \sure{\ip{x}{i}=v} $, and
  $ \perm{\ip{y}{i}: 1} $
  are permission-abstract,
  but $ \perm{\ip{x}{i}: \onehalf} $ is not. \section{Measure Theory Lemmas}

\paragraph{Notation}
In what follows,
given $n\in\Nat$ with~$n> 1$,
we write $\numlist{n}$
to denote the set $\set{1, \dots, n}$.
Moreover, for iterated summation we use the
notation
$
  \sum_{i \in I \mid \Phi(i)} f(i)
$
where~$I = \set{i_0, i_1, \dots}$ is countable
and $\Phi$ is a predicate on elements of~$I$,
to denote the sum $ f(j_0) + f(j_1) + \dots $
where $j_0, j_1, \dots$ is the sublist of $ i_0, i_1, \dots $ consisting
of the elements that satisfy~$\Phi$.
A similar convention is used for other commutative and associative operators,
\eg $\union$.
A countable partition of~$\Outcomes$ is a partition of $\Outcomes$,
$S \subs \powerset(\Outcomes)$,
with countably many sets.
For uniformity, we represent countable partitions as $S = \cpart{A}$
with the convention that when the partition has finitely many sets,
say~$n$, all the $A_i$ with $i \geq n$ are empty.

As mentioned, \thelogic\ is only concerned with discrete distributions,
\ie distributions over a countable set of outcomes.
The following lemma expresses the key property of \salgebra[s]
over countable outcomes that we exploit for proving the
other results.

\begin{lemma}
  \label{thm:countable-partition-generated}
 Let $\Outcomes$ be an countable set, and $\salg$ to be an arbitrary \salgebra{}
  on $\Outcomes$. Then there exists a countable partition $S$ of~$\Outcomes$
  such that $\salg = \closure{S}$.
\end{lemma}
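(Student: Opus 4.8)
The plan is to construct $S$ as the collection of \emph{atoms} of $\salg$. For each point $\omega \in \Outcomes$ define
\[
  A_\omega \is \bigcap_{\event \in \salg,\; \omega \in \event} \event,
\]
the intersection of all measurable sets containing $\omega$. First I would record the two definitional facts $\omega \in A_\omega$, and $A_\omega \subs \event$ for every $\event \in \salg$ with $\omega \in \event$ (minimality). The heart of the argument is to show $A_\omega \in \salg$, after which the atoms form the desired countable partition almost for free.

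The main obstacle is precisely that $A_\omega$ is an intersection over a \emph{possibly uncountable} family, so closure of $\salg$ under countable intersections does not apply directly. Here I would exploit countability of $\Outcomes$ to rewrite $A_\omega$ as a countable intersection: for each $\omega' \in \Outcomes \setminus A_\omega$ pick, by definition of $A_\omega$, some $\event_{\omega'} \in \salg$ with $\omega \in \event_{\omega'}$ and $\omega' \notin \event_{\omega'}$, and set $B \is \bigcap_{\omega' \in \Outcomes \setminus A_\omega} \event_{\omega'}$. Since $\Outcomes \setminus A_\omega$ is countable, $B \in \salg$. A short double inclusion then gives $B = A_\omega$: the inclusion $A_\omega \subs B$ holds because each $\event_{\omega'}$ contains $\omega$; and $B \subs A_\omega$ holds because any $\omega'' \in B \setminus A_\omega$ would lie outside its own $\event_{\omega''}$, a contradiction. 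Hence $A_\omega \in \salg$.

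Next I would verify that the distinct atoms partition $\Outcomes$. They cover $\Outcomes$ since $\omega \in A_\omega$, and I would show that $A_{\omega_1} \cap A_{\omega_2} \ne \emptyset$ forces $A_{\omega_1} = A_{\omega_2}$: if $x$ lies in both, then $A_x \subs A_{\omega_1}$ by minimality, and a case split on whether $\omega_1 \in A_x$ (using that $A_{\omega_1} \setminus A_x \in \salg$) yields $A_{\omega_1} = A_x$, and symmetrically $A_{\omega_2} = A_x$. Taking $S$ to be the set of distinct atoms gives a genuine partition, and it is countable because $\omega \mapsto A_\omega$ maps the countable set $\Outcomes$ onto $S$. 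Finally, to conclude $\salg = \closure{S}$: the inclusion $\closure{S} \subs \salg$ follows since $S \subs \salg$ and $\salg$ is a $\sigma$-algebra; conversely, any $\event \in \salg$ satisfies $\event = \bigcup_{\omega \in \event} A_\omega$ (each $A_\omega \subs \event$ for $\omega \in \event$), which is a countable union of atoms because $\event \subs \Outcomes$ is countable, hence $\event \in \closure{S}$. I expect the reduction of the uncountable intersection to a countable one to be the only delicate point; everything else is routine set manipulation driven by minimality of the atoms.
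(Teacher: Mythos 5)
Your proof is correct and takes essentially the same approach as the paper's: both realize the atom at each point as a countable intersection of separating sets (exploiting countability of $\Outcomes$), show the atoms are pairwise equal or disjoint, and write each measurable set as a countable union of atoms. The only difference is presentational — you define the atom as the full intersection first and then exhibit it as a countable one, while the paper builds the countable intersection directly and then verifies minimality.
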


\begin{proof}
  For every element $x \in \Outcomes$,
  we identify the smallest event $E_x \in \salg$ such that $x \in E_x$,
  and show that for $x, z \in \Outcomes$,
  either $E_x = E_z$ or $E_x \cap E_z = \emptyset$.
  Then the set $S = \set{E_x \mid x \in \Outcomes}$
  is a partition of~$\Outcomes$,
  and any event $E \in \salg$ can be represented as
  $\Union_{x \in E} E_x$, which suffices to show that $\salg = \closure{S}$.

  For every $x, y$, let
\begin{align*}
    A_{x, y} &=
    \begin{cases}
      \Outcomes \CASE \text{$\forall E \in \salg$, either $x , y$ both in~$E$ or $x , y$ both not in~$E$} \\
      E \OTHERWISE, \text{pick any $E \in \salg$ such that $x \in E$ and $y \notin E$}
    \end{cases}
  \end{align*}
Then we show that, for all $x$,
  $E_x = \cap_{y \in \Outcomes} A_{x, y}$ is the smallest
  event in $\salg$ such that $x \in E_x$ as follows.
  If there exists $E_x'$ such that $x \in E_x'$ and $E_x' \subset E_x$,
  then $E_x \setminus E_x'$ is not empty. Let $y$ be an element of
  $E_x \setminus E_x'$, and by the definition of $A_{x, y}$, we have
  $y \notin A_{x,y}$. Thus, $y \notin \cap_{y \in \Outcomes} A_{x, y} = E_x$,
  which contradicts with $y \in E_x \setminus E_x'$.

  Next, for any  $x, z \in \Outcomes$,
  since $E_x$ is the smallest event containing $x$ and
  $E_z$ is the smallest event containing $z$,
  the smaller event $E_z \setminus E_x$ is either equivalent to
  $E_z$ or not containing $z$.
  \begin{casesplit}
    \item
      If $E_z \setminus E_x = E_z$, then $E_x$ and $E_z$ are
      disjoint.
    \item If $z \not\in E_z \setminus E_x$, then it must $z \in E_x$,
      which implies that there exists \emph{no} $E \in \salg$ such that
      $x \in E$ and $z \notin E$. Because $\salg$ is closed under
      complement, then there exists \emph{no} $E \in \salg$ such that
      $x \notin E$ and $z \in E$ as well. Therefore,
      we have $x \in  \Inters_{y \in \Outcomes} A_{z, y} = E_z$ as well.
      Furthermore, because $E_z$ is the smallest event in
      $\salg$ that contains $z$ and $E_x$ also contains $z$,
      we have $E_z \subseteq E_x$; symmetrically, we have
      $E_x \subseteq E_z$.
      Thus, $E_x = E_z$.
    \end{casesplit}
    Hence,  the set $S = \set{E_x \mid x \in \Outcomes}$ is a
    countable partition of~$\Outcomes$.
\end{proof}

\begin{lemma}
\label{lemma:sigma-alg-representation}
  If $S = \cpart{A}$ is a partition of $\Outcomes$,
  and $\salg = \closure{S}$,
  then every event~$E\in\salg$ can be written as
  $E = \Dunion_{i \in I} A_i$ for some $I \subs \Nat$.
In other words,
  $
    \closure{S} = \set[\big]{ \Dunion_{i \in I} A_i | I \subseteq \Nat }.
  $
\end{lemma}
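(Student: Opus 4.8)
The plan is to establish the displayed equality by showing that the family $\mathcal{B} \is \set[\big]{\Dunion_{i \in I} A_i | I \subs \Nat}$ coincides with $\closure{S}$. The inclusion $\mathcal{B} \subs \closure{S}$ is immediate: each $A_i$ belongs to $S \subs \closure{S}$, and a \salgebra\ is closed under countable unions, so every union $\Dunion_{i\in I} A_i$ with $I \subs \Nat$ lies in $\closure{S}$.

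For the reverse inclusion $\closure{S} \subs \mathcal{B}$, the key idea is to show that $\mathcal{B}$ is \emph{itself} a \salgebra\ containing $S$; since $\closure{S}$ is by definition the smallest \salgebra\ containing $S$, this yields $\closure{S} \subs \mathcal{B}$. That $S \subs \mathcal{B}$ is clear, taking $I = \set{j}$ to obtain $A_j$. To verify the \salgebra\ axioms, I would check: (i) $\Outcomes \in \mathcal{B}$, using $I = \Nat$ together with the fact that $S$ covers $\Outcomes$; (ii) closure under complement, where for $E = \Dunion_{i\in I} A_i$ one has $E^c = \Dunion_{i \in \Nat \setminus I} A_i$; and (iii) closure under countable unions, where $\Union_k \bigl(\Dunion_{i \in I_k} A_i\bigr) = \Dunion_{i \in \Union_k I_k} A_i$ with $\Union_k I_k \subs \Nat$.

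The only point requiring care --- and the main (though mild) obstacle --- is that steps (ii) and (iii) genuinely use that $S$ is a \emph{partition}: disjointness of the $A_i$ guarantees that the complement picks out exactly the blocks not indexed by $I$ and that the unions on the right remain disjoint unions, while the covering property is what makes the complement computation exact. I would spell out these two identities explicitly (a short direct computation at the level of points $x \in \Outcomes$, using that each $x$ lies in exactly one block $A_i$), after which the three axioms follow routinely and the lemma is complete.
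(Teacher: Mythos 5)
Your proposal is correct and follows essentially the same route as the paper's proof: one inclusion from closure of $\closure{S}$ under countable unions, and the reverse by verifying that $\set[\big]{\Dunion_{i\in I} A_i | I \subs \Nat}$ is itself a \salgebra\ containing $S$, with the same three axiom checks. Your explicit remark that the complement and union identities rely on $S$ being a partition is a point the paper leaves implicit, but it is the same argument.
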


\begin{proof}
  Because \salgebra[s] are closed under countable union,
  for any $I \subseteq \Nat$,
  $ \Dunion_{i \in I} A_i \in \closure{S} $.
  Thus, $\closure{S} \supseteq \set[\big]{ \Dunion_{i \in I} A_i \mid I \subseteq \Nat }$.

  Also, $\set[\big]{ \Dunion_{i \in I} A_i \mid I \subseteq \Nat }$ is
  a \salgebra{}:
\begin{itemize}
    \item
      $\Outcomes = \Dunion_{i \in \Nat} A_i$.
    \item
      Given a countable sequences of events
      $E_1 = \Dunion_{i \in I_1} A_i$,
      $E_2 = \Dunion_{i \in I_2} A_i$,
      \dots,
      let
      $I = \Union_{j\in \Nat} I_j$;
      then we have
      $ \Union_{j\in\Nat} E_i = \Dunion_{i\in I} A_i $.
    \item
      If $E = \Dunion_{i \in I} A_i$, then
      the complement of $E$ is
      $
        (\Outcomes\setminus E) = \Dunion_{i \in (\Nat \setminus I)} A_i
      $.
  \end{itemize}
  Then, $\set{ \Dunion_{i \in I} A_i \mid I \subseteq \Nat }$ is
  a \salgebra{} that contains $S$.
  Therefore, $
    \closure{S} = \set[\big]{\Dunion_{i \in I} A_i \mid I \subseteq \Nat }
  $.
\end{proof}

\begin{lemma}
 \label{lemma:partition-order}
 Let $\Outcomes$ be a countable set.
 If\/ $S_1 = \cpart[i]{A}$ and
    $S_2 = \cpart[j]{B}$ are both countable partitions of $\Outcomes$,
 then $\closure{S_1} \subseteq \closure{S_2}$ implies that
 for any $B_j \in S_2$ with $B_j\ne \emptyset$,
 we can find a unique $A_i \in S_1$ such that $B_j \subseteq A_i$.
\end{lemma}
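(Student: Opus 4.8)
The plan is to exploit \cref{lemma:sigma-alg-representation} applied to the blocks of the coarser partition $S_1$. The key observation is that each block $A_i \in S_1$ is itself an element of $\closure{S_1}$, so by the hypothesis $\closure{S_1} \subseteq \closure{S_2}$ we have $A_i \in \closure{S_2}$. Then \cref{lemma:sigma-alg-representation} (the representation lemma) lets me write each such block as a disjoint union of blocks of $S_2$, i.e.\ $A_i = \Dunion_{j \in J_i} B_j$ for some $J_i \subseteq \Nat$. This is the step that converts the abstract $\sigma$-algebra containment into a concrete combinatorial decomposition of coarse blocks along the fine partition.

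With this representation in hand, existence is a pointwise argument. Given a non-empty $B_j$, I would pick any element $x \in B_j$. Since $S_1$ partitions $\Outcomes$, there is a unique block $A_i$ with $x \in A_i$. Using $A_i = \Dunion_{k \in J_i} B_k$, the point $x$ lies in some $B_k$ with $k \in J_i$; but $x \in B_j$ and the blocks of $S_2$ are pairwise disjoint, so $k = j$ and hence $j \in J_i$. This yields $B_j \subseteq A_i$, as required.

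Uniqueness is then immediate from disjointness of $S_1$: if $B_j \subseteq A_i$ and $B_j \subseteq A_{i'}$ with $A_i, A_{i'} \in S_1$, then picking any $x \in B_j$ (which exists since $B_j$ is non-empty) gives $x \in A_i \cap A_{i'}$, forcing $A_i = A_{i'}$ because distinct blocks of a partition are disjoint.

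I do not expect a serious obstacle here; the whole argument is driven by the single insight of applying the representation lemma to the $A_i$ rather than to the $B_j$ directly, and everything else is routine bookkeeping with the disjointness of the two partitions. The one point to state carefully is that the hypothesis is only used to place each $A_i$ inside $\closure{S_2}$, after which the disjointness of $S_2$ does the rest.
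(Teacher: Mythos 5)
Your proof is correct and follows essentially the same route as the paper's: pick a point $s \in B_j$, take the unique $S_1$-block $A_i$ containing it, observe $A_i \in \closure{S_1} \subseteq \closure{S_2}$, and conclude $B_j \subseteq A_i$. The only (cosmetic) difference is in the last step, where the paper appeals to $B_j$ being the smallest event of $\closure{S_2}$ containing $s$, while you justify the same containment by decomposing $A_i$ via \cref{lemma:sigma-alg-representation} and using disjointness of $S_2$ — an equally valid and, if anything, more explicitly grounded justification; your explicit uniqueness argument is also a welcome addition the paper leaves implicit.
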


\begin{proof}
 For any $B_j \in S_2$ with $B_j\ne \emptyset$,
 pick an arbitrary element $s \in B_j$ and
 denote the unique element of $S_1$ that contains~$s$ as~$A_i$.
 Because $A_i \in S_1$ and $S_1 \subset  \closure{S_1} \subseteq \closure{S_2}$,  we have $A_i \in  \closure{S_2}$.
Note that $s \in B_j$ and $B_j$ is an element of the partition $S_2$
  that generates $\closure{S_2}$, $B_j$ must be the smallest event
  in $\closure{S_2}$ that contains $s$.
 Because $s \in A_i$ as well, $B_j$ being the smallest event containing $s$
  implies that $B_j \subseteq A_i$.
\end{proof}

\begin{lemma}
 \label{lemma:bind-extend}
 Assume we are given a \salgebra{} $\sigmaF_1$
 over a countable set $\Outcomes$,
 measure $\mu_1 \in \Dist(\sigmaF_1)$,
 a countable set $A$,
 a distribution $\mu \in \Full{A}$,
 and a function $\krnl_1 \colon A \to \Dist(\sigmaF_1)$
 such that $\mu_1 = \bind(\mu, \krnl_1)$.
 Then, for any probability space $(\sigmaF_2, \mu_2)$ such that
 $(\sigmaF_1, \mu_1) \extTo (\sigmaF_2, \mu_2)$,
 there exists $\krnl_2$ such that $\mu_2 = \bind(\mu, \krnl_2)$,
 and that for any $a \in \psupp(\mu)$,
 $(\sigmaF_1, \krnl_1(a)) \extTo (\sigmaF_2, \krnl_2(a))$.
\end{lemma}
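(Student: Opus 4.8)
The plan is to reduce the whole statement to a purely combinatorial disintegration over the countable partitions that generate the two \salgebra[s]. First I would apply \cref{thm:countable-partition-generated} to write $\sigmaF_1 = \closure{S_1}$ and $\sigmaF_2 = \closure{S_2}$ for countable partitions $S_1 = \cpart[i]{A}$ and $S_2 = \cpart[j]{B}$ of $\Outcomes$. Since $(\sigmaF_1,\mu_1)\extTo(\sigmaF_2,\mu_2)$ entails $\sigmaF_1 \subseteq \sigmaF_2$, \cref{lemma:partition-order} gives that $S_2$ refines $S_1$: every nonempty atom $B_j$ lies inside a unique atom of $S_1$, which I denote $A_{\iota(j)}$. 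By \cref{lemma:sigma-alg-representation} a measure on either space is determined by its values on the atoms, and $\restr{\mu_2}{\sigmaF_1}=\mu_1$ forces $\mu_1(A_i)=\sum_{j\mid B_j\subseteq A_i}\mu_2(B_j)$ for every $i$; thus $\mu_2$ already splits each mass $\mu_1(A_i)$ across the sub-atoms of $A_i$.

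Next I would define $\krnl_2$ by splitting each value $\krnl_1(a)(A_i)$ across the sub-atoms of $A_i$ in exactly the same proportions in which $\mu_2$ splits $\mu_1(A_i)$. Concretely, for a nonempty atom $B_j$ I set
\[
  \krnl_2(a)(B_j) \is
    \begin{cases}
      \krnl_1(a)(A_{\iota(j)}) \cdot
        \frac{\mu_2(B_j)}{\mu_1(A_{\iota(j)})}
        \CASE \mu_1(A_{\iota(j)}) > 0 \\
      0 \OTHERWISE
    \end{cases}
\]
and extend to all of $\sigmaF_2$ by countable additivity, which is legitimate by \cref{lemma:sigma-alg-representation}. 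For $a \notin \psupp(\mu)$ the choice of $\krnl_2(a)$ is irrelevant to the $\bind$, so I simply fix any extension of $\krnl_1(a)$ to $\sigmaF_2$; for $a \in \psupp(\mu)$ I claim the formula itself yields a probability distribution.

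Then I would verify the two required properties by summation over atoms. For the restriction property, summing $\krnl_2(a)$ over the sub-atoms of a fixed $A_i$ pulls out $\krnl_1(a)(A_i)$ and uses $\sum_{j\mid B_j\subseteq A_i}\mu_2(B_j)=\mu_1(A_i)$ to cancel the denominator, giving $\restr{\krnl_2(a)}{\sigmaF_1}=\krnl_1(a)$; in particular $\krnl_2(a)(\Outcomes)=\krnl_1(a)(\Outcomes)=1$, so $\krnl_2(a)$ is indeed a distribution. For the $\bind$ property, summing $\mu(a)\,\krnl_2(a)(B_j)$ over $a$ extracts the constant factor $\mu_2(B_j)/\mu_1(A_{\iota(j)})$ and leaves $\sum_a \mu(a)\,\krnl_1(a)(A_{\iota(j)}) = \bind(\mu,\krnl_1)(A_{\iota(j)}) = \mu_1(A_{\iota(j)})$, which cancels the denominator and returns $\mu_2(B_j)$; hence $\bind(\mu,\krnl_2)=\mu_2$ on atoms and therefore, by additivity, on all of $\sigmaF_2$.

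The one delicate point — and the step I expect to be the main obstacle — is the treatment of atoms $A_i$ with $\mu_1(A_i)=0$, where the proportional formula is a $0/0$. Here the key observation is that $\mu_1(A_i)=\sum_a \mu(a)\,\krnl_1(a)(A_i)=0$ is a sum of non-negative terms, so $\krnl_1(a)(A_i)=0$ for every $a\in\psupp(\mu)$, and likewise $\mu_2(B_j)=0$ for every $B_j\subseteq A_i$. Consequently the choice $\krnl_2(a)(B_j)=0$ on these atoms is simultaneously consistent with restricting to $\krnl_1(a)$ (which assigns them $0$) and with reproducing the marginal $\mu_2$, and no normalising mass is lost. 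Once this corner case is dispatched, all remaining steps are routine rearrangements of countable sums of non-negative terms.
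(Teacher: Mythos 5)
Your proposal is correct and follows essentially the same route as the paper's proof: reduce to the generating countable partitions via \cref{thm:countable-partition-generated} and \cref{lemma:partition-order}, define $\krnl_2$ on the atoms of $S_2$ by scaling $\krnl_1$ on the containing atom of $S_1$ by the ratio $\mu_2(B_j)/\mu_2(A_{\iota(j)})$ (your normaliser $\mu_1(A_{\iota(j)})$ equals the paper's $\mu_2(f(B))$ since $\mu_1=\restr{\mu_2}{\sigmaF_1}$), and verify both properties by rearranging countable sums of non-negative terms. Your explicit treatment of the zero-mass atoms, using $\krnl_1(a)(A_i)=0$ for $a\in\psupp(\mu)$ whenever $\mu_1(A_i)=0$, is exactly the justification the paper's computation relies on when it drops the restriction $\mu_2(A)\neq 0$ from the final sum.
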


\begin{proof}
  By~\cref{thm:countable-partition-generated},
  $\sigmaF_i = \closure{S_i}$ for some countable partition $S_i$.
  Also, $(\sigmaF_1, \mu_1) \extTo (\sigmaF_2, \mu_2)$ implies that
  $\sigmaF_1 \subseteq \sigmaF_2$.
  So we have $\closure{S_1} \subseteq \closure{S_2}$,
  which by~\cref{lemma:partition-order} implies that
  for any $B \in S_2$ with $B \ne \emptyset$,
  we can find a unique $A \in S_1$ such that $B \subseteq A$.
  Let~$f$ be the mapping associating to any $B\ne \emptyset$
  the corresponding $A = f(B)$, and $f(B)=\emptyset$ when $B=\emptyset$.

  Then, we define $\krnl_2$ as follows:
  for any $a \in A$, $E \in \sigmaF_2$,
  there exists $S \subseteq S_2$ such that
  $E = \Dunion_{B \in S} B$,
  then define
\begin{align*}
   \krnl_2(a)(E) = \sum_{B \in S} \krnl_1(a)(f(B)) \cdot h(B),
  \end{align*}
  where
  $h(B) = \mu_2(B) / \mu_2(f(B))$ if $ \mu_2(f(B)) \neq 0$
  and $h(B) = 0$ otherwise.

Then we calculate:
  \begin{align}
      &\bind(\mu, \krnl_2)(E) \notag\\
   &=  \sum_{a\in A} \mu(a) \cdot \krnl_2(E) \notag \\
   &=  \sum_{a\in A} \mu(a) \cdot \sum_{B \in S} \krnl_1(a)(f(B)) \cdot h(B) \notag \\
   &=  \sum_{B \in S}  \sum_{a\in A} \mu(a) \cdot \krnl_1(a)(f(B)) \cdot h(B)  \notag \\
   &=  \sum_{B \in S}  \bind(\mu, \krnl_1)(f(B)) \cdot h(B) \notag  \\
   &=  \sum_{B \in S}  \mu_1(f(B)) \cdot h(B) \notag \\
   &=  \sum_{B \in S \mid \mathrlap{\mu_2(f(B)) \neq 0}}  \mu_1(f(B)) \cdot \frac{\mu_2(B)}{ \mu_2(f(B))} \notag \\
   &=  \sum_{B \in S \mid \mathrlap{\mu_2(f(B)) \neq 0}}  \mu_2(f(B)) \cdot \frac{\mu_2(B)}{\mu_2(f(B))} \tag{$\mu_1(E') = \mu_2(E')$ for any $E' \in \sigmaF_1$}\\
   &=  \sum_{B \in S \mid \mathrlap{\mu_2(f(B)) \neq 0}}  \mu_2(B)  \notag \\
   &=  \sum_{B \in S \mid \mathrlap{\mu_2(f(B)) \neq 0}}  \mu_2(B)
       \quad+\quad
       \sum_{B \in S \mid  \mathrlap{\mu_2(f(B)) = 0}}  \mu_2(B)  \tag{Because $\mu_2(f(B)) = 0$ implies $\mu_2(B) = 0$} \\
   &= \sum_{B \in S}  \mu_2(B) \notag \\
   &= \mu_2(\Dunion_{B \in S} B ) \notag \\
   &= \mu_2(E) \notag
  \end{align}
Thus, $\bind(\mu, \krnl_2)= \mu_2$.

  Also, for any $a \in A_{\mu}$, for any $E \in \sigmaF_1$,
  there exists $S' \subseteq S_1$
  such that $E=\Dunion_{A \in S'} A$.
  \begin{align*}
   \krnl_2(a)(E)
   &= \krnl_2(a)\bigl(\Dunion_{A \in S'} A\bigr) \\
   &= \sum_{A \in S'} \krnl_2(a)(A) \\
   &= \sum_{A \in S'} \sum_{B \subseteq A \mid \mathrlap{B \in \sigmaF_2}} \krnl_2(a)(B)\\
&= \sum_{A \in S'}
        \sum_{B \subseteq A \mid \mathrlap{B \in \sigmaF_2, \mu_2(f(B)) \neq 0}}
          \quad\krnl_1(a)(f(B)) \cdot \frac{\mu_2(B)}{\mu_2(f(B))} \\
&= \sum_{A \in S' \mid \mathrlap{\mu_2(A) \neq 0}} \krnl_1(a)(A) \cdot \frac{\left(\sum_{B \subseteq A \mid B \in \sigmaF_2}  \mu_2(B) \right) }{\mu_2(A)} \\
   &= \sum_{A \in S' \mid \mathrlap{\mu_2(A) \neq 0}} \krnl_1(a)(A) \cdot \frac{ \mu_2(A)}{ \mu_2(A)} \\
   &= \sum_{A \in S' \mid \mathrlap{\mu_2(A) \neq 0}} \krnl_1(a)(A)  \\
   &= \sum_{A \in S'} \krnl_1(a)(A)  \\
   &= \krnl_1(a)\bigl(\Dunion_{A \in S'} A\bigr)  \\
   &= \krnl_1(a)(E)
  \end{align*}
Thus, for any $a$, $(\sigma_1, \krnl_1(a)) \extTo (\sigma_2, \krnl_2(a))$.
\end{proof}

\begin{lemma}
 \label{lemma:product-algebra}
 Given two \salgebra[s] $\sigmaF_1$ and $\sigmaF_2$
 over two countable underlying sets $\Outcomes_1, \Outcomes_2$,
 then a general element in the product \salgebra{}
 $\sigmaF_1 \otimes \sigmaF_2$ can
 be expressed as $\Dunion_{(i, j) \in I} (A_{i} \times B_{j})$
 for some $I \subseteq \Nat^2$ and
 $A_{i} \in \sigmaF_{1}, B_{j} \in \sigmaF_{2}$ for $(i,j) \in I$.
\end{lemma}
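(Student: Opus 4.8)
The plan is to reduce the claim to \cref{lemma:sigma-alg-representation} by producing a single countable partition of $\Outcomes_1 \times \Outcomes_2$ that generates the product \salgebra\ $\sigmaF_1 \otimes \sigmaF_2$. First I would apply \cref{thm:countable-partition-generated} to each factor, obtaining countable partitions $S_1 = \set{A_i | i \in \Nat}$ of $\Outcomes_1$ with $\sigmaF_1 = \closure{S_1}$, and $S_2 = \set{B_j | j \in \Nat}$ of $\Outcomes_2$ with $\sigmaF_2 = \closure{S_2}$. I then form the collection of rectangles $S \is \set{A_i \times B_j | (i,j) \in \Nat^2}$, and observe that $S$ is itself a countable partition of $\Outcomes_1 \times \Outcomes_2$: it is indexed by $\Nat^2$; its cells are pairwise disjoint since $(A_i \times B_j) \inters (A_{i'} \times B_{j'}) = (A_i \inters A_{i'}) \times (B_j \inters B_{j'})$ vanishes unless $(i,j) = (i',j')$; and they cover the product because $S_1$ and $S_2$ cover $\Outcomes_1$ and $\Outcomes_2$ respectively.

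The heart of the argument is the identity $\sigmaF_1 \otimes \sigmaF_2 = \closure{S}$. The inclusion $\closure{S} \subseteq \sigmaF_1 \otimes \sigmaF_2$ is easy: each cell $A_i \times B_j$ is a measurable rectangle with $A_i \in \sigmaF_1$ and $B_j \in \sigmaF_2$, hence a generator of the product \salgebra, so $S \subseteq \sigmaF_1 \otimes \sigmaF_2$ and taking the generated \salgebra\ preserves the inclusion. For the converse, I would show every generating rectangle $E_1 \times E_2$ (with $E_1 \in \sigmaF_1$, $E_2 \in \sigmaF_2$) lies in $\closure{S}$: by \cref{lemma:sigma-alg-representation} applied to $S_1$ and $S_2$ separately, $E_1 = \Dunion_{i \in I} A_i$ and $E_2 = \Dunion_{j \in J} B_j$ for some $I, J \subseteq \Nat$, and therefore $E_1 \times E_2 = \Dunion_{(i,j) \in I \times J} (A_i \times B_j) \in \closure{S}$. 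Since $\closure{S}$ is a \salgebra\ containing every generator of $\sigmaF_1 \otimes \sigmaF_2$, it contains the whole product \salgebra.

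With $\sigmaF_1 \otimes \sigmaF_2 = \closure{S}$ established, the conclusion follows from a second application of \cref{lemma:sigma-alg-representation}, this time to the partition $S$ of $\Outcomes_1 \times \Outcomes_2$: every event of $\closure{S}$ is a countable disjoint union of cells of $S$, i.e.\ of the form $\Dunion_{(i,j) \in I} (A_i \times B_j)$ for some $I \subseteq \Nat^2$, which is exactly the asserted representation. I expect the main (though still mild) obstacle to be the converse inclusion $\sigmaF_1 \otimes \sigmaF_2 \subseteq \closure{S}$, where the point is that it suffices to verify membership for the generating rectangles rather than for arbitrary product-measurable sets, appealing to the fact that $\closure{S}$ is a \salgebra; the remaining steps are routine bookkeeping with the two partition representations.
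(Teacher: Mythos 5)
Your proposal is correct and follows essentially the same route as the paper's proof: both apply \cref{thm:countable-partition-generated} to each factor, form the countable partition of rectangles $\set{A_i \times B_j}$, show it generates $\sigmaF_1 \otimes \sigmaF_2$, and conclude with \cref{lemma:sigma-alg-representation}. The only cosmetic difference is that you establish the identity $\sigmaF_1 \otimes \sigmaF_2 = \closure{S}$ by a two-inclusion argument on the generators, whereas the paper writes it as a chain of equalities of generated \salgebra[s]; the content is the same, and your version is if anything slightly more explicit about why checking the generating rectangles suffices.
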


 \begin{proof}
By~\cref{thm:countable-partition-generated}, each \salgebra{}
  $\sigmaF_i$ is generated by a countable partition over $\Outcomes_i$.
  Let $S_1 = \cpart{A}$ be the countable partition that generates $\sigmaF_1$,
  $S_2 = \cpart{B}$ be the countable partition that generates $\sigmaF_2$.
    By \cref{lemma:sigma-alg-representation},
    a general element in $\sigmaF_1$ can be written as
  $\Dunion_{j \in J} A_{j}$ for some $J \subseteq \Nat$,
  and similarly,
   a general element in $\sigmaF_2$ can be written as
  $\Dunion_{k \in K} B_{k}$ for some $K \subseteq \Nat$.

  Note that $\{A_j \times B_k \}_{j, k \in \Nat}$ is a partition because:
  if $(A_j \times B_k)  \cap (A_{j'} \times B_{k'}) \neq \emptyset$ for some
  $j \neq j'$ and $k \neq k'$,
  then it must $A_j \cap A_{j'} \neq \emptyset$ and $B_k \cap B_{k'} \neq \emptyset$,
  and that imply that $A_j =A_{j'}$ and $B_j =B_{j'}$;
  therefore, $A_j \times B_k = A_{j'} \times B_{k'}$.

  We next show that $\sigmaF_1 \otimes \sigmaF_2$ is generated by
  the partition $\{A_j \times B_k \}_{j, k \in \Nat}$.
\begin{align*}
   \sigmaF_1 \otimes \sigmaF_2
   &= \closure*{\sigmaF_1 \times \sigmaF_2} \\
   &= \closure*{\set*{\Dunion_{j \in J_1} A_{j} \times \Dunion_{j \in J_2} B_{j} | J_1, J_2 \subseteq \Nat}} \\
   &= \closure*{\set*{\Dunion_{j \in J_1, k \in J_2}  (A_{j} \times B_{k}) | J_1, J_2 \subseteq \Nat }} \\
   &= \closure*{\set*{ A_{j} \times B_{k} | j, k\subseteq \Nat }}
   \end{align*}
Since each $A_j \in S_1 \subseteq \sigmaF_1$ and $B_k \in S_2 \subseteq \sigmaF_2$
   a general element in $\sigmaF_1 \otimes \sigmaF_2$ can
  be expressed as $\set*{\Dunion_{j, k \subseteq I} (A_{j} \times B_{k}) \mid
  A_{j} \in \sigmaF_{1}, B_{k} \in \sigmaF_{2}, I \subseteq \Nat^2}$
according to \cref{thm:countable-partition-generated}.
 \end{proof}

 \begin{lemma}
  \label{lemma:indep-prod-exists}
  Given two probability spaces
  $(\sigmaF_a, \mu_a), (\sigmaF_b, \mu_b) \in \ProbSp(\Outcomes)$,
  their independent product
  $(\sigmaF_a, \mu_a) \iprod (\sigmaF_b, \mu_b)$ exists
  if $\mu_a(E_a) \cdot \mu_b(E_b) = 0 $
  for any $E_a \in \sigmaF_a, E_b \in \sigmaF_b$ such that
  $E_a \cap E_b = \emptyset$.
 \end{lemma}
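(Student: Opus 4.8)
The plan is to exhibit the required distribution $\prob$ on $\sigmaF_a \punion \sigmaF_b$ directly, building it from the atoms of the common refinement of the two generating partitions, and then to verify the defining equation of \cref{def:indep-comb}; uniqueness is already supplied by \cite[Lemma 2.3]{lilac}, so only existence is at stake. First I would invoke \cref{thm:countable-partition-generated} to obtain countable partitions $S_a = \cpart[i]{A}$ and $S_b = \cpart[j]{B}$ with $\sigmaF_a = \closure{S_a}$ and $\sigmaF_b = \closure{S_b}$. The structural fact I would establish is that $\sigmaF_a \punion \sigmaF_b$ is generated by the refinement $R = \set{A_i \cap B_j | A_i \cap B_j \ne \emptyset}$: every $A_i \cap B_j$ lies in $\sigmaF_a \punion \sigmaF_b$, while conversely $A_i = \Dunion_j (A_i \cap B_j) \in \closure{R}$ and symmetrically for each $B_j$, so the two $\sigma$-algebras coincide. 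Since the nonempty intersections are pairwise disjoint and cover $\Outcomes$, $R$ is itself a partition, and \cref{lemma:sigma-alg-representation} then gives that every event of $\sigmaF_a \punion \sigmaF_b$ has a unique decomposition as a disjoint union of atoms of $R$.

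Next I would define $\prob$ on atoms by $\prob(A_i \cap B_j) \is \mu_a(A_i) \cdot \mu_b(B_j)$ and extend it by countable additivity along the unique atom decomposition; this makes $\prob$ well-defined and countably additive by construction. To check $\prob(\Outcomes) = 1$, I would compare $\prob(\Outcomes) = \sum_{A_i \cap B_j \ne \emptyset} \mu_a(A_i)\,\mu_b(B_j)$ with the full product sum $\sum_{i,j} \mu_a(A_i)\,\mu_b(B_j) = \bigl(\sum_i \mu_a(A_i)\bigr)\bigl(\sum_j \mu_b(B_j)\bigr) = 1$. These two sums differ exactly by the terms indexed by pairs with $A_i \cap B_j = \emptyset$, and here the hypothesis does its work: for each such empty intersection it forces $\mu_a(A_i)\,\mu_b(B_j) = 0$, so the omitted terms contribute nothing and $\prob(\Outcomes) = 1$.

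Finally, for the independence equation I would take arbitrary $\event_1 \in \sigmaF_a$ and $\event_2 \in \sigmaF_b$, write $\event_1 = \Dunion_{i \in I} A_i$ and $\event_2 = \Dunion_{j \in J} B_j$, and observe that $\event_1 \cap \event_2$ is the disjoint union of the nonempty atoms $A_i \cap B_j$ with $i \in I$, $j \in J$. Summing their masses and once more reinstating the null terms with empty intersection yields $\prob(\event_1 \cap \event_2) = \sum_{i \in I, j \in J} \mu_a(A_i)\,\mu_b(B_j) = \mu_a(\event_1)\,\mu_b(\event_2)$, which is precisely the defining property of the independent product. I expect the one genuinely delicate point to be the bookkeeping of empty intersections, which is exactly what the hypothesis is tailored to: without it, assigning mass $\mu_a(A_i)\,\mu_b(B_j)$ to a null pair $(i,j)$ would introduce spurious probability and the total mass could exceed $1$, so the stated condition is what certifies that the naive product assignment is consistent.
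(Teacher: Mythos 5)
Your proof is correct, but it takes a genuinely different route from the paper's. The paper defines $\mu(E_a \cap E_b) = \mu_a(E_a)\cdot\mu_b(E_b)$ directly on the $\pi$-system $\set{E_a \cap E_b \mid E_a \in \sigmaF_a, E_b \in \sigmaF_b}$, and then has to do real work there: a four-way case analysis to show the assignment is well-defined when the same set has two representations $E_a \cap E_b = E_a' \cap E_b'$, a lengthy combinatorial argument (via auxiliary sets $\alpha_I, \beta_I$) to establish finite and then countable additivity, a check that the $\pi$-system is a semi-algebra, and finally an appeal to an external measure-extension theorem to reach $\sigmaF_a \punion \sigmaF_b$. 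You instead exploit the standing countability assumption through \cref{thm:countable-partition-generated} and \cref{lemma:sigma-alg-representation}: passing to the common refinement partition makes $\sigmaF_a \punion \sigmaF_b$ atomic, the measure is defined atom-by-atom (where well-definedness is immediate because a nonempty $A_i \cap B_j$ determines $i$ and $j$), and countable additivity and the product identity for arbitrary $E_a \in \sigmaF_a$, $E_b \in \sigmaF_b$ reduce to rearrangements of nonnegative series. Both arguments use the hypothesis in exactly the same place — to annihilate the product terms indexed by pairs with empty intersection — and both defer uniqueness to the cited Lilac lemma, as \cref{def:indep-comb} does. What your approach buys is brevity and self-containment (no extension theorem, no well-definedness case split); what the paper's buys is that it does not lean on atomicity of the $\sigma$-algebras and so would survive in a setting where \cref{thm:countable-partition-generated} is unavailable, though within this paper's discrete setting that generality is not needed.
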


 \begin{proof}
We first define $\mu: \set{E_a \cap E_b \mid E_a \in \sigmaF_a, E_b \in \sigmaF_b}
  \to [0,1]$ by $\mu(E_a \cap E_b) = \mu_a(E_a) \cdot \mu_b(E_b)$
   for any $E_a \in \sigmaF_a, E_b \in \sigmaF_b$,
   and then show that $\mu$ could be extended to a probability
   measure on $\sigmaF_a \punion \sigmaF_b$.

  \begin{itemize}
   \item We first need to show that $\mu$ is \textbf{well-defined}.
    That is,
    $E_a \cap E_b = E_a' \cap E_b'$
    implies $\mu_a(E_a) \cdot \mu_b(E_b) = \mu_a(E'_a) \cdot \mu_b(E'_b)$.

     When $E_a \cap E_b = E_a' \cap E_b'$, it must $E_a \cap E_a' \supseteq
     E_a \cap E_b = E_a' \cap E_b'$, Thus, $E_a \setminus E_a' \subseteq E_a
     \setminus E_b$, and then $E_a \setminus E_a'$ is disjoint from $E_b$;
     symmetrically, $E_a' \setminus E_a$ is disjoint from $E_b'$.
     Since $E_a, E_a'$ are both in $\sigmaF_{a}$, we have $E_a \setminus E_a'$
     and $E_a' \setminus E_a$ both measurable in $\sigmaF_a$.
     Their disjointness and the result above implies that
     $\mu_a(E_a \setminus E_a') \cdot \mu_b(E_b) = 0$ and
     $\mu_a(E'_a \setminus E_a) \cdot \mu_b(E'_b) = 0$.
     Symmetric reasoning can also show that
     $E'_b \setminus E_b$ is disjoint from $E'_a \cap E_a$,
       and  $E_b \setminus E'_b$ is disjoint from $E'_a \cap E_a$,
       which implies
       $\mu_a(E_b \setminus E'_b) \cdot \mu_b(E'_a \cap E_a) = 0$ and
     $\mu_a(E'_b \setminus E_b) \cdot \mu_b(E'_a) = 0$.

     Then there are four possibilities:
     \begin{itemize}
      \item If $\mu_b(E_b) = 0$ and $\mu_b(E_b') = 0$,
       then $\mu_a(E_a) \cdot \mu_b(E_b) = 0 = \mu_a(E_a') \cdot \mu_b(E_b')$.
      \item If $\mu_a(E_a \setminus E'_a) = 0$ and $\mu_b(E'_a \setminus E_a) = 0$.
       Then
       \begin{align*}
        \mu_a(E_a) \cdot \mu_b(E_b) &=
        \mu_a((E'_a \setminus E_a) \disjunion (E'_a \cap E_a)) \cdot \mu_b(E_b) \\
          &= (\mu_a(E'_a \setminus E_a) + \mu_a(E'_a \cap E_a)) \cdot \mu_b(E_b) \\
          &= \mu_a(E'_a \cap E_a) \cdot \mu_b(E_b) \\
          &= (\mu_a(E_a \setminus E'_a) + \mu_a(E'_a \cap E_a)) \cdot \mu_b(E_b) \\
          &= \mu_a(E'_a) \cdot \mu_b(E_b)
       \end{align*}

       Thus,  either $\mu_a(E'_a \cap E_a) = 0$, which implies that
       \[
        \mu_a(E_a) \cdot \mu_b(E_b)  = (0 + 0) \cdot  \mu_b(E_b) = 0 =(0+0) \cdot \mu_b(E_b) = \mu_a(E'_a) \cdot \mu_b(E'_b),
       \]
       or we have both $\mu_b(E'_b \setminus E_b) = 0$ and $\mu_b(E_b \setminus E'_b) = 0$, which imply that
       \begin{align*}
        \mu_a(E_a) \cdot \mu_b(E_b)
          &= \mu_a(E'_a) \cdot \mu_b(E_b)\\
          &= \mu_a(E'_a) \cdot \mu_b((E_b \cap E'_b ) \disjunion (E_b \setminus E'_b)) \\
          &= \mu_a(E'_a) \cdot (\mu_b(E_b \cap E'_b ) + 0) \\
          &= \mu_a(E'_a) \cdot (\mu_b(E_b \cap E'_b ) + \mu_b(E'_b \setminus E_b)) \\
          &= \mu_a(E'_a) \cdot \mu_b(E'_b ).
       \end{align*}
       \item If $\mu_b(E'_b) = 0$ and $\mu_b(E_a \setminus E'_a) = 0$,
        then
        \begin{align*}
         \mu_a(E_a) \cdot \mu_b(E_b)
         &= (\mu_a(E_a \cap E'_a) + \mu_a(E_a \setminus E'_a)) \cdot (\mu_b(E_b \cap E'_b) + \mu_b(E_b \setminus E'_b)) \\
         &= \mu_a(E_a \cap E'_a)  \cdot \mu_b(E_b \setminus E'_b)
        \end{align*}
Because $\mu_a(E_b \setminus E'_b) \cdot \mu_b(E'_a \cap E_a) = 0$ and
     $\mu_a(E'_b \setminus E_b) \cdot \mu_b(E'_a) = 0$.
        Thus, $\mu_a(E_a) \cdot \mu_b(E_b) =0 = \mu_a(E'_a) \cdot \mu_b(E'_b)$.

       \item If $\mu_b(E_b) = 0$ and $\mu_b(E'_a \setminus E_a) = 0$,
        then symmetric as above.

     \end{itemize}

     In all these cases,
     $\mu_a(E_a) \cdot \mu_b(E_b) = \mu_a(E'_a) \cdot \mu_b(E'_b)$ as desired.

    \item Show that $\mu$ satisfy \textbf{countable additivity} in $\{E_a \cap E_b \mid E_a \in \sigmaF_a,
   E_b \in  \sigmaF_b\}$.

   We start with showing that $\mu$ is finite-additive.
   Suppose $E_a^n \cap E_b^n = \Disjunion_{i \in [n]}(A_i \cap B_i)$ where
   each $A_i \in \sigmaF_a$ and $B_i \in \sigmaF_b$.
  Fix any $A_i \cap B_i$, there is unique minimal $A \in \sigmaF_a$ containing
  $A_i \cap B_i$, because if $A \supseteq A_i \cap B_i$ and  $A' \supseteq
  A_i \cap B_i$, then  $A \cap A' \supseteq A_i \cap B_i$
  and $A \cap A' \in \sigmaF_A$ too, and $A \cap A'$ is smaller.
   Because we have shown that $\mu$ is well-defined, in the following proof,
   we can assume without loss of generality that $A_i$ is the smallest set in $\sigmaF_a$ containing $A_i \cap B_i$.
   Similarly, we let $B_i$ to be the smallest set in $\sigmaF_b$ containing $A_i \cap B_i$.
   Thus,  $E_a^n \cap E_b^n = \Disjunion_{i \in [n]}(A_i \cap B_i)$ implies
   every $A_i$ is smaller than $E_a^n$ and every $B_i$ is smaller than $E_b^n$.
   Therefore,
   $E_a^n \supseteq \cup_{i \in [n]} A_i$ and
   $E_b^n \supseteq \cup_{i \in [n]} B_i$,
   which implies that
   \[
    E_a^n \cap E_b^n \supseteq (\cup_{i \in [n]} A_i) \cap (\cup_{i \in [n]} B_i) \supseteq \cup_{i \in [n] } (A_i \cap B_i) = E_a^n \cap E_b^n,
   \]
   which implies that the $\supseteq$ in the inequalities all collapse to $=$.

   For any $I \subseteq [n]$, define
   $\alpha_I = \cap_{i \in I} A_i \setminus (\cup_{i \in [n] \setminus I} A_i)$, and $\beta_I = \cap_{i \in I} B_i \setminus (\cup_{i \in [n] \setminus I} B_i)$.
   For any $I \neq I'$, $\alpha_I \cap \alpha_{I'} = \emptyset$.
   Thus, $\{\alpha_I\}_{I \subseteq [n]}$ is a set of disjoint sets in $\cup_{i \in [n]} A_i$,
   and similarly, $\{\beta_I\}_{I \subseteq [n]}$ is a set of disjoint sets in $\cup_{i \in [n]} B_i$.
   Also, for any $i \in [n]$,
   we have
   $A_i = \cup_{I \subseteq [n] \mid i \in I} \alpha_I $
   and
   $B_i = \cup_{I \subseteq [n] \mid i \in I} \beta_I $.
   Furthermore, for any $I$,
\begin{align*}
    \alpha_I \cap \cup_{i \in [n]} B_i
    \subseteq (\cup_{i \in [n]} A_i) \cap (\cup_{i \in [n]} B_i)
    = \Dunion_{i \in [n]} A_i \cap B_i ,
   \end{align*}
and thus,
\begin{align}
    \alpha_I \cap \cup_{i \in [n]} B_i
    & = (\Dunion_{i \in [n]} A_i \cap B_i) \cap (\alpha_I \cap \cup_{i \in [n]} B_i) \notag \\
    & = \Dunion_{i \in [n]} \left( A_i \cap B_i \cap \alpha_I \cap \cup_{j \in [n]} B_j \right) \notag \\
    & = \Dunion_{i \in I} \left( A_i \cap B_i \cap \alpha_I \cap \cup_{j \in [n]} B_j \right)  \tag{$A_i \cap \alpha_I = \emptyset$ if $i \notin I$} \\
    & = \Dunion_{i \in I} \left( A_i \cap B_i \cap \alpha_I\right)
    \tag{$B_i \cap \cup_{j \in [n]} B_j = B_i$ for any $i$}\\
    & = \Dunion_{i \in I} \left( B_i \cap \alpha_I\right)
    \tag{$A_i \cap \alpha_I = \alpha_I$ for any $i \in I$ }\\
    & = \alpha_I \cap \cup_{i \in I } B_i
    \label{eq:finite-add-alpha}
   \end{align}

   Now,
  \begin{align}
   &\mu(E^n_a \cap E^n_b) \notag \\
   &= \mu((\cup_{i \in [n]} A_i) \cap (\cup_{i \in [n]} B_i)) \notag \\
   &= \mu((\Dunion_{I \subseteq [n]} \alpha_I) \cap (\cup_{i \in [n]} B_i)) \tag{By definition of $\alpha_I$}\\
   &= \mu_a(\Dunion_{I \subseteq [n]} \alpha_I) \cdot  \mu_b(\cup_{i \in [n]} B_i)  \tag{By definition of $\mu$} \\
   &= \left\lgroup\sum_{I \subseteq [n]} \mu_a (\alpha_I) \right\rgroup \cdot  \mu_b(\cup_{i \in [n]} B_i)  \tag{By finite-additivity of $\mu_a$} \\
   &= \sum_{I \subseteq [n]} \mu_a (\alpha_I) \cdot  \mu_b(\cup_{i \in [n]} B_i) \notag\\
   &= \sum_{I \subseteq [n]}  \mu(\alpha_I \cap (\cup_{i \in [n]} B_i)) \tag{By definition of $\mu$} \\
   &= \sum_{I \subseteq [n]}  \mu(\alpha_I \cap (\cup_{i \in I} B_i))
   \tag{By~\cref{eq:finite-add-alpha}}\\
   &= \sum_{I \subseteq [n]}  \mu_a(\alpha_I) \cdot \mu_b (\cup_{i \in I} B_i)\tag{By definition of $\mu$} \\
   &= \sum_{I \subseteq [n]}  \mu_a(\alpha_I) \cdot \mu_b (\cup_{i \in I} (\disjunion_{I' \subseteq [n] \mid i \in I'} \beta_{I'}))
   \tag{By definition of $\beta_I$} \\
   &= \sum_{I \subseteq [n]}  \mu_a(\alpha_I) \cdot \mu_b (\disjunion_{I' \subseteq [n] \mid I \cap I' \neq \emptyset} \beta_{I'}) \notag \\
   &= \sum_{I \subseteq [n]}  \mu_a(\alpha_I) \cdot \sum_{I' \subseteq [n] \mid \mathrlap{I \cap I' \neq \emptyset}} \mu_b (\beta_{I'}) \notag \\
   &= \sum_{I \subseteq [n]} \sum_{I' \subseteq [n] \mid \mathrlap{I \cap I' \neq \emptyset}}  \mu_a(\alpha_I) \cdot \mu_b (\beta_{I'}) \notag
  \end{align}
Meanwhile, for any $I, I'$, if $|I \cap I'| \geq 2$,
  then there exists some $j, k$ such that $j \in I \cap I'$ and $k \in I \cap I'$,
  so
\begin{align}
    \mu_{a}(\alpha_I) \cdot \mu_{b}(\beta_{I'})
    &= \mu_{a}(\cap_{i \in I} A_i \setminus (\cup_{i \in [n] \setminus I} A_i)) \cdot \mu_{b}(\cap_{i \in I} B_i \setminus (\cup_{i \in [n] \setminus I} B_i)) \notag \\
    &\leq \mu_{a}(A_j \cap A_k) \cdot \mu_{b}(B_j \cap B_k) \notag \\
    &= \mu(A_j \cap A_k \cap B_j \cap B_k) \notag \\
    &= \mu((A_j\cap B_j) \cap (A_k \cap B_k)) \notag \\
    &= \mu(\emptyset) \notag \\
    &= 0. \notag
  \end{align}
Thus, continuing our previous derivation,
\begin{align}
   &\mu(E^n_a \cap E^n_b) \notag \\
   &= \sum_{I \subseteq [n]} \sum_{I' \subseteq [n] \mid \mathrlap{I \cap I' \neq \emptyset}}  \mu_a(\alpha_I) \cdot \mu_b (\beta_{I'}) \notag \\
   &= \sum_{I \subseteq [n]}
        \sum_{I' \subseteq [n] \mid \mathrlap{1 = \card{I \cap I'}}}
          \mu_a(\alpha_I) \cdot \mu_b (\beta_{I'})
   \tag{Because $\mu_{a}(\alpha_I) \cdot \mu_{b}(\beta_{I'})=0$ if $|I \cap I'| \geq 2$}\\
   &= \sum_{i \in [n]}
        \sum_{I \subseteq [n] \mid i \in I}
          \sum_{I' \subseteq [n] \mid \mathrlap{I \cap I' = \set{i}} }
            \mu_a(\alpha_I) \cdot \mu_b (\beta_{I'})
      \notag \\
   &= \sum_{i \in [n]} \sum_{I \subseteq [n] \mid i \in I} \sum_{I' \subseteq [n] \mid \mathrlap{i \in I'}}  \mu_a(\alpha_I) \cdot \mu_b (\beta_{I'})
   \tag{Because $\mu_{a}(\alpha_I) \cdot \mu_{b}(\beta_{I'})=0$ if $|I \cap I'| \geq 2$}\\
   &= \sum_{i \in [n]}
         \left\lgroup
         \sum_{I \subseteq [n] \mid \mathrlap{i \in I}} \mu_a(\alpha_I)
         \cdot
         \sum_{I' \subseteq [n] \mid \mathrlap{i \in I'}} \mu_b (\beta_{I'})
         \right\rgroup
     \notag \\
   &= \sum_{i \in [n]} \mu_a(A_i) \cdot \mu_b(B_i) \notag \\
   &= \sum_{i \in [n]} \mu(A_i \cap B_i) \notag
  \end{align}

  Thus, we established the finite additivity.
  For countable additivity, suppose $E_a \cap E_b = \Disjunion_{i \in \Nat}(A_i \cap B_i)$. By the same reason as above, we also have
  \[
    E_a \cap E_b = (\cup_{i \in \Nat} A_i) \cap (\cup_{i \in \Nat} B_i) = \cup_{i \in \Nat} (A_i \cap B_i) = E_a \cap E_b.
  \]

Then,
  \begin{align}
   &\mu(E_a \cap E_b) \notag \\
   &= \mu((\cup_{i \in \Nat} A_i) \cap (\cup_{i \in \Nat} B_i)) \notag \\
   &= \mu_a(\cup_{i \in \Nat} A_i) \cdot \mu_b(\cup_{i \in \Nat} B_i) \notag \\
   &= \mu_a(\lim_{n \to \infty} \cup_{i \in [n]} A_i) \cdot \mu_b(\lim_{n \to \infty} \cup_{i \in [n]} B_i) \notag \\
   &= \lim_{n \to \infty} \mu_a(\cup_{i \in [n]} A_i) \cdot \lim_{n \to \infty} \mu_b( \cup_{i \in [n]} B_i) \tag{By continuity of $\mu_a$ and $\mu_b$} \\
   &= \lim_{n \to \infty} \mu_a(\cup_{i \in [n]} A_i) \cdot \mu_b( \cup_{i \in [n]} B_i) \tag{$\dagger$} \\
   &= \lim_{n \to \infty} \sum_{i \in [n]}  \mu_b ( B_i) \cdot \mu_a(A_i) \tag{By~\cref{eq:finite-add-alpha}} \\
   &= \sum_{i \in \Nat}  \mu_b ( B_i) \cdot \mu_a(A_i),
  \end{align}
where ($\dagger$) holds because that the product of limits equals to the limit of
  the product when both $\lim_{n \to \infty} \mu_a(\cup_{i \in [n]} A_i)$ and
  $\lim_{n \to \infty} \mu_b( \cup_{i \in [n]} B_i)$ are finite.
  Thus, we proved countable additivity as well.

 \item
  Next we show that we can \textbf{extend $\mu$ to a measure on
  $\sigmaF_a \punion \sigmaF_b$}.

  So far, we proved that $\mu$ is a sub-additive measure on the
  $\set{E_a \cap E_b | E_a \in \sigmaF_a, E_b \in \sigmaF_b}$,
  which forms a $\pi$-system.
  By a known theorem in probability theory
  (\eg \cite[Corollary 2.5.4]{rosenthal2006first}),
  we can extend a sub-additive measure on a
   $\pi$-system to the \salgebra{} it generates if the $\pi$-system is
   a semi-algebra.
   Thus, we can extend $\mu$ to a measure on $\closure{\{E_a \cap E_b \mid E_a \in \sigmaF_a,\ E_b \in \sigmaF_b\}}$ if we can prove $J = \{E_a \cap E_b \mid E_a \in \sigmaF_a,\ E_b \in \sigmaF_b\}$ is a semi-algebra.

   \begin{itemize}
    \item $J$ contains $\emptyset$ and $\Outcomes$: trivial.
    \item $J$ is closed under finite intersection:
     $(E_a \cap E_b) \cap (E'_a \cap E'_b) = (E_a \cap E'_a) \cap (E_b \cap E'_b)$, where $E_a \cap E'_a \in \sigmaF_a$, and $E_b \cap E'_b \in \sigmaF_b$.
    \item The complement of any element of $J$ is equal to a finite disjoint
     union of elements of $J$:
     \begin{align*}
      (E_a \cap E_b)^C &= E_a^C \cup E_b^C \\
             &= (E_a^C \cap \Outcomes) \disjunion (E_a \cap E_b^C)
     \end{align*}
     where $E_a^C, E_a \in \sigmaF_a$, and $E_b^C, \Outcomes \in \sigmaF_b$.

   \end{itemize}

     As shown in~\cite{lilac},
\begin{align}
      \closure{\{E_a \cap E_b \mid E_a \in \sigmaF_a, E_b \in \sigmaF_b\}}
     = \sigmaF_a \punion \sigmaF_b
     \end{align}
Thus, the extension of $\mu$ is a measure on $\sigmaF_a \punion \sigmaF_b$.

    \item Last, we show that $\mu$ is a \textbf{probability measure} on
     $\sigmaF_a \punion \sigmaF_b$:
     $\mu(\Outcomes) = \mu_a(\Outcomes) \cdot \mu_b(\Outcomes) = 1$.
    \qedhere
 \end{itemize}
 \end{proof}

 \begin{lemma}
  \label{lemma:fibre-prod-exists}
    Consider two probability spaces
  $(\sigmaF_1, \mu_1), (\sigmaF_2, \prob_2) \in \ProbSp(\Outcomes)$,
  and some other probability space $(\Full{A}, \prob)$ and kernel $\krnl$
  such that $\prob_1 = \bind(\prob, \krnl)$.

  Then, the independent product
  $(\sigmaF_1, \mu_1) \iprod (\sigmaF_2, \mu_2)$
  exists if and only if
   for any $a \in \psupp(\prob)$,
   the independent product
   $(\sigmaF_1, \krnl(a)) \iprod (\sigmaF_2, \prob_2)$ exists.
When they both exist,
  \[
   (\sigmaF_1, \prob_1) \iprod (\sigmaF_2, \prob_2)
  = (\sigmaF_1 \punion \sigmaF_2,
     \bind(\prob, \fun a. \krnl(a) \iprod \prob_2))
  \]
 \end{lemma}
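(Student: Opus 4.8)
The plan is to reduce both halves of the statement to the existence criterion of \cref{lemma:indep-prod-exists}. That lemma supplies a sufficient condition for an independent product to exist; its converse is immediate, since if $(\sigmaF_1,\mu_1)\iprod(\sigmaF_2,\mu_2)$ exists and $\event_1\inters\event_2=\emptyset$ for $\event_1\in\sigmaF_1,\event_2\in\sigmaF_2$, then its defining equation forces $\mu_1(\event_1)\mu_2(\event_2)=\prob(\event_1\inters\event_2)=\prob(\emptyset)=0$. Hence the existence of an independent product is \emph{equivalent} to the disjointness condition ``$\event_1\inters\event_2=\emptyset \Rightarrow \mu_1(\event_1)\mu_2(\event_2)=0$'', and I will work with this characterisation on both sides of the biconditional.

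For the existence equivalence, the key observation is that $\mu_1=\bind(\prob,\krnl)$ unfolds to $\mu_1(\event_1)=\sum_{a}\prob(a)\,\krnl(a)(\event_1)$. Fixing disjoint $\event_1\in\sigmaF_1$ and $\event_2\in\sigmaF_2$, this gives
\[
  \mu_1(\event_1)\,\mu_2(\event_2)
  = \sum_{a\in\psupp(\prob)} \prob(a)\,\krnl(a)(\event_1)\,\mu_2(\event_2),
\]
a sum of non-negative terms (those with $a\notin\psupp(\prob)$ vanish). Such a sum is zero if and only if every term is zero, i.e. iff $\krnl(a)(\event_1)\,\mu_2(\event_2)=0$ for each $a\in\psupp(\prob)$ (using $\prob(a)>0$ there). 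Quantifying over all disjoint pairs, this says exactly that the disjointness condition holds for $(\sigmaF_1,\mu_1)$ iff it holds for every $(\sigmaF_1,\krnl(a))$ with $a\in\psupp(\prob)$. Combined with the characterisation above, this is precisely the claimed biconditional.

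For the formula I appeal to uniqueness of the independent product (\cref{def:indep-comb}). Assuming both sides exist, I write $\nu_a$ for the distribution component of $(\sigmaF_1,\krnl(a))\iprod(\sigmaF_2,\mu_2)$, which lives on the common algebra $\sigmaF_1\punion\sigmaF_2$ and satisfies $\nu_a(\event_1\inters\event_2)=\krnl(a)(\event_1)\,\mu_2(\event_2)$. The convex combination $\nu\is\bind(\prob,\fun a.\nu_a)$ is again a probability distribution on $\sigmaF_1\punion\sigmaF_2$. It then suffices to verify that $\nu$ meets the defining equation of the independent product with marginals $\mu_1,\mu_2$: for $\event_1\in\sigmaF_1$ and $\event_2\in\sigmaF_2$,
\[
  \nu(\event_1\inters\event_2)
  = \sum_{a}\prob(a)\,\krnl(a)(\event_1)\,\mu_2(\event_2)
  = \mu_1(\event_1)\,\mu_2(\event_2),
\]
using $\mu_1=\bind(\prob,\krnl)$ in the last step. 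By uniqueness, $\nu$ is the distribution of $(\sigmaF_1,\mu_1)\iprod(\sigmaF_2,\mu_2)$, giving the stated identity.

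The argument is almost entirely bookkeeping on top of \cref{lemma:indep-prod-exists} and the monad structure of $\bind$; the only point requiring genuine care is the interchange of the (possibly countably infinite) sum over $a$ with the measure-theoretic operations---both in the ``a sum of non-negatives is zero iff each summand is zero'' step and in checking that $\bind(\prob,\fun a.\nu_a)$ is countably additive. Since all quantities involved are non-negative, a Tonelli/monotone-convergence rearrangement justifies these interchanges, so this obstacle is mild and needs no measure-theoretic input beyond what \cref{lemma:indep-prod-exists} already provides.
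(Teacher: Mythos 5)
Your proposal is correct and follows essentially the same route as the paper's proof: both reduce existence to the disjointness criterion of \cref{lemma:indep-prod-exists} (plus its trivial converse from the defining equation of the independent product), establish the biconditional by unfolding $\mu_1=\bind(\prob,\krnl)$ into a non-negative sum that vanishes iff every term does, and verify the final formula by checking the product equation on events $\event_1\inters\event_2$ and invoking uniqueness. The only difference is presentational — you fold the two directions into one chain of equivalences, whereas the paper argues them separately with a small case split — but the key lemma and computations are identical.
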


 \begin{proof}
    We first show the backwards direction.
  By~\cref{lemma:indep-prod-exists},
  for any $a \in \psupp(\prob)$, to show that the independent product
  $(\sigmaF_1, \krnl(a)) \iprod (\sigmaF_1, \prob_1)$ exists,
  it suffices to show that for any $E_1 \in \sigmaF_1, E_2 \in \sigmaF_2$
  such that $E_1 \cap E_2 = \emptyset$,
  $\krnl(a)(E_1) \cdot \prob_2(E_2) = 0$.

  Fix any such $E_1, E_2$,
  because $(\sigmaF_1, \prob_1) \iprod (\sigmaF_2, \prob_2)$ is defined,
  we have $\prob_1(E_1) \cdot \prob_2(E_2) = 0$, then either $\prob_1(E_1) = 0$
  or $\prob_2(E_2) = 0$.
\begin{itemize}
   \item If $\prob_1(E_1) = 0$:
   Recall that
\[
    \prob_1(E_1) = \bind(\prob, \krnl)(E_1)
    = \sum_{a \in A} \prob(a) \cdot \krnl(a) (E_1)
    = \sum_{\mathclap{a \in \psupp(\prob)}} \prob(a) \cdot \krnl(a) (E_1)
   \]
Because all $\prob(a) > 0$ and $\krnl(a) (E_1) \geq 0$ for all $a \in \psupp(\prob)$
   $\sum_{a \in \psupp(\prob)} \prob(a) \cdot \krnl(a) (E_1) = 0$ implies that
   $\prob(a) \cdot \krnl(a) (E_1) = 0$ for all $a \in \psupp(\prob)$.
   Thus, for all $a \in \psupp(\prob)$, it must $\krnl(a) (E_1) = 0$.
   Therefore, $\krnl(a)(E_1) \cdot \prob_2(E_2) = 0$ for all  $a \in \psupp(\prob)$
   with this $E_1, E_2$.

  \item If $\prob_2(E_2) = 0$, then it is also clear that
    $\krnl(a)(E_1) \cdot \prob_2(E_2) = 0$ for all  $a \in \psupp(\prob)$.
  \end{itemize}
Thus, we have $\krnl(a)(E_1) \cdot \prob_2(E_2) = 0$ for any
  $E_1 \cap E_2 = \emptyset$ and $a \in \psupp(\prob)$.
  By~\cref{lemma:indep-prod-exists},
  the independent product $(\sigmaF_1, \krnl(a)) \iprod (\sigmaF_1, \prob_1)$ exists.

    For the forward direction:
    for any $E_1 \in \sigmaF_1$ and $E_2 \in \sigmaF_2$ such that $E_1 \cap E_2 = \emptyset$,
   the independent product $(\sigmaF_1, \krnl(a)) \iprod (\sigmaF_2, \mu_2)$ exists implies that
   \begin{align*}
     \krnl(a) (E_1) \cdot \mu_2(E_2) = 0.
   \end{align*}
Thus,
   \begin{align*}
    \mu_1(E_1) \cdot \mu_2(E_2)
    &= \bind(\mu, \krnl)(E_1)  \cdot \mu_2(E_2) \\
    &= \left\lgroup\sum_{a \in A} \mu(a) \cdot \krnl(a) (E_1) \right\rgroup  \cdot \mu_2(E_2) \\
    &= \sum_{a \in A_{\mu}} \mu(a) \cdot \left(\krnl(a) (E_1)  \cdot \mu_2(E_2) \right) \\
    &= \sum_{a \in A_{\mu}} \mu(a) \cdot 0
    = 0
   \end{align*}

   Thus, by~\cref{lemma:indep-prod-exists},
   the independent product $(\sigmaF_1, \mu_1) \iprod (\sigmaF_2, \mu_2)$ exists.
For any $E_1 \in \sigmaF_1$ and $E_2 \in \sigmaF_2$,
\begin{align*}
   &\bind(\prob, \fun a. \krnl(a) \iprod \prob_2 ) (E_1 \inters E_2)\\
  &= \sum_{\mathclap{a \in \psupp(\prob)}}
     \prob(a) \cdot
     \left(\krnl(a) \iprod  \prob_2\right)(E_1 \inters E_2) \\
  &= \sum_{\mathclap{a \in \psupp(\prob)}}
      \prob(a) \cdot \krnl(a)(E_1) \cdot \prob_2(E_2) \\
  &= \left\lgroup
     \sum_{a \in \psupp(\prob)}
      \prob(a) \cdot \krnl(a)(E_1)
    \right\rgroup \cdot \prob_2(E_2) \\
  &= \bind(\prob, \krnl)(E_1) \cdot \prob_2(E_2) \\
  &= \prob_1(E_1) \cdot \prob_2(E_2) \\
  &= (\prob_1 \iprod \prob_2)(E_1 \inters E_2)
  \end{align*}
Thus,
  $
   (\sigmaF_1, \prob_1) \iprod (\sigmaF_2, \prob_2)
  = (\sigmaF_1 \punion \sigmaF_2,
     \bind(\prob, \fun a. \krnl(a) \iprod \prob_2)).
  $
 \end{proof}

 \section{Model}
\label{sec:appendix:model}

\subsection{Basic Connectives}

The following are the definitions of the standard SL connectives
we use in \thelogic:
\begin{align*}
  \pure{\phi} &\is \fun \wtv. \phi
  &
  P * Q &\is \fun a.
    \exists b_1,b_2 \st
      (b_1 \raOp b_2) \raLeq a \land
      P(b_1) \land
      Q(b_2)
  \\
  \Own{b} &\is \fun a. b \raLeq a
  &
  P \wand Q &\is \fun a.
    \forall b\st
     \raValid(a \raOp b)
     \implies
     P(b)
     \implies
     Q(a \raOp b)
  \\
  P \land Q &\is \fun a. P(a) \land Q(a)
  &
  \A x \of X.P(x) &\is \fun a.
    \forall x \in X\st P(x)(a)
  \\
  P \lor Q &\is \fun a. P(a) \lor Q(a)
  &
  \E x \of X.P(x) &\is \fun a.
    \exists x \in X\st P(x)(a)
\end{align*}

\subsection{Construction of the \thelogic\ Model}

\begin{lemma}
  The structure $\PSpRA$ is an ordered unital resource algebra (RA) as defined
  in~\cref{def:ra}.
\end{lemma}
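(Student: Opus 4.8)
The plan is to verify each axiom of \cref{def:ra} for the structure $\PSpRA_\Outcomes$, treating the adjoined element $\invalid$ (top for $\raLeq$, absorbing for $\raOp$) by a uniform case split. First I would dispatch the routine axioms. Reflexivity and transitivity of $\raLeq$ follow from the corresponding properties of set inclusion together with the fact that restriction composes, $\restr{(\restr{\prob}{\salg_2})}{\salg_1} = \restr{\prob}{\salg_1}$ whenever $\salg_1 \subs \salg_2$; the clause $a \raLeq \invalid$ then makes $\invalid$ a top element without disturbing either property. Validity of the unit is immediate since $\Triv{\Outcomes} \neq \invalid$, and the left unit law $\Triv{\Outcomes} \iprod \psp = \psp$ is a one-line computation: the trivial \salgebra\ $\set{\Outcomes,\emptyset}$ satisfies $\set{\Outcomes,\emptyset} \punion \salg = \salg$, and the defining equation of the independent product forces the resulting measure to equal $\prob_\psp$, so this product always exists.

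For the validity-propagation axioms I would argue from the special role of $\invalid$. Since $\invalid$ is absorbing, $a = \invalid$ implies $a \raOp b = \invalid$, giving $\raValid(a \raOp b) \implies \raValid(a)$ by contraposition; and since $\invalid \raLeq b$ holds only when $b = \invalid$, the hypotheses $a \raLeq b$ and $\raValid(b)$ force $a \neq \invalid$, i.e.\ $\raValid(a)$. Commutativity of $\raOp$ reduces to commutativity of the independent product, which holds because $\salg_1 \punion \salg_2 = \salg_2 \punion \salg_1$ and the defining condition $\prob(\event_1 \inters \event_2) = \prob_1(\event_1)\prob_2(\event_2)$ is symmetric; uniqueness of the independent product (\cref{def:indep-comb}, from \cite[Lemma 2.3]{lilac}) then yields equality, and the $\invalid$ cases agree on both sides.

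For monotonicity, given $a \extTo b$ and arbitrary $c$, I would show that whenever $b \raOp c$ is defined so is $a \raOp c$, with $a \raOp c \extTo b \raOp c$. Existence follows from \cref{lemma:indep-prod-exists}: any disjoint pair $\event_a \in \salg_a$, $\event_c \in \salg_c$ also has $\event_a \in \salg_b$ and $\prob_a(\event_a) = \prob_b(\event_a)$, so definedness of $b \raOp c$ yields $\prob_a(\event_a)\prob_c(\event_c) = 0$. The extension claim then follows by checking that restricting the measure of $b \raOp c$ to $\salg_a \punion \salg_c$ matches the defining equation for $a \raOp c$ on the generating rectangles $\event_a \inters \event_c$, and invoking uniqueness; when $b \raOp c = \invalid$ the conclusion $a \raOp c \raLeq \invalid$ is trivial.

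The main obstacle is associativity of $\raOp$: that $(\psp_1 \iprod \psp_2) \iprod \psp_3$ is defined exactly when $\psp_1 \iprod (\psp_2 \iprod \psp_3)$ is, and that the two coincide. Here I would lean on the measure-theoretic lemmas of the appendix. Using \cref{lemma:fibre-prod-exists} (which characterizes independent products through a $\bind$ decomposition) together with \cref{lemma:indep-prod-exists}, I would reduce the existence of each nested product to the single condition that disjoint events drawn from the three \salgebra[s] have product-zero measure, a condition manifestly symmetric in the three factors. For equality, both nested products are measures on the same union \salgebra\ $\salg_1 \punion \salg_2 \punion \salg_3$ (associativity of $\punion$ being routine from its definition as a generated \salgebra) satisfying $\prob(\event_1 \inters \event_2 \inters \event_3) = \prob_1(\event_1)\prob_2(\event_2)\prob_3(\event_3)$ on generating rectangles, so uniqueness of the extension from this $\pi$-system forces them to agree. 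The delicate point throughout is that the independent product is \emph{partial}, so every step must carry the biconditional ``defined on one side iff defined on the other''; this is precisely what the countable-support hypothesis and the cited lemmas are tailored to provide.
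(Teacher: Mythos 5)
Your proposal is correct in substance but takes a more self-contained route than the paper. The paper's own proof delegates the four hard properties---commutativity and associativity of $\raOp$, and reflexivity and transitivity of $\raLeq$---entirely to Lilac (``they have proved that $\raOp$ is associative and commutative, and $\raLeq$ is transitive and reflexive''), and then only verifies the remaining axioms by the same case analysis on $\invalid$ that you perform: validity propagation, validity of the unit, the left unit law, downward closure of validity along $\raLeq$, and monotonicity of $\raOp$. Your treatment of the routine axioms and of the $\invalid$ bookkeeping coincides with theirs; your argument for monotonicity via \cref{lemma:indep-prod-exists} and uniqueness on generating rectangles is actually more explicit than the paper's, which simply asserts that $\psp \extTo \psp'$ implies $\psp \iprod \psp''$ is defined and below $\psp' \iprod \psp''$. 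What you buy is independence from the external reference; what the paper buys is brevity.

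The one place your sketch under-delivers is associativity. You claim that definedness of $(\psp_1 \iprod \psp_2) \iprod \psp_3$ reduces to the ``manifestly symmetric'' condition that disjoint triples of events drawn from $\salg_1,\salg_2,\salg_3$ have product-zero measure. But the outer product is taken against $\salg_1 \punion \salg_2$, so the criterion of \cref{lemma:indep-prod-exists} quantifies over \emph{all} disjoint pairs $\event_{12} \in \salg_1 \punion \salg_2$, $\event_3 \in \salg_3$, not merely over rectangles of the form $\event_1 \inters \event_2$. To complete the argument you must propagate the zero-product condition from the generating $\pi$-system to the whole generated \salgebra{}, which in the countable setting goes through the partition representation of \cref{thm:countable-partition-generated} and \cref{lemma:sigma-alg-representation} (write $\event_{12}$ as a countable disjoint union of atoms, each an intersection of atoms of $\salg_1$ and $\salg_2$, and use countable additivity). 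This missing step is precisely the content of the result the paper imports from Lilac, so either supply it explicitly or cite it as the paper does.
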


\begin{proof}
We defined $\raOp$ and $\raLeq$ the same way as in~\cite{lilac},
  and they have proved that $\raOp$ is associative and commutative,
  and $\raLeq$ is transitive and reflexive.
  We check the rest of conditions one by one.
  \begin{induction}
    \step[Condition~$a \raOp b = b \raOp a$]
      The independent product is proved to be commutative in ~\cite{lilac}.
    \step[Condition~$(a \raOp b) \raOp c = a \raOp (b \raOp c)$]
      The independent product is proved to be associative in~\cite{lilac}.
    \step[Condition~$a \raLeq b \implies b \raLeq c \implies a \raLeq c$]
      The order $\raLeq$ is proved to be transitive in~\cite{lilac}.
    \step[Condition~$a \raLeq a$]
      The order $\raLeq$ is proved to be reflexive in~\cite{lilac}.
    \step[Condition~$\raValid(a \raOp b) \implies \raValid(a)$]
      Pattern matching on $a \raOp b$,
      either there exists probability spaces $\psp_1, \psp_2$ such that
      $a = \psp_1$, $b = \psp_2$ and $\psp_1 \iprod \psp_2$ is defined,
      or $a \raOp b = \invalid$.
\begin{casesplit}
        \case[$a \raOp b = \invalid$] Note that
        $\raValid(a \raOp b)$ does not hold when $a \raOp b = \invalid$,
      so we can eliminate this case by ex falso quodlibet.
      \case[$a \raOp b = \psp_1 \iprod \psp_2$] Then
      $a = \psp_1$, and thus $\raValid(a)$.
      \end{casesplit}

    \step[Condition~$\raValid(\raUnit)$]
      Clear because $\raUnit \neq \invalid$.
\step[Condition~$a \raLeq b \implies \raValid(b) \implies \raValid(a)$]
      Pattern matching on $a$ and $b$,
      either there exists probability spaces $\psp_1, \psp_2$ such that
      $a = \psp_1$, $b = \psp_2$ and $\psp_1 \extTo \psp_2$ is defined,
      or $b = \invalid$.
      \begin{casesplit}
        \case[$b = \invalid$] Then $\raValid(b)$ does not hold,
      and we can eliminate this case by ex falso quodlibet.
        \case[$a = \psp_1$, $b = \psp_2$ and
        $\psp_1 \extTo \psp_2$]  We clearly have $\raValid(a)$.
      \end{casesplit}

    \step[Condition~$\raUnit \raOp a = a$]
      Pattern matching on $a$,
      either $a = \invalid$
      or there exists some probability space $\psp$ such that $a = \psp$.
      \begin{casesplit}
        \case[$a = \invalid$] Then $\raUnit \raOp a = \invalid = a$.
        \case[$a = \psp$] Then $\raUnit \raOp a = a$.
      \end{casesplit}

    \step[Condition~$a \raLeq b \implies a \raOp c \raLeq b \raOp c$]
      Pattern matching on $a$ and $b$.
      If $a \raLeq b $,
      then either $b = \invalid$ or there exists
       $\psp, \psp'$ such that $a = \psp$ and $b = \psp'$.
       \begin{casesplit}
         \case[$b = \invalid$] Then $b \raOp c = \invalid$ is the top element,
       and then $a \raOp c \raLeq b \raOp c$.
         \case[Otherwise]
       $a \raLeq b$ iff $\psp \raLeq \psp'$,
       then either $b \raOp c = \invalid$ and $a \raOp c \raLeq b \raOp c$ follows,
       or  $b \raOp c = \psp' \iprod \psp''$
       for some probability space $c = \psp''$.
       Then $\psp \raLeq \psp'$ implies that
       $\psp \raOp \psp''$ is also defined and
       $\psp \raOp \psp' \raLeq \psp \raOp \psp''$.
       Thus, $a \raOp c \raLeq b \raOp c$ too.
       \qedhere
      \end{casesplit}
  \end{induction}
\end{proof}

\begin{lemma} [RA composition preserves compatibility]
  \label{lemma:well-defined-psppmra-1}
   \[
  \salg_1\compat\permap_1
  \implies
  \salg_2\compat\permap_2
  \implies
    (\salg_1 \punion \salg_2)\compat(\permap_1 \raOp \permap_2)
  \]
\end{lemma}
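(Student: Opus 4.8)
The plan is to recognize that the content of $\salg \compat \permap$ is, up to the forced choice of the second factor, purely a condition on the sigma-algebra: via the isomorphism $\Store \iso ((\Var\setminus S)\to\Val)\times(S\to\Val)$ with $S = \set{x | \permap(x) = 0}$, compatibility holds exactly when every event of $\salg$ has the form $E'\times(S\to\Val)$, i.e.\ when membership in any $\salg$-event is insensitive to the values of the variables in $S$. First I would make this ``trivial on $S$'' reformulation precise by phrasing it through the relation $s \approx_T s'$, which holds when the stores $s,s'$ agree on every variable outside $T$: then $\salg$ is trivial on $S$ iff every $E\in\salg$ is $\approx_S$-saturated (closed under $\approx_S$). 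The witnessing factor is forced by $\salg' = \set{E' | E'\times(S\to\Val)\in\salg}$, so this reformulation is equivalent to the definition, and the distribution side plays no role (given triviality of the sigma-algebra, the factor distribution $\prob'(E') \is \prob(E'\times(S\to\Val))$ recovers $\psp = \psp'\pprod\Triv{S\to\Val}$ automatically); this also justifies stating the lemma at the level of sigma-algebras.

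Next I would reduce the three zero-permission sets. Writing $S_1,S_2$ for the zero-sets of $\permap_1,\permap_2$ and $S$ for that of $\permap_1\raOp\permap_2 = \fun x.\,\permap_1(x)+\permap_2(x)$, the fact that permissions range over $\PosRat$ gives $\permap_1(x)+\permap_2(x)=0 \iff \permap_1(x)=0 \land \permap_2(x)=0$, hence $S = S_1\cap S_2$. The key monotonicity observation is that $T'\subs T$ implies $\approx_{T'} \subs \approx_{T}$ (agreeing outside the smaller set is the stronger requirement); applied to $S_1\cap S_2\subs S_1$ and $S_1\cap S_2\subs S_2$ this yields $\approx_{S}\subs\approx_{S_1}$ and $\approx_{S}\subs\approx_{S_2}$.

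The core of the argument is then short. From the two hypotheses, every $E_1\in\salg_1$ is $\approx_{S_1}$-saturated and every $E_2\in\salg_2$ is $\approx_{S_2}$-saturated; since $\approx_S$ refines both $\approx_{S_1}$ and $\approx_{S_2}$, each of these events is \emph{a fortiori} $\approx_S$-saturated. Thus $\salg_1\union\salg_2$ is contained in the collection $\mathcal{C}$ of all $\approx_S$-saturated subsets of $\Store$. Since the saturated sets of any equivalence relation contain $\Store$ and are closed under complement and countable union, $\mathcal{C}$ is itself a sigma-algebra, so it contains the generated algebra $\sigcl{\salg_1\union\salg_2} = \salg_1\punion\salg_2$. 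Hence every event of $\salg_1\punion\salg_2$ is $\approx_S$-saturated, i.e.\ $\salg_1\punion\salg_2$ is trivial on $S = S_1\cap S_2$, which is precisely $(\salg_1\punion\salg_2)\compat(\permap_1\raOp\permap_2)$. When the underlying independent product is undefined the conclusion holds by the convention $\invalid\compat\permap \is \True$.

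I expect the only real friction to be the bookkeeping in the first step: cleanly establishing the equivalence between the paper's product-space formulation of compatibility and the $\approx_S$-saturation reformulation, including verifying that the forced factor $\salg'$ indeed satisfies $\salg = \salg'\pprod\Triv{S\to\Val}$ and that saturation is preserved under the isomorphism. Once that reformulation is in place the remainder is the routine ``saturated sets form a sigma-algebra, hence contain the generated one'' argument, with the permission arithmetic $S = S_1\cap S_2$ being immediate from non-negativity.
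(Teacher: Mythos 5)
Your proof is correct, and it establishes the same core facts as the paper's: that the zero-permission set of $\permap_1 \raOp \permap_2$ is $S_1 \inters S_2$ (immediate from non-negativity of permissions), and that $\salg_1 \punion \salg_2$ is trivial on $S_1 \inters S_2$. Where you differ is in how the second fact is pushed through the sigma-algebra closure. The paper works directly with the generators: it writes a general generator of the combined space as $(E_1' \times (S_1 \to \Val)) \inters (E_2' \times (S_2 \to \Val))$, regroups it to exhibit an $(S_1 \inters S_2) \to \Val$ factor, and then asserts that the generated sigma-algebra inherits this product form. Your route instead characterizes compatibility as saturation under the equivalence $\approx_S$ and observes that the saturated sets form a sigma-algebra, so containment of the generators $\salg_1 \union \salg_2$ immediately gives containment of $\sigcl{\salg_1 \union \salg_2}$. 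This buys you a cleaner treatment of precisely the step the paper leaves implicit (that triviality on $S$ survives closure under complements and countable unions), at the cost of the up-front bookkeeping of proving the saturation reformulation equivalent to the product-space definition. Your handling of the two remaining points is also right: the factor distribution is forced once the sigma-algebra decomposes, so the measure side is automatic, and the $\invalid$ case is discharged by the convention $\invalid \compat \permap$.
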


\begin{proof}
  Let $S_1 = \set{x \in \Var \mid \permap_1(x) = 0}$,
  $S_2 = \set{x \in \Var \mid \permap_2(x) = 0}$.
  If $\salg_1\compat\permap_1$, then there exists
  $\psp_1' \in \ProbSp((\Var \setminus S_1) \to \Val)$
  such that
  $\psp_1 = \psp_1' \pprod \Triv{S_1 \to \Val}$
  In addition, if $\salg_2\compat\permap_2$, then there exists
  $\psp_2' \in \ProbSp((\Var \setminus S_2) \to \Val)$
  such that
  $\psp_2 = \psp_2' \pprod \Triv{S_2 \to \Val}$.
  Then,
  \begin{align*}
    \psp_1 \raOp \psp_2
    &= \psp_1 \iprod \psp_2 \\
    &= (\psp_1' \pprod \Triv{S_1 \to \Val}) \iprod
    (\psp_2' \pprod \Triv{S_2 \to \Val})
  \end{align*}

  Say $(\salg_1', \prob_1') = \psp_1'$,
  and  $(\salg_2', \prob_2') = \psp_2'$.
  Then the sigma algebra of $\psp_1 \raOp \psp_2$
  is
  \begin{align*}
  &\closure{\set{(E_1 \times S_1 \to \Val) \cap (E_2 \times S_2 \to \Val) \mid E_1 \in \salg_1', E_2 \in \salg_2'}} \\
    = & \closure{\set{\left( (E_1 \times (S_1 \setminus S_2) \to \Val) \cap (E_2 \times (S_2 \setminus E_1) \to \Val) \right) \times (S_1 \cap S_2) \mid E_1 \in \salg_1', E_2 \in \salg_2'}}
  \end{align*}

  Then, there exists $\psp'' \in \ProbSp((\Var \setminus (S_1 \cap S_2)) \to \Val)$ such that
  $\psp_1 \raOp \psp_2 = \psp'' \pprod  \Triv{(S_1 \cap S_2) \to \Val})$.
  Also,
  \begin{align*}
     &\set{x \in \Var \mid (\permap_1 \raOp \permap_2)(x) = 0} \\
    =&\set{x \in \Var \mid \permap_1(x) + \permap_2(x) = 0} \\
    =&\set{x \in \Var \mid \permap_1(x) = 0 \text{ and } \permap_2(x) = 0} \\
    =& S_1 \cap S_2
  \end{align*}
  Therefore, $\salg_1 \punion \salg_2$ is compatible with $\permap_1 \raOp \permap_2$
\end{proof}

\begin{lemma}
  The structure $(\Perm, \raLeq, \raValid, \raOp, \raUnit)$ is an ordered unital resource algebra (RA) as defined
  in~\cref{def:ra}.
\end{lemma}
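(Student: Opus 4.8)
The plan is to verify each of the RA axioms of \cref{def:ra} by reducing them pointwise to elementary facts about non-negative rationals under addition and the usual order, exactly mirroring the pointwise structure of the carrier $\Perm = \Var \to \PosRat$. First I would dispatch the algebraic structure of composition: since $a_1 \raOp a_2 \is \fun \p{x}. a_1(\p{x}) + a_2(\p{x})$ is defined pointwise by addition, commutativity and associativity follow immediately from the commutativity and associativity of $+$ on $\PosRat$ evaluated at each $\p{x} \in \Var$. Likewise, reflexivity and transitivity of $\raLeq$ reduce, at each variable, to the corresponding properties of $\leq$ on the rationals.

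Next I would handle the unit and validity conditions. Validity of the unit, $\raValid(\raUnit)$, holds because $\raUnit(\p{x}) = 0 \leq 1$ for every $\p{x}$; and $\raUnit \raOp a = a$ holds because $0 + a(\p{x}) = a(\p{x})$ pointwise. For $\raValid(a \raOp b) \implies \raValid(a)$, from $a(\p{x}) + b(\p{x}) \leq 1$ together with $b(\p{x}) \geq 0$ I obtain $a(\p{x}) \leq a(\p{x}) + b(\p{x}) \leq 1$. For the downward closure $a \raLeq b \land \raValid(b) \implies \raValid(a)$, the chain $a(\p{x}) \leq b(\p{x}) \leq 1$ gives the claim directly. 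Finally, for monotonicity of composition, $a \raLeq b \implies a \raOp c \raLeq b \raOp c$, the assumption $a(\p{x}) \leq b(\p{x})$ yields $a(\p{x}) + c(\p{x}) \leq b(\p{x}) + c(\p{x})$ by adding $c(\p{x})$ to both sides.

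There is no genuine obstacle here, as every axiom decomposes into an independent pointwise condition that is a standard property of $(\PosRat, +, 0, \leq)$; I would structure the proof as a short case analysis over the axiom list, as in the $\PSpRA$ lemma. The one point worth flagging explicitly is that the argument for $\raValid(a \raOp b) \implies \raValid(a)$ depends on $b(\p{x})$ being non-negative, so $\PosRat$ must be read as the \emph{non-negative} rationals; this is also what makes $\raUnit = \fun \wtv. 0$ a legitimate element of $\Perm$, and it is precisely the ingredient ensuring that validity is closed downward under composition.
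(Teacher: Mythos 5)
Your proof is correct and follows essentially the same route as the paper's: a pointwise case analysis over the axiom list, reducing each condition to elementary facts about non-negative rationals under addition and the usual order. Your explicit remark that the implication from validity of a composite to validity of a component relies on non-negativity of the other summand is a detail the paper leaves implicit, but the arguments are otherwise identical.
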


\begin{proof}
  We check the conditions one by one.
  \begin{induction}
    \step[Condition~$a \raOp b = b \raOp a$]
      Follows from the commutativity of addition.

    \step[Condition~$(a \raOp b) \raOp c = a \raOp (b \raOp c)$]
      Follows from the associativity of addition.

    \step[Condition~$a \raLeq b \implies b \raLeq c \implies a \raLeq c$]
      $\raLeq$ is a point-wise lifting of the order $\leq$ on arithmetics,
      so it follows from the transitivity of $\leq$.

    \step[Condition~$a \raLeq a$]
      $\raLeq$ is a point-wise lifting of the order $\leq$ on arithmetics,
      so it follows from the reflexivity of $\leq$.

    \step[Condition~$\raValid(a \raOp b) \implies \raValid(a)$]
      By definition,
      \begin{align*}
        \raValid(a \raOp b)
        &\implies \forall x \in \Var, (a \raOp b)(x) \leq 1 \\
        &\implies \forall x \in \Var, a(x) + b(x) \leq 1 \\
        &\implies \forall x \in \Var, a(x) \leq 1 \\
        &\implies \raValid(a)
      \end{align*}

    \step[Condition~$\raValid(\raUnit)$]
      Note that $\raUnit = \fun \wtv. 0$ satisfies that
      $\forall x \in \Var, \raUnit(x) \leq 1$,
      so $\raValid(\raUnit)$.

    \step[Condition~$a \raLeq b \implies \raValid(b) \implies \raValid(a)$]
      By definition,
      $a \raLeq b$ means
      $\forall x \in \Var. a(x) \leq b(x)$,
      and $\raValid(b)$ means that
      $\forall x \in \Var. b(x) \leq 1$.
      Thus, $a \raLeq b$ and $\raValid(b)$ implies that
      $\forall x \in \Var. a(x) \leq b(x) \leq 1$,
      which implies $\raValid(a)$.

    \step[Condition~$\raUnit \raOp a = a$]
      By definition,
      \begin{align*}
        \raUnit \raOp a
        &= \fun x. (\fun \wtv. 0)(x) + a(x) \\
        &= \fun x. 0 + a(x) \\
        &= a.
      \end{align*}

      \step[Condition~$a \raLeq b \implies a \raOp c \raLeq b \raOp c$]
      By definition,
      \begin{align*}
        a \raLeq b
        &\iff \forall x \in \Var. a(x) \leq b(x) \\
        &\implies \forall x \in \Var. a(x) + c(x) \leq b(x) + c(x) \\
        &\implies a \raOp c \raLeq b \raOp c
        \tag*{\qedhere}
      \end{align*}
  \end{induction}
\end{proof}

\begin{lemma}
  \label{lemma:psppm-ra}
  The structure $\PSpPmRA$ is an ordered unital resource algebra (RA) as defined
  in~\cref{def:ra}.
\end{lemma}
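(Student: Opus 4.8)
The plan is to verify each condition of \cref{def:ra} for $\PSpPmRA$, exploiting the observation that $\PSpPmRA$ is essentially the product of the two resource algebras $\PSpRA_{\Store}$ and $\Perm$ (both already shown to be RAs), with its carrier restricted to \emph{compatible} pairs. Since $\raOp$, $\raLeq$, and $\raUnit$ are all defined component-wise, and the validity predicate $\raValid((\maybePsp,\permap))$ is exactly the conjunction $\maybePsp \neq \invalid \land \forall \p{x}.\permap(\p{x})\leq 1$ of the two component validities, almost every axiom reduces to the corresponding axiom of one or both constituent RAs.

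First I would discharge \emph{well-definedness} of the carrier, namely that $\PSpPmRA$ is closed under $\raOp$ and contains $\raUnit$. Closure under composition is precisely \cref{lemma:well-defined-psppmra-1}: from $\maybePsp_1 \compat \permap_1$ and $\maybePsp_2 \compat \permap_2$ it follows that $(\maybePsp_1 \raOp \maybePsp_2) \compat (\permap_1 \raOp \permap_2)$, so the composite pair again lies in $\PSpPmRA$. For the unit I would check $\Triv{\Store} \compat (\fun \p{x}.0)$: here the set $S$ of zero-permission variables is all of $\Var$, and the required factorization $\Triv{\Store} = \psp' \pprod \Triv{\Var \to \Val}$ holds with $\psp'$ the trivial space over the one-point set $\emptyset \to \Val$, since the product of trivial spaces is trivial.

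With well-definedness in hand, the remaining conditions follow component-wise. Commutativity and associativity of $\raOp$, reflexivity and transitivity of $\raLeq$, and the unit law $\raUnit \raOp a = a$ are immediate from the pointwise structure together with the analogous facts for $\PSpRA_{\Store}$ and $\Perm$. For $\raValid(a \raOp b) \implies \raValid(a)$, writing $a = (\maybePsp,\permap)$ and $b = (\maybePsp',\permap')$, validity of $a \raOp b$ gives $\maybePsp \raOp \maybePsp' \neq \invalid$ and $\forall \p{x}.(\permap(\p{x})+\permap'(\p{x}))\leq 1$; the first yields $\maybePsp \neq \invalid$ by the $\PSpRA$ axiom, and the second yields $\forall \p{x}.\permap(\p{x})\leq 1$ since permissions are non-negative. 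The downward-closure axiom $a \raLeq b \land \raValid(b)\implies\raValid(a)$ and the monotonicity axiom $a \raLeq b \implies a \raOp c \raLeq b \raOp c$ likewise follow by splitting into the two components and invoking the corresponding properties proven for each constituent RA. The only genuinely new content beyond those two prior lemmas is the closure of the carrier under composition, so once \cref{lemma:well-defined-psppmra-1} is available the argument is routine; that closure step is therefore the main (and only mild) obstacle.
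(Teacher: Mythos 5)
Your proposal is correct and follows essentially the same route as the paper: invoke \cref{lemma:well-defined-psppmra-1} for closure of the carrier under $\raOp$, then verify each RA axiom component-wise by reducing to the corresponding axioms of $\PSpRA_{\Store}$ and $\Perm$. Your explicit check that $\raUnit$ itself satisfies the compatibility condition is a small addition the paper elides, but it is correct and harmless.
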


\begin{proof}
  We want to check that $\PSpPmRA$ satisfies all the requirements to be
  an ordered unital resource algebra (RA).
  Because $\PSpPmRA$ is very close to a product of $\PSpRA$ and $\Perm$,
  the proof below is very close to the proof that product RAs
  are RA.

  First,~\cref{lemma:well-defined-psppmra-1} implies that $\raOp$
  is well-defined.

  Then we need to check all the RA axioms are satisfied.
  For any $a, b \in \PSpPmRA$ and any $\psp_1, \permap_1, \psp_2, \permap_2$ such
  that $a = (\psp_1, \permap_1), b = (\psp_2, \permap_2)$.

  We check the conditions one by one.
  \begin{induction}
\step[Condition~$\raValid(a \raOp b) \implies \raValid(a)$]
      By definition,
      $a \raOp b = (\psp_1, \permap_1) \raOp (\psp_2, \permap_2)
      = (\psp_1 \raOp \psp_2, \permap_1 \raOp \permap_2)$.
      And $\raValid(\psp_1 \raOp \psp_2, \permap_1 \raOp \permap_2)$
      implies that
      $\raValid(\psp_1 \raOp \psp_2)$ and
      $\raValid(\permap_1 \raOp \permap_2)$.
      Because $\PSpRA$ and $\Perm$ are both RAs,
      we have $\raValid(\psp_1)$ and $\raValid(\permap_1)$.
      Thus, $\raValid(\psp_1, \permap_1)$.
\step[Condition~$\raValid(\raUnit)$]
      Clear because $\raUnit = (\Triv{\Store}, \fun \p{x}. 0)$
      and $\Triv{\Store} \neq \invalid$, and
      $\forall x. \left(\fun \p{x}. 0\right)(x) \leq 1$.

    \step[Condition~$a \raLeq b \implies \raValid(b) \implies \raValid(a)$]
      $a \raLeq b $ implies that
      $\psp_1 \raLeq \psp_2$ and $\permap_1 \raLeq \permap_2$.
      $\raValid(b)$ implies that
       $\psp_2 \neq \invalid$, and
      $\forall x. \left(\permap_2 \right)(x) \leq 1$.
      Thus,  $\psp_1 \neq \invalid$, and
      $\forall x. \left(\permap_1 \right)(x) \leq 1$.
      And therefore, $\raValid(b)$.

    \step[Condition~$\raUnit \raOp a = a$]
$
        \raUnit \raOp a
        = (\Triv{\Store}, \fun \p{x}. 0) \raOp (\psp_1, \permap_1) \\
        =  (\Triv{\Store} \raOp \psp_1,  \fun \p{x}. 0 \raOp \permap_1) \\
        = (\psp_1, \permap_1) = a.
      $

    \step[Condition~$a \raLeq b \implies a \raOp c \raLeq b \raOp c$]
      $a \raLeq b $ implies that
      $\psp_1 \raLeq \psp_2$ and $\permap_1 \raLeq \permap_2$.

      Say $c = (\psp_3, \permap_3)$.
      Then $a \raOp c = (\psp_1 \raOp \psp_3, \permap_1 \raOp \permap_3)$ and $b \raOp c = (\psp_2 \raOp \psp_3, \permap_2 \raOp \permap_3)$.
      Because $\psp_1 \raLeq \psp_2$,
      $\psp_1 \raOp \psp_3 \raLeq \psp_2 \raOp \psp_3$;
      similarly,
      $\permap_1 \raOp \permap_3 \raLeq \permap_2 \raOp \permap_3$.
      Thus, $a \raOp c \raLeq b \raOp c$.
      \qedhere
  \end{induction}
\end{proof}

  \begin{lemma}
    \label{lemma:product-preserve-ra}
    If $M$ is an RA, then $\Hyp{M}$ is also an RA.
  \end{lemma}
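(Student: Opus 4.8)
The plan is to verify each axiom of \cref{def:ra} for $\Hyp{M}$ by reducing it, index-by-index, to the corresponding axiom already assumed for $M$. Recall that $\Hyp{M}$ is the pointwise lifting of $M$ over the finite index set $I$: its carrier consists of the $I$-indexed tuples $\m{a}$ with $\m{a}(i) \in M$ for every $i$, and all of its structure is defined componentwise, namely $(\m{a} \raOp \m{b})(i) \is \m{a}(i) \raOp \m{b}(i)$, $\raValid(\m{a}) \is \forall i \in I\st \raValid(\m{a}(i))$, $\m{a} \raLeq \m{b} \is \forall i \in I\st \m{a}(i) \raLeq \m{b}(i)$, and the unit $\raUnit$ is the tuple taking $M$'s unit at every index. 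The overall structure of the argument is identical to the componentwise reasoning already carried out for \cref{lemma:psppm-ra}, where $\PSpPmRA$ was treated as essentially a binary product of $\PSpRA$ and $\Perm$; here we simply replace the two-fold product by an $I$-fold one.

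First I would dispatch the purely equational and order-theoretic axioms: commutativity and associativity of $\raOp$, reflexivity and transitivity of $\raLeq$, and the left-unit law $\raUnit \raOp \m{a} = \m{a}$. Each of these is either an equality or an implication whose statement, once the definitions above are unfolded, has the shape $\forall i \in I\st \varphi(\m{a}(i), \m{b}(i), \dots)$, where $\varphi$ is verbatim the $M$-level property. They therefore follow immediately by fixing an arbitrary $i \in I$ and invoking the corresponding axiom of $M$ at that index; for the equalities one additionally uses extensionality of $I$-indexed tuples.

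Next I would handle the validity axioms in the same style. For $\raValid(\m{a} \raOp \m{b}) \implies \raValid(\m{a})$, I would unfold $\raValid$ on both sides to their pointwise meaning and, for each $i$, apply the $M$-level implication $\raValid(\m{a}(i) \raOp \m{b}(i)) \implies \raValid(\m{a}(i))$, noting that $(\m{a} \raOp \m{b})(i) = \m{a}(i) \raOp \m{b}(i)$. The downward closure of validity along $\raLeq$ (from $\m{a} \raLeq \m{b}$ and $\raValid(\m{b})$) and the monotonicity of composition ($\m{a} \raLeq \m{b} \implies \m{a} \raOp \m{c} \raLeq \m{b} \raOp \m{c}$) are proved identically: quantify over $i \in I$ and use the matching axiom of $M$. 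Finally, $\raValid(\raUnit)$ holds because $M$ satisfies $\raValid$ of its own unit at every index.

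There is no genuine obstacle here: the whole argument is a routine pointwise lift, and the only thing requiring a modicum of care is the bookkeeping of the index quantifier, \ie ensuring that $\forall i \in I$ threads uniformly through every axiom and that composition, unit, and order are always taken at matching indices. This is administrative rather than mathematical content, so I would state the lemma with a short proof that verifies the axioms componentwise and otherwise defers to the fact that $M$ is an RA.
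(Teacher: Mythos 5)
Your proof is correct and matches the paper's approach: the paper simply states that RAs are known to be closed under products and that $\Hyp{M}$ is such a product, omitting the details, while you spell out the routine componentwise verification of each axiom. Nothing is missing or different in substance.
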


  \begin{proof}
    RA is known to be closed under products, and
    $\Hyp{M}$ can be obtained as products of $M$,
    so we omit the proof.
  \end{proof}

  \begin{lemma}
  $\Model_I$ is an RA.
\end{lemma}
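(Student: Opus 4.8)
The plan is to derive this statement as an immediate corollary of the two lemmas just established, since no new reasoning about the semantic model is actually required. Recall that $\Model_I$ is defined as $\Hyp{\PSpPmRA}$, the $I$-indexed product obtained by pointwise lifting of the carrier, order, validity predicate, composition, and unit of $\PSpPmRA$ across the finite index set. First I would invoke \cref{lemma:psppm-ra}, which establishes that $\PSpPmRA$ itself is an ordered unital resource algebra. Then I would apply \cref{lemma:product-preserve-ra}, which states that the product construction $\Hyp{\hole}$ sends RAs to RAs. Composing the two gives that $\Hyp{\PSpPmRA} = \Model_I$ is an RA, which is exactly the claim.

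Since both ingredient results are already in hand, there is essentially no remaining obstacle to overcome; this is a two-line application rather than a genuine argument. The only point worth a sentence of care is that the instance of the product used in the definition of $\Model_I$ is literally the one covered by \cref{lemma:product-preserve-ra}, i.e. that the ``pointwise lifting of the components of $M$'' appearing in the definition of $\Hyp[I]{M}$ coincides with the product RA whose axioms \cref{lemma:product-preserve-ra} verifies. This is a definitional match rather than a difficulty, so I would simply note it in passing and conclude. I therefore expect to present the proof with no further calculation beyond citing the two preceding lemmas.
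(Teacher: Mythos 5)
Your proposal is correct and matches the paper's own proof exactly: the paper likewise concludes by citing \cref{lemma:psppm-ra} to get that $\PSpPmRA$ is an RA and then \cref{lemma:product-preserve-ra} to transfer this to the product $\Model_I = \Hyp{\PSpPmRA}$. Nothing further is needed.
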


\begin{proof}
  By~\cref{lemma:psppm-ra}, $\PSpPmRA$ is an RA.
  By~\cref{lemma:product-preserve-ra},
  $\Model_I = \Hyp{\PSpPmRA}$ is also an RA.
\end{proof}

 \section{Characterizations of \SuperCond\ and Relational Lifting}
\label{sec:appendix:alt-cond-rl}

Interestingly, it is possible to characterize the conditioning modality
using the other connectives of the logic.
\begin{proposition}[Alternative Characterization of \Supercond]
\label{prop:cond-as-wand}
  The following is a logically equivalent characterization
  of the \supercond\ modality:
  \begin{align*}
    \CMod{\prob} K &\lequiv
    \begin{array}[t]{@{}r@{\,}l@{}}
      \E \m{\salg}, \m{\prob}, \m{\permap}, \m{\krnl}.
        & \Own{\m{\salg}, \m{\prob}, \m{\permap}} *
        \pure{\forall i\in I\st
        \m{\mu}(i) = \bind(\prob, \m{\krnl}(i))}
     \\ & * \;
     \A v \in \psupp(\prob).
     \Own{\m{\sigmaF}, \m{\krnl}(I)(v), \m{\permap}}
     \wand
     K(v)
    \end{array}
\end{align*}
\end{proposition}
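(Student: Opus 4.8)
The plan is to prove the two entailments separately, treating the pure conjunct $\pure{\forall i\in I\st \m{\mu}(i) = \bind(\prob, \m{\krnl}(i))}$ as a resource-independent side condition (recall $\pure{\varphi} * R \lequiv \pure{\varphi} \land R$), so that in both directions it is simply matched between the two formulas. Throughout, write $b_1 \is (\m{\sigmaF}, \m{\mu}, \m{\permap})$ for the ``marginal'' resource owned on the left of the separating conjunction, and, for each $v \in \psupp(\prob)$, $b_1^{(v)} \is (\m{\sigmaF}, \m{\krnl}(I)(v), \m{\permap})$ for the ``conditioned'' resource appearing in the wand.

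For the direction $\CMod{\prob} K \proves \mathrm{RHS}$, I would, given $a$ satisfying \cref{def:c-mod} with witnesses $\m{\sigmaF}, \m{\mu}, \m{\permap}, \m{\krnl}$, reuse the same witnesses and let the wand conjunct be carried by the unit $\raUnit$. The ownership conjunct $\Own{b_1}$ is witnessed by $b_1$ itself, and $b_1 \raOp \raUnit = b_1 \raLeq a$ supplies the separating split. It remains to check $\raUnit \models \A v \in \psupp(\prob).\, \Own{b_1^{(v)}} \wand K(v)$: unfolding the wand, for any $c \models \Own{b_1^{(v)}}$ we have $b_1^{(v)} \raLeq c$, and since \cref{def:c-mod} guarantees $K(v)(b_1^{(v)})$, upward-closure of $K(v)$ yields $K(v)(c) = K(v)(\raUnit \raOp c)$.

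For the converse $\mathrm{RHS} \proves \CMod{\prob} K$, I would, given valid $a$ with a split $b_1' \raOp a_2 \raLeq a$ where $b_1 \raLeq b_1'$, the pure fact holds, and $a_2 \models \A v.\, \Own{b_1^{(v)}} \wand K(v)$, first invoke upward-closure of $\CMod{\prob}K$ to reduce to proving the claim at $a_0 \is b_1 \raOp a_2 \raLeq a$ (valid, since validity is downward-closed). Writing $a_2$ componentwise as $((\salg_2(i), \nu_2(i)), \pi_2(i))$, validity of $a_0$ means each product $(\m{\sigmaF}(i), \m{\mu}(i)) \iprod (\salg_2(i), \nu_2(i))$ exists. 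This is where the measure theory enters: since $\m{\mu}(i) = \bind(\prob, \m{\krnl}(i))$, \cref{lemma:fibre-prod-exists} guarantees that every fibrewise product $(\m{\sigmaF}(i), \m{\krnl}(i)(v)) \iprod (\salg_2(i), \nu_2(i))$ exists for $v \in \psupp(\prob)$ and that the combined distribution factors as $\bind(\prob, \fun v.\, \m{\krnl}(i)(v) \iprod \nu_2(i))$. I would then take as \supercond\ witnesses $\m{\sigmaF}'(i) \is \m{\sigmaF}(i) \punion \salg_2(i)$, $\m{\permap}' \is \m{\permap} \raOp \pi_2$, and $\m{\krnl}'(i)(v)$ the distribution of the fibre product above; the factorization yields exactly $\m{\mu}'(i) = \bind(\prob, \m{\krnl}'(i))$ with $(\m{\sigmaF}', \m{\mu}', \m{\permap}') = a_0$.

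The final and most delicate step is discharging the conditional assertions $K(v)(\m{\sigmaF}', \m{\krnl}'(I)(v), \m{\permap}')$. By construction the resource $(\m{\sigmaF}', \m{\krnl}'(I)(v), \m{\permap}')$ equals $b_1^{(v)} \raOp a_2$, and its validity (required to fire the wand) follows from the same fibrewise existence granted by \cref{lemma:fibre-prod-exists} together with the permission bound inherited from $a_0$. Instantiating $a_2 \models \A v.\, \Own{b_1^{(v)}} \wand K(v)$ at the frame $c = b_1^{(v)}$, which trivially satisfies $\Own{b_1^{(v)}}$, then gives $K(v)(a_2 \raOp b_1^{(v)}) = K(v)(\m{\sigmaF}', \m{\krnl}'(I)(v), \m{\permap}')$, establishing all three conditions of \cref{def:c-mod} at $a_0$ and hence, by upward-closure, at $a$. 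The main obstacle is precisely the bookkeeping of the frame resource $a_2$ threaded across the wand: a naive attempt reusing $b_1$ verbatim fails because the wand delivers $K(v)$ on the \emph{combined} resource $b_1^{(v)} \raOp a_2$, not on $b_1^{(v)}$ alone; the fix is to absorb $a_2$ into every fibre of the kernel, and it is \cref{lemma:fibre-prod-exists} that makes this absorption well-defined and compatible with the convex-combination constraint $\m{\mu}' = \bind(\prob, \m{\krnl}')$.
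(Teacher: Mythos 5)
Your proof is correct and follows essentially the same route as the paper's: the forward direction discharges the wand conjunct with the unit/trivial frame and upward-closure of $K(v)$, and the backward direction uses \cref{lemma:fibre-prod-exists} to absorb the frame $a_2$ into every fibre of the kernel so that the wand can be fired at $c=b_1^{(v)}$. The one (harmless) divergence is that you take the composite $a_0=b_1\raOp a_2$ directly as the existential witness in \cref{def:c-mod} (legitimate, since the definition only requires the witness to be $\raLeq a$), whereas the paper additionally lifts the witnesses to an arbitrary extension $r$ of the composite via \cref{lemma:bind-extend}; your version simply skips that extra step.
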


\begin{proof}
	In the following, we sometimes abbreviate
	${\forall i\in I\st
                \m{\mu}(i) = \bind(\prob, \m{\krnl}(i))}$
	by writing just ${\m{\mu} = \bind(\prob, \m{\kappa})}$.

	We start with the embedding:

	\begin{align*}
				&\begin{array}[t]{@{}r@{\,}l@{}}
        \E \m{\salg}, \m{\prob}, \m{\permap}, \m{\krnl}.
        & \Own{\m{\salg}, \m{\prob}, \m{\permap}} *
        \pure{\forall i\in I\st
                \m{\mu}(i) = \bind(\prob, \m{\krnl}(i))}
        \\ & * \;
        \A a \in \psupp(\prob).
          \Own{\m{\sigmaF}, \m{\krnl}(I)(a), \m{\permap}}
          \wand
            K(a)
 			 \end{array}\\
{}\lequiv {} &
				\fun r.
        \exists \m{\sigmaF}, \m{\mu}, \m{\permap}, \m{\kappa}.
        \big(\Own{\m{\sigmaF}, \m{\mu'}, \m{\permap}} \sepand
				\pprop{\m{\mu} = \bind(\prob, \m{\kappa})} \sepand \\
					&\qquad \qquad
        (\forall a \in \psupp(\prob). \Own{\m{\sigmaF}, \m{\kappa} a, \m{\permap}} \sepimp K(a)) \big) (r) \\
{}\lequiv {} &
			 \fun r.
				\exists \m{\sigmaF}, \m{\mu},  \m{\permap},  \m{\kappa},
				\m{\sigmaF_1}, \m{\mu_1},  \m{\permap_1},
				\m{\sigmaF_2}, \m{\mu_2}, \m{\permap_2},
				\m{\sigmaF_3}, \m{\mu_3}, \m{\permap_3}, \\
				&\qquad \qquad
				r \extOf (\m{\sigmaF_1}, \m{\mu_1}, \m{\permap_1}) \raOp
				(\m{\sigmaF_2}, \m{\mu_2}, \m{\permap_2}) \raOp (\m{\sigmaF_3}, \m{\mu_3}, \m{\permap_3}) \land\\
				&\qquad \qquad
				(\m{\sigmaF_1}, \m{\mu_1}, \m{\permap_1}) \extOf (\m{\sigmaF}, \m{\mu}, \m{\permap})  \land
				\pprop{\m{\mu} = \bind(\prob, \m{\kappa})} \land \\
				&\qquad \qquad
				(\forall a \in \psupp(\prob). \forall r_1, r_2 \st
				r_1 \raOp (\m{\sigmaF_3}, \m{\mu_3}, \m{\permap_3}) = r_2 \land
				r_1 \extOf (\m{\sigmaF}, \m{\kappa} a, \m{\permap})
				\implies K(a)(r_2))\\
{}\lequiv {} &
				\fun r.
				\exists \m{\sigmaF}, \m{\mu}, \m{\permap},
				\m{\sigmaF_3}, \m{\mu_3}, \m{\permap_3}, \m{\kappa}. \\
				& \qquad \qquad
				r \extOf
				(\m{\sigmaF}, \m{\mu}, \m{\permap})  \raOp (\m{\sigmaF_3}, \m{\mu_3}) \land
				\pprop{\m{\mu} = \bind(\prob, \m{\kappa})} \land \\
				& \qquad \qquad
				(\forall a \in \psupp(\prob). \forall r_1, r_2 \st
				r_1 \raOp (\m{\sigmaF_3}, \m{\mu_3}, \m{\permap_3}) = r_2 \land
				r_1 \extOf (\m{\sigmaF}, \m{\kappa} a, \m{\permap})
				\implies K(a)(r_2))
	\end{align*}
For the last equivalence, the forward direction holds because
\begin{align*}
		r &\extOf (\m{\sigmaF_1}, \m{\mu_1}, \m{\permap_1}) \raOp
		(\m{\sigmaF_2}, \m{\mu_2}, \m{\permap_2}) \raOp (\m{\sigmaF_3}, \m{\mu_3}, \m{\permap_3})\\
			&\extOf  (\m{\sigmaF_1}, \m{\mu_1}, \m{\permap_1}) \raOp (\m{\sigmaF_3}, \m{\mu_3}, \m{\permap_3}) \\
			&\extOf  (\m{\sigmaF}, \m{\mu}, \m{\permap}) \raOp (\m{\sigmaF_3}, \m{\mu_3}, \m{\permap_3}).
	\end{align*}
The backward direction holds because we can pick
	$(\m{\sigmaF_1}, \m{\mu_1}, \m{\permap_1}) = (\m{\sigmaF}, \m{\mu}, \m{\permap})$,
	$(\m{\sigmaF_2}, \m{\mu_2})$ be the trivial probability space on $\store$ and
	$\m{\permap_2} = \fun \wtv. 0$.

	\begin{itemize}
		\item To show that the embedding implies the original assertion $\CMod{\prob} K $,
	we start with $\m{\mu}(i) \iprod \m{\mu_3}(i)$. For any $i$, we have
	$\m{\mu}(i) = \bind(\prob, \m{\kappa}(i))$, and thus
\begin{align*}
											\m{\mu }(i) \iprod \m{\mu_3}(i)
											&= \bind(\mu, \m{\kappa}(i)) \iprod \m{\mu_3}(i).
										\end{align*}
According to~\cref{lemma:fibre-prod-exists},
										$\m{\mu }(i) \iprod \m{\mu_3}(i)$ is defined implies that
										$\m{\kappa}(i)(a) \iprod  \m{\mu_3}(i)$ is defined for any $a \in $.
										Furthermore,
\begin{align*}
											\m{\mu }(i) \iprod \m{\mu_3}(i)
											&= \bind(\mu, \fun a. \m{\kappa}(i)(a) \iprod  \m{\mu_3}(i) )
										\end{align*}
										We abbreviate the hyperkernel $\m[i: \fun a. \m{\kappa}(i)(a) \iprod  \m{\mu_3}(i) \mid i \in I]$
										as $\m{\kappa'} $.
For any $a \in \psupp(\prob)$,
										the assertion
										\[
											\forall a \in \psupp(\prob). \forall r_1, r_2.
											r_1 \iprod (\m{\sigmaF_3}, \m{\mu_3}, \m{\permap_3}) = r_2 \land
											r_1 \extOf (\m{\sigmaF}, \m{\kappa}(I)a, \m{\permap})
											\implies K(a)(r_2)
										\]
										applies with the specific case
										$r_1 =  (\m{\sigmaF }, \m{\kappa}(I)(a), \m{\permap})$,
										gives us
										\[
K(a) ((\m{\sigmaF}, \m{\kappa}(I)(a), \m{\permap}) \raOp (\m{\sigmaF_3} , \m{\mu_3}, \m{\permap_3} )])
										\]
										By the definition of composition in our resource algebra,
										we have that $K(a)$ holds on $(\m{\sigmaF} \punion \m{\sigmaF_3},  \m{\kappa}'(I)(a), \m{\permap} + \m{\permap_3})$.

										For any $r$,
										\begin{itemize}
											\item If $\raValid(r)$, then there exists $\m{\sigmaF'}, \m{\mu'}, \m{\permap'}$ such that
												$r = (\m{\sigmaF'}, \m{\mu'}, \m{\permap'})$.
												Note that
												\begin{align*}
													r = (\m{\sigmaF'}, \m{\mu'}, \m{\permap'})
													&\extOf (\m{\sigmaF}, \m{\mu}, \m{\permap}) \raOp  (\m{\sigmaF_3} , \m{\mu_3}, \m{\permap_3})
													=  (\m{\sigmaF} \punion \m{\sigmaF_3} , \m{\mu} \iprod \m{\mu_3}, \m{\permap} + \m{\permap_3})
												\end{align*}

												By~\cref{lemma:bind-extend}, $\m{\mu} \iprod \m{\mu_3} = \bind(\mu, \m{\kappa'})$ implies
												that there exists $\m{\kappa''}$ such that
												$\m{\mu}(i) = \bind(\mu, \m{\kappa''}(i)) $, and that for any $a \in \psupp{\mu}$,
												$(\m{\sigmaF} \punion \m{\sigmaF_3}, \m{\kappa'}(I)(a)) \extTo (\m{\sigmaF'}, \m{\kappa''}(I)(a)) $.
												Thus, by monotonicity with respect to the extension order,
												that would imply 	$K(a)$ holds on $(\m{\sigmaF'}, \m{\kappa''}(I)(a), \m{\permap'})$.
												And $K(a)$ holds on $(\m{\sigmaF'}, \m{\kappa''}(I)(a), \m{\permap'})$ for any
												$a \in \psupp{\mu}$ together with
												$\m{\mu}(i) = \bind(\mu, \m{\kappa''}(i))$ implies that $r$ satisfy the original assertion
												of conditioning modality.
\item If not $\raValid(r)$, then $r$ satisfies any assertions, so $r$ satisfy the original
												assertion of conditioning modality.
										\end{itemize}

									\item
										To show the other direction that having the original assertion implies the
										embedded assertion.
										Assume $\CMod{\mu} K(r)$,
										that is,
										\begin{align*}
											    \begin{array}[t]{@{}r@{\,}l@{}}
    													\E \m{\sigmaF}, \m{\mu}, \m{\permap}, \m{\krnl}.
     													 & (\m{\sigmaF}, \m{\mu}, \m{\permap}) \raLeq r
     													 \land
     													   \forall i\in I\st
     													     \m{\mu}(i) = \bind(\prob, \m{\krnl}(i))
     													 \\ & \land \;
     													   \forall v \in \psupp(\prob).
     													     K(v)(\m{\sigmaF}, \m{\krnl}(I)(v), \m{\permap})
    											\end{array}
													(r)
										\end{align*}

										To show that $r$ also satisfy the embedding,
										we pick the witness for the existential quantifier as follows:
										let $(\m{\sigmaF_3}, \m{\mu_3})$ be the trivial probability space on
										$\Store$;
										let $\m{\permap_3} = \fun \wtv. 0$;
										pick $(\m{\sigmaF}_{\text{embd}}, \m{\mu}_{\text{embd}}, \m{\permap}_{\text{embd}})$
										be the $(\m{\sigmaF}_{\text{orig}}, \m{\mu}_{\text{orig}}, \m{\permap}_{\text{orig}})$
										that witness $\CMod{\mu} K (r)$,
										and $\m{\kappa}_{\text{embd}} = \m{\kappa}_{\text{orig}}$.

										Then:
										\begin{itemize}
											\item First we show
												\begin{align*}
													r
													&\raGeq
													(\m{\sigmaF}_{\text{orig}}, \m{\mu}_{\text{orig}}, \m{\permap}_{\text{orig}}) \\
													&= (\m{\sigmaF}_{\text{orig}}, \m{\mu}_{\text{orig}}, \m{\permap}_{\text{orig}}) \raOp (\m{\sigmaF_3}, \m{\mu_3}, \m{\permap_3}) \\
													&= (\m{\sigmaF}_{\text{embd}}, \m{\mu}_{\text{embd}}, \m{\permap}_{\text{embd}}) \raOp (\m{\sigmaF_3}, \m{\mu_3}, \m{\permap_3})
												\end{align*}
\item $\m{\mu}_{\text{orig}} = \bind(\mu, \m{\kappa}_{\text{orig}} (I) (a))$ 
											implies $\m{\mu}_{\text{embd}} = \bind(\mu, \m{\kappa}_{\text{embd}} (I) (a))$.
\item For any $r_1, r_2$,
										\[
											r_1 \raOp (\m{\sigmaF_3}, \m{\mu_3}, \m{\permap_3}) = r_2 \land
											r_1 \extOf (\m{\sigmaF}_{\text{embd}}, \m{\kappa}_{\text{embd}} (I)(a), \m{\permap}_{\text{embd}})
										\]
										implies that $r_2 = r_1 \extOf (\m{\sigmaF}_{\text{orig}}, \m{\kappa}_{\text{orig}} (I)(a), \m{\permap}_{\text{orig}})$.
										By the assumption that the orig assertion holds,
										we have $K(a) (\m{\sigmaF}_{\text{orig}}, \m{\kappa}_{\text{orig}} (I)(a), \m{\permap}_{\text{orig}})$,
                which implies $K(a)(r_2)$.
										\end{itemize}

										Therefore, $r$ also satisfy the embedding.\qedhere
								\end{itemize}
									\end{proof}

\begin{lemma}[Alternative Characterization of Relational Lifting]\label{lemma:cpl-is-cpl}
Given a relation $R$ over $\Store_1 \times \Store_2$,
	then $\cpl{R}(\m{\sigmaF}, \m{\mu}, \m{\permap})$ holds
iff there exists $\wh{\mu}$ over $\m{\sigmaF}(1) \otimes \m{\sigmaF}(2)$ such that
	$\wh{\mu}(R) = 1$,
  $\wh{\mu} \circ \pi_1^{-1} = \m{\mu}(1)$,
  and $\wh{\mu} \circ \pi_2^{-1} = \m{\mu}(2)$.
\end{lemma}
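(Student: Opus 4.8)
The plan is to unfold $\cpl{R}$ (\cref{def:rel-lift}) and the \supercond\ modality (\cref{def:c-mod}) and to read off the coupling $\wh{\mu}$ from the witnessing joint distribution $\prob$, crucially using that all distributions are discrete. Throughout I identify $\Val^{X}$ with $\Store\times\Store$ (here $X = I\times\Var$ with $I=\set{1,2}$) and write $s_1(\m{v}),s_2(\m{v})$ for the two stores determined by $\m{v}$. For the forward direction, suppose $\cpl{R}$ holds on $(\m{\sigmaF},\m{\mu},\m{\permap})$. Unfolding, there is $\prob\of\Dist(\Full{\Val^X})$ with $\prob(R)=1$ together with a sub-resource $(\m{\sigmaF}',\m{\mu}',\m{\permap}')\raLeq(\m{\sigmaF},\m{\mu},\m{\permap})$ and kernels $\m{\krnl}$ such that $\m{\mu}'(i)=\bind(\prob,\m{\krnl}(i))$ and, for every $\m{v}\in\psupp(\prob)$, the conjunction $\sure{\ip{x}{i}=\m{v}(\ip{x}{i})}_{\ip{x}{i}\in X}$ holds on $(\m{\sigmaF}',\m{\krnl}(I)(\m{v}),\m{\permap}')$. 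The first observation is that at each index $i$ this conjunction forces $\m{\krnl}(i)(\m{v})$ to assign probability $1$ to $\{s_i(\m{v})\}$ in the almost-measurable sense of \cref{def:almost-meas}; hence $\m{\mu}'(i)$ is the $i$-th marginal of $\prob$ restricted to $\m{\sigmaF}'(i)$, and discreteness gives $\psupp(\prob)\subseteq R$, so every point carrying mass lies in $R$.

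To move from the coarse witnessing algebra $\m{\sigmaF}'$ to the ambient $\m{\sigmaF}$, I invoke \cref{lemma:bind-extend}: from $\m{\mu}'(i)=\bind(\prob,\m{\krnl}(i))$ and $(\m{\sigmaF}'(i),\m{\mu}'(i))\extTo(\m{\sigmaF}(i),\m{\mu}(i))$ it produces kernels $\m{\krnl}^{\ast}(i)\from\Val^X\to\Dist(\m{\sigmaF}(i))$ with $\m{\mu}(i)=\bind(\prob,\m{\krnl}^{\ast}(i))$ and each $\m{\krnl}^{\ast}(i)(\m{v})$ extending $\m{\krnl}(i)(\m{v})$, so the pinning of the store to $s_i(\m{v})$ is preserved. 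I then set $\wh{\mu}\is\bind(\prob,\fun\m{v}.\,\m{\krnl}^{\ast}(1)(\m{v})\pprod\m{\krnl}^{\ast}(2)(\m{v}))$, a measure on $\m{\sigmaF}(1)\pprod\m{\sigmaF}(2)$ (extended from the product semi-algebra via \cref{lemma:product-algebra}). Its marginals are $\wh{\mu}\circ\inv{\pi_i}=\bind(\prob,\m{\krnl}^{\ast}(i))=\m{\mu}(i)$, and since each summand is concentrated at $(s_1(\m{v}),s_2(\m{v}))=\m{v}\in R$ we obtain $\wh{\mu}(R)=1$.

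For the backward direction, suppose $\wh{\mu}$ on $\m{\sigmaF}(1)\pprod\m{\sigmaF}(2)$ has the three stated properties. By \cref{thm:countable-partition-generated} write $\m{\sigmaF}(i)=\closure{S_i}$ for a countable partition $S_i$ of $\Store$; by \cref{lemma:product-algebra} the cells $\{A\times B\mid A\in S_1,B\in S_2\}$ generate $\m{\sigmaF}(1)\pprod\m{\sigmaF}(2)$. Since $\wh{\mu}(R)=1$, all mass sits on cells contained in $R$; choosing a representative point $p_{A,B}\in A\times B$ for each positive-mass cell and putting $\prob(p_{A,B})=\wh{\mu}(A\times B)$ defines $\prob\of\Dist(\Full{\Val^X})$ supported in $R$, so $\prob(R)=1$ and $\restr{\prob}{\m{\sigmaF}(1)\pprod\m{\sigmaF}(2)}=\wh{\mu}$. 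Taking $\m{\krnl}(i)(\m{v})$ to be the restriction to $\m{\sigmaF}(i)$ of $\dirac{s_i(\m{v})}$, the marginal hypotheses give $\bind(\prob,\m{\krnl}(i))=\m{\mu}(i)$ and the pinning makes each $\sure{}$-conjunction hold, witnessing $\cpl{R}$ on $(\m{\sigmaF},\m{\mu},\m{\permap})$ via \cref{def:rel-lift,def:c-mod}.

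The main obstacle is the forward direction's passage from the witnessing sub-resource to the ambient resource: because $\cpl{R}$ is only upward closed, the joint distribution $\prob$ knows the marginals $\m{\mu}(i)$ only on the coarse algebra $\m{\sigmaF}'(i)$, yet the lemma demands a coupling with the ambient marginals $\m{\mu}(i)$ on all of $\m{\sigmaF}(i)$. Discharging this is exactly what \cref{lemma:bind-extend} buys us, but it requires carefully checking that the extended kernels remain store-pinning (so that $R$ still carries full mass) and that almost-measurable events are tracked consistently across the extension order — this measure-theoretic bookkeeping, rather than any single hard idea, is where the real work lies.
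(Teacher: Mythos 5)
Your proof follows the same overall strategy as the paper's: unfold $\cpl{R}$ and the \supercond{} modality, build $\wh{\mu}$ as $\bind(\prob,\fun\m{v}.\m{\krnl}(1)(\m{v})\pprod\m{\krnl}(2)(\m{v}))$ in the forward direction, and recover $\prob$ together with Dirac-style kernels from $\wh{\mu}$ in the backward direction. You deviate at two points, and both deviations tighten the paper's argument rather than change its route. First, in the forward direction the paper computes the marginals of $\wh{\mu}$ directly from the kernels supplied by the modality, silently identifying the witnessing sub-resource's measure with the ambient $\m{\mu}(i)$; your explicit appeal to \cref{lemma:bind-extend} to lift the kernels along the extension order is exactly the step needed to justify that the marginals agree with $\m{\mu}(i)$ on all of $\m{\sigmaF}(i)$. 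Second, in the backward direction the paper simply sets $\prob=\wh{\mu}$, even though \cref{def:c-mod} requires the conditioning distribution to live on a full $\sigma$-algebra while $\wh{\mu}$ is only given on $\m{\sigmaF}(1)\otimes\m{\sigmaF}(2)$; your discretization onto cell representatives (via \cref{thm:countable-partition-generated} and \cref{lemma:product-algebra}), with representatives chosen inside $R$ so that $\prob(R)=1$ is preserved, supplies the missing extension. The one gloss you share with the paper is the closing claim that the restricted Dirac kernels validate the conjuncts $\sure{\ip{x}{i}=\m{v}(\ip{x}{i})}$: this requires the events $\set{\p{x}=\m{v}(\ip{x}{i})}$ to be almost measurable in $(\m{\sigmaF}(i),\m{\krnl}(i)(\m{v}))$, which is not automatic when $\m{\sigmaF}(i)$ is coarse; since the paper asserts the same thing without further argument, this is a shared imprecision rather than a gap particular to your write-up.
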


\begin{proof}
	We first unfold the definition of the coupling modality:
	\begin{align*}
		&  \cpl{R} (\m{\sigmaF}, \m{\mu}, \m{\permap})\\
{} \iff {} &
		\left( \exists \mu. \mu(R) = 1 \land
		\CC{\prob} \m{v}.
		\LAnd_{\ip{x}{i} \in X}
    \sure{\ip{x}{i} = \m{v}(\ip{x}{i})} \right)
    (\m{\sigmaF}, \m{\mu}, \m{\permap})
    \\
{} \iff {} &
		\exists \mu. \mu(R) = 1 \land
		\left( \CC{\prob} \m{v}.
		\LAnd_{\ip{x}{i} \in X}
    \sure{\ip{x}{i} = \m{v}(\ip{x}{i})} \right)
    (\m{\sigmaF}, \m{\mu}, \m{\permap})
    \\
{} \iff {}&
		\exists \prob. \prob(R) = 1 \land
    \exists \m{\sigmaF'}, \m{\mu'}, \m{\permap'}, \m{\krnl} \st
      (\m{\sigmaF}, \m{\mu}, \m{\permap}) \raGeq (\m{\sigmaF'}, \m{\mu'}, \m{\permap'}) \land
      \forall i \in I.\bind(\prob, \m{\krnl}(i) )= \m{\prob}(i) {}\land{} \\
		&\qquad \qquad 	\forall \m{v}\in \psupp(\prob) \st
    \left( \LAnd_{\ip{x}{i} \in X} \sure{\ip{x}{i} = \m{v}(\ip{x}{i})} \right)
    (\m{\sigmaF'}, \m{\mu'}, \m{\permap'})\\
{} \iff {} &
		\exists \prob. \prob(R) = 1 \land
		\exists \m{\sigmaF'}, \m{\mu'}, \m{\permap'}, \m{\krnl} \st
       (\m{\sigmaF}, \m{\mu}, \m{\permap}) \raGeq (\m{\sigmaF'}, \m{\mu'}, \m{\permap'}) \land
			\forall i \in I.\bind(\prob, \m{\krnl}(i) )= \prob_i {}\land{}
		  \\
			&\qquad \qquad \qquad \forall \m{v}\in \psupp(\prob) \st
      \LAnd_{i \in \{1,2\}}
		  \LAnd_{\ip{x}{i} \in X}
        \pure{\almostM{(\ip{x}{i} = \m{v}(\ip{x}{i}))}{(\sigmaF'(i), \m{\krnl}(i)(\m{v}))} }
        \land \\
      & \qquad \qquad \qquad \qquad \qquad \qquad \qquad  \pure{\m{\krnl}(i)(\m{v}) \circ \inv{(\ip{x}{i} = \m{v}(\ip{x}{i}))} = \delta_{\True} }
	\end{align*}

	Now, to show that
	$\cpl{R} (\m{\sigmaF}, \m{\mu}, \m{\permap})$
  implies there exists $\wh{\mu}$ over $\m{\sigmaF}(1) \otimes \m{\sigmaF}(2)$ such that
	$\wh{\mu}(R) = 1$,
	$\wh{\mu} \circ \pi_1^{-1} = \mu_1$,
	and $\wh{\mu} \circ \pi_2^{-1} = \mu_2$, we define $\wh{\mu}$ over
	$\sigmaF_1 \otimes \sigmaF_2$  as
	$\bind(\mu, \fun \m{v}. \m{\krnl}(1)(\m{v}) \pprod \m{\krnl}(2)(\m{v}))$.
	Then,
\begin{align*}
		\wh{\mu}(R) &= \bind(\mu, \m{v}. \m{\krnl}(1)(\m{v}) \pprod \m{\krnl}(2)(\m{v}))(R) \\
								&= \sum_{\mathclap{\m{v} \in \psupp(\prob)}} \mu(\m{v}) \cdot \left( \m{\krnl}(1)(\m{v}) \pprod \m{\krnl}(2)(\m{v}) \right)(R)
	\end{align*}
Since $\mu(R) = 1$, then for all $\m{v} \in \psupp_{\prob}$, and $\m{v} \in R$, by additivity:
	\begin{align}
		\left( \m{\krnl}(1)(\m{v}) \pprod \m{\krnl}(2)(\m{v}) \right)(R) &\geq \left( \m{\krnl}(1)(\m{v}) \pprod \m{\krnl}(2)(\m{v}) \right)(\m{v}) \notag \\
   &=  \m{\krnl}(1)(\m{v}) (\pi_1 (\m{v})) \cdot \m{\krnl}(2)(\m{v}) (\pi_2 (\m{v})) \label{eq:cpladequacy:explain}\\
	 &= \m{\krnl}(1)(\m{v})(\LAnd_{\ip{x}{1} \in X} \ip{x}{1} = \m{v}(\ip{x}{1})) \cdot \m{\krnl}(2)(\m{v})(\LAnd_{\ip{x}{2} \in X} \ip{x}{2} = \m{v}(\ip{x}{2})) \notag \\
	 &= 1		  \notag
	\end{align}
where~\cref{eq:cpladequacy:explain} is because $\m{v}$ as a singleton can also be thought of as a Cartesian product.
Thus,
	\begin{align*}
		\wh{\mu}(R) &= \sum_{\mathclap{\m{v} \in \psupp(\prob)}} \mu(\m{v}) \cdot 1 \cdot 1
								= \sum_{\mathclap{\m{v} \in \psupp(\prob)}} \mu(\m{v})
								= 1
	\end{align*}
Meanwhile, for $E_i \in \sigmaF_i$, and
  let $j = 2$ if $i = 1$ and $j = 1$ if $i = 2$,
\begin{align}
		(\wh{\mu} \circ \pi_i^{-1}) (E_i)
&= \wh{\mu} (E_i \times \Store_j) \\
		&= \bind(\mu, \fun \m{v} . \m{\krnl}(1)(\m{v}) \pprod \m{\krnl}(2)(\m{v}))(  E_i \times \Store_j) \notag \\
		&= \sum_{\mathclap{\m{v} \in \psupp{\mu}}} \left( \m{\krnl}(1)(\m{v}) \pprod \m{\krnl}(2)(\m{v}) \right) (E_i \times \Store_j) \notag \\
		&= \sum_{\mathclap{\m{v} \in \psupp{\mu}}} \m{\krnl}(i)(\m{v})(E_i) \cdot \m{\krnl}(j)(\m{v}) (\Store_j) \notag \\
		&= \sum_{\mathclap{\m{v} \in \psupp{\mu}}} \m{\krnl}(i)(\m{v})(E_i) \cdot 1 \notag \\
		&= \bind(\mu, \fun \m{v} . \m{\krnl}(i)) (E_i) \notag \\
		&= \mu_i(E_i) \notag
	\end{align}
Thus, we complete the forward direction.

	For the backwards direction,
	we pick $\mu = \wh{\mu}$,
  $\m{\mu'}(i) = \pi_{i}\wh{\mu}$ ($\m{\sigmaF'}(i)$ accordingly),
  $\m{\permap'}=\m{\permap'}$,
	and $\m{\krnl}(i) = \fun \m{v}. \dirac{\pi_{\ip{x}{i} \in X} \m{v}}$.
Then,
\begin{align*}
		\bind(\mu, \m{\krnl}(i))
		&= \bind(\wh{\mu}, \fun \m{v}. \dirac{\pi_{\ip{x}{i} \in X} \m{v}})\\
		&= \wh{\mu} \circ \pi_i^{-1} = \m{\mu}(i)
	\end{align*}
Also, by definition,
	$\m{k}(i)(\m{v}) = \dirac{\pi_{\ip{x}{i} \in X} \m{v}}$.
Thus,
  $\almostM{\{\ip{x}{i} = \m{v}(\ip{x}{i})\}}{(\m{\sigmaF}(i), \m{\krnl}(i)(\m{v}))}$ and
  $\m{\krnl}(i)(\m{v}) \circ \inv{\ip{x}{i} = \m{v}(\ip{x}{i})} = \dirac{\True}$.
\end{proof}

 \section{Soundness}
\label{sec:appendix:soundness}

\subsection{Soundness of Primitive Rules}
\label{sec:appendix:primitive-rules}

\subsubsection{Soundness of Distribution Ownership Rules}
\begin{lemma}
\label{proof:and-to-star}
  \Cref{rule:and-to-star} is sound.
\end{lemma}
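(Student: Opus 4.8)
The plan is to unfold the semantic definitions of $\proves$, $\land$, and $\sepand$ over the model $\Model_I = \Hyp{\PSpPmRA}$, and to exhibit an explicit splitting of the given resource into a piece living on $\idx(P)$ and a piece living on its complement. Concretely, fix a valid $a \in \Model_I$ with $(P \land Q)(a)$, so that both $P(a)$ and $Q(a)$ hold. I would set $b_1 \is a \setminus (I \setminus \idx(P))$, the resource obtained from $a$ by replacing every component outside $\idx(P)$ with the RA unit, and $b_2 \is a \setminus \idx(P)$, which keeps exactly the components of $a$ outside $\idx(P)$. Since composition in $\Hyp{\PSpPmRA}$ is pointwise and $\raUnit \raOp x = x$ (the independent product of any space with $\Triv{\Store}$ returns the space, and permissions simply add $0$), a componentwise check gives $b_1 \raOp b_2 = a$, hence $(b_1 \raOp b_2) \raLeq a$. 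This discharges the structural requirement of $\sepand$.

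It then remains to establish $P(b_1)$ and $Q(b_2)$. First I would observe that $b_1 = a \setminus (I \setminus \idx(P))$ is, by the very definition of $\idx(P)$ as the smallest set with $\irrel_{I \setminus \idx(P)}(P)$, an instance of the hypothesis of $\irrel_{I \setminus \idx(P)}(P)$ with witness $a' \is a$ (valid, $P(a)$); therefore $\irrel_{I \setminus \idx(P)}(P)$ yields $P(b_1)$ directly. For $Q(b_2)$, I would exploit the rule's side condition $\idx(P) \inters \idx(Q) = \emptyset$, which gives $\idx(P) \subseteq I \setminus \idx(Q)$, together with the fact that $b_2 = a \setminus \idx(P)$ already has the right shape for an irrelevance argument.

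The one auxiliary fact that needs a short argument is that irrelevance is monotone under subsets: $\irrel_J(Q)$ and $J' \subseteq J$ imply $\irrel_{J'}(Q)$. I expect this to be the main (though still easy) obstacle, and it is exactly where upward-closure of assertions is used. Given $a = a' \setminus J'$ valid with $Q(a')$, I would first remove the extra indices, forming $a'' = a' \setminus J = a \setminus (J \setminus J')$; then $\irrel_J(Q)$ (with witness $a'$) gives $Q(a'')$, and since $a'' \raLeq a$ — it is $a$ with some further components lowered to $\raUnit$, the $\raLeq$-least valid element, and $a$ is valid because $a \raLeq a'$ — the upward-closure of $Q$ promotes $Q(a'')$ to $Q(a)$. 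Instantiating this with $J = I \setminus \idx(Q)$ (irrelevant to $Q$ by definition of $\idx(Q)$) and $J' = \idx(P) \subseteq I \setminus \idx(Q)$ yields $\irrel_{\idx(P)}(Q)$, hence $Q(b_2)$. With $P(b_1)$, $Q(b_2)$, and $b_1 \raOp b_2 \raLeq a$ in hand, $(P \sepand Q)(a)$ holds by the definition of separating conjunction, completing the entailment.
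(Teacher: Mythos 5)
Your proof is correct and follows essentially the same route as the paper: both decompose $a$ pointwise into a component supported on $\idx(P)$ and one supported on its complement (with trivial spaces and zero permissions elsewhere), recombine them via the unit law of $\raOp$, and invoke irrelevance to transfer $P$ and $Q$ to the respective halves. The only difference is that you explicitly isolate and prove the monotonicity of $\irrel_J$ under shrinking $J$ (via upward closure), a step the paper's proof leaves implicit when it concludes $Q$ on the second half directly from $\idx(P) \inters \idx(Q) = \emptyset$.
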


\begin{proof}
  Assume a valid $a\in\Model_I$ is such that
  it satisfies $ P \land Q $.
  This means that for some $(\m{\salg},\m{\prob},\m{\permap}) \raLeq a$, both
    $P(\m{\salg},\m{\prob},\m{\permap})$ and
    $Q(\m{\salg},\m{\prob},\m{\permap})$
  hold.
  We want to prove $(P \ast Q)(a)$ holds.
  To this end, let
  $ (\m{\salg}_1, \m{\prob}_1, \m{\permap}_1) $ and
  $ (\m{\salg}_2, \m{\prob}_2, \m{\permap}_2) $
  be such that
  for every $i \in \idx(P)$:
  \begin{align*}
    \m{\salg}_1(i) &= \m{\salg}(i)
    &
    \m{\salg}_2(i) &= \set{\emptyset, \Outcomes}
    \\
    \m{\prob}_1(i) &= \m{\prob}(i)
    &
    \m{\prob}_2(i) &= \fun \event. \ITE{\event=\Outcomes}{1}{0}
    \\
    \m{\permap}_1(i) &= \m{\permap}(i)
    &
    \m{\permap}_2(i) &= \fun \wtv. 0
  \intertext{
    and for all $i\in I \setminus \idx(P)$:
  }
    \m{\salg}_2(i) &= \m{\salg}(i)
    &
    \m{\salg}_1(i) &= \set{\emptyset, \Outcomes}
    \\
    \m{\prob}_2(i) &= \m{\prob}(i)
    &
    \m{\prob}_1(i) &= \fun \event. \ITE{\event=\Outcomes}{1}{0}
    \\
    \m{\permap}_2(i) &= \m{\permap}(i)
    &
    \m{\permap}_1(i) &= \fun \wtv. 0
  \end{align*}
Clearly, by construction,
  $
    (\m{\salg}_1, \m{\prob}_1, \m{\permap}_1)
    \iprod
    (\m{\salg}_2, \m{\prob}_2, \m{\permap}_2)
    =
    (\m{\salg},\m{\prob},\m{\permap}).
  $
  and
  $P(\m{\salg}_1, \m{\prob}_1, \m{\permap}_1)$.
  Since $\idx(P) \inters \idx(Q) = \emptyset$,
  we also have
  $Q(\m{\salg}_2, \m{\prob}_2, \m{\permap}_2)$.
  Therefore,
  $(P \ast Q)(\m{\salg}, \m{\prob}, \m{\permap})$,
  and so $(P \ast Q)(a)$ by upward closure.
\end{proof} \begin{lemma}
\label{proof:dist-inj}
  \Cref{rule:dist-inj} is sound.
\end{lemma}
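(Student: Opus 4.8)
The plan is to unfold the definition of entailment together with the two $\distAs{\aexpr\at{i}}{\cdot}$ conjuncts, and then exploit the fact that both witnessing probability spaces are fragments of the \emph{same} valid resource $a$. Almost-measurability pins down the distribution of $\aexpr$ uniquely in that common extension, which forces $\prob = \prob'$. This is exactly the informal observation made just after \cref{def:almost-meas}: an almost-measurable expression determines its distribution in every extension, so two distributions claimed for $\aexpr$ over a common refinement must coincide.

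Concretely, fix a valid $a \in \Model_I$ satisfying $\distAs{\aexpr\at{i}}{\prob} \land \distAs{\aexpr\at{i}}{\prob'}$; since $a$ is valid it has the form $a = (\m{\salg}_a, \m{\prob}_a, \m{\permap}_a)$ with each $(\m{\salg}_a(j), \m{\prob}_a(j))$ a genuine probability space. Unfolding the first conjunct (and using that $\pure{\cdot}$ ignores its resource, so the separating conjunction just yields a sub-resource witnessing $\Own{\m{\salg}_1,\m{\prob}_1}$) gives $\m{\salg}_1, \m{\prob}_1, \m{\permap}_1$ with $(\m{\salg}_1, \m{\prob}_1, \m{\permap}_1) \raLeq a$, $\almostM{\aexpr}{(\m{\salg}_1(i), \m{\prob}_1(i))}$, and $\prob = \m{\prob}_1(i) \circ \inv{\aexpr}$; the second conjunct yields analogous $\m{\salg}_2, \m{\prob}_2, \m{\permap}_2 \raLeq a$ with $\prob' = \m{\prob}_2(i) \circ \inv{\aexpr}$. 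Since $\raLeq$ on $\Model_I$ is the pointwise extension order, for $k \in \{1,2\}$ we have $(\m{\salg}_k(i), \m{\prob}_k(i)) \extTo (\m{\salg}_a(i), \m{\prob}_a(i))$, i.e.\ $\m{\salg}_k(i) \subs \m{\salg}_a(i)$ and $\m{\prob}_k(i) = \restr{\m{\prob}_a(i)}{\m{\salg}_k(i)}$.

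The key step is transporting almost-measurability to the common extension with the same assigned probability. Fix a value $v$ and set $\event = \inv{\aexpr}(v)$. From $\almostM{\aexpr}{(\m{\salg}_1(i), \m{\prob}_1(i))}$ we get $\event_1, \event_2 \in \m{\salg}_1(i)$ with $\event_1 \subs \event \subs \event_2$ and $\m{\prob}_1(i)(\event_1) = \m{\prob}_1(i)(\event_2)$. As $\event_1, \event_2 \in \m{\salg}_1(i) \subs \m{\salg}_a(i)$ and $\m{\prob}_a(i)$ restricts to $\m{\prob}_1(i)$ on $\m{\salg}_1(i)$, the same witnesses give $\m{\prob}_a(i)(\event_1) = \m{\prob}_a(i)(\event_2)$, so $\event$ is almost measurable in $(\m{\salg}_a(i), \m{\prob}_a(i))$ and, by the unambiguous-assignment clause of \cref{def:almost-meas}, its assigned probability equals $\m{\prob}_1(i)(\event_1) = \prob(v)$. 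Rerunning the identical argument on the second conjunct gives the assigned probability of the \emph{same} $\event$ in $(\m{\salg}_a(i), \m{\prob}_a(i))$ as $\prob'(v)$. Since the assigned probability of a fixed set in a fixed space is unambiguous, $\prob(v) = \prob'(v)$; as $v$ was arbitrary, $\prob = \prob'$, so $\pure{\prob = \prob'}$ holds on $a$. I do not expect a genuine obstacle: the only delicate point is checking that the almost-measurable witnesses of a sub-resource remain witnesses in the extension carrying the identical measure, which is immediate from $\m{\prob}_k(i) = \restr{\m{\prob}_a(i)}{\m{\salg}_k(i)}$.
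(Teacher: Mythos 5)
Your proof is correct and follows essentially the same route as the paper's: both conjuncts are unfolded to obtain sub-resources of the common valid resource $a$, and the distribution of $\aexpr$ is read off from $a$ itself, forcing $\prob=\prob'$. The paper states the key equality $\prob = \m{\prob}_0\circ\inv{\aexpr} = \prob'$ in one line; you supply the missing justification (that the almost-measurability witnesses of each sub-resource persist in the common extension with the same assigned probability), which is exactly the right gap to fill.
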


\begin{proof}
  Assume a valid $a\in\Model_I$ is such that both
  $ \distAs{E\at{i}}{\prob }(a) $ and
  $ \distAs{E\at{i}}{\prob'}(a) $
  hold.
  Let $ a = (\m{\salg}, \m{\prob}_0,\m{\permap}) $,
  then we know
  $ \prob = \m{\prob}_0 \circ \inv{E\at{i}} = \prob'$,
  which proves the claim.
\end{proof}
 \begin{lemma}
\label{proof:sure-merge}
  \Cref{rule:sure-merge} is sound.
\end{lemma}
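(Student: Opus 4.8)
The statement is the equivalence $\sure{\aexpr_1\at{i}} * \sure{\aexpr_2\at{i}} \lequiv \sure{(\aexpr_1 \land \aexpr_2)\at{i}}$, so the plan is to prove the two entailments separately. Throughout I abbreviate $A_k = \set{\store \in \Store \mid \aexpr_k(\store) \in \true}$ for $k \in \set{1,2}$; by the boolean semantics of $\land$ one has $\set{\store \mid (\aexpr_1 \land \aexpr_2)(\store) \in \true} = A_1 \cap A_2$. The first preparatory observation is a convenient characterization of $\sure{\cdot}$: unfolding its definition through $\distAs{\cdot}{\dirac{\True}}$ and \cref{def:almost-meas}, the assertion $\sure{\aexpr\at{i}}$ holds on a resource whose index-$i$ probability space is $(\salg,\prob)$ exactly when there is an event $F \in \salg$ with $F \subs A$ and $\prob(F)=1$ (where $A$ is the ``true'' event of $\aexpr$): such an $F$ witnesses almost-measurability of the preimage $A$ (bounds $F \subs A \subs \Outcomes$) and of its complement (bounds $\emptyset$ and $\Outcomes \setminus F$), and pins the distribution to $\dirac{\True}$.

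For the forward direction, suppose the premise holds on a valid $a$, so there are $b_1,b_2$ with $(b_1 \raOp b_2) \raLeq a$, $\sure{\aexpr_1\at{i}}(b_1)$ and $\sure{\aexpr_2\at{i}}(b_2)$. Writing the index-$i$ spaces of $b_1,b_2$ as $(\salg_1,\prob_1)$ and $(\salg_2,\prob_2)$, the characterization gives $F_k \in \salg_k$ with $F_k \subs A_k$ and $\prob_k(F_k)=1$. Since $b_1 \raOp b_2$ is valid, the index-$i$ independent product $(\salg_1 \punion \salg_2, \prob)$ exists, and by its defining multiplicativity (\cref{def:indep-comb}) $\prob(F_1 \cap F_2) = \prob_1(F_1)\,\prob_2(F_2) = 1$. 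As $F_1 \cap F_2 \in \salg_1 \punion \salg_2$ and $F_1 \cap F_2 \subs A_1 \cap A_2$, this event witnesses $\sure{(\aexpr_1 \land \aexpr_2)\at{i}}$ on the product, hence on $a$ by upward closure.

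For the backward direction, suppose $\sure{(\aexpr_1\land\aexpr_2)\at{i}}$ holds on a valid $a$; extract a minimal witness $(\m{\salg},\m{\prob},\m{\permap}) \raLeq a$ (trivial away from $i$) and an event $E \in \m{\salg}(i)$ with $E \subs A_1 \cap A_2$ and $\m{\prob}(i)(E)=1$. Let $\salg_E = \set{\emptyset, E, \Outcomes \setminus E, \Outcomes}$, and define $b_1,b_2$ to agree with the unit resource away from $i$ and to carry, at index $i$, the coarse space $(\salg_E, \restr{\m{\prob}(i)}{\salg_E})$ with permissions obtained by halving $\m{\permap}(i)$ pointwise. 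By compatibility of $(\m{\salg},\m{\prob},\m{\permap})$, the space $\m{\salg}(i)$ factors as a product over $\Var \setminus S$ with a trivial factor on $S = \set{x \mid \m{\permap}(i)(x)=0}$, so the coarser $\salg_E$ is also trivial on $S$; since halving preserves the zero-set, each $b_k$ is compatible and valid, and both support $\sure{\aexpr_k\at{i}}$ via the witness $E \subs A_k$. The index-$i$ independent product of the two identical point masses concentrated on $E$ exists (disjoint events have product $0$, \cref{lemma:indep-prod-exists}) and equals $(\salg_E, \restr{\m{\prob}(i)}{\salg_E})$, with recombined permissions $\m{\permap}(i)$; hence $b_1 \raOp b_2$ has index-$i$ space $\extTo (\m{\salg}(i),\m{\prob}(i))$, giving $(b_1 \raOp b_2) \raLeq a$ and establishing the separating conjunction.

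The measure-theoretic core — multiplicativity of the independent product and the preservation of probability-one events under restriction and extension — is immediate from the definitions and \cref{lemma:indep-prod-exists,def:indep-comb}. I expect the only real care to be in the backward direction's permission bookkeeping: the fractional permission on variables shared between $\aexpr_1$ and $\aexpr_2$ must be split so that both halves stay compatible with the coarse space $\salg_E$, which is precisely what halving achieves, using that compatibility of the original resource already forces nonzero permission on every variable the event $E$ distinguishes.
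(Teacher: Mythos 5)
Your proof is correct. The forward direction is essentially the paper's argument: extract probability-one witnesses at index~$i$ from each conjunct and use multiplicativity of the independent product to get a probability-one witness for the conjunction; your only refinement is to work with genuinely measurable events $F_k \subs A_k$ rather than the (only almost-measurable) preimages themselves, which is the more careful reading of \cref{def:almost-meas}. The backward direction, however, takes a different decomposition. The paper splits the resource into \emph{two distinct} coarse spaces, $\closure{\inv{\aexpr_1\at{i}}(\True)}$ and $\closure{\inv{\aexpr_2\at{i}}(\True)}$, each paired with the all-zero permission map; you instead give both halves the \emph{same} four-element $\sigma$-algebra generated by the single measurable witness $E \subs A_1 \inters A_2$ with $\prob(E)=1$, and split the permissions by halving. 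Your variant buys two things: (i) it never needs the sets $\inv{\aexpr_k\at{i}}(\True)$ to be measurable in the ambient algebra (they need not be --- only almost-measurable --- so the paper's extension-order claim $\m{\salg}_1 \punion \m{\salg}_2 \subs \m{\salg}(i)$ is delicate there, whereas your $\salg_E$ is a genuine sub-algebra); and (ii) the halved permissions keep each half compatible in the sense of the $\PSpPmRA$ construction, whereas an all-zero permission map would force the paired $\sigma$-algebra to be trivial. The existence of the self-product of a point-mass space via \cref{lemma:indep-prod-exists} is checked correctly. No gaps.
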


\begin{proof}
  The proof for the forward direction is very similar to
  the one for~\cref{rule:sure-eq-inj}.
  For $a \in \Model_I$,
  if $(\sure{E_1\at{i}} \ast \sure{E_2 \at{i}})(a)$.
  Then there exists
  $a_1, a_2$ such that $a_1 \raOp a_2 \raLeq a$ and
  $\sure{E_1 \at{i}}(a_1)$,
  $\sure{E_2 \at{i}}(a_2)$.
  Say $a = (\m{\sigmaF}, \m{\mu}, \m{\permap})$,
  $a_1 = (\m{\sigmaF}_1, \m{\mu}_1, \m{\permap}_1)$
  and $a_2 = (\m{\sigmaF}_2, \m{\mu}_2, \m{\permap}_2)$.
  Then $\sure{E_1\at{i}}(a_1)$ implies that
  \begin{align*}
    \m{\mu}_1 (\inv{E_1\at{i}}(\True)) = 1
  \end{align*}
  And similarly,
  \begin{align*}
    \m{\mu}_2 (\inv{E_2\at{i} }(\True)) = 1
  \end{align*}
  Thus,
  \begin{align*}
    \m{\mu} (\inv{E_1\at{i}}(\True) \cap \inv{E_2\at{i}}(\True) )
    &= \m{\mu}_1 (\inv{E_1\at{i}}(\True))  \cdot \m{\mu}_2 (\inv{E_2\at{i}}(\True))
     = 1.
  \end{align*}
  Hence,
  \begin{align*}
    \m{\mu} (\inv{E_1\at{i} \land E_2\at{i}}(\True) )
    &= \m{\mu} (\inv{E_1\at{i}}(\True) \cap \inv{E_2\at{i}}(\True)) = 1
  \end{align*}
  Thus, $\sure{E_1\at{i} \land E_2\at{j}} (a)$.

  Now we prove the backwards direction:
  Say $a = (\m{\sigmaF}, \m{\mu}, \m{\permap})$.
  if  $\sure{E_1\at{i} \land E_2\at{j}} (a)$,
  then $\m{\mu} (\inv{E_1\at{i} \land E_2\at{i}}(\True)) = 1$,
  and then
  \begin{align*}
    \m{\mu} (\inv{E_1\at{i}}(\True))  &\geq \m{\mu} (\inv{E_1\at{i} \land E_2\at{i}}(\True)) = 1 \\
    \m{\mu} (\inv{E_2\at{i}}(\True))  &\geq \m{\mu} (\inv{E_1\at{i} \land E_2\at{i}}(\True)) = 1
  \end{align*}

  Let $\m{\sigmaF_1} = \closure{\inv{E_1\at{i}}(\True)}$
  and $\m{\sigmaF_2} = \closure{\inv{E_2\at{i}}(\True)}$.
  Then,
  \begin{gather*}
    \sure{E_1\at{i}} (\m{\sigmaF_1}, \constrain{\m{\mu}}{\m{\sigmaF_1}}, \fun \wtv. 0) \\
    \sure{E_2\at{i}} (\m{\sigmaF_2}, \constrain{\m{\mu}}{\m{\sigmaF_2}}, \fun \wtv. 0) \\
    (\m{\sigmaF_1}, \constrain{\m{\mu}}{\m{\sigmaF_1}}, \fun \wtv. 0) \ast  (\m{\sigmaF_2}, \constrain{\m{\mu}}{\m{\sigmaF_2}}, \fun \wtv. 0) \raLeq a
  \end{gather*}
  Thus, $\sure{E_1\at{i}} \ast \sure{E_2\at{i}}$ holds on $a$.
\end{proof} \begin{lemma}
\label{proof:sure-and-star}
  \Cref{rule:sure-and-star} is sound.
\end{lemma}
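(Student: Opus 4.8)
The plan is to convert the plain conjunction into a separating one by showing that any valid resource $a = (\m{\salg}, \m{\prob}, \m{\permap})$ satisfying $\sure{E\at{i}} \land P$ can be decomposed as $b_1 \raOp b_2 \raLeq a$ with $\sure{E\at{i}}(b_1)$ and $P(b_2)$. The guiding observation is that the event $G \is \inv{E}(\true)$ on which $E$ evaluates to true has probability~$1$ at index~$i$, and a probability-one event is probabilistically independent of everything; this is what makes it possible to peel off the almost-sure information about $E$ as an independent factor, in the same spirit as the restriction-plus-independent-product construction in the proof of \cref{rule:sure-merge}. The side condition $\psinv(P, \pvar(E\at{i}))$ will be used precisely to absorb the permission that this factor must claim on the variables of $E$.

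First I would use $\sure{E\at{i}}(a)$ and the definition of almost-measurability to extract an event $\event_1 \in \m{\salg}(i)$ with $\event_1 \subs G$ and $\m{\prob}(i)(\event_1) = 1$. I would then define $b_1$ to own, at index~$i$, the coarse space $(\sigcl{\set{\event_1}}, \restr{\m{\prob}(i)}{\sigcl{\set{\event_1}}})$ together with half of the permission that $a$ holds on the variables $\event_1$ depends on (which, crucially, all lie in $\pvar(E)$), and to be trivial with zero permission at every other coordinate. Since $\event_1 \subs G \subs \Store$ and $\m{\prob}(i)(\event_1) = \m{\prob}(i)(\Store) = 1$, the event $G$ is almost measurable with probability~$1$ in $b_1$, so $\sure{E\at{i}}(b_1)$ holds. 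Dually, I would let $b_2$ own the full space $(\m{\salg}, \m{\prob})$ of $a$, but with the permission halved on those same variables and left unchanged elsewhere.

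With this split the factors recombine to $a$ exactly. The permission maps of $b_1$ and $b_2$ sum back to $\m{\permap}$; because $\m{\prob}(i)(\event_1) = 1$, the event $\event_1$ is independent of all of $\m{\salg}(i)$, so by \cref{lemma:indep-prod-exists} the independent product at index~$i$ exists and its measure agrees with $\m{\prob}(i)$; and since $\event_1 \in \m{\salg}(i)$ we have $\sigcl{\set{\event_1}} \punion \m{\salg}(i) = \m{\salg}(i)$. Hence $b_1 \raOp b_2 = a \raLeq a$. It then remains only to check $P(b_2)$: the resource $b_2$ differs from $a$ solely in that the permission on a subset of $\pvar(E\at{i})$ has been scaled down, so $P(b_2)$ follows from $P(a)$ together with $\psinv(P, \pvar(E\at{i}))$. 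Upward-closure of $\ast$ then gives $(\sure{E\at{i}} \ast P)(a)$, as required.

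The main obstacle is the very first step. To keep the permission footprint of $b_1$ confined to $\pvar(E\at{i})$ — which is essential, as only those variables are covered by $\psinv$ — the witnessing event $\event_1$ must be chosen to depend \emph{only} on the program variables of $E$, so that $b_1$ remains compatible with a permission map supported on $\pvar(E\at{i})$ alone. The almost-measurability witness read off from $a$ need not have this property. I expect to repair it by exploiting the compatibility of $a$ itself: representing $\m{\salg}(i)$ via the countable generating partition of \cref{thm:countable-partition-generated} and projecting the witness onto the $\pvar(E)$-fibres yields a candidate sub-event of $G$ that is measurable in the $\pvar(E)$-coordinates alone; the delicate part is verifying that this projection still lies below $G$ and retains probability~$1$. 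Making that projection argument precise is the technical heart of the proof.
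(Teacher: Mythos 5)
Your overall strategy is the paper's: peel off an independent, permission-light factor at index~$i$ carrying the probability-one information about~$E$, give it half the permission on $\pvar(E)$, keep the whole space (with correspondingly reduced permission) as the second factor, and invoke $\psinv(P,\pvar(E\at{i}))$ to recover $P$ on the scaled-down resource. The recombination and the use of the side condition are exactly as in the paper. The one place you diverge is the choice of generating event for the first factor, and that is where the proposal has a genuine, self-acknowledged gap: you build $b_1$ from an inner almost-measurability witness $\event_1 \in \m{\salg}(i)$ with $\event_1 \subs \inv{E}(\true)$ and probability~$1$, and then observe — correctly — that nothing forces $\event_1$ to be a cylinder over $\pvar(E)$, so $b_1$ need not be compatible with a permission map supported on $\pvar(E\at{i})$, and $\raValid(b_1)$ can fail. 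You defer the repair ("project the witness onto the $\pvar(E)$-fibres") and call it the technical heart; as written, the proof does not go through.

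The fix is to drop $\event_1$ entirely and generate $b_1$'s \salgebra{} from the event $G = \inv{E\at{i}}(\True)$ itself, which is what the paper does: since the expression $E$ reads only the variables in $\pvar(E)$, the set $G$ is by construction a cylinder over those variables, so $\sigcl{\set{G}}$ is trivially compatible with a permission map supported on $\pvar(E\at{i})$ — no projection argument is needed. Almost-measurability of $E$ in $a$ still pins the probability of $G$ to $1$ (that is precisely what the convention $\prob(\event)$ for almost-measurable $\event$ in \cref{def:almost-meas} buys you), so $\sure{E\at{i}}$ holds on this factor, and the independent product with $(\m{\salg},\m{\prob})$ exists by the same probability-$0/1$ argument you already give via \cref{lemma:indep-prod-exists}. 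Note also that your proposed repair, carried out, essentially collapses to this choice: the cylinder over $\pi(\event_1)$ sits between $\event_1$ and $G$, and taking $G$ is the canonical endpoint of that construction — at which point the detour through the measurable witness buys nothing. The only thing your $\event_1$ variant gains is that $\sigcl{\set{\event_1}} \punion \m{\salg}(i) = \m{\salg}(i)$ holds on the nose; with $G$ one argues instead, as the paper does, that adjoining $G$ changes nothing measure-theoretically because its probability is already determined to be~$1$.
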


\begin{proof}
  Assume $a = (\m{\salg}, \m{\prob}, \m{\permap}) \in \Model_I$ and
  $(\sure{E\at{i}} \land P)(a)$ holds.
  We want to show that
  $(\sure{E\at{i}} \ast P)(a)$ holds.
  First note that:
  \begin{align*}
    (\sure{E\at{i}} \land P)(a)
    & \implies \sure{E\at{i}}(a) \land P(a) \\
    & \implies \almostM{E}{(\m{\salg}(i), \m{\prob}(i))}
    \land \m{\prob} \circ \inv{E\at{i}}(\true) = \dirac{\True}
    \land P(a)
  \end{align*}

  Define $\m{\salg}', \m{\permap}_{\aexpr}, \m{\permap}_{P}$ such that,
  for any $j \in I$:
  \begin{align*}
    \m{\salg}'(j) &=
    \begin{cases}
      \set{\emptyset, \Store} \CASE j \neq i\\
      \set{\emptyset, \Store, \inv{E\at{i}}(\True), \Store \setminus
      \inv{E\at{i}}(\True)} \OTHERWISE
    \end{cases}
    \\
    \m{\permap}_{\aexpr}(j) &=
    \begin{cases}
      \fun \wtv.0 \CASE j \ne i \\
      \fun \ip{x}{i}.
        \ITE{\p{x} \in \pvar(\aexpr)}{\m{\permap}(i)(\ip{x}{i})/2}{0}
    \CASE j = i
    \end{cases}
    \\
    \m{\permap}_{P}(j) &=
    \begin{cases}
      \m{\permap}(j) \CASE j \ne i \\
      \fun \ip{x}{i}.
        \ITE{\p{x} \in \pvar(\aexpr)}{\m{\permap}(i)(\ip{x}{i})/2}{\m{\permap}(i)(\ip{x}{i})}
    \CASE j = i
    \end{cases}
  \end{align*}
  By construction, we have
  $
    \m{\permap} = \m{\permap}_{\aexpr} \raOp \m{\permap}_{P}.
  $
  Now let:
  \begin{align*}
    b &= (\m{\salg}', \restr{\m{\prob}}{\m{\salg}'}, \m{\permap}_{\aexpr})
    &
    a' &= (\m{\salg}, \m{\prob}, \m{\permap}_{P})
  \end{align*}
  note that $\raValid(b)$ holds because $\m{\salg}'(i)$ can at best be non-trivial on $\pvar(\aexpr)$.
  The resource $a'$ is also valid, since $\m{\permap}_{P}$ has the same non-zero components as $\m{\permap}$.
  Then
  $\sure{E\at{i}}(b)$ holds because
  $\almostM{E}{( \m{\salg}'(i), \restr{\m{\prob}}{\m{\salg}'}(i) )}$
  and $\restr{\m{\prob}}{\m{\salg}'} \circ \inv{E\at{i}}
  = \m{\prob} \circ \inv{E\at{i}} = \dirac{\True}$.
  By applying~\cref{lemma:indep-prod-exists}, it is easy to show that
  $(\m{\salg}', \restr{\m{\prob}}{\m{\salg}'}) \iprod (\m{\salg}, \m{\prob})$
  is defined and is equal to $(\m{\salg}, \m{\prob})$.
  Therefore,
  $\raValid(b \raOp a)$ and $b \raOp a = a$.
  By the side condition $\psinv(P, \pvar(E\at{i}))$ and the fact that
  $\m{\permap}_{P}$ is a scaled down version of $\m{\permap}$,
  we obtain from $P(a)$ that $P(a')$ holds too.
  This proves that
  $(\sure{E\at{i}} \ast P)(a)$ holds, as desired.
\end{proof}
 \begin{lemma}
\label{proof:prod-split}
  \Cref{rule:prod-split} is sound.
\end{lemma}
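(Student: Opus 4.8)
The plan is to unfold both sides against \Cref{def:almost-meas} and the definition of $\distAs{\cdot}{\cdot}$, reducing the claim to a purely local statement at index~$i$: given a probability space $\psp = (\m{\salg}(i), \m{\prob}(i))$ in which the joint expression $(\aexpr_1,\aexpr_2)$ is almost measurable with distribution $\prob_1 \otimes \prob_2$, I must produce two probability spaces $\psp_1,\psp_2 \of \ProbSp(\Store)$ such that $\aexpr_1$ (resp.\ $\aexpr_2$) is almost measurable in $\psp_1$ (resp.\ $\psp_2$) with marginal $\prob_1$ (resp.\ $\prob_2$), and whose independent product $\psp_1 \iprod \psp_2$ is defined and satisfies $\psp_1 \iprod \psp_2 \extTo \psp$. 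The separating conjunction and the permission components are then discharged by placing $\psp_1,\psp_2$ at index~$i$ (and the trivial space elsewhere) and splitting the owned permission map in half, which is routine since $\distAs{\cdot}{\cdot}$ existentially quantifies permissions and every sub-\salgebra{} of a space compatible with $\m{\permap}$ remains compatible with $\m{\permap}/2$.

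The core construction exploits discreteness. For each $v_1 \in \psupp(\prob_1)$, almost-measurability of $\inv{(\aexpr_1,\aexpr_2)}$ lets me pick an inner approximation $\underline{\event}_1^{v_1} \in \m{\salg}(i)$ with $\underline{\event}_1^{v_1} \subs \inv{\aexpr_1}(v_1)$ and $\m{\prob}(i)(\underline{\event}_1^{v_1}) = \prob_1(v_1)$, using that a countable union of almost-measurable sets is almost measurable, so $\inv{\aexpr_1}(v_1) = \bigcup_{v_2}\inv{(\aexpr_1,\aexpr_2)}(v_1,v_2)$ is almost measurable with almost-measure $\sum_{v_2}\prob_1(v_1)\prob_2(v_2) = \prob_1(v_1)$. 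These inner sets are pairwise disjoint and measurable, and their total measure is~$1$, so after absorbing the measurable null remainder into one fixed block I obtain a countable measurable partition $\set{B_1^{v_1}}_{v_1}$ of $\Store$. I set $\salg_1 = \closure{\set{B_1^{v_1}}}$ and $\psp_1 = (\salg_1, \restr{\m{\prob}(i)}{\salg_1})$; because $B_1^{v_1}$ agrees with $\inv{\aexpr_1}(v_1)$ up to a null set, $\aexpr_1$ is almost measurable in $\psp_1$ with marginal $\prob_1$. The symmetric construction with $\set{B_2^{v_2}}$ yields $\psp_2$.

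It remains to show that $\psp_1 \iprod \psp_2$ is defined and extends $\psp$. By \Cref{lemma:sigma-alg-representation} every event of $\salg_1$ (resp.\ $\salg_2$) is a disjoint union of blocks $B_1^{v_1}$ (resp.\ $B_2^{v_2}$), so independence reduces to the generators: since $B_1^{v_1} \inters B_2^{v_2}$ coincides with $\inv{(\aexpr_1,\aexpr_2)}(v_1,v_2)$ up to a null set, the product-distribution hypothesis gives $\m{\prob}(i)(B_1^{v_1}\inters B_2^{v_2}) = \prob_1(v_1)\prob_2(v_2) = \m{\prob}(i)(B_1^{v_1})\cdot\m{\prob}(i)(B_2^{v_2})$, and additivity along the two partitions lifts this to all pairs in $\salg_1$ and $\salg_2$. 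Thus $\restr{\m{\prob}(i)}{\salg_1 \punion \salg_2}$ is an independent product of $\psp_1$ and $\psp_2$ in the sense of \Cref{def:indep-comb}; by its uniqueness, $\psp_1 \iprod \psp_2$ exists and equals this restriction, which by construction satisfies $\psp_1 \iprod \psp_2 \extTo \psp$.

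The main obstacle is exactly this independence step together with the null-set bookkeeping forced by almost-measurability: one cannot take the level sets of $\aexpr_1,\aexpr_2$ directly, since they need not lie in $\m{\salg}(i)$, so the measurable partitions must be built from inner approximations, and the product-distribution assumption must be transported across these null-set discrepancies before independence of $\salg_1$ and $\salg_2$ can be concluded. Everything else---the marginal computations, the appeal to uniqueness of the independent product, and the permission split---is routine.
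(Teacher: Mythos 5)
Your overall architecture is the same as the paper's: reduce to the local statement at index~$i$, build the two coarser \salgebra[s] from inner approximations of the level sets $\inv{\aexpr_1}(v_1)=\bigcup_{v_2}\inv{(\aexpr_1,\aexpr_2)}(v_1,v_2)$, verify the independence identity on generators and lift it by additivity, and split the permission map in half. Your independence step is in fact cleaner than the paper's, because you arrange the generators into a genuine countable partition and can then invoke \Cref{lemma:sigma-alg-representation} directly, whereas the paper also throws the outer approximations $U_a=\bigcup_b U_{a,b}$ into the generating set and has to do a lengthier case analysis to see that they contribute nothing measure-theoretically.

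There is, however, one step that fails as written: absorbing the measurable null remainder $R=\Store\setminus\bigcup_{v_1\in\psupp(\prob_1)}\underline{\event}_1^{v_1}$ into a fixed block of \emph{positive} measure, say $B_1^{a}=\underline{\event}_1^{a}\cup R$. After this merge, $\aexpr_1$ is no longer almost measurable in $\salg_1=\closure{\set{B_1^{v_1}}}$. For $v_1\neq a$ the set $\inv{\aexpr_1}(v_1)$ may well meet $R$ (the inner approximation is proper in general), so the smallest event of $\salg_1$ containing it is $B_1^{v_1}\cup B_1^{a}$, of measure $\prob_1(v_1)+\prob_1(a)>\prob_1(v_1)$; no upper witness of the right measure exists in $\salg_1$, and \Cref{def:almost-meas} quantifies over the coarsened algebra, not the ambient one, so ``agrees with $\inv{\aexpr_1}(v_1)$ up to an ambient null set'' is not enough. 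The same problem appears for values $v\notin\psupp(\prob_1)$, whose level sets sit inside $R$ but whose only upper bound in $\salg_1$ is then $B_1^{a}$. The fix is immediate and does not disturb the rest of your argument: keep $R$ as its own (null) block of the partition, so that $\underline{\event}_1^{v_1}\cup R\in\salg_1$ serves as the upper witness of measure $\prob_1(v_1)$, and $\emptyset\subs\inv{\aexpr_1}(v)\subs R$ handles the off-support values. This is exactly the role the outer approximations play in the paper's version of the construction.
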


\begin{proof}
  For any $(\m{\salg},\m{\prob}, \m{\permap})$ such that
  $(\distAs{(\aexpr_1\at{i}, \aexpr_2\at{i})}{\mu_1 \pprod \mu_2}) (\m{\salg},\m{\prob}, \m{\permap})$,
  by definition, it must
  \begin{align*}
      \E \m{\salg'},\m{\prob'}.
      (\Own{\m{\salg'},\m{\prob'}})(\m{\salg},\m{\prob}, \m{\permap}) *
    \almostM{(\aexpr_1, \aexpr_2)}{(\m{\salg'}(i),\m{\prob'}(i))}
    \land
    \mu_1 \pprod \mu_2 = \m{\prob'}(i) \circ \inv{(\aexpr_1, \aexpr_2)}
    .
  \end{align*}
  We can derive from it that
  \begin{align*}
    \E \m{\salg'},\m{\prob'}, \m{\permap'}.
      & (\m{\salg'},\m{\prob'}) \raLeq (\m{\salg},\m{\prob}, \m{\permap}) * \\
      & \Big( \forall a, b  \in A. \exists L_{a, b},U_{a,b} \in \salg'(i)  \st
        L_{a, b} \subs \inv{(\aexpr_1, \aexpr_2)}(a, b) \subs U_{a, b}
       \land
        \prob'(L_{a, b})=\prob'(U_{a, b}) \land  \\
      &
       \mu_1 \pprod \mu_2 (a, b)
= \m{\prob'}(i) (L_{a, b})
       = \m{\prob'}(i) (U_{a, b})
    \Big)
  \end{align*}
  Also, for any $a, b, a', b' \in A$ such that $a \neq a'$
  or $b \neq b'$, we have
  $L_{a,b}$ disjoint from $L_{a',b'}$ because on $L_{a,b} \inters L_{a',b'}$,
  the random variable $(\aexpr_1, \aexpr_2)$ maps to both
  $(a, b)$ and $(a',b')$.

  Define
  \[
    \m{\salg}_1(i) = \closure{\set{(\Union_{b \in A} L_{a, b} ) \mid a \in A} \union \set{(\Union_{b \in A} U_{a, b})  \mid a \in A}},
  \]
  and similarly define
    \[
      \m{\salg}_2(i) = \closure{\set{(\Union_{a \in A} L_{a, b} ) \mid b \in A} \union \set{(\Union_{a \in A} U_{a, b} )  \mid b \in A}}.
  \]
  Denote $\m{\prob'}$ restricted to $\m{\salg}_1$ as $\m{\prob'}_1$
  and $\m{\prob'}$ restricted to $\m{\salg}_2$ as $\m{\prob'}_2$.

  We want to show that
  $(\m{\salg}_1(i), \m{\prob'}_1(i)) \indepcomb (\m{\salg}_2(i), \m{\prob'}_2(i)) \extTo (\m{\salg'}(i), \m{\prob'}(i))$,
  which boils down to show that for any $\event_1 \in \m{\salg}_1(i)$, any
  $\event_2 \in \m{\salg}_2(i)$,
  \begin{align*}
    \m{\prob'}(\event_1 \inters \event_2) = \m{\prob'}_1(\event_1) \cdot  \m{\prob'}_2(\event_2)
  \end{align*}

      For convenience, we will
      denote $\union_{b \in A} L_{a, b}$ as $L_a$,
      denote $\union_{a \in A} L_{a, b}$ as $L_b$,
      denote $\union_{b \in A} U_{a, b}$ as $U_a$,
      and denote $\union_{a \in A} U_{a, b}$ as $U_b$.

      First, using a standard construction in measure theory proofs,
      we rewrite $\salg_1$ and $\salg_2$ as sigma algebra generated
      by sets of partitions.
      Specifically, $\salg_1$ is equivalent to
      \[
        \closure{\set{\Inters_{a \in S_1} L_a \inters \Inters_{a \in S_2} U_a \setminus (\Union_{a \in A \setminus S_1} L_a \union \Union_{a \in A \setminus S_2} U_a)  \mid S_1, S_2 \subseteq A}}
      \]
      and similarly, $\salg_2$ is equivalent to
      \[
        \closure{\set{\Inters_{b \in T_1} L_b \inters \Inters_{b \in T_2} U_b \setminus (\Union_{b \in A \setminus T_1} L_b \union \Union_{b \in A \setminus T_2} U_b)  \mid T_1, T_2 \subseteq A}}.
      \]
      Thus, by~\cref{lemma:sigma-alg-representation}, any event $\event_1$ in
      $\salg_1$ can be represented by
      \[
        \Dunion_{S_1 \in I_1, S_2 \in I_2}
        \Inters_{a \in S_1} L_a \inters \Inters_{a \in S_2} U_a \setminus (\Union_{a \in A \setminus S_1} L_a \union \Union_{a \in A \setminus S_2} U_a)
      \]
      for some $I_1, I_2 \subseteq \mathcal{P}(A)$, where
      $\mathcal{P}$ is the powerset over $A$.
      Similarly, any event $\event_2$ in $\salg_2$ can be represented by
      \[
        \Dunion_{S_3 \in I_3, S_4 \in I_4}
        \Inters_{b \in S_3} L_b \inters \Inters_{b \in S_4} U_b \setminus (\Union_{b \in A \setminus S_3} L_b \union \Union_{b \in A \setminus S_2} U_b)
      \]
      for some  $I_3, I_4 \subseteq \mathcal{P}(A)$.
      Thus, $\event_1 \inters \event_2$ can be represented as
      \begin{align*}
        \event_1 \inters \event_2
        &=(\Dunion_{S_1 \in I_1, S_2 \in I_2}
        \Inters_{a \in S_1} L_a \inters \Inters_{a \in S_2} U_a \setminus (\Union_{a \in A \setminus S_1} L_a \union \Union_{a \in A \setminus S_2} U_a) ) \\
        &\Inters
        (\Dunion_{S_3 \in I_3, S_4 \in I_4}
        \Inters_{b \in S_3} L_b \inters \Inters_{b \in S_4} U_b \setminus (\Union_{b \in A \setminus S_3} L_b \union \Union_{b \in A \setminus S_2} U_b) )\\
        = & \Dunion_{S_1 \in I_1, S_2 \in I_2, S_3 \in I_3, S_4 \in I_4} (\Inters_{a \in S_1} L_a \inters \Inters_{a \in S_2} U_a \setminus (\Union_{a \in A \setminus S_1} L_a \union \Union_{a \in A \setminus S_2} U_a) ) \\
          &\inters ( \Inters_{b \in S_3} L_b \inters \Inters_{b \in S_4} U_b \setminus (\Union_{b \in A \setminus S_3} L_b \union \Union_{b \in A \setminus S_2} U_b) )
      \end{align*}

      Because $L_{a,b}$ and $L_{a',b'}$ are disjoint as long as not
      $a = a'$ and $b = b'$,
      we have $L_a$ disjoint from $L_{a'}$ if $a \neq a'$.
      Thus,
      $\Inters_{a \in S_1} L_a \inters \Inters_{a \in S_2} U_a \setminus (\Union_{a \in A \setminus S_1} L_a \union \Union_{a \in A \setminus S_2} U_a)$
      is not empty only when $S_1$ is singleton and empty.
      \begin{itemize}
        \item If $S_1$ is empty,
      then
      \[
        \Inters_{a \in S_1} L_a \inters \Inters_{a \in S_2} U_a \setminus (\Union_{a \in A \setminus S_1} L_a \union \Union_{a \in A \setminus S_2} U_a)
        =  \Inters_{a \in S_2} U_a \setminus (\Union_{a \in A} L_a \union \Union_{a \in A \setminus S_2} U_a)
      \]
      has measure 0 because $\Union_{a \in A} L_a$ has measure 1.
        \item Otherwise, if $S_1$ is singleton, say $S_1 = \{a'\}$,
      then
      \begin{align*}
        \Inters_{a \in S_1} L_a \inters \Inters_{a \in S_2} U_a \setminus (\Union_{a \in A \setminus S_1} L_a \union \Union_{a \in A \setminus S_2} U_a)
        &= L_{a'} \inters \Inters_{a \in S_2} U_a \setminus \Union_{a \in A \setminus S_2} U_a).
      \end{align*}
Furthermore,
      \begin{align*}
        \m{\prob'}(\Inters_{a \in S_2} U_a)
        &= \m{\prob'}(\Inters_{a \in S_2} L_a \disjunion (U_a \setminus L_a)) \\
        &= \m{\prob'}(\Inters_{a \in S_2} L_a) + 0
      \end{align*}
      And $\Inters_{a \in S_2} L_a$ is non-empty only if
      $S_2$ is a singleton set or empty set.
      Thus, $L_{a'} \inters \Inters_{a \in S_2} U_a \setminus \Union_{a \in A \setminus S_2} U_a) \subseteq \Inters_{a \in S_2} U_a$ has non-zero measure only if
      $S_2$ is empty or a singleton set.
\begin{itemize}
        \item When $S_2$ is empty,
      \begin{align*}
        L_{a'} \inters \Inters_{a \in S_2} U_a \setminus \Union_{a \in A \setminus S_2} U_a
        &= L_{a'} \setminus \Union_{a \in A} U_a
        \subseteq L_{a'} \setminus  U_{a'}
        =\emptyset
      \end{align*}
        \item When $S_2 = \{a'\}$,
      \begin{align*}
        L_{a'} \inters \Inters_{a \in S_2} U_a \setminus \Union_{a \in A \setminus S_2} U_a
        &= L_{a'} \setminus \Union_{a \in A, a \neq a'} U_a .
      \end{align*}
        \item
      When $S_2 = \{a''\}$ for some $a'' \neq a'$
      \begin{align*}
        L_{a'} \inters \Inters_{a \in S_2} U_a \setminus \Union_{a \in A \setminus S_2} U_a
        &= L_{a'} \inters U_{a''} \setminus \Union_{a \in A, a \neq a''} U_a \\
        &= \emptyset
      \end{align*}
      \end{itemize}
      \end{itemize}

Thus,
      \begin{align*}
         \m{\prob'}(\event_1)
        = & \m{\prob'}\Big(\Union_{S_1 \in I_1, S_2 \in I_2} \Inters_{a \in S_1} L_a \inters \Inters_{a \in S_2} U_a \setminus (\Union_{a \in A \setminus S_1} L_a \union \Union_{a \in A \setminus S_2} U_a) \inters) \\
          = & \m{\prob'}\Big(\Union_{\{a'\} \in I_1, S_2 \in I_2} ( L_{a'} \inters \Inters_{a \in S_2} U_a \setminus \Union_{a \in A \setminus S_2} U_a) \Big) \\
          = & \m{\prob'}\Big(\Union_{\{a'\} \in I_1 \inters I_2} L_{a'} \inters U_{a'} \setminus \Union_{a \in A, a \neq a'} U_a \Big)  \\
          = & \m{\prob'}\Big(\Union_{\{a'\} \in I_1 \inters I_2} ( L_{a'} \setminus \Union_{a \in A, a \neq a'} U_a )  \Big) \\
          = & \m{\prob'}\Big(\Union_{\{a'\} \in I_1 \inters I_2} ( L_{a'} \setminus \Union_{a \in A, a \neq a'} (L_a \Union (U_a \setminus L_a)) )  \Big) \\
          = & \m{\prob'}\Big(\Union_{\{a'\} \in I_1 \inters I_2} ( L_{a'} \setminus \Union_{a \in A, a \neq a'} (L_a ) ) \Big) \\
          = & \m{\prob'}\Big(\Union_{\{a'\} \in I_1 \inters I_2} L_{a'} \Big)
      \end{align*}
      Denote $\Union_{\{a'\} \in I_1 \inters I_2} L_{a'}$
      as $\event'_1$.
      And $\event_1 \setminus \event'_1$ and $\event'_1 \setminus \event_1$ both have measure 0.

      Similar results hold for $\event_2$ as well, and we can show that
      \begin{align*}
        \m{\prob'}(\event_2)
         = & \m{\prob'}\Big(\Union_{\{b'\} \in I_3 \inters I_4} L_{b'} \Big)
      \end{align*}
      Denote $\Union_{\{b'\} \in I_3 \inters I_4} L_{b'}$
      as $\event'_2$.
      And $\event_2 \setminus \event'_2$ and $\event'_2 \setminus \event_2$ both have measure 0.

      Thus,
      \begin{align*}
         \m{\prob'}(\event_1 \inters \event_2)
        =& \m{\prob'}(\event_1 \inters \event_2 \inters \event'_1)
        + \m{\prob'}((\event_1 \inters \event_2) \setminus \event'_1)\\
        =& \m{\prob'}(\event_1 \inters \event_2 \inters \event'_1)
        + 0 \\
        =& \m{\prob'}(\event_1 \inters \event_2 \inters \event'_1 \inters \event'_2) + \m{\prob'}((\event_1 \inters \event_2 \inters \event'_1) \setminus \event'_2) + 0 \\
        =& \m{\prob'}(\event_1 \inters \event_2 \inters \event'_1 \inters \event'_2) + 0 + 0 \\
        =&  \m{\prob'}(\event_1 \inters \event_2 \inters \event'_1 \inters \event'_2) +  \m{\prob'}((\event_2 \inters \event'_1 \inters \event'_2 ) \setminus \event_1) \\
        =&  \m{\prob'}(\event_2 \inters \event'_1 \inters \event'_2)  \\
        =&  \m{\prob'}(\event_2 \inters \event'_1 \inters \event'_2) + \m{\prob'}((\event'_1 \inters \event'_2 ) \setminus \event_2) \\
        =&  \m{\prob'}(\event'_1 \inters \event'_2) \\
        =&  \m{\prob'}\left((\Union_{\{a'\} \in I_1 \inters I_2} L_{a'}) \inters (\Union_{\{b'\} \in I_3 \inters I_4} L_{b'})\right) \\
         =&  \m{\prob'}\left(\Union_{\{a'\} \in I_1 \inters I_2, \{b'\} \in I_3 \inters I_4} L_{a', b'}\right) \\
         =&  \sum_{\substack{\{a'\} \in I_1 \inters I_2 \\ \{b'\} \in I_3 \inters I_4}} \m{\prob'}(L_{a', b'})
       \end{align*}

Next we show that
       $         \m{\prob'}(i) (L_{a, b})
         = \m{\prob'}(i) (\event_1) \cdot \m{\prob'}(i) (\event_2)$.
         Note that
      $\m{\prob'}(L_a)  = \sum_{b} \m{\prob'}(L_{a,b}) = \m{\prob'}(\inv{\aexpr_1}(a))$,
      and
      $\m{\prob'}(L_b)  = \sum_{a} \m{\prob'}(L_{a,b}) = \m{\prob'}(\inv{\aexpr_2}(b))$.
      And $\mu_1 \pprod \mu_2 = \m{\prob'}(i) \circ \inv{(\aexpr_1, \aexpr_2)}$
      implies that
      \begin{align*}
        \m{\prob'}(i) (L_{a, b})
        &= \mu_1 \pprod \mu_2 (a, b)\\
        &=\mu_1(a) \cdot \mu_2(b)
      \end{align*}
      Then
      \begin{align*}
        \mu_1(a)
        &= \mu_1(a) \cdot \sum_{b \in A} \mu_2(b)\\
        &= \sum_{b \in A} \mu_1(a) \cdot \mu_2(b) \\
        &= \sum_{b \in A} \m{\prob'}(i) (L_{a, b}) \\
        &= \m{\prob'}(i) \left(\sum_{b \in A} L_{a, b}\right)  \\
        &= \m{\prob'}(i) (L_a),
      \end{align*}
      and similarly,
      \begin{align*}
        \mu_2(b)
        &= \left(\sum_{a \in A} \mu_1(a)\right) \cdot \mu_2(b)\\
        &= \sum_{a \in A} (\mu_1(a) \cdot \mu_2(b))\\
        &= \sum_{a \in A} \m{\prob'}(i) (L_{a, b}) \\
        &= \m{\prob'}(i) \left(\sum_{a \in A} L_{a, b}\right)  \\
        &= \m{\prob'}(i) (L_b).
      \end{align*}
      Thus,
      \begin{align*}
         \m{\prob'}(i) (L_{a, b})
         &=\mu_1(a) \cdot \mu_2(b)
         = \m{\prob'}(i) (L_a) \cdot \m{\prob'}(i) (L_b)
      \end{align*}

Therefore,
       \begin{align*}
         \m{\prob'}(\event_1 \inters \event_2)
         =&  \sum_{\substack{\{a'\} \in I_1 \inters I_2 \\ \{b'\} \in I_3 \inters I_4}} \m{\prob'}(L_{a', b'}) \\
         =&  \sum_{\substack{\{a'\} \in I_1 \inters I_2 \\ \{b'\} \in I_3 \inters I_4}} \m{\prob'}(L_{a'}) \cdot \m{\prob'}(L_{b'}) \\
       =&  \sum_{\mathclap{\{a'\} \in I_1 \inters I_2}}  \m{\prob'}(L_{a'}) \cdot \sum_{\mathclap{\{b'\} \in I_3 \inters I_4}} \m{\prob'}(L_{b'}) \\
       =& \m{\prob'}(\event_1) \cdot \m{\prob'}(\event_2)\\
       =& \m{\prob_1'}(\event_1) \cdot \m{\prob_2'}(\event_2)
      \end{align*}

  Thus we have
  $(\m{\salg}_1, \m{\prob'}_1) \iprod (\m{\salg}_2, \m{\prob'}_2) \extTo (\m{\salg'}, \m{\prob'})$.
  Let $\m{\permap_1} = \m{\permap_2} = \fun x. \m{\permap'}(x)/2$.

  Next we show that $\distAs{\aexpr_1}{\mu_1} (\m{\salg}_1, \m{\prob'}_1, \m{\permap_1}) $ and $\distAs{\aexpr_2}{\mu_2} (\m{\salg}_2, \m{\prob'}_2, \m{\permap_2})$.
  By definition,
  $\distAs{\aexpr_1}{\mu_1} (\m{\salg}_1, \m{\prob'}_1, \m{\permap_1})$
  is equivalent to
\begin{align*}
      \E \m{\salg''},\m{\prob''}.
      (\Own{\m{\salg''},\m{\prob''}})(\m{\salg}_1, \m{\prob'}_1, \m{\permap_1}) *
    \almostM{\aexpr_1}{(\m{\salg''}(i), \m{\prob''}(i))}
    \land
    \mu_1 = \m{\prob''}(i) \circ \inv{\aexpr_1}
    ,
  \end{align*}
which is equivalent to
\begin{multline*}
      \E \m{\salg''},\m{\prob''}.
      (\m{\salg''},\m{\prob''}) \raLeq (\m{\salg}_1, \m{\prob'}_1) *
      \bigl(\forall a \in A. \exists S_a, T_a \in \ \m{\salg''}(i).\\
      S_a \subseteq \inv{\aexpr_1}(a) \subseteq T_a  \land
      \m{\prob''}(i)(S_a) =  \m{\prob''}(i)(S_a) \land
      \mu_1(a) = \m{\prob''}(i)(S_a) = \m{\prob''}(i)(T_{a})
      \bigr)
  \end{multline*}
We can pick the existential witness to be
  $\m{\salg}_1, \m{\prob'}_1$.
  For any $a \in A$,
  $ \inv{\aexpr_1}(a) = \Union_{b \in A}\inv{(\aexpr_1, \aexpr_2)} (a, b)$.
  Because we have $L_{a, b} \subseteq \inv{(\aexpr_1, \aexpr_2)} (a, b) \subseteq U_{a,b}$,
  then
  \[
   \Union_{b \in A} L_{a, b} \subseteq
   \inv{\aexpr_1}(a) = \Union_{b \in A}\inv{(\aexpr_1, \aexpr_2)} (a, b)
   \subseteq \Union_{b \in A} U_{a, b} .
 \]
  By definition, for each $a$,
  $\Union_{b \in A} L_{a, b} \in \m{\salg}_1(i)$ and
  $\Union_{b \in A} U_{a, b} \in \m{\salg}_1(i)$,
  and we also have
  \begin{align*}
     \m{\prob'}_1(i) (\Union_{b \in A} L_{a, b})
     &= \sum_{b \in A} \m{\prob'}_1(i) (L_{a,b})\\
     &= \sum_{b \in A} \m{\prob'}_1(i) (U_{a,b})\\
     &= \m{\prob'}_1(i) \bigl(\Union_{b \in A} U_{a, b}\bigr)\\
     &= \mu_1(a)
  \end{align*}
  Thus, $S_a = \Union_{b \in A} L_{a, b}$ and
  $T_a = \Union_{b \in A} U_{a, b}$ witnesses the conditions needed
  for
  $\distAs{\aexpr_1}{\mu_1} (\m{\salg}_1, \m{\prob'}_1, \m{\permap_1}) $.
  And similarly, we have $\distAs{\aexpr_2}{\mu_2} (\m{\salg}_2, \m{\prob'}_2, \m{\permap_2}) $.
\end{proof}
 
\subsubsection{Soundness of Conditioning Rules}
\begin{lemma}
\label{proof:c-true}
  \Cref{rule:c-true} is sound.
\end{lemma}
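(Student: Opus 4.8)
The plan is to prove the rule directly by unfolding the definition of the \supercond\ modality from \cref{def:c-mod} and exhibiting trivial witnesses. Since the rule has no premises, what must be shown is that $\CC\prob \wtv.\True$ holds on every valid resource $a \in \Model_I$. Fixing such an $a$, I would instantiate the existentially quantified witnesses $\m{\sigmaF}, \m{\mu}, \m{\permap}, \m{\krnl}$ with the components of the resource-algebra unit together with a constant kernel: take $\m{\sigmaF}(i) = \set{\emptyset, \Store}$, let $\m{\mu}(i)$ and $\m{\permap}(i)$ be the trivial distribution and the zero permission map respectively (so that $(\m{\sigmaF}, \m{\mu}, \m{\permap}) = \raUnit$), and set $\m{\krnl}(i)$ to be the constant function mapping every outcome to the trivial distribution on $\set{\emptyset, \Store}$.

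With these choices the three remaining obligations are immediate. First, $(\m{\sigmaF}, \m{\mu}, \m{\permap}) \raLeq a$ holds because the chosen tuple is exactly the unit $\raUnit$, which lies below every element of $\Model_I$: the trivial probability space is the bottom of the extension pre-order (its \salgebra\ $\set{\emptyset,\Store}$ is contained in every \salgebra, and the restriction of any distribution to it is the trivial distribution), and the zero permission map is pointwise below any permission map. Second, the decomposition $\m{\mu}(i) = \bind(\prob, \m{\krnl}(i))$ reduces to the computation $\bind(\prob, \m{\krnl}(i))(\event) = \sum_{v \in \psupp(\prob)} \prob(v) \cdot \m{\krnl}(i)(v)(\event)$, which equals $1$ for $\event = \Store$ and $0$ for $\event = \emptyset$ since $\prob$ is a probability distribution; this is precisely the trivial distribution $\m{\mu}(i)$. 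Third, the inner requirement $\forall v \in \psupp(\prob).\, \True(\m{\sigmaF}, \m{\krnl}(I)(v), \m{\permap})$ is discharged vacuously, as $\True$ holds on every resource.

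There is no genuine obstacle here; the only point deserving a moment of care is confirming that the constant kernel is well typed, i.e., that $\m{\krnl}(i)(v) \in \Dist(\m{\sigmaF}(i))$ for each $v$. This is automatic because $\m{\sigmaF}(i) = \set{\emptyset,\Store}$ admits exactly one distribution, the trivial one, which is the value we assigned. Consequently all witnesses are legitimate and the rule follows.
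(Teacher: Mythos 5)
Your proof is correct and follows essentially the same route as the paper's: both instantiate the modality's witnesses with the unit resource $(\Triv{\Store},\fun\wtv.0)$ at each index and the constant kernel returning the trivial distribution, then observe that the unit lies below every resource, that $\bind(\prob,\fun v.\Triv{})$ is the trivial distribution, and that the inner $\True$ is vacuous. The only difference is presentational — the paper phrases the argument as a chain of entailments through $\Own{\raUnit}$ using the wand characterization of the modality, while you unfold \cref{def:c-mod} directly on an arbitrary valid resource.
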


\begin{proof}
  Let $\m{{\raUnit}} = (\m{\salg}_{\m{{\raUnit}}}, \m{\prob}_{\m{{\raUnit}}}, \m{\permap}_{\m{{\raUnit}}}) \in \Model_I$
  be the unit of $\Model_I$ and
  $\m{\krnl} = \fun v. \m{\prob}_{\m{{\raUnit}}}$.
  Then,
  \begin{eqexplain}
    \True
\whichproves*
    \Own{\m{\salg}_{\m{{\raUnit}}}, \m{\prob}_{\m{{\raUnit}}}}
\whichproves
    \Own{\m{\salg}_{\m{{\raUnit}}}, \m{\prob}_{\m{{\raUnit}}}} *
      \pure{
        \forall i\in I\st
          \m{\prob}_{\m{{\raUnit}}}(i) = \bind(\prob, \m{\krnl}(i))
      }
\whichproves
      \Own{\m{\salg}_{\m{{\raUnit}}}, \m{\prob}_{\m{{\raUnit}}}}
      * \pure{
        \forall i\in I\st
          \m{\prob}_{\m{{\raUnit}}}(i) = \bind(\prob, \m{\krnl}(i))
      }
      * \True
\whichproves
    \E \m{\salg}_{\m{{\raUnit}}}, \m{\prob}_{\m{{\raUnit}}}, \m{\krnl}.
      \Own{\m{\salg}_{\m{{\raUnit}}}, \m{\prob}_{\m{{\raUnit}}}}
      \begin{array}[t]{@{}>{{}}l}
      * \pure{
        \forall i\in I\st
          \m{\prob}_{\m{{\raUnit}}}(i) = \bind(\prob, \m{\krnl}(i))
      } \\
      * (
        \forall v \in \psupp(\prob).
         \Own{\m{\salg}_{\m{{\raUnit}}}, \m{\krnl}(I)(v), \m{\permap}_{\m{{\raUnit}}}} \wand \True
      )
      \end{array}
\whichproves
    \CMod{\prob} \wtv. \True
  \qedhere
  \end{eqexplain}
\end{proof} \begin{lemma}
\label{proof:c-false}
  \Cref{rule:c-false} is sound.
\end{lemma}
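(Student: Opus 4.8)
The plan is to prove the entailment vacuously, by showing that $\CC\prob v.\False$ is satisfied by no resource at all. Recall that $\proves$ requires, for every valid $a$, that $(\CC\prob v.\False)(a)$ implies $\False(a)$; since $\False \is \fun \wtv. \bot$ holds on no resource, it suffices to verify that the premise can never hold.

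First I would unfold \cref{def:c-mod}. Suppose, towards a contradiction, that $(\CC\prob v.\False)(a)$ holds for some $a \in \Model_I$. Then there are witnesses $\m{\sigmaF}, \m{\mu}, \m{\permap}, \m{\krnl}$ satisfying $(\m{\sigmaF}, \m{\mu}, \m{\permap}) \raLeq a$, with $\m{\mu}(i) = \bind(\prob, \m{\krnl}(i))$ for all $i \in I$, and---the crucial conjunct---$\False(\m{\sigmaF}, \m{\krnl}(I)(v), \m{\permap})$ for every $v \in \psupp(\prob)$.

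The only substantive point is that $\psupp(\prob)$ is non-empty. Because $\prob$ is a probability distribution over a countable outcome set, countable additivity gives $\sum_{x} \prob(x) = \prob(\Outcomes) = 1 \neq 0$, so some outcome carries positive probability. Fixing any $v_0 \in \psupp(\prob)$, the last conjunct forces $\False$ to hold on the resource $(\m{\sigmaF}, \m{\krnl}(I)(v_0), \m{\permap})$, contradicting the fact that $\False$ is satisfied by no resource. Hence no $a$ satisfies $\CC\prob v.\False$, and the implication $(\CC\prob v.\False)(a) \implies \False(a)$ is vacuously true for every valid $a$. There is no real obstacle here: the lemma is essentially the observation that the universal quantification over $\psupp(\prob)$ inside the modality ranges over a non-empty index set, making it the dual of the vacuous introduction witnessed by \ref{rule:c-true}.
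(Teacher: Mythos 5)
Your proof is correct and follows essentially the same route as the paper's: unfold the modality, observe that $\psupp(\prob)$ is non-empty since $\prob$ is a (discrete) probability distribution, instantiate the inner universal at some $v_0 \in \psupp(\prob)$, and conclude \emph{ex falso}. The only difference is presentational (you phrase it as vacuity of the entailment rather than deriving $\False(a)$ directly), which changes nothing of substance.
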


\begin{proof}
  Assume $a \in \Model_I$ is such that
  $\raValid(a)$ and that it satisfies
  $ \CC{\prob} v.\False $.
  By definition, this means that,
  for some $ \m{\salg}_0, \m{\prob}_0,\m{\permap}_0,$ and $ \m{\krnl}_0 $:
  \begin{gather}
    (\m{\salg}_0, \m{\prob}_0,\m{\permap}_0) \raLeq a
    \label{c-false:prob0}
    \\
    \forall i\in I\st
      \m{\prob}_0(i) = \bind(\prob, \m{\krnl}_0(i))
    \label{c-false:prob0-bind}
    \\
    \forall v \in \psupp(\prob) \st
      \False(\m{\salg}_0, \m{\krnl}_0(I)(v), \m{\permap}_0)
    \label{c-false:False-krnl0}
  \end{gather}
  Let $v_0 \in \psupp(\prob)$---we know one exists because $\prob$
  is a (discrete) probability distribution.
  Then by \eqref{c-false:False-krnl0} on~$v_0$
  we get $\False(\m{\salg}_0, \m{\krnl}_0(I)(v_0), \m{\permap}_0)$ holds.
  Since $\False(\wtv)$ is by definition false,
  we get $\False(a)$ holds \emph{ex falso}.
\end{proof} \begin{lemma}
\label{proof:c-cons}
  \Cref{rule:c-cons} is sound.
\end{lemma}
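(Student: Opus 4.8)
The plan is to unfold \Cref{def:c-mod} on both sides and show that a single set of witnesses serves for both the hypothesis and the conclusion. Concretely, I would fix a valid $a \in \Model_I$ satisfying $(\CC\prob v. K_1(v))(a)$, and extract witnesses $\m{\sigmaF}, \m{\mu}, \m{\permap}, \m{\krnl}$ with $(\m{\sigmaF}, \m{\mu}, \m{\permap}) \raLeq a$, with $\m{\mu}(i) = \bind(\prob, \m{\krnl}(i))$ for all $i \in I$, and with $K_1(v)(\m{\sigmaF}, \m{\krnl}(I)(v), \m{\permap})$ for all $v \in \psupp(\prob)$. The goal $(\CC\prob v. K_2(v))(a)$ is then witnessed by exactly the same $\m{\sigmaF}, \m{\mu}, \m{\permap}, \m{\krnl}$: the decomposition clause $(\m{\sigmaF}, \m{\mu}, \m{\permap}) \raLeq a$ and the $\bind$ clauses are literally unchanged, so only the conditional clause $K_2(v)(\m{\sigmaF}, \m{\krnl}(I)(v), \m{\permap})$ remains to be discharged, for each $v \in \psupp(\prob)$.

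For each such $v$, I would invoke the premise $K_1(v) \proves K_2(v)$ on the resource $(\m{\sigmaF}, \m{\krnl}(I)(v), \m{\permap})$. Since entailment transfers from $K_1$ to $K_2$ only on \emph{valid} resources, the crux is to check that $(\m{\sigmaF}, \m{\krnl}(I)(v), \m{\permap})$ is a valid element of $\Model_I$. The permission component $\m{\permap}$ is shared with $(\m{\sigmaF}, \m{\mu}, \m{\permap})$, which is valid because it lies below the valid $a$ (using the RA axiom that validity is downward closed along $\raLeq$); hence all permissions are $\leq 1$. For the probabilistic component at each index $i$, $\m{\krnl}(i)(v)$ is by construction a distribution over the same \salgebra{} $\m{\sigmaF}(i)$ as $\m{\mu}(i)$, so $(\m{\sigmaF}(i), \m{\krnl}(i)(v))$ is a genuine probability space and never $\invalid$.

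The one point that needs a short argument is \emph{compatibility}: I must show $(\m{\sigmaF}(i), \m{\krnl}(i)(v)) \compat \m{\permap}(i)$. This follows because compatibility constrains only the \salgebra{}: validity of $(\m{\sigmaF}(i), \m{\mu}(i))$ in $\PSpPmRA$ means $(\m{\sigmaF}(i), \m{\mu}(i)) = \psp' \pprod \Triv{S \to \Val}$ with $S = \set{\p{x} | \m{\permap}(i)(\p{x}) = 0}$, so in particular $\m{\sigmaF}(i)$ is trivial on the coordinates in $S$. Since the only events of $\m{\sigmaF}(i)$ touching those coordinates are $\emptyset$ and the whole space, \emph{any} distribution over $\m{\sigmaF}(i)$ — in particular $\m{\krnl}(i)(v)$ — factors as a product with the trivial distribution on $S \to \Val$, which is exactly compatibility with $\m{\permap}(i)$. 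Thus the kernel resource is valid, and applying $K_1(v) \proves K_2(v)$ yields $K_2(v)(\m{\sigmaF}, \m{\krnl}(I)(v), \m{\permap})$, completing the conditional clause and hence the conclusion. I expect this compatibility check to be the only non-routine step; everything else is just re-using the shared witnesses.
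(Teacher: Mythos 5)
Your proof is correct and follows essentially the same route as the paper's: reuse the witnesses $\m{\salg},\m{\prob},\m{\permap},\m{\krnl}$ unchanged and apply the premise pointwise at each $v \in \psupp(\prob)$. The paper's proof simply applies the entailment to the kernel resource $(\m{\salg},\m{\krnl}(I)(v),\m{\permap})$ without spelling out why that resource is valid, so your explicit validity/compatibility check is extra care on a detail the paper elides rather than a divergence in approach.
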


\begin{proof}
  Assume $a \in \Model_I$ is such that
  $\raValid(a)$ and that it satisfies
  $ \CC{\prob} v.K(v) $.
  By definition, this means that,
  for some $ \m{\salg}_0, \m{\prob}_0,\m{\permap}_0,$ and $ \m{\krnl}_0 $:
  \begin{gather}
    (\m{\salg}_0, \m{\prob}_0,\m{\permap}_0) \raLeq a
    \label{c-cons:prob0}
    \\
    \forall i\in I\st
      \m{\prob}_0(i) = \bind(\prob, \m{\krnl}_0(i))
    \label{c-cons:prob0-bind}
    \\
    \forall v \in \psupp(\prob) \st
      K(v)(\m{\salg}_0, \m{\krnl}_0(I)(v), \m{\permap}_0)
    \label{c-cons:K-krnl0}
  \end{gather}
  Then by the premise $\forall v\st K(v) \proves K'(v)$
  and \eqref{c-cons:K-krnl0} we obtain
  \begin{equation}
    \forall v \in \psupp(\prob) \st
      K'(v)(\m{\salg}_0, \m{\krnl}_0(I)(v), \m{\permap}_0)
    \label{c-cons:K'-krnl0}
  \end{equation}
  By
  \eqref{c-cons:prob0}, \eqref{c-cons:prob0-bind}, and \eqref{c-cons:K'-krnl0}
  we get $ \CC{\prob} v.K'(v) $ as desired.
\end{proof} \begin{lemma}
\label{proof:c-frame}
  \Cref{rule:c-frame} is sound.
\end{lemma}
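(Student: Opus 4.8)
The plan is to argue directly in the semantic model, in the same style as the soundness proofs of \ref{rule:c-true}, \ref{rule:c-false}, and \ref{rule:c-cons}. First I would assume a valid $a \in \Model_I$ satisfying $(P * \CC\prob v.K(v))(a)$ and unfold both connectives. From the definition of $*$ I obtain $b_1, b_2$ with $(b_1 \raOp b_2) \raLeq a$, $P(b_1)$, and $(\CC\prob v.K(v))(b_2)$; validity of $a$ forces $b_1$ and $b_2$ to be valid, so I may write $b_1 = (\m{\sigmaF}_1, \m{\mu}_1, \m{\permap}_1)$. Unfolding the modality on $b_2$ yields $\m{\sigmaF}, \m{\mu}, \m{\permap}, \m{\krnl}$ with $(\m{\sigmaF}, \m{\mu}, \m{\permap}) \raLeq b_2$, with $\m{\mu}(i) = \bind(\prob, \m{\krnl}(i))$ for all $i$, and with $K(v)(\m{\sigmaF}, \m{\krnl}(I)(v), \m{\permap})$ for all $v \in \psupp(\prob)$.

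Next I would form the ``framed-in'' resource $c \is b_1 \raOp (\m{\sigmaF}, \m{\mu}, \m{\permap})$. By monotonicity of $\raOp$ with respect to $\raLeq$ we have $c \raLeq b_1 \raOp b_2 \raLeq a$, so $c$ is valid; componentwise $c(i) = (\m{\sigmaF}_1(i) \punion \m{\sigmaF}(i), \m{\mu}_1(i) \iprod \m{\mu}(i), \m{\permap}_1(i) \raOp \m{\permap}(i))$, and in particular each independent product $\m{\mu}_1(i) \iprod \m{\mu}(i)$ exists. The crucial step is to push $b_1$ into every fiber of the conditioning. Applying \cref{lemma:fibre-prod-exists} pointwise (instantiating its $\mu_1$ with $\m{\mu}(i)$, its $\krnl$ with $\m{\krnl}(i)$, and its $\prob_2$ with $\m{\mu}_1(i)$), the existence of $\m{\mu}(i) \iprod \m{\mu}_1(i)$ guarantees that for each $v \in \psupp(\prob)$ the fiber product $\m{\krnl}(i)(v) \iprod \m{\mu}_1(i)$ exists, and moreover $\m{\mu}(i) \iprod \m{\mu}_1(i) = \bind(\prob, \fun v. \m{\krnl}(i)(v) \iprod \m{\mu}_1(i))$. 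I then define the new kernel $\m{\krnl}'(i)(v)$ to be the distribution component of $\m{\krnl}(i)(v) \iprod \m{\mu}_1(i)$, set $\m{\sigmaF}' \is \m{\sigmaF} \punion \m{\sigmaF}_1$ and $\m{\permap}' \is \m{\permap} \raOp \m{\permap}_1$, and observe that $(\m{\sigmaF}', \bind(\prob, \m{\krnl}'), \m{\permap}') = c \raLeq a$, which discharges the first two conjuncts in the definition of $\CC\prob v.(P*K(v))$.

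Finally I would verify the fiber obligation: for each $v \in \psupp(\prob)$, $(P*K(v))$ must hold on $(\m{\sigmaF}', \m{\krnl}'(I)(v), \m{\permap}')$. Here I use exactly the witnesses $b_1$ and $(\m{\sigmaF}, \m{\krnl}(I)(v), \m{\permap})$: the former satisfies $P$ by assumption, the latter satisfies $K(v)$ from the unfolded modality, and by commutativity of $\iprod$ together with the fiber-product identity from \cref{lemma:fibre-prod-exists}, their composition is precisely $(\m{\sigmaF}', \m{\krnl}'(I)(v), \m{\permap}')$. Reflexivity of $\raLeq$ then yields $(P*K(v))(\m{\sigmaF}', \m{\krnl}'(I)(v), \m{\permap}')$, completing all three conjuncts and hence $\CC\prob v.(P*K(v))(a)$.

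The main obstacle is the measure-theoretic bookkeeping in the middle paragraph: everything hinges on the fact that an independent product of a $\bind$-decomposed space with another space can itself be realized as a $\bind$ whose fibers are the fiberwise independent products. This is exactly \cref{lemma:fibre-prod-exists}, so once that lemma is in hand the argument reduces to a careful pointwise composition. The only points requiring attention are the validity side conditions (ensuring the relevant independent products are defined, which the lemma supplies) and checking that the \salgebra\ union and permission addition line up on both sides, which follow from the componentwise definition of $\raOp$ on $\Model_I = \Hyp{\PSpPmRA}$.
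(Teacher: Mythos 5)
Your proof is correct and follows essentially the same route as the paper's: decompose the resource via $*$, unfold the modality, and use \cref{lemma:fibre-prod-exists} to realize the independent product of the frame with the bind-decomposed space as a bind whose fibers are the fiberwise independent products $\m{\prob}_1(i) \iprod \m{\krnl}(i)(v)$, which then witness $P * K(v)$ on each fiber. Your handling of the intermediate $(\m{\sigmaF},\m{\mu},\m{\permap}) \raLeq b_2$ from the modality's definition is in fact slightly more careful than the paper's, which tacitly identifies the modality's witness with $b_2$ itself.
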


\begin{proof}
  Assume $a \in \Model_I$ is such that
  $\raValid(a)$ and that it satisfies
  $ P * \CC{\prob} v.K(v) $.
  By definition, this means that
  there exist some
  $(\m{\salg}_1, \m{\prob}_1, \m{\permap}_1)$,
  $(\m{\salg}_2, \m{\prob}_2, \m{\permap}_2)$,
  and $\m{\krnl}$
  such that
  \begin{gather}
    (\m{\salg}_1, \m{\prob}_1, \m{\permap}_1)
    \raOp
    (\m{\salg}_2, \m{\prob}_2, \m{\permap}_2)
    \raLeq a
    \\
    P(\m{\salg}_1, \m{\prob}_1, \m{\permap}_1)
    \\
\forall i \in I.
      \m{\prob}_2(i) = \bind(\prob, \m{\krnl}(i))
    \\
    \forall v\in \psupp(\prob) \st
      K(v)(\m{\salg}_2, \m{\krnl}(I)(v), \m{\permap}_2)
  \end{gather}
Now let:
  \begin{align*}
    (\m{\salg}',\m{\prob}',\m{\permap}')
    &=
    (\m{\salg}_1(i), \m{\prob}_1(i)) \iprod (\m{\salg}_2(i), \m{\prob}_2(i))
  &
    \m{\krnl}'(i) &= \fun v. \m{\prob}_1(i) \iprod \m{\krnl}(i)(v)
  \end{align*}
  By~\cref{lemma:fibre-prod-exists}, for each~$i\in I$:
  \begin{align*}
    (\m{\salg}',\m{\prob}',\m{\permap}')
    &= (\m{\salg}_1(i), \m{\prob}_1(i)) \iprod (\m{\salg}_2(i), \m{\prob}_2(i))
    \\
    &= (\m{\salg}_1(i) \punion \m{\salg}_2(i),
       \bind(\prob, \fun v. \m{\prob}_1(i) \iprod \m{\krnl}(i)(v)))
   \tag*{(By \cref{lemma:fibre-prod-exists})}
    \\
    &= (\m{\salg}_1(i) \punion \m{\salg}_2(i),
       \bind(\prob, \m{\krnl}'(i)))
  \end{align*}
  Notice that $ \m{\krnl}'(I)(v) = \m{\prob}_1 \iprod \m{\krnl}(I)(v) $.
  Thus we obtain:
  \begin{gather}
    (\m{\salg}',\m{\prob}',\m{\permap}')
    \raLeq a
    \\
\forall i \in I.
      \m{\prob}'(i) = \bind(\prob, \m{\krnl}'(i))
    \\
    \intertext{and for all $v \in \psupp(\prob)$,}
(\m{\salg}_1, \m{\prob}_1, \m{\permap}_1)
      \iprod
    (\m{\salg}_2, \m{\krnl}(I)(v), \m{\permap}_2)
    =
    (\m{\salg}', \m{\prob}_1 \iprod \m{\krnl}(I)(v), \m{\permap}')
    \raLeq
    (\m{\salg}', \m{\krnl}'(I)(v),\m{\permap}')
    \\
    P(\m{\salg}_1, \m{\prob}_1, \m{\permap}_1)
    \\
    K(v)(\m{\salg}_2, \m{\krnl}(I)(v), \m{\permap}_2)
  \end{gather}
  which gives us that $a$ satisfies
  $ \CC{\prob} v.(P * K(v)) $ as desired.
\end{proof} \begin{lemma}
\label{proof:c-unit-l}
  \Cref{rule:c-unit-l} is sound.
\end{lemma}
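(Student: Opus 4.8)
The plan is to prove both entailments of the equivalence directly from \cref{def:c-mod}, using only the monadic left-unit law \eqref{prop:bind-unit-l}, namely $\bind(\dirac{v_0}, \krnl) = \krnl(v_0)$, together with the fact that $\psupp(\dirac{v_0}) = \set{v_0}$. The intuition is that conditioning on a Dirac distribution carries no information: the only branch $v_0$ recovers the whole space, so the modality collapses to its body evaluated at $v_0$.

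For the forward direction $\CC{\dirac{v_0}} v.K(v) \proves K(v_0)$, I would take a valid $a \in \Model_I$ satisfying the left-hand side and unfold \cref{def:c-mod} to obtain witnesses $\m{\sigmaF}, \m{\mu}, \m{\permap}, \m{\krnl}$ with $(\m{\sigmaF}, \m{\mu}, \m{\permap}) \raLeq a$, with $\m{\mu}(i) = \bind(\dirac{v_0}, \m{\krnl}(i))$ for every $i \in I$, and with the conditional obligation holding on all $v \in \psupp(\dirac{v_0})$. Since $\psupp(\dirac{v_0}) = \set{v_0}$, the only obligation is $K(v_0)(\m{\sigmaF}, \m{\krnl}(I)(v_0), \m{\permap})$. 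By the left-unit law, $\m{\mu}(i) = \bind(\dirac{v_0}, \m{\krnl}(i)) = \m{\krnl}(i)(v_0)$ for each $i$, hence the indexed family $\m{\krnl}(I)(v_0)$ coincides with $\m{\mu}$. Thus $K(v_0)$ holds on $(\m{\sigmaF}, \m{\mu}, \m{\permap})$, and since this resource is below $a$ and $K(v_0)$ is upward-closed, $K(v_0)(a)$ follows.

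For the backward direction $K(v_0) \proves \CC{\dirac{v_0}} v.K(v)$, I would take a valid $a = (\m{\sigmaF}, \m{\mu}, \m{\permap})$ satisfying $K(v_0)$ and supply the \emph{constant} kernel $\m{\krnl}(i) \is \fun v. \m{\mu}(i)$ as the witness required by \cref{def:c-mod}. The bind condition is then immediate from the left-unit law, $\bind(\dirac{v_0}, \m{\krnl}(i)) = \m{\krnl}(i)(v_0) = \m{\mu}(i)$; the single support obligation at $v_0$ reduces to $K(v_0)(\m{\sigmaF}, \m{\mu}, \m{\permap}) = K(v_0)(a)$, which is exactly the assumption; and $(\m{\sigmaF}, \m{\mu}, \m{\permap}) \raLeq a$ holds by reflexivity of $\raLeq$.

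I do not expect a genuine obstacle here: the rule is essentially a bookkeeping reflection of \eqref{prop:bind-unit-l} through the definition of the modality. The only points needing care are (i) recognizing that $\m{\krnl}(I)(v_0)$ equals $\m{\mu}$ in the forward direction, and (ii) choosing the constant kernel in the backward direction so that the bind condition is satisfied; both become routine once the support is pinned down to the single point $v_0$.
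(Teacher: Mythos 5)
Your proof is correct and is exactly the direct unfolding the paper has in mind (its own proof of this lemma is just ``Straightforward''): the forward direction uses $\psupp(\dirac{v_0})=\set{v_0}$ and the left-unit law to identify $\m{\krnl}(I)(v_0)$ with $\m{\mu}$, and the backward direction witnesses the modality with the constant kernel. Both steps check out against \cref{def:c-mod}, so nothing is missing.
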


\begin{proof}
  Straightforward.
\end{proof} \begin{lemma}
\label{proof:c-unit-r}
  \Cref{rule:c-unit-r} is sound.
\end{lemma}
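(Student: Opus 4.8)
The statement is a logical equivalence, so the plan is to prove the two entailments separately, in each case unfolding the definition of $\distAs{\aexpr\at{i}}{\prob}$ and of the \supercond\ modality (\cref{def:c-mod}). Conceptually this rule is nothing but the monadic right-unit law $\bind(\prob,\return)=\prob$ reflected into the logic, and the two constructions I give below will be mutually inverse up to almost-measurability bookkeeping. In both directions the indices $j \ne i$ are handled trivially by taking the constant kernel $\m{\krnl}(j) = \fun v.\m{\mu}(j)$, for which $\bind(\prob,\m{\krnl}(j)) = \m{\mu}(j)$; since $\sure{\aexpr\at{i}=v}$ constrains only index~$i$, only the kernel at~$i$ requires real work, and the carried probability space $(\m{\sigmaF},\m{\mu},\m{\permap})$ is the same on both sides, so the $\raLeq a$ clause of the modality is immediate.

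For the forward direction I would start from an owned $(\m{\sigmaF},\m{\mu},\m{\permap})$ with $\almostM{\aexpr}{(\m{\sigmaF}(i),\m{\mu}(i))}$ and $\prob = \m{\mu}(i)\circ\inv{\aexpr}$, and define, for each $v \in \psupp(\prob)$ (so $\prob(v) > 0$), the kernel
\[
  \m{\krnl}(i)(v) \is
    \fun \event.\ \frac{\m{\mu}(i)(\event \inters \inv{\aexpr}(v))}{\prob(v)},
\]
extended arbitrarily off the support. This is a genuine distribution on $\m{\sigmaF}(i)$ because the intersection of a measurable set with an almost-measurable set is again almost measurable and hence carries a well-defined probability (a fact I would need to check). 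Using the witnesses $L_v \subs \inv{\aexpr}(v) \subs U_v$ of $\almostM{\aexpr}{(\m{\sigmaF}(i),\m{\mu}(i))}$ with $\m{\mu}(i)(L_v)=\m{\mu}(i)(U_v)=\prob(v)$, one verifies $\m{\krnl}(i)(v)(\inv{\aexpr}(v))=1$, giving $\sure{\aexpr\at{i}=v}$ on the slice. Finally, since $\set{\inv{\aexpr}(v)}_{v\in\psupp(\prob)}$ exhausts $\Store$ up to a $\m{\mu}(i)$-null set, countable additivity yields $\bind(\prob,\m{\krnl}(i))(\event) = \sum_{v}\m{\mu}(i)(\event\inters\inv{\aexpr}(v)) = \m{\mu}(i)(\event)$, which is precisely the decomposition the modality demands.

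For the backward direction I would assume $\m{\mu}(i)=\bind(\prob,\m{\krnl}(i))$ with each $\m{\krnl}(i)(v)$ making $\aexpr=v$ almost surely. Recovering the distribution is easy: for $w\ne v$ the events $\inv{\aexpr}(w),\inv{\aexpr}(v)$ are disjoint and $\m{\krnl}(i)(w)$ concentrates on $\inv{\aexpr}(w)$, so $\m{\krnl}(i)(w)(\inv{\aexpr}(v))=0$ and hence $\m{\mu}(i)(\inv{\aexpr}(v)) = \prob(v)$, i.e. $\m{\mu}(i)\circ\inv{\aexpr}=\prob$. \emph{The main obstacle} is recovering almost-measurability of $\aexpr$ in the \emph{mixture} $\m{\mu}(i)$ from witnesses that hold only per value: the lower witness $L_v$ transfers directly (one computes $\m{\mu}(i)(L_v)=\prob(v)$), but the per-$v$ upper witness $U_v$ can overlap $\inv{\aexpr}(w)$ for $w\ne v$ and so overshoot under the mixture. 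The fix is to trim it, setting
\[
  U \is U_v \setminus \Union_{w\in\psupp(\prob),\ w\ne v} L_w,
\]
which still contains $\inv{\aexpr}(v)$ (each $L_w\subs\inv{\aexpr}(w)$ is disjoint from $\inv{\aexpr}(v)$) and lies in $\m{\sigmaF}(i)$. Since $\m{\krnl}(i)(w)(L_w)=1$ we get $\m{\krnl}(i)(w)(U)=0$ for every $w\ne v$ while $\m{\krnl}(i)(v)(U)=1$, so $\m{\mu}(i)(U)=\prob(v)=\m{\mu}(i)(L_v)$, completing the witness and hence $\almostM{\aexpr}{(\m{\sigmaF}(i),\m{\mu}(i))}$. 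I expect the verification that intersections and complements of almost-measurable sets behave well enough for these computations to be the most tedious part, but no genuinely new idea beyond the trimming trick should be required.
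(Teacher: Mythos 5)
Your proof is correct and follows essentially the same route as the paper's: the forward direction uses the same conditional-distribution kernel $\m{\krnl}(i)(v) = \fun \event.\,\m{\mu}(i)(\event \inters \inv{\aexpr}(v))/\prob(v)$ (well-defined by almost-measurability and $v \in \psupp(\prob)$, with constant kernels at $j \ne i$), and the backward direction recovers $\prob = \m{\mu}(i)\circ\inv{\aexpr}$ from the per-slice Dirac distributions exactly as the paper does. Your trimming construction $U_v \setminus \Union_{w \ne v} L_w$ for the upper almost-measurability witness of the mixture is a detail the paper asserts without spelling out, and it is a correct way to discharge it.
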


\begin{proof}
  We prove the two directions separately.

  \begin{casesplit}
  \case*[Forward direction $\distAs{\aexpr\at{i}}{\prob} \proves \CC\prob v.\sure{\aexpr\at{i} = v}$]
    By unfolding the assumption $\distAs{\aexpr\at{i}}{\prob}$ we get
    that there exist $\m{\salg},\m{\prob}$ such that:
    \[
      \Own{\m{\salg},\m{\prob}}
      *
      \pure{
        \almostM{\aexpr}{(\m{\salg}(i),\m{\prob}(i))}
      }
      *
      \pure{
        \prob = \m{\prob}(i) \circ \inv{\aexpr}
      }
    \]
    holds.
    Let
    \begin{align*}
      \m{\krnl} &\is
      \fun j.
        \begin{cases}
          \fun v. \m{\prob}(j) \CASE j \ne i
          \\
          \fun v. \gamma_v     \CASE j=i
        \end{cases}
      &
      \gamma_v &\is
        \fun \event \of \m{\salg}(i).
          \frac{\m{\prob}(i)(\event \inters \inv{(\aexpr=v)})}
               {\m{\prob}(i)(\inv{(\aexpr=v)})}
    \end{align*}
    That is, $\m{\krnl}(j)$ maps every~$v$ to $\m{\prob}(j)$ when $i\ne j$,
    while when $i=j$ it maps~$v$ to the distribution $\m{\prob}(i)$ conditioned on $\aexpr=v$.
    Note that $\m{\krnl}$ is well defined because
    \begin{enumerate*}
      \item
        although the events
        $\event \inters \inv{(\aexpr=v)}$ and
        $\inv{(\aexpr=v)}$
        might not belong to $\m{\salg}(i)$,
        their probability is uniquely determined
        by almost measurability of $\aexpr$;
      \item
        we are only interested in the cases where~$v \in \psupp(\prob)$,
        which implies that the denominator is not zero:
        $\m{\prob}(i)(\inv{(\aexpr=v)}) = \prob(v) > 0$.
    \end{enumerate*}
    By construction we obtain that
    \begin{gather}
      \forall j \in I\st
        \m{\prob}(j) = \bind(\prob, \m{\krnl}(j))
      \label{c-unit-r:bind}
      \\
      \forall v\in \psupp(\prob)\st
        \m{\krnl}(i)(v)(\inv{(E=v)}) = 1
      \label{c-unit-r:prob1}
    \end{gather}
    From \eqref{c-unit-r:prob1} we get that
    $\sure{\aexpr\at{i} = v}$ holds on
    $(\m{\salg}(i), \m{\krnl}(i)(v), \m{\permap}(i))$,
    from which it follows that:
    \[
      \Own{\m{\salg}, \m{\krnl}(I)(v), \m{\permap}}
      \wand \sure{\aexpr\at{i} = v}
    \]
    Therefore we obtain
    \begin{align*}
      & \E \m{\salg},\m{\prob}, \m{\krnl}, \m{\permap}.
          \Own{\m{\salg},\m{\prob}, \m{\permap}} *
          \pure{\forall j \in I. \m{\prob}(j) = \bind(\prob, \m{\krnl}(j))} \\
      & \qquad \qquad  *
          (
            \forall v \in A_{\prob}.
              \Own{\m{\salg}, \m{\krnl}(I)(v), \m{\permap}}
              \wand \sure{\aexpr\at{i} = v}
          )
    \end{align*}
    which gives us $ \CC\prob v.\sure{\aexpr\at{i} = v} $
    by \cref{prop:cond-as-wand}.

\case*[Backward direction $\CC\prob v.\sure{\aexpr\at{i} = v} \proves \distAs{\aexpr\at{i}}{\prob}$]
    First note that
    \begin{align*}
      \sure{\aexpr\at{i} = v} &(\m{\salg}, \m{\krnl}(v), \m{\permap})
      \\
      &\iff
        \bigl(\distAs{((\aexpr = v) \in \true)\at{i}}{\delta_{\True}}\bigr)
          (\m{\salg}, \m{\krnl}(I)(v), \m{\permap})
      \\
      &\iff
        \almostM{((\aexpr = v) \in \true)}{(\m{\salg}(i), \m{\krnl}(i)(v))}
        \land
        \delta_{\True} =
          \m{\krnl}(i)(v) \circ \inv{((\aexpr = v) \in \true)}
      \\
      &\iff
        \almostM{((\aexpr = v) \in \true)}{(\m{\salg}(i), \m{\krnl}(i)(v))}
        \land
        \delta_{v} = \m{\krnl}(i)(v) \circ \inv{\aexpr}
    \end{align*}

for some $\m{\krnl}$.
    This implies
    $\pure{\almostM{E}{\m{\salg}(i), \m{\krnl}(i)(v)}}$.
    Then, for any value $v \in \psupp(\prob)$,
    \begin{align*}
      \m{\prob}(i) \circ \inv{\aexpr}(v)
      &=(\bind(\prob, \m{\krnl}(i) ) \circ \inv{\aexpr})(v)\\
      &=  \bind(\prob, \m{\krnl}(i) )  (\inv{\aexpr}(v)) \\
      &=  \sum_{\mathclap{v'\in\psupp(\prob)}}  \prob(v') \cdot \m{\krnl}(i)(v') (\inv{\aexpr}(v)) \\
      &=  \sum_{\mathclap{v'\in\psupp(\prob)}}  \prob(v') \cdot (\m{\krnl}(i)(v') \circ \inv{\aexpr}) (v) \\
      &=  \sum_{\mathclap{v'\in\psupp(\prob)}}  \prob(v') \cdot \dirac{v'} (v) \\
&=  \prob(v)
    \end{align*}
    This implies the pure facts that
    $ \almostM{\aexpr}{(\m{\salg}(i),\m{\prob}(i))}$ and
    $\prob = \m{\prob}(i) \circ \inv{\aexpr}$.
    Therefore:
    \begin{eqexplain}
      \CC\prob v. \sure{\aexpr\at{i} = v} \notag
      \whichproves*
      \E \m{\salg},\m{\prob}, \m{\krnl}, \m{\permap}.
          \Own{\m{\salg},\m{\prob}, \m{\permap}} *
          \pure{\forall j \in I. \m{\prob}(j) = \bind(\prob, \m{\krnl}(j))} \notag \\
          & \qquad \qquad  *
          (\forall v \in A_{\prob}.
          \Own{\m{\salg}, \m{\krnl}(I)(v), \m{\permap}}
          \wand
          \sure{\aexpr\at{i} = v}
          )
      \whichproves
      \E \m{\salg},\m{\prob}.
        \Own{\m{\salg},\m{\prob}} *
          \pure{
            \almostM{\aexpr}{(\m{\salg}(i),\m{\prob}(i))}}
            \ast
          \pure{
            \prob = \m{\prob}(i) \circ \inv{\aexpr}
        }
       \whichproves
       \distAs{\aexpr\at{i}}{\prob}
       \qedhere
    \end{eqexplain}
  \end{casesplit}
\end{proof}
 \begin{lemma}
\label{proof:c-assoc}
  \Cref{rule:c-assoc} is sound.
\end{lemma}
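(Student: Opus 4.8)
The plan is to unfold both \supercond\ modalities directly against \cref{def:c-mod} and then reassemble the two nested decompositions into a single one indexed by pairs $(v,w)$. The only genuine difficulty is that the inner conditioning is witnessed \emph{separately for each} $v\in\psupp(\prob)$, on a probability space that may be strictly coarser than the outer one, whereas the conclusion $\CC{\prob_0}(v,w).K(v,w)$ demands one kernel living on a single common \salgebra{}. I will bridge this gap with \cref{lemma:bind-extend}, which is exactly the tool for lifting a $\bind$-decomposition along an extension of \salgebra[s].

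Concretely, assume a valid $a$ satisfies the premise. Unfolding the outer modality yields $(\m{\sigmaF},\m{\mu},\m{\permap})\raLeq a$, a witness kernel $\m{g}$ (this is the $\m{\krnl}$ of \cref{def:c-mod}, not to be confused with the given Markov kernel $\krnl$) with $\m{\mu}(i)=\bind(\prob,\m{g}(i))$ for all $i$, and the fact that $\CC{\krnl(v)}w.K(v,w)$ holds on $(\m{\sigmaF},\m{g}(I)(v),\m{\permap})$ for every $v\in\psupp(\prob)$. Unfolding this inner assertion, invoking choice over the countable set $\psupp(\prob)$, gives for each $v$ a fragment $(\m{\sigmaF}_v,\m{\nu}_v,\m{\permap}_v)\raLeq(\m{\sigmaF},\m{g}(I)(v),\m{\permap})$ together with an inner kernel $\m{\ell}_v$ such that $\m{\nu}_v(i)=\bind(\krnl(v),\m{\ell}_v(i))$ and $K(v,w)$ holds on $(\m{\sigmaF}_v,\m{\ell}_v(I)(w),\m{\permap}_v)$ for all $w\in\psupp(\krnl(v))$. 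Since $\raLeq$ is the componentwise extension order, this unpacks to $\m{\sigmaF}_v(i)\subseteq\m{\sigmaF}(i)$, $\m{\permap}_v\raLeq\m{\permap}$, and $\m{\nu}_v(i)=\restr{\m{g}(i)(v)}{\m{\sigmaF}_v(i)}$, i.e.\ $(\m{\sigmaF}_v(i),\m{\nu}_v(i))\extTo(\m{\sigmaF}(i),\m{g}(i)(v))$.

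Now I apply \cref{lemma:bind-extend} once for each pair $(i,v)$, taking the coarse space $(\m{\sigmaF}_v(i),\m{\nu}_v(i))$, the discrete distribution $\krnl(v)$, the kernel $\m{\ell}_v(i)$, and the finer extension $(\m{\sigmaF}(i),\m{g}(i)(v))$. This produces a kernel $\hat{\m{\ell}}_v(i)$ over $\m{\sigmaF}(i)$ with $\m{g}(i)(v)=\bind(\krnl(v),\hat{\m{\ell}}_v(i))$ and $(\m{\sigmaF}_v(i),\m{\ell}_v(i)(w))\extTo(\m{\sigmaF}(i),\hat{\m{\ell}}_v(i)(w))$ for $w\in\psupp(\krnl(v))$. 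I then define the combined kernel $\m{\kappa}(i)(v,w)\is\hat{\m{\ell}}_v(i)(w)$. Recalling that the side condition states precisely $\prob_0=\prob\fuse\krnl$, so that $\prob_0(v,w)=\prob(v)\cdot\krnl(v)(w)$ and $\psupp(\prob_0)=\set{(v,w)\mid v\in\psupp(\prob),\,w\in\psupp(\krnl(v))}$, a direct computation (unfolding $\bind$ and summing over $w$ first) gives
\[
  \bind(\prob_0,\m{\kappa}(i))(\event)
    =\sum_{v}\prob(v)\,\bind(\krnl(v),\hat{\m{\ell}}_v(i))(\event)
    =\sum_{v}\prob(v)\,\m{g}(i)(v)(\event)
    =\m{\mu}(i)(\event),
\]
so $\m{\mu}(i)=\bind(\prob_0,\m{\kappa}(i))$.

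Finally, for each $(v,w)\in\psupp(\prob_0)$ the extension property above gives $(\m{\sigmaF}_v,\m{\ell}_v(I)(w),\m{\permap}_v)\raLeq(\m{\sigmaF},\m{\kappa}(I)(v,w),\m{\permap})$, so upward-closure of $K(v,w)$ lifts the inner fact to $K(v,w)$ holding on $(\m{\sigmaF},\m{\kappa}(I)(v,w),\m{\permap})$. Together with $(\m{\sigmaF},\m{\mu},\m{\permap})\raLeq a$ and the decomposition $\m{\mu}(i)=\bind(\prob_0,\m{\kappa}(i))$, these are exactly the witnesses required by \cref{def:c-mod} for $\CC{\prob_0}(v,w).K(v,w)$ on $a$, which closes the argument. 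I expect the extension step via \cref{lemma:bind-extend}---reconciling the per-$v$ coarse \salgebra[s] with the shared outer one while preserving the $\bind$-decomposition and feeding into upward-closure of $K$---to be the crux; the remaining $\bind$-identity is routine monadic bookkeeping.
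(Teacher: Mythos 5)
Your proof is correct and follows essentially the same route as the paper's: unfold both modalities, use \cref{lemma:bind-extend} per $v$ to lift the inner kernels from the per-$v$ coarse \salgebra[s] to the shared outer one, assemble the pair-indexed kernel, check the $\bind$ identity, and finish by upward-closure of $K$. The only (cosmetic) difference is that the paper first reparametrizes the inner modality via \ref{rule:c-transf} and \ref{rule:c-cons} so that the inner kernel is already indexed by pairs $(v',w)$, whereas you build the pair-indexed kernel $\m{\kappa}(i)(v,w)\is\hat{\m{\ell}}_v(i)(w)$ directly by dispatching on the first component; both handle the crux identically.
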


\begin{proof}
  Define $\krnl' = \fun v . \bind(\krnl(v), \fun w . \return(v, w))$.
  We start by rewriting the assumption $\CC{\prob} v.\CC{\krnl(v)} w.K(v,w)$ so that $k'$ is used and~$K$ depends only on the binding of the innermost modality:
  \begin{eqexplain}
    \CC{\prob} v.\CC{\krnl(v)} w.K(v,w)
    \whichproves*
    \CC{\prob} v.\CC{\krnl'(v)} (v',w).K(v,w)
    \byrules{c-transf,c-cons}
    \whichproves
    \CC{\prob} v.\CC{\krnl'(v)} (v',w).K(v',w)
    \byrules{c-pure,c-cons}
  \end{eqexplain}
  \Cref{rule:c-transf} is applied to the innermost modality
  by using the bijection $f_v(w) = (v,w)$.
  Then, since $(v',w) \in \psupp(k'(v)) \implies v=v'$,
  we can replace~$v'$ for~$v$ in~$K$.

  Our goal is now to prove:
  \[
    \CC{\prob} v.\CC{\krnl'(v)} (v',w).K(v',w)
    \proves
    \CC{\bind(\prob,\krnl')} (v',w).K(v',w)
  \]

  Let $a\in\Model_I$ be such that $ \raValid(a) $ and that it satisfies
  $ \CC{\prob} v.\CC{\krnl'(v)} (v',w).K(v',w). $
  From this assumption we know that,
  for some $ \m{\salg}_0, \m{\prob}_0,\m{\permap}_0,$ and $ \m{\krnl}_0 $:
  \begin{gather}
    (\m{\salg}_0, \m{\prob}_0,\m{\permap}_0) \raLeq a
    \label{c-assoc:prob0}
    \\
    \forall i\in I\st
      \m{\prob}_0(i) = \bind(\prob, \m{\krnl}_0(i))
    \label{c-assoc:prob0-bind}
  \end{gather}
  such that $\forall v \in \psupp(\prob)$,
  there are some
  $ \m{\salg}_1^v, \m{\prob}_1^v,\m{\permap}_1^v,$ and $ \m{\krnl}_1^v $
  satisfying:
  \begin{gather}
    (\m{\salg}_1^v, \m{\prob}_1^v,\m{\permap}_1^v)
    \raLeq
    (\m{\salg}_0, \m{\krnl}_0(I)(v),\m{\permap}_0)
    \label{c-assoc:prob1}
    \\
    \forall i\in I\st
      \m{\prob}_1^v(i) = \bind(\krnl'(v), \m{\krnl}_1^v(i))
    \label{c-assoc:prob1-bind}
    \\
    \forall (v',w) \in \psupp(\krnl'(v)) \st
      K(v',w)(\m{\salg}_1^v, \m{\krnl}_1^v(I)(v',w), \m{\permap}_1^v)
    \label{c-assoc:K-krnl1}
  \end{gather}

  Our goal is to prove
  $ \CC{\bind(\prob,\krnl')} (v',w).K(v',w) $ holds on $a$.
  To this end, we want to show that
  there exists $\m{\krnl}_2'$ such that:
\begin{gather}
    \forall i\in I\st
      \m{\prob}_0(i) = \bind(\bind(\prob,\krnl'), \m{\krnl}_2'(i))
    \label{c-assoc:goal1}
    \\
    \forall (v',w) \in \psupp(\bind(\prob,\krnl')) \st
      K(v', w)(\m{\salg}_0 , \m{\krnl}_2'(I) (v'), \m{\permap}_0)
    \label{c-assoc:goal2}
  \end{gather}

  Now let
  \[
    \m{\krnl}_2(i) = \fun (v', w). \m{\krnl}_1^{v'}(i)(v', w).
  \]
  which by construction and \cref{c-assoc:prob1-bind} gives us
  \[
    \m{\prob}_1^v(i)
    = \bind(\krnl'(v), \m{\krnl}_1^v(i))
    = \bind(\krnl'(v), \m{\krnl}_2(i))
  \]
  Therefore, by \cref{c-assoc:prob1}, we can apply \cref{lemma:bind-extend}
  and obtain that there exists a $\m{\krnl}_2'$ such that
  \begin{gather}
    \m{\krnl}_0(i)(v) = \bind(\krnl'(v), \m{\krnl}_2'(i))
    \label{c-assoc:k0-bind2}
    \\
    \bigl(\m{\salg}_0, \m{\krnl}_2'(i)(v',w)\bigr)
    \extOf
    \bigl(\m{\salg}_1^{v'}, \m{\krnl}_2(i)(v',w)\bigr)
    =
    \bigl(\m{\salg}_1^{v'},\m{\krnl}_1^{v'}(i)(v',w)\bigr)
    \label{c-assoc:kgeq}
  \end{gather}
  By \cref{c-assoc:prob0-bind,c-assoc:k0-bind2}
  we have:
  \begin{align*}
    \m{\prob}_0(i)
    &= \bind(\prob, \m{\krnl}_0(i)) \\
    &= \bind(\prob, \fun v.\bind(\krnl'(v), \m{\krnl}_2'(i)))
      &\text{By \eqref{prop:bind-assoc}}\\
    &= \bind(\bind(\prob, \krnl'), \m{\krnl}_2'(i))
  \end{align*}
  which proves \cref{c-assoc:goal1}.

  Finally, to prove \cref{c-assoc:goal2}, we can observe that
  $(v',w) \in \psupp(\bind(\prob,\krnl'))$ implies $v'\in \psupp(\prob)$;
  therefore, by \eqref{c-assoc:K-krnl1}, upward closure of $K(v',w)$, and
  \eqref{c-assoc:kgeq} and \eqref{c-assoc:prob1},
  we can conclude~$K(v',w)$ holds on
  $(\m{\salg}_0 , \m{\krnl}_2'(I) (v'), \m{\permap}_0)$,
  as desired.
\end{proof} \begin{lemma}
\label{proof:c-unassoc}
  \Cref{rule:c-unassoc} is sound.
\end{lemma}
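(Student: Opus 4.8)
The plan is to prove the entailment semantically by unfolding both sides of the \supercond\ modality and exploiting the associativity law \eqref{prop:bind-assoc} of $\bind$; this rule is exactly the converse of \ref{rule:c-assoc}, and like it should follow directly from monadic associativity together with a containment of supports. First I would take a valid $a \in \Model_I$ satisfying the premise $\CC{\bind(\prob,\krnl)} w.K(w)$. By \cref{def:c-mod} this yields witnesses $\m{\salg}_0, \m{\prob}_0, \m{\permap}_0, \m{\krnl}_0$ with $(\m{\salg}_0, \m{\prob}_0, \m{\permap}_0) \raLeq a$, with $\m{\prob}_0(i) = \bind(\bind(\prob,\krnl), \m{\krnl}_0(i))$ for every $i \in I$, and with $K(w)(\m{\salg}_0, \m{\krnl}_0(I)(w), \m{\permap}_0)$ for every $w \in \psupp(\bind(\prob,\krnl))$.

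The crucial step is to regroup the iterated bind. By \eqref{prop:bind-assoc} we have, for each $i$,
\[
  \m{\prob}_0(i)
  = \bind(\bind(\prob,\krnl), \m{\krnl}_0(i))
  = \bind(\prob, \fun v.\bind(\krnl(v), \m{\krnl}_0(i))).
\]
This suggests defining the \emph{outer} kernel $\m{\krnl}^{\star}(i) \is \fun v.\bind(\krnl(v), \m{\krnl}_0(i))$, so that $\m{\prob}_0(i) = \bind(\prob, \m{\krnl}^{\star}(i))$. Taking $(\m{\salg}_0, \m{\prob}_0, \m{\permap}_0)$ as the owned fragment and $\m{\krnl}^{\star}$ as the witnessing kernel then discharges the two outer obligations of $\CC\prob v.(\dots)$; it remains to show that the inner assertion $\CC{\krnl(v)} w.K(w)$ holds on the fibre resource $(\m{\salg}_0, \m{\krnl}^{\star}(I)(v), \m{\permap}_0)$ for every $v \in \psupp(\prob)$.

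For the inner modality I would simply \emph{reuse} $\m{\krnl}_0$ as the witnessing kernel. By construction $\m{\krnl}^{\star}(i)(v) = \bind(\krnl(v), \m{\krnl}_0(i))$, which is exactly the bind condition required by $\CC{\krnl(v)} w.K(w)$, and reflexivity of $\raLeq$ gives the ownership requirement on the nose. The one genuine inference is the $K$-obligation: I must establish $K(w)(\m{\salg}_0, \m{\krnl}_0(I)(w), \m{\permap}_0)$ for each $w \in \psupp(\krnl(v))$. This follows from the premise's $K$-hypothesis once I observe the support containment $v \in \psupp(\prob) \land w \in \psupp(\krnl(v)) \implies w \in \psupp(\bind(\prob,\krnl))$, which holds because $\bind(\prob,\krnl)(w) = \sum_{v'} \prob(v')\krnl(v')(w) \geq \prob(v)\krnl(v)(w) > 0$.

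The main obstacle, such as it is, is bookkeeping rather than conceptual: I must be careful that the \salgebra\ $\m{\salg}_0$ and permissions $\m{\permap}_0$ threaded through the two nested modalities are genuinely the \emph{same} fragment, so that no recombination or extension step is needed. Unlike \ref{rule:c-assoc} (which must \emph{extend} a coarser kernel via \cref{lemma:bind-extend} to reconcile the two nesting levels), the un-association direction lets us keep a single kernel $\m{\krnl}_0$ at both levels, so no measure-theoretic extension lemma is required; the only nontrivial fact used is the support inclusion above.
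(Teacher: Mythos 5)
Your proposal is correct and follows essentially the same route as the paper's proof: both unfold the premise, apply monadic associativity to regroup the bind, take the outer kernel to be $\fun v.\bind(\krnl(v), \m{\krnl}_0(i))$ and reuse $\m{\krnl}_0$ itself as the inner kernel, and close with the support inclusion $w \in \psupp(\krnl(v)) \implies w \in \psupp(\bind(\prob,\krnl))$ for $v \in \psupp(\prob)$. Your closing remark that, unlike \ref{rule:c-assoc}, no extension lemma is needed here is also accurate.
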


\begin{proof}
  Assume $a \in \Model_I$ is such that
  $\raValid(a)$ and that it satisfies
  $ \CC{\bind(\prob,\krnl)} w.K(w) $.
  By definition, this means that,
  for some $ \m{\salg}_0, \m{\prob}_0,\m{\permap}_0,$ and $ \m{\krnl}_0 $:
  \begin{gather}
    (\m{\salg}_0, \m{\prob}_0,\m{\permap}_0) \raLeq a
    \label{c-unassoc:prob0}
    \\
    \forall i\in I\st
      \m{\prob}_0(i) = \bind(\bind(\prob, \krnl), \m{\krnl}_0(i))
    \label{c-unassoc:prob0-bind}
    \\
    \forall w \in \psupp(\bind(\prob, \krnl)) \st
      K(w)(\m{\salg}_0, \m{\krnl}_0(I)(w), \m{\permap}_0)
    \label{c-unassoc:K-krnl0}
  \end{gather}
  Our goal is to show that~$a$ satisfies
  $\CC\prob v. \CC{\krnl(v)} w.K(w)$,
  for which it would suffice to show that there is a $\m{\krnl}_1$
  such that:
  \begin{gather}
    \forall i\in I\st
      \m{\prob}_0(i) = \bind(\prob, \m{\krnl}_1(i))
    \label{c-unassoc:prob0-bind1}
    \\
    \intertext{
      and for all $v \in \psupp(\prob)$
      there is a $\m{\krnl}_2^v$ with
    }
    \forall i\in I\st
      \m{\krnl}_1(i)(v) = \bind(\krnl(v), \m{\krnl}_2^v(i))
    \label{c-unassoc:krnl1-bind}
    \\
    \forall w \in \psupp(\krnl(v)) \st
      K(w)(\m{\salg}_0, \m{\krnl}_2^v(I)(w), \m{\permap}_0)
    \label{c-unassoc:K-krnl1}
  \end{gather}

  To prove this we let
  \begin{align*}
    \m{\krnl}_1(i) &= \fun v.\bind(\krnl(v), \m{\krnl}_0(i))
    &
    \m{\krnl}_2^v(i) &= \m{\krnl}_0(i)
  \end{align*}

  By \eqref{prop:bind-assoc} we have
  \[
    \m{\prob}_0(i)
    = \bind(\bind(\prob, \krnl), \m{\krnl}_0(i))
    = \bind(\prob, \fun v.\bind(\krnl(v), \m{\krnl}_0(i)))
    = \bind(\prob, \m{\krnl}_1(i))
  \]
  which proves \eqref{c-unassoc:prob0-bind1}.
  By construction,
  \[
    \m{\krnl}_1(i)(v)
    = \bind(\krnl(v), \m{\krnl}_0(i))
    = \bind(\krnl(v), \m{\krnl}_2^v(i))
  \]
  proving \eqref{c-unassoc:krnl1-bind}.
  Finally,
  $v \in \psupp(\prob)$ and $w \in \psupp(\krnl(v))$
  imply $w \in \psupp(\bind(\prob, \krnl))$,
  so by \eqref{c-unassoc:K-krnl0} we proved
  \eqref{c-unassoc:K-krnl1}, concluding the proof.
\end{proof}
 \begin{lemma}
\label{proof:c-and}
  \Cref{rule:c-and} is sound.
\end{lemma}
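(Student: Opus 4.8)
The plan is to unfold both conjuncts, obtain two independent decompositions of (fragments of)~$a$, and then stitch them together \emph{index by index}, exploiting the disjointness of the relevant indices so that at no single index are we forced to reconcile the two (generally incompatible) kernels. Concretely, suppose $a \in \Model_I$ is valid and satisfies both conjuncts. Unfolding $\CC{\prob} v.K_1(v)$ yields $\m{\sigmaF}_1,\m{\mu}_1,\m{\permap}_1,\m{\krnl}_1$ with $(\m{\sigmaF}_1,\m{\mu}_1,\m{\permap}_1) \raLeq a$, $\m{\mu}_1(i) = \bind(\prob,\m{\krnl}_1(i))$ for all $i$, and $K_1(v)(\m{\sigmaF}_1,\m{\krnl}_1(I)(v),\m{\permap}_1)$ for all $v \in \psupp(\prob)$; unfolding $\CC{\prob} v.K_2(v)$ yields analogous data $\m{\sigmaF}_2,\m{\mu}_2,\m{\permap}_2,\m{\krnl}_2$. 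The goal is to exhibit a single quadruple $\m{\sigmaF},\m{\mu},\m{\permap},\m{\krnl}$ witnessing $\CC{\prob} v.(K_1(v)\land K_2(v))$ on~$a$.

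\emph{Building the witness.} I would define the combined resource pointwise, letting the index decide which decomposition to borrow from: for $i \in \idx(K_1)$ take the $i$-components from the first decomposition, for $i \in \idx(K_2)$ from the second, and for the remaining indices take the trivial unit component. This is well-defined precisely because $\idx(K_1) \inters \idx(K_2) = \emptyset$. Since $\raLeq$ on $\Model_I$ is the pointwise order, each chosen component satisfies $(\m{\sigmaF}_b(i),\m{\mu}_b(i),\m{\permap}_b(i)) \raLeq a(i)$, and the unit is below everything, we obtain $(\m{\sigmaF},\m{\mu},\m{\permap})\raLeq a$; moreover $\m{\mu}(i) = \bind(\prob,\m{\krnl}(i))$ holds at every index, since it held in whichever decomposition supplied that index (and holds trivially for the unit components).

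\emph{Discharging the conditioned assertion.} It remains to show $K_1(v)\land K_2(v)$ holds on $(\m{\sigmaF},\m{\krnl}(I)(v),\m{\permap})$ for each $v\in\psupp(\prob)$; I treat $K_1(v)$, as $K_2(v)$ is symmetric. By definition of $\idx(K_1)$ we have $\irrel_{I\setminus\idx(K_1)}(K_1(v))$, so trivializing all indices outside $\idx(K_1)$ in the resource $(\m{\sigmaF}_1,\m{\krnl}_1(I)(v),\m{\permap}_1)$—on which $K_1(v)$ holds—preserves $K_1(v)$; call the resulting resource $b_1^v$. By construction the combined resource agrees with the first decomposition exactly on $\idx(K_1)$, so trivializing the combined resource outside $\idx(K_1)$ gives precisely $b_1^v$; hence the combined resource $\raGeq b_1^v$ pointwise, and $K_1(v)$ holds on it by upward closure. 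The same argument with $\idx(K_2)$ gives $K_2(v)$, so their conjunction holds, completing the proof.

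\emph{The main obstacle.} The crux is that one cannot simply merge $\m{\krnl}_1$ and $\m{\krnl}_2$ into a single kernel at a shared index: the two modalities guarantee only the \emph{existence} of kernels, and one index generally cannot support both decompositions at once (this is exactly why the unrestricted version of the rule is unsound, as noted in the comparison with Lilac). The side condition $\idx(K_1)\inters\idx(K_2)=\emptyset$ is what lets us sidestep the difficulty by partitioning the index set, and the care required is in matching the per-index splicing to the irrelevance structure of $K_1$ and $K_2$—in particular the validity bookkeeping for the spliced quadruple and the trivial filler components—after which upward closure cleanly closes each branch.
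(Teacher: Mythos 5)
Your proof is correct and follows essentially the same route as the paper's: splice the two decompositions together index by index (the paper uses the two-way partition $I_1 = \idx(K_1)$, $I_2 = I\setminus I_1$ and borrows the second decomposition's components everywhere outside $\idx(K_1)$, whereas you use trivial unit components on $I\setminus(\idx(K_1)\cup\idx(K_2))$ — an immaterial difference), verify the $\raLeq a$ and $\bind$ conditions pointwise, and conclude via the irrelevance of the off-support indices together with upward closure.
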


\begin{proof}
  Let~$I_1 = \idx(K_1)$ and $I_2 = I \setminus I_1$;
  by $\idx(K_1) \inters \idx(K_2) = \emptyset$
  we have $I_2 \sups \idx(K_2)$.
  Assume $a\in \Model_I$ is such that~$\raValid(a)$ holds and
  that it satisfies $
    \CC{\prob} v. K_1(v)
      \land
    \CC{\prob} v. K_2(v)
  $.
  This means that
  for each $j \in \set{1,2}$,
  for some $ \m{\salg}_j, \m{\prob}_j,\m{\permap}_j,$ and $ \m{\krnl}_j $:
  \begin{gather}
    (\m{\salg}_j, \m{\prob}_j,\m{\permap}_j) \raLeq a
    \label{c-and:prob}
    \\
    \forall i\in I\st
      \m{\prob}_j(i) = \bind(\prob, \m{\krnl}_j(i))
    \label{c-and:prob-bind}
    \\
    \forall v \in \psupp(\prob) \st
      K_j(v)(\m{\salg}_j, \m{\krnl}_j(I)(v), \m{\permap}_j)
    \label{c-and:K-krnl}
  \end{gather}
  Now let
  \begin{align*}
    \hat{\m{\salg}} &=
    \begin{cases}
      \m{\salg}_1(i) \CASE i \in I_1 \\
      \m{\salg}_2(i) \CASE i \in I_2
    \end{cases}
    &
    \hat{\m{\prob}} &=
    \begin{cases}
      \m{\prob}_1(i) \CASE i \in I_1 \\
      \m{\prob}_2(i) \CASE i \in I_2
    \end{cases}
    &
    \hat{\m{\permap}} &=
    \begin{cases}
      \m{\permap}_1(i) \CASE i \in I_1 \\
      \m{\permap}_2(i) \CASE i \in I_2
    \end{cases}
    &
    \hat{\m{\krnl}}(i) &=
    \begin{cases}
      \m{\krnl}_1(i) \CASE i \in I_1 \\
      \m{\krnl}_2(i) \CASE i \in I_2
    \end{cases}
  \end{align*}
  By construction, we have:
  \begin{gather*}
    (\hat{\m{\salg}}, \hat{\m{\prob}},\hat{\m{\permap}}) \raLeq a
    \\
    \forall i\in I\st
      \hat{\m{\prob}}(i) = \bind(\prob, \hat{\m{\krnl}}(i))
  \end{gather*}
  Moreover, for any~$v \in \psupp(\prob)$ and any $j \in \set{1,2}$,
  since $I_j \sups \idx(K_j)$, condition \eqref{c-and:K-krnl} implies
  \[
    K_j(v)(\hat{\m{\salg}}, \hat{\m{\krnl}}(I)(v), \hat{\m{\permap}})
  \]
  This means
  $(\hat{\m{\salg}}, \hat{\m{\krnl}}(I)(v), \hat{\m{\permap}})$
  satisfies
  $(K_1(v) \land K_2(v))$,
  and thus~$a$ satisfies
  $ \CC\prob v. (K_1(v) \land K_2(v)) $,
  as desired.
\end{proof} \begin{lemma}
\label{proof:c-skolem}
  \Cref{rule:c-skolem} is sound.
\end{lemma}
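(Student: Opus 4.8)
The plan is to unfold \cref{def:c-mod} on both sides of the entailment and observe that the \emph{same} decomposition data witnesses the conclusion; the only real work is to collapse the family of per-$v$ existential witnesses into a single choice function~$f$. Accordingly, I would begin by assuming a valid $a \in \Model_I$ satisfies $\CC\prob v.\E x \of X. Q(v,x)$. By \cref{def:c-mod} there exist $\m{\sigmaF}, \m{\mu}, \m{\permap}, \m{\krnl}$ with $(\m{\sigmaF}, \m{\mu}, \m{\permap}) \raLeq a$, with $\m{\mu}(i) = \bind(\prob, \m{\krnl}(i))$ for all $i \in I$, and such that for every $v \in \psupp(\prob)$ the resource $(\m{\sigmaF}, \m{\krnl}(I)(v), \m{\permap})$ satisfies $\E x. Q(v,x)$; that is, for each such~$v$ there is some $x \in X$ with $Q(v,x)$ holding on $(\m{\sigmaF}, \m{\krnl}(I)(v), \m{\permap})$.

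The key step is to build the Skolem function. Using choice, I would pick for each $v \in \psupp(\prob)$ a witness $x_v \in X$ with $Q(v, x_v)(\m{\sigmaF}, \m{\krnl}(I)(v), \m{\permap})$, and set $f(v) = x_v$. To make $f$ total of type $A \to X$, I note that $\prob \of \Dist(\Full{A})$ forces $\psupp(\prob) \ne \emptyset$, so the existentials above already guarantee $X \ne \emptyset$; I then extend $f$ to $A \setminus \psupp(\prob)$ by assigning an arbitrary fixed element of~$X$. The particular values of~$f$ off the support are immaterial, since the modality only inspects $v \in \psupp(\prob)$.

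Finally, I would verify that $a$ satisfies $\E f \of A \to X. \CC\prob v. Q(v, f(v))$ by exhibiting this~$f$ together with the \emph{same} decomposition $\m{\sigmaF}, \m{\mu}, \m{\permap}, \m{\krnl}$. The first two clauses of \cref{def:c-mod} are carried through verbatim (they do not mention the inner assertion), and the third holds by construction: for every $v \in \psupp(\prob)$ we have $Q(v, f(v))(\m{\sigmaF}, \m{\krnl}(I)(v), \m{\permap})$ precisely because $f(v) = x_v$ was chosen as a witness.

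The only genuinely nontrivial ingredient is the appeal to choice, which pulls the per-$v$ existential out past the implicit universal quantification over $\psupp(\prob)$ that is built into the modality—this is exactly the ``Skolemization'' the rule name alludes to. Everything else is a rearrangement of witnesses with the probabilistic decomposition data left untouched, so I expect no measure-theoretic difficulties here; the step to watch is simply ensuring $f$ is a well-defined total function, which the nonemptiness of $\psupp(\prob)$ (and hence of~$X$) secures.
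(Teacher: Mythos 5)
Your proposal is correct and follows essentially the same route as the paper's proof: unfold the modality, use choice to pick a witness $x_v$ for each $v \in \psupp(\prob)$, define $f(v) = x_v$, and reassemble with the same kernel and decomposition data. If anything, you are slightly more careful than the paper in making $f$ total on all of $A$ (noting that nonemptiness of $\psupp(\prob)$ guarantees $X \ne \emptyset$), a point the paper leaves implicit.
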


\begin{proof}
  For any resource $r = (\m{\sigmaF}, \m{\mu}, \m{\permap})$,
  \begin{align*}
    &
    \left(\CC\prob v. \E x \of \Var. Q(v, x) \right) (\m{\sigmaF}, \m{\mu}, \m{\permap}) \\
    {}\iff {} &
    \exists \m{\krnl} \st
             \forall i \in I.
             \m{\mu}(i) = \bind(\prob, \m{\krnl}(i))
    {}\land{}
    \forall v\in  \psupp(\prob) \st
     (\E x \of X. Q(v, x))(\m{\sigmaF}, \m{\krnl}(I)(v), \m{\permap})
  \end{align*}

   For all $v\in \psupp(\prob)$,
   $\E x \of X. Q(v, x)$ holds on $(\m{\sigmaF}, \m{\krnl}(I)(v), \m{\permap})$.
   Thus,
   $Q(v, x_v)(\m{\sigmaF}, \m{\krnl}(I)(v), \m{\permap})$
   holds for some $x_v$.
   Then define $f: A \to \Var$ by letting $f(v) = x_v$ for $v \in \psupp(\mu)$.
   Then,
   \begin{align*}
   \exists \m{\krnl} \st
             \forall i \in I.
             \m{\mu}(i) = \bind(\prob, \m{\krnl}(i))
               {}\land{}
              \forall v\in  \psupp(\prob) \st
      Q(v, f(v))(\m{\sigmaF}, \m{\krnl}(I)(v), \m{\permap})
  \end{align*}
  And therefore $\m{\sigmaF}, \m{\mu}, \m{\permap}$
  satisfies ${\E f \of A \to \Var. \CC\prob v. Q(v, x)}$.
\end{proof} \begin{lemma}
\label{proof:c-transf}
  \Cref{rule:c-transf} is sound.
\end{lemma}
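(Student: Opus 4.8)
The plan is to prove the entailment by transporting the witnesses for the left-hand side directly through the bijection~$f$. Concretely, suppose $a \in \Model_I$ is valid and satisfies $\CC\prob a.K(a)$. By unfolding \cref{def:c-mod}, there exist $\m{\salg}_0, \m{\prob}_0, \m{\permap}_0$ and a hyper-kernel~$\m{\krnl}_0$ such that $(\m{\salg}_0, \m{\prob}_0, \m{\permap}_0) \raLeq a$, with $\m{\prob}_0(i) = \bind(\prob, \m{\krnl}_0(i))$ for every $i \in I$, and with $K(a')(\m{\salg}_0, \m{\krnl}_0(I)(a'), \m{\permap}_0)$ holding for every $a' \in \psupp(\prob)$.

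To witness $\CC{\prob'} b.K(f(b))$ on~$a$, I would keep the same probability spaces and permissions $\m{\salg}_0, \m{\prob}_0, \m{\permap}_0$, and simply reindex the kernel along~$f$: define $\m{\krnl}_1(i) \is \m{\krnl}_0(i) \circ f$, so that $\m{\krnl}_1(i)(b) = \m{\krnl}_0(i)(f(b))$ for each $b \in \psupp(\prob')$. The resource-order requirement $(\m{\salg}_0, \m{\prob}_0, \m{\permap}_0) \raLeq a$ is inherited unchanged. The conditional-assertion requirement also follows almost for free: by construction $\m{\krnl}_1(I)(b) = \m{\krnl}_0(I)(f(b))$, and since $f$ maps into $\psupp(\prob)$ we have $f(b) \in \psupp(\prob)$, so $K(f(b))(\m{\salg}_0, \m{\krnl}_1(I)(b), \m{\permap}_0)$ is exactly an instance of the hypothesis obtained above.

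The only genuine calculation is verifying that the reindexed kernel recomposes to the same marginals, i.e.\ $\m{\prob}_0(i) = \bind(\prob', \m{\krnl}_1(i))$ for every $i \in I$. For this I would expand $\bind(\prob', \m{\krnl}_1(i))(\event) = \sum_{b \in \psupp(\prob')} \prob'(b) \cdot \m{\krnl}_0(i)(f(b))(\event)$ and perform the change of summation variable $a' = f(b)$. Here both hypotheses on~$f$ are used together: bijectivity of $f \colon \psupp(\prob') \to \psupp(\prob)$ makes $a' = f(b)$ a bijective reindexing of the sum, and the weight-preservation assumption $\prob'(b) = \prob(f(b))$ rewrites each coefficient $\prob'(b)$ as $\prob(a')$. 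The sum then collapses to $\sum_{a' \in \psupp(\prob)} \prob(a') \cdot \m{\krnl}_0(i)(a')(\event) = \bind(\prob, \m{\krnl}_0(i))(\event) = \m{\prob}_0(i)(\event)$, as required. I expect this reindexing to be the crux of the argument, though it is routine; everything else is bookkeeping inherited verbatim from the left-hand witnesses, which is why---unlike \cref{rule:c-frame}---no measure-theoretic lemma is needed here.
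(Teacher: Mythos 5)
Your proposal is correct and matches the paper's own proof essentially step for step: both reindex the witnessing kernel along $f$ (setting $\m{\krnl}'(i)(b) = \m{\krnl}(i)(f(b))$), keep the same probability spaces and permissions, and verify the bind equation by a change of summation variable justified jointly by bijectivity of $f$ on the supports and the weight-preservation hypothesis $\prob'(b) = \prob(f(b))$. Nothing further is needed.
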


\begin{proof}
  For any resource $a = (\m{\sigmaF}, \m{\mu}, \m{\permap})$,
  if $ \model{\CC\prob v.K(v)}{(\m{\sigmaF}, \m{\mu}, \m{\permap})}$,
  then
  \begin{align*}
    \begin{array}[t]{@{}r@{\,}l@{}}
      \E \m{\krnl}.
      & (\m{\sigmaF}, \m{\mu}, \m{\permap}) \raLeq a
      \land
        \forall i\in I\st
        \m{\mu}(i) = \bind(\prob, \m{\krnl}(i))
      \\ & \land \;
        \forall v \in \psupp(\prob).
          \model{K(v)}{ (\m{\sigmaF}, \m{\krnl}(I)(v), \m{\permap}) }
    \end{array}
  \end{align*}

  $\m{\mu} = \bind(\prob, \m{\krnl})$  says that for any
  $E \in \m{\sigmaF}$,
  \begin{align}
    \m{\mu}(E)
    &= \sum_{\mathclap{v \in \psupp(\mu)}} \prob(v) \cdot \m{\krnl}(I)(v)(E) \notag \\
    &= \sum_{\mathclap{v \mid f(v) \in \psupp(\mu)}} \prob(f(v)) \cdot \m{\krnl}(I)(f(v))(E) \tag{Because $f$ is bijective}\\
    &= \sum_{\mathclap{v \in \psupp(\mu')}} \mu'(v) \cdot \m{\krnl}(I)(f(v))(E) \tag{Because $\mu'(v) = \mu(f(v))$} \\
    &= \bind(\prob', \fun v . \m{\krnl}(I)(f(v)))(E) \notag
  \end{align}
  Thus, $\m{\mu} = \bind(\prob', \fun v . \m{\krnl}(I)(f(v)))$.
  Furthermore, $\model{K(f(v))}{ (\m{\sigmaF}, \m{\krnl}(I)(f(v)), \m{\permap}) }$.

  Thus, if we denote $\fun v . \m{\krnl}(I)(f(v))$ as $\m{\krnl'}$, it satisfies
  \begin{align*}
    \begin{array}[t]{@{}r@{\,}l@{}}
      & (\m{\sigmaF}, \m{\mu}, \m{\permap}) \raLeq a
      \land
        \forall i\in I\st
        \m{\mu}(i) = \bind(\prob', \m{\krnl'}(i))
      \\ & \land \;
        \forall v \in \psupp(\prob).
          \model{K(v)}{ (\m{\sigmaF}, \m{\krnl'}(I)(v), \m{\permap})}
    \end{array}
  \end{align*}
  Thus, $ \model{\CC\prob' v.K(f(v))}{(\m{\sigmaF}, \m{\mu}, \m{\permap})}$.
\end{proof}
 \begin{lemma}
\label{proof:sure-str-convex}
  \Cref{rule:sure-str-convex} is sound.
\end{lemma}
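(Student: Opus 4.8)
The plan is to unfold the premise semantically on an arbitrary valid $a \in \Model_I$ and then reassemble the conclusion by ``framing out'' the almost-sure event. Unfolding $\CC\prob v.(K(v) * \sure{\aexpr\at{i}})$ yields witnesses $\m{\sigmaF}, \m{\mu}, \m{\permap}, \m{\krnl}$ with $(\m{\sigmaF}, \m{\mu}, \m{\permap}) \raLeq a$, with $\m{\mu}(j) = \bind(\prob, \m{\krnl}(j))$ for all $j$, and with $(K(v) * \sure{\aexpr\at{i}})(\m{\sigmaF}, \m{\krnl}(I)(v), \m{\permap})$ for every $v \in \psupp(\prob)$. Writing $E$ for the truth-set $\inv{(\aexpr \in \true)}(\True)$ at index $i$, I would first extract, for each $v$, the separating decomposition of the conditional resource into a $K(v)$-part and a $\sure{\aexpr\at{i}}$-part; from the latter and \cref{def:almost-meas}, $E$ is almost measurable with probability~$1$ in $(\m{\sigmaF}(i), \m{\krnl}(i)(v))$, witnessed by some $E_1^v \subseteq E$ in $\m{\sigmaF}(i)$ with $\m{\krnl}(i)(v)(E_1^v) = 1$. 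By upward closure of $K$, the $K(v)$ fact also holds on the whole $(\m{\sigmaF}, \m{\krnl}(I)(v), \m{\permap})$.

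The measure-theoretic crux is to produce a single event certifying $\sure{\aexpr\at{i}}$ uniformly across all branches. I would set $\hat{E} = \bigcup_{v \in \psupp(\prob)} E_1^v$, which lies in $\m{\sigmaF}(i)$ because $\psupp(\prob)$ is countable. Then $\hat{E} \subseteq E$, and since $\hat{E} \supseteq E_1^v$ we get $\m{\krnl}(i)(v)(\hat{E}) = 1$ for \emph{every} $v$, whence $\m{\mu}(i)(\hat{E}) = \bind(\prob, \m{\krnl}(i))(\hat{E}) = \sum_{v} \prob(v) \cdot 1 = 1$. Thus $E$ is almost measurable with probability~$1$ both unconditionally and in each kernel. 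I would then define a framed resource $c$ carrying the sigma-algebra $\sigcl{\set{\hat{E}}}$ at index $i$ (trivial elsewhere) with $\hat{E} \mapsto 1$ and a small positive permission on $\pvar(\aexpr)$; this satisfies $\sure{\aexpr\at{i}}$ and, crucially, $c \raLeq (\m{\sigmaF}, \m{\krnl}(I)(v), \m{\permap})$ for \emph{all} $v$ at once, precisely because $\m{\krnl}(i)(v)(\hat{E}) = 1$.

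To recombine, I would take $d = (\m{\sigmaF}, \m{\mu}, \m{\permap}_d)$ with the same kernels $\m{\krnl}$ and verify $(\CC\prob v.K(v))(d)$ directly from $\m{\mu}(j) = \bind(\prob, \m{\krnl}(j))$ together with the $K(v)$-facts obtained above. Finally, because $\hat{E}$ has probability~$1$, its generated sigma-algebra is independent of everything, so by \cref{lemma:indep-prod-exists} the independent product $c \iprod d$ exists; as $\sigcl{\set{\hat{E}}} \subseteq \m{\sigmaF}(i)$, this product reconstructs exactly $(\m{\sigmaF}(i), \m{\mu}(i))$, giving $c \raOp d \raLeq (\m{\sigmaF}, \m{\mu}, \m{\permap}) \raLeq a$ and hence $(\sure{\aexpr\at{i}} * \CC\prob v.K(v))(a)$.

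The main obstacle I anticipate is the permission bookkeeping, not the probability theory: the per-branch separating decompositions split permission on $\pvar(\aexpr)$ in a $v$-dependent way, whereas the conclusion demands one fixed split $\m{\permap} = \m{\permap}_c \raOp \m{\permap}_d$ with $\m{\permap}_c$ positive on $\pvar(\aexpr)$ (so that $c$ is compatible) and $\m{\permap}_d$ still large enough to support $K(v)$ for every $v$ simultaneously. I would resolve this exactly as in the proof of \cref{rule:sure-and-star}, exploiting that $\sure{\aexpr\at{i}}$ is permission-abstract on $\pvar(\aexpr)$ (\cref{sec:appendix:permissions}): $c$ need only reserve an arbitrarily small sliver of the permission owned on $\pvar(\aexpr)$, leaving the remainder to $d$, where $K$ is recovered by upward closure. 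Making this reservation work uniformly over all $v$ is the one genuinely delicate point of the argument.
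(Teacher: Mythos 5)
Your proposal is correct and follows essentially the same strategy as the paper's proof: unfold the modality, extract per-branch almost-measurability witnesses for the event $E=\inv{\aexpr}(\True)$, aggregate them into a single event of the ambient \salgebra\ that has probability~$1$ under \emph{every} kernel (hence under the convex combination), split that event off as an independent probability-$0/1$ factor certifying $\sure{\aexpr\at{i}}$, and recover $K(v)$ on the remainder by upward closure, with the permission split handled by reserving a uniform sliver on $\pvar(\aexpr)$.

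The one substantive divergence is in the aggregation step. You take $\hat{E}=\Union_{v}E_1^v$, the \emph{union} of the lower witnesses, which is contained in $E$ and satisfies $\m{\krnl}(i)(v)(\hat{E})\geq\m{\krnl}(i)(v)(E_1^v)=1$ for every $v$; this is exactly what is needed. The paper instead generates its auxiliary \salgebra\ from $\Inters_{v}L^v_1$ and $\Union_{v}U^v_1$. Your choice is the more robust one: the intersection of the lower witnesses need not have probability~$1$ under any individual kernel (e.g.\ two Dirac kernels concentrated on distinct points of $E$ can have disjoint lower witnesses), so the union is the right aggregate on the lower side. Both proofs share the same delicate point you flag at the end: the permission reserved for the $\sure{\aexpr\at{i}}$ factor must lie below $\m{\permap}^v_2$ uniformly in $v$ (the paper takes a $\min$ over $\psupp(\prob)$, which plays the same role as your sliver), and when the support is infinite this infimum requires the same care in either presentation. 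Modulo that shared caveat, your argument goes through.
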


\begin{proof}
  Assume $a \in \Model_I$ is a valid resource that
  satisfies $\CMod{\prob} v. (K(v) \ast  \sure{\aexpr\at{i}})$.
  Then, by definition, we know that,
  for some $ (\m{\salg}_0, \m{\prob}_0, \m{\permap}_0) $ and $\m{\krnl}_0$:
  \begin{gather}
    (\m{\salg}_0, \m{\prob}_0, \m{\permap}_0) \raLeq a
    \label{sure-str-convex:a}
    \\
    \forall i\in I\st
       \m{\prob}_0(i) = \bind(\prob, \m{\krnl}_0(i))
    \label{sure-str-convex:bind-a}
    \\
    \intertext{and, for all $v \in \psupp(\prob)$, there are
      $(\m{\salg}^v_1, \m{\prob}^v_1, \m{\permap}^v_1)$,
      $(\m{\salg}^v_2, \m{\prob}^v_2, \m{\permap}^v_2)$
      such that}
(\m{\salg}^v_1, \m{\prob}^v_1, \m{\permap}^v_1)
    \raOp
    (\m{\salg}^v_2, \m{\prob}^v_2, \m{\permap}^v_2)
    \raLeq
      (\m{\salg}_0, \m{\krnl}_0(I)(v), \m{\permap}_0)
    \label{sure-str-convex:star}
    \\
    K(v)(\m{\salg}^v_1, \m{\prob}^v_1, \m{\permap}^v_1)
    \label{sure-str-convex:K}
    \\
    \sure{\aexpr\at{i}}(\m{\salg}^v_2, \m{\prob}^v_2, \m{\permap}^v_2)
    \label{sure-str-convex:sure}
  \end{gather}
  From \eqref{sure-str-convex:sure} we know that for all~$v \in \psupp(\prob)$
  there are $L^v_1,L^v_0,U^v_1,U^v_0 \in \m{\salg}^v_2(i)$ such that:
  \begin{align*}
    L^v_0 &\subs \inv{\aexpr}(\False) \subs U^v_0
    &
    \m{\prob}^v_2(L^v_0) &= \m{\prob}^v_2(U^v_0) = 0
    \\
    L^v_1 &\subs \inv{\aexpr}(\True) \subs U^v_1
    &
    \m{\prob}^v_2(L^v_1) &= \m{\prob}^v_2(U^v_1) = 1
  \end{align*}
  Without loss of generality, all $L^v_0, L^v_1, U^v_0, U^v_1$ can be assumed
  to be only non-trivial on $\pvar(\aexpr)$.
  Consequently, we can also assume that
  $\m{\permap}^v_2(\ip{x}{j})<1$ for every $\ip{x}{j}$,
and in addition
  $\m{\permap}^v_2(\ip{x}{j})>0$
  if and only if $\p{x}\in \pvar{\aexpr}$ and $j=i$.
  From these components we can construct a new resource:
  \begin{align*}
    \m{\salg}_3(j) &\is
      \begin{cases}
        \sigcl*{\set*{
          \Inters_{v \in \psupp(\prob)} L^v_1,
          \Union_{v \in \psupp(\prob)} U^v_1
        }}
        \CASE j  =  i \\
        \set{\Store, \emptyset}
        \CASE j \ne i
      \end{cases}
    \\
    \m{\prob}_3 &\is \restr{\m{\prob}_0}{\m{\salg}_3}
    \\
    \m{\permap}_3 &\is
      \fun \ip{x}{j}.
      \begin{cases}
        \min
          \set{\m{\permap}^v_2(\ip{x}{i}) | v \in \psupp(\prob)}
        \CASE j=i \land \p{x} \in \pvar(\aexpr) \\
        0 \OTHERWISE
      \end{cases}
  \end{align*}
  By construction we obtain that
  $ \forall j\in I\st \m{\salg}_3(j) \subs \m{\salg}_0(j) $,
and that
  $\raValid(\m{\salg}_3, \m{\prob}_3, \m{\permap}_3)$.
Now letting
  $
   \m{\permap}_1' = {\m{\permap}_0-\m{\permap}_3}
  $,
  we obtain a valid resource
  $(\m{\salg}_0, \m{\prob}_0, \m{\permap}_1')$.

  Moreover,
  we have
  $
    \m{\salg}_0 = \m{\salg}_0 \punion \m{\salg}_3
  $
  and
  $
    \forall j\in I\st
    \forall\event\in\m{\salg}_3(j)\st
      \m{\prob}_3(\event)\in\set{0,1}
  $,
  which means that
  for any $X \in \m{\salg}_3$ and $Y \in \m{\salg}_0$,
  $\m{\prob}_3(X) \cdot \m{\prob}_0(Y) = \m{\prob}_0(X \cap Y)$.
  Then, by \eqref{sure-str-convex:bind-a}:
  \[
    (\m{\salg}_0, \bind(\prob, \m{\krnl}_0), \m{\permap}_1')
    \iprod
    (\m{\salg}_3, \m{\prob}_3, \m{\permap}_3)
    \raLeq
    (\m{\salg}_0, \m{\prob}_0, \m{\permap}_0) = a
  \]
  To close the proof it would then suffice to show that
  $ \CMod{\prob} v. K(v) $
  holds on
  $(\m{\salg}_0, \bind(\prob, \m{\krnl}_0), \m{\permap}_1')$
  and that
  $ \sure{\aexpr\at{i}} $
  holds on
  $ (\m{\salg}_3(j),\m{\prob}_3,\m{\permap}_3) $.
  The latter is obvious.
  The former follows from
  the fact that $ \restr{\m{\krnl}_0(j)(v)}{\m{\salg}^v_1} = \m{\prob}^v_1(j) $;
  by upward-closure and \eqref{sure-str-convex:K}
  this means that, for all $v \in \psupp(\prob)$:
  \[
    K(v)(\m{\salg}^v_1, \m{\prob}^v_1, \m{\permap}^v_1)
    \implies
    K(v)(\m{\salg}_0, \m{\krnl}_0(I)(v), \m{\permap}_1')
  \]
  which proves our claim.
\end{proof}
 \begin{lemma}
\label{proof:c-for-all}
  \Cref{rule:c-for-all} is sound.
\end{lemma}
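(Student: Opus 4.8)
The plan is to unfold the semantics of the \supercond\ modality on a valid resource satisfying the premise, extract witnesses that are \emph{uniform} in the quantified variable~$x$, and then reuse those very witnesses to establish the conclusion for each fixed~$x$. The whole point is that the convex-decomposition data produced by $\CMod{\prob}$ does not depend on~$x$, so it can be shared across all instantiations.

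Concretely, I would begin by assuming $a \in \Model_I$ with $\raValid(a)$ satisfies $\CC\prob v. \A x \of X. Q(v, x)$. By \cref{def:c-mod} this gives some $\m{\salg}_0, \m{\prob}_0, \m{\permap}_0, \m{\krnl}_0$ with $(\m{\salg}_0, \m{\prob}_0, \m{\permap}_0) \raLeq a$, with $\m{\prob}_0(i) = \bind(\prob, \m{\krnl}_0(i))$ for every $i \in I$, and with $(\A x \of X. Q(v, x))(\m{\salg}_0, \m{\krnl}_0(I)(v), \m{\permap}_0)$ for every $v \in \psupp(\prob)$. Unfolding the semantics of the universal quantifier, the last conjunct is exactly the statement that $Q(v, x)(\m{\salg}_0, \m{\krnl}_0(I)(v), \m{\permap}_0)$ holds for every $v \in \psupp(\prob)$ and every $x \in X$.

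The key step is then to fix an arbitrary $x \in X$ and observe that the witnesses $\m{\salg}_0, \m{\prob}_0, \m{\permap}_0$ and the kernel $\m{\krnl}_0$ never mention~$x$, so they serve verbatim as witnesses for $\CC\prob v. Q(v, x)$ on~$a$: the ordering relation $(\m{\salg}_0, \m{\prob}_0, \m{\permap}_0) \raLeq a$ and the convex decomposition $\m{\prob}_0(i) = \bind(\prob, \m{\krnl}_0(i))$ are identical, while the remaining requirement $\forall v \in \psupp(\prob). Q(v, x)(\m{\salg}_0, \m{\krnl}_0(I)(v), \m{\permap}_0)$ follows by specializing the displayed condition above to the chosen~$x$. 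Since $x$ was arbitrary, $a$ satisfies $\A x \of X. \CC\prob v. Q(v, x)$, as required.

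I expect no genuine obstacle here: this is one of the easy primitive rules, and the entailment holds precisely because the existential witnesses of the modality can be chosen independently of~$x$. The only things to keep straight are the free commutation of the quantifiers $\forall v$ and $\forall x$ (legitimate since neither binds into the other), and the fact that $\m{\krnl}_0(I)(v)$ is the \emph{same} family for every~$x$, so no reconstruction of a decomposition is needed. This also explains why only the forward direction is stated: the converse would demand assembling a single shared kernel out of an $x$-indexed collection of decompositions, which need not exist.
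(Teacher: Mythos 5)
Your proof is correct and follows essentially the same route as the paper's: both arguments come down to the observation that the existential witnesses $(\m{\salg}_0, \m{\prob}_0, \m{\permap}_0, \m{\krnl}_0)$ produced by unfolding $\CMod{\prob}$ are independent of~$x$, so the implication is just the standard commutation $\exists\forall \Rightarrow \forall\exists$. The paper phrases this via its wand-based characterization of the modality while you work directly from \cref{def:c-mod}, but the substance is identical, and your closing remark about why the converse fails is also accurate.
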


\begin{proof}
  By unfolding the definitions,
\begin{align*}
      &\CMod\prob \m{v}. \forall x: X. Q(\m{v}) \\
    \iff &
    \begin{array}[t]{r@{\,}l}
    \E \m{\sigmaF}, \m{\mu}_0, \m{\krnl}.
      &\Own{(\m{\sigmaF}, \m{\mu}_0)} *
      \pure{
        \forall i\in I\st
          \m{\mu}_0(i) = \bind(\prob, \m{\krnl}(i))
      }\\ & * \; (
        \forall a \in A_{\prob}.
\Own{(\m{\sigmaF}, \m[i: \m{\krnl}(i)(a) | i \in I])} \wand \forall x: X. Q(\m{v})
          )
    \end{array}
    \\
    \implies &
    \begin{array}[t]{r@{\,}l}
    \forall x: X.
    \E \m{\sigmaF}, \m{\mu}_0, \m{\krnl}.
      &\Own{(\m{\sigmaF}, \m{\mu}_0)} *
      \pure{
        \forall i\in I\st
          \m{\mu}_0(i) = \bind(\prob, \m{\krnl}(i))
      }\\ & * \; (
        \forall a \in A_{\prob}.
\Own{(\m{\sigmaF}, \m[i: \m{\krnl}(i)(a) | i \in I])} \wand Q(\m{v})
          )
    \end{array}\\
    \iff & \forall x: X. \CMod\prob \m{v}. Q(\m{v})
  \end{align*}
\end{proof} \begin{lemma}
\label{proof:c-pure}
  \Cref{rule:c-pure} is sound.
\end{lemma}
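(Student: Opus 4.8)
The plan is to prove the two entailment directions separately, and in both cases to reuse the \emph{same} witnesses for the decomposition so that only the pure facts need to be shuffled in and out of the modality. The two ingredients I would isolate first are: (i) the standard separation-logic fact that, for a pure assertion $\pure{\phi}$, one has $\pure{\phi} * P \lequiv \pure{\phi} \land P$ (since $\pure{\phi}$ holds on the unit $\raUnit$ whenever $\phi$ is true, one conjunct of the separating conjunction can always be taken to be $\raUnit$, and upward-closure absorbs the residual $\raLeq$); and (ii) the elementary observation that, because $\prob$ is a \emph{discrete} distribution, $\prob(\event) = 1$ holds if and only if $\psupp(\prob) \subs \event$. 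The second fact is what bridges the two sides: the modality ranges over $v \in \psupp(\prob)$, so requiring $\pure{v \in \event}$ for every such $v$ is the same meta-level statement as $\prob(\event) = 1$.

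For the forward direction $\pure{\prob(\event)=1} * \CC\prob v.K(v) \proves \CC\prob v.(\pure{v \in \event} * K(v))$, I would first apply (i) to rewrite the hypothesis as $\pure{\prob(\event)=1} \land \CC\prob v.K(v)$, so that on a valid resource $a$ we have, as a meta-fact, $\prob(\event)=1$ together with witnesses $\m{\sigmaF}, \m{\mu}, \m{\permap}, \m{\krnl}$ satisfying $(\m{\sigmaF},\m{\mu},\m{\permap}) \raLeq a$, the binding conditions $\m{\mu}(i) = \bind(\prob,\m{\krnl}(i))$, and $K(v)$ on $(\m{\sigmaF},\m{\krnl}(I)(v),\m{\permap})$ for all $v \in \psupp(\prob)$. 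Taking the same witnesses for the goal modality, I would invoke (ii) to conclude $v \in \event$ for each $v \in \psupp(\prob)$; hence $\pure{v\in\event}$ is true there, and by (i) again $(\pure{v\in\event} * K(v))$ holds on $(\m{\sigmaF},\m{\krnl}(I)(v),\m{\permap})$. This establishes the goal.

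The backward direction is symmetric and, if anything, slightly easier, since the support constraint is discovered rather than assumed. From a witnessing decomposition of $\CC\prob v.(\pure{v \in \event} * K(v))$ on a valid $a$, for every $v \in \psupp(\prob)$ the conjunct $\pure{v \in \event} * K(v)$ holds on $(\m{\sigmaF},\m{\krnl}(I)(v),\m{\permap})$; by (i) this means both $v \in \event$ (a true meta-fact) and $K(v)$ on that resource. The former, collected over all $v \in \psupp(\prob)$, gives $\psupp(\prob)\subs\event$ and thus $\prob(\event)=1$ by (ii); the latter yields $\CC\prob v.K(v)$ on $a$ using the same witnesses. One more application of (i) repackages the established pure fact, giving $\pure{\prob(\event)=1} * \CC\prob v.K(v)$.

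I do not expect a real obstacle here: the argument is essentially definitional once (i) and (ii) are in hand. The only points requiring minor care are confirming that the kernels and \salgebra[s] can be shared verbatim between the two sides (nothing about the witnesses changes, only the pure predicate travels through the modality), and that $\psupp(\prob)$ is non-empty, which is guaranteed since $\prob$ is a probability distribution, so the equivalence of ``$v \in \event$ for all $v \in \psupp(\prob)$'' with ``$\prob(\event)=1$'' is unproblematic.
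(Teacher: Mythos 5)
Your proof is correct and follows essentially the same route as the paper's: both directions reuse the same witnesses $\m{\sigmaF},\m{\mu},\m{\permap},\m{\krnl}$ and hinge on the observation that, for a discrete $\prob$, $\prob(\event)=1$ is equivalent to $\psupp(\prob)\subs\event$, so the pure fact can be moved across the binder that ranges over $\psupp(\prob)$. Your explicit isolation of $\pure{\phi} * P \lequiv \pure{\phi} \land P$ is a detail the paper leaves implicit, but it holds in $\Model_I$ (the unit is a least element), so nothing is missing.
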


\begin{proof}
  We first prove the forward direction:
  For any $a \in \Model_I$,
  if $\model{\pure{\mu(X) = 1} \ast \CMod{\mu}. K(v)}{(a)}$,
  then there exists some
  $\m{\salg}_0, \m{\prob}_0,\m{\permap}_0,$ and $ \m{\krnl}_0 $:
  \begin{gather*}
    (\m{\salg}_0, \m{\prob}_0,\m{\permap}_0) \raLeq a
    \\
    \forall i\in I\st
      \m{\prob}_0(i) = \bind(\prob, \m{\krnl}_0(i)) \\
      \forall v \in \psupp(\prob) \st
      \model{K(v)}{(\m{\salg}_0, \m{\krnl}_0(I)(v), \m{\permap}_0)}
  \end{gather*}

   The pure fact $\pure{\mu(X) = 1}$ implies that
   $X \supseteq \psupp(\mu)$ , and thus
   for every $v \in \psupp(\mu)$, $\pure{v \in X}$.
   Therefore, $\model{K(v)}{(\m{\salg}_0, \m{\krnl}_0(I)(v), \m{\permap}_0)}$,
   which witnesses that
   $\model{\CMod{\mu}. \pure{v \in X} \ast K(v)}{(a)}$.

  We then prove the backward direction:
  if $\CMod{\mu}. \pure{v \in X} \ast K(v) $,
  then there exists
    $\m{\salg}_0, \m{\prob}_0,\m{\permap}_0,$ and~$ \m{\krnl}_0 $:
  \begin{gather*}
    (\m{\salg}_0, \m{\prob}_0,\m{\permap}_0) \raLeq a
    \\
    \forall i\in I\st
      \m{\prob}_0(i) = \bind(\prob, \m{\krnl}_0(i)) \\
      \forall v \in \psupp(\prob) \st
      \model{\pure{v \in X} \ast K(v)}{(\m{\salg}_0, \m{\krnl}_0(I)(v), \m{\permap}_0)}
  \end{gather*}
  Then it must $X \supseteq \psupp(\mu)$,
  which implies that $\pure{\mu(X) = 1}$.
  Meanwhile, $\pure{v \in X} \ast K(v)$ holding on ${(\m{\salg}_0, \m{\krnl}_0(I)(v), \m{\permap}_0)}$
  implies that $K(v)$ holds on
  ${(\m{\salg}_0, \m{\krnl}_0(I)(v), \m{\permap}_0)}$
  Therefore, $\pure{\mu(X) = 1} \ast \CMod{\mu}. K(v)$ holds on $a$.
\end{proof}

\subsection{Soundness of Primitive WP Rules}
\label{sec:appendix:wp-rules}

\subsubsection{Structural Rules}
\begin{lemma}
\label{proof:wp-cons}
  \Cref{rule:wp-cons} is sound.
\end{lemma}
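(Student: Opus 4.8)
The plan is to unfold the definitions of entailment and of the weakest precondition assertion and observe that a single witness~$b$ produced by the hypothesis on~$Q$ can be reused verbatim for the conclusion on~$Q'$. So this is essentially a bookkeeping argument on the $\raLeq$/$\raOp$ structure, with no genuine probabilistic content.

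First I would fix a valid resource $a \in \Model_I$ satisfying $\WP{\m{t}}{Q}$, with the goal of showing $\WP{\m{t}}{Q'}(a)$. Following the definition of WP, I would take an arbitrary $\m{\prob}_0$ and frame~$c$ with $(a \raOp c) \raLeq \m{\prob}_0$. Instantiating the assumption $\WP{\m{t}}{Q}(a)$ at this very $\m{\prob}_0$ and~$c$ yields some~$b$ with $(b \raOp c) \raLeq \sem{\m{t}}(\m{\prob}_0)$ and $Q(b)$. The same~$b$ will serve as the witness for $Q'$: the containment $(b \raOp c) \raLeq \sem{\m{t}}(\m{\prob}_0)$ is unchanged, so all that remains is to upgrade $Q(b)$ to $Q'(b)$ using the premise.

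The one point requiring care is that applying the premise $Q \proves Q'$ needs $b$ to be \emph{valid}, since entailment only quantifies over valid resources. This follows from the resource-algebra axioms of \cref{def:ra}. The output, viewed as a resource $(\Full{\Hyp{\Store}}, \sem{\m{t}}(\m{\prob}_0), \fun x.1)$, is valid, since the full permission map assigns every variable permission~$1$ and a probability space is a valid element of $\PSpRA$. From $(b \raOp c) \raLeq \sem{\m{t}}(\m{\prob}_0)$ and the axiom that validity is downward closed along $\raLeq$, I obtain $\raValid(b \raOp c)$; then the axiom $\raValid(b \raOp c) \implies \raValid(b)$ gives $\raValid(b)$. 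With $b$ valid, the premise delivers $Q'(b)$.

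Thus the chosen~$b$ satisfies both $(b \raOp c) \raLeq \sem{\m{t}}(\m{\prob}_0)$ and $Q'(b)$, which is exactly the existential witness required by the definition of $\WP{\m{t}}{Q'}$ at the resource~$a$. Since $\m{\prob}_0$ and~$c$ were arbitrary, this establishes $\WP{\m{t}}{Q'}(a)$, and hence the entailment $\WP{\m{t}}{Q} \proves \WP{\m{t}}{Q'}$. I do not expect any real obstacle here: the argument is a direct manipulation of the WP definition, the only subtlety being the propagation of validity to the intermediate witness so that the premise entailment applies.
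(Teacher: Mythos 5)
Your proposal is correct and follows essentially the same route as the paper's proof: unfold the WP definition, reuse the witness~$b$ for the postcondition~$Q'$, and observe that $b$ is valid (so the entailment $Q \proves Q'$ applies) because $(b \raOp c) \raLeq \sem{\m{t}}(\m{\prob}_0)$ forces validity. Your explicit derivation of $\raValid(b)$ via downward closure of validity along $\raLeq$ and the axiom $\raValid(b \raOp c) \implies \raValid(b)$ just spells out what the paper states tersely.
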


\begin{proof}
  For any resource $a$,
  if $(\WP {\m{t}} {Q})(a)$, then
  \[
    \forall \m{\prob}_0 \st
    \forall c \st
    (a \raOp c) \raLeq \m{\prob}_0
    \implies
    \exists b \st
    \bigl(
      (b \raOp c) \raLeq \sem{\m{t}}(\m{\prob}_0)
      \land
      \model{Q}{(b)}
    \bigr)
  \]
  From the premise $Q \proves Q'$, and the fact that $b$ must be valid for $ (b \raOp c) \raLeq \sem{\m{t}}(\m{\prob}_0) $ to hold, we have that $Q(b)$ implies $Q'(b)$.
  Thus, it must
  \[
    \forall \m{\prob}_0 \st
    \forall c \st
    (a \raOp c) \raLeq \m{\prob}_0
    \implies
    \exists b \st
    \bigl(
      (b \raOp c) \raLeq \sem{\m{t}}(\m{\prob}_0)
      \land
      Q'(b)
    \bigr),
  \]
  which says $(\WP {\m{t}} {Q'})(a)$.
\end{proof} \begin{lemma}
\label{proof:wp-frame}
  \Cref{rule:wp-frame} is sound.
\end{lemma}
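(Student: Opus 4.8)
The plan is to prove \ref{rule:wp-frame} exactly as one proves the frame rule of ordinary separation logic, exploiting the fact that in \cref{def:wp} the frame $c$ is quantified universally and reappears \emph{verbatim} in the output constraint $(b \raOp c) \raLeq \sem{\m{t}}(\m{\prob}_0)$. This literal preservation of the frame is precisely what makes the rule clean in the independent-product model: it lets a framed resource carry its assertion through the execution unchanged. The whole argument is then a rearrangement of $\raOp$ under $\raLeq$, plus one conceptual observation about $\liftA{P}$.

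Concretely, I would fix a valid $a \in \Model_I$ satisfying $(P \sepand \WP{\m{t}}{Q})(a)$. By the definition of $\sepand$ there are $a_1, a_2$ with $a_1 \raOp a_2 \raLeq a$, $P(a_1)$, and $\WP{\m{t}}{Q}(a_2)$. To establish $\WP{\m{t}}{\liftA{P} \sepand Q}(a)$, I would fix an arbitrary global distribution $\m{\prob}_0$ and frame $c$ with $(a \raOp c) \raLeq \m{\prob}_0$, and must exhibit a witness $b'$ with $(b' \raOp c) \raLeq \sem{\m{t}}(\m{\prob}_0)$ and $(\liftA{P} \sepand Q)(b')$.

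The key move is to fold $a_1$ into the frame: set $c' \is a_1 \raOp c$. Using monotonicity of $\raOp$ with respect to $\raLeq$ and transitivity, from $a_1 \raOp a_2 \raLeq a$ I obtain $(a_2 \raOp c') = (a_2 \raOp a_1 \raOp c) \raLeq (a \raOp c) \raLeq \m{\prob}_0$. Instantiating the weakest-precondition assumption on $a_2$ with global distribution $\m{\prob}_0$ and the enlarged frame $c'$ then yields some $b$ with $(b \raOp c') \raLeq \sem{\m{t}}(\m{\prob}_0)$ and $Q(b)$. Taking $b' \is b \raOp a_1$, associativity and commutativity of $\raOp$ give $(b' \raOp c) = (b \raOp c') \raLeq \sem{\m{t}}(\m{\prob}_0)$ as required; moreover $\raValid(b')$ follows from $\raValid(b' \raOp c)$ (inherited from the valid output resource $\sem{\m{t}}(\m{\prob}_0)$) via the RA axiom $\raValid(x \raOp y) \implies \raValid(x)$. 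Finally, since $a_1 \raOp b = b' \raLeq b'$ with $P(a_1)$ and $Q(b)$, the definition of $\sepand$ delivers $(\liftA{P} \sepand Q)(b')$, \emph{provided} $\liftA{P}(a_1)$ holds.

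This last provision is the only nontrivial point, and it is where I expect the main obstacle to lie. I must argue that the framed resource $a_1$, on which $P$ held before the step, still satisfies the postcondition copy $\liftA{P}$ afterwards. Because $a_1$ is transported as part of the frame $c'$ and the WP reuses exactly the same resource on the output side, $a_1$ is preserved identically — none of its probability-space, distribution, or permission components is altered — so the fact $P(a_1)$ is carried over unchanged, which is precisely what $\liftA{P}$ records on the output. The remaining work is routine: verifying that enlarging the frame by $a_1$ keeps every composition valid and upward-closed, and checking that the verbatim reuse of $c$ in \cref{def:wp} really does license $\liftA{P}(a_1)$ from $P(a_1)$.
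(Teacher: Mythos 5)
Your proof is correct and follows essentially the same route as the paper's: decompose $a$ into $a_1 \raOp a_2$, fold the framed resource $a_1$ into the WP's universally quantified frame (your $c' = a_1 \raOp c$ is exactly the paper's instantiation $c = a_1 \raOp c'$), apply the hypothesis on $a_2$, and recombine the witness with $a_1$ using associativity and commutativity of $\raOp$. Your closing worry about $\liftA{P}(a_1)$ is resolved exactly as you suspect --- the frame is transported verbatim, so $P(a_1)$ carries over unchanged --- and the paper treats this as immediate.
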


\begin{proof}
  Let $a\in\Model_I$ be a valid resource such that
  it satisfies $P \ast \WP {\m{t}} {Q}$.
  By definition, this means that, for some $a_1,a_2$:
  \begin{gather}
    a_1 \raOp a_2 \raLeq a
    \\
    P(a_1)
    \label{wp-frame:pa1}
    \\
    \forall \m{\prob}_0, c \st
      (a_2 \raOp c) \raLeq \m{\prob}_0
      \implies
        \exists b \st
        \bigl(
          (b \raOp c) \raLeq \sem{\m{t}}(\m{\prob}_0)
          \land
          Q(b)
        \bigr)
    \label{wp-frame:assumed-wp}
  \end{gather}
  Our goal is to prove $a$ satisfies
  $\WP {\m{t}} {P * Q}$, which,
  by unfolding the definitions, amounts to:
  \begin{equation}
    \exists a' \raLeq a \st
    \forall \m{\prob}_0, c' \st
      (a' \raOp c') \raLeq \m{\prob}_0
      \implies
      \exists b_1,b \st
        ((b_1 \raOp b) \raOp c') \raLeq \sem{\m{t}}(\m{\prob}_0)
        \land
        P(b_1) \land Q(b)
    \label{wp-frame:goal}
  \end{equation}

  Our goal can be proven by instantiating
  $a' = (a_1 \raOp a_2)$ and $b_1 = a_1$,
  from which we reduce the goal
  to proving, for all $\m{\prob}_0, c'$:
  \begin{equation}
      ((a_1 \raOp a_2) \raOp c') \raLeq \m{\prob}_0
      \implies
      \exists b \st
        ((a_1 \raOp b) \raOp c') \raLeq \sem{\m{t}}(\m{\prob}_0)
        \land
        P(a_1) \land Q(b)
  \end{equation}
  We have that $P(a_1)$ holds by \eqref{wp-frame:pa1}.
  By associativity and commutativity of the RA operation,
  we reduce the goal to:
  \begin{equation}
      (a_2 \raOp (a_1 \raOp c')) \raLeq \m{\prob}_0
      \implies
      \exists b \st
        (b \raOp (a_1 \raOp c')) \raLeq \sem{\m{t}}(\m{\prob}_0)
        \land
        Q(b)
  \end{equation}
  This follows by applying assumption~\eqref{wp-frame:assumed-wp}
  with $c = (a_1 \raOp c')$.
\end{proof}
 \begin{lemma}
\label{proof:wp-nest}
  \Cref{rule:wp-nest} is sound.
\end{lemma}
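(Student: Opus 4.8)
The plan is to reduce the equivalence to a single semantic lemma and then discharge each entailment by unfolding \cref{def:wp} and chaining existential witnesses.

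First I would prove the semantic decomposition: whenever $\supp{\m{t}_1} \inters \supp{\m{t}_2} = \emptyset$,
\[
  \sem{(\m{t}_1 \m. \m{t}_2)}(\m{\prob}_0) = \sem{\m{t}_2}(\sem{\m{t}_1}(\m{\prob}_0))
\]
for every $\m{\prob}_0$. This is the crux. It follows from the pointwise definition of $\sem{\hole}$: the kernel at an index $i\in\supp{\m{t}_1}$ acts only on the $i$-th store component using fresh independent randomness and leaves every other component unchanged (symmetrically for $\m{t}_2$). Since the supports are disjoint, the two families of kernels touch distinct components, so the corresponding monadic binds commute; formally this is \eqref{prop:bind-assoc} together with the fact that index-$i$ and index-$j$ operations commute, applied index by index.

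Given this lemma, the direction $\wpc{\m{t}_1}{\wpc{\m{t}_2}{Q}} \proves \wpc{(\m{t}_1\m.\m{t}_2)}{Q}$ is routine. Take $a$ satisfying the left side and pick $\m{\prob}_0, c$ with $(a\raOp c)\raLeq \m{\prob}_0$. The outer WP yields $b'$ with $(b'\raOp c)\raLeq \sem{\m{t}_1}(\m{\prob}_0)$ and $b'$ satisfying $\wpc{\m{t}_2}{Q}$; applying the latter with the same frame $c$ and the distribution $\sem{\m{t}_1}(\m{\prob}_0)$ produces $b$ with $(b\raOp c)\raLeq \sem{\m{t}_2}(\sem{\m{t}_1}(\m{\prob}_0))$ and $Q(b)$. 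The semantic lemma rewrites the right-hand side to $\sem{(\m{t}_1\m.\m{t}_2)}(\m{\prob}_0)$, closing this direction. The frame $c$ is threaded unchanged through both stages, which is exactly what the WP definition guarantees.

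The converse $\wpc{(\m{t}_1\m.\m{t}_2)}{Q} \proves \wpc{\m{t}_1}{\wpc{\m{t}_2}{Q}}$ is the main obstacle. Here there is no WP for $\m{t}_1$ to invoke, so the intermediate witness must be constructed semantically. Given $a$ on the left and $\m{\prob}_0, c$ with $(a\raOp c)\raLeq \m{\prob}_0$, I would take $b'$ to be the maximal advancement of $a$ through $\m{t}_1$: the resource carrying $a$'s permissions but owning the post-$\m{t}_1$ distribution $\sem{\m{t}_1}(\m{\prob}_0)$ on the full algebra at the indices of $\supp{\m{t}_1}$ and agreeing with $a$ elsewhere. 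Since $a$ holds the write permissions for whatever $\m{t}_1$ mutates, the frame $c$ cannot record information about those variables and is therefore preserved by $\sem{\m{t}_1}$, giving $(b'\raOp c)\raLeq \sem{\m{t}_1}(\m{\prob}_0)$. To verify that $b'$ satisfies $\wpc{\m{t}_2}{Q}$, I take an arbitrary inner frame $c'$ and distribution with $(b'\raOp c')\raLeq \m{\prob}_1$, observe that $c'$ (being disjoint from the variables mutated by $\m{t}_1$) lifts to a frame for the combined computation, and re-invoke the combined WP on $a$ with frame $c\raOp c'$, concluding by the semantic decomposition lemma that the resulting witness lands in $\sem{\m{t}_2}(\m{\prob}_1)$ and satisfies $Q$. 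The delicate point throughout is the frame bookkeeping: showing that the inner frame $c'$ is genuinely preserved by $\sem{\m{t}_1}$ and can be merged with the outer frame, which is precisely where disjointness of $\supp{\m{t}_1}$ and $\supp{\m{t}_2}$ and the permission discipline of \thelogic\ are used.
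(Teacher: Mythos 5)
Your forward direction and the semantic decomposition $\sem{\m{t}_1 \m. \m{t}_2}(\m{\prob}_0) = \sem{\m{t}_2}(\sem{\m{t}_1}(\m{\prob}_0))$ for disjoint supports are exactly what the paper does. The gap is in the backward direction, in your choice of the intermediate witness $b'$. You define $b'$ to own the post-$\m{t}_1$ distribution on the \emph{full} \salgebra\ at the indices in $\supp{\m{t}_1}$. Such a resource in general does not compose with the frame $c$: the independent product of a full probability space with any space containing an event of probability strictly between $0$ and $1$ does not exist (it would force $\prob(E)=\prob(E)^2$), and the permission discipline only prevents $c$ from recording information about variables that $\m{t}_1$ \emph{writes} --- $c$ may perfectly well own, say, $\distAs{\ip{z}{i}}{\Ber{\onehalf}}$ for an untouched variable $\p{z}$ at an index $i\in\supp{\m{t}_1}$. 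In that case $\raValid(b'\raOp c)$ fails, so the required $(b'\raOp c)\raLeq \sem{\m{t}_1}(\m{\prob}_0)$ cannot hold. (For the same reason $b'$ may not even be a legal element of $\PSpPmRA$ when $a$'s permission map is zero on some variable at those indices, since compatibility forces the space to be trivial there.)

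The paper avoids this by not constructing the intermediate resource from scratch: it first invokes the assumed WP for $\m{t}_1 \m. \m{t}_2$ to obtain a final witness $b$ with $(b\raOp c)\raLeq\sem{\m{t}_1 \m. \m{t}_2}(\m{\prob}_0)$ and $Q(b)$, and then splices $b'$ together from $b$ at the indices in $\supp{\m{t}_1}$ (where $\m{t}_2$ does nothing, so $b$ already \emph{is} the post-$\m{t}_1$ resource there and composes with $c$ by assumption) and from $a$ at the remaining indices (where $\m{t}_1$ does nothing). Composability with $c$ is then inherited index by index from $\raValid(b\raOp c)$ and $\raValid(a\raOp c)$, and $b$ itself serves as the witness for the inner $\wpc{\m{t}_2}{Q}$. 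Your plan to re-invoke the combined WP with frame $c\raOp c'$ has a related defect: $c'$ is only known to compose with the post-$\m{t}_1$ resource, so the hypothesis $(a\raOp c\raOp c')\raLeq\m{\prob}_0'$ needed to apply the combined WP is not available. If you restructure the argument to extract $b$ first and build $b'$ from it, the rest of your outline goes through.
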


\begin{proof}
  Because $\m{t_1 \m. t_2}$ is defined,
  it must $\supp{\m{t_1}} \cap \supp{\m{t_2}} = \emptyset$.
  By definition
because $\supp{\m{t_1}} \cap \supp{\m{t_2}} = \emptyset$,
  we have
$
    \sem{(\m{t_1 \m. t_2)}}(\m{\prob})
    = \sem{\m{t_2}} (\sem{\m{t_1}} (\m{\prob}))
  $.

  \begin{itemize}
    \item For the $\vdash$ case:
Assume $a$ is a valid resource such that
  $(\WP {\m{t_1}} {\WP {\m{t_2}} {Q}})(a)$ holds.
  Our goal is to prove $(\WP {\m{t_1 \m. t_2}} {Q})(a_0)$ holds,
  which unfolds by definition of WP into:
  \begin{equation}
    \forall \m{\prob}_0\st
    \forall c_0 \st
    (a_0 \raOp c_0) \raLeq \m{\prob}_0
    \implies
      \exists a_2 \st
      \bigl(
        (a_2 \raOp c_0) \raLeq \sem{\m{t_1 \m. t_2}}(\m{\prob}_0)
      \land Q(a_2)
      \bigr)
    \label{wp-nest:goal}
  \end{equation}

  Take an arbitrary $\m{\prob}_0$ and $c_0$ such that
  $ (a_0 \raOp c_0) \raLeq \m{\prob}_0 $.
  By unfolding the WPs in the assumption,
  we have that there exists a
  $a_1 \in \Model_I$ such that:
  \begin{gather}
    (a_1 \raOp c_0) \raLeq \sem{\m{t_1}}(\m{\prob}_0)
    \label{wp-nest:a1}
    \\
    \forall \m{\prob}_1\st
    \forall c_1 \st
      (a_1 \raOp c_1) \raLeq \m{\prob}_1
      \implies
      \exists a_2 \st
      (a_2 \raOp c_1) \raLeq \sem{\m{t_2}}(\m{\prob}_1)
      \land  Q(a_2)
    \label{wp-nest:a2}
  \end{gather}
  We can apply \eqref{wp-nest:a2} to \eqref{wp-nest:a1}
  by instantiating $\m{\prob}_1$ with $\sem{\m{t_1}}(\m{\prob}_0)$,
  and $c_1$ with $c_0$,
  obtaining:
  \[
    \exists a_2 \st
    ((a_2 \raOp c_0) \raLeq \sem{\m{t_1}}(\sem{\m{t_2}}(\m{\prob}_0))
    \land  Q(a_2))
  \]
  When $\m{t_1 \m. t_2}$ is defined,
  the terms $\m{t_1}$ and $\m{t_2}$ are on disjoint indices,
  and thus
  $ \sem{ \m{t_1} \m. \m{t_2}}(\prob_0) = \sem{\m{t_1}}(\sem{\m{t_2}}(\prob_0)) $,
  we obtain the goal \eqref{wp-nest:goal} as desired.

    \item For the $\dashv$ case:
      For any resource $a$,
      \begin{align}
      &
      \WP {(\m{t_1} \m. \m{t_2})} {Q} (a) \notag \\
      {}\iff{} &
        \forall \m{\prob}_0 \st
        \forall c \st
        (a \raOp c) \raLeq \m{\prob}_0
        \implies
        \exists b \st
        \bigl(
          (b \raOp c) \raLeq \sem{\m{t_1} \m. \m{t_2}}(\m{\prob}_0)
          \land
          \model{Q}{(b)}
       \bigr)
       \label{helper:wp-nest:1}
      \end{align}

      Since $(b \raOp c) \raLeq \sem{\m{t_1} \m. \m{t_2}}(\m{\prob}_0)
      $, we have $\raValid(b \raOp c)$ and thus $\raValid(b)$. Say
      \begin{align*}
        b &= \m[i: \m{\sigmaF_b}(i), \m{\mu_b}(i), \m{\permap_b}(i)] \\
        c &= \m[i: \m{\sigmaF_c}(i), \m{\mu_c}(i), \m{\permap_c}(i)]
\end{align*}
      Let
      \begin{align*}
        b' &=
        \begin{cases}
          i:  (\m{\sigmaF_b}(i), \m{\mu_b}(i), \m{\permap_b}(i)) \CASE \text{ if $i \in \supp{\m{t_1}}$} \\
          i : (\m{\sigmaF}(i), \m{\mu}(i), \m{\permap}(i)) \CASE \text{ if $i \notin \supp{\m{t_1}}$ }
        \end{cases}
      \end{align*}
      Since $\raValid(b \raOp c)$ and  $\raValid(a \raOp c)$,
      on each index $i \in I$, we have $\raValid(b'(i) \raOp c(i))$  .
      Thus, $\raValid(b' \raOp c)$.
      Also, for each $i \in I$,
      $(b'(i) \raOp c(i)) \raLeq \sem{\m{t_1}(i)}(\m{\prob}_0(i))$,
      which implies that
      \[
          (b' \raOp c) \raLeq \sem{\m{t_1}}(\m{\prob}_0)
      \]
      We want to show next that
      $\model{\WP {\m{t_2}} {Q}}{b'}$.
      For any $c' = \m[i: \m{\sigmaF'_c}(i), \m{\mu'_c}(i), \m{\permap'_c}(i)]$
      such that $\raValid(b' \raOp c')$,
      it must
      \begin{align*}
        &\raValid((\m{\sigmaF_b}(i), \m{\mu_b}(i), \m{\permap_b}(i)) \raOp (\m{\sigmaF'_c}(i), \m{\mu'_c}(i), \m{\permap'_c}(i)))  &\text{if } i \in \supp{\m{t_1}} \\
        &\raValid((\m{\sigmaF}(i), \m{\mu}(i), \m{\permap}(i)) \raOp (\m{\sigmaF'_c}(i), \m{\mu'_c}(i), \m{\permap'_c}(i)))  &\text{if } i  \in \supp{\m{t_2}}
      \end{align*}
      By~\cref{helper:wp-nest:1},
      $\raValid((\m{\sigmaF}(i), \m{\mu}(i), \m{\permap}(i)) \raOp (\m{\sigmaF'_c}(i), \m{\mu'_c}(i), \m{\permap'_c}(i)))$
      also implies
      \[
        \raValid((\m{\sigmaF_b}(i), \m{\mu_b}(i), \m{\permap_b}(i)) \raOp (\m{\sigmaF'_c}(i), \m{\mu'_c}(i), \m{\permap'_c}(i))).
      \]
      Thus,
      $(b \raOp c) \raLeq \sem{\m{t_1} \m. \m{t_2}}(\m{\prob}_0)
          \land
          Q(b)$
          witnesses that
           $\WP {\m{t_2}} {Q}(b')$.
  \end{itemize}
\end{proof}
 \begin{lemma}
\label{proof:wp-conj}
  \Cref{rule:wp-conj} is sound.
\end{lemma}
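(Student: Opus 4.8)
The plan is to prove soundness directly from the semantic definition of WP (\cref{def:wp}), in the hands-on style of the proofs of \cref{rule:wp-nest} and \cref{rule:wp-frame}. Fix a valid $a \in \Model_I$ satisfying $\WP{\m{t}_1}{Q_1} \land \WP{\m{t}_2}{Q_2}$, and fix an arbitrary $\m{\prob}_0$ and frame $c$ with $(a \raOp c) \raLeq \m{\prob}_0$. Since $a$ satisfies \emph{both} conjuncts, I can instantiate each hypothesis with this \emph{same} $\m{\prob}_0$ and $c$, obtaining witnesses $b_1,b_2$ with $(b_j \raOp c) \raLeq \sem{\m{t}_j}(\m{\prob}_0)$ and $Q_j(b_j)$ for $j\in\{1,2\}$. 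The first routine ingredient is the semantic description of the merged hyper-term: because $\m{t}_1 \m+ \m{t}_2$ is defined, $\m{t}_1$ and $\m{t}_2$ agree on shared indices, so $\nu \is \sem{\m{t}_1 \m+ \m{t}_2}(\m{\prob}_0)$ coincides with $\sem{\m{t}_1}(\m{\prob}_0)$ on $\supp{\m{t}_1}$, with $\sem{\m{t}_2}(\m{\prob}_0)$ on $\supp{\m{t}_2}$, and with $\m{\prob}_0$ elsewhere.

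Next I would use the side conditions to align both witnesses to the single target measure $\nu$. The condition $\idx(Q_1) \inters \supp{\m{t}_2} \subs \supp{\m{t}_1}$ guarantees that on every index $i \in \idx(Q_1)$ the output $\sem{\m{t}_1}(\m{\prob}_0)(i)$ equals $\nu(i)$ (either $i \in \supp{\m{t}_1}$, or $i$ lies outside both supports so both reduce to $\m{\prob}_0(i)$); symmetrically for $Q_2$. Hence $b_1$ on $\idx(Q_1)$ and $b_2$ on $\idx(Q_2)$ are fragments of one and the same $\nu$, and the same analysis shows every shared relevant index $i \in \idx(Q_1) \inters \idx(Q_2)$ lies either in $\supp{\m{t}_1} \inters \supp{\m{t}_2}$ or outside both supports. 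I then build a single witness $b$ index-wise: on each index I take the join, namely the union \salgebra{} of those of $b_1(i)$ and $b_2(i)$ carrying the restriction of $\nu(i)$, of whichever of $b_1(i),b_2(i)$ are relevant there, the trivial space on indices irrelevant to both, splitting the permissions on shared variables evenly between the two conjuncts. Because $Q_1$ and $Q_2$ depend only on their respective $\idx$-sets, upward-closure together with the irrelevance characterisation of $\idx$ (the same principle behind \cref{rule:and-to-star}) yields $Q_1(b) \land Q_2(b)$, i.e. $(Q_1 \land Q_2)(b)$.

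The hard part will be verifying that this combined $b$ is still a legal fragment of the output, i.e. that $(b \raOp c) \raLeq \nu$; concretely, that on the shared relevant indices the joined \salgebra{} remains \emph{independent} from $c(i)$ under $\nu(i)$. This does \emph{not} follow from the mere existence of $b_1 \raOp c$ and $b_2 \raOp c$ in isolation: two sub-\salgebra[s] each independent of $c$ need not be \emph{jointly} independent of $c$ (the usual three-coins obstruction $z = x \oplus y$). The resolution I expect to exploit is that the premise $(a \raOp c) \raLeq \m{\prob}_0$ and the permission discipline force the frame to be benign: on the variables written by $\m{t}_1$ (equivalently $\m{t}_2$) at a shared index, $a$ holds full permission, so by compatibility $c$ carries the trivial \salgebra{} there and is automatically independent of any content built over those variables; on the non-touched shared indices the output equals $\m{\prob}_0$ and one may take $b_1(i),b_2(i)$ within the content of $a(i)$, which the premise forces to be independent of $c(i)$.

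Cashing this out rigorously is the main technical obstacle, and I would attack it with the measure-theoretic machinery already in place: \cref{lemma:indep-prod-exists} to certify that the joined space and $c(i)$ admit an independent product (via the disjoint-events criterion, discharged by the two facts above), and \cref{lemma:fibre-prod-exists} to reconcile the product with the convex/kernel structure threaded through the semantics. Once $(b \raOp c) \raLeq \nu$ is established, the existential in \cref{def:wp} is witnessed by $b$, and since $\m{\prob}_0$ and $c$ were arbitrary we obtain $\WP{\m{t}_1 \m+ \m{t}_2}{Q_1 \land Q_2}(a)$, completing the argument.
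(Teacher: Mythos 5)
Your overall framing (unfold \cref{def:wp}, instantiate both conjuncts at the same $\m{\prob}_0$ and $c$, use the side conditions to show that $\sem{\m{t}_1}(\m{\prob}_0)$ and $\sem{\m{t}_2}(\m{\prob}_0)$ each agree with $\sem{\m{t}_1 \m+ \m{t}_2}(\m{\prob}_0)$ on the indices relevant to $Q_1$ and $Q_2$ respectively) matches the paper. But your construction of the combined witness diverges at the decisive point, and the divergence is where the gap lives. You propose to \emph{merge} $b_1(i)$ and $b_2(i)$ at shared relevant indices by taking the union \salgebra{} under $\nu(i)$, and you correctly observe that nothing guarantees this join still composes with $c(i)$ — two sub-\salgebra[s] each independent of the frame need not be jointly independent of it. Your proposed escape does not close this: on variables that are merely read (not written), the frame may legitimately own fractional permission together with a nontrivial \salgebra, and the WP hypotheses only give you the \emph{existence} of some $b_1, b_2$ — you are not entitled to choose them "within the content of $a(i)$", so the three-coins obstruction is not excluded. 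As written, the central step of your argument is an acknowledged conjecture, not a proof.

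The paper's proof never faces this problem because it never merges anything at a single index. It defines the witness $b''$ by a disjoint index-wise patch: $b''(i) = b_1(i)$ for $i \in \idx(Q_1)$ and $b''(i) = b_2(i)$ otherwise. Composability with the frame is then immediate pointwise, since at every index $b''(i)$ is literally one of $b_1(i)$ or $b_2(i)$, each of which already composes with $c(i)$; the side conditions $\idx(Q_1) \inters \supp{\m{t}_2} \subs \supp{\m{t}_1}$ and $\idx(Q_2) \inters \supp{\m{t}_1} \subs \supp{\m{t}_2}$ are exactly what makes the chosen component sit below the merged output measure at each relevant index; and $Q_1(b'') \land Q_2(b'')$ follows because each $Q_j$ is insensitive to components outside $\idx(Q_j)$ (the same irrelevance principle behind \cref{rule:and-to-star}). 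So the fix is not to push harder on \cref{lemma:indep-prod-exists} and the permission discipline, but to abandon the join: pick one witness per index according to the $\idx$-sets, and the independence question you identified as "the hard part" simply does not arise.
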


\begin{proof}
     For any resource $a$,
      \begin{align*}
        & \bigl(\WP {\m{t_1}} {Q_1} \land   \WP {\m{t_2}} {Q_2}\bigr)(a)
        \\
        \iff &
        \forall \m{\prob}_0 \st
        \forall c \st
        (a \raOp c) \raLeq \m{\prob}_0
        \implies \\
             & \exists b \st
        \bigl(
          (b \raOp c) \raLeq \sem{\m{t_1}}(\m{\prob}_0)
          \land
          Q(b)
       \bigr) \land
         \exists b' \st
        \bigl(
          (b' \raOp c) \raLeq \sem{\m{t_2}}(\m{\prob}_0)
          \land
          Q(b')
       \bigr)
      \end{align*}

      Define $b''$ such that
      \begin{align*}
        b''(i) & =
        \begin{cases}
          b(i) \CASE i \in \idx(Q_1) \\
          b'(i) \OTHERWISE
        \end{cases}
      \end{align*}

      For any $c$, $\raValid(a \raOp c)$ implies
       $\raValid(b'' \raOp c)$ because
       $\raValid(b'(i) \raOp c(i))$ and
       $\raValid(b(i) \raOp c(i))$ for all $i$.
       Furthermore,
       $b''(i) = b(i)$ for $i \in \idx(Q_1)$ implies
       that $Q_1(b'')$.
       Also,
       $\idx(Q_2) \inters \supp{\m{t}_1} \subs \supp{\m{t}_2}$
       implies
       that $Q_2(b'')$.
       Therefore, $(Q_1 \land Q_2)(b'')$, witnessing
       $\model{\WP {\m{t_1} + \m{t_2}} {Q_1 \land Q_2}}{(a)} $.
\end{proof} \begin{lemma}
\label{proof:c-wp-swap}
  \Cref{rule:c-wp-swap} is sound.
\end{lemma}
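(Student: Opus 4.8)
The plan is to unfold the hypothesis into its \supercond\ witness (\cref{def:c-mod}) and the goal into the definition of WP (\cref{def:wp}), and then to exploit two facts: that $\ownall$ forces any frame composing with the precondition to be probabilistically trivial, and that the program semantics $\sem{\m{t}}(\prob)=\bind(\prob,\Sem[K]{\m{t}})$ is linear in the input distribution, an instance of \eqref{prop:bind-assoc}. First I would fix a valid $a$ satisfying $\CC\prob v.\WP{\m{t}}{Q(v)}\land\ownall$ and unfold the left conjunct: there are $\m{\sigmaF},\m{\mu},\m{\permap},\m{\krnl}$ with $(\m{\sigmaF},\m{\mu},\m{\permap})\raLeq a$, with $\m{\mu}(i)=\bind(\prob,\m{\krnl}(i))$ for every $i$, and with $\WP{\m{t}}{Q(v)}(\m{\sigmaF},\m{\krnl}(I)(v),\m{\permap})$ for every $v\in\psupp(\prob)$. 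From $\ownall(a)$ the \salgebra\ of $a$ is full at each index, so writing $a=(\m{\sigmaF}_a,\m{\mu}_a,\m{\permap}_a)$ we have $\m{\sigmaF}_a(i)=\Full{\Store}$ and $\m{\mu}(i)=\restr{\m{\mu}_a(i)}{\m{\sigmaF}(i)}$. To prove the goal I take arbitrary $\m{\prob}_0,c$ with $(a\raOp c)\raLeq\m{\prob}_0$ and must produce $b$ with $(b\raOp c)\raLeq\sem{\m{t}}(\m{\prob}_0)$ and $(\CC\prob v.Q(v))(b)$.

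Two preparatory observations follow. Since $\m{\sigmaF}_a(i)$ is full, \cref{lemma:indep-prod-exists} forces every event of $c$ to be $\m{\mu}_a(i)$-null or $\m{\mu}_a(i)$-almost-sure; that is, $c$ is probabilistically trivial and agrees with $\m{\mu}_a(i)$, and moreover $\m{\prob}_0\circ\inv{\proj_i}=\m{\mu}_a(i)$. To refine the conditional kernels against the full input \salgebra\ I apply \cref{lemma:bind-extend} at each index with $\mu=\prob$, $\krnl_1=\m{\krnl}(i)$, source $(\m{\sigmaF}(i),\m{\mu}(i))$ and target $(\Full{\Store},\m{\mu}_a(i))$: this yields $\m{\krnl}_a(i)$ with $\m{\mu}_a(i)=\bind(\prob,\m{\krnl}_a(i))$ and $(\m{\sigmaF}(i),\m{\krnl}(i)(v))\extTo(\Full{\Store},\m{\krnl}_a(i)(v))$ for all $v\in\psupp(\prob)$. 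Since $\m{\mu}_a(i)=\bind(\prob,\m{\krnl}_a(i))$ spreads the $\{0,1\}$-values of $c$'s events over $\psupp(\prob)$, each $\m{\krnl}_a(i)(v)$ also assigns measure $0$ or $1$ to every $c$-event, matching $c$, so $c$ composes with the conditional resources below.

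Next, for each $v\in\psupp(\prob)$ I instantiate the hypothesis $\WP{\m{t}}{Q(v)}(\m{\sigmaF},\m{\krnl}(I)(v),\m{\permap})$ on the conditional input tuple $\m{\iota}^v\is\m[i:\m{\krnl}_a(i)(v)]$ with the same frame $c$. The side condition $((\m{\sigmaF},\m{\krnl}(I)(v),\m{\permap})\raOp c)\raLeq\m{\iota}^v$ holds by the extension just obtained, the triviality of $c$, and $\m{\permap}\raLeq\m{\permap}_a$. This produces $b_v$ with $Q(v)(b_v)$ and $(b_v\raOp c)\raLeq\sem{\m{t}}(\m{\iota}^v)$, whence $\m{\mu}_{b_v}(i)=\restr{\sem{\m{t}(i)}(\m{\krnl}_a(i)(v))}{\m{\sigmaF}_{b_v}(i)}$. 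I then recombine by setting $\m{\lambda}(i)\is\fun v.\sem{\m{t}(i)}(\m{\krnl}_a(i)(v))$ and $b\is\m[i:(\Full{\Store},\bind(\prob,\m{\lambda}(i)),\m{\permap}_b(i))]$ with $\m{\permap}_b\is\fun x.\,1-\m{\permap}_c(x)$. Linearity of $\sem{\m{t}(i)}$ via \eqref{prop:bind-assoc} together with $\m{\mu}_a(i)=\bind(\prob,\m{\krnl}_a(i))$ gives $\bind(\prob,\m{\lambda}(i))=\sem{\m{t}(i)}(\m{\mu}_a(i))=\sem{\m{t}}(\m{\prob}_0)(i)$, so $(b\raOp c)\raLeq\sem{\m{t}}(\m{\prob}_0)$ because $c$ is trivial. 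For $(\CC\prob v.Q(v))(b)$ I use $b$ itself as the \supercond\ witness with kernels $\m{\lambda}$: each $(\Full{\Store},\m{\lambda}(I)(v),\m{\permap}_b)$ dominates $b_v$ (its measure is the full extension of $\m{\mu}_{b_v}$, and $\m{\permap}_{b_v}\raLeq\m{\permap}_b$ by validity of $b_v\raOp c$), so upward-closure of $Q(v)$ yields $Q(v)(\Full{\Store},\m{\lambda}(I)(v),\m{\permap}_b)$.

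The main obstacle is the recombination step, and it is precisely where $\ownall$ is indispensable: only the fullness of $\m{\sigmaF}_a$ makes the frame $c$ probabilistically trivial, which is what lets a single fixed $c$ be threaded uniformly through every conditional branch $v$ and lets the per-$v$ output resources be collapsed onto one common full-\salgebra\ witness with kernels $\m{\lambda}$. Without $\ownall$ the frame could carry probabilistic content correlated with the conditioning, blocking both the application of \cref{lemma:bind-extend} against an arbitrary $\m{\prob}_0$ and the uniform recombination of the $b_v$; this matches the paper's remark that the rule is unsound without the $\ownall$ side condition. The remaining work is routine: permission bookkeeping (handled above by taking $\m{\permap}_b=\fun x.\,1-\m{\permap}_c$ and invoking upward-closure) and the observation that resource comparisons against $\m{\prob}_0$ depend only on its per-index marginals, so that $\sem{\m{t}}$ may be treated coordinate-wise.
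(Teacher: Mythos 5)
Your proposal is correct and follows essentially the same route as the paper's proof: derive triviality of the frame $c$ from the fullness of the \salgebra\ forced by $\ownall$, lift the conditional kernels to the full input space via \cref{lemma:bind-extend}, apply the WP hypothesis fibre-wise against the same frame, and recombine the per-fibre outputs into a single witness $\bind(\prob,\m{\lambda}(i))$ using linearity of the term semantics and upward closure of $Q(v)$. The only cosmetic difference is that you cite \cref{lemma:indep-prod-exists} for the frame-triviality step where the paper derives it directly from the product formula $\m{\mu}_c(E)\cdot\m{\mu}_a(E)=\m{\mu'}(E)$ together with the restriction equalities; the underlying argument is the same.
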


\begin{proof}
    By the meaning of conditioning modality and weakest precondition transformer,
    \begin{align*}
    &(\ownall \land \CMod\prob v. \WP {\m{t}} {Q(v)})(a) \\
    {}\iff{}
    &\ownall(a) \land
    \begin{array}[t]{@{}r@{\,}l@{}}
    \E \m{\sigmaF}, \m{\mu}, \m{\permap}, \m{\krnl}.
      & (\m{\sigmaF}, \m{\mu}, \m{\permap}) \raLeq a
      \land
        \forall i\in I\st
        \m{\mu}(i) = \bind(\prob, \m{\krnl}(i))
      \\ & \land \;
        \forall v \in \psupp(\prob).
          (\WP {\m{t}} {Q(v)})(\m{\sigmaF}, \m{\krnl}(I)(v), \m{\permap})
    \end{array}
  \end{align*}
Intuitively, for each $v$,
  running $\m{t}$ on each fibre $(\m{\sigmaF}, \m{\krnl}(I)(v), \m{\permap})$
  gives a output resource that satisfies $Q(v)$.

  Assume $\raValid(a)$ holds and let
  $a = (\m{\sigmaF}_a, \m{\mu}_a, \m{\permap}_a)$.
  By~\cref{lemma:bind-extend},
  when $ (\m{\sigmaF}, \m{\mu}, \m{\permap}) \raLeq a$,
  $\m{\mu} = \bind(\prob, \m{\krnl})$ iff that there exists $\m{\krnl}''$ such that
  $\m{\mu}_a = \bind(\prob, \m{\krnl}'')$  and $\m{\krnl}(I)(v) \extTo \m{\krnl}''(I)(v)$ for every $v$.
  Thus,
\begin{align*}
    (\CMod\prob v. \WP {\m{t}} {Q(v)})(\m{\sigmaF}_a, \m{\mu}_a, \m{\permap}_a)
    \iff
    \begin{array}[t]{@{}r@{\,}l@{}}
    \E \m{\krnl}.
      &\forall i\in I\st
        \m{\mu}_a (i) = \bind(\prob, \m{\krnl}''(i))
        \\ & \land \;
        \forall v \in \psupp(\prob).
          (\WP {\m{t}} {Q(v)})(\m{\sigmaF}, \m{\krnl}(I)(v), \m{\permap})
    \end{array}
  \end{align*}

  We want to show that
  \[
      \WP {\m{t}}{\CMod\prob v. {Q(v)}} (a)
    \]
  which is equivalent to
  \[
        \A \m{\prob'}.
        \forall c\st
        a\raOp c \raLeq \m{\prob'}
          \implies
            \exists a'\st
            a'\raOp c \raLeq {\sem{\m{t}} (\m{\prob'})}
            \land
            (\CMod \prob Q(v))(a).
  \]
  Let's fix an arbitrary $ \m{\prob'}, c $ that satisfy
  $\raValid(a \raOp c) \land  a\raOp c \raLeq a_{\m{\prob'}}$,
  we try to construct a corresponding $a'$.
  The high-level approach that we will take is to show that running $\m{t}$ on $a$ takes us to a resource that is equivalent to bind the set of output resource satisfying $Q(v)$ to $\mu$.

  Recall that $a  = (\m{\sigmaF}_a, \m{\mu}_a, \m{\permap}_a)$ also satisfies
  $\ownall$, which says $\m{\sigmaF}_a =  \Full{\Var}$.
  We claim that
  $ a\raOp c  \raLeq (\Full{\Var}, \m{\mu'}, \permap_1)$ holds implies that
  the probability space $c$ is trivial.
  Say $c = (\m{\sigmaF}_c, \m{\mu}_c, \m{\permap}_c)$, then for any $E \in \m{\sigmaF}_c$,
  the event $E$ must also in $\m{\sigmaF}_a$ and $\Full{\Var}$  because they are
  the full sigma algebra.
  By definition of $ a\raOp c  \raLeq (\Full{\Var}, \m{\mu'}, \permap_1)$,
  we have
\begin{align}
    &\m{\mu}_c(E) \cdot \m{\mu}_a(E) =  \m{\mu'}(E \cap E)  = \m{\mu'}(E).
    \label{eq:c-wp-swap-c-trivial}
  \end{align}
  Another implication of  $ a\raOp c  \raLeq (\Full{\Var}, \m{\mu'}, \permap_1)$ is that
  we have $\m{\mu}_c(E)  = \m{\mu'}(E)$ and  $\m{\mu}_a(E)  = \m{\mu'}(E)$.
  Combining with~\cref{eq:c-wp-swap-c-trivial}, we can conclude
  \begin{align*}
    \m{\mu'}(E) \cdot \m{\mu'}(E) =  \m{\mu'}(E),
  \end{align*}
  which implies that
  $\m{\mu}_c(E) = \m{\mu'}(E)  \in \{0,1\}$.
  Therefore, $c$ is a trivial probability space and
  \begin{align*}
    (\m{\salg}_a, \m{\krnl}(I)(v), \m{\permap}_a) \raOp c \raLeq
    (\m{\salg}_a, \m{\krnl}(I)(v), \m{\permap}_a)
  \end{align*}

  Furthermore, for every $v \in \psupp(\mu)$, we have
  $(\WP {\m{t}} {Q(v)})(\m{\sigmaF}, \m{\krnl}(I)(v), \m{\permap})$
  which implies
  \begin{align}
    \label{eq:c-wp-swap:helper}
        \A \m{\krnl'}. &
(\m{\salg}_a, \m{\krnl}(I)(v), \m{\permap}_a) \raOp c
          \raLeq \m{\krnl'}(I)(v)
        \\
        & {} \implies {}
        \exists a_v\st
(a_v \raOp c \raLeq \sem{\m{t}} (\m{\krnl'}(I)(v))) \land Q(v)(a_v) .
      \end{align}
Therefore,
  \begin{align}
   & a\raOp c \raLeq a_{\m{\prob'}}
  {}\implies{} \forall v \in \psupp(\mu). (\raValid((\m{\salg}_a, \m{\krnl}(I)(v), \m{\permap}_a) \raOp c) \land  (\m{\salg}_a, \m{\krnl}(I)(v), \m{\permap}_a) \raOp c \raLeq (\Full{\Var}, \m{\krnl}(I)(v), \fullp) \tag{By~\ref{lemma:fibre-prod-exists} and~\ref{lemma:bind-extend}}\\
  {}\implies{} & \forall v \in \psupp(\mu).  \exists a_v\st
        \raValid(a_v \raOp c) \land  \left(a_v \raOp c \raLeq (\Full{\Var}, \sem{\m{t}} (\m{\krnl}(I)(v)), \fullp)\right)  \land  Q(v)(a_v)
        \tag{By~\cref{eq:c-wp-swap:helper}} \\
  {}\implies{} & \forall v \in \psupp(\mu). \m{\permap}_{a_v} + \m{\permap}_c \raLeq \fullp \land    Q(v)(\Full{\Var}, \sem{\m{t}} (\m{\krnl'}(I)(v)), \fullp).
        \tag{By upwards closure}
  \end{align}

  Let $a'_v =  (\Full{\Var}, \sem{\m{t}} (\m{\krnl'}(I)(v)), \m{\permap}_{a})$.
  Because $\mu_c(E) \in \{0, 1\}$ for any $E \in \sigmaF_c$, for every $v$,
  we have $ (\Full{\Var}, \sem{\m{t}} (\m{\krnl'}(I)(v)))  \raOp (\m{\sigmaF}_c, \m{\mu}_c)$
  defined and thus
  $a'_v \raOp c$ valid.
  Define
  \[
    a' =
     (\Full{\Var},
      \bind(\mu, \fun v . \sem{\m{t}} (\m{\krnl'}(I)(v)),
      \m{\permap}_a)
  \]
  By~\cref{lemma:fibre-prod-exists},
  $ \raValid(a'_v \raOp c)$ for all $v \in \psupp_{\mu}$ implies
  $\raValid(a' \raOp c)$.
  Also, because  $Q(v)(a_v)$ for all $v \in A_{\mu}$,
  $(\CMod{\prob} v. Q(v))(a') $.
Thus, $(\WP {\m{t}}{\CMod\prob v. {Q(v)}}) (a)$.
\end{proof}
 
\subsubsection{Program Rules}
\begin{lemma}
\label{proof:wp-skip}
  \Cref{rule:wp-skip} is sound.
\end{lemma}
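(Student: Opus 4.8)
The plan is to unfold the definition of the weakest precondition (\cref{def:wp}) and exhibit a trivial witness. The only genuine ingredient is the observation that \code{skip} denotes the identity transformation on distributions, so the postcondition can be satisfied by the same resource that satisfies the precondition.

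First I would compute the semantics of the hyper-term $\m[i:\code{skip}]$. By \cref{def:semantics}, $\sem{\code{skip}}(\prob) = \bind(\prob, \return)$, which equals $\prob$ by the monadic right-unit law \eqref{prop:bind-unit-r}. Lifting to hyper-terms through the definition of $\sem{\m{t}}_I$, on the sole index $i \in \supp{\m[i:\code{skip}]}$ the program acts as the identity, while all indices outside the support are preserved by definition; hence $\sem{\m[i:\code{skip}]}(\m{\prob}_0) = \m{\prob}_0$ for every $\m{\prob}_0$.

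Next I would take an arbitrary valid $a \in \Model_I$ with $P(a)$ and show $\WP {\m[i:\code{skip}]}{P}$ holds on $a$. Unfolding \cref{def:wp}, I must show that for every $\m{\prob}_0$ and every frame $c$ with $(a \raOp c) \raLeq \m{\prob}_0$, there exists $b$ such that $(b \raOp c) \raLeq \sem{\m[i:\code{skip}]}(\m{\prob}_0)$ and $P(b)$. I would simply choose $b = a$. Then $P(b)$ holds by assumption, and since $\sem{\m[i:\code{skip}]}(\m{\prob}_0) = \m{\prob}_0$, the hypothesis $(a \raOp c) \raLeq \m{\prob}_0$ immediately gives $(b \raOp c) \raLeq \sem{\m[i:\code{skip}]}(\m{\prob}_0)$, as required.

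There is no real obstacle here: this is among the simplest of the program rules, and the entire argument reduces to the identity-semantics fact for \code{skip}, which is an instantaneous consequence of the monad unit law. The only point deserving explicit mention is the lifting from the kernel/term semantics to the hyper-term semantics, ensuring that indices outside the (singleton) support are left untouched so that the full tuple $\m{\prob}_0$ is genuinely preserved.
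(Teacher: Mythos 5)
Your proof is correct and follows exactly the paper's argument: the paper also observes that $\sem{\m[i:\code{skip}]}(\m{\prob}_0)=\m{\prob}_0$ and picks $b=a$ as the witness after unfolding \cref{def:wp}. You merely spell out the identity-semantics fact (via the monad right-unit law and the hyper-term lifting) in more detail than the paper, which states it without justification.
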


\begin{proof}
  Assume $a \in \Model_I$ is valid and such that~$P(a)$ holds.
  By unfolding the definition of WP, we need to prove
  \[
    \forall \m{\prob}_0.
      \forall c \st
      (a \raOp c) \raLeq \m{\prob}_0
      \implies
      \exists b \st
      \bigl(
        (b \raOp c) \raLeq \sem{\m{t}}(\m{\prob}_0)
        \land
        P(b)
      \bigr)
  \]
  which follows trivially
  by $\sem{\m[i:\code{skip}]}(\m{\prob}_0)=\m{\prob}_0$ and picking $b=a$.
\end{proof}
 \begin{lemma}
\label{proof:wp-seq}
  \Cref{rule:wp-seq} is sound.
\end{lemma}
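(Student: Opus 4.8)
The plan is to mirror the forward direction of the proof of \ref{rule:wp-nest} (\Cref{proof:wp-nest}), replacing the disjoint-support decomposition of the semantics with the sequential-composition decomposition. The single semantic ingredient needed is that, on the index~$i$,
\[
  \sem{\m[i: (t\p; t')]}(\m{\prob}_0)
  =
  \sem{\m[i: t']}(\sem{\m[i: t]}(\m{\prob}_0)).
\]
I would first establish this from \cref{def:semantics}: since $\sem{t}(\prob) = \bind(\prob, \Sem[K]{t})$ and $\Sem[K]{t\p; t'}(s) = \bind(\Sem[K]{t}(s), \Sem[K]{t'})$, the monadic associativity law~\eqref{prop:bind-assoc} gives $\sem{t\p; t'}(\prob) = \bind(\bind(\prob,\Sem[K]{t}),\Sem[K]{t'}) = \sem{t'}(\sem{t}(\prob))$. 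This lifts index-wise to the hyper-term semantics $\sem{\hole}_I$, because only index~$i$ is modified while all other components are preserved verbatim.

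With this in hand, let $a\in\Model_I$ be a valid resource satisfying $\WP {\m[i: t]} {\WP {\m[i: t']} {Q}}$. Unfolding the goal $\WP {\m[i: (t\p; t')]} {Q}$, I would fix an arbitrary $\m{\prob}_0$ and frame~$c$ with $(a\raOp c)\raLeq\m{\prob}_0$, and must produce a $b$ with $(b\raOp c)\raLeq\sem{\m[i: (t\p; t')]}(\m{\prob}_0)$ and $Q(b)$. Applying the assumed (outer) WP to this $\m{\prob}_0$ and $c$ yields an intermediate resource $a_1$ such that $(a_1\raOp c)\raLeq\sem{\m[i: t]}(\m{\prob}_0)$ and $\WP {\m[i: t']} {Q}$ holds on $a_1$.

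Next I would apply the inner WP, which holds on $a_1$, instantiating its universally quantified global distribution with $\sem{\m[i: t]}(\m{\prob}_0)$ and its frame with the same $c$. The required hypothesis $(a_1\raOp c)\raLeq\sem{\m[i: t]}(\m{\prob}_0)$ is exactly what the previous step provided, so I obtain a $b$ with $(b\raOp c)\raLeq\sem{\m[i: t']}(\sem{\m[i: t]}(\m{\prob}_0))$ and $Q(b)$. Rewriting the right-hand side by the semantic equality above yields $(b\raOp c)\raLeq\sem{\m[i: (t\p; t')]}(\m{\prob}_0)$, so this $b$ witnesses the goal. The argument is otherwise entirely routine; the only point requiring care is checking that the associativity rewrite holds at the level of the indexed-tuple semantics, which is immediate here since both terms act on the same single index and leave every other index untouched. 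I therefore expect the \emph{only} nontrivial step to be the semantic decomposition lemma, and the rest to be bookkeeping with the $\raLeq$ relation and the frame~$c$.
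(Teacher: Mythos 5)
Your proposal matches the paper's proof essentially step for step: fix $\m{\prob}_0$ and the frame $c$, apply the outer WP to obtain the intermediate resource $a_1$, instantiate the inner WP at $\sem{\m[i: t]}(\m{\prob}_0)$ with the same frame, and conclude via the semantic identity $\sem{t\p;t'}(\prob_0) = \sem{t'}(\sem{t}(\prob_0))$ (which the paper simply cites as holding by definition, while you justify it via monad associativity). The argument is correct and takes the same approach.
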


\begin{proof}
Assume $a_0 \in \Model_I$ is a valid resource such that
  $(\WP {\m[i: t]} {\WP {\m[i: t']} {Q}})(a_0)$ holds.
  Our goal is to prove $(\WP {(\m[i: t\p; t'])} {Q})(a_0)$ holds,
  which unfolds by definition of WP into:
  \begin{equation}
    \forall \m{\prob}_0\st
    \forall c_0 \st
    (a_0 \raOp c_0) \raLeq \m{\prob}_0
    \implies
      \exists a_2 \st
      \bigl(
      (a_2 \raOp c_0) \raLeq \sem{\m[i: t\p; t']}(\m{\prob}_0)
      \land Q(a_2)
      \bigr)
    \label{wp-seq:goal}
  \end{equation}

  Take an arbitrary $\m{\prob}_0$ and $c_0$ such that
  $ (a_0 \raOp c_0) \raLeq \m{\prob}_0 $.
  By unfolding the WPs in the assumption,
  we have that there exists a
  $a_1 \in \Model_I$ such that:
  \begin{gather}
    (a_1 \raOp c_0) \raLeq \sem{\m[i: t]}(\m{\prob}_0)
    \label{wp-seq:a1}
    \\
    \forall \m{\prob}_1\st
    \forall c_1 \st
      (a_1 \raOp c_1) \raLeq \m{\prob}_1
      \implies
      \exists a_2 \st
      ((a_2 \raOp c_1) \raLeq \sem{\m[i: t']}(\m{\prob}_1)
      \land  Q(a_2))
    \label{wp-seq:a2}
  \end{gather}
  We can apply \eqref{wp-seq:a2} to \eqref{wp-seq:a1}
  by instantiating $\m{\prob}_1$ with $\sem{\m[i: t]}(\m{\prob}_0)$,
  and $c_1$ with $c_0$,
  obtaining:
  \[
    \exists a_2 \st
    ((a_2 \raOp c_0) \raLeq \sem{\m[i: t']}(\sem{\m[i: t]}(\m{\prob}_0))
    \land  Q(a_2))
  \]
  Since by definition,
  $ \sem{t\p; t'}(\prob_0) = \sem{t'}(\sem{t}(\prob_0)) $,
  we obtain the goal \eqref{wp-seq:goal} as desired.
\end{proof} \begin{lemma}
\label{proof:wp-assign}
  \Cref{rule:wp-assign} is sound.
\end{lemma}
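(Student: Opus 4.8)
The plan is to unfold \cref{def:wp} and construct an explicit witness resource for the existential, exploiting the full permission on $\ip{x}{i}$ to control the frame. Since every assertion of \thelogic{} is upward closed, it suffices to establish the goal for the minimal resource $r_0 \is (\m{\salg}_\bot, \m{\prob}_\bot, \m{\permap})$ carrying the trivial probability space $\m{\salg}_\bot(j) = \set{\Store,\emptyset}$ at every index and the permission map $\m{\permap}$; any resource satisfying $(\m{\permap})$ dominates $r_0$. Unfolding WP, I fix an arbitrary global $\m{\prob}_0 \of \Dist(\Full{\Hyp{\Store}})$ and frame $c = (\m{\salg}_c, \m{\prob}_c, \m{\permap}_c)$ with $(r_0 \raOp c) \raLeq (\Full{\Hyp{\Store}}, \m{\prob}_0, \fun\wtv.1)$. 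Writing $g \is \fun \store. \store\upd{\p{x}->\sem{\expr}(\store)}$ for the assignment map, the output is $\m{\prob}_1 \is \sem{\m[i:\code{x:=}\expr]}(\m{\prob}_0)$ with $\m{\prob}_1(j) = \m{\prob}_0(j)$ for $j \ne i$ and $\m{\prob}_1(i) = \m{\prob}_0(i) \circ \inv{g}$. The first key observation is that, since $\m{\permap}(\ip{x}{i}) = 1$ and validity of $r_0 \raOp c$ forces $\m{\permap}_c(\ip{x}{i}) = 0$, the compatibility of $c$ with $\m{\permap}_c$ makes $\m{\salg}_c(i)$ trivial on $\p{x}$: every event of $\m{\salg}_c(i)$ is of the form $\event' \times (\set{\p{x}} \to \Val)$, i.e.\ it does not constrain $\p{x}$.

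Next I construct the witness $b \is (\m{\salg}_b, \m{\prob}_b, \m{\permap})$, where $\m{\salg}_b(j) = \set{\Store,\emptyset}$ for $j\ne i$ and $\m{\salg}_b(i) = \sigcl{\set{X}}$ for the event $X = \set{\store | \store(\p{x}) = \sem{\expr}(\store)}$ witnessing the postcondition, with $\m{\prob}_b(i)(X) = 1$. The crucial point, and where the side condition $\p{x} \notin \pvar(\expr)$ is used, is that $X$ is almost sure in $\m{\prob}_1(i)$: for every $\store$, $g(\store)(\p{x}) = \sem{\expr}(\store) = \sem{\expr}(g(\store))$, the last equality holding because $g$ only alters $\p{x}$ and $\expr$ does not read $\p{x}$; hence $g(\Store) \subseteq X$ and $\m{\prob}_1(i)(X) = \m{\prob}_0(i)(\inv{g}(X)) = 1$. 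This yields $\sure{\p{x}\at{i} = \expr\at{i}}(b)$, and since $b$ owns exactly $\m{\permap}$ the permission conjunct $\E\m{\psp}.\Own{\m{\psp},\m{\permap}}$ also holds, so $Q(b)$ holds for $Q = \sure{\p{x}\at{i} = \expr\at{i}}\withp{\m{\permap}}$.

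It remains to verify $(b \raOp c) \raLeq (\Full{\Hyp{\Store}}, \m{\prob}_1, \fun\wtv.1)$. The permission maps compose validly exactly as for $r_0 \raOp c$. For the probability spaces, at indices $j \ne i$ the factor $b(j)$ is trivial and $c(j)$ already restricts $\m{\prob}_0(j) = \m{\prob}_1(j)$, so nothing changes. At index $i$, every event $\event = \event' \times (\set{\p{x}} \to \Val)$ of $\m{\salg}_c(i)$ satisfies $\inv{g}(\event) = \event$ (as $\event$ ignores $\p{x}$ and $g$ fixes all other variables), whence $\m{\prob}_1(i)(\event) = \m{\prob}_0(i)(\event) = \m{\prob}_c(i)(\event)$; together with $\m{\prob}_1(i)(X) = 1$ this shows that $X$ is independent of every event of $\m{\salg}_c(i)$, so by \cref{lemma:indep-prod-exists} the independent product $(\m{\salg}_b(i), \m{\prob}_b(i)) \iprod (\m{\salg}_c(i), \m{\prob}_c(i))$ exists and its unique measure agrees with the restriction of $\m{\prob}_1(i)$. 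Thus $(b \raOp c) \raLeq \m{\prob}_1$, discharging the existential. The main obstacle is exactly this frame-preservation step: one must argue rigorously that full permission on $\ip{x}{i}$ forces $c$ to be $\p{x}$-oblivious, and then that the pushforward $\m{\prob}_0(i)\circ\inv{g}$ leaves the marginals recorded by $c$ untouched while pinning $\p{x}$ to $\expr$ almost surely --- precisely the two facts that the side conditions $\m{\permap}(\ip{x}{i}) = 1$ and $\p{x} \notin \pvar(\expr)$ are there to guarantee.
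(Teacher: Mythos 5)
Your proof is correct and follows essentially the same route as the paper's: unfold the WP, use $\m{\permap}(\ip{x}{i})=1$ plus validity of the composition to force the frame's \salgebra{} to be trivial on $\p{x}$, take as witness the resource generated by the almost-sure event $\set{\store \mid \store(\p{x})=\sem{\expr}(\store)}$ (the paper writes it as the image of the assignment map and works with the preimage operator $\Pre = \inv{g}$, but under $\p{x}\notin\pvar(\expr)$ these coincide), and verify the independent product with the frame by the same marginal computation. The only point you never invoke is the premise $\forall \p{y}\in\pvar(\expr).\ \m{\permap}(\ip{y}{i})>0$, which the paper uses to check that the witness $b$ is itself a valid resource, i.e.\ that $\m{\salg}_b(i)$ --- non-trivial exactly on $\pvar(\expr)\union\set{\p{x}}$ --- is compatible with the owned permissions; this is immediate but should be stated.
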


\begin{proof}
  \newcommand{\specialevent}{A}
  Let $a \in \Model_I$ be a valid resource,
  and let $a(i) = (\salg, \prob, \permap)$.
  By assumption we have
  $\permap(\p{x}) = 1$ and
  $\permap(\p{y}) > 0$ for all $ \p{y} \in \FV(\expr)$.
  We want to show that $a$ satisfies
  $\WP {\m[i: \code{x:=}\expr]} {\sure{\ip{x}{i} = \expr\at{i}}}$.
This is equivalent to
\[
    \forall \m{\prob}_0 \st
      \forall c\st
      (a\raOp c \raLeq  \m{\prob}_0)
      \implies
        \exists b\st (
          b\raOp c \raLeq  \sem{\m[i: \code{x:=}\expr]}(\m{\prob}_0) \land
          \sure{\ip{x}{i} = \expr\at{i}}(b)
        )
  \]
  We show this holds by picking~$b$ as follows:
  \begin{align*}
    b &\is a\m[i: {(\salg_b, \prob_b, \permap)}]
    &
    \salg_b &\is \set{
      \Store,\emptyset,\specialevent,\Store\setminus\specialevent
    }
    &
    \specialevent &\is \set{
      \store\upd{\p{x}->\sem{\expr}(\store)} | \store \in \Store
}
  \end{align*}
  where $\prob_b$ is determined by setting $ \prob_b(\specialevent)=1. $

  By construction we have that $\sure{\ip{x}{i} = \expr\at{i}}(b)$ holds.
  To close the proof we then need to show that
$(b \raOp c) \raLeq \sem{\m[i: \code{x:=}\expr]}(\m{\prob}_0)$.

  Let $c(i) = (\salg_c,\prob_c,\permap_c)$.
  Observe that by the assumptions on $\permap$,
  we have $\raValid(b)$ since $\salg_b$ is
  only non-trivial on $\pvar(\expr)\union\set{\p{x}}$;
  moreover, by the assumption $\raValid(a \raOp c)$
  we have that $\raValid(\permap + \permap_c)$ holds,
  which means that $\permap_c(\p{x})=0$,
  and thus $\salg_c$ is trivial on \p{x}.

  \newcommand{\Pre}{\operatorname{pre}}
Let us define the function
  $
    \Pre \from \powerset(\Store) \to \powerset(\Store)
  $
  as:
  \[
    \Pre(\event) \is
      \set{ \store
          | \store\upd{\p{x}->\sem{\expr}(\store)} \in \event
        }.
  \]
  That is, $\Pre(\event)$ is the weakest precondition (in the standard sense)
  of the assignment.
  By construction, we have:
  \begin{align*}
    \Pre(\specialevent) &= \Store
    &
    \Pre(\event_1 \inters \event_2) &=
      \Pre(\event_1) \inters \Pre(\event_2)
    \\
    \Pre(\Store \setminus \specialevent) &= \emptyset
    &
    \Pre(\event_c) &= \event_c
      \text{ for all } \event_c \in \salg_c
  \end{align*}
  In particular, the latter holds because $\salg_c$ is trivial in \p{x}.

  By unfolding the definition of $\sem{\hole}$,
  it is easy to check that for every $\event \in \Full{\Store}$:
  \[
    \sem{\code{x:=}\expr}(\mu_0)(\event) = \mu_0(\Pre(\event))
  \]

  We are now ready to show
  $(b \raOp c) \raLeq \sem{\m[i: \code{x:=}\expr]}(\m{\prob}_0)$
  by showing that
  $
    (\salg_b,\prob_b)\iprod(\salg_c,\prob_c)
    =
    (\salg_b \punion \salg_c,
     \restr{\sem{\code{x:=}\expr}(\prob_0)}{(\salg_b \punion \salg_c)})
  $
  where $\prob_0 = \m{\prob}_0(i)$.
  To show this it suffices to prove that
  for every $\event_b \in \salg_b$ and every $\event_c \in \salg_c$,
  $
    \sem{\code{x:=}\expr}(\prob_0)(\event_b \inters \event_c)
    = \prob_b(\event_b) \cdot \prob_c(\event_c).
  $
  We proceed by case analysis on $\event_b$:
  \begin{casesplit}
  \case[$\event_b = \specialevent$]
    Then:
    \begin{align*}
      \sem{\code{x:=}\expr}(\prob_0)(\specialevent \inters \event_c)
        &= \prob_0(\Pre(\specialevent \inters \event_c))
      \\&= \prob_0(\Pre(\specialevent) \inters \Pre(\event_c))
      \\&= \prob_0(\Store \inters \Pre(\event_c))
      \\&= \prob_0(\Pre(\event_c))
      \\&= \prob_b(\specialevent) \cdot \prob_0(\event_c)
      \\&= \prob_b(\specialevent) \cdot \prob_c(\event_c)
    \end{align*}
  \case[$\event_b = \Store \setminus \specialevent$]
    Then:
    \begin{align*}
      \sem{\code{x:=}\expr}(\prob_0)
          (\Store \setminus \specialevent \inters \event_c)
        &= \prob_0(\Pre((\Store \setminus \specialevent) \inters \event_c))
      \\&= \prob_0(\Pre(\Store \setminus \specialevent) \inters \Pre(\event_c))
      \\&= \prob_0(\emptyset \inters \Pre(\event_c))
      \\&= 0
      \\&= \prob_b(\Store \setminus \specialevent) \cdot \prob_c(\event_c)
    \end{align*}
  \case[$\event_b = \Store$ or $\event_b = \emptyset$]
    Analogous to the previous cases.
    \qedhere
  \end{casesplit}
\end{proof}
 \begin{lemma}
\label{proof:wp-samp}
  \Cref{rule:wp-samp} is sound.
\end{lemma}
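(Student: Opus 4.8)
The plan is to mirror the structure of the proof of \ref{rule:wp-assign}, the essential difference being that the new value of~$\p{x}$ is now genuinely random, distributed as $\beta \is \Sem{\dist}(\vec{v})$, rather than being a deterministic function of the store. First I would take a valid resource $a \in \Model_I$ satisfying $\perm{\ip{x}{i}:1}$ and write $a(i) = (\salg_a, \prob_a, \permap_a)$. Since $\perm{\ip{x}{i}:1}$ forces $\permap_a(\p{x}) \geq 1$ and validity of~$a$ bounds it above by~$1$, we have $\permap_a(\p{x}) = 1$ exactly. Unfolding the definition of WP, I fix an arbitrary global distribution $\m{\prob}_0$ and frame~$c$ with $(a\raOp c)\raLeq\m{\prob}_0$, and must produce a witness~$b$ with $(b\raOp c)\raLeq\sem{\m[i: \code{x:~$\dist$($\vec{v}$)}]}(\m{\prob}_0)$ on which the postcondition $\distAs{\ip{x}{i}}{\beta}$ holds.

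For the witness I would leave $b$ equal to~$a$ at every index $j\ne i$ (which suffices there, since the semantics is the identity on those components and $(a\raOp c)\raLeq\m{\prob}_0$ already holds pointwise), and at index~$i$ replace $a(i)$ by a probability space recording only the distribution of~$\p{x}$. Concretely, writing $X_v \is \set{\store | \store(\p{x})=v}$, I set $b(i) \is (\salg_b, \prob_b, \permap_a)$ with $\salg_b \is \sigcl{\set{X_v | v\in\Val}}$ and $\prob_b(X_v) \is \beta(v)$. By construction $\p{x}$ is measurable in $\salg_b$ with $\prob_b \circ \inv{\p{x}} = \beta$, so $\distAs{\ip{x}{i}}{\beta}(b)$ holds immediately (its permission witness being existentially quantified); and $\raValid(b)$ holds because $\salg_b$ is non-trivial only on~$\p{x}$, on which full permission is retained.

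The substance of the argument is establishing $(b\raOp c)\raLeq\prob_1$ where $\prob_1 \is \sem{\m[i: \code{x:~$\dist$($\vec{v}$)}]}(\m{\prob}_0)$. As the semantics fixes all indices $j\ne i$ and $b(j)=a(j)$, this reduces to index~$i$, where, writing $c(i)=(\salg_c,\prob_c,\permap_c)$ and $\prob_0 \is \m{\prob}_0(i)$ (so that $a(i)\raOp c(i)\raLeq\m{\prob}_0(i)$ gives $\prob_c = \restr{\prob_0}{\salg_c}$), it amounts to showing that $(\salg_b,\prob_b)\iprod(\salg_c,\prob_c)$ is defined and equals $(\salg_b\punion\salg_c, \restr{\prob_1(i)}{\salg_b\punion\salg_c})$. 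Following the pattern of the proof of \ref{rule:wp-assign}, it suffices to verify the factorization $\prob_1(i)(\event_b\inters\event_c) = \prob_b(\event_b)\cdot\prob_c(\event_c)$ for all $\event_b\in\salg_b$ and $\event_c\in\salg_c$, which by \cref{def:indep-comb} identifies the independent product with the claimed restriction.

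This factorization is the main obstacle: it is precisely the statement that the freshly sampled value is probabilistically independent of the entire frame in the output distribution. Because $a(i)$ carries full permission on~$\p{x}$, validity of $a\raOp c$ forces $\permap_c(\p{x})=0$, so compatibility makes $\salg_c$ trivial on~$\p{x}$; concretely every $\event_c\in\salg_c$ is invariant under reassigning~$\p{x}$. Unfolding $\prob_1(i) = \DO{\store <- \prob_0; v <- \beta; \return(\store\upd{\p{x}->v})}$ and using this invariance, I obtain on the generators $X_v$ of $\salg_b$ the identity $\prob_1(i)(X_v\inters\event_c) = \beta(v)\cdot\prob_0(\event_c) = \prob_b(X_v)\cdot\prob_c(\event_c)$, which extends to arbitrary $\event_b = \Dunion_{v\in V}X_v \in \salg_b$ by countable additivity (\cref{lemma:sigma-alg-representation}), completing the proof. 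The full permission on~$\p{x}$ is indispensable here: it is exactly what guarantees, through compatibility, that the frame records no information about~$\p{x}$ and so cannot be correlated with its fresh value.
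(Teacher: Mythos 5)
Your proof is correct and follows essentially the same route as the paper's: the same witness $b$ built from the $\sigma$-algebra generated by the events $\set{\store \mid \store(\p{x})=v}$, the same use of full permission on $\p{x}$ to force $\salg_c$ to be trivial on $\p{x}$ via compatibility, and the same verification of the independent-product factorization on generators extended by countable additivity. The only (immaterial) difference is that you define $\prob_b$ directly as $\beta$ on the generators whereas the paper takes $\prob_b \is \restr{\prob_1}{\salg_b}$ and then computes that this restriction agrees with $\sem{\dist}(\vec{v})$; the two definitions coincide.
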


\begin{proof}
  \newcommand{\specialevent}{A}
  Assume $a \in \Model_I$ is valid and such that
  $ a(i) = (\salg, \prob, \permap)$, with $\permap(x) = 1$.
  Our goal is to show that~$a$ satisfies
  $
    \WP {\m[i: \code{x:~$\dist$($\vec{v}$)}]} {\distAs{\ip{x}{i}}{d(\vec{v})}}
  $
which is equivalent to proving, for all $\m{\prob}_0$ and for all~$c$:
\begin{equation}
    (a\raOp c \raLeq  \m{\prob}_0)
    \implies
      \exists b\st \bigl(
        b\raOp c \raLeq \sem{\m[i: \code{x:~$\dist$($\vec{v}$)}]}(\m{\prob}_0)
        \land
        (\distAs{x}{d(\vec{v})})(b)
      \bigr)
    \label{wp-samp:goal}
  \end{equation}

  Let $\prob_0 = \m{\prob}_0(i)$ and
      $\prob_1 = \sem{\code{x:~$\dist$($\vec{v}$)}}(\prob_0)$.
  Moreover, let $ c(i) = (\salg_c,\prob_c,\permap_c) $.
  Observe that by the assumptions on $\permap$ and validity of~$a \raOp c$,
  we have $\permap_c(\p{x})=0$,
  which means $\salg_c$ is trivial on \p{x}.
  We aim to prove~\eqref{wp-samp:goal} by letting
  \begin{align*}
    b &\is a\m[i: {(\salg_b, \prob_b, \permap_b)}]
    &
    \prob_b &\is
      \restr{\prob_1}{\salg_b}
    \\
    \salg_b &\is
      \sigcl*{
        \set[\big]{
          \set{ \store \in \Store | \store(\p{x}) = v }
        | v\in \Val
        }
      }
    &
    \permap_b &\is
      \perm{\p{x}: 1}
\end{align*}
  Note that by construction $ \raValid(\permap_b + \permap_c) $,
  and $\raValid(b)$ since $\salg_b$ is only non-trivial in \p{x}.
  \newcommand{\Pre}{\operatorname{pre}}
  Similarly to the proof for~\cref{rule:wp-assign},
  we define the function
  $
    \Pre \from \powerset(\Store) \to \powerset(\Store)
  $
  as:
  \[
    \Pre(\event) \is
      \set{ \store
          | \exists v \in \Val \st \store\upd{\p{x}->v} \in \event
        }.
  \]
  Since $\salg_c$ is trivial on \p{x},
  for all $ \event_c \in \salg_c $,
  $\Pre(\event_c) = \event_c$.
  Moreover,
  for all $ \event_b \in \salg_b \setminus \set{\emptyset} $,
  $ \Pre(\event_b) = \Store $,
  since $\event_b$ is trivial on every variable except~\p{x}.

  By unfolding the definitions, we have:
  \begin{align*}
    \mu_1(\event) &=
      \sem{\code{x:~$\dist$($\vec{v}$)}}(\prob_0)(\event)
\\&=
      \Sum_{\store \in \event}
        \prob_0(\Pre(\store)) \cdot \sem{\dist}(\vec{v})(\store(\p{x}))
  \end{align*}
  We now show that
  $
    (\salg_b, \prob_b) \iprod (\salg_c, \prob_c)
    =
    (\salg_b \punion \salg_c, \restr{\prob_1}{(\salg_b \punion \salg_c)})
  $
  by showing that
  for all $\event_b \in \salg_b$ and
          $\event_c \in \salg_c$:
  $
    \prob_1(\event_b \inters \event_c)
    =
    \prob_b(\event_b) \cdot \prob_c(\event_c).
  $
  \newcommand{\valOf}{\operatorname{V}}
  To prove this we first
  define $\valOf \from \powerset(\Store) \to \powerset(\Val)$ as
  $
    \valOf(\event) \is
      \set{ \store(\p{x}) | \store \in \event },
  $
  and $ S_w \is \set{ \store | \store(\p{x}) = w } $.
  We observe that
  $
    \event_b   = \Dunion_{w \in \valOf(\event_b)} S_w
  $, and thus
  $
    \event_b \inters \event_c
    = \Dunion_{w \in \valOf(\event_b)} (\event_c \inters S_w)
  $;
  moreover,
  $
    \Pre(\event_c \inters S_w)
    = \set{ \store | \store\upd{x->w} \in \event_c } = \event_c
  $.
  Thus, we can calculate:
  \begin{align*}
    \prob_1(\event_b \inters \event_c)
      &=
    \sum_{\mathclap{\store \in \event_b \inters \event_c}}
      \prob_0(\Pre(\store)) \cdot \sem{\dist}(\vec{v})(\store(\p{x}))
    \\&=
    \sum_{w \in \valOf(\event_b)}
    \sum_{\store \in \event_c \inters S_{w}}
      \prob_0(\Pre(\store)) \cdot
      \sem{\dist}(\vec{v})(w)
    \\&=
    \sum_{w \in \valOf(\event_b)}
    \left(
      \sem{\dist}(\vec{v})(w) \cdot
      \sum_{\mathclap{\store \in \event_c \inters S_{w}}}
      \prob_0(\Pre(\store))
    \right)
    \\&=
    \left(
    \sum_{w \in \valOf(\event_b)}
      \sem{\dist}(\vec{v})(w)
      \cdot \prob_0(\Pre(\event_c \inters S_w))
    \right)
    \\&=
    \left(
    \sum_{w \in \valOf(\event_b)}
      \sem{\dist}(\vec{v})(w)
    \right)
    \cdot \prob_0(\event_c)
    \\&=
    \prob_b(\event_b) \cdot  \prob_c(\event_c)
  \end{align*}
The last equation is given by
  $ a\raOp c \raLeq  \m{\prob}_0 $ which implies that $ \prob_c = \restr{\prob_0}{\salg_c} $, and by:
  \begin{align*}
    \prob_b(\event_b)
      =
    \prob_1(\event_b)
      &=
    \sum_{\store \in \event_b}
      \prob_0(\Pre(\store)) \cdot \sem{\dist}(\vec{v})(\store(\p{x}))
    \\&=
    \sum_{w \in \valOf(\event_b)}
      \sum_{\store \in S_w}
        \prob_0(\Pre(\store)) \cdot \sem{\dist}(\vec{v})(w)
    \\&=
    \sum_{\mathclap{w \in \valOf(\event_b)}}
      \sem{\dist}(\vec{v})(w)
  \end{align*}

  Finally, we need to show $ (\distAs{x}{d(\vec{v})})(b) $
  which amounts to proving
  $\almostM{\p{x}}{(\salg_b,\prob_b)}$
  and
  $\sem{\dist}(\vec{v}) = \prob_b \circ \inv{\p{x}}$.
  The former holds because by construction \p{x} is measurable in $\salg_b$.
  For the latter, for all $W \subs \Val$:
  \[
    (\prob_b \circ \inv{\p{x}})(W)
    =
    \prob_b (\inv{\p{x}}(W))
    =
    \sum_{\mathclap{w \in \valOf(\inv{\p{x}}(W))}}
      \sem{\dist}(\vec{v})(w)
    =
    \sum_{w \in W}
      \sem{\dist}(\vec{v})(w)
    =
    \sem{\dist}(\vec{v})(W).
    \qedhere
  \]
\end{proof}
 \begin{lemma}
\label{proof:wp-if-prim}
  \Cref{rule:wp-if-prim} is sound.
\end{lemma}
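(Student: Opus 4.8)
The plan is to prove soundness directly from Definition \ref{def:wp}, by unfolding the weakest-precondition assertion on both sides and case-splitting on the literal guard value $v$. The crucial observation is that the guard here is a \emph{value} $v$, not an arbitrary expression, so its denotation $\sem{v}(s) = v$ is constant across all stores $s$. Hence the branch chosen by the kernel semantics $\Sem[K]{\Cond{v}{t_1}{t_2}}(s)$ (Definition \ref{def:semantics}) is independent of $s$, and pushing this through $\sem{\term}(\prob) = \DO{s <- \prob; \Sem[K]{\term}(s)}$ and the index-wise lifting of hyper-term semantics gives, for every input family $\m{\prob}$,
\[
  \sem{\m[i: \Cond{v}{t_1}{t_2}]}(\m{\prob})
  = \begin{cases}
      \sem{\m[i: t_1]}(\m{\prob}) & \text{if } v \ne 0, \\
      \sem{\m[i: t_2]}(\m{\prob}) & \text{if } v = 0,
    \end{cases}
\]
the store distributions at all other indices being preserved identically in either case.

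With this identity in hand the entailment is immediate by a case analysis on $v$ (the guard ranging over the canonical booleans $\set{0,1}$, as in \ref{rule:wp-if-unary}). If $v \ne 0$, the meta-level premise $\ITE{v}{\WP{\m[i: t_1]}{Q(1)}}{\WP{\m[i: t_2]}{Q(0)}}$ reduces to $\WP{\m[i: t_1]}{Q(1)}$, while the conclusion $\WP{\m[i: \Cond{v}{t_1}{t_2}]}{Q(v)}$ becomes $\WP{\m[i: t_1]}{Q(v)}$ by the semantic identity above; since the selected truth value is $v = 1$, we have $Q(v) = Q(1)$ and the two sides are literally the same assertion, so the goal closes by reflexivity. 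Symmetrically, when $v = 0$ the premise is $\WP{\m[i: t_2]}{Q(0)}$ and the conclusion collapses to $\WP{\m[i: t_2]}{Q(v)} = \WP{\m[i: t_2]}{Q(0)}$, again closing reflexively. (Should one prefer to keep $v$ unrestricted, the same case split works with \ref{rule:wp-cons}, the residual obligation being $Q(1) \proves Q(v)$ in the nonzero branch.)

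Because both sides reduce to the \emph{identical} WP assertion in each branch, no manipulation of the underlying resource, of the frame $c$, or of the output distribution is required once the semantic reduction is available. The only step demanding care — and thus the main (albeit modest) obstacle — is justifying the displayed identity, i.e.\ commuting the constant branch-test out of the monadic bind defining $\Sem[K]{\hole}$. This is a routine unfolding of Definition \ref{def:semantics} relying on $\sem{v}(s) = v$ being independent of $s$, entirely analogous to the evaluation-context fact proved in Lemma \ref{lemma:context-binding}.
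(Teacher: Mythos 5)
Your proposal is correct and follows essentially the same route as the paper: both reduce the claim to the observation that the kernel semantics of a conditional with a \emph{literal} guard $v$ is store-independent and hence coincides with the semantics of the selected branch, after which a case split on $v$ makes the premise and conclusion the same WP assertion. The only cosmetic difference is that the paper unfolds \cref{def:wp} explicitly and writes the postcondition as $Q(v \doteq 1)$ to normalize the guard value, whereas you restrict $v$ to $\set{0,1}$ (or discharge the residual $Q(1)\proves Q(v)$ via \ref{rule:wp-cons}), which amounts to the same thing.
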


\begin{proof}
  For any valid resource $a$,
  \begin{align*}
    & \left(\ITE{v}{\WP{\m[i: t_1]}{Q(1)}}{\WP{\m[i: t_2]}{Q(0)}}\right)(a)\\
    {}\iff {} &
    \begin{cases}
      \left(\WP{\m[i: t_1]}{Q(1)}\right)(a) \CASE v \doteq 1 \\
      \left(\WP{\m[i: t_2]}{Q(0)}\right)(a)\OTHERWISE \\
    \end{cases} \\
    {}\iff {}&
  \forall \m{\prob}_0.
    \forall c \st
    (a \raOp c) \raLeq \m{\prob}_0
    \implies
      \begin{cases}
          \exists b \st (b \raOp c) \raLeq \sem{i: \m{t_1}}(\m{\prob}_0)
      \land Q(1)(b)
      \CASE v \doteq 1 \\
          \exists b \st (b \raOp c) \raLeq \sem{i: \m{t_2}}(\m{\prob}_0)
      \land Q(0)(b)
      \OTHERWISE
      \end{cases} \\
    {}\iff {} &
    \forall \m{\prob}_0.
    \forall c \st
    (a \raOp c) \raLeq \m{\prob}_0
    {}\implies {}
    \exists b \st (b \raOp c) \raLeq \sem{i: \ITE{v}{\m{t_1}}{\m{t_2}}}(\m{\prob}_0)
      \land  Q(v \doteq 1)(b) \\
    \implies &  \left(\WP{\m[i: \ITE{v}{\m{t_1}}{\m{t_2}}]}{Q(v \doteq 1)}\right)(a)
  \end{align*}
\end{proof} \begin{lemma}
\label{proof:wp-bind}
  \Cref{rule:wp-bind} is sound.
\end{lemma}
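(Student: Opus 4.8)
The plan is to unfold the weakest precondition on the right and reduce the whole rule to a single semantic identity, namely $\sem{\m[i: \Ectxt[\expr]]}(\m{\prob}_0) = \sem{\m[i: \Ectxt[v]]}(\m{\prob}_0)$, which holds precisely when $\expr$ is almost surely equal to $v$ in the input distribution $\m{\prob}_0(i)$; that almost-sure equality is exactly what the $\sure{\expr\at{i}=v}$ conjunct supplies. The remaining work is then bookkeeping with the separating conjunction and the frame in the WP definition.

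First I would fix a valid $a \in \Model_I$ satisfying $\sure{\expr\at{i}=v} * \WP{\m[i: \Ectxt[v]]}{Q}$, and by the definition of $*$ obtain $a_1, a_2$ with $(a_1 \raOp a_2) \raLeq a$, $\sure{\expr\at{i}=v}(a_1)$, and $(\WP{\m[i: \Ectxt[v]]}{Q})(a_2)$. To establish $(\WP{\m[i: \Ectxt[\expr]]}{Q})(a)$ I unfold WP: fix $\m{\prob}_0$ and a frame $c$ with $(a \raOp c) \raLeq \m{\prob}_0$, and look for a witness $b$. Since $(a_1 \raOp a_2) \raLeq a$, monotonicity of $\raOp$ gives $(a_2 \raOp (a_1 \raOp c)) \raLeq \m{\prob}_0$, so I instantiate the WP hypothesis on $a_2$ with the enlarged frame $c' = a_1 \raOp c$, obtaining some $b$ with $(b \raOp a_1 \raOp c) \raLeq \sem{\m[i: \Ectxt[v]]}(\m{\prob}_0)$ and $Q(b)$.

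The heart of the argument is the semantic identity. Because $\sure{\expr\at{i}=v}$ is upward closed and $a_1 \raLeq (\Full{\Hyp{\Store}}, \m{\prob}_0, \fun x.1)$, the distribution $\m{\prob}_0(i)$ assigns probability $1$ to $\set{\store | \sem{\expr}(\store) = v}$; equivalently, every $\store \in \psupp(\m{\prob}_0(i))$ satisfies $\sem{\expr}(\store) = v$. Now $\sem{\Ectxt[\expr]}(\m{\prob}_0(i)) = \bind(\m{\prob}_0(i), \Sem[K]{\Ectxt[\expr]})$, and for each such $\store$ in the support \cref{lemma:context-binding} gives $\Sem[K]{\Ectxt[\expr]}(\store) = \Sem[K]{\Ectxt[\sem{\expr}(\store)]}(\store) = \Sem[K]{\Ectxt[v]}(\store)$. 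Since $\bind$ only weights outcomes in the support, the two kernels produce the same distribution, so $\sem{\Ectxt[\expr]}(\m{\prob}_0(i)) = \sem{\Ectxt[v]}(\m{\prob}_0(i))$; as both hyper-term semantics act as the identity at all $j \ne i$, we conclude $\sem{\m[i: \Ectxt[\expr]]}(\m{\prob}_0) = \sem{\m[i: \Ectxt[v]]}(\m{\prob}_0)$.

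Finally I would rewrite with this identity to get $(b \raOp a_1 \raOp c) \raLeq \sem{\m[i: \Ectxt[\expr]]}(\m{\prob}_0)$, and since composition with the valid resource $a_1$ only extends the underlying probability space we have $b \raOp c \raLeq b \raOp a_1 \raOp c$; transitivity then yields $(b \raOp c) \raLeq \sem{\m[i: \Ectxt[\expr]]}(\m{\prob}_0)$ together with $Q(b)$, so $b$ is the required witness. The main obstacle is the semantic equivalence step: one must carefully propagate almost-sure equality from $a_1$ up to the global distribution $\m{\prob}_0(i)$, and confirm that $\bind$ is insensitive to the behaviour of the kernel outside the support, so that \cref{lemma:context-binding} can be applied pointwise on $\psupp(\m{\prob}_0(i))$ rather than on all of $\Store$.
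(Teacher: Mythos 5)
Your proof is correct and follows essentially the same route as the paper's: both reduce the rule to the semantic identity $\sem{\m[i:\Ectxt[\expr]]}(\m{\prob}_0)=\sem{\m[i:\Ectxt[v]]}(\m{\prob}_0)$ via \cref{lemma:context-binding} together with the almost-sure equality $\sure{\expr\at{i}=v}$ propagated (by upward closure) to the global input distribution, noting that $\bind$ ignores the kernel outside the support. The only cosmetic difference is bookkeeping: the paper lifts both conjuncts to $a$ by upward closure and reuses the original frame $c$, whereas you keep the decomposition $a_1\raOp a_2$ and absorb $a_1$ into an enlarged frame $a_1\raOp c$ — both are sound, and your explicit restriction of the pointwise application of \cref{lemma:context-binding} to $\psupp(\m{\prob}_0(i))$ is, if anything, slightly more careful than the paper's phrasing.
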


\begin{proof}
  For any resource $a = (\m{\sigmaF}, \m{\mu}, \m{\permap})$,
  $(\sure{\expr\at{i}=v} * \WP{\m*[i: {\Ectxt[v]}]}{Q})(\m{\sigmaF}, \m{\mu}, \m{\permap})$ iff
    there exists $(\m{\sigmaF_1}, \m{\mu_1}, \m{\permap_1}), (\m{\sigmaF_2}, \m{\mu_2}, \m{\permap_2})$ such that
\begin{gather*}
      (\sure{\expr\at{i}=v})(\m{\sigmaF_1}, \m{\mu_1}, \m{\permap_1})\\
      (\WP{\m*[i: {\Ectxt[v]}]}{Q})(\m{\sigmaF_2}, \m{\mu_2}, \m{\permap_2})\\
      (\m{\sigmaF_1}, \m{\mu_1}, \m{\permap_1}) \raOp  (\m{\sigmaF_2}, \m{\mu_2}, \m{\permap_2})
      \raLeq (\m{\sigmaF}, \m{\mu}, \m{\permap})
    \end{gather*}
By the upwards closure, we also have
    \begin{gather*}
      (\sure{\expr\at{i}=v})(\m{\sigmaF}, \m{\mu}, \m{\permap}) \\
      (\WP{\m*[i: {\Ectxt[v]}]}{Q})(\m{\sigmaF}, \m{\mu}, \m{\permap})
    \end{gather*}
The fact that $(\sure{\expr\at{i}=v})(\m{\sigmaF_1}, \m{\mu_1}, \m{\permap_1})$
    implies that $\m{\mu_1}(\inv{(\expr\at{i}=v)} (\True)) = 1$,
    which implies that
    $\sem{\expr}(s) = v$ for all $s \in \psupp(\m{\mu}_1(i))$.

    By~\cref{lemma:context-binding}, we have for any $s \in \Store$,
    \begin{align*}
      \Sem[K]{\Ectxt[\expr]}(s) = \Sem[K]{\Ectxt[\sem{\expr}(s)]}(s),
    \end{align*}
    which implies that for any $\mu_0$ over $\Full{\Store}$
    \begin{align*}
      \sem{\Ectxt[\expr]}(\mu_0)
      &= \DO{
        s <- \m{\mu_0};
        \Sem[K]{\Ectxt[\expr]}(s)
      }\\
      &= \DO{
        s <- \m{\mu_0};
        \Sem[K]{\Ectxt[\sem{\expr}(s)]}(s)
      }\\
      &= \DO{
        s <- \m{\mu_0};
        \Sem[K]{\Ectxt[v]}(s)
      }\\
      &= \sem{\Ectxt[v]}(\mu_0).
    \end{align*}
Define $\mu'_0 = \Sem{\m[i: {\Ectxt[v]}]} \mu_0$.
    Thus, $(\WP{\m*[i: {\Ectxt[v]}]}{Q})(a)$ iff
\begin{align*}
  \forall \mu_0 \st
    \forall c\st
    (\raValid(a \raOp c) \land  a\raOp c \raLeq  a_{\mu_0})
      \implies
        \exists a'\st
        (\raValid(a' \raOp c) \land  a'\raOp c \raLeq  a_{\mu'_0} \land Q(a'))
\end{align*}
iff
\begin{align*}
  \forall \mu_0 \st
    \forall c\st
    (\raValid(a \raOp c) \land  a\raOp c \raLeq  a_{\mu_0})
      \implies
        \exists a'\st
        (\raValid(a' \raOp c) \land  a'\raOp c \raLeq  a_{\mu'_0} \land Q(a'))
\end{align*}
iff $\model{\WP{\m*[i: {\Ectxt[\expr]}]}{Q}}{(a)}$.
\end{proof}
 \begin{lemma}
\label{proof:wp-loop-unf}
  \Cref{rule:wp-loop-unf} is sound.
\end{lemma}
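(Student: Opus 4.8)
The plan is to reduce the statement to a single semantic identity describing loop unfolding, and then replay verbatim the argument used for the soundness of \cref{rule:wp-seq}.

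First I would establish the key semantic identity
\[
  \sem{\m[i: \Loop{(n+1)}{t}]}(\m{\prob})
  = \sem{\m[i: t]}\bigl(\sem{\m[i: \Loop{n}{t}]}(\m{\prob})\bigr)
\]
for every input hyper-distribution $\m{\prob}$. Since both $\Loop{(n+1)}{t}$ and $\Loop{n}{t}$ act only on index $i$ and leave the other components untouched, it suffices to check the identity on component $i$ (on the remaining components both sides collapse to $\m{\prob}(j)$). Because the loop counter is a literal $n$ with $\sem{n}(s) = n$ for all $s$, \cref{def:semantics} gives $\sem{\Loop{m}{t}}(\prob) = \DO{s <- \prob; \var{loop}_t(m, s)}$, and the recursive clause of $\var{loop}_t$ yields $\var{loop}_t(n+1, s) = \DO{s' <- \var{loop}_t(n, s); \Sem[K]{t}(s')}$. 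Substituting this and applying monad associativity \eqref{prop:bind-assoc} lets me rewrite $\sem{\Loop{(n+1)}{t}}(\prob)$ as $\DO{s' <- \sem{\Loop{n}{t}}(\prob); \Sem[K]{t}(s')} = \sem{t}(\sem{\Loop{n}{t}}(\prob))$, which is exactly the claimed identity on index $i$.

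With the identity in hand, the rest mirrors \cref{proof:wp-seq}. I would take a valid $a_0$ satisfying $\WP{\m[i: \Loop{n}{t}]}{\WP{\m[i: t]}{Q}}$ and unfold the definition of WP. For arbitrary $\m{\prob}_0$ and frame $c_0$ with $(a_0 \raOp c_0) \raLeq \m{\prob}_0$, the outer WP produces some $a_1$ with $(a_1 \raOp c_0) \raLeq \sem{\m[i: \Loop{n}{t}]}(\m{\prob}_0)$ such that $a_1$ satisfies $\WP{\m[i: t]}{Q}$. Instantiating the latter at $\m{\prob}_1 = \sem{\m[i: \Loop{n}{t}]}(\m{\prob}_0)$ and $c_1 = c_0$ yields some $a_2$ with $(a_2 \raOp c_0) \raLeq \sem{\m[i: t]}(\sem{\m[i: \Loop{n}{t}]}(\m{\prob}_0))$ and $Q(a_2)$. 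The semantic identity rewrites this right-hand side as $\sem{\m[i: \Loop{(n+1)}{t}]}(\m{\prob}_0)$, so $a_2$ witnesses $(\WP{\m[i: \Loop{(n+1)}{t}]}{Q})(a_0)$, as required.

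The only real content is the semantic identity, and the main obstacle there is purely bookkeeping: ensuring the loop guard is treated as a constant (so that no per-store conditioning on $\sem{n}$ arises) and that restricting attention to index $i$ is legitimate because the hyper-term modifies only that component. Everything after that is a word-for-word replay of the sequencing argument, with $\Loop{n}{t}$ playing the role of the first statement and $t$ the role of the second.
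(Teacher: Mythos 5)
Your proof is correct and follows essentially the same route as the paper's: the paper likewise observes (by unfolding \cref{def:semantics} and the recursive clause of $\var{loop}_{\term}$) that $\sem{\Loop{(n+1)}{t}}(\prob) = \sem{(\Loop{n}{t})\p;t}(\prob)$, and then appeals to the argument of \cref{proof:wp-seq}. Your write-up merely makes explicit the bookkeeping (constant loop bound, monad associativity, restriction to index $i$) that the paper leaves implicit.
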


\begin{proof}
  By definition,
  \begin{align*}
    \sem{\Loop{(n+1)}{t}}(\prob)
    &= \bigl(
         \DO{s <- \prob; s' <- \var{loop}_{\term}(n,s); \Sem[K]{t}(s')}
       \bigr) \\
    &= \sem{(\Loop{n}{t})\p;t}(\prob)
  \end{align*}
  thus the rule follows from the argument of \cref{proof:wp-seq}.
\end{proof} \begin{lemma}
\label{proof:wp-loop}
  \Cref{rule:wp-loop} is sound.
\end{lemma}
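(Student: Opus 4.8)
The plan is to prove the entailment by induction on the number of iterations~$n$, reducing the loop to a sequence of single-body executions via the unfolding rule~\ref{rule:wp-loop-unf} and threading the family of invariants~$P$ through the consequence rule~\ref{rule:wp-cons}. The premises of the rule are exactly the per-iteration preservation facts $P(i) \proves \WP{\m[j: t]}{P(i+1)}$, so the whole argument is essentially logical once the base case is handled semantically.

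For the base case $n = 0$ I would argue directly at the level of the model. By the loop semantics (\cref{def:semantics}) we have $\var{loop}_t(0, s) = \return(s)$, hence $\sem{\m[j: \Loop{0}{t}]}(\m{\prob}) = \m{\prob}$ by the right-unit monad law~\eqref{prop:bind-unit-r}; that is, zero iterations act as the identity transformer. Consequently $\WP{\m[j: \Loop{0}{t}]}{P(0)}$ unfolds to the requirement that there be an updated resource satisfying $P(0)$ which still composes with every frame and sits below the \emph{unchanged} output distribution, and picking the input resource itself discharges this—exactly as in the proof of the skip rule (\ref{rule:wp-skip}). Since the premise is vacuously satisfied for $n = 0$, this gives $P(0) \proves \WP{\m[j: \Loop{0}{t}]}{P(0)}$.

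For the inductive step I assume the statement for $n$ and am given the premises for $n+1$, namely $P(i) \proves \WP{\m[j: t]}{P(i+1)}$ for all $i < n+1$. Restricting to $i < n$ yields precisely the hypotheses needed to invoke the induction hypothesis, so $P(0) \proves \WP{\m[j: \Loop{n}{t}]}{P(n)}$. Taking $i = n$ in the premises gives $P(n) \proves \WP{\m[j: t]}{P(n+1)}$, and applying the consequence rule~\ref{rule:wp-cons} inside the weakest precondition $\WP{\m[j: \Loop{n}{t}]}{\,\cdot\,}$ turns this into $\WP{\m[j: \Loop{n}{t}]}{P(n)} \proves \WP{\m[j: \Loop{n}{t}]}{\WP{\m[j: t]}{P(n+1)}}$. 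Chaining the two entailments by transitivity of $\proves$ gives $P(0) \proves \WP{\m[j: \Loop{n}{t}]}{\WP{\m[j: t]}{P(n+1)}}$, and one final application of~\ref{rule:wp-loop-unf} (with $Q = P(n+1)$) collapses the nested weakest preconditions into $P(0) \proves \WP{\m[j: \Loop{(n+1)}{t}]}{P(n+1)}$, closing the induction.

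The overwhelming majority of the work is purely logical, relying only on the already-established structural and unfolding rules, so the single point that touches the semantic model is the base case, where I must verify that a zero-iteration loop is the identity transformer. This is the main—though mild—obstacle: it amounts to unfolding $\var{loop}_t$ and invoking the monad laws, and is a direct specialization of the reasoning already carried out for~\ref{rule:wp-skip}. No additional measure-theoretic machinery is required beyond what the unfolding rule~\ref{rule:wp-loop-unf} already encapsulates.
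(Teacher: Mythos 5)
Your proposal is correct and follows essentially the same route as the paper's proof: induction on $n$, a base case handled semantically exactly as for \ref{rule:wp-skip} (zero iterations are the identity transformer), and an inductive step that chains the induction hypothesis with the premise at $i=n$ via \ref{rule:wp-cons} and then collapses the nested WPs with \ref{rule:wp-loop-unf}. The only cosmetic difference is the indexing of the induction ($n \to n+1$ rather than $n-1 \to n$).
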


\begin{proof}
  By induction on~$n$.
  \begin{induction}
    \step[Base case~$n=0$]
      Analogously to \cref{proof:wp-skip}
      since, by definition,
      $\Sem{\Loop{0}{t}}(\mu_0) = \mu_0$.

    \step[Induction step~$n>0$]
      By induction hypothesis
      $P(0) \proves \WP{\m[j: \Loop{(n-1)}{t}]}{P(n-1)}$ holds,
      and we want to show that
      $P(0) \proves \WP{\m[j: \Loop{n}{t}]}{P(n)}$.
      By \cref{proof:wp-loop-unf},
      it suffices to show
      $ P(0) \proves \WP{\m[j: \Loop{(n-1)}{t}]}{\WP {\m[j:t]}{P(n)}} $.
      By applying the induction hypothesis and \cref{proof:wp-cons} we are left
      with proving
      $ P(n-1) \proves \WP {\m[j:t]}{P(n)} $
      which is implied by the premise of the rule with $i=n-1 < n$.
    \qedhere
  \end{induction}
\end{proof}
 
\subsection{Soundness of Derived Rules}
\label{sec:appendix:derived-rules}

In this section we provide derivations for the rules we claim
are derivable in \thelogic.

\subsubsection{Ownership and Distributions}
\begin{lemma}
\label{proof:sure-dirac}
  \Cref{rule:sure-dirac} is sound.
\end{lemma}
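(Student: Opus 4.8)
The plan is to derive \Cref{rule:sure-dirac} entirely within the logic, as a two-step specialization of the two monadic-unit rules for \supercond, namely \ref{rule:c-unit-r} and \ref{rule:c-unit-l}. Both of these are logical equivalences, so establishing the biconditional of \Cref{rule:sure-dirac} requires no separate argument for each direction.

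First I would instantiate \ref{rule:c-unit-r}, which reads $\distAs{\aexpr\at{i}}{\prob} \lequiv \CC\prob w.\sure{\aexpr\at{i}=w}$, with the distribution $\prob \is \dirac{v}$ (renaming the bound variable to $w$ to avoid clashing with the fixed value $v$). This yields
\[
  \distAs{\aexpr\at{i}}{\dirac{v}}
  \lequiv
  \CC{\dirac{v}} w.\sure{\aexpr\at{i}=w}.
\]
Second, since $\dirac{v}$ is precisely the Dirac distribution $\return(v)$ of the underlying monad, I would apply the left-unit rule \ref{rule:c-unit-l}, $\CC{\dirac{v_0}} w.K(w) \lequiv K(v_0)$, with $v_0 = v$ and $K(w) \is \sure{\aexpr\at{i}=w}$, collapsing the trivial conditioning:
\[
  \CC{\dirac{v}} w.\sure{\aexpr\at{i}=w}
  \lequiv
  \sure{\aexpr\at{i}=v}.
\]
Chaining the two equivalences gives $\distAs{\aexpr\at{i}}{\dirac{v}} \lequiv \sure{\aexpr\at{i}=v}$, which is exactly the claimed statement.

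I do not anticipate any real obstacle here: the rule is a direct corollary of the fact that conditioning on $\prob$ and then on a Dirac degenerates to an unconditional almost-sure statement, which is the logic-level shadow of the monad law $\bind(\return(v),\krnl)=\krnl(v)$. If one preferred a more primitive route, the forward direction $\distAs{\aexpr\at{i}}{\dirac{v}} \proves \sure{\aexpr\at{i}=v}$ also follows from \ref{rule:dist-fun} by pushing $\aexpr$ through the map $g$ with $g(a) = ((a=v)\in\true)$, noting $\dirac{v} \circ \inv{g} = \dirac{\True}$, after unfolding the definition $\sure{\aexpr\at{i}=v} = \distAs{((\aexpr=v)\in\true)\at{i}}{\dirac{\True}}$. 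The converse would then require a semantic almost-measurability argument (showing that $\aexpr=v$ with probability one forces $\aexpr$ to be almost measurable with distribution $\dirac{v}$), so the two-step derivation via the unit rules is both shorter and avoids dropping to the model.
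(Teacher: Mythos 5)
Your derivation is correct, but it takes a genuinely different route from the paper's. The paper proves \Cref{rule:sure-dirac} by unfolding the definitions of both sides down to the model: it rewrites $\distAs{\aexpr\at{i}}{\dirac{v}}$ as ownership of some $(\m{\salg},\m{\prob})$ together with the pure fact $\m{\prob}(i) \circ \inv{\aexpr} = \dirac{v}$, observes that this pure fact is the same as $\m{\prob}(i) \circ \inv{(\aexpr = v)} = \dirac{\True}$, and concludes by the definition of $\sure{\aexpr\at{i}=v}$. That argument is short but lives at the level of the semantic definitions, and it quietly elides the fact that almost-measurability of $\aexpr$ and of $(\aexpr = v)$ coincide under the probability-one hypothesis (which does hold, but needs a small check). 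Your two-step derivation via \ref{rule:c-unit-r} instantiated at $\dirac{v}$ followed by \ref{rule:c-unit-l} stays entirely within the logic, using only primitive rules whose soundness proofs already handle the measurability bookkeeping; there is no circularity, since neither unit rule depends on \Cref{rule:sure-dirac}. In that sense your proof is arguably more in the spirit of the paper's own methodology for derived rules, at the cost of resting on the (nontrivial) soundness proof of \ref{rule:c-unit-r} rather than on a direct inspection of the definitions. Your closing remark about the \ref{rule:dist-fun} alternative is also accurate, including the observation that its converse would force a semantic argument, which is precisely why the unit-law route is preferable.
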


\begin{proof}
  \begin{eqexplain}
    \distAs{E\at{i}}{\delta_v}
\whichisequiv*
    \E \m{\sigmaF}, \m{\prob}.
      \Own{(\m{\sigmaF}, \m{\prob})}
      *
      \pure{\m{\prob} \circ \inv{E\at{i}} = \dirac{v}}
\whichisequiv
    \E \m{\sigmaF}, \m{\prob}.
      \Own{(\m{\sigmaF}, \m{\prob})}
      *
      \pure{\m{\prob} \circ \inv{(E\at{i} = v)} = \dirac{\True}}
\whichisequiv
    \sure{E\at{i} = v}
\qedhere
  \end{eqexplain}
\end{proof} \begin{lemma}
\label{proof:sure-eq-inj}
  \Cref{rule:sure-eq-inj} is sound.
\end{lemma}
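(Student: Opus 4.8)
The plan is to collapse the separated pair of almost-sure equalities into a single almost-sure assertion and then split on whether the two values agree. First I would note that $\sure{\aexpr\at{i}=v}$ is by definition $\sure{(\aexpr=v)\at{i}}$, so the premise reads $\sure{(\aexpr=v)\at{i}} * \sure{(\aexpr=v')\at{i}}$. Applying the primitive rule \ref{rule:sure-merge} I would rewrite this, staying entirely inside the logic, as the single assertion $\sure{\bigl((\aexpr=v)\land(\aexpr=v')\bigr)\at{i}}$.

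Next I would perform a meta-level case analysis on $v=v'$. If $v=v'$, the goal $\pure{v=v'}$ is $\pure{\True}$ and is entailed by anything, so there is nothing to prove. If $v\neq v'$, the boolean expression $(\aexpr=v)\land(\aexpr=v')$ is false in every store, so the merged assertion is an \emph{almost surely false} statement; I would argue it is unsatisfiable on valid resources, which makes the entailment into $\pure{v=v'}$ (here equivalent to $\False$) hold vacuously.

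The only non-routine step is justifying that $\sure{F\at{i}}$ holds on no valid resource when $F$ is the identically-false predicate. Unfolding the abbreviation, $\sure{F\at{i}}=\distAs{(F\in\true)\at{i}}{\dirac{\True}}$, so a witnessing resource $(\m{\salg},\m{\prob},\m{\permap})$ would have to satisfy $\dirac{\True}=\m{\prob}(i)\circ\inv{(F\in\true)}$. But $(F\in\true)$ is the constant map to $\False$, hence $\inv{(F\in\true)}(\True)=\emptyset$ and $\m{\prob}(i)\circ\inv{(F\in\true)}=\dirac{\False}\neq\dirac{\True}$ --- a contradiction. Thus no valid resource satisfies the premise once $v\neq v'$.

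I expect the main difficulty to be presentational rather than mathematical: phrasing the merge and the \emph{almost-sure-false is absurd} step so that the only semantics touched is the short computation above. A fully semantic alternative avoids \ref{rule:sure-merge}: from the separating conjunction one extracts witnesses $a_1,a_2$ whose probability spaces at index $i$ admit an independent product (by validity of $a_1 \raOp a_2$), with measurable $L\subs\inv{\aexpr}(v)$ and $L'\subs\inv{\aexpr}(v')$ of probability one; when $v\neq v'$ one has $L\inters L'=\emptyset$, and \cref{def:indep-comb} then forces $0=\prob(L\inters L')=\prob_1(L)\cdot\prob_2(L')=1$, a contradiction (cf. \cref{lemma:indep-prod-exists}). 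I would favour the \ref{rule:sure-merge} route for uniformity with the adjacent derived-rule proofs.
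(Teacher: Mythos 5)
Your proof is correct, but it takes a genuinely different route from the paper's. The paper derives \ref{rule:sure-eq-inj} in four rule-level steps: each conjunct is converted to Dirac ownership via \ref{rule:sure-dirac}, the separating conjunction is weakened to $\land$ (valid by upward closure), the primitive \ref{rule:dist-inj} yields $\pure{\dirac{v}=\dirac{v'}}$, and injectivity of $v\mapsto\dirac{v}$ gives $\pure{v=v'}$ --- no case split and no unsatisfiability argument. You instead merge the two almost-sure facts via \ref{rule:sure-merge} and, when $v\neq v'$, refute the merged assertion. Your refutation is sound: unfolding $\sure{F\at{i}}$ for identically-false $F$, the pushforward of any probability measure along the constant-$\False$ map is $\dirac{\False}$, which cannot equal the required $\dirac{\True}$. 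The trade-off is that the paper classifies \ref{rule:sure-eq-inj} as a \emph{derived} rule, i.e.\ provable without touching the semantic model, and its proof indeed stays inside the published rule set; your main route needs a meta-level case split on $v=v'$ plus one bespoke semantic lemma (``almost-surely-false is unsatisfiable'') that is not among the listed rules, so it is slightly less in the spirit of a derived-rule proof even though the semantic computation is only a line long. Your fallback argument via independent products and \cref{def:indep-comb} is also correct --- it is essentially the forward direction of the paper's own soundness proof of \ref{rule:sure-merge} specialized to disjoint fibers --- but it is fully semantic. In short: both decompositions work; the paper's \ref{rule:sure-dirac}${}+{}$\ref{rule:dist-inj} factoring is the more economical one precisely because it absorbs the contradiction you derive by hand into the already-proven primitive \ref{rule:dist-inj}.
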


\begin{proof}
\begin{eqexplain}
  \sure{\aexpr\at{i} = v}
  *
  \sure{\aexpr\at{i} = v'}
\whichproves*
  \distAs{\aexpr\at{i}}{\dirac{v}}
  *
  \distAs{\aexpr\at{i}}{\dirac{v'}}
  \byrule{sure-dirac}
\whichproves
  \distAs{\aexpr\at{i}}{\dirac{v}}
  \land
  \distAs{\aexpr\at{i}}{\dirac{v'}}
\whichproves
  \pure{\dirac{v}=\dirac{v'}}
  \byrule{dist-inj}
\whichproves
  \pure{v=v'}
\qedhere
\end{eqexplain}
\end{proof} \begin{lemma}
\label{proof:sure-sub}
  \Cref{rule:sure-sub} is sound.
\end{lemma}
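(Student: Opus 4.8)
The plan is to derive \cref{rule:sure-sub} entirely inside the logic, by first pushing the distribution of $\aexpr_1$ forward along $f$ with \cref{rule:dist-fun}, and then transferring the resulting distribution onto $\aexpr_2$ through a conditioning argument that exploits the almost-sure equality $\aexpr_2 = f(\aexpr_1)$.

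Concretely, starting from $\distAs{\aexpr_1\at{i}}{\prob} * \sure{(\aexpr_2 = f(\aexpr_1))\at{i}}$, I would first apply \cref{rule:dist-fun} to the left conjunct to obtain $\distAs{(f\circ\aexpr_1)\at{i}}{\prob \circ \inv{f}}$, reducing the problem to showing that $f\circ\aexpr_1$ and $\aexpr_2$---which agree almost surely---carry the same distribution. To exploit the almost-sure equality I would rewrite $\distAs{(f\circ\aexpr_1)\at{i}}{\prob \circ \inv{f}}$ using the left-to-right direction of \cref{rule:c-unit-r} as $\CC{\prob \circ \inv{f}} w.\sure{(f\circ\aexpr_1)\at{i} = w}$. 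Then, using commutativity of $*$ and \cref{rule:c-frame}, I can move the frame $\sure{(\aexpr_2 = f(\aexpr_1))\at{i}}$ inside the modality, and merge the two almost-sure facts with \cref{rule:sure-merge}, arriving at $\CC{\prob \circ \inv{f}} w.\sure{((f\circ\aexpr_1) = w \land \aexpr_2 = f(\aexpr_1))\at{i}}$.

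The decisive observation is that, pointwise on stores, $(f\circ\aexpr_1) = w \land \aexpr_2 = f(\aexpr_1)$ entails $\aexpr_2 = w$; hence by \cref{rule:c-cons} the conditioned assertion entails $\CC{\prob \circ \inv{f}} w.\sure{(\aexpr_2 = w)\at{i}}$, and a final application of \cref{rule:c-unit-r} (right to left) yields $\distAs{\aexpr_2\at{i}}{\prob \circ \inv{f}}$, as required. The whole argument is a short equational derivation citing \cref{rule:dist-fun,rule:c-unit-r,rule:c-frame,rule:sure-merge,rule:c-cons}.

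The one step not covered by a named rule is the premise of the \cref{rule:c-cons} step, namely the monotonicity of $\sure$ under pointwise implication of its boolean argument: $\sure{E\at{i}} \proves \sure{E'\at{i}}$ whenever $E \implies E'$. I expect this to be the main (though minor) obstacle. It is immediate from the definition $\sure{E\at{i}} \is \distAs{(E \in \true)\at{i}}{\dirac{\True}}$: if $\set{E = \True}$ is almost measurable with probability $1$, witnessed by a lower bound $L \in \m{\salg}(i)$ with $\prob(L) = 1$, then $L \subseteq \set{E = \True} \subseteq \set{E' = \True} \subseteq \Store$ exhibits $\set{E' = \True}$ as almost measurable with probability $1$ as well. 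I would dispatch this as a one-line semantic observation (or fold it directly into the $\sure$-consequence reasoning) and leave the remainder of the derivation as direct applications of the existing rules.
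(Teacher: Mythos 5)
Your derivation is correct and follows essentially the same route as the paper's: condition on the relevant variable via \cref{rule:c-unit-r}, pull the almost-sure equality inside with \cref{rule:c-frame}, merge with \cref{rule:sure-merge}, weaken pointwise under \cref{rule:c-cons}, and close with \cref{rule:c-unit-r}; the only difference is that you invoke \cref{rule:dist-fun} up front to move to $\prob\circ\inv{f}$, whereas the paper performs that reindexing at the end via \cref{rule:c-unit-l}, \cref{rule:c-assoc} and \cref{rule:c-sure-proj} --- which is exactly the content of the proof of \cref{rule:dist-fun}, so no new work and no circularity. The ``obstacle'' you flag, monotonicity of $\sure{\hole}$ under pointwise implication, is also silently relied upon in the paper's own \cref{rule:c-cons} step, and your semantic justification of it (reusing the lower witness $L$ and noting $\emptyset\subs\set{\neg E'}\subs\Store\setminus L$ for the complementary fibre) is sound.
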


\begin{proof}
  \begin{eqexplain}
\distAs{\aexpr_1\at{i}}{\prob}
    *
    \sure{(\aexpr_2 = f(\aexpr_1))\at{i}}
  \whichproves*
    \CC\prob v.
      \sure{\aexpr_1\at{i} = v}
      *
      \sure{(\aexpr_2 = f(\aexpr_1))\at{i}}
  \byrules{c-unit-r,c-frame}
\whichproves
    \CC\prob v.
      \sure{\aexpr_1\at{i} = v \land \aexpr_2\at{i} = f(\aexpr_1\at{i})}
  \byrule{sure-merge}
\whichproves
    \CC\prob v. \sure{\aexpr_2\at{i} = f(v)}
  \byrule{c-cons}
\whichproves
    \CC\prob v. \CC{\dirac{f(v)}} \pr{v}.\sure{\aexpr_2\at{i} = \pr{v}}
  \byrule{c-unit-l}
\whichproves
    \CC{\pr{\prob}} \pr{v}.\sure{\aexpr_2\at{i} = \pr{v}}
  \byrules{c-assoc,c-sure-proj}
  \end{eqexplain}
  where $\pr{\prob} = \bind(\prob, \fun x. \dirac{f(x)}) = \prob \circ \inv{f}$.
  By \ref{rule:c-unit-r} we thus get
  $\distAs{\aexpr_2\at{i}}{\prob \circ \inv{f}}$.
\end{proof} \begin{lemma}
\label{proof:dist-fun}
  \Cref{rule:dist-fun} is sound.
\end{lemma}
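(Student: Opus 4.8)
The plan is to observe that \ref{rule:dist-fun} is essentially the instance of the already-established \ref{rule:sure-sub} in which the target expression $\aexpr_2$ is taken to be $f\circ\aexpr$ itself. Recall that \ref{rule:sure-sub} derives $\distAs{\aexpr_2\at{i}}{\prob\circ\inv{f}}$ from $\distAs{\aexpr_1\at{i}}{\prob} * \sure{(\aexpr_2 = f(\aexpr_1))\at{i}}$. Instantiating $\aexpr_1 \is \aexpr$ and $\aexpr_2 \is f\circ\aexpr$ turns the second conjunct into the assertion $\sure{((f\circ\aexpr) = f(\aexpr))\at{i}}$, whose underlying event is all of $\Store$ and hence holds with probability $1$ vacuously.

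First I would strengthen the premise by framing in this tautological almost-sure fact, proving $\distAs{\aexpr\at{i}}{\prob} \proves \distAs{\aexpr\at{i}}{\prob} * \sure{((f\circ\aexpr) = f(\aexpr))\at{i}}$. The justification unfolds $\sure{E\at{i}} \is \distAs{(E \in \true)\at{i}}{\dirac{\True}}$: since the event $\set{\store | f(\aexpr(\store)) = f(\aexpr(\store))}$ equals $\Store$, the random variable $((f\circ\aexpr = f(\aexpr)) \in \true)$ is the constant $\True$, which is (almost) measurable in the trivial \salgebra\ and pushes any distribution forward to $\dirac{\True}$. Thus $\sure{((f\circ\aexpr) = f(\aexpr))\at{i}}$ is satisfied by the trivial probability space, so it can be split off without consuming any permission, leaving the original resource intact.

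Then I would close the proof by a single application of \ref{rule:sure-sub} with the instantiation above, which yields exactly $\distAs{(f\circ\aexpr)\at{i}}{\prob\circ\inv{f}}$. Concretely, the derivation is
\[
  \distAs{\aexpr\at{i}}{\prob}
  \proves
  \distAs{\aexpr\at{i}}{\prob} * \sure{((f\circ\aexpr) = f(\aexpr))\at{i}}
  \proves
  \distAs{(f\circ\aexpr)\at{i}}{\prob\circ\inv{f}},
\]
where the second entailment is \ref{rule:sure-sub}.

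The only step with any content is the introduction of the tautological almost-sure assertion, so that is where I expect the main (and rather mild) obstacle to lie: one must check that a probability-$1$ fact about a constant-true event genuinely costs nothing in the permission/ownership model, i.e.\ that it holds on the unit resource and is therefore freely frameable. Should one prefer not to rely on \ref{rule:sure-sub}, the same result follows by mimicking its proof directly — turning $\distAs{\aexpr\at{i}}{\prob}$ into $\CC\prob v.\sure{\aexpr\at{i}=v}$ via \ref{rule:c-unit-r}, pushing the pointwise entailment $\sure{\aexpr\at{i}=v} \proves \sure{(f\circ\aexpr)\at{i}=f(v)}$ inside with \ref{rule:c-cons}, and collapsing the modality back with \ref{rule:c-unit-r} — but reusing \ref{rule:sure-sub} keeps the argument to essentially one line.
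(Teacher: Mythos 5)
Your proof is correct, but it takes a genuinely different route from the paper's. The paper proves \ref{rule:dist-fun} from scratch by the conditioning pipeline: \ref{rule:c-unit-r} to enter the modality, \ref{rule:c-cons} with the pointwise entailment $\sure{\aexpr\at{i}=v}\proves\sure{(f\circ\aexpr)\at{i}=f(v)}$, then \ref{rule:c-unit-l}, \ref{rule:c-assoc} and \ref{rule:c-sure-proj} to re-index the outer distribution as $\prob\circ\inv{f}$, and finally \ref{rule:c-unit-r} again --- essentially duplicating, almost verbatim, its own proof of \ref{rule:sure-sub}. You instead factor \ref{rule:dist-fun} through \ref{rule:sure-sub} by instantiating $\aexpr_2\is f\circ\aexpr$, which makes the side assertion $\sure{((f\circ\aexpr)=f(\aexpr))\at{i}}$ tautological. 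Since the paper's proof of \ref{rule:sure-sub} nowhere uses \ref{rule:dist-fun}, there is no circularity, and your observation that the two rules are really one (with \ref{rule:dist-fun} the degenerate instance) is a nice economy. What your route costs is the one step you correctly flag: introducing the tautological almost-sure fact for free is not covered by any listed rule, so it needs a small semantic check that the constant-$\True$ expression is measurable in the trivial \salgebra{} and pushes forward to $\dirac{\True}$, hence that the assertion holds on the unit resource and can be split off by taking $b_2=\raUnit$ in the definition of $*$. That check is straightforward and your sketch of it is right; it just means your derivation is not purely ``within the logic'' in the paper's strict sense of a derived rule, whereas the paper's longer derivation is. Your fallback sketch (enter the modality, push the entailment through \ref{rule:c-cons}, collapse) is essentially the paper's proof, though note that collapsing is not a single \ref{rule:c-unit-r} step --- the bound variable after \ref{rule:c-cons} still ranges over $\psupp(\prob)$ while the inner assertion mentions $f(v)$, so the \ref{rule:c-unit-l}/\ref{rule:c-assoc}/\ref{rule:c-sure-proj} re-indexing is genuinely needed there.
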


\begin{proof}
  Assume $E\from \Store \to A$ and $ f \from A \to B $, then:
  \begin{eqexplain}
    \distAs{\aexpr\at{i}}{\prob}
\whichproves*
      \CC\prob v. \sure{(\aexpr = v)\at{i}}
    \byrules{c-unit-r}
\whichproves
      \CC\prob v. \sure{(f\circ\aexpr)\at{i} = f(v)}
    \byrules{c-cons}
\whichproves
      \CC\prob v. \CC{\dirac{f(v)}} \pr{v}.\sure{(f\circ\aexpr)\at{i} = \pr{v}}
    \byrule{c-unit-l}
\whichproves
      \CC{\pr{\prob}} \pr{v}.\sure{(f\circ\aexpr)\at{i} = \pr{v}}
    \byrules{c-assoc,c-sure-proj}
  \end{eqexplain}
  where $\pr{\prob} = \bind(\prob, \fun x. \dirac{f(x)}) = \prob \circ \inv{f}$.
  By \ref{rule:c-unit-r} we thus get
  $\distAs{(f\circ\aexpr)\at{i}}{\prob \circ \inv{f}}$.
\end{proof} \begin{lemma}
\label{proof:dirac-dup}
  \Cref{rule:dirac-dup} is sound.
\end{lemma}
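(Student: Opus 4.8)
The plan is to reduce the claim to a statement purely about almost-sure assertions via \ref{rule:sure-dirac}, and then exploit the \emph{idempotence} of boolean conjunction together with the bidirectional merge rule \ref{rule:sure-merge}. Since \ref{rule:dirac-dup} is a derived rule, I intend to stay entirely inside the logic, using only previously established rules and never unfolding the semantic model. The conceptual content is that a Dirac-distributed variable carries no genuine randomness, so the independent product of two copies of a deterministic probability space is again that same space; duplicating ownership is therefore sound.

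First I would rewrite the hypothesis using \ref{rule:sure-dirac}, which supplies the equivalence $\distAs{\aexpr\at{i}}{\dirac{v}} \lequiv \sure{(\aexpr = v)\at{i}}$, turning the goal into a duplication of a sure fact. The core step is the backward direction of \ref{rule:sure-merge}, instantiated with $\aexpr_1 = \aexpr_2 = (\aexpr = v)$: it yields $\sure{((\aexpr = v) \land (\aexpr = v))\at{i}} \proves \sure{(\aexpr = v)\at{i}} * \sure{(\aexpr = v)\at{i}}$. The only observation needed is that the boolean expression $(\aexpr = v) \land (\aexpr = v)$ denotes the same function $\Store \to \Val$ as $(\aexpr = v)$ — the two have identical preimages of $\true$ — so $\sure{((\aexpr = v) \land (\aexpr = v))\at{i}}$ and $\sure{(\aexpr = v)\at{i}}$ are literally the same assertion. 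A final application of \ref{rule:sure-dirac} to each separating conjunct then recovers the two Dirac ownership assertions in the conclusion. Concretely, the derivation is:
\begin{eqexplain}
    \distAs{\aexpr\at{i}}{\dirac{v}}
  \whichproves*
    \sure{(\aexpr = v)\at{i}}
    \byrule{sure-dirac}
  \whichproves
    \sure{(\aexpr = v)\at{i}} * \sure{(\aexpr = v)\at{i}}
    \byrule{sure-merge}
  \whichproves
    \distAs{\aexpr\at{i}}{\dirac{v}} * \distAs{\aexpr\at{i}}{\dirac{v}}
    \byrule{sure-dirac}
\end{eqexplain}

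I do not expect a genuine obstacle here: the substantive measure-theoretic fact (that deterministic information can be freely duplicated because it composes with itself under independent product) has already been packaged, once and for all, into the equivalence \ref{rule:sure-merge}. The reflexive instance of merge is precisely the idempotence of a sure fact under separating conjunction, so the whole argument is two applications of \ref{rule:sure-dirac} bracketing a single application of \ref{rule:sure-merge}. The ``hard part'' is thus purely cosmetic, namely recording the definitional identity between $\sure{((\aexpr = v) \land (\aexpr = v))\at{i}}$ and $\sure{(\aexpr = v)\at{i}}$ so that the reflexive instance of the merge rule applies; everything else is immediate.
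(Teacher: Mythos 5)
Your derivation is correct and is essentially identical to the paper's own proof, which likewise brackets a single backward application of \ref{rule:sure-merge} (at the reflexive instance, using idempotence of the boolean conjunct) between two applications of \ref{rule:sure-dirac}. The only difference is that you make explicit the definitional identity of $\sure{((\aexpr=v)\land(\aexpr=v))\at{i}}$ with $\sure{(\aexpr=v)\at{i}}$, which the paper leaves implicit.
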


\begin{proof}
  \begin{eqexplain}
    \distAs{E\at{i}}{\dirac{v}}
    \whichproves*
    \sure{E\at{i} = v}
    \byrule{sure-dirac}
    \whichproves
    \sure{E\at{i} = v} \ast \sure{E\at{i} = v}
    \byrule{sure-merge}
    \whichproves
    \distAs{E\at{i}}{\dirac{v}} * \distAs{E\at{i}}{\dirac{v}}
    \byrule{sure-dirac}
  \end{eqexplain}
\end{proof} \begin{lemma}
\label{proof:dist-supp}
  \Cref{rule:dist-supp} is sound.
\end{lemma}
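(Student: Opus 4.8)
The plan is to reduce the claim to a statement about the \supercond\ modality by turning the distribution-ownership assertion into a conditioning, and then to do all the work inside the modality. First I would rewrite the hypothesis using the bidirectional \ref{rule:c-unit-r}, obtaining $\distAs{\aexpr\at{i}}{\prob} \lequiv \CC\prob v.\sure{\aexpr\at{i}=v}$. The intuition is that once we condition on $\prob$, the bound variable $v$ ranges exactly over $\psupp(\prob)$, so the fact we want to extract, $\sure{\aexpr\at{i}\in\psupp(\prob)}$, should be derivable pointwise from $\sure{\aexpr\at{i}=v}$ and then hoisted out of the conditioning.

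The key steps, in order, are as follows. Since $\prob(\psupp(\prob))=1$ holds as a pure fact for any discrete $\prob$, I would apply \ref{rule:c-pure} with $\event = \psupp(\prob)$ to get
\[
  \CC\prob v.\sure{\aexpr\at{i}=v}
  \lequiv
  \CC\prob v.\bigl(\pure{v\in\psupp(\prob)} * \sure{\aexpr\at{i}=v}\bigr),
\]
making the membership of $v$ in the support available \emph{inside} the modality. Next I would apply \ref{rule:c-cons} with the pointwise entailment
$
  \pure{v\in\psupp(\prob)} * \sure{\aexpr\at{i}=v}
  \proves
  \sure{\aexpr\at{i}=v} * \sure{\aexpr\at{i}\in\psupp(\prob)},
$
which holds because, under the assumption $v\in\psupp(\prob)$, the boolean expressions $(\aexpr\at{i}=v)$ and $(\aexpr\at{i}=v)\land(\aexpr\at{i}\in\psupp(\prob))$ coincide, so \ref{rule:sure-merge} lets us split off the extra almost-sure conjunct. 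Finally, I would pull the unconditional conjunct out with \ref{rule:sure-str-convex}, giving
$
  \CC\prob v.\bigl(\sure{\aexpr\at{i}=v} * \sure{\aexpr\at{i}\in\psupp(\prob)}\bigr)
  \proves
  \sure{\aexpr\at{i}\in\psupp(\prob)} * \CC\prob v.\sure{\aexpr\at{i}=v},
$
and re-fold the residual $\CC\prob v.\sure{\aexpr\at{i}=v}$ back into $\distAs{\aexpr\at{i}}{\prob}$ via \ref{rule:c-unit-r}. Up to commutativity of $*$ this is exactly the desired conclusion.

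The main obstacle I anticipate is the side condition of \ref{rule:c-cons}, which demands the pointwise entailment for \emph{every} $v$ in the index type, not merely for $v\in\psupp(\prob)$. The naive entailment $\sure{\aexpr\at{i}=v}\proves\sure{\aexpr\at{i}\in\psupp(\prob)}$ simply fails when $v\notin\psupp(\prob)$, so the derivation cannot proceed directly. The preliminary application of \ref{rule:c-pure} is precisely what sidesteps this: for $v\in\psupp(\prob)$ the entailment holds by the \ref{rule:sure-merge} argument above, while for $v\notin\psupp(\prob)$ the premise contains $\pure{v\in\psupp(\prob)}\lequiv\False$, so the left-hand side collapses to $\False$ and the entailment holds vacuously. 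Everything else is routine manipulation of the established \supercond\ and almost-sure laws.
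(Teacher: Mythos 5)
Your proof is correct and follows essentially the same route as the paper's: \ref{rule:c-unit-r} to enter the modality, \ref{rule:c-pure} with $\prob(\psupp(\prob))=1$ to make support membership available inside, a pointwise entailment (which the paper leaves unlabeled but which your \ref{rule:sure-merge} argument justifies), \ref{rule:sure-str-convex} to hoist the almost-sure fact out, and \ref{rule:c-unit-r} again to re-fold. Your discussion of why the $\forall v$ premise of \ref{rule:c-cons} is unproblematic is a correct and welcome elaboration of a point the paper glosses over.
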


\begin{proof}
  \begin{eqexplain}
    \distAs{E\at{i}}{\mu}
    \whichproves*
    \CMod{\mu} v. \sure{E\at{i} = v}
    \byrule{c-unit-r}
    \whichproves
    \pure{\mu(\psupp(\mu)) = 1} \ast
    \CMod{\mu} v. \sure{E\at{i} = v}
    \whichproves
    \CMod{\mu} v. \big(\pure{v \in \psupp(\mu)} \ast \sure{E\at{i} = v}\big)
    \byrule{c-pure}
    \whichproves
    \CMod{\mu} v. \big(\sure{E\at{i} = v} \ast \sure{E\at{i} \in \psupp(\mu)} \big)
\whichproves
    \big(\CMod{\mu} v.\sure{E\at{i} = v}\big)
    \ast \sure{E\at{i} \in \psupp(\mu)}
    \byrule{sure-str-convex}
    \whichproves
    \distAs{E\at{i}}{\mu}
    \ast \sure{E\at{i} \in \psupp(\mu)}
    \byrule{c-unit-r}
  \end{eqexplain}
\end{proof} \begin{lemma}
\label{proof:prod-unsplit}
  \Cref{rule:prod-unsplit} is sound.
\end{lemma}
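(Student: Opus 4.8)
The plan is to derive this rule entirely within the logic, as the mirror image of \ref{rule:prod-split}, relying on the bidirectionality of \ref{rule:c-unit-r} together with the fusion law \ref{rule:c-fuse}. The starting observation is that each distribution-ownership assertion has a conditioning form: by \ref{rule:c-unit-r} we have $\distAs{\aexpr_1\at{i}}{\prob_1} \lequiv \CC{\prob_1} v_1.\sure{\aexpr_1\at{i}=v_1}$, and symmetrically for $\aexpr_2$. Hence the premise is equivalent to the separating conjunction of two conditioning modalities, $(\CC{\prob_1} v_1.\sure{\aexpr_1\at{i}=v_1}) * (\CC{\prob_2} v_2.\sure{\aexpr_2\at{i}=v_2})$, and the whole proof reduces to turning this product of two modalities into a single modality over $\prob_1\otimes\prob_2$.

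First I would collapse the two separated modalities into a single nested one using \ref{rule:c-frame}. Reading the $\prob_2$-modality as the framed resource $P$ (and, after commuting $*$, the $\prob_1$-modality as $\CC\prob v.K(v)$) pushes the $\prob_2$-modality inside the $\prob_1$ one; a second application of \ref{rule:c-frame} underneath — now with $\sure{\aexpr_1\at{i}=v_1}$ as the framed resource, and propagated under the outer modality by \ref{rule:c-cons} — brings that almost-sure fact inside the $\prob_2$-modality as well. The net effect is the nested assertion $\CC{\prob_1} v_1.\CC{\prob_2} v_2.(\sure{\aexpr_1\at{i}=v_1} * \sure{\aexpr_2\at{i}=v_2})$. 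I deliberately orient the nesting with $\prob_1$ outermost and $\prob_2$ innermost, for the reason given below.

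Next I would merge the inner almost-sure facts with \ref{rule:sure-merge}, rewriting $\sure{\aexpr_1\at{i}=v_1} * \sure{\aexpr_2\at{i}=v_2}$ as $\sure{(\aexpr_1,\aexpr_2)\at{i}=(v_1,v_2)}$ (the boolean expressions $\aexpr_1=v_1 \land \aexpr_2=v_2$ and $(\aexpr_1,\aexpr_2)=(v_1,v_2)$ being pointwise equal). Then the decisive step is \ref{rule:c-fuse}, instantiated with the constant kernel $\krnl = \fun v_1.\prob_2$: since the inner distribution does not depend on $v_1$, fusion yields a single modality over $\prob_1 \fuse (\fun v_1.\prob_2)$, and unfolding the definition of $\fuse$ shows this distribution is $\fun(v_1,v_2).\prob_1(v_1)\cdot\prob_2(v_2) = \prob_1\otimes\prob_2$. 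A final appeal to \ref{rule:c-unit-r}, in the backward direction, converts $\CC{\prob_1\otimes\prob_2}(v_1,v_2).\sure{(\aexpr_1,\aexpr_2)\at{i}=(v_1,v_2)}$ back into $\distAs{(\aexpr_1\at{i},\aexpr_2\at{i})}{\prob_1\otimes\prob_2}$, closing the derivation.

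The main obstacle I anticipate is purely bookkeeping: making the fused distribution come out as $\prob_1\otimes\prob_2$ on the nose rather than the coordinate-swapped $\prob_2\otimes\prob_1$. This is exactly why I would take $\prob_1$ as the outer modality — with the opposite nesting, \ref{rule:c-fuse} produces $\prob_2 \fuse (\fun v_2.\prob_1)$, forcing an extra reindexing via \ref{rule:c-transf} along the swap bijection $f(w_1,w_2)=(w_2,w_1)$ (which is weight-preserving since $\prob_1(w_1)\prob_2(w_2) = \prob_2(w_2)\prob_1(w_1)$). Orienting the nesting correctly sidesteps that detour, so the only genuine content of the proof is the definitional identity $\prob_1 \fuse (\fun v.\prob_2) = \prob_1\otimes\prob_2$ for discrete distributions over full $\sigma$-algebras.
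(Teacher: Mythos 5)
Your proposal is correct and follows essentially the same derivation as the paper: \ref{rule:c-unit-r} plus \ref{rule:c-frame} to obtain the nested modalities $\CC{\prob_1} v_1.\CC{\prob_2} v_2.(\dots)$, then \ref{rule:sure-merge}, then collapsing the nesting into a single modality over $\prob_1\pprod\prob_2$, then \ref{rule:c-unit-r} backward. The only cosmetic difference is that the paper invokes the primitive \ref{rule:c-assoc} directly (with $\prob_0 = \prob_1\pprod\prob_2$) where you invoke the derived \ref{rule:c-fuse} with the constant kernel $\fun v_1.\prob_2$ — the same identity $\prob_1\fuse(\fun\wtv.\prob_2)=\prob_1\pprod\prob_2$ underlies both.
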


\begin{proof}
  \begin{eqexplain}
    \distAs{\aexpr_1\at{i}}{\prob_1} *
    \distAs{\aexpr_2\at{i}}{\prob_2}
    \whichproves*
    \CC{\prob_1} v_1.
    \CC{\prob_2} v_2.
    \bigl(
      \sure{\aexpr_1\at{i} = v_1} *
      \sure{\aexpr_2\at{i} = v_2}
    \bigr)
    \byrules{c-unit-r,c-frame}
\whichproves
    \CC{\prob_1} v_1.
    \CC{\prob_2} v_2.
      \sure{(\aexpr_1, \aexpr_2)\at{i} = (v_1, v_2)}
    \byrules{sure-merge}
\whichproves
    \CC{\prob_1 \pprod \prob_2} (v_1,v_2).
      \sure{(\aexpr_1, \aexpr_2)\at{i} = (v_1, v_2)}
    \byrules{c-assoc}
\whichproves
    \distAs{(\aexpr_1\at{i}, \aexpr_2\at{i})}{\prob_1 \otimes \prob_2}
    \byrule{c-unit-r}
  \end{eqexplain}
\end{proof}
 
\subsubsection{\Supercond}
\begin{lemma}
\label{proof:c-fuse}
  \Cref{rule:c-fuse} is sound.
\end{lemma}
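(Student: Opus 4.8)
The plan is to derive both directions of the equivalence purely within the logic, using the two primitive associativity rules \ref{rule:c-assoc} and \ref{rule:c-unassoc}, together with the proposition that identifies the fusion with an iterated bind, namely $\prob \fuse \krnl = \bind(\prob, \fun v.\bind(\krnl(v), \fun w.\return(v,w)))$. The forward direction is essentially immediate: instantiating \ref{rule:c-assoc} with $\prob_0 = \prob \fuse \krnl$ is legitimate precisely because the side condition of \ref{rule:c-assoc} requires $\prob_0 = \bind(\prob,\fun v.(\bind(\krnl(v), \fun w.\return(v,w))))$, which the proposition tells us equals $\prob \fuse \krnl$. Hence
\[
  \CC{\prob} v.\CC{\krnl(v)} w. K(v,w)
  \proves
  \CC{\prob \fuse \krnl} (v,w). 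K(v,w)
\]
holds directly.

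For the backward direction I would set $\krnl'(v) \is \bind(\krnl(v), \fun w.\return(v,w))$, so that $\prob \fuse \krnl = \bind(\prob, \krnl')$ by the same proposition. Applying \ref{rule:c-unassoc} with distribution $\prob$, kernel $\krnl'$, and the pair-indexed predicate $K$ (renaming the inner bound pair to $(a,b)$ to avoid clashes) yields
\[
  \CC{\bind(\prob,\krnl')} (a,b). K(a,b)
  \proves
  \CC{\prob} v. \CC{\krnl'(v)} (a,b). K(a,b).
\]
The remaining task is to collapse the inner modality $\CC{\krnl'(v)} (a,b).K(a,b)$ back to $\CC{\krnl(v)} w.K(v,w)$. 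The key observation is that $\krnl'(v)$ is the pushforward of $\krnl(v)$ along $w \mapsto (v,w)$, so its support consists only of pairs whose first component is $v$, and $\krnl'(v)(v,w) = \krnl(v)(w)$. This makes $f(w) = (v,w)$ a weight-preserving bijection from $\psupp(\krnl(v))$ onto $\psupp(\krnl'(v))$, so \ref{rule:c-transf} gives $\CC{\krnl'(v)} (a,b). K(a,b) \proves \CC{\krnl(v)} w. K(v,w)$; lifting this under the outer modality with \ref{rule:c-cons} completes the entailment.

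The step I expect to be the main (though modest) obstacle is this last collapsing step: one must recognize the support structure of $\krnl'(v)$ and verify the bijection hypotheses of \ref{rule:c-transf}, which is exactly the bookkeeping encoded by the $\return(v,w)$ that keeps the ``intermediate'' value $v$ around. Everything else — the two applications of the associativity rules and the appeal to the $\fuse$/$\bind$ proposition — is routine. Combining the two directions then establishes the claimed $\lequiv$.
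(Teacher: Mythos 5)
Your proof is correct and follows the paper's strategy almost exactly: the forward direction is the same direct instance of \ref{rule:c-assoc} justified by the identity $\prob\fuse\krnl = \bind(\prob,\fun v.\bind(\krnl(v),\fun w.\return(v,w)))$, and the backward direction likewise begins by setting $\krnl'(v)=\bind(\krnl(v),\fun w.\return(v,w))$ and applying \ref{rule:c-unassoc} once. The only divergence is in how the residual inner modality $\CC{\krnl'(v)}(a,b).K(a,b)$ is collapsed: the paper applies \ref{rule:c-unassoc} a \emph{second} time to split it into $\CC{\krnl(v)} w.\CC{\dirac{(v,w)}}(a,b).K(a,b)$ and then eliminates the Dirac conditioning with \ref{rule:c-unit-l}, whereas you observe that $w\mapsto(v,w)$ is a weight-preserving bijection onto $\psupp(\krnl'(v))$ and invoke \ref{rule:c-transf} under the outer modality via \ref{rule:c-cons}. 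Both routes are sound and equally routine; the paper's stays entirely within the monadic unit/associativity reflections, while yours trades that for the small support/bijection check that you correctly flag as the only bookkeeping obstacle.
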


\begin{proof}
  Recall that
  $
    \prob \fuse \krnl
    \is
    \fun(v,w). \prob(v)\krnl(v)(w).
  $
  which can be reformulated as
  $
    \prob \fuse \krnl =
    \bind(\prob,\fun v.(\bind(\krnl(v), \fun w.\return(v,w)))).
  $

  The $(\proves)$ direction is an instance of \ref{rule:c-assoc}.

  The $(\provedby)$ direction follows from \ref{rule:c-unassoc}:
  \begin{eqexplain}
    \CC{\prob \fuse \krnl} (v',w'). K(v',w')
\whichproves*
    \CC \prob v.
      \CC{\bind(\krnl(v), \fun w.\dirac{(v,w)})} (v',w'). K(v',w')
    \byrule{c-unassoc}
\whichproves
    \CC \prob v.
      \CC{\krnl(v)} w.
        \CC {\dirac{(v,w)}} (v',w'). K(v',w')
    \byrule{c-unassoc}
\whichproves
    \CC{\prob} v.
    \CC{\krnl(v)} w.
      K(v,w)
    \byrule{c-unit-l}
  \end{eqexplain}
\end{proof}
 \begin{lemma}
\label{proof:c-swap}
  \Cref{rule:c-swap} is sound.
\end{lemma}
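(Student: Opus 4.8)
The plan is to collapse each side of the entailment into a \emph{single} \supercond\ modality over a product distribution using \ref{rule:c-fuse}, and then relate the two products through the coordinate-swapping bijection via \ref{rule:c-transf}. First I would note that on the left-hand side the inner modality $\CC{\prob_2} v_2. K(v_1,v_2)$ conditions on a kernel that is \emph{constant} in $v_1$, namely $\krnl(v_1) = \prob_2$. Hence its fusion with $\prob_1$ is $\prob_1 \fuse \krnl = \fun(v_1,v_2). \prob_1(v_1)\,\prob_2(v_2) = \prob_1 \pprod \prob_2$, and the left-to-right direction of \ref{rule:c-fuse} yields
\[
  \CC{\prob_1} v_1. \CC{\prob_2} v_2. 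K(v_1,v_2)
  \lequiv
  \CC{\prob_1 \pprod \prob_2} (v_1,v_2). K(v_1,v_2).
\]
Symmetrically, applying \ref{rule:c-fuse} to the target right-hand side (this time with the constant kernel $\krnl'(v_2) = \prob_1$) gives $\CC{\prob_2} v_2. \CC{\prob_1} v_1. K(v_1,v_2) \lequiv \CC{\prob_2 \pprod \prob_1} (v_2,v_1). K(v_1,v_2)$. So it suffices to prove the entailment between the two fused forms.

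For that single remaining step I would invoke \ref{rule:c-transf} with the swap bijection $\sigma \from \psupp(\prob_2 \pprod \prob_1) \to \psupp(\prob_1 \pprod \prob_2)$ defined by $\sigma(v_2,v_1) = (v_1,v_2)$. This is a genuine bijection, since the two supports are the Cartesian products $\psupp(\prob_2)\times\psupp(\prob_1)$ and $\psupp(\prob_1)\times\psupp(\prob_2)$, and it preserves weights because
\[
  (\prob_2 \pprod \prob_1)(v_2,v_1)
  = \prob_2(v_2)\,\prob_1(v_1)
  = \prob_1(v_1)\,\prob_2(v_2)
  = (\prob_1 \pprod \prob_2)(\sigma(v_2,v_1)),
\]
which is exactly the side condition of \ref{rule:c-transf}. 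Instantiating the rule with $\prob = \prob_1 \pprod \prob_2$, $\prob' = \prob_2 \pprod \prob_1$, $f = \sigma$, and the conditioned family $a \mapsto K(a)$, its conclusion reindexes the family along $\sigma$: writing the bound variable as $(v_2,v_1)$, we have $K(\sigma(v_2,v_1)) = K(v_1,v_2)$, so we obtain precisely $\CC{\prob_2 \pprod \prob_1} (v_2,v_1). K(v_1,v_2)$.

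Putting the three moves together gives the chain
\[
  \CC{\prob_1} v_1. \CC{\prob_2} v_2. K(v_1,v_2)
  \lequiv \CC{\prob_1 \pprod \prob_2} (v_1,v_2). K(v_1,v_2)
  \proves \CC{\prob_2 \pprod \prob_1} (v_2,v_1). K(v_1,v_2)
  \lequiv \CC{\prob_2} v_2. \CC{\prob_1} v_1. K(v_1,v_2),
\]
where the middle step is \ref{rule:c-transf} and the two equivalences are \ref{rule:c-fuse}. This derives \ref{rule:c-swap} entirely within the logic, with no recourse to the semantic model. I do not anticipate any serious obstacle: the only points needing care are the two routine identities that fusing a constant kernel produces the independent product ($\prob_1 \fuse (\fun \wtv. \prob_2) = \prob_1 \pprod \prob_2$), and that $\sigma$ is a weight-preserving bijection between the product supports. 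Both follow immediately from the definitions of $\fuse$ and $\pprod$ for discrete distributions.
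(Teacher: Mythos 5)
Your proof is correct and follows essentially the same route as the paper's: both collapse the nested modalities into a single conditioning on the product distribution via \ref{rule:c-fuse} and then unfuse in the opposite order. The only difference is that you make the coordinate swap explicit with \ref{rule:c-transf}, whereas the paper's proof absorbs it into the identification $\prob_1 \pprod \prob_2 = \prob_2 \fuse (\fun \wtv.\prob_1)$ (which silently conflates the two orderings of the product pairs); your version is, if anything, the more careful one.
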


\begin{proof}
  \begin{eqexplain}
    \CC{\prob_1} v_1.
      \CC{\prob_2} v_2.
        K(v_1, v_2)
\whichproves*
      \CC{\prob_1 \pprod \prob_2}
        (v_1,v_2).
          K(v_1, v_2)
    \byrule{c-fuse}
\whichproves
    \CC{\prob_2} v_2.
      \CC{\prob_1} v_1.
          K(v_1, v_2)
    \byrule{c-fuse}
  \end{eqexplain}
  Where
  \[
    \prob_1 \pprod \prob_2
    =
    \prob_1 \fuse (\fun \wtv.\prob_2)
=
    \prob_2 \fuse (\fun \wtv.\prob_1)
\]
  justifies the applications of \ref{rule:c-fuse}.
\end{proof}
 \begin{lemma}
\label{proof:sure-convex}
  \Cref{rule:sure-convex} is sound.
\end{lemma}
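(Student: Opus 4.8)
The plan is to obtain \ref{rule:sure-convex} as the degenerate, $K(v) = \True$ instance of the stronger law \ref{rule:sure-str-convex}, so that all the probabilistic content is inherited and only trivial separation-logic bookkeeping remains. First I would rewrite the body of the modality using the affine identity $\sure{\aexpr\at{i}} \lequiv \True * \sure{\aexpr\at{i}}$, which holds because assertions are upward-closed and $\raUnit$ is the bottom element of $\Model_I$ (so the $\True$ conjunct can always be witnessed by $\raUnit$ inside a separating conjunction). Pushing this equivalence under the modality by monotonicity (\ref{rule:c-cons}) gives
\[
  \CC\prob v.\sure{\aexpr\at{i}}
  \lequiv
  \CC\prob v.(\True * \sure{\aexpr\at{i}}).
\]

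Next I would instantiate \ref{rule:sure-str-convex} with $K(v) = \True$, obtaining
\[
  \CC\prob v.(\True * \sure{\aexpr\at{i}})
  \proves
  \sure{\aexpr\at{i}} * \CC\prob v.\True.
\]
The remaining work is to collapse the residual right-hand conjunct. By \ref{rule:c-true} we have $\proves \CC\prob \wtv.\True$, and conversely every assertion entails $\True$, so $\CC\prob v.\True \lequiv \True$; using once more the affine identity $\sure{\aexpr\at{i}} * \True \lequiv \sure{\aexpr\at{i}}$ then simplifies the right-hand side to exactly $\sure{\aexpr\at{i}}$, closing the derivation.

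I expect no real obstacle here: the entire argument stays within the logic (no appeal to the semantic model beyond the already-established rules), and \ref{rule:sure-str-convex} already performs the essential step of extracting an almost-sure fact out of a convex combination while guaranteeing its independence from the—here vacuous—conditioned predicate. The only points requiring a little care are orienting the two invocations of \ref{rule:c-cons} correctly (they are used as a bidirectional rewrite under the modality) and justifying the two uses of $P * \True \lequiv P$, which I would discharge by citing the standard behaviour of $*$ in the affine (upward-closed) fragment of separation logic rather than unfolding definitions.
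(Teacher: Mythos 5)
Your proposal is correct and is exactly the paper's own argument: the paper proves this lemma in one line as ``by \ref{rule:sure-str-convex} with $K=\True$'', and your derivation merely spells out the routine bookkeeping ($P*\True\lequiv P$, \ref{rule:c-true}, and the bidirectional use of \ref{rule:c-cons}) that the paper leaves implicit.
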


\begin{proof}
  By \ref{rule:sure-str-convex} with $K = \True$.
\end{proof}
 \begin{lemma}
\label{proof:dist-convex}
  \Cref{rule:dist-convex} is sound.
\end{lemma}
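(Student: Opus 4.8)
The plan is to derive \ref{rule:dist-convex} entirely within the logic, following the same pattern as the preceding derivations (compare \ref{proof:dist-fun} and \ref{proof:prod-unsplit}): convert the distribution-ownership assertion into a conditioning modality, rearrange the nested modalities, and convert back. The crucial observation is that $\prob'$ does not depend on the variable $v$ bound by the outer modality, so once the inner structure is exposed the outer conditioning on $\prob$ becomes vacuous and can be discarded.

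First I would apply \ref{rule:c-unit-r} underneath the outer modality (using \ref{rule:c-cons}) to rewrite $\distAs{\aexpr\at{i}}{\prob'}$ as the conditioning $\CC{\prob'} w.\sure{\aexpr\at{i}=w}$, obtaining
\[
  \CC\prob v.\distAs{\aexpr\at{i}}{\prob'}
  \proves
  \CC\prob v.\CC{\prob'} w.\sure{\aexpr\at{i}=w}.
\]
Next I would swap the two modalities with \ref{rule:c-swap}, which applies because the inner assertion $\sure{\aexpr\at{i}=w}$ does not mention $v$:
\[
  \CC\prob v.\CC{\prob'} w.\sure{\aexpr\at{i}=w}
  \proves
  \CC{\prob'} w.\CC{\prob} v.\sure{\aexpr\at{i}=w}.
\]
Now the inner modality $\CC{\prob} v.\sure{\aexpr\at{i}=w}$ conditions a fixed almost-sure assertion (with $w$ held fixed by the outer modality), so I would collapse it with the convexity rule \ref{rule:sure-convex}, applied underneath the outer modality via \ref{rule:c-cons}. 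This yields $\CC{\prob'} w.\sure{\aexpr\at{i}=w}$, which by the right-to-left direction of \ref{rule:c-unit-r} is exactly $\distAs{\aexpr\at{i}}{\prob'}$, completing the derivation.

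I do not expect a genuine obstacle, since every step is an application of an already-established rule. The only point demanding care is checking the parameterization implicit in these applications: that $\sure{\aexpr\at{i}=w}$ really is independent of the outer-bound variable $v$ (so both \ref{rule:c-swap} and \ref{rule:sure-convex} apply with $w$ treated as a fixed parameter drawn from $\psupp(\prob')$), and that the two uses of \ref{rule:c-unit-r} instantiate their conditioning distribution to $\prob'$ consistently. An alternative, equally short route merges the two modalities with \ref{rule:c-fuse} into $\CC{\prob\pprod\prob'}(v,w).\sure{\aexpr\at{i}=w}$ and then projects away the irrelevant first coordinate; however, the projection lemma \ref{rule:c-sure-proj} is phrased to retain the \emph{first} component, so the swap-and-collapse route above is the more direct fit for the rules exactly as stated.
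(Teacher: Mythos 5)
Your proposal is correct and matches the paper's own derivation exactly: the paper proves \ref{rule:dist-convex} by the same sequence \ref{rule:c-unit-r}, \ref{rule:c-swap}, \ref{rule:sure-convex}, \ref{rule:c-unit-r} (with the applications under the outer modality via \ref{rule:c-cons} left implicit). The alternative \ref{rule:c-fuse}-and-project route you mention is not the one the paper takes, and your reason for preferring the swap-and-collapse route is sound.
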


\begin{proof}
  \begin{eqexplain}
    \CMod{\prob} v. \distAs{E\at{i}}{\mu'}
    \whichproves*
    \CMod{\mu} v.  \CMod{\mu'} w. \sure{E\at{i} = w}
    \byrule{c-unit-r}
    \whichproves
    \CMod{\mu'} w. \CMod{\mu} v.  \sure{E\at{i} = w}
    \byrule{c-swap}
    \whichproves
    \CMod{\mu'} w.  \sure{E\at{i} = w}
    \byrule{sure-convex}
    \whichproves
    \distAs{E\at{i}}{\mu'}
    \byrule{c-unit-r}
  \end{eqexplain}
\end{proof}
 \begin{lemma}
\label{proof:c-proj}
  The following rule is sound:
  \begin{proofrule}
  \infer{
    \forall (v,\wtv)\in\psupp(\prob).
    \forall \prob'.
    \CC{\prob'} w. P(v) \proves P(v)
  }{
    \CC \prob (v,w). P(v) \lequiv
    \CC {\prob\circ\inv{\proj}} v. P(v)
  }
  \end{proofrule}
\end{lemma}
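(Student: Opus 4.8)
The plan is to reduce both entailments to a single nested form via the fusion law, and then dispatch each direction with a one-line use of the premise (forward) or a trivial modality introduction (backward). First I would invoke \cref{lm:fuse-split} to obtain a kernel $\krnl$ with $\prob = (\prob \circ \inv{\proj})\fuse\krnl$, and write $\prob_1 \is \prob \circ \inv{\proj}$ for the first marginal. Since the conditioned assertion $P(v)$ does not mention the bound variable $w$, I can apply \ref{rule:c-fuse} (instantiating $K(v,w) \is P(v)$) to obtain the key rewriting
\[
  \CC\prob (v,w). P(v)
  \;\lequiv\;
  \CC{\prob_1} v. \CC{\krnl(v)} w. P(v).
\]
It then suffices to establish $\CC{\prob_1} v. \CC{\krnl(v)} w. P(v) \lequiv \CC{\prob_1} v. P(v)$, which by \ref{rule:c-cons} reduces to proving, for each relevant $v$, the two halves of the equivalence $\CC{\krnl(v)} w. P(v) \lequiv P(v)$.

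For the forward collapse $\CC{\krnl(v)} w. P(v) \proves P(v)$, I would instantiate the premise with $\prob' \is \krnl(v)$. This is legitimate because the support side-condition is met: every $v \in \psupp(\prob_1)$ arises as the first component of some $(v,\wtv) \in \psupp(\prob)$, so the premise's quantifier $\forall (v,\wtv) \in \psupp(\prob)$ covers exactly the values of $v$ that the outer modality $\CC{\prob_1} v.$ examines. For the backward direction $P(v) \proves \CC{\krnl(v)} w. P(v)$, no hypothesis is needed: I would introduce a trivial modality via $\True \proves \CC{\krnl(v)} w. \True$ (\ref{rule:c-true}), frame $P(v)$ into it with \ref{rule:c-frame} to get $\CC{\krnl(v)} w. (P(v) * \True)$, and then apply \ref{rule:c-cons} together with $P(v) * \True \proves P(v)$, which holds because assertions are upward-closed and $\raUnit \raLeq b$ for every valid $b$. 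Wrapping both entailments under $\CC{\prob_1} v.$ with \ref{rule:c-cons} and composing with the \ref{rule:c-fuse} equivalence yields the claimed bi-entailment.

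The only genuinely delicate point is the support condition in the forward direction: I must apply the premise only to $v \in \psupp(\prob_1)$, and rely on the fact that the kernel $\krnl$ handed back by \cref{lm:fuse-split} is precisely the family of conditional distributions whose \supercond\ the premise allows us to discharge. Everything else is routine monadic and structural manipulation; in particular the backward direction is independent of the hypothesis, reflecting that wrapping a conditioning modality around an assertion that ignores the bound variable is always sound.
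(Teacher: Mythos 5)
Your proposal is correct and follows exactly the paper's own argument: split $\prob$ via \cref{lm:fuse-split}, rewrite with \ref{rule:c-fuse}, then discharge the inner equivalence using the convexity premise in the forward direction and \ref{rule:c-true} plus \ref{rule:c-frame} in the backward direction. The only difference is that you spell out the support side-condition and the $P(v)*\True\proves P(v)$ step, which the paper leaves implicit.
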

\begin{proof}
  Assume that for all $(v,\wtv)\in\psupp(\prob)$,
  $\forall \prob'. \CC{\prob'} w. P(v) \proves P(v)$
  (\ie $P(v)$ is convex).
  By \cref{lm:fuse-split} there is some~$\krnl$ such that
  $ \prob=(\prob\circ\inv{\proj})\fuse\krnl $.
  Then:
  \begin{eqexplain}
    \CC \prob (v,w). P(v)
    \whichisequiv*
    \CC {\prob\circ\inv{\proj}} v.
    \CC {\krnl(v)} w. P(v)
    \byrule{c-fuse}
    \whichisequiv
    \CC {\prob\circ\inv{\proj}} v. P(v)
  \end{eqexplain}
  The last step is justified by the convexity assumption in the~$(\proves)$
  direction,
  and by \ref{rule:c-true} and \ref{rule:c-frame} in the~$(\provedby)$ direction.
\end{proof}

\begin{lemma}
\label{proof:c-sure-proj}
  \Cref{rule:c-sure-proj} is sound.
\end{lemma}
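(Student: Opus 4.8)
The plan is to obtain \ref{rule:c-sure-proj} as an immediate instance of the general projection lemma (\cref{proof:c-proj}), which already collapses a binary conditioning over $\prob$ to a conditioning over its first marginal $\prob\circ\inv{\proj_1}$, provided the conditioned body is convex and depends only on the first projected variable. Since the body here, $\sure{\aexpr(v)\at{i}}$, refers to $v$ (the first projection) but not to the second bound variable $w$, this is exactly the shape \cref{proof:c-proj} is designed to handle, so the whole argument reduces to discharging that lemma's single hypothesis.

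Concretely, the one premise \cref{proof:c-proj} requires is convexity of the body: for every $v$ in the support of the first marginal and every distribution~$\prob'$, one must have
\[
  \CC{\prob'} w.\sure{\aexpr(v)\at{i}} \proves \sure{\aexpr(v)\at{i}}.
\]
First I would observe that, with $v$ held fixed, $\sure{\aexpr(v)\at{i}}$ is a constant family in the bound variable~$w$; the entailment above is therefore precisely the convexity rule \ref{rule:sure-convex} (itself a consequence of \ref{rule:sure-str-convex} with $K=\True$) applied to the expression $\aexpr(v)$. This discharges the hypothesis with no further work. Instantiating \cref{proof:c-proj} with $P(v) \is \sure{\aexpr(v)\at{i}}$ then yields both directions of the claimed equivalence simultaneously.

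The main point is that essentially all of the content has already been isolated in the two supporting results, so there is no genuine obstacle here: the derivation is a one-line instantiation once \ref{rule:sure-convex} and \cref{proof:c-proj} are in hand. Were \cref{proof:c-proj} \emph{not} available, the hard part would be re-deriving its split-and-collapse core directly; I would handle that by using \cref{lm:fuse-split} to write $\prob = (\prob\circ\inv{\proj_1})\fuse\krnl$, applying \ref{rule:c-fuse} to separate the bound variables into nested modalities, and then eliminating the inner $\CC{\krnl(v)} w$ modality using the convexity of $\sure{\aexpr(v)\at{i}}$ for the $(\proves)$ direction and \ref{rule:c-true} together with \ref{rule:c-frame} for the $(\provedby)$ direction.
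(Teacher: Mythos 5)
Your proposal is correct and matches the paper's proof exactly: the paper derives \ref{rule:c-sure-proj} by instantiating the general projection lemma (\cref{proof:c-proj}) with $P(v) \is \sure{\aexpr(v)\at{i}}$ and discharging its convexity hypothesis via \ref{rule:sure-convex} (\cref{proof:sure-convex}). Your fallback sketch of how to re-derive the projection lemma (via \cref{lm:fuse-split}, \ref{rule:c-fuse}, convexity, and \ref{rule:c-true} with \ref{rule:c-frame}) is also precisely the paper's proof of \cref{proof:c-proj}.
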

\begin{proof}
  By \cref{proof:c-proj} and \cref{proof:sure-convex}.
\end{proof}
 \begin{lemma}
\label{proof:c-sure-proj-many}
  \Cref{rule:c-sure-proj-many} is sound.
\end{lemma}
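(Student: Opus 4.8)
The plan is to reduce \ref{rule:c-sure-proj-many} to the generic projection lemma of \cref{proof:c-proj}, in exactly the way \ref{rule:c-sure-proj} was obtained in \cref{proof:c-sure-proj}; the only new ingredient is that the conditioned body is now a conjunction of $\sure{\cdot}$ facts rather than a single one. Write $P(\m{v}) \is \sure{\ip{x}{i}=\m{v}(\ip{x}{i})}_{\ip{x}{i}\in X} = \LAnd_{\ip{x}{i}\in X}\sure{\ip{x}{i}=\m{v}(\ip{x}{i})}$ for the body of the modality. Since $P(\m{v})$ mentions only the first component $\m{v}$ of the bound pair $(\m{v},w)$ and not $w$, the claim is precisely the instance of \cref{proof:c-proj} obtained with the projection $\proj = \proj_1$, for which $\prob\circ\inv{\proj_1}$ is the first marginal. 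Thus it suffices to discharge the single hypothesis of \cref{proof:c-proj}: that each $P(\m{v})$ is convex, i.e. $\CC{\prob'} w. P(\m{v}) \proves P(\m{v})$ for every distribution $\prob'$ (it is enough to check this for $\m{v}$ in the support of $\prob$, but the argument gives it for all $\m{v}$).

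First I would verify this convexity. The index set $X \subs I\times\Var$ is finite, so $P(\m{v})$ is a finite conjunction; in any case the argument needs only monotonicity, so finiteness is not essential. Conditioning is monotone by \ref{rule:c-cons}, and for each $\ip{x}{i}\in X$ we have the conjunction-elimination entailment $P(\m{v}) = \LAnd_{\ip{x}{i}\in X}\sure{\ip{x}{i}=\m{v}(\ip{x}{i})} \proves \sure{\ip{x}{i}=\m{v}(\ip{x}{i})}$; hence \ref{rule:c-cons} yields $\CC{\prob'} w. P(\m{v}) \proves \CC{\prob'} w. \sure{\ip{x}{i}=\m{v}(\ip{x}{i})}$, i.e. conditioning distributes over the conjunction in the forward direction. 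Applying \ref{rule:sure-convex} to each conjunct collapses $\CC{\prob'} w.\sure{\ip{x}{i}=\m{v}(\ip{x}{i})}$ to $\sure{\ip{x}{i}=\m{v}(\ip{x}{i})}$, and since entailing every conjunct entails their conjunction, re-assembling gives $\CC{\prob'} w. P(\m{v}) \proves P(\m{v})$, as required.

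With convexity in hand, \cref{proof:c-proj} delivers $\CC\prob (\m{v},w). P(\m{v}) \lequiv \CC{\prob\circ\inv{\proj_1}} \m{v}. P(\m{v})$, which is exactly the stated equivalence. I do not expect any genuine obstacle: this is a routine lift of the single-variable case \ref{rule:c-sure-proj}, and the only point that needs a moment's care is the \emph{convexity} step — observing that monotonicity of the modality (\ref{rule:c-cons}) together with convexity of each $\sure{\cdot}$ conjunct (\ref{rule:sure-convex}) suffices to push conditioning through the conjunction in the direction needed, with no distributivity of conditioning over $\land$ in the reverse direction ever being required.
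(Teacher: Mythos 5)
Your proof is correct, but it takes a genuinely different route from the paper's. The paper first regroups the body by index, rewriting it as $\LAnd_{i\in I}\sure{\LAnd_{\p{x}\in X_i}\ip{x}{i}=\m{v}(\ip{x}{i})}$ so that each conjunct lives at a single index; it then splits the modality across this conjunction with \ref{rule:c-and} (whose disjoint-index side condition is satisfied precisely because of the regrouping), applies the already-established unary case \ref{rule:c-sure-proj} to each index, and reassembles with \ref{rule:c-and} again. You instead invoke the generic projection lemma (\cref{proof:c-proj}) once on the whole conjunction, discharging its convexity hypothesis directly: conjunction elimination, \ref{rule:c-cons}, \ref{rule:sure-convex}, and conjunction introduction show that convexity is closed under arbitrary conjunction, with no need to commute the modality with $\land$ in the difficult direction. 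Since \ref{rule:c-sure-proj} is itself proved from \cref{proof:c-proj}, both arguments rest on the same projection lemma; the difference is only in how the conjunction is handled. Your route is leaner — it bypasses \ref{rule:c-and} and its side condition entirely, and the convexity-of-conjunctions observation is reusable for other bodies — while the paper's route buys modularity by reusing \ref{rule:c-sure-proj} as a black box. Your convexity check is airtight as written.
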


\begin{proof}
  Let $
    X_i \is \set{ \p{x} | \ip{x}{i} \in X }
  $ for every $i\in I$.
  Then:
  \begin{eqexplain}
    \CC\prob (\m{v}, w).
      \sure{\ip{x}{i}=\m{v}(\ip{x}{i})}_{\ip{x}{i}\in X}
\whichisequiv*
    \CC{\prob} (\m{v}, w).
      \LAnd_{i\in I}
        \sure{\LAnd_{\p{x}\in X_i} \ip{x}{i}=\m{v}(\ip{x}{i})}
\whichisequiv
    \LAnd_{i\in I}
      \CC{\prob} (\m{v}, w).
        \sure{\LAnd_{\p{x}\in X_i} \ip{x}{i}=\m{v}(\ip{x}{i})}
    \byrule{c-and}
\whichisequiv
    \LAnd_{i\in I}
      \CC{\prob\circ\inv{\proj_1}} \m{v}.
        \sure{\LAnd_{\p{x}\in X_i} \ip{x}{i}=\m{v}(\ip{x}{i})}
    \byrule{c-sure-proj}
\whichisequiv
    \CC{\prob\circ\inv{\proj_1}} \m{v}.
      \LAnd_{i\in I}
        \sure{\LAnd_{\p{x}\in X_i} \ip{x}{i}=\m{v}(\ip{x}{i})}
    \byrule{c-and}
\whichisequiv
    \CC{\prob\circ\inv{\proj_1}} \m{v}.
      \sure{\ip{x}{i}=\m{v}(\ip{x}{i})}_{\ip{x}{i}\in X}
  \end{eqexplain}
  Note that the (iterated) applications of \ref{rule:c-and}
  satisfy the side condition
  because the inner assertions are by construction on disjoint indices.
  The backward direction of \ref{rule:c-and} holds by
  the standard laws of conjunction.
\end{proof}
 \begin{lemma}
\label{proof:c-extract}
  \Cref{rule:c-extract} is sound.
\end{lemma}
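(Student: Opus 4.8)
The plan is to mirror, at full generality, the independence-extraction chain that appears concretely in the One-Time-Pad derivation of \cref{sec:ex:one-time-pad}. The crucial structural fact is that $\prob_2$ is a \emph{fixed} distribution, independent of the conditioned value $v_1$; this is what lets the two nested conditionings collapse onto a \emph{product} distribution that then splits cleanly.

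First I would expose the inner ownership as a conditioning: applying \ref{rule:c-unit-r} under \ref{rule:c-cons} rewrites $\distAs{\aexpr_2\at{i}}{\prob_2}$ as $\CC{\prob_2} v_2.\sure{\aexpr_2\at{i}=v_2}$, so the premise becomes $\CC{\prob_1} v_1.\bigl(\sure{\aexpr_1\at{i}=v_1} * \CC{\prob_2} v_2.\sure{\aexpr_2\at{i}=v_2}\bigr)$. Then \ref{rule:c-frame} pushes $\sure{\aexpr_1\at{i}=v_1}$ inside the inner modality, and \ref{rule:sure-merge} fuses the two almost-sure equalities into one statement about the pair, yielding $\CC{\prob_1} v_1.\CC{\prob_2} v_2.\sure{(\aexpr_1,\aexpr_2)\at{i}=(v_1,v_2)}$.

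Next I would apply \ref{rule:c-fuse}. Since the inner kernel is the constant map $\fun v_1.\prob_2$, the fusion $\prob_1\fuse(\fun\wtv.\prob_2)$ is by definition $\fun(v_1,v_2).\prob_1(v_1)\cdot\prob_2(v_2)$, i.e.\ exactly the product $\prob_1\pprod\prob_2$; this merges the two modalities into $\CC{\prob_1\pprod\prob_2} (v_1,v_2).\sure{(\aexpr_1,\aexpr_2)\at{i}=(v_1,v_2)}$. A backward use of \ref{rule:c-unit-r} then recovers $\distAs{(\aexpr_1\at{i},\aexpr_2\at{i})}{\prob_1\pprod\prob_2}$, and finally \ref{rule:prod-split} separates the joint ownership into $\distAs{\aexpr_1\at{i}}{\prob_1} * \distAs{\aexpr_2\at{i}}{\prob_2}$, completing the derivation.

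No step poses a genuine difficulty, since each is an already-established primitive or derived rule; the only point to verify carefully is that the constant-kernel fusion produces a \emph{product} (rather than a general fused) distribution, which is immediate from the definition of $(\fuse)$ and is precisely what makes \ref{rule:prod-split} applicable. In this sense \ref{rule:c-extract} is simply the abstract, reusable packaging of the independence-extraction move that was used ad hoc in the One-Time-Pad example.
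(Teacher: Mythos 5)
Your proposal is correct and follows essentially the same derivation as the paper: \ref{rule:c-unit-r} to expose the inner conditioning, \ref{rule:c-frame} and \ref{rule:sure-merge} to fuse the almost-sure facts, a merge of the two modalities into one over $\prob_1\pprod\prob_2$ (the paper invokes \ref{rule:c-assoc} directly, of which your use of \ref{rule:c-fuse} with a constant kernel is just the packaged form), then \ref{rule:c-unit-r} backward and \ref{rule:prod-split}. The observation that the constant-kernel fusion yields exactly the product distribution is the same key point the paper relies on.
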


\begin{proof}
  \begin{eqexplain}
    \CC{\prob_1} v_1. \bigl(
      \sure{\aexpr_1\at{i} = v_1} *
      \distAs{\aexpr_2\at{i}}{\prob_2}
    \bigr)
\whichproves*
      \CC{\prob_1} v_1.
        \bigl(
          \sure{\aexpr_1\at{i}=v_1} *
          \CC{\prob_2} v_2. \sure{\aexpr_2\at{i}=v_2}
        \bigr)
      \byrule{c-unit-r}
    \whichproves
      \CC{\prob_1} v_1.
      \CC{\prob_2} v_2.
        \bigl(
          \sure{\aexpr_1\at{i}=v_1} *
          \sure{\aexpr_2\at{i}=v_2}
        \bigr)
      \byrules{c-frame}
    \whichproves
      \CC{\prob_1} v_1.
      \CC{\prob_2} v_2.
        \sure{\aexpr_1\at{i}=v_1 \land \aexpr_2\at{i}=v_2}
      \byrules{sure-merge}
    \whichproves
      \CC{\prob_1 \pprod \prob_2} (v_1,v_2).
        \sure{(\aexpr_1\at{i},\aexpr_2\at{i})=(v_1,v_2)}
      \byrules{c-assoc}
    \whichproves
      \distAs{(\aexpr_1\at{i},\aexpr_2\at{i})}{(\prob_1 \pprod \prob_2)}
      \byrule{c-unit-r}
    \whichproves
      \distAs{\aexpr_1\at{i}}{\prob_1} *
      \distAs{\aexpr_2\at{i}}{\prob_2}
      \byrule{prod-split}
  \end{eqexplain}
\end{proof}
 \begin{lemma}
\label{proof:c-dist-proj}
  \Cref{rule:c-dist-proj} is sound.
\end{lemma}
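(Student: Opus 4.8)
The plan is to recognize \ref{rule:c-dist-proj} as a direct instance of the generic projection principle established in \cref{proof:c-proj}, exactly as was done for \ref{rule:c-sure-proj}. The crucial observation is that the predicate sitting inside the conditioning modality, namely $\distAs{\aexpr\at{i}(x)}{\prob(x)}$, depends only on the first component $x$ of the bound pair $(x,y)$ and is completely insensitive to $y$. This is precisely the shape $P(v)$ required by \cref{proof:c-proj}, whose conclusion $\CC\prob (v,w).P(v) \lequiv \CC{\prob\circ\inv{\proj}} v.P(v)$ supplies the desired entailment; we only need the forward direction.

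First I would instantiate \cref{proof:c-proj} with $P(x) \is \distAs{\aexpr\at{i}(x)}{\prob(x)}$, identifying the retained bound variable with $x$ and the projected-away component with $y$. This reduces the goal to discharging the lone side condition of \cref{proof:c-proj}: that $P(x)$ is convex, i.e.\ $\CC{\prob'} w.P(x) \proves P(x)$ for every fixed $x \in \psupp(\prob\circ\inv{\proj_1})$ and every $\prob'$. The second step is to observe that this convexity obligation, for each fixed $x$, is literally an instance of \ref{rule:dist-convex} (proved sound in \cref{proof:dist-convex}): with $x$ held constant, both the target distribution $\prob(x)$ and the expression $\aexpr(x)$ become fixed data, so the requirement $\CC{\prob'} w.\distAs{\aexpr\at{i}(x)}{\prob(x)} \proves \distAs{\aexpr\at{i}(x)}{\prob(x)}$ is exactly the convexity of distribution-ownership assertions under an arbitrary conditioning. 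With the side condition met, \cref{proof:c-proj} fires and yields \ref{rule:c-dist-proj}. In short, the derivation reads ``by \cref{proof:c-proj} and \cref{proof:dist-convex}'', mirroring the one-line proof of \ref{rule:c-sure-proj}.

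I do not anticipate a genuine obstacle here, since both ingredients are already in hand. The only point requiring a moment of care is the mild notational clash: $\prob$ is used both as the pair-distribution indexing the outer modality and, as $\prob(x)$, for the $x$-indexed family of ownership distributions. I would make explicit that these are distinct objects and that the inner family $\prob(\cdot)$ is a fixed datum of the rule rather than a projection of the outer $\prob$, so that freezing $x$ genuinely leaves a well-defined constant target distribution to which \ref{rule:dist-convex} applies. Once that bookkeeping is disentangled, the argument is a clean specialization of the projection lemma and needs no direct reasoning about the semantic model.
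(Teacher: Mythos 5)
Your proposal is correct and matches the paper's proof exactly: the paper also derives \ref{rule:c-dist-proj} in one line ``by \cref{proof:c-proj} and \cref{proof:dist-convex}'', i.e.\ by instantiating the generic projection lemma with $P(x) \is \distAs{\aexpr\at{i}(x)}{\prob(x)}$ and discharging its convexity side condition via \ref{rule:dist-convex}, precisely mirroring the treatment of \ref{rule:c-sure-proj}. Your additional remark disentangling the overloaded use of $\prob$ is a sensible clarification but does not change the argument.
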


\begin{proof}
  By \cref{proof:c-proj} and \cref{proof:dist-convex}.
\end{proof}
 
\subsubsection{Relational Lifting}
\begin{lemma}
\label{proof:rl-cons}
  \Cref{rule:rl-cons} is sound.
\end{lemma}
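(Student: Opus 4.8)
The plan is to prove \ref{rule:rl-cons} by a direct monotonicity argument, reusing the very same witness coupling. First I would unfold both sides using \cref{def:rel-lift}. Recall that
\[
  \cpl{R_1} \is
    \E \prob \of \Dist(\Val^{X}).
      \pure{\prob(R_1) = 1} *
      \CC\prob \m{v}.
        \sure{\ip{x}{i} = \m{v}(\ip{x}{i})}_{\ip{x}{i}\in X},
\]
and $\cpl{R_2}$ is the same assertion with $R_1$ replaced by $R_2$ (and with $X$ the common index set of variables on which both relations are defined). The crucial observation is that the conditioning subformula $\CC\prob \m{v}.\sure{\ip{x}{i} = \m{v}(\ip{x}{i})}_{\ip{x}{i}\in X}$ does \emph{not} mention the relation at all: it depends only on the witness distribution $\prob$. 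Thus the only place where $R_1$ versus $R_2$ enters is the pure side condition $\pure{\prob(R) = 1}$.

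With this in hand the proof is essentially one step. Suppose a valid resource satisfies $\cpl{R_1}$; then it provides a witness $\prob \of \Dist(\Val^X)$ together with a proof of $\pure{\prob(R_1) = 1}$ and of the conditioning assertion. I would reuse exactly this $\prob$ as the witness for the existential in $\cpl{R_2}$. The conditioning part transfers verbatim since it is syntactically identical. For the pure side condition, from $R_1 \subs R_2$ and monotonicity of the measure $\prob$ we get $\prob(R_2) \geq \prob(R_1) = 1$, hence $\prob(R_2) = 1$, which discharges $\pure{\prob(R_2) = 1}$.

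There is no real obstacle here; the lemma is a straightforward consequence of the encoding of lifting as \supercond\ plus the monotonicity of probability measures. The only point worth stating carefully is that the \supercond\ component is parametric in $\prob$ but independent of the lifted relation, so that changing the relation only weakens the single pure conjunct $\pure{\prob(R) = 1}$; everything else is preserved identically under the same existential witness. The argument is entirely at the level of the logic, requiring no recourse to the semantic model.
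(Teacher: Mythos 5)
Your proposal is correct and matches the paper's own proof: both unfold \cref{def:rel-lift}, reuse the same witness distribution $\prob$, and use $R_1 \subs R_2$ together with monotonicity of $\prob$ to upgrade $\pure{\prob(R_1)=1}$ to $\pure{\prob(R_2)=1}$, the conditioning conjunct being unchanged. Nothing is missing.
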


\begin{proof}
  \begin{eqexplain}
    \cpl{R_1}
\whichis*
      \E \prob.
        \pure{\prob(R_1) = 1} *
        \CC\prob \m{v}.
          \sure{\ip{x}{i} = \m{v}(\ip{x}{i})}_{\ip{x}{i}\in X}
\whichproves
      \E \prob.
        \pure{\prob(R_2) = 1} *
        \CC\prob \m{v}.
          \sure{\ip{x}{i} = \m{v}(\ip{x}{i})}_{\ip{x}{i}\in X}
    \by{$R_1 \subseteq R_2$}
\whichis \cpl{R_2}
  \qedhere
  \end{eqexplain}
\end{proof} \begin{lemma}
\label{proof:rl-unary}
  \Cref{rule:rl-unary} is sound.
\end{lemma}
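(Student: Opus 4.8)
The plan is to unfold the definition of $\cpl{R}$ (\cref{def:rel-lift}) and exploit the crucial hypothesis that all the variables $\p{x}_1\at{i},\dots,\p{x}_n\at{i}$ live at the \emph{same} index $i$, which lets me collapse the \supercond\ modality into a single almost-sure assertion. Writing $X = \set{\p{x}_1\at{i},\dots,\p{x}_n\at{i}}$, the assertion $\cpl{R}$ is
\[
  \E\prob\of\Dist(\Val^X).\;
    \pure{\prob(R)=1} *
    \CC\prob \m{v}.\;
      \LAnd_{k=1}^n \sure{\p{x}_k\at{i} = \m{v}(\p{x}_k\at{i})}.
\]
First I would fix the witness $\prob$ and use \ref{rule:c-pure} to push the pure fact $\prob(R)=1$ inside the modality, so that the body becomes $\pure{\m{v}\in R} * \LAnd_{k} \sure{\p{x}_k\at{i} = \m{v}(\p{x}_k\at{i})}$. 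Making $\m{v}\in R$ available \emph{inside} the conditioning (rather than keeping it outside) is what will allow the consequence step below to go through uniformly for all $\m{v}$.

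Next I would reason under the modality via \ref{rule:c-cons}, establishing for each $\m{v}$ the entailment $\pure{\m{v}\in R} * \LAnd_{k} \sure{\p{x}_k\at{i} = \m{v}(\p{x}_k\at{i})} \proves \sure{R(\p{x}_1\at{i},\dots,\p{x}_n\at{i})}$. When $\m{v}\notin R$ the conjunct $\pure{\m{v}\in R}$ is false and the entailment is immediate. When $\m{v}\in R$, repeated use of \ref{rule:sure-merge} (legitimate precisely because all conjuncts are almost-sure assertions at the single index $i$) collapses the conjunction to $\sure{\LAnd_{k}\p{x}_k\at{i} = \m{v}(\p{x}_k\at{i})}$. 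Since $\m{v}\in R$, the expression $\LAnd_{k}\p{x}_k\at{i} = \m{v}(\p{x}_k\at{i})$ and the expression $(\LAnd_{k}\p{x}_k\at{i} = \m{v}(\p{x}_k\at{i})) \land R(\p{x}_1\at{i},\dots,\p{x}_n\at{i})$ are pointwise equal as boolean store-expressions, so these two $\sure{\cdot}$ assertions coincide; a final application of \ref{rule:sure-merge} rewrites the latter as $\sure{\LAnd_{k}\p{x}_k\at{i} = \m{v}(\p{x}_k\at{i})} * \sure{R(\p{x}_1\at{i},\dots,\p{x}_n\at{i})}$, from which $\sure{R(\p{x}_1\at{i},\dots,\p{x}_n\at{i})}$ follows by discarding the separating conjunct (valid here because \thelogic's assertions are upward-closed/affine).

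After \ref{rule:c-cons}, the assertion has the shape $\CC\prob \wtv.\sure{R(\p{x}_1\at{i},\dots,\p{x}_n\at{i})}$ with a body independent of the bound variable; \ref{rule:sure-convex} then discharges the conditioning and yields $\sure{R(\p{x}_1\at{i},\dots,\p{x}_n\at{i})}$, and since the conclusion no longer mentions $\prob$ the leading existential is eliminated. The step that needs the most care is the inner consequence — transferring the meta-level fact $\m{v}\in R$ together with the merged almost-sure equalities into the single statement $\sure{R(\dots)}$. This is exactly where the ``same index $i$'' hypothesis is indispensable: it is what makes \ref{rule:sure-merge} (rather than a relational merge) applicable, and it is why the relational lifting degenerates to an ordinary unary almost-sure fact.
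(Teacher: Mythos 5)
Your proof is correct and follows essentially the same route as the paper's: unfold the lifting, push $\prob(R)=1$ inside with \ref{rule:c-pure}, establish the inner entailment to $\sure{R(\p{x}_1\at{i},\dots,\p{x}_n\at{i})}$ under \ref{rule:c-cons}, discharge the modality with \ref{rule:sure-convex}, and drop the existential. The paper leaves the inner entailment implicit, whereas you spell it out via \ref{rule:sure-merge} and a case split on $\m{v}\in R$; the only cosmetic caveat is that \ref{rule:sure-merge} is stated for $*$ rather than $\land$, so strictly one first converts with \ref{rule:sure-and-star}, a detail the paper elides as well.
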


\begin{proof}
  \begin{eqexplain}
    \cpl{R}
\whichis*
    \E \prob.
      \pure{\prob(R) = 1} *
      \CC\prob \m{v}.
        \sure{\ip{x}{i} = \m{v}(\p{x}{i})}_{\p{x}{i}\in X}
\whichproves
    \E \prob.
      \CC\prob \m{v}.
        \pure{\m{v} \in R}
        * \sure{\ip{x}{i} = \m{v}(\p{x}{i})}_{\p{x}{i}\in X}
    \byrule{c-pure}
\whichproves
    \E \prob.
      \CC\prob \m{v}.
        \sure{R(\p{x}_1\at{i}, \dots , \p{x}_n\at{i})}
\whichproves
    \E \prob.
      \sure{R(\p{x}_1\at{i}, \dots, \p{x}_n\at{i})}
    \byrule{sure-convex}
\whichproves
    \sure{R(\p{x}_1\at{i}, \dots, \p{x}_n\at{i})}
  \qedhere
  \end{eqexplain}
\end{proof}

 \begin{lemma}
\label{proof:rl-eq-dist}
  \Cref{rule:rl-eq-dist} is sound.
\end{lemma}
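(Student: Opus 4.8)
The plan is to unfold the encoding of the lifting and collapse it to a single conditioning over a common distribution, exploiting that the lifted relation is equality. Writing $X = \set{\ip{x}{i}, \ip{y}{j}}$ and presenting $\Val^{X}$ as $\Val^2$ with $R_{=} = \set{(v,v) | v \in \Val}$, \cref{def:rel-lift} gives
\[
  \cpl{\ip{x}{i}=\ip{y}{j}}
  \lequiv
  \E\prob.\,
    \pure{\prob(R_{=})=1} *
    \CC\prob (v_1,v_2).(\sure{\ip{x}{i}=v_1} \land \sure{\ip{y}{j}=v_2}).
\]
Fixing an arbitrary $\prob$, the pure conjunct supplies the meta-level fact $\prob(R_{=})=1$, which for a discrete distribution forces $\psupp(\prob)$ onto the diagonal. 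I would then apply \ref{rule:c-transf} with the map $f(v)=(v,v)$, taking the distribution $\hat\prob$ on $\Val$ defined by $\hat\prob(v)=\prob(v,v)$; this $f$ is a bijection from $\psupp(\hat\prob)$ to $\psupp(\prob)$ with $\hat\prob(v)=\prob(f(v))$, exactly the side conditions of \ref{rule:c-transf}. (Alternatively, one first pushes the pure fact inside with \ref{rule:c-pure}, mirroring the simplification of $\cpl{\ip{k}{1}=\ip{c}{2}}$ given after \cref{def:rel-lift}.) The result is
\[
  \E\hat\prob.\,
    \CC{\hat\prob} v.(\sure{\ip{x}{i}=v} \land \sure{\ip{y}{j}=v}).
\]

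Next I would split the conditioned conjunction into two marginal ownership assertions. Since $\sure{\ip{x}{i}=v} \land \sure{\ip{y}{j}=v}$ entails each conjunct, two applications of \ref{rule:c-cons} from the single hypothesis yield $\CC{\hat\prob} v.\sure{\ip{x}{i}=v}$ and $\CC{\hat\prob} v.\sure{\ip{y}{j}=v}$; combining these by $\land$-introduction and rewriting each with \ref{rule:c-unit-r} gives $\distAs{\ip{x}{i}}{\hat\prob} \land \distAs{\ip{y}{j}}{\hat\prob}$. The point is that $\hat\prob$ serves as the common distribution for both variables: it is precisely the two marginals of $\prob$, which coincide because $\prob$ is diagonal-supported.

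Finally, because $i \ne j$ the relevant indices $\idx(\distAs{\ip{x}{i}}{\hat\prob})=\set{i}$ and $\idx(\distAs{\ip{y}{j}}{\hat\prob})=\set{j}$ are disjoint, so \ref{rule:and-to-star} upgrades the conjunction to a separating conjunction, and existentially quantifying over $\hat\prob$ delivers the goal $\E\prob.\,\distAs{\ip{x}{i}}{\prob} * \distAs{\ip{y}{j}}{\prob}$. I expect the only delicate point to be the justification of \ref{rule:c-transf}: one must verify, from $\prob(R_{=})=1$ and discreteness, that $\psupp(\prob)$ lies on the diagonal so that $f$ is a genuine weight-preserving bijection. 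Everything else is routine application of the \supercond\ laws, and no substantive independence argument is needed, since separation across distinct indices coincides with ordinary conjunction.
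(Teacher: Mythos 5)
Your proof is correct and follows the same overall skeleton as the paper's: unfold the encoding, collapse the binary conditioning to a unary conditioning over the common marginal, split the conjunction under the modality, apply \ref{rule:c-unit-r} twice, and finish with \ref{rule:and-to-star} using $i\ne j$. The one place you diverge is the collapse step. The paper first applies \ref{rule:c-cons} to rewrite $\sure{\ip{y}{j}=v_2}$ into $\sure{\ip{y}{j}=v_1}$ (justified by the pure fact $v_1=v_2$ carried inside the modality) and then uses \ref{rule:c-sure-proj} to discard the now-unused second component, landing directly on the marginal $\prob'\circ\inv{\proj_1}$. You instead invoke \ref{rule:c-transf} with the diagonal map $f(v)=(v,v)$ and $\hat\prob(v)=\prob(v,v)$ --- which is precisely the simplification the paper itself sketches for $\cpl{\Ip{k}{1}=\Ip{c}{2}}$ right after \cref{def:rel-lift}, and since the off-diagonal mass is zero your $\hat\prob$ coincides with the paper's first marginal. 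The trade-off is that your route pushes slightly more work to the meta-level: you must check that $\prob(R_{=})=1$ together with discreteness forces the support onto the diagonal, so that $f$ is a genuine weight-preserving bijection (you correctly flag this as the delicate point, and it does go through). The paper's route keeps that reasoning inside the logic via the pure fact and the projection rule. Both are sound, and the remaining steps --- splitting the conjunction by two applications of \ref{rule:c-cons} followed by $\land$-introduction, then \ref{rule:c-unit-r} and \ref{rule:and-to-star} --- match the paper's exactly.
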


\begin{proof}
  \begin{eqexplain}
    \cpl{x\at{i} = y\at{j}}
\proves{}&
    \E \prob'.
     \CC{\prob'} (v_1,v_2).\bigl(
      \sure{\Ip{x}{i} = v_1} \land
      \sure{\Ip{y}{j} = v_2} \land
      \pure{v_1=v_2}
    \bigr)
\whichproves
    \E \prob'.
     \CC{\prob'} (v_1, v_2).\bigl(
      \sure{\Ip{x}{i} = v_1} \land
      \sure{\Ip{y}{j} = v_1 }
    \bigr)
    \byrule{c-cons}
\whichproves
    \E \prob'.
     \CC{\prob'\circ\inv{\proj_1}} v_1.\bigl(
      \sure{\Ip{x}{i} = v_1} \land
      \sure{\Ip{y}{j} = v_1 }
    \bigr)
    \byrule{c-sure-proj}
\whichproves
    \E \prob.
     \CC{\prob} v_1.\bigl(
      \sure{\Ip{x}{i} = v_1} \land
      \sure{\Ip{y}{j} = v_1 }
    \bigr)
    \by{$ \prob = \prob'\circ\inv{\proj_1} $}
\whichproves
    \E \prob.
      \bigl(
      \CC\prob v_1.
       \sure{\Ip{x}{i} = v_1}
      \bigr)
      \land
      \bigl(
      \CC\prob v_1.
        \sure{\Ip{y}{j} = v_1}
      \bigr)
\whichproves
    \E \prob.
    \distAs{\Ip{x}{i}}{\mu}
    \land
    \distAs{\Ip{y}{j}}{\mu}
    \byrule{c-unit-r}
\whichproves
    \E \prob.
    \distAs{\Ip{x}{i}}{\mu}
    \ast
    \distAs{\Ip{y}{j}}{\mu}
    \byrule{and-to-star}
  \end{eqexplain}
\end{proof} \begin{lemma}
  \Cref{rule:rl-convex} is sound.
\end{lemma}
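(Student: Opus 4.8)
The plan is to follow the derivation sketched in \cref{sec:overview}, reducing the claim to repeated applications of the monadic \supercond\ laws. Write $K(\m{w})$ for the inner assertion of the lifting with its side constraint folded in via \ref{rule:c-pure}, so that $\cpl{R} \lequiv \E \hat\prob_0 \of \Dist(\Val^X). \CC{\hat\prob_0} \m{w}. K(\m{w})$ with $K(\m{w}) = \pure{\m{w}\in R} * \sure{\ip{x}{i}=\m{w}(\ip{x}{i})}_{\ip{x}{i}\in X}$. Unfolding $\cpl{R}$ underneath the outer modality turns the goal into
\[
  \CC\prob \wtv. \E\hat\prob_0. \CC{\hat\prob_0} \m{w}. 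K(\m{w}) \proves \cpl{R}.
\]

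First I would apply \ref{rule:c-skolem} to the left-hand side: the body under $\CC\prob$ has the shape $\E x. Q(x)$ with $x = \hat\prob_0$ ranging over $\Dist(\Val^X)$, so Skolemization commutes the existential past the modality and replaces $\hat\prob_0$ by a Markov kernel $\krnl$ from the outcomes of $\prob$ to $\Dist(\Val^X)$, yielding $\E\krnl. \CC\prob v. \CC{\krnl(v)} \m{w}. K(\m{w})$. Next, \ref{rule:c-fuse} merges the two nested modalities into a single one indexed by $\prob\fuse\krnl \of \Dist(\psupp(\prob)\times\Val^X)$, giving $\E\krnl. \CC{\prob\fuse\krnl} (v,\m{w}). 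K(\m{w})$. Up to this point everything is a direct instantiation of the primitive laws, exactly mirroring the overview argument.

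The main obstacle is the last step, reconciling this merged modality --- whose bound variable is a \emph{pair} $(v,\m{w})$ and whose indexing distribution lives over $\psupp(\prob)\times\Val^X$ --- with \cref{def:rel-lift}, which requires a witness distribution over $\Val^X$ alone, with the body conditioned only on $\m{w}$. Since $K(\m{w})$ does not mention the dummy coordinate $v$, I would discharge this by projecting $v$ away: extract the pure constraint $\pure{\m{w}\in R}$ back out with \ref{rule:c-pure}, reorder the pair via \ref{rule:c-transf} along the weight-preserving coordinate swap $(v,\m{w})\mapsto(\m{w},v)$, and then apply \ref{rule:c-sure-proj-many} to collapse the modality onto the marginal $\hat\prob_1 \is (\prob\fuse\krnl)\circ\inv{\proj_2} \of \Dist(\Val^X)$. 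Because membership in $R$ depends only on the retained coordinate, $\hat\prob_1(R)=1$, so $\hat\prob_1$ is a legitimate witness and the result is precisely $\cpl{R}$. The only genuinely delicate point is ensuring the projection is sound for the $\sure{\cdot}$-shaped body, which is exactly what \ref{rule:c-sure-proj-many} provides; all remaining manipulations are routine rewriting with the already-established laws.
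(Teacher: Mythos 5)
Your proposal matches the paper's own derivation essentially step for step: unfold the lifting, fold the side constraint in with \ref{rule:c-pure}, commute the existential with \ref{rule:c-skolem}, merge the nested modalities with \ref{rule:c-fuse}, extract the pure fact again, and project away the outer coordinate with \ref{rule:c-sure-proj-many} to obtain the marginal witness with $\hat\prob_1(R)=1$. The only cosmetic difference is your explicit \ref{rule:c-transf} coordinate swap, where the paper simply applies the projection rule to the second component directly.
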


\begin{proof}
 \begin{eqexplain}
   \CMod{\prob} a \st \cpl{R}
\whichis*
\CMod{\prob} a \st
     \E \prob'.
       \pure{\prob'(R) = 1} *
       \bigl(
         \CC{\prob'} \m{v}.
           \sure{\ip{x}{i} = \m{v}(\ip{x}{i})}_{\ip{x}{i}\in X}
       \bigr)
\whichproves
     \E \krnl.
     \CMod{\prob} a \st
       \bigl(
         \CC{\krnl(a)} \m{v}.
           \sure{\ip{x}{i} = \m{v}(\ip{x}{i})}_{\ip{x}{i}\in X}
           * \pure{R(\m{v})}
       \bigr)
   \byrules{c-pure,c-skolem}
\whichproves
     \E \hat{\prob}.
     \CMod{\hat{\prob}} (a,\m{v}) \st
       \sure{\ip{x}{i} = \m{v}(\ip{x}{i})}_{\ip{x}{i}\in X}
       * \pure{R(\m{v})}
   \byrule{c-fuse}
\whichproves
     \E \hat{\prob}.
     \pure{\hat{\prob} \circ \inv{\proj_2}(R) = 1} *
     \CMod{\hat{\prob}} (a,\m{v}) \st
       \sure{\ip{x}{i} = \m{v}(\ip{x}{i})}_{\ip{x}{i}\in X}
   \byrule{c-pure}
\whichproves
     \E \hat{\prob}.
     \pure{\hat{\prob} \circ \inv{\proj_2}(R) = 1} *
     \CMod{\hat{\prob} \circ \inv{\proj_2}} \m{v} \st
       \sure{\ip{x}{i} = \m{v}(\ip{x}{i})}_{\ip{x}{i}\in X}
   \byrule{c-sure-proj-many}
\whichproves
     \E \hat{\prob}'.
     \pure{\hat{\prob}'(R) = 1} *
     \CMod{\hat{\prob}'} \m{v} \st
       \sure{\ip{x}{i} = \m{v}(\ip{x}{i})}_{\ip{x}{i}\in X}
\whichis \cpl{R}
 \end{eqexplain}
 In the derivation we use
 $ \hat{\prob} = \prob \fuse \krnl $,
 and
 $\hat{\prob}' = \hat{\prob} \circ \inv{\proj_2}$.
\end{proof}
 \begin{lemma}
\label{proof:rl-merge}
  \Cref{rule:rl-merge} is sound.
\end{lemma}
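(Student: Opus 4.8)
The plan is to follow the recipe sketched after the statement: unfold both liftings, nest and then fuse the two \supercond\ modalities, and finally repackage the result as a single lifting of $R_1\land R_2$.

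First I would unfold $\cpl{R_1}*\cpl{R_2}$ using Definition~\ref{def:rel-lift}. This produces witnesses $\prob_1\of\Dist(\Val^{X_1})$ and $\prob_2\of\Dist(\Val^{X_2})$ with $\prob_1(R_1)=1$ and $\prob_2(R_2)=1$ (both pure, hence freely movable across $*$), separated by
\[
  \bigl(\CC{\prob_1}\m{v}_1.\,\sure{\ip{x}{i}=\m{v}_1(\ip{x}{i})}_{\ip{x}{i}\in X_1}\bigr)
  *
  \bigl(\CC{\prob_2}\m{v}_2.\,\sure{\ip{x}{i}=\m{v}_2(\ip{x}{i})}_{\ip{x}{i}\in X_2}\bigr).
\]
Two applications of \ref{rule:c-frame} then nest the modalities: framing the whole $\prob_2$-block into the $\prob_1$-modality, and then framing the $\prob_1$-body into the $\prob_2$-modality, yields
\[
  \CC{\prob_1}\m{v}_1.\,\CC{\prob_2}\m{v}_2.\,
    \bigl(\sure{\ip{x}{i}=\m{v}_1(\ip{x}{i})}_{\ip{x}{i}\in X_1}
          * \sure{\ip{x}{i}=\m{v}_2(\ip{x}{i})}_{\ip{x}{i}\in X_2}\bigr).
\]
Taking the constant kernel $\krnl=\fun\wtv.\prob_2$, \ref{rule:c-fuse} collapses the nested modalities into a single one over $\prob_1\fuse\krnl=\prob_1\pprod\prob_2$, giving $\CC{\prob_1\pprod\prob_2}(\m{v}_1,\m{v}_2).(\dots)$.

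Next I would combine the two inner families of almost-sure equalities into one family indexed by $X_1\cup X_2$. On the disjoint parts $X_1\setminus X_2$ and $X_2\setminus X_1$ this is immediate by \ref{rule:sure-merge}. The delicate part is the overlap $X_1\cap X_2$: there the body contains $\sure{\ip{x}{i}=\m{v}_1(\ip{x}{i})}*\sure{\ip{x}{i}=\m{v}_2(\ip{x}{i})}$ for the same variable, which by \ref{rule:sure-eq-inj} entails the pure fact $\pure{\m{v}_1(\ip{x}{i})=\m{v}_2(\ip{x}{i})}$. Pulling this pure fact out through the modality with \ref{rule:c-pure} shows that $\prob_1\pprod\prob_2$ assigns probability $1$ to the set $A$ of \emph{agreeing} pairs, i.e. those $(\m{v}_1,\m{v}_2)$ that coincide on $X_1\cap X_2$. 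On $A$ the two assignments glue to a single $\m{v}\of\Val^{X_1\cup X_2}$; writing $g$ for this gluing map and $\prob\is(\prob_1\pprod\prob_2)\circ\inv{g}$, I would use \ref{rule:c-transf} and \ref{rule:c-cons} together with \ref{rule:sure-merge} to rewrite the assertion as $\CC\prob\m{v}.\,\sure{\ip{x}{i}=\m{v}(\ip{x}{i})}_{\ip{x}{i}\in X_1\cup X_2}$.

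It then remains to check the pure side condition $\prob(R_1\land R_2)=1$, which closes the proof via Definition~\ref{def:rel-lift}. This follows because $\prob_1(R_1)=\prob_2(R_2)=1$ give $(\prob_1\pprod\prob_2)(R_1\times R_2)=1$, while the agreement argument above gives $(\prob_1\pprod\prob_2)(A)=1$; the gluing map sends $(R_1\times R_2)\cap A$ exactly into $R_1\land R_2$, so all the push-forward mass lands in $R_1\land R_2$. I expect the main obstacle to be precisely this overlapping-index bookkeeping: both making the gluing rigorous (the merge map is only defined on $A$, which is why the a.s.-equality-to-pure-fact step via \ref{rule:sure-eq-inj} and \ref{rule:c-pure} is essential) and verifying that the combined coupling concentrates on $R_1\land R_2$. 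When $X_1\cap X_2=\emptyset$ this bookkeeping disappears: $A$ is everything, $g$ is the canonical isomorphism $\Val^{X_1}\times\Val^{X_2}\cong\Val^{X_1\cup X_2}$, and $R_1\times R_2$ already coincides with $R_1\land R_2$.
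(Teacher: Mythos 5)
Your proposal is correct and follows essentially the same route as the paper's proof: unfold both liftings, nest the modalities with \ref{rule:c-frame}, combine them into a single modality over $\prob_1\pprod\prob_2$, use \ref{rule:sure-eq-inj} together with \ref{rule:c-pure} to extract the agreement of the two bound tuples on $X_1\cap X_2$, and then identify the duplicated overlap coordinates before concluding with the pure fact about $R_1\land R_2$. The only cosmetic difference is that the paper pre-splits each bound tuple into its overlap and non-overlap parts and discards the redundant copy via \ref{rule:c-sure-proj} and an explicit gluing kernel inside \ref{rule:c-assoc}, whereas you package the same step as a bijective re-indexing on the agreement set via \ref{rule:c-transf}; both are sound for the same reason.
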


\begin{proof}
  Let $R_1\in \Val^{X_1}$ and $R_2\in \Val^{X_2}$
  and let
  $ X = X_1 \inters X_2 $,
  $ Y_1 = X_1 \setminus X $, and
  $ Y_2 = X_2 \setminus X $, so that
  $ X_1 \union X_2 = Y_1 \dunion X \dunion Y_2 $.

  By definition, $\cpl{R_1} * \cpl{R_2}$ entails that for some
  $ \prob_1,\prob_2 $ with $ \prob_1(R_1)=1 $ and $ \prob_1(R_2)=1 $:
  \begin{eqexplain}
&
    \CC{\prob_1} \m{v}_1.
      (\sure{\ip{x}{i} = \m{v}_1(\ip{x}{i})}_{\ip{x}{i}\in X_1}) *
\CC{\prob_2} \m{v}_2.
      (\sure{\ip{x}{i} = \m{v}_2(\ip{x}{i})}_{\ip{x}{i}\in X_2})
\whichproves
  \CC{\prob_1} (\m{w}_1,\m{v}_1). (
    \sure{\ip{y}{i} = \m{w}_1(\ip{y}{i})}_{\ip{y}{i}\in Y_1}
    \land
    \sure{\ip{x}{i} = \m{v}_1(\ip{x}{i})}_{\ip{x}{i}\in X}
    * \pure{(\m{w}_1\m{v}_1) \in R_1}
  ) *
  {} \\ &
  \CC{\prob_2} (\m{w}_2,\m{v}_2). (
    \sure{\ip{y}{i} = \m{w}_2(\ip{y}{i})}_{\ip{y}{i}\in Y_2}
    \land
    \sure{\ip{x}{i} = \m{v}_2(\ip{x}{i})}_{\ip{x}{i}\in X}
    * \pure{(\m{w}_2\m{v}_2) \in R_2}
  )
  \byrule{c-pure}
\whichproves
  \CC{\prob_1} (\m{w}_1,\m{v}_1).
    \CC{\prob_2} (\m{w}_2,\m{v}_2).
    \begin{pmatrix*}[l]
    \sure{\ip{y}{i} = \m{w}_1(\ip{y}{i})}_{\ip{y}{i}\in Y_1}
          \land
          \sure{\ip{x}{i} = \m{v}_1(\ip{x}{i})}_{\ip{x}{i}\in X} * {}
    \\
    \sure{\ip{y}{i} = \m{w}_2(\ip{y}{i})}_{\ip{y}{i}\in Y_2}
          \land
          \sure{\ip{x}{i} = \m{v}_2(\ip{x}{i})}_{\ip{x}{i}\in X} * {}
    \\
      \pure{(\m{w}_1\m{v}_1) \in R_1}
    * \pure{(\m{w}_2\m{v}_2) \in R_2}
    \end{pmatrix*}
  \byrule{c-frame}
\whichproves
  \CC{\prob_1} (\m{w}_1,\m{v}_1).
    \CC{\prob_2} (\m{w}_2,\m{v}_2).
    \begin{pmatrix*}[l]
    \sure{\ip{y}{i} = \m{w}_1(\ip{y}{i})}_{\ip{y}{i}\in Y_1}
          \land
          \sure{\ip{x}{i} = \m{v}_1(\ip{x}{i})}_{\ip{x}{i}\in X} * {}
    \\
    \sure{\ip{y}{i} = \m{w}_2(\ip{y}{i})}_{\ip{y}{i}\in Y_2}
          \land
          \sure{\ip{x}{i} = \m{v}_2(\ip{x}{i})}_{\ip{x}{i}\in X} * {}
    \\
      \pure{(\m{w}_1\m{v}_1) \in R_1}
    * \pure{(\m{w}_2\m{v}_2) \in R_2}
    * \pure{\m{v}_1=\m{v}_2}
    \end{pmatrix*}
  \byrule{sure-eq-inj}
\whichproves
  \CC{\prob_1} (\m{w}_1,\m{v}_1).
  \CC{\prob_2} (\m{w}_2,\m{v}_2).
    \begin{pmatrix*}[l]
    \sure{\ip{y}{i} = \m{w}_1(\ip{y}{i})}_{\ip{y}{i}\in Y_1} \land {}
    \\
    \sure{\ip{x}{i} = \m{v}_1(\ip{x}{i})}_{\ip{x}{i}\in X} \land {}
    \\
    \sure{\ip{y}{i} = \m{w}_2(\ip{y}{i})}_{\ip{y}{i}\in Y_2} * {}
    \\
    \pure{(\m{w}_1\m{v}_1) \in R_1 \land (\m{w}_2\m{v}_1) \in R_2}
    \end{pmatrix*}
  \byrule{c-cons}
\whichproves
  \CC{\prob_1} (\m{w}_1,\m{v}_1).
  \CC{\prob_2 \circ \inv{\pi_1}} (\m{w}_2).
    \begin{pmatrix*}[l]
    \sure{\ip{y}{i} = \m{w}_1(\ip{y}{i})}_{\ip{y}{i}\in Y_1} \land {}
    \\
    \sure{\ip{x}{i} = \m{v}_1(\ip{x}{i})}_{\ip{x}{i}\in X} \land {}
    \\
    \sure{\ip{y}{i} = \m{w}_2(\ip{y}{i})}_{\ip{y}{i}\in Y_2} * {}
    \\
    \pure{(\m{w}_1\m{v}_1) \in R_1 \land (\m{w}_2\m{v}_1) \in R_2}
    \end{pmatrix*}
  \byrule{c-sure-proj}
  \end{eqexplain}

  Thus by letting $
  \prob = \prob_1 \pprod (\prob_2 \circ \inv{\pi_1})
  =\bind(\prob_1, \krnl_2)
  $ where
  \[
    \krnl_2 = \fun (\m{w}_1\m{v}_1).(
      \bind(\prob_2,\fun (\m{w}_2,\m{v}_2).
                        \return (\m{w}_1\m{w}_2\m{v}_1))
    )
  \]
  we obtain:
  \begin{eqexplain}
  &
  \CC{\prob_1} (\m{w}_1',\m{v}_1').
    \CC{\krnl_2(\m{w}_1',\m{v}_1')} (\m{w}_1,\m{w}_2,\m{w}).
    \begin{pmatrix*}[l]
    \sure{\ip{y}{i} = \m{w}_1(\ip{y}{i})}_{\ip{y}{i}\in Y_1} \land {}
    \\
    \sure{\ip{x}{i} = \m{w}(\ip{x}{i})}_{\ip{x}{i}\in X} \land {}
    \\
    \sure{\ip{y}{i} = \m{w}_2(\ip{y}{i})}_{\ip{y}{i}\in Y_2} * {}
    \\
    \pure{(\m{w}_1\m{w}) \in R_1 \land (\m{w}_2\m{w}) \in R_2}
    \end{pmatrix*}
\whichproves
  \CC{\prob} (\m{w}_1,\m{w}_2,\m{w}).
    \begin{pmatrix*}[l]
    \sure{\ip{y}{i} = \m{w}_1(\ip{y}{i})}_{\ip{y}{i}\in Y_1} \land {}
    \\
    \sure{\ip{x}{i} = \m{w}(\ip{x}{i})}_{\ip{x}{i}\in X} \land {}
    \\
    \sure{\ip{y}{i} = \m{w}_2(\ip{y}{i})}_{\ip{y}{i}\in Y_2}
    \\
    \pure{(\m{w}_1\m{w}) \in R_1 \land (\m{w}_2\m{w}) \in R_2}
    \end{pmatrix*}
  \byrules{c-assoc,c-sure-proj}
\whichproves
  \CC{\prob} \m{v}.
    \sure{\ip{x}{i} = \m{v}(\ip{x}{i})}_{\ip{x}{i}\in (X_1\union X_2)}
    * \pure{(\restr{\m{v}}{X_1}) \in R_1 \land (\restr{\m{v}}{X_2}) \in R_2}
  \end{eqexplain}
  The result gives us $ \cpl{R_1 \land R_2} $
  by \ref{rule:c-pure} and \cref{def:rel-lift}.
\end{proof} \begin{lemma}
\label{proof:rl-sure-merge}
  \Cref{rule:rl-sure-merge} is sound.
\end{lemma}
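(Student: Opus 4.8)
The plan is to prove the entailment by unfolding the encoding of $\cpl{R}$ from \cref{def:rel-lift}, pulling the almost-sure fact inside the conditioning with \ref{rule:c-frame}, and then exploiting that under the conditioning the expression $\expr\at{i}$ becomes \emph{deterministic}. After unfolding, the left-hand side gives, for some $\prob \of \Dist(\Val^X)$ with $\prob(R) = 1$,
\[
  \Bigl(\CC\prob \m{v}. \sure{\ip{y}{i} = \m{v}(\ip{y}{i})}_{\ip{y}{i}\in X}\Bigr) * \sure{\ip{x}{i} = \expr\at{i}}.
\]
By \ref{rule:c-frame} I would bring the second conjunct inside the modality, obtaining $\CC\prob \m{v}. \bigl(\sure{\ip{y}{i} = \m{v}(\ip{y}{i})}_{\ip{y}{i}\in X} * \sure{\ip{x}{i} = \expr\at{i}}\bigr)$. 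This mirrors the opening moves of \ref{rule:rl-merge}, the difference being that the second resource is a unary almost-sure fact rather than a second lifting.

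Next comes the crux. Because $\pvar(\expr\at{i}) \subs X$, every program variable occurring in $\expr$ is among those conditioned by $\m{v}$. Hence, inside the modality, the conjuncts $\sure{\ip{y}{i}=\m{v}(\ip{y}{i})}$ for $\p{y}\in\pvar(\expr)$ together with $\sure{\ip{x}{i}=\expr\at{i}}$ pin $\p{x}\at{i}$ to the concrete value of $\expr$ determined by $\m{v}$, which I write $\sem{\expr}(\m{v})$. Using \ref{rule:sure-merge} to merge the same-index facts and \ref{rule:c-cons} to weaken the merged assertion, I would rewrite the payload as $\sure{\ip{y}{i} = \m{v}'(\ip{y}{i})}_{\ip{y}{i}\in X'}$, where $X' = X \union \set{\ip{x}{i}}$ and $\m{v}'$ extends $\m{v}$ by $\m{v}'(\ip{x}{i}) = \sem{\expr}(\m{v})$. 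At this point the payload has exactly the shape required by \cref{def:rel-lift}, except that the modality is still indexed by $\m{v} \in \Val^{X}$ rather than by an element of $\Val^{X'}$.

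To land in the target form I would reindex. Let $g \from \Val^X \to \Val^{X'}$ send $\m{v}$ to the tuple $\m{v}'$ with $\restr{\m{v}'}{X} = \m{v}$ and $\m{v}'(\ip{x}{i}) = \sem{\expr}(\m{v})$; since its restriction to $X$ is the identity, $g$ is injective, hence a bijection from $\psupp(\prob)$ onto $\psupp(\prob')$ where $\prob' \is \prob \circ \inv{g}$. Applying \ref{rule:c-transf} with the bijection $\inv{g}$ (its weight condition holds since $\prob'(\m{v}') = \prob(\inv{g}(\m{v}'))$) converts $\CC\prob \m{v}. \sure{\ip{y}{i}=\m{v}'(\ip{y}{i})}_{\ip{y}{i}\in X'}$ into $\CC{\prob'} \m{v}'. \sure{\ip{y}{i}=\m{v}'(\ip{y}{i})}_{\ip{y}{i}\in X'}$. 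Finally I must supply the pure side-fact $\prob'(R \land \p{x}\at{i}=\expr\at{i}) = 1$: every $\m{v}' \in \psupp(\prob')$ satisfies $\restr{\m{v}'}{X} \in R$ (because $\prob(R)=1$) and $\m{v}'(\ip{x}{i}) = \sem{\expr}(\restr{\m{v}'}{X})$ by construction of $g$, which is precisely membership in the combined relation $R \land \p{x}\at{i}=\expr\at{i}$ as defined just before \cref{def:rel-lift}. Reassembling via \ref{rule:c-pure} and \cref{def:rel-lift} then yields $\cpl{R \land \p{x}\at{i}=\expr\at{i}}$.

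The main obstacle, which I expect to be bookkeeping rather than conceptual, is the determinization step: making precise that $\pvar(\expr\at{i}) \subs X$ forces $\expr\at{i}$ to coincide almost surely with the constant $\sem{\expr}(\m{v})$ under the conditioning, so that $\p{x}\at{i}$ is pinned to a quantity depending only on $\m{v}$ and not on residual randomness. This is exactly what makes the pushforward $g$ well-defined as a function of $\m{v}$ alone, and it underwrites both the \ref{rule:c-transf} step and the pure fact on $\prob'$; everything else is routine manipulation closely paralleling \ref{rule:rl-merge}.
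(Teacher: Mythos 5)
Your proof is correct and follows essentially the same route as the paper's: unfold the encoding of $\cpl{R}$, bring $\sure{\ip{x}{i}=\expr\at{i}}$ inside the modality with \ref{rule:c-frame}, use $\pvar(\expr\at{i}) \subs X$ to determinize $\expr\at{i}$ to $\sem{\expr\at{i}}(\m{v})$, and push $\prob$ forward along the map $g$ that extends $\m{v}$ with $\sem{\expr\at{i}}(\m{v})$ at $\ip{x}{i}$ to obtain the witness $\prob'$ for the combined lifting. The only difference is mechanical: you implement the reindexing with \ref{rule:c-transf} applied to the bijection $g$, whereas the paper introduces a Dirac conditioning via \ref{rule:c-unit-l} and fuses it with \ref{rule:c-frame} and \ref{rule:c-assoc}; both constructions yield the same $\prob' = \prob \circ \inv{g}$.
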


\begin{proof}
  \begin{eqexplain}
    \cpl{R} * \sure{\ip{x}{i} = \expr\at{i}}
  \whichproves*
    \E\prob.
      \CC\prob \m{v}. \bigl(
        \sure{\ip{y}{i} = \m{v}(\ip{y}{i})}_{\ip{y}{i}\in X}
        * \pure{R(\m{v})}
      \bigr)
      * \sure{\ip{x}{i} = \expr\at{i}}
  \bydef
\whichproves
    \E\prob.
      \CC\prob \m{v}. \bigl(
        \sure{\ip{y}{i} = \m{v}(\ip{y}{i})}_{\ip{y}{i}\in X}
        * \pure{R(\m{v})}
        * \sure{\ip{x}{i} = \expr\at{i}}
      \bigr)
  \byrule{c-frame}
\whichproves
    \E\prob.
      \CC\prob \m{v}. \bigl(
        \sure{\ip{y}{i} = \m{v}(\ip{y}{i})}_{\ip{y}{i}\in X}
        * \pure{R(\m{v})}
        * \sure{\ip{x}{i} = \sem{\expr\at{i}}(\m{v})}
      \bigr)
  \by{$\pvar(\expr\at{i}) \subs X$}
\whichproves
    \E\prob.
      \CC\prob \m{v}. \bigl(
        \sure{\ip{y}{i} = \m{v}(\ip{y}{i})}_{\ip{y}{i}\in X}
        * \pure{R(\m{v})}
        * \CC{\dirac{\sem{\expr\at{i}}(\m{v})}} w.\sure{\ip{x}{i} = w}
      \bigr)
  \byrule{c-unit-l}
\whichproves
    \E\prob.
      \CC\prob \m{v}.
      \CC{\dirac{\sem{\expr\at{i}}(\m{v})}} w.
      \bigl(
        \sure{\ip{y}{i} = \m{v}(\ip{y}{i})}_{\ip{y}{i}\in X}
        * \pure{R(\m{v})}
        * \sure{\ip{x}{i} = w}
      \bigr)
  \byrule{c-frame}
\whichproves
    \E\prob'.
      \CC{\prob'} \m{v}'.
      \begin{grp}
        \sure{\ip{y}{i} = \m{v}'(\ip{y}{i})}_{\ip{y}{i}\in X}
        * \sure{\ip{x}{i} = \m{v}'(\ip{x}{i})} \\ {}
        * \pure{R(\m{v}') \land \sem{\expr\at{i}}(\m{v}') = \m{v}'(\ip{x}{i})}
      \end{grp}
  \byrules{c-pure,c-assoc}
\whichproves
    \cpl{R \land \ip{x}{i} = \expr\at{i}}
  \end{eqexplain}
  where we let
  $
    \prob' \is
      \left(
      \DO{
        \m{v} <- \prob;
        \return(\m{v}\upd{\ip{x}{i} -> \sem{\expr\at{i}}(\m{v})})
      }
      \right).
  $
\end{proof} \begin{lemma}
\label{proof:coupling}
  \Cref{rule:coupling} is sound.
\end{lemma}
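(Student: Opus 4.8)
The plan is to derive \cref{rule:coupling} purely inside the logic, from \cref{def:rel-lift} and the primitive \supercond\ laws, without descending to the semantic model. The coupling $\prob$ supplied in the premises will serve directly as the witness for the existential in $\cpl{R}$, so the only real work is to massage $\distAs{\p{x}_1\at{\I1}}{\prob_1} * \distAs{\p{x}_2\at{\I2}}{\prob_2}$ into the shape $\CC{\prob} (v_1,v_2).(\sure{\p{x}_1\at{\I1}=v_1} \land \sure{\p{x}_2\at{\I2}=v_2})$. First I would invoke \cref{lm:fuse-split} to obtain a Markov kernel $\krnl \from \Val \to \Dist(\Val)$ with $\prob = \prob_1 \fuse \krnl$; since $\prob \circ \inv{\proj_1} = \prob_1$ this $\krnl$ is exactly the conditional-distribution kernel of the second coordinate, and the premise $\prob \circ \inv{\proj_2} = \prob_2$ then gives $\prob_2 = \bind(\prob_1, \krnl)$.

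The core manoeuvre is to re-express each marginal as a $\prob_1$-indexed convex combination. By \ref{rule:c-unit-r}, $\distAs{\p{x}_1\at{\I1}}{\prob_1} \lequiv \CC{\prob_1} v_1.\sure{\p{x}_1\at{\I1}=v_1}$. For the second conjunct I would chain \ref{rule:c-unit-r}, then \ref{rule:c-unassoc} to split along $\prob_2 = \bind(\prob_1,\krnl)$, then \ref{rule:c-unit-r} again, obtaining
\[
  \distAs{\p{x}_2\at{\I2}}{\prob_2}
  \proves
  \CC{\prob_1} v_1.\distAs{\p{x}_2\at{\I2}}{\krnl(v_1)}.
\]
This is precisely where the freedom to couple is realised within the logic: $\distAs{\p{x}_2\at{\I2}}{\prob_2}$ records only the marginal, so it may be re-decomposed by \emph{any} kernel averaging to $\prob_2$, and I choose the one from \cref{lm:fuse-split} that reconstructs the prescribed joint $\prob$. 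The step is sound exactly because the two variables inhabit distinct indices (parallel universes).

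Next I would merge the two resulting $\prob_1$-conditionings into one. The model is affine (its unit $\raUnit$ is the bottom element of $\raLeq$ and all assertions are upward-closed), so $P * Q \proves P \land Q$; combined with \ref{rule:c-and}, whose side condition holds because the inner assertions have disjoint relevant indices ($\{\I1\}$ and $\{\I2\}$), this yields $\CC{\prob_1} v_1.\bigl(\sure{\p{x}_1\at{\I1}=v_1} \land \distAs{\p{x}_2\at{\I2}}{\krnl(v_1)}\bigr)$. Reasoning under the outer modality with \ref{rule:c-cons}, I would unfold the inner $\distAs{}$ via \ref{rule:c-unit-r}, push the index-$\I1$ fact inside by \ref{rule:and-to-star} and \ref{rule:c-frame}, reaching $\CC{\krnl(v_1)} v_2.(\sure{\p{x}_1\at{\I1}=v_1} \land \sure{\p{x}_2\at{\I2}=v_2})$; a single application of \ref{rule:c-fuse} collapses the nesting into
\[
  \CC{\prob_1\fuse\krnl} (v_1,v_2).(\sure{\p{x}_1\at{\I1}=v_1} \land \sure{\p{x}_2\at{\I2}=v_2})
  = \CC{\prob} (v_1,v_2).(\sure{\p{x}_1\at{\I1}=v_1} \land \sure{\p{x}_2\at{\I2}=v_2}).
\]
Finally, since $\prob(R)=1$ is a premise, $\pure{\prob(R)=1}$ holds on every resource and can be adjoined with $*$ (a true pure fact has empty relevant index, so \ref{rule:and-to-star} applies), and instantiating the existential of \cref{def:rel-lift} at $\prob$ delivers $\cpl{R}$.

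I expect the main obstacle to be the re-decomposition step, the displayed entailment for $\distAs{\p{x}_2\at{\I2}}{\prob_2}$: recognising that one must first turn the \emph{marginal} into a conditional convex combination indexed by $\prob_1$, before any fusion can inject the correlation, and that the specific kernel from \cref{lm:fuse-split} is exactly what forces $\prob_1 \fuse \krnl$ to equal the given coupling $\prob$. Everything after that is routine bookkeeping with the \supercond\ laws.
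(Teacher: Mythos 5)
Your derivation is correct, and it stays entirely inside the logic as the paper's does, but it takes a recognisably different route. The paper's proof is shorter: it applies \ref{rule:c-unit-r} to each conjunct to get $\CC{\prob_1} v.\sure{\p{x}_1\at{\I1}=v} * \CC{\prob_2} w.\sure{\p{x}_2\at{\I2}=w}$, then uses the derived rule \ref{rule:c-sure-proj} (in its right-to-left direction, once per projection) to re-index \emph{both} marginal conditionings directly to conditionings on the joint $\prob$, passes from $*$ to $\land$ by upward-closure, merges with \ref{rule:c-and}, and concludes from $\prob(R)=1$. What you do instead is essentially inline the proof of \ref{rule:c-sure-proj}: you invoke \cref{lm:fuse-split} explicitly to get the disintegration kernel $\krnl$ with $\prob=\prob_1\fuse\krnl$, split the second marginal's conditioning into a nested one via \ref{rule:c-unassoc}, merge at the $\prob_1$ level with \ref{rule:c-and}, and only re-fuse to the joint at the very end with \ref{rule:c-fuse}. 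Each step checks out: $\prob_2=\bind(\prob_1,\krnl)$ does follow from $\prob\circ\inv{\proj_2}=\prob_2$, the $P*Q\proves P\land Q$ step is justified exactly as you say (the unit is below every valid element, so $b_1\raLeq b_1\raOp b_2$), the \ref{rule:c-and} side condition holds since the inner assertions live at indices $\I1$ and $\I2$ respectively, and adjoining $\pure{\prob(R)=1}$ is free since the premise makes it a true pure fact. The paper's version buys brevity by factoring the fuse-split/unassoc/fuse manoeuvre into the reusable lemma \ref{rule:c-sure-proj}; your version buys transparency, making visible exactly where the prescribed coupling $\prob$ is injected (in the choice of kernel when re-decomposing the second marginal), which is arguably the conceptual heart of the rule.
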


\begin{proof}
  Assuming
  $\prob \circ \inv{\proj_1} = \prob_1$,
  $\prob \circ \inv{\proj_2} = \prob_2$, and
  $\prob(R) = 1$, we have:
  \begin{eqexplain}
    \distAs{\p{x}_1\at{\I1}}{\prob_1} *
    \distAs{\p{x}_2\at{\I2}}{\prob_2}
\whichproves*
    \CC{\prob_1} v. \sure{x_1\at{1} = v} *
    \CC{\prob_2} w. \sure{x_2\at{2} = w}
    \byrule{c-unit-r}
\whichproves
    \CC{\prob} (v, w). \sure{x_1\at{1} = v} *
    \CC{\prob} (v, w). \sure{x_2\at{2} = w}
    \byrule{c-sure-proj}
\whichproves
    \CC{\prob} (v, w). \sure{x_1\at{1} = v} \land
    \CC{\prob} (v, w). \sure{x_2\at{2} = w}
    \byrule{and-to-star}
\whichproves
    \CC{\prob} (v, w).
      (\sure{x_1\at{1} = v} \land
      \sure{x_2\at{2} = w})
    \byrule{c-and}
\whichproves
    \cpl{R(x_1\at{1}, x_2\at{2})}
    \by{$\prob(R) = 1$}
  \end{eqexplain}
\end{proof}
 
\subsubsection{Weakest Precondition}
\begin{lemma}
\label{proof:wp-loop-0}
  \Cref{rule:wp-loop-0} is sound.
\end{lemma}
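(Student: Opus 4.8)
The plan is to reduce this to the already-established soundness of the skip rule (\cref{proof:wp-skip}) by showing that $\Loop{0}{t}$ denotes the identity transformation on distributions. First I would unfold the kernel semantics of \cref{def:semantics}: since the loop guard is the literal $0$, we have $\sem{0}(s) = 0 \leq 0$, so $\Sem[K]{\Loop{0}{t}}(s) = \var{loop}_t(0,s) = \return(s) = \dirac{s}$ for every store~$s$. Lifting this to the distribution level, $\sem{\Loop{0}{t}}(\prob) = \bind(\prob,\return) = \prob$ by the monadic right-unit law \eqref{prop:bind-unit-r}. Since the tuple semantics applies the term semantics pointwise at the index~$i$ and preserves the distributions at all other indices, this gives $\sem{\m[i:\Loop{0}{t}]}(\m{\prob}_0) = \m{\prob}_0$.

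With the semantics pinned down as the identity, the argument proceeds exactly as in \cref{proof:wp-skip}. I would take an arbitrary valid $a \in \Model_I$ satisfying~$P$ and unfold the definition of WP (\cref{def:wp}): the goal becomes to show that for all $\m{\prob}_0$ and all frames~$c$ with $(a \raOp c) \raLeq \m{\prob}_0$, there exists~$b$ such that $(b \raOp c) \raLeq \sem{\m[i:\Loop{0}{t}]}(\m{\prob}_0)$ and $P(b)$ hold. Picking $b = a$ discharges both conjuncts: the first holds because $\sem{\m[i:\Loop{0}{t}]}(\m{\prob}_0) = \m{\prob}_0$ and $(a \raOp c) \raLeq \m{\prob}_0$ by hypothesis, and the second is just the assumption $P(a)$.

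There is no genuine obstacle here; the only content is the semantic computation that zero iterations collapse to the identity on distributions, which follows directly from the base case $n \leq 0$ of $\var{loop}_t$ together with the right-unit law. Everything downstream is verbatim the skip case, so I expect the full proof to be a one-line remark citing \cref{proof:wp-skip}.
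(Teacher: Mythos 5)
Your proof is correct, but it takes a different route from the paper. The paper disposes of this lemma in one line: \ref{rule:wp-loop-0} is the special case of the already-proven \ref{rule:wp-loop} with $n=0$, whose premise ``$\forall i < 0 \st \dots$'' is vacuously true, so the conclusion $P(0) \proves \WP{\m[i:\Loop{0}{t}]}{P(0)}$ follows immediately. You instead argue directly against the semantic model: you compute that $\var{loop}_t(0,s) = \return(s)$, lift this through the right-unit law to get $\sem{\Loop{0}{t}}(\prob) = \prob$, and then replay the skip argument of \cref{proof:wp-skip} with $b = a$. Both are sound; your semantic computation is in fact exactly what the paper's own base case inside \cref{proof:wp-loop} does, so you have essentially inlined the base case of that induction rather than citing the rule it establishes. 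What the paper's route buys is economy and adherence to its stated methodology of keeping ``derived'' rules (this one is listed in \cref{fig:derived-wp-rules}) provable within the logic from primitive rules, without reopening the model; what your route buys is self-containedness — it does not depend on \ref{rule:wp-loop} having been proven first, and it makes explicit why zero iterations behave like \code{skip}.
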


\begin{proof}
  Special case of \ref{rule:wp-loop} with $n=0$,
  which makes the premises trivial.
\end{proof} \begin{lemma}
\label{proof:wp-loop-lockstep}
  \Cref{rule:wp-loop-lockstep} is sound.
\end{lemma}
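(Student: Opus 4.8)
The plan is to prove \ref{rule:wp-loop-lockstep} by induction on $n$, using only the structural rule \ref{rule:wp-nest}, the consequence rule \ref{rule:wp-cons}, the loop unfolding rule \ref{rule:wp-loop-unf}, and the base case \ref{rule:wp-loop-0}. The central idea is that a lockstep loop of $n$ iterations can be decomposed into a lockstep loop of $n-1$ iterations followed by a single lockstep iteration, and that this extra iteration can then be rolled back into each loop separately (one index at a time) via \ref{rule:wp-loop-unf}. A key auxiliary observation is that, since $\m.$ is the union of indexed tuples with \emph{disjoint} indexes, it is commutative, so \ref{rule:wp-nest} validates not only merging and splitting but also \emph{reordering} of WPs living on distinct indices: $\WP{\m{t}_1}{\WP{\m{t}_2}{Q}} \lequiv \WP{\m{t}_2}{\WP{\m{t}_1}{Q}}$ whenever $\supp{\m{t}_1}\inters\supp{\m{t}_2}=\emptyset$.

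For the base case $n=0$, both $\Loop{0}{t}$ and $\Loop{0}{t'}$ denote the identity, so \ref{rule:wp-loop-0} gives $P(0) \proves \WP{\m[j:\Loop{0}{t'}]}{P(0)}$; combining this with \ref{rule:wp-loop-0} on index $i$, \ref{rule:wp-cons}, and \ref{rule:wp-nest} yields $P(0) \proves \WP{\m[i:\Loop{0}{t},\, j:\Loop{0}{t'}]}{P(0)}$, with the premises vacuously satisfied.

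For the inductive step, I assume the rule for $n-1$. Instantiating the induction hypothesis with the premises for $k<n-1$ gives $P(0)\proves \WP{\m[i:\Loop{(n-1)}{t},\,j:\Loop{(n-1)}{t'}]}{P(n-1)}$, while the remaining premise ($k=n-1$) is $P(n-1)\proves \WP{\m[i:t,\,j:t']}{P(n)}$. By \ref{rule:wp-cons} these compose into
\[
  P(0)\proves \WP{\m[i:\Loop{(n-1)}{t},\,j:\Loop{(n-1)}{t'}]}{\WP{\m[i:t,\,j:t']}{P(n)}}.
\]
It then suffices to establish the entailment from this nested WP to $\WP{\m[i:\Loop{n}{t},\,j:\Loop{n}{t'}]}{P(n)}$, which is the heart of the derivation. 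To do so I would first apply \ref{rule:wp-nest} repeatedly to fully separate both the loop-WP and the body-WP into a chain of unary WPs, $\WP{\m[i:\Loop{(n-1)}{t}]}{\WP{\m[j:\Loop{(n-1)}{t'}]}{\WP{\m[i:t]}{\WP{\m[j:t']}{P(n)}}}}$; then use the commutation form of \ref{rule:wp-nest} to move the $i$-body WP past the $j$-loop WP, making the two $i$-WPs adjacent and the two $j$-WPs adjacent. At that point \ref{rule:wp-loop-unf} (with $n$ replaced by $n-1$) collapses $\WP{\m[i:\Loop{(n-1)}{t}]}{\WP{\m[i:t]}{-}}$ into $\WP{\m[i:\Loop{n}{t}]}{-}$, and symmetrically for index $j$; a final merge by \ref{rule:wp-nest} produces the desired binary lockstep loop-WP and closes the induction.

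The main obstacle is the reordering bookkeeping in the inductive step: carefully justifying the cross-index commutation of WPs as an instance of \ref{rule:wp-nest} (exploiting commutativity of $\m.$ on disjoint index sets) and applying \ref{rule:wp-loop-unf} at exactly the point where each loop body has been brought adjacent to its loop. Some attention is also required to ensure the premises are instantiated over the correct range of iteration counts when invoking the induction hypothesis, so that the invariant steps $P(k)\to P(k+1)$ align across the $(n-1)$-fold lockstep and the final single iteration.
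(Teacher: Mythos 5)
Your derivation is correct, but it takes a genuinely different route from the paper's. The paper does not perform an explicit induction on $n$ here; instead it reduces the lockstep rule to the already-proven unary \ref{rule:wp-loop} by choosing the WP-enriched invariant $P'(k) \is \WP{\m[j:\Loop{k}{t'}]}{P(k)}$, which carries the pending $j$-loop inside the invariant for the $i$-loop. The inner obligation $P'(k)\proves\WP{\m[i:t]}{P'(k+1)}$ is then discharged with exactly the ingredients you use — cross-index swapping via two applications of \ref{rule:wp-nest} (justified, as you note, by commutativity of $\m.$ on disjoint index sets), \ref{rule:wp-loop-unf} to absorb one more $j$-iteration, and \ref{rule:wp-cons} to strip the outer WP. Your proof instead inducts on $n$ directly, splitting the $n$-fold lockstep loop into an $(n{-}1)$-fold lockstep loop followed by one lockstep iteration, flattening into four unary WPs, commuting to bring each loop adjacent to its body, and reassembling with \ref{rule:wp-loop-unf} on each index. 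Both are valid derivations from the same primitive rules; the paper's version buys reuse of \ref{rule:wp-loop} as a black box (no fresh induction) and illustrates the idiom of invariants containing WPs, whereas yours is more self-contained and essentially inlines the induction that \ref{rule:wp-loop} itself hides. Your base case via \ref{rule:wp-loop-0} and the range bookkeeping for instantiating the hypotheses ($k<n{-}1$ for the inductive call, $k=n{-}1$ for the final step) are handled correctly.
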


\begin{proof}
  We derive the following rule:
  \[
  \infer*{
    \forall k < n\st
      P(k) \proves \WP {\m[i: t, j: t']}{P(k+1)}
  }{
    P(0) \proves
    \WP {\m[i: (\Loop{n}{t}), j: (\Loop{n}{t'})]} {P(n)}
  }
  \]
  (for $n\in \Nat$ and $i \ne j$)
  from the standard \ref{rule:wp-loop},
  as follows.
  Let
  \[
    P'(k) \is
    \WP {\m[j: \Loop{k}{t'}]} {P(k)}
  \]
  Note that
  $P(0) \proves P'(0)$
  by \ref{rule:wp-loop-0}.
  Then we can apply the \ref{rule:wp-loop} using $P'$ as a loop invariant
  \begin{derivation}
    \infer*[Right=\ref{rule:wp-nest}]{
    \infer*{
    \infer*[Right=\ref{rule:wp-loop}]{
    \infer*{
    \infer*[Right=\ref{rule:wp-nest}]{
    \infer*[Right=\ref{rule:wp-loop-unf}]{
    \infer*[Right=\ref{rule:wp-cons}]{
    \infer*[Right=\ref{rule:wp-nest}]{
    \infer*{}{
      \forall k \leq n \st
      P(k)
      \proves
      \WP {\m[i: t, j: t']} {P(k+1)}
    }}{
      \forall k \leq n \st
      P(k)
      \proves
        \WP {\m[j: t']} {
          \WP {\m[i: t]} {P(k+1)}
      }
    }}{
      \forall k \leq n \st
      \WP {\m[j: \Loop{k}{t'}]} {P(k)}
      \proves
      \WP {\m[j: \Loop{k}{t'}]}[\big]{
        \WP {\m[j: t']} {
          \WP {\m[i: t]} {P(k+1)}
        }
      }
    }}{
      \forall k \leq n \st
      \WP {\m[j: \Loop{k}{t'}]} {P(k)}
      \proves
      \WP {\m[j: \Loop{(k+1)}{t'}]}[\big]{
        \WP {\m[i: t]} {P(k+1)}
      }
    }}{
      \forall k \leq n \st
      \WP {\m[j: \Loop{k}{t'}]} {P(k)}
      \proves
      \WP {\m[i: t]}[\big]{
        \WP {\m[j: \Loop{(k+1)}{t'}]} {P(k+1)}
      }
    }}{
      \forall k \leq n \st
        P(k)' \proves
          \WP {\m[i: t]} {P'(k+1)}
    }}{
      P'(0) \proves
      \WP {\m[i: (\Loop{n}{t})]}{
        P'(n)
      }
    }}{
      P(0) \proves
      \WP {\m[i: (\Loop{n}{t})]}{
        \WP{\m[j: (\Loop{n}{t'})]} {P(n)}
      }
    }}{
      P(0) \proves
      \WP {\m[i: (\Loop{n}{t}), j: (\Loop{n}{t'})]} {P(n)}
    }
  \end{derivation}
  From bottom to top,
    we focus on component $i$ using \ref{rule:wp-nest};
    then we use $P(0) \proves P'(0)$ and transitivity of entailment
    to rewrite the goal using the invariant $P'$;
    we then use \ref{rule:wp-loop} and unfold the invariant;
    using \ref{rule:wp-nest} twice we can swap the two components so that
    component~$j$ is the topmost WP in the assumption and conclusion;
    using \ref{rule:wp-loop-unf} we break off the first $k$ iterations at~$j$;
    finally, using \ref{rule:wp-cons} we can eliminate the topmost
    WP on both sides of the entailments.

  It is straightforward to adapt the argument for any number of components
  looping the same number of times.
\end{proof} \begin{lemma}
\label{proof:wp-rl-assign}
  \Cref{rule:wp-rl-assign} is sound.
\end{lemma}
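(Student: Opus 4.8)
The plan is to derive \ref{rule:wp-rl-assign} entirely inside the logic, reducing it to three already-established rules: the unary assignment axiom \ref{rule:wp-assign}, the frame rule \ref{rule:wp-frame}, and the merging rule \ref{rule:rl-sure-merge}. The key observation is that all of the probabilistic and relational content is already packaged inside \ref{rule:rl-sure-merge}, which says $\cpl{R} * \sure{\ip{x}{i} = \expr\at{i}} \proves \cpl{R \land \p{x}\at{i} = \expr\at{i}}$ whenever $\pvar(\expr\at{i}) \subs X$. Consequently the weakest-precondition argument is purely structural: run the assignment to manufacture the almost-sure equality $\sure{\p{x}\at{i} = \expr\at{i}}$, frame the coupling $\cpl{R}$ across it, and then collapse the two into a single relational lifting.

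Concretely, I would first rephrase the precondition $\cpl{R}\withp{\m{\permap}}$ as a separating conjunction that exposes exactly the permissions \ref{rule:wp-assign} requires. Following the pattern of \cref{ex:perm-triples}, I split $\m{\permap}$ so that one conjunct is a permission-only resource carrying full permission on $\ip{x}{i}$ together with positive permissions on each $\ip{y}{i}$ with $\p{y} \in \pvar(\expr)$, while the other conjunct retains $\cpl{R}$ with its residual (positive) permissions on the variables of $X$. Applying \ref{rule:wp-assign} to the permission-only conjunct yields $\WP {\m[i: \code{x:=}\expr]}{\sure{\p{x}\at{i} = \expr\at{i}}\withp{\m{\permap}}}$, and \ref{rule:wp-frame} then carries $\cpl{R}$ through the assignment, producing a postcondition of the form $\cpl{R} * \sure{\p{x}\at{i} = \expr\at{i}}$ (modulo permission annotations). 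A final application of \ref{rule:wp-cons} together with \ref{rule:rl-sure-merge} rewrites this into $\cpl{R \land \p{x}\at{i} = \expr\at{i}}\withp{\m{\permap}}$, which is the desired postcondition. The two side conditions play distinct roles: $\pvar(\expr\at{i}) \subs X$ is exactly the premise of \ref{rule:rl-sure-merge}, while $\ip{x}{i} \notin \pvar(\expr\at{i})$ is the premise \ref{rule:wp-assign} needs so that $\expr$ keeps its meaning across the assignment.

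The main obstacle is the permission bookkeeping in that first split, rather than any new semantic reasoning. The framed copy of $\cpl{R}$ survives the mutation of $\p{x}$ only if it does not itself hold write permission on $\ip{x}{i}$, because \ref{rule:wp-assign} must be handed the full permission there; this is immediate when $\ip{x}{i} \notin X$ (the frame owns no information about $\p{x}$), and when $\ip{x}{i} \in X$ one must check that the conditioning inside $\cpl{R}$ can relinquish $\p{x}$ across the split. The permission algebra of $\PSpPmRA$, and in particular the compatibility relation $\compat$, is precisely what enforces this discipline. I therefore expect the delicate part to be verifying that the chosen split is valid (i.e. that the residual resource carrying $\cpl{R}$ remains compatible with its permissions) and that the $\withp{\m{\permap}}$ annotation is re-threaded correctly through \ref{rule:wp-frame}, so that the permissions in the final postcondition coincide with those asserted in the conclusion of the rule.
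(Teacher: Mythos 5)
Your proposal is correct and follows essentially the same route as the paper's own derivation: split the permission map so that a permission-only conjunct carrying full permission on $\ip{x}{i}$ (and positive permissions on $\pvar(\expr)$) feeds \ref{rule:wp-assign}, frame $\cpl{R}$ with its residual permissions via \ref{rule:wp-frame}, and close with \ref{rule:wp-cons} and \ref{rule:rl-sure-merge}. The paper makes the split concrete as $\m{\permap}_R = (\m{\permap}\setminus\ip{x}{i})/2$ and $\m{\permap}_{\p{x}} = \m{\permap}-\m{\permap}_R$, justifying it by permission-scaling-invariance of $\cpl{R}$ and by $\p{x}\notin\pvar(R)$ — which resolves, in the way you anticipated, the only delicate case you flagged (the framed $\cpl{R}$ holding permission on $\ip{x}{i}$).
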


\begin{proof}
  Define
    $\m{\permap}_R \is (\m{\permap} \setminus \ip{x}{i})/2$ and
    $ \m{\permap}_{\p{x}} \is \m{\permap} - \m{\permap}_R $;
  note that by $\m{\permap}(\ip{x}{i})=1$
  we have $\m{\permap}_{\p{x}}(\ip{x}{i})=1$.
  We first show that the following hold:
  \begin{align}
    \cpl{R}\withp{\m{\permap}}
    &\proves
    \cpl{R}\withp{\m{\permap}_R} * (\m{\permap}_{\p{x}})
    \\
    \cpl{R}\withp{\m{\permap}_R}
      * \sure{\p{x}\at{i} = \expr\at{i}}\withp{\m{\permap}_{\p{x}}}
    &\proves
    \cpl{R \land \p{x}\at{i} = \expr\at{i}}\withp{\m{\permap}}
  \end{align}
  The first entailment holds because $\cpl{R}$ is permission-scaling-invariant
  (see \cref{sec:appendix:permissions})
  and by the assumption that $\p{x}\notin \pvar(R)$.
  The second entailment holds by \ref{rule:rl-sure-merge}.

  We can then derive:
  \[
  \begin{derivation}
    \infer*[Right=\ref{rule:wp-cons}]{
    \infer*[Right=\ref{rule:wp-frame}]{
    \infer*[Right=\ref{rule:wp-assign}]{ }{
      (\m{\permap}_{\p{x}})
      \proves
      \WP {\m[i: \code{x:=}\expr]}[\big] {
          \sure{\p{x}\at{i} = \expr\at{i}}\withp{\m{\permap}_{\p{x}}}
      }
    }}{
      \cpl{R}\withp{\m{\permap}_R}
      * (\m{\permap}_{\p{x}})
      \proves
      \WP {\m[i: \code{x:=}\expr]}[\big] {
        \cpl{R}\withp{\m{\permap}_R}
          * \sure{\p{x}\at{i} = \expr\at{i}}\withp{\m{\permap}_{\p{x}}}
      }
    }}{
      \cpl{R}\withp{\m{\permap}}
      \proves
      \WP {\m[i: \code{x:=}\expr]}[\big] {
        \cpl{R \land \p{x}\at{i} = \expr\at{i}}\withp{\m{\permap}}
      }
    }
  \end{derivation}
  \qedhere
  \]
\end{proof} \begin{lemma}
\label{proof:wp-if-unary}
  \Cref{rule:wp-if-unary} is sound.
\end{lemma}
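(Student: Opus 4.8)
The plan is to derive \ref{rule:wp-if-unary} from the primitive conditional rules \ref{rule:wp-if-prim} and \ref{rule:wp-bind} together with the \supercond\ machinery, reading $\gproves$ as $(\cdot)\land\ownall\proves(\cdot)\land\ownall$. The whole derivation operates on a single resource $a$ satisfying the hypothesis $P * \distAs{\ip{e}{\I1}}{\beta} \land \ownall$; since every step is an ordinary entailment on this same $a$, the conjunct $\ownall$ rides along automatically and is available wherever we need it. I will assume, as the premises implicitly require, that the guard is boolean-valued so that $\psupp(\beta) \subs \set{0,1}$.

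The core derivation goes as follows. First, \ref{rule:c-unit-r} rewrites $\distAs{\ip{e}{\I1}}{\beta} \lequiv \CC{\beta} b. \sure{\ip{e}{\I1}=b}$, and \ref{rule:c-frame} pulls $P$ inside the modality, yielding $\CC{\beta} b.(P * \sure{\ip{e}{\I1}=b})$. Next I would distribute the ambient $\ownall$ into the fibres, replacing the fibre body by $P * \sure{\ip{e}{\I1}=b} \land \ownall$. With $\ownall$ now present on each fibre, \ref{rule:c-cons} lets me replace each fibre by $\WP{\m[\I1: \Cond{\p{e}}{t_1}{t_2}]}{Q(b)}$ (the per-fibre entailment is discussed below), giving $\CC{\beta} b. \WP{\m[\I1: \Cond{\p{e}}{t_1}{t_2}]}{Q(b)}$. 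Finally, since $a$ still satisfies $\ownall$, \ref{rule:c-wp-swap} commutes the modality past the weakest precondition to produce $\WP{\m[\I1: \Cond{\p{e}}{t_1}{t_2}]}{\CC{\beta} b. Q(b)}$, which (again with the ambient $\ownall$) is exactly the conclusion.

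It remains to justify the per-fibre entailment $(P * \sure{\ip{e}{\I1}=b} \land \ownall) \proves \WP{\m[\I1: \Cond{\p{e}}{t_1}{t_2}]}{Q(b)}$ for each $b\in\set{0,1}$. Here I would first duplicate the almost-sure fact with \ref{rule:sure-merge} ($\sure{\ip{e}{\I1}=b} \lequiv \sure{\ip{e}{\I1}=b} * \sure{\ip{e}{\I1}=b}$), frame off one copy, and apply the corresponding premise to $P * \sure{\ip{e}{\I1}=b}$ (now legitimate, since the fibre carries $\ownall$), obtaining $\WP{\m[\I1: t_1]}{Q(1)}$ when $b=1$ and $\WP{\m[\I1: t_2]}{Q(0)}$ when $b=0$. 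Viewing the conditional as $\Ectxt[\p{e}]$ for the evaluation context $\Ectxt$ that plugs its hole into the guard, \ref{rule:wp-if-prim} turns this into $\WP{\m[\I1: \Cond{b}{t_1}{t_2}]}{Q(b)}$ (the dead branch is discarded by the deterministic guard $b$), and \ref{rule:wp-bind} re-introduces the expression \p{e} in the guard using the framed copy $\sure{\ip{e}{\I1}=b}$, yielding $\WP{\m[\I1: \Cond{\p{e}}{t_1}{t_2}]}{Q(b)}$.

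The two steps I expect to be the real obstacles are both about the interaction of $\ownall$ with the rest. The distribution of $\ownall$ into the \supercond\ fibres, i.e.\ $(\CC{\beta} b.K(b)) \land \ownall \proves \CC{\beta} b.(K(b) \land \ownall)$, is not purely syntactic: a witness for the modality may use a coarser \salgebra\ than the full one demanded by $\ownall$, so it must be established semantically by invoking \cref{lemma:bind-extend} to extend the witnessing kernels to $a$'s full \salgebra\ (the extended fibres then satisfy $\ownall$, and $K(b)$ survives by upward closure); this is precisely the manoeuvre already used inside the soundness proof of \ref{rule:c-wp-swap}. The second delicate point is framing the almost-sure fact past the $\gproves$ premise: because $\ownall$ cannot be split freely across $*$, one cannot frame an arbitrary resource, but $\sure{\ip{e}{\I1}=b}$ can be carved out as a coarse, zero-permission resource (exactly as in the proof of \ref{rule:sure-and-star}), leaving $\ownall$ intact on the remainder so that the premise still applies. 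Once these two bookkeeping facts are in place, the rest is routine.
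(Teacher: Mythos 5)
Your proposal follows essentially the same route as the paper's own derivation: convert $\distAs{\ip{e}{\I1}}{\beta}$ with \ref{rule:c-unit-r}, pull $P$ inside via \ref{rule:c-frame}, discharge each fibre by duplicating $\sure{\ip{e}{\I1}=b}$, applying the matching premise, \ref{rule:wp-if-prim}, and \ref{rule:wp-bind}, then lift with \ref{rule:c-cons} and commute with \ref{rule:c-wp-swap}. The only difference is that you make explicit the two bookkeeping steps (distributing $\ownall$ into the fibres and carving off the almost-sure copy as a zero-permission resource) that the paper's derivation tree leaves implicit under its use of $\gproves$; that is a correct and welcome refinement, not a deviation.
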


\begin{proof}
  From the premises, we derive:
  \[
  \begin{derivation}
  \infer*[right=\ref{rule:c-wp-swap}]{
  \infer*[Right={\ref{rule:c-unit-r},\ref{rule:c-frame}}]{
  \infer*[Right=\ref{rule:c-cons}]{
  \infer*[Right=\ref{rule:wp-bind}]{
  \infer*[Right=\ref{rule:wp-if-prim}]{
  \infer*{
    P * \sure{\Ip{e}{1}=1} \gproves \WP {\m[\I1:t_1]} {Q(1)}
    \\
    P * \sure{\Ip{e}{1}=0} \gproves \WP {\m[\I1:t_2]} {Q(0)}
  }{\forall b\in\set{0,1}\st
    P * \sure{\Ip{e}{1}=1}
    \gproves
    \ITE{b}{\WP{\m[\I1:t_1]}{Q(1)}}{\WP{\m[\I1:t_2]}{Q(0)}}
  }}{\forall b\in\set{0,1}\st
    P * \sure{\Ip{e}{1}=b}
    \gproves
    \WP {\m[\I1:
        (\code{if $\;b\;$ then $\;t_1\;$ else $\;t_2$})
      ]}{
      Q(b\beq 1)
    }
  }}{\forall b\in\set{0,1}\st
    P * \sure{\Ip{e}{1}=b}
    \gproves
    \WP {\m[\I1:
      (\code{if e then $\;t_1\;$ else $\;t_2$})
    ]}{
      Q(b\beq 1)
    }
  }}{\CC{\beta} b.(P * \sure{\Ip{e}{1}=b})
    \gproves
    \CC{\beta} b.
    \WP {\m[\I1:
      (\code{if e then $\;t_1\;$ else $\;t_2$})
    ]}{
      Q(b\beq 1)
    }
  }}{P * \distAs{\Ip{e}{1}}{\beta}
    \gproves
    \CC{\beta} b.
    \WP {\m[\I1:
      (\code{if e then $\;t_1\;$ else $\;t_2$})
    ]}{
      Q(b\beq 1)
    }
  }}{P * \distAs{\Ip{e}{1}}{{\beta}}
    \gproves
    \WP {\m[\I1:
        (\code{if e then $\;t_1\;$ else $\;t_2$})
      ]}{
      \CC{\beta} b.Q(b\beq 1)
    }
  }
  \end{derivation}
  \qedhere
  \]
\end{proof}  \section{Case Studies}
\label{sec:appendix:examples}

\subsection{pRHL-style Reasoning}
\label{sec:appendix:ex:prhl}

Here we elaborate on the conditional swap example that appeared in~\cref{sec:ex:prhl-style}.
By~\cref{rule:wp-samp}, for each index $i \in \{1, 2\}$, we have
\begin{align*}
  \gproves \WP{\m[i: \code{x:~}d_0]}{\distAs{\ip{x}{i}}{d_0}}
\end{align*}
By~\cref{rule:wp-conj}, we can combine the two programs together and derive
\begin{align*}
  \gproves \WP{\m[1: \code{x:~}d_0, 2: \code{x:~}d_0]}{\distAs{\Ip{x}{1}}{d_0} \land \distAs{\Ip{x}{2}}{d_0}}
\end{align*}
By~\cref{rule:c-unit-r},
\[\distAs{\Ip{x}{1}}{d_0} \land \distAs{\Ip{x}{2}}{d_0} \proves
  \CC{d_0} v. \sure{\Ip{x}{1} = v} \land  \CC{d_0} v. \sure{\Ip{x}{2} = v}
\]
Then, we can apply~\cref{rule:c-and}, which implies
\[
  \CC{d_0} v. \sure{\Ip{x}{1} = v} \land  \CMod{d_0} v. \sure{\Ip{x}{2} = v}
  \proves
  {\CC{d_0} v. (\sure{\Ip{x}{1} = v} \land  \sure{\Ip{x}{2} = v})}
\]
from which we can derive:
\begin{align*}
  \gproves \WP{\m[\I1: \code{x:~}d_0, \I2: \code{x:~}d_0]}{\CC{d_0} v.(\sure{\Ip{x}{1}=v} \land \sure{\Ip{x}{2}=v})}.
\end{align*}

For the rest of \p{prog1} (\p{prog2}):
similarly, by~\cref{rule:wp-samp}, for each index $i \in \{1, 2\}$, we have
\begin{align}
  &\gproves \WP{\m[i: \code{y:~}d_1(v)]}{\distAs{\Ip{y}{i}}{d_1(v)}}
  \label{helper:case:cond-swap:1}\\
  &\gproves \WP{\m[i: \code{z:~}d_2(v)]}{\distAs{\Ip{z}{i}}{d_2(v)}} .
  \label{helper:case:cond-swap:2}
\end{align}
By~\cref{rule:wp-frame},
\begin{align}
  &\distAs{\Ip{z}{i}}{d_2(v)} \gproves \WP{\m[i: \code{y:~}d_1(v)]}{\distAs{\Ip{z}{i}}{d_2(v)} * \distAs{\Ip{y}{i}}{d_1(v)}}
  \label{helper:case:cond-swap:3}\\
  &\distAs{\Ip{y}{i}}{d_1(v)} \gproves \WP{\m[i: \code{z:~}d_2(v)]}{\distAs{\Ip{y}{i}}{d_1(v)} * \distAs{\Ip{z}{i}}{d_2(v)}} .
  \label{helper:case:cond-swap:4}
\end{align}
Thus, applying~\cref{rule:wp-seq} to combine~\cref{helper:case:cond-swap:1}
and~\cref{helper:case:cond-swap:4}, we get
\begin{align}
  \gproves \WP{\m[\I1:  \code{y:~}d_1(v); \code{z:~}d_2(v)]}{\distAs{\Ip{y}{1}}{d_1(v)} * \distAs{\Ip{z}{1}}{d_2(v)}};
  \label{helper:case:cond-swap:5}
\end{align}
By applying~\cref{rule:wp-seq} to combine~\cref{helper:case:cond-swap:2}
and~\cref{helper:case:cond-swap:3}, we get
\begin{align}
  \gproves \WP{\m[\I2: \code{z:~}d_2(v); \code{y:~}d_1(v)]}{\distAs{\Ip{y}{2}}{d_1(v)} * \distAs{\Ip{z}{2}}{d_2(v)}} .
  \label{helper:case:cond-swap:6}
\end{align}
Then, by~\cref{rule:wp-bind}, we can derive
\begin{align*}
\sure{\Ip{x}{1}=v}
  &\gproves \WPv{\m[\I1:  \code{y:~}d_1(x); \code{z:~}d_2(x)]}{
  \distAs{\Ip{y}{1}}{d_1(v)} * \distAs{\Ip{z}{1}}{d_2(v)}} \\
\sure{\Ip{x}{2}=v}
  &\gproves \WP{\m[\I2: \code{z:~}d_2(v); \code{y:~}d_1(v)]}{
  \distAs{\Ip{y}{2}}{d_1(v)} * \distAs{\Ip{z}{2}}{d_2(v)}}
\end{align*}
Then, applying~\cref{rule:wp-conj} to combine the program at index 1 and 2,
we get
\begin{align*}
  \sure{\Ip{x}{1}=v} \land \sure{\Ip{x}{2}=v}
  &\gproves \WPv{\m[\I1: \code{y:~}d_1(x); \code{z:~}d_2(x);
                    \I2: \code{z:~}d_2(v); \code{y:~}d_1(v)]}{
  \distAs{\Ip{y}{1}}{d_1(v)} * \distAs{\Ip{z}{1}}{d_2(v)} *
  \distAs{\Ip{y}{2}}{d_1(v)} * \distAs{\Ip{z}{2}}{d_2(v)}}
\end{align*}
Also, we have
\begin{align*}
  \sure{\Ip{x}{1}=v} \land \sure{\Ip{x}{2}=v}
  &\gproves \WPv{\m[\I1: \code{y:~}d_1(x); \code{z:~}d_2(x);
                    \I2: \code{z:~}d_2(v); \code{y:~}d_1(v)]}{
   \sure{\Ip{x}{1}=v} \land \sure{\Ip{x}{2}=v}
 }
\end{align*}
by~\cref{rule:wp-frame},
where $\sure{\Ip{x}{1}=v} \land \sure{\Ip{x}{2}=v} \proves \sure{\Ip{x}{1}=\Ip{x}{2}}$. Therefore,
we have
\begin{align*}
  \sure{\Ip{x}{1}=v} \land \sure{\Ip{x}{2}=v}
  &\gproves \WPv{\m[\I1: \code{y:~}d_1(x); \code{z:~}d_2(x);
                    \I2: \code{z:~}d_2(v); \code{y:~}d_1(v)]}{
   \sure{\Ip{x}{1} = \Ip{x}{2}}
 }
\end{align*}
By~\cref{rule:wp-conj},
\begin{align*}
  &\sure{\Ip{x}{1}=v} \land \sure{\Ip{x}{2}=v}\\
  {}\gproves {}&\WPv{\m[\I1: \code{y:~}d_1(x); \code{z:~}d_2(x);
                        \I2: \code{z:~}d_2(v); \code{y:~}d_1(v)]}{
   \sure{\Ip{x}{1} = \Ip{x}{2}}
   \land \left(\distAs{\Ip{y}{1}}{d_1(v)} * \distAs{\Ip{z}{1}}{d_2(v)}
   * \distAs{\Ip{y}{2}}{d_1(v)} * \distAs{\Ip{z}{2}}{d_2(v)}\right)
 }
\end{align*}
By~\cref{rule:sure-and-star}, we get
\begin{align*}
  &\sure{\Ip{x}{1}=v} \land \sure{\Ip{x}{2}=v} \\
  {} \gproves {} &\WPv{\m[\I1: \code{y:~}d_1(x); \code{z:~}d_2(x);
                          \I2: \code{z:~}d_2(v); \code{y:~}d_1(v)]}{
   \sure{\Ip{x}{1} = \Ip{x}{2}}
   * \distAs{\Ip{y}{1}}{d_1(v)} * \distAs{\Ip{z}{1}}{d_2(v)}
   * \distAs{\Ip{y}{2}}{d_1(v)} * \distAs{\Ip{z}{2}}{d_2(v)}
 }
\end{align*}

Now, we can proceed with the derivation explained in~\cref{sec:ex:prhl-style}.
\begin{derivation}
\infer*[right=\ref{rule:rl-convex}]{
\infer*[Right=\ref{rule:c-wp-swap}]{
\infer*[Right=\ref{rule:c-cons}]{
\infer*[Right=\ref{rule:rl-merge}]{
\infer*[Right=\ref{rule:coupling}]{
  \forall v\st
  \sure{\Ip{x}{1}=v} \land \sure{\Ip{x}{2}=v}
  \gproves
  \WP {\m[\I1: t_1, \I2: t_2]}*{
  \begin{matrix*}[l]
    \cpl{\Ip{x}{1} = \Ip{x}{2}} *
    \distAs{\Ip{y}{1}}{d_1(v)} *
    \distAs{\Ip{y}{2}}{d_1(v)} *
    {}\\
    \distAs{\Ip{z}{1}}{d_2(v)} *
    \distAs{\Ip{z}{2}}{d_2(v)}
  \end{matrix*}
  }
}{
  \forall v\st
  \sure{\Ip{x}{1}=v} \land \sure{\Ip{x}{2}=v}
  \gproves
  \WP {\m[\I1: t_1, \I2: t_2]} {
    \cpl{\Ip{x}{1} = \Ip{x}{2}} *
    \cpl{\Ip{y}{1} = \Ip{y}{2}} *
    \cpl{\Ip{z}{1} = \Ip{z}{2}}
  }
}}{
  \forall v\st
  \sure{\Ip{x}{1}=v} \land \sure{\Ip{x}{2}=v}
  \gproves
  \WP {\m[\I1: t_1, \I2: t_2]} {
    \cpl{\Ip{x}{1} = \Ip{x}{2} \land
         \Ip{y}{1} = \Ip{y}{2} \land
         \Ip{z}{1} = \Ip{z}{2}}
  }
}}{
  \CC{d_0} v.(\sure{\Ip{x}{1}=v} \land \sure{\Ip{x}{2}=v})
  \gproves
  \CC{d_0} v.
  \WP {\m[\I1: t_1, \I2: t_2]} {
    \cpl{\Ip{x}{1} = \Ip{x}{2} \land
         \Ip{y}{1} = \Ip{y}{2} \land
         \Ip{z}{1} = \Ip{z}{2}}
  }
}}{
  \CC{d_0} v.(\sure{\Ip{x}{1}=v} \land \sure{\Ip{x}{2}=v})
  \gproves
  \WP {\m[\I1: t_1, \I2: t_2]} {
  \CC{d_0} v.
    \cpl{\Ip{x}{1} = \Ip{x}{2} \land
         \Ip{y}{1} = \Ip{y}{2} \land
         \Ip{z}{1} = \Ip{z}{2}}
  }
}}{
  \CC{d_0} v.(\sure{\Ip{x}{1}=v} \land \sure{\Ip{x}{2}=v})
  \gproves
  \WP {\m[\I1: t_1, \I2: t_2]} {
    \cpl{\Ip{x}{1} = \Ip{x}{2} \land
         \Ip{y}{1} = \Ip{y}{2} \land
         \Ip{z}{1} = \Ip{z}{2}}
  }
}
\end{derivation}

Last, with~\cref{rule:wp-seq}, we have
\begin{align*}
    \gproves
    \WP {\m[\I1: \code{prog1}, \I2: \code{prog2}]} {
    \cpl{\Ip{x}{1} = \Ip{x}{2} \land
         \Ip{y}{1} = \Ip{y}{2} \land
         \Ip{z}{1} = \Ip{z}{2}}
  }
\end{align*} 

\subsection{One-time Pad (Relational)}
\label{sec:appendix:examples:onetimerel}
\label{sec:appendix:ex:otp-rel}

  To wrap up the proof of \Cref{sec:overview}
we first observe that the assertion~$P$
of~(\ref{ex:xor:start}) can be easily obtained by
using the WP rules for assignments and sequencing, proving:
\[
    \True\withp{\m{\permap}}
    \proves
    \WP {\m<
      \I1: \code{encrypt()},
      \I2: \code{c:~Ber(1/2)}
    >}*{
      \begin{pmatrix}
        \distAs{\Ip{k}{1}}{\Ber{\onehalf}}
          *
        \distAs{\Ip{m}{1}}{\Ber{p}}
          *
        \distAs{\Ip{c}{2}}{\Ber{\onehalf}}
          * {}\\
        \sure{\Ip{c}{1} = \Ip{k}{1} \xor \Ip{m}{1}}
      \end{pmatrix}
      \withp{\m{\permap}}
    }
\]
where $\m{\permap} = \m[\Ip{k}{1}:1,\Ip{m}{1}:1,\Ip{c}{1}:1,\Ip{c}{2}:1]$
(\ie we have full permissions on the variables we modify).

We can prove the entailment:
\[
  \CC{\Ber{p}} v.
  \left(
    \sure{\Ip{m}{1}=v} *
    \begin{pmatrix}
    \distAs{\Ip{k}{1}}{\Ber{\onehalf}}
    \\ {}*
    \distAs{\Ip{c}{2}}{\Ber{\onehalf}}
    \end{pmatrix}
  \right)
  \proves
  \CC{\Ber{p}} v.
    \left(
      \sure{\Ip{m}{1}=v}
      *
      \begin{cases}
        \cpl{ \Ip{k}{1} = \Ip{c}{2} }     \CASE v=0 \\
        \cpl{ \Ip{k}{1} = \neg\Ip{c}{2} } \CASE v=1
      \end{cases}
    \right)
\]
by using \ref{rule:c-cons}, which asks us to prove that the
two assertions inside the conditioning are in the entailment relation
for each value of~$v$.
This leads to these two cases:
\begin{align*}
  \sure{\Ip{m}{1}=0}
  * \distAs{\Ip{k}{1}}{\Ber{\onehalf}}
  * \distAs{\Ip{c}{2}}{\Ber{\onehalf}}
  & \proves
  \sure{\Ip{m}{1}=0} * \cpl{ \Ip{k}{1} = \Ip{c}{2} }
\\
  \sure{\Ip{m}{1}=1}
  * \distAs{\Ip{k}{1}}{\Ber{\onehalf}}
  * \distAs{\Ip{c}{2}}{\Ber{\onehalf}}
  & \proves
  \sure{\Ip{m}{1}=1} * \cpl{ \Ip{k}{1} = \neg\Ip{c}{2} }
\end{align*}
which are straightforward consequences of the two couplings we proved
in~\eqref{ex:xor-two-cpl}.

Finally, the assignment to \p{c} in \p{encrypt} generated the fact
$\sure{\Ip{c}{1} = \Ip{k}{1} \xor \Ip{m}{1}}$.
By routine propagation of this fact
(using \ref{rule:c-frame} and \ref{rule:sure-merge})
we can establish:
\begin{eqexplain}
  &
  \CC{\Ber{p}} v.
    \left(
      \sure{\Ip{m}{1}=v}
      *
      \begin{cases}
        \cpl{ \Ip{k}{1} = \Ip{c}{2} }     \CASE v=0 \\
        \cpl{ \Ip{k}{1} = \neg\Ip{c}{2} } \CASE v=1
      \end{cases}
    \right)
    *
    \sure{\Ip{c}{1} = \Ip{k}{1} \xor \Ip{m}{1}}
\whichproves
  \CC{\Ber{p}} v.
    \left(
      \sure{\Ip{m}{1}=v}
      *
      \begin{cases}
        \cpl{ \Ip{k}{1} = \Ip{c}{2} } * \sure{ \Ip{c}{1} = \Ip{k}{1} \xor 0 }
          \CASE v=0 \\
        \cpl{ \Ip{k}{1} = \neg\Ip{c}{2} } * \sure{ \Ip{c}{1} = \Ip{k}{1} \xor 1 }
          \CASE v=1
      \end{cases}
    \right)
\whichproves
  \CC{\Ber{p}} v.
    \left(
      \sure{\Ip{m}{1}=v}
      *
      \begin{cases}
        \cpl{ \Ip{c}{1} = \Ip{c}{2} } \CASE v=0 \\
        \cpl{ \Ip{c}{1} = \Ip{c}{2} } \CASE v=1
      \end{cases}
    \right)
\whichproves
  \CC{\Ber{p}} v.
        \cpl{ \Ip{c}{1} = \Ip{c}{2} }
\whichproves
  \cpl{ \Ip{c}{1} = \Ip{c}{2} }
  \byrule{rl-merge}
\end{eqexplain}

In particular, the entailments
\begin{align*}
\cpl{ \Ip{k}{1} = \Ip{c}{2} } * \sure{ \Ip{c}{1} = \Ip{k}{1} \xor 0 }
&\proves
\cpl{ \Ip{c}{1} = \Ip{c}{2} }
\\
\cpl{ \Ip{k}{1} = \neg\Ip{c}{2} } * \sure{ \Ip{c}{1} = \Ip{k}{1} \xor 1 }
&\proves
\cpl{ \Ip{c}{1} = \Ip{c}{2} }
\end{align*}
can be proved by applying \ref{rule:rl-sure-merge} and \ref{rule:rl-cons}.

\subsection{One-time Pad (Unary)}
\label{sec:appendix:ex:otp-unary}

  Similarly to the relational version, we can, using \ref{rule:wp-seq},\ref{rule:wp-assign} and \ref{rule:wp-samp}, easily show that:
\begin{equation*}
  \True\withp{\m{\permap}}
  \proves
  \WP {
   \m[\I1: \code{encrypt}()]
  }{
     \distAs{\Ip{k}{1}}{\Ber{\onehalf}} *
     \distAs{\Ip{m}{1}}{\Ber{p}} *
     \sure{\Ip{c}{1} = \Ip{k}{1} \xor \Ip{m}{1}}
  }.
\end{equation*}

We then show the crucial derivation of \cref{sec:ex:one-time-pad} in more detail.
\begin{eqexplain}
  &
  \distAs{\Ip{k}{1}}{\Ber{\onehalf}} *
  \distAs{\Ip{m}{1}}{\Ber{p}} *
  \sure{\Ip{c}{1} = \Ip{k}{1} \xor \Ip{m}{1}}
\whichproves
  \CC{\Ber{p}} m.
  \bigl(
    \sure{\Ip{m}{1}=m} *
    \distAs{\Ip{k}{1}}{\Ber{\onehalf}} *
    \sure{\Ip{c}{1} = \Ip{k}{1} \xor \Ip{m}{1}}
  \bigr)
  \byrules{c-unit-r,c-frame}
\whichproves
  \CC{\Ber{p}} m.
  \bigl(
    \sure{\Ip{m}{1}=m} *
    \distAs{\Ip{k}{1}}{\Ber{\onehalf}} *
    \sure{\Ip{c}{1} = \Ip{k}{1} \xor m}
  \bigr)
  \byrules{sure-merge}
\whichproves
  \CC{\Ber{p}} m.
  \Bigl(
    \sure{\Ip{m}{1}=m} *
    \CC{\Ber{\onehalf}} k.
    \bigl(
      \sure{\Ip{k}{1}=k} *
      \sure{\Ip{c}{1} = \Ip{k}{1} \xor m}
    \bigr)
  \Bigr)
  \byrules{c-unit-r,c-frame}
\whichproves
  \CC{\Ber{p}} m.
    \bigl(
      \sure{\Ip{m}{1}=m} *
      \CC{\Ber{\onehalf}} k.
        \sure{\Ip{k}{1}=k \land \Ip{c}{1} = k \xor m}
    \bigr)
  \byrules{sure-merge}
\whichproves
  \CC{\Ber{p}} m.
    \left(
    \sure{\Ip{m}{1}=m} *
    \begin{cases}
      \CC{\Ber{\onehalf}} k. \sure{\Ip{c}{1}=k} \CASE m=0
      \\
      \CC{\Ber{\onehalf}} k. \sure{\Ip{c}{1}=\neg k} \CASE m=1
    \end{cases}
    \right)
  \byrule{c-cons}
\whichproves
  \CC{\Ber{p}} m.
    \left(
    \sure{\Ip{m}{1}=m} *
    \begin{cases}
      \CC{\Ber{\onehalf}} k. \sure{\Ip{c}{1}=k} \CASE m=0
      \\
      \CC{\Ber{\onehalf}} k. \sure{\Ip{c}{1}=k} \CASE m=1
    \end{cases}
    \right)
  \byrule{c-transf}
\whichproves
  \CC{\Ber{p}} m.
    \bigl(
      \sure{\Ip{m}{1}=m} *
      \CC{\Ber{\onehalf}} k. \sure{\Ip{c}{1}=k}
    \bigr)
\whichproves
  \CC{\Ber{p}} m.
  \CC{\Ber{\onehalf}} k.
    \bigl(
      \sure{\Ip{m}{1}=m} *
      \sure{\Ip{c}{1}=k}
    \bigr)
  \byrules{c-frame}
\whichproves
  \CC{\Ber{p}} m.
  \CC{\Ber{\onehalf}} k.
    \sure{\Ip{m}{1}=m \land \Ip{c}{1}=k}
  \byrules{sure-merge}
\whichproves
  \CC{\Ber{p} \pprod \Ber{\onehalf}} (m,k).
    \sure{(\Ip{m}{1},\Ip{c}{1})=(m,k)}
  \byrules{c-assoc}
\whichproves
  \distAs{(\Ip{m}{1},\Ip{c}{1})}{(\Ber{p} \pprod \Ber{\onehalf})}
  \byrule{c-unit-r}
\whichproves
  \distAs{\Ip{m}{1}}{\Ber{p}} *
  \distAs{\Ip{c}{1}}{\Ber{\onehalf}}
  \byrule{prod-split}
\end{eqexplain}

The application of \ref{rule:c-transf} to the case with $m=1$ is as follows:
\[
\infer{
  \forall b \in \set{0,1}.
    \Ber{\onehalf}(b)=\Ber{\onehalf}(\neg b)
}{
  \CC{\Ber{\onehalf}} k. \sure{\Ip{c}{1}=\neg k}
  \proves
  \CC{\Ber{\onehalf}} k. \sure{\Ip{c}{1}=k}
}
\] 
\begin{figure*}
  \adjustfigure[\small]\setlength\tabcolsep{0pt}\begin{tabular*}{\textwidth}{
    @{\extracolsep{\fill}}
    *{4}{p{\textwidth/4}}@{}
  }
\begin{sourcecode*}
def BelowMax($x$,$S$):
  repeat $N$:
    q:~$\prob_S$
    r':=r
    r := r' || q >= $x$
\end{sourcecode*}
&
\begin{sourcecode*}
def AboveMin($x$,$S$):
  repeat $N$:
    p:~$\prob_S$
    l':=l
    l := l' || p <= $x$
\end{sourcecode*}
&
\begin{sourcecode*}
def BETW_SEQ($x$, $S$):
  BelowMax($x$,$S$);
  AboveMin($x$,$S$);
  d := r && l
\end{sourcecode*}
\\
\begin{sourcecode*}
def BETW($x$,$S$):
  repeat $2 N$:
    s:~$\prob_S$
    l':=l
    l := l' || s <= $x$
    r':=r
    r := r' || s >= $x$
  d := r && l
\end{sourcecode*}
&
\begin{sourcecode*}
def BETW_MIX($x$, $S$):
  repeat $N$:
    p:~$\prob_S$
    l':=l
    l := l' || p <= $x$
    q:~$\prob_S$
    r':=r
    r := r' || q >= $x$
  d := r && l
\end{sourcecode*}
&
\begin{sourcecode*}
def BETW_N($x$,$S$):
  repeat $N$:
    s:~$\prob_S$
    l':=l
    l := l' || s <= $x$
    r':=r
    r := r' || s >= $x$
  d := r && l
\end{sourcecode*}
\end{tabular*}   \caption{Stochastic dominance examples: composing Monte Carlo algorithms in different ways. All variables are initially 0.}
  \label{fig:between-code-repeat}
\end{figure*}

\subsection{Markov Blanket and Variable Elimination}
\label{sec:appendix:ex:markov-blanket}

  In probabilistic reasoning, introducing conditioning is easy,
but deducing unconditional facts from conditional ones is not immediate.
The same applies to the \supercond\ modality: by design, one cannot eliminate it for free.
Crucial to \thelogic's expressiveness is the inclusion of rules that can
soundly derive unconditional information from conditional assertions.

In this example we show how \thelogic\ is able to derive a common
tool used in Bayesian reasoning to simplify conditioning as much as possible
through the concept of a \emph{Markov Blanket}.

For concreteness, consider the program:
\begin{center}
\code{x1:~$\dist_1$;
x2:~$\dist_2(\p{x1})$;
x3:~$\dist_3(\p{x2})$}
\end{center}
The program describes a Markov chain of three variables.
One way of interpreting this pattern is that the joint output distribution
is described by the program as a product of conditional distributions:
the distribution over \p{x2} is described conditionally on \p{x1},
and the one of \p{x3} conditionally on \p{x2}.
This kind of dependencies are ubiquitous in, for instance, hidden Markov models and Bayesian network representations of distributions.

A crucial tool for the analysis of such models is the concept of a
\emph{Markov Blanket} of a variable \p{x}: the set of variables that are direct dependencies of \p{x}.
Clearly \p{x3} depends on \p{x2} and, indirectly, on \p{x1}.
However, Markov chains enjoy the memorylessness property:
when fixing a variable in the chain, the variables that follow it are independent from the variables that preceded it.
For our example this means that if we condition on \p{x2},
\p{x1} and \p{x3} are independent (\ie we can ignore the indirect dependencies).

In \thelogic\ we can characterize the output distribution with the assertion
\[
  \CC{\dist_1} v_1. \Bigl(
    \sure{\p{x1}=v_1} *
    \CC{\dist_2(v_1)} v_2. \bigl(
      \sure{\p{x2}=v_2} *
      \distAs{\p{x3}}{\dist_3(v_2)}
    \bigr)
  \Bigr)
\]
Note how this postcondition represents the output distribution
as implicitly as the program does.
We want to transform the assertion into:
\[
  \CC{\prob_2} v_2.
  \bigl(
    \sure{\p{x2}=v_2} *
    \distAs{\p{x1}}{\prob_1(v_2)} *
    \distAs{\p{x3}}{\dist_3(v_2)}
  \bigr)
\]
for appropriate $\prob_2$ and $\prob_1$.
This isolates the conditioning to the direct dependency of \p{x1}
and keeps full information about \p{x3},
available for further manipulation down the line.

In probability theory, the proof of memorylessness is an application
of Bayes' law: we are computing
the distribution of \p{x1} conditioned on \p{x2},
from the distribution of \p{x2} conditioned on \p{x1}.

In \thelogic\ we can produce the transformation using the joint conditioning rules:
\begin{eqexplain}
  &
  \CC{\dist_1} v_1. \Bigl(
    \sure{\p{x1}=v_1} *
    \CC{\dist_2(v_1)} v_2. \bigl(
      \sure{\p{x1}=v_2} *
      \distAs{\p{x3}}{\dist_3(v_2)}
    \bigr)
  \Bigr)
\whichproves
  \CC{\dist_1} v_1. \Bigl(
    \CC{\dist_2(v_1)} v_2. \bigl(
      \sure{\p{x1}=v_1} *
      \sure{\p{x1}=v_2} *
      \distAs{\p{x3}}{\dist_3(v_2)}
    \bigr)
  \Bigr)
  \byrules{c-frame}
\whichproves
  \CC{\prob_0} (v_1,v_2). \bigl(
      \sure{\p{x1}=v_1} *
      \sure{\p{x2}=v_2} *
      \distAs{\p{x3}}{\dist_3(v_2)}
  \bigr)
  \byrules{c-assoc}
\whichproves
  \CC{\prob_2} v_2. \Bigl(
    \CC{\prob_1(v_2)} v_1.
    \bigl(
      \sure{\p{x1}=v_1} *
      \sure{\p{x2}=v_2} *
      \distAs{\p{x3}}{\dist_3(v_2)}
    \bigr)
  \Bigr)
  \byrules{c-unassoc}
\whichproves
  \CC{\prob_2} v_2. \Bigl(
    \sure{\p{x2}=v_2} *
    \CC{\prob_1(v_2)} v_1.
    \bigl(
      \sure{\p{x1}=v_1} *
      \distAs{\p{x3}}{\dist_3(v_2)}
    \bigr)
  \Bigr)
  \byrules{sure-str-convex}
\whichproves
  \CC{\prob_2} v_2. \bigl(
    \sure{\p{x2}=v_2} *
    \distAs{\p{x1}}{\prob_1(v_2)} *
    \distAs{\p{x3}}{\dist_3(v_2)}
  \bigr)
  \byrules{c-extract}
\end{eqexplain}
where
$
  (\dist_1 \fuse \dist_2) = \prob_0 =
  \bind(\prob_2,\prob_1).
$
The existence of such $\prob_2$ and $\prob_1$ is a simple application
of Bayes' law:
$
  \prob_2(v_2) =
    \Sum_{v_1 \in \Val} \prob_0(v_1,v_2),
$
and
$
  \prob_1(v_2)(v_1) =
    \frac{\prob_0(v_1,v_2)}{\prob_2(v_2)}.
$

\subsection{Multi-party Secure Computation}
\label{sec:appendix:ex:multiparty}

  The idea of \emph{multi-party secure computation} is to allow~$N$
parties to compute a function~$f(x_1,\dots,x_N)$ of
some private data~$x_i$ owned by each party~$i$,
without revealing any more information about~$x_i$ than the output of~$f$
would reveal if computed centrally by a trusted party.
When $f$ is addition, a secure computation of~$f$ is useful, for example,
to compute the total number of votes without revealing who voted positively:
some information would leak (e.g., if the total is non-zero then \emph{somebody} voted positively) but only what is revealed by knowing the total and nothing more.

To achieve this objective, multi-party secure addition~(MPSAdd)
works by having the parties break their secret into~$N$ \emph{secret shares}
which individually look random, but the sum of which amounts to the original secret.
These secret shares are then distributed to the other parties so that each party knows an incomplete set of shares of the other parties.
Yet, each party can reliably compute the result of the function by computing a function of the received shares.

\Cref{fig:mpsadd} shows the MPSAdd algorithm, for the case~$N=3$,
as modelled in~\cite{barthe2019probabilistic}.
The algorithm works as follows.
Each party~$i$ knows its secret~$x_i$.
All the parties agree on an appropriate prime number~$p$ to use,
and want to compute the sum of all the secrets (modulo~$p$).
The algorithm goes through three phases:
\begin{itemize}
  \item\emph{Phase~1:}
    Each party~$i$ computes its three secret shares in row \p{r[$i$][-]}
    by generating two independent random
    numbers~\p{r[$i$][1]}, \p{r[$i$][2]} in~$\Zp$
    drawing from the \emph{uniform} distribution~$\UnifZp$.
The third share \p{r[$i$][3]} is chosen so that the sum of the shares
    amounts to the secret~\p{x}$_i$.

    Then columns are communicated so that each party~$i$
    receives column \p{r[-][$j$]} for every $j\ne i$.
    Each party now knows 2 out of 3 of the shares of each other party.
  \item\emph{Phase~2:}
    Each party~$i$
    computes, for every $j\ne i$,
    \p{s[$j$]} as the sum of the column \p{r[-][$j$]},
    and sends it to the other parties.
  \item\emph{Phase~3:}
    Each party now knows \p{s[-]},
    the sum of which is the sum of the secrets.
\end{itemize}

\citet{barthe2019probabilistic} provide a partial proof of the example:
\begin{enumerate*}
\item
  They only verify uniformity and independence from the input of the
  secret shares \p{r} (with a proof that involves ad-hoc axioms);
\item
  As PSL can only represent (unconditional) independence,
  the proof cannot state anything
  useful about the other values \p{s} that are circulated to compute the sum;
  in principle they can also leak too much information,
  but will not be independent of the secret since they are supposed to
  disclose their sum.
\end{enumerate*}

In this section we set out to prove the stronger property that,
by the end of the computation, nothing is revealed by the values
party~$i$ received from the other parties other than the sum.
We focus on party~$1$ as the specifications and proofs for the other parties
are entirely analogous.

\begin{figure}
  \centering \begin{tabular}{c}
  \begin{sourcecode}[gobble=2]
  def MPSAdd:
    // Phase 1
    for i in [1,2,3]:
      r[i][1] :~ $\UnifZp$ // Uniform sample from $\color{codecomment}\Zp$
      r[i][2] :~ $\UnifZp$
      r[i][3] := x$_i$ - r[i][1] - r[i][2] mod $p$
    // Phase 2
    for i in [1,2,3]:
      s[i] := r[1][i] + r[2][i] + r[3][i] mod $p$
    // Phase 3
    sum := s[1] + s[2] + s[3] mod $p$
  \end{sourcecode}\end{tabular}
  \caption{Multi-party secure addition.}
  \label{fig:mpsadd}
\end{figure}

We observe that the above goal can be formalized using two
very different judgments, one unary and one relational.

\smallskip

The \textbf{\emph{unary specification}} says that,
conditionally on the secret of party~$i$, and the sum of the other secrets,
all the values received by~$i$ (we call this the \emph{view} of~$i$)
are independent from the secrets of the other parties;
moreover the learned components of~\p{r} are uniformly distributed.

Formally, the view of party~$1$ is the vector:
\[
  \p{view}_1 = (
    \p{r[1][-]},\p{r[2][2]}, \p{r[2][3]},
    \p{r[3][2]}, \p{r[3][3]},
    \p{s[-]},\p{sum}
  )
\]
The unary specification would then assume an arbitrary distribution~$\prob_0$
of the secrets (making no assumption about their independence),
and asserting in the postcondition that $\p{view}_1$ and $(\p x_1,\p x_2)$ are
independent conditionally on $\p x_1$ and $\p x_2+\p x_3$
(\ie conditionally on the secret of party~1 and the total sum of the secrets).
\begin{equation}
  (\distAs{(\p{x}_1, \p{x}_2, \p{x}_3)}{\prob_0})
  \withp{\m{\permap}}
  \proves
  \WP {\p{MPSAdd}}*{
    \CC {\tilde{\prob}_0} {(v_1, v_{23})}.
    \begin{grp}
      \sure{\p x_1 = v_1 \land (\p x_2 + \p x_3)=v_{23}} * {}
      \\
      \distAs{\p{view}_1}{U(v_1,v_{23})} *
      \E \prob_{23}.\distAs{(\p x_2,\p x_3)}{\prob_{23}}
    \end{grp}
  }
\label{multiparty:unary:goal}
\end{equation}
For readability we omit the indices on the term and variables
as they are $\at{\I1}$ everywhere;
we also implicitly interpret equalities and sums to be modulo~$p$.
Here $U(v_1,v_{23})$ distributes
the components of \p{r} in $\p{view}_1$ uniformly (except for \p{r[1][3]}),
and $ {
  \tilde{\prob}_0 = \left(\DO{(x_1,x_2,x_3) <- \prob_0; \return (x_1,x_2+x_3)}\right)
} $;
moreover $\m{\permap}$ contains all the necessary permissions for the
assignments in \p{MPSAdd}.
Note how the conditioning on the sum represents the expected leakage of
the multi-party computation.

\smallskip

The \textbf{\emph{relational specification}} says that
    when running the program from two initial states
    differing only in the secrets of the other parties,
    but not in their sum,
    the views of party~$i$ would be distributed in the same way.
As a binary judgement, the specification can be formalized as:
\begin{equation}
  \cpl*{
  \begin{conj*}
    \p x_1\at{\I1} = \p x_1\at{\I2}
    \land
    (\p x_2+\p x_3)\at{\I1} = (\p x_2+\p x_3)\at{\I2}
  \end{conj*}
  }
  \withp{\m{\permap}}
  \proves
  \WP {\m<1:\p{MPSAdd},2:\p{MPSAdd}>}*{
    \cpl*{
    \begin{conj*}
      \p x_1\at{\I1} = \p x_1\at{\I2}
      \land
      (\p x_2+\p x_3)\at{\I1} = (\p x_2+\p x_3)\at{\I2}
      \land
      \p{view}_1\at{\I1} = \p{view}_1\at{\I2}
    \end{conj*}
    }
  }
\label{multiparty:rel:goal}
\end{equation}

\medskip
As a first result, we show that \thelogic{} can produce a proof for both specifications, one in unary style, the other in a relational-lifting style.
This demonstrates how \thelogic{} supports both styles uniformly,
making it possible to pick and choose the one that fits the prover's intuition
best, or that turns out to be easier to carry out.

Having two very different specification for the same property,
however, begs the question of whether the two specifications are really
equivalent; moreover, we would like to make the choice of how to \emph{prove}
the property independent of how the property is represented.
This is important, for example, because one proof strategy (\eg the relational)
might be more convenient for some program, but we might then want to reuse
the proven specification in a larger proof that might be conducted in a different style (\eg unary).
Our second key result is that we show how we can derive one spec from the other
\emph{within} \thelogic{}, thus showing that picking a style for proving the program does not inhibit the reuse of the result in a different style.
This also illustrates the fitness of \thelogic{} as a tool for abstract
meta-level reasoning:
in pRHL or PSL/Lilac,
the adequacy of a specification (or conversion between styles)
needs to be proven outside of the logic.
In \thelogic{} this can happen within the same logic that supports the proofs of the programs.

The rest of the section is devoted to substantiating these claims,
providing \thelogic{} proofs for:
\begin{enumerate}
  \item the unary specification;
  \item the relationa specification (independently of the unary proof);
  \item the equivalence of the two specifications.
\end{enumerate}
Although the third item would spare us from proving one of the first two,
we provide direct proofs in the two styles to provide a point of comparison
between them.

\subsubsection{Proof of the unary specification}
As a first step, we can apply the rules for loops and assignments to obtain
the postcondition~$Q$:
\begin{align*}
  Q &= X * R_{12} * R_3 * S * \var{Sum}
  &
  R_{12} &=
    \Sep_{i\in\set{1,2,3}} (
      \distAs{\p{r[$i$][1]}}{\UnifZp}
      *
      \distAs{\p{r[$i$][2]}}{\UnifZp}
    )
  \\
  X &= \distAs{(\p{x}_1, \p{x}_2, \p{x}_3)}{\prob_0}
  &
  R_{3} &=
    \Sep_{i\in\set{1,2,3}}
      \sure[\big]{\p{r[$i$][3]} = \p x_i-\p{r[$i$][1]}-\p{r[$i$][2]}}
  \\
  \var{Sum} &= \sure{ \p{sum} = \p{s[1]}+\p{s[2]}+\p{s[3]} }
  &
  S &= \sure*{
        \LAnd_{i\in\set{1,2,3}}
          \p{s[$i$]} = \p{r[1][$i$]}+\p{r[2][$i$]}+\p{r[3][$i$]}
      }
\end{align*}
Now the goal is to show that $Q$ entails the postcondition of~\eqref{multiparty:unary:goal}.
As a first step we transform~$X$ into $\distAs{(\p{x}_1, \p{x}_2 + \p{x}_2)}{\prob}$ by \ref{rule:dist-fun}.
Then we condition on $(\p{x}_1, \p{x}_2 + \p{x}_3, \p{x}_2 \p{x}_3)$ and
the variables in $R_{12}$, obtaining:
\[
  \CC{\prob'} (v_1,v_{23},v_2,v_3).
  \begin{grp}
    \sure{\p{x}_1 = v_1 \land (\p{x}_2 + \p{x}_3) = v_{23} \land (\p x_2,\p x_3) = (v_2, v_3)} *
    {}\\
    \CC{\UnifZp} u_{11}. \CC{\UnifZp} u_{12}.
    \CC{\UnifZp} u_{21}. \CC{\UnifZp} u_{22}.
    \CC{\UnifZp} u_{31}. \CC{\UnifZp} u_{32}.
    {}\\\qquad
      \sure*{
        \begin{conj*}
\p{r[1][1]} = u_{11} \land*
          \p{r[1][2]} = u_{12} \land*
            \p{r[1][3]} = v_1 - u_{11} - u_{12}
          \land
\p{r[2][2]} = u_{22} \land*
            \p{r[2][3]} = v_2 - u_{21} - u_{22}
          \land
\p{r[3][2]} = u_{32} \land*
            \p{r[3][3]} = (v_{23}-v_2) - u_{31} - u_{32}
          \land
          \p{s[1]} = u_{11} + u_{21} + u_{31}
          \land
          \p{s[2]} = u_{12} + u_{22} + u_{32}
          \land
          \p{s[3]} = v_1 - u_{11} - u_{12} + v_2 - u_{21} - u_{22} + (v_{23} - v_2) - u_{31} - u_{32}
          \land
          \p{sum} = \p{s[1]} + \p{s[2]} + \p{s[3]}
        \end{conj*}
      }
  \end{grp}
\]
Here $\prob' = \DO{ (x_1,x_2,x_3) <- \prob_0; \return (x_1,x_2+x_3,x_2) } $.
We already weakened the assertion by forgetting the information about
\p{r[2][1]} and \p{r[3][1]}, which are not part of $\p{view}_1$.

Now we perform a change of variables thanks to \cref{rule:c-transf},
to express our equalities in terms of
$u_{21}' = u_{21}-v_2$ instead of $u_{21}$ and
$u_{31}' = u_{31}-(v_{23}-v_2)$ instead of $u_{31}$.
To justify the change we simply observe that, for all $n\in\Zp$,
the function $ f_n(u) = u-n \mod p $ is a bijection and
$ \UnifZp \circ \inv{f_n} = \UnifZp $.
This gives us, with some simple arithmetic simplifications:
\[
  \CC{\prob'} (v_1,v_{23},v_2,v_3).
  \begin{grp}
    \sure{\p{x}_1 = v_1 \land (\p{x}_2 + \p{x}_3) = v_{23} \land (\p x_2,\p x_3) = (v_2, v_3)} *
    {}\\
    \CC{\UnifZp} u_{11}. \CC{\UnifZp} u_{12}.
    \CC{\UnifZp} u_{21}'. \CC{\UnifZp} u_{22}.
    \CC{\UnifZp} u_{31}'. \CC{\UnifZp} u_{32}.
    {}\\\qquad
      \sure*{
        \begin{conj*}
          \p{r[1][1]} = u_{11} \land*
          \p{r[1][2]} = u_{12} \land*
            \p{r[1][3]} = v_1 - u_{11} - u_{12}
          \land
          \p{r[2][2]} = u_{22} \land*
            \p{r[2][3]} = - u_{21}' - u_{22}
          \land
          \p{r[3][2]} = u_{32} \land*
            \p{r[3][3]} = - u_{31}' - u_{32}
          \land
          \p{s[1]} = u_{11} + u_{21}' + u_{31}'+v_{23}
          \land
          \p{s[2]} = u_{12} + u_{22} + u_{32}
          \land
          \p{s[3]} = v_1 - u_{11} - u_{12} - u_{21}' - u_{22} - u_{31}' - u_{32}
          \land
          \p{sum} = \p{s[1]} + \p{s[2]} + \p{s[3]}
        \end{conj*}
      }
  \end{grp}
\]
In particular we removed all dependencies on $v_2$ from the inner formula.
We can now apply \ref{rule:c-assoc} to collapse all the inner conditioning
into a single one:
\[
  \CC{\prob'} (v_1,v_{23},v_2,v_3).
  \begin{grp}
    \sure{\p{x}_1 = v_1 \land (\p{x}_2 + \p{x}_3) = v_{23} \land (\p x_2,\p x_3) = (v_2, v_3)} *
    {}\\
    \CC{U(v_1,v_{23})} \m{u}.
      \sure*{\p{view}_1 = \m{u }}
  \end{grp}
\]
where $U(v_1,v_{23}) = (\DO{\m{v}<-\UnifZp \pprod \dots \pprod \UnifZp; \return g(\m{v})})$ takes the six independent samples from $\UnifZp$ and returns
the values for each of the components of $\p{view}_1$ (which justifies the dependency on $v_1$ and $v_{23}$).
Finally, we split $\prob' = \bind(\prob, \krnl)$ obtaining:
\begin{eqexplain}
  &
  \CC{\prob'} (v_1,v_{23},v_2,v_3).
  \begin{grp}
    \sure{\p{x}_1 = v_1 \land (\p{x}_2 + \p{x}_3) = v_{23} \land (\p x_2,\p x_3) = (v_2, v_3)} *
    {}\\
    \CC{U(v_1,v_{23})} \m{u}.
      \sure*{\p{view}_1 = \m{u }}
  \end{grp}
  \whichproves
\CC{\prob'} (v_1,v_{23},v_2,v_3).
  \begin{grp}
    \sure{\p{x}_1 = v_1 \land (\p{x}_2 + \p{x}_3) = v_{23}} *
    {}\\
    \sure{(\p{x}_2, \p{x}_3) = (v_2, v_3)} *
    {}\\
      \distAs{\p{view}_1}{U(v_1,v_{23})}
  \end{grp}
  \byrules{sure-merge,c-unit-r}
  \whichproves
\CC{\tilde{\prob}_0} (v_1,v_{23}).
  \begin{grp}
  \sure{\p{x}_1 = v_1 \land (\p{x}_2 + \p{x}_3) = v_{23}} *
    {}\\
  \CC{\krnl(v_1,v_{23})} (v_2,v_3).
    {}\\\quad
  \bigl(
    \sure{(\p{x}_2, \p{x}_3) = (v_2, v_3)} *
    \distAs{\p{view}_1}{U(v_1,v_{23})}
  \bigr)
  \end{grp}
  \byrules{c-unassoc,sure-str-convex}
  \whichproves
\CC{\tilde{\prob}_0} (v_1,v_{23}).
  \begin{grp}
  \sure{\p{x}_1 = v_1 \land (\p{x}_2 + \p{x}_3) = v_{23}} *
    {}\\
    \distAs{(\p{x}_2, \p{x}_3)}{\krnl(v_1,v_{23})} *
    \distAs{\p{view}_1}{U(v_1,v_{23})}
  \end{grp}
  \byrules{c-extract}
  \whichproves
\CC {\tilde{\prob}_0} {(v_1, v_{23})}.
  \begin{grp}
    \sure{\p x_1 = v_1 \land (\p x_2 + \p x_3)=v_{23}} * {}
    \\
    \distAs{\p{view}_1}{U(v_1,v_{23})} *
    \E \prob_{23}.\distAs{(\p x_2,\p x_3)}{\prob_{23}}
  \end{grp}
\end{eqexplain}

\noindent
This gets us the desired postcondition, and concludes the proof.

\subsubsection{Proof of the relational specification}

We now want to prove the goal~\eqref{multiparty:rel:goal}
using the relational lifting technique,
\ie inducing a suitable coupling between the variables of the two components.
Starting from precondition
$
  \cpl*{
  \begin{conj*}
    \p x_1\at{\I1} = \p x_1\at{\I2}
    \land*
    (\p x_2+\p x_3)\at{\I1} = (\p x_2+\p x_3)\at{\I2}
  \end{conj*}
  },
$
we can proceed just like in the unary proof, blindly applying the loop and assignment rules obtaining (reusing the assertions of the previous section):
\begin{align*}
  \cpl*{
  \begin{conj*}
    \p x_1\at{\I1} = \p x_1\at{\I2}
    \land
    (\p x_2+\p x_3)\at{\I1} = (\p x_2+\p x_3)\at{\I2}
  \end{conj*}
  }
  *
  \begin{grp}
    R_{12}\at{\I1} * R_3\at{\I1} * S\at{\I1} * \var{Sum}\at{\I1}
    *{}\\
    R_{12}\at{\I2} * R_3\at{\I2} * S\at{\I2} * \var{Sum}\at{\I2}
  \end{grp}
\end{align*}
Now the main task is to couple the sources of randomness
in $R_{12}$ so that the views of~1 coincide in the two components.
The idea is that if we can induce a coupling where
$
  r[2][1]\at{\I1} =
    r[2][1]\at{\I2} + (\p{x}_2\at{\I1} - \p{x}_2\at{\I2})
$ and $
  r[3][1]\at{\I1} =
    r[3][1]\at{\I2} + (\p{x}_3\at{\I1} - \p{x}_3\at{\I2})
$
then we would obtain $\p{view}_1\at{\I1}=\p{view}_1\at{\I2}$.
To implement the idea we combine two observations:
\begin{enumerate*}
  \item we can induce the desired coupling if we are under conditioning
        of $\p{x}_2$ and $\p{x}_3$ on both indices;
        a conditioning that is already happening in the relational lifting
        of the precondition;
  \item the coupling is valid because addition of some constant modulo~$p$
        to a uniformly distributed variable, gives a uniformly distributed variable (an observation we also exploited in the unary proof).
\end{enumerate*}

Formally, we first unfold the definition of the relational lifting
to reveal the \supercond\ on $\p{x}_1$, $\p{x}_2$, and $\p{x}_3$,
and move the other resources inside the conditioning by \ref{rule:c-frame}:
\begin{eqexplain}
  &
  \cpl*{
  \begin{conj*}
    \p x_1\at{\I1} = \p x_1\at{\I2}
    \land
    (\p x_2+\p x_3)\at{\I1} = (\p x_2+\p x_3)\at{\I2}
  \end{conj*}
  }
  *
  \begin{grp}
    R_{12}\at{\I1} * R_3\at{\I1} * S\at{\I1} * \var{Sum}\at{\I1}
    *{}\\
    R_{12}\at{\I2} * R_3\at{\I2} * S\at{\I2} * \var{Sum}\at{\I2}
  \end{grp}
  \whichproves
\E \hat\prob.
  \CC{\hat\prob} \svec{
    v_1,&v_2,&v_3\\w_1,&w_2,&w_3
  }.
  \begin{grp}
    \pure{v_1=w_1 \land v_2+v_3=w_2+w_3}     * {}\\
    \sure{\p x_i\at{\I1} = v_i}_{i\in\set{1,2,3}} * {}\\
    \sure{\p x_i\at{\I2} = w_i}_{i\in\set{1,2,3}}
  \end{grp}
  *
  \begin{grp}
    R_{12}\at{\I1} * R_3\at{\I1} * S\at{\I1} * \var{Sum}\at{\I1}
    *{}\\
    R_{12}\at{\I2} * R_3\at{\I2} * S\at{\I2} * \var{Sum}\at{\I2}
  \end{grp}
  \whichproves
\E \hat\prob.
  \CC{\hat\prob} \svec{
    v_1,&v_2,&v_3\\w_1,&w_2,&w_3
  }.
  \begin{grp}
    \pure{v_1=w_1 \land v_2+v_3=w_2+w_3}* {}\\
    \sure{\p x_i\at{\I1} = v_i}_{i\in\set{1,2,3}} *
    R_{12}\at{\I1} * R_3\at{\I1} * S\at{\I1} * \var{Sum}\at{\I1}
    * {}\\
    \sure{\p x_i\at{\I2} = w_i}_{i\in\set{1,2,3}} *
    R_{12}\at{\I2} * R_3\at{\I2} * S\at{\I2} * \var{Sum}\at{\I2}
  \end{grp}
  \byrule{c-frame}
\end{eqexplain}

Now by using \ref{rule:c-cons}, we can induce a coupling inside the condtioning.
By using \ref{rule:coupling} we obtain:
\begin{eqexplain}
  R_{12}\at{\I1} * R_{12}\at{\I2}
  \whichproves*
\begin{grp}
  \Sep_{i\in\set{1,2,3}} (
    \distAs{\p{r[$i$][1]}\at{\I1}}{\UnifZp}
    *
    \distAs{\p{r[$i$][2]}\at{\I1}}{\UnifZp}
  )
  * {}\\
  \Sep_{i\in\set{1,2,3}} (
    \distAs{\p{r[$i$][1]}\at{\I2}}{\UnifZp}
    *
    \distAs{\p{r[$i$][2]}\at{\I2}}{\UnifZp}
  )
  \end{grp}
  \whichproves
\cpl*{
    \begin{conj*}
      \p{r[1][1]}\at{\I1}=\p{r[1][1]}\at{\I2}
      \land
      \LAnd_{i\in\set{1,2,3}}
        \p{r[i][2]}\at{\I1}= \p{r[i][2]}\at{\I2}
      \land
      \p{r[2][1]}\at{\I1}=\p{r[2][1]}\at{\I2}+(v_2-w_2)
      \land
      \p{r[3][1]}\at{\I1}=\p{r[3][1]}\at{\I2}+(v_3-w_3)
    \end{conj*}
  }
  \byrule{coupling}
\end{eqexplain}
The application of \ref{rule:coupling} is supported by the coupling
\[
  \prob = \DO{
    (u_1,\dots,u_6) <- \UnifZp^{(6)};
    \return
      \begin{grp}
      (u_1,u_2,u_3+(v_2-w_2),u_4,u_5+(v_3-w_3),u_6), \\
      (u_1,u_2,u_3,u_4,u_5,u_6)
      \end{grp}
  }
\]
where $ \UnifZp^{(6)} $ is the independent product of 6 $\UnifZp$.
The joint distribution~$\prob$ satisfies
$\prob \circ \inv{\proj_1} = \UnifZp^{(6)}$,
$\prob \circ \inv{\proj_2} = \UnifZp^{(6)}$;
note that $\UnifZp^{(6)}$ is the distribution of
$ (\p{r[1][1]},\p{r[1][2]},\p{r[2][1]},\p{r[3][1]},\p{r[3][2]}) $
at both indices $\I1$ and $\I2$ (by $R_{12}$).

Now we can merge the relational lifting with the other almost sure
facts we have under conditioning, simplify and apply \ref{rule:rl-convex}
to obtain the desired unconditional coupling:
\begin{eqexplain}
  &
  \CC{\hat\prob} \svec{
    v_1,&v_2,&v_3\\w_1,&w_2,&w_3
  }.
  \begin{grp}
    \pure{v_1=w_1 \land v_2+v_3=w_2+w_3}* {}\\
    \sure{\p x_i\at{\I1} = v_i}_{i\in\set{1,2,3}} *
    R_{12}\at{\I1} * R_3\at{\I1} * S\at{\I1} * \var{Sum}\at{\I1}
    * {}\\
    \sure{\p x_i\at{\I2} = w_i}_{i\in\set{1,2,3}} *
    R_{12}\at{\I2} * R_3\at{\I2} * S\at{\I2} * \var{Sum}\at{\I2}
  \end{grp}
  \whichproves
\CC{\hat\prob} \svec{
    v_1,&v_2,&v_3\\w_1,&w_2,&w_3
  }.
  \begin{grp}
    \pure{v_1=w_1 \land v_2+v_3=w_2+w_3}* {}\\
    \sure{\p x_i\at{\I1} = v_i}_{i\in\set{1,2,3}} *
     R_3\at{\I1} * S\at{\I1} * \var{Sum}\at{\I1}
    * {}\\
    \sure{\p x_i\at{\I2} = w_i}_{i\in\set{1,2,3}} *
     R_3\at{\I2} * S\at{\I2} * \var{Sum}\at{\I2}
   * {}\\
   \cpl*{
     \begin{conj*}
       \p{r[1][1]}\at{\I1}=\p{r[1][1]}\at{\I2}
       \land
       \LAnd_{i\in\set{1,2,3}}
         \p{r[i][2]}\at{\I1}= \p{r[i][2]}\at{\I2}
       \land
       \p{r[2][1]}\at{\I1}=\p{r[2][1]}\at{\I2}+(v_2-w_2)
       \land
       \p{r[3][1]}\at{\I1}=\p{r[3][1]}\at{\I2}+(v_3-w_3)
     \end{conj*}
   }
  \end{grp}
  \whichproves
\CC{\hat\prob} \svec{
    v_1,&v_2,&v_3\\w_1,&w_2,&w_3
  }.
   \cpl*{
    \begin{conj*}
      v_1=w_1 \land* v_2+v_3=w_2+w_3
      \land
      \p{r[1][1]}\at{\I1}=\p{r[1][1]}\at{\I2}
      \land
      \LAnd_{i\in\set{1,2,3}}
        \p{r[i][2]}\at{\I1}= \p{r[i][2]}\at{\I2}
      \land
      \p{r[2][1]}\at{\I1}=\p{r[2][1]}\at{\I2}+(v_2-w_2)
      \land
      \p{r[3][1]}\at{\I1}=\p{r[3][1]}\at{\I2}+(v_3-w_3)
      \land
      \LAnd_{i\in\set{1,2,3}}
       \p x_i\at{\I1} = v_i \land*
       \p x_i\at{\I2} = w_i
      \land
      \LAnd_{i\in\set{1,2,3}}
        (\p{r[$i$][3]} = \p x_i-\p{r[$i$][1]}-\p{r[$i$][2]})\at{\I1}
      \land \dots
    \end{conj*}
   }
  \byrule{rl-sure-merge}
  \whichproves
\CC{\hat\prob} \svec{
    v_1,&v_2,&v_3\\w_1,&w_2,&w_3
  }.
    \cpl*{
    \begin{conj*}
      \p x_1\at{\I1} = \p x_1\at{\I2}
      \land
      (\p x_2+\p x_3)\at{\I1} = (\p x_2+\p x_3)\at{\I2}
      \land
      \p{view}_1\at{\I1} = \p{view}_1\at{\I2}
    \end{conj*}
    }
  \byrule{rl-cons}
  \whichproves
\cpl*{
  \begin{conj*}
    \p x_1\at{\I1} = \p x_1\at{\I2}
    \land
    (\p x_2+\p x_3)\at{\I1} = (\p x_2+\p x_3)\at{\I2}
    \land
    \p{view}_1\at{\I1} = \p{view}_1\at{\I2}
  \end{conj*}
  }
  \byrule{rl-convex}
\end{eqexplain}

\subsubsection{Proof of equivalence of the two specifications}

Now we prove that one can prove the relational specification from the unary one,
and vice versa.

\paragraph{From unary to relational}
We want to show that
assuming the unary specification \eqref{multiparty:unary:goal} holds,
we can derive the relational specification \eqref{multiparty:rel:goal}.
We first need to bridge the gap between having one component
in the assumption and two in the consequence.
This is actually easy: the proof of \eqref{multiparty:unary:goal} is
completely parametric in the index chosen for the program,
so the same proof can prove two specifications, one where the index of the term is \I1 and one where it is \I2.
As a side note, this ``reindexing'' argument can be made into a rule of the logic following the LHC approach~\cite{d2022proving} but we do not do this
in this paper so we can focus on the novel aspects of the logic.
More formally, let~$P(\prob_0)$ and~$Q(\prob_0)$ be the precondition and the postcondition
of the unary specification \eqref{multiparty:unary:goal}.
Furthermore, let~$ \cpl{R_1} $ and $ \cpl{R_2} $ be the precondition and postcondition of the relational specification~\eqref{multiparty:rel:goal}.

First we can infer a binary spec from two unary instances of the unary spec,
for arbitrary $\prob_1,\prob_2 \of \Dist(\Zp^3)$:
\begin{equation}
  \infer* {P(\prob_1)\at{\I1}
    \proves
    \WP {\m[\I1: \p{MPSAdd}]} {
      Q(\prob_1)\at{\I1}
    }
    \\\\
    P(\prob_2)\at{\I2}
    \proves
    \WP {\m[\I2: \p{MPSAdd}]} {
      Q(\prob_2)\at{\I2}
    }
  }{
    \begin{conj}
    P(\prob_1)\at{\I1} \land P(\prob_2)\at{\I2}
    \end{conj}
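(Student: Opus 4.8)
The statement is the arity-raising step that turns two unary instances of the MPSAdd specification into a single binary one. Spelled out, the conclusion below the line reads
\[
  P(\prob_1)\at{\I1} \land P(\prob_2)\at{\I2}
  \proves
  \WP {\m[\I1: \p{MPSAdd}, \I2: \p{MPSAdd}]} {
    Q(\prob_1)\at{\I1} \land Q(\prob_2)\at{\I2}
  }.
\]
The plan is to combine the two unary premises into a conjunction of weakest-precondition assertions, and then collapse that conjunction into a single binary WP using \ref{rule:wp-conj}.

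First I would weaken the precondition in each premise. Since $P(\prob_1)\at{\I1} \land P(\prob_2)\at{\I2} \proves P(\prob_1)\at{\I1}$ and symmetrically for the second conjunct (standard conjunction elimination, inherited from SL), transitivity of $\proves$ with the two given entailments yields
\[
  P(\prob_1)\at{\I1} \land P(\prob_2)\at{\I2}
  \proves
  \WP {\m[\I1: \p{MPSAdd}]} {Q(\prob_1)\at{\I1}}
\]
and likewise $P(\prob_1)\at{\I1} \land P(\prob_2)\at{\I2} \proves \WP {\m[\I2: \p{MPSAdd}]} {Q(\prob_2)\at{\I2}}$. By conjunction introduction these two combine into a single entailment whose right-hand side is $\WP {\m[\I1: \p{MPSAdd}]} {Q(\prob_1)\at{\I1}} \land \WP {\m[\I2: \p{MPSAdd}]} {Q(\prob_2)\at{\I2}}$.

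Next I would apply \ref{rule:wp-conj}, instantiating $\m{t}_1 = \m[\I1: \p{MPSAdd}]$, $\m{t}_2 = \m[\I2: \p{MPSAdd}]$, $Q_1 = Q(\prob_1)\at{\I1}$, and $Q_2 = Q(\prob_2)\at{\I2}$. The two side conditions hold trivially by index disjointness: the reindexed postcondition $Q(\prob_1)\at{\I1}$ constrains only index \I1, so $\idx(Q_1) \subs \set{\I1}$ and hence $\idx(Q_1) \inters \supp{\m{t}_2} = \set{\I1} \inters \set{\I2} = \emptyset \subs \supp{\m{t}_1}$; the symmetric condition for $Q_2$ is identical. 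Because the supports $\set{\I1}$ and $\set{\I2}$ are disjoint, $\m{t}_1 \m+ \m{t}_2$ coincides with $\m[\I1: \p{MPSAdd}, \I2: \p{MPSAdd}]$, so \ref{rule:wp-conj} delivers exactly the right-hand side of the desired conclusion. One final use of transitivity of $\proves$ closes the derivation.

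There is no genuine obstacle here: the whole argument is a routine assembly of the two unary triples, with no probabilistic reasoning involved. The only point worth stating explicitly is that the side conditions of \ref{rule:wp-conj} become vacuous precisely because the two postconditions live in separate ``parallel universes'' (disjoint index sets), which is exactly the situation \ref{rule:wp-conj} is designed to exploit. The reindexing remark preceding the statement — that the proof of the unary specification is completely parametric in the index chosen for the term, and therefore simultaneously establishes both premises — is what licenses taking the two premises as available inputs.
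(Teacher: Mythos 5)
Your derivation is correct and matches the paper's intent exactly: the paper states inference \eqref{multiparty:u2b:dup} without spelling out its proof, but \ref{rule:wp-conj} with its index-disjointness side conditions is precisely the primitive rule designed for this arity-raising step, and the paper applies it in exactly this way in the analogous pRHL-style example (\cref{sec:appendix:ex:prhl}). Your handling of the side conditions ($\idx(Q(\prob_1)\at{\I1}) \subs \set{\I1}$ disjoint from $\supp{\m{t}_2} = \set{\I2}$, and symmetrically) and the closing steps via conjunction elimination/introduction and transitivity are all sound.
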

    \proves
    \WP {\m<1: \p{MPSAdd}, 2: \p{MPSAdd}>}*{
      \begin{conj*}
      Q(\prob_1)\at{\I1} \land Q(\prob_2)\at{\I2}
      \end{conj*}
    }
  }
  \label{multiparty:u2b:dup}
\end{equation}

Recalling from \eqref{multiparty:unary:goal}
that $ {
  \tilde{\prob} = \left(\DO{(x_1,x_2,x_3) <- \prob; \return (x_1,x_2+x_3)}\right)
} $,
 the proof works by showing:
\begin{gather}
  \cpl{R_1}
  \proves
  \E \prob_1,\prob_2.
  (P(\prob_1)\at{\I1} \land P(\prob_2)\at{\I2})
  *\pure{\tilde\prob_1=\tilde\prob_2}
  \label{multiparty:u2b:prec}
  \\
Q(\prob_1)\at{\I1} \land Q(\prob_2)\at{\I2}
  *\pure{\tilde\prob_1=\tilde\prob_2}
\proves
  \cpl{R_2}
  \label{multiparty:u2b:post}
\end{gather}
From \eqref{multiparty:u2b:prec} we obtain $\prob_1$ and $\prob_2$
with which to instantiate \eqref{multiparty:u2b:dup},
and that the precondition of \eqref{multiparty:u2b:dup} holds.
Then, by applying \ref{rule:wp-cons} to the conclusion of
\eqref{multiparty:u2b:dup} and \eqref{multiparty:u2b:post}
we obtain the desired relational specification \eqref{multiparty:rel:goal}.

Entailment~\eqref{multiparty:u2b:prec} is obtained in the same way one proves
\ref{rule:rl-eq-dist}:
\begin{eqexplain}
  \cpl*{
  \begin{conj*}
    \p x_1\at{\I1} = \p x_1\at{\I2}
    \land
    (\p x_2+\p x_3)\at{\I1} = (\p x_2+\p x_3)\at{\I2}
  \end{conj*}
  }
  \whichproves*
\E {\hat\prob}.
  \pure{\hat\prob(R_1)=1} *
  \CC{\hat\prob} \left(
    \svec{
    v_1,&v_2,&v_3\\w_1,&w_2,&w_3
    }
  \right).
    \begin{conj}
      \sure{\p x_i\at{\I1}=v_i}_{i\in\set{1,2,3}} \land
      \sure{\p x_i\at{\I2}=w_i}_{i\in\set{1,2,3}}
    \end{conj}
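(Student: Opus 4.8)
The plan is to recognize that this step is purely definitional: the right-hand side is exactly the unfolding of the relational lifting $\cpl{R_1}$ according to \cref{def:rel-lift}, so all I need to do is instantiate that definition at the relation $R_1$ and rename the bound variables to match the displayed form. No logical rule beyond the definition of $\cpl{\cdot}$ is invoked, and the two sides are in fact logically equivalent, not merely related by one-directional entailment.

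First I would fix the set of relevant indexed variables. The lifted relation is $R_1 = (\p x_1\at{\I1} = \p x_1\at{\I2} \land (\p x_2+\p x_3)\at{\I1} = (\p x_2+\p x_3)\at{\I2})$, whose defining formula mentions all three secrets at both indices, so $X = \set{\p x_1\at{\I1},\p x_2\at{\I1},\p x_3\at{\I1},\p x_1\at{\I2},\p x_2\at{\I2},\p x_3\at{\I2}}$ and $R_1 \subs \Val^X$ is the set of assignments $\m{v}$ with $\m{v}(\p x_1\at{\I1}) = \m{v}(\p x_1\at{\I2})$ and $\m{v}(\p x_2\at{\I1}) + \m{v}(\p x_3\at{\I1}) = \m{v}(\p x_2\at{\I2}) + \m{v}(\p x_3\at{\I2})$. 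Substituting this $R_1$ and $X$ into \cref{def:rel-lift} gives
\[
  \cpl{R_1} \lequiv
    \E \hat\prob \of \Dist(\Val^X).
      \pure{\hat\prob(R_1)=1} *
      \CC{\hat\prob} \m{v}. \sure{\ip{x}{i}=\m{v}(\ip{x}{i})}_{\ip{x}{i}\in X}.
\]
The only remaining move is notational: presenting an assignment $\m{v}\in\Val^X$ as a pair of triples $(v_1,v_2,v_3)$ and $(w_1,w_2,w_3)$ under the evident bijection $v_i = \m{v}(\p x_i\at{\I1})$, $w_i = \m{v}(\p x_i\at{\I2})$. Under this renaming the iterated almost-sure conjunction $\sure{\ip{x}{i}=\m{v}(\ip{x}{i})}_{\ip{x}{i}\in X}$ splits into $\sure{\p x_i\at{\I1}=v_i}_{i\in\set{1,2,3}} \land \sure{\p x_i\at{\I2}=w_i}_{i\in\set{1,2,3}}$, which is verbatim the inner assertion of the target, and renaming the quantified distribution to $\hat\prob$ reproduces the displayed right-hand side.

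There is no genuine obstacle here, only one point to get right: $X$ must contain all six secret variables, not just $\p x_1$ and the two column sums, because $R_1$ is a relation over the individual component values while the sums are merely derived quantities appearing in its constraint. Getting this correct is what makes the conditioning $\CC{\hat\prob} \m{v}.(\cdots)$ bind all six values simultaneously, which is precisely what the continuation of the proof of \eqref{multiparty:u2b:prec} exploits when it splits $\hat\prob$ into the two per-index distributions $\prob_1,\prob_2$ and extracts the pure equality $\tilde\prob_1 = \tilde\prob_2$ before invoking the program's loop and assignment rules; those later steps lie beyond the statement proved here.
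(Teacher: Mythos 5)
Your proposal is correct and matches the paper's own treatment: the paper performs this step as an unannotated definitional unfolding of $\cpl{R_1}$ via \cref{def:rel-lift} (mirroring the first step of the proof of \ref{rule:rl-eq-dist}), with exactly the same choice of $X$ as all six indexed variables and the same presentation of the bound tuple $\m{v}$ as the pair of triples $(v_1,v_2,v_3)$, $(w_1,w_2,w_3)$. Your additional observations --- that the unfolding is in fact an equivalence, and that $X$ must contain the individual variables rather than the derived sums --- are both accurate and consistent with how the paper's subsequent steps (splitting via \ref{rule:c-sure-proj} and extracting $\tilde\prob_1=\tilde\prob_2$) use this assertion.
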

  \whichproves
\E {\hat\prob}.
  \pure{\hat\prob(R_1)=1} *
  \begin{conj}
    \CC{\hat\prob} \left(
      \svec{
      v_1,&v_2,&v_3\\w_1,&w_2,&w_3
      }
    \right).
      \sure{\p x_i\at{\I1}=v_i}_{i\in\set{1,2,3}}
    \land
    \CC{\hat\prob} \left(
      \svec{
      v_1,&v_2,&v_3\\w_1,&w_2,&w_3
      }
    \right).
      \sure{\p x_i\at{\I2}=w_i}_{i\in\set{1,2,3}}
    \end{conj}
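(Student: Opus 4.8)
The plan is to recognize this \whichproves step as a purely structural manipulation of the \supercond\ modality that never touches the semantic model. Writing $\m{v} = (v_1,v_2,v_3)$ and $\m{w} = (w_1,w_2,w_3)$ for the two halves of the bound tuple, and abbreviating $K_1(\m{v},\m{w}) \is \sure{\p x_i\at{\I1}=v_i}_{i\in\set{1,2,3}}$ and $K_2(\m{v},\m{w}) \is \sure{\p x_i\at{\I2}=w_i}_{i\in\set{1,2,3}}$, both sides of the target share the outer prefix $\E\hat\prob.\,\pure{\hat\prob(R_1)=1} * (\cdots)$. Since entailment is congruent under existential quantification and under separating conjunction with a fixed assertion (the standard structural laws of the RA semantics, cf.\ \cite{KrebbersJ0TKTCD18}), it suffices to establish the core entailment
\[
  \CC{\hat\prob} (\m{v},\m{w}).\,(K_1(\m{v},\m{w}) \land K_2(\m{v},\m{w}))
  \proves
  \CC{\hat\prob} (\m{v},\m{w}).\,K_1(\m{v},\m{w})
    \land
  \CC{\hat\prob} (\m{v},\m{w}).\,K_2(\m{v},\m{w}).
\]

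First I would observe that this is exactly the \emph{backward} (splitting) direction of \ref{rule:c-and}, taking one conditioning of a conjunction to a conjunction of two conditionings. Crucially, this direction carries \emph{no} side condition: the hypothesis $\idx(K_1) \inters \idx(K_2) = \emptyset$ is required only for the forward (merging) direction. This is precisely the remark already relied upon in the proof of \ref{rule:c-sure-proj-many}, namely that ``the backward direction of \ref{rule:c-and} holds by the standard laws of conjunction''. Consequently no appeal to \cref{def:c-mod} nor to any of the measure-theoretic lemmas is needed.

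To carry it out concretely, I would derive the splitting direction from \ref{rule:c-cons} alone. Because $K_1 \land K_2 \proves K_1$ and $K_1 \land K_2 \proves K_2$ hold pointwise in the bound tuple by the meet-semilattice property of $\land$, two applications of \ref{rule:c-cons} give
\[
  \CC{\hat\prob} (\m{v},\m{w}).\,(K_1 \land K_2)
  \proves
  \CC{\hat\prob} (\m{v},\m{w}).\,K_1
  \qquad\text{and}\qquad
  \CC{\hat\prob} (\m{v},\m{w}).\,(K_1 \land K_2)
  \proves
  \CC{\hat\prob} (\m{v},\m{w}).\,K_2.
\]
Both conclusions hold on any resource validating the shared premise, so a final $\land$-introduction yields the conjunction on the right, and framing the prefix $\E\hat\prob.\,\pure{\hat\prob(R_1)=1} * (\cdots)$ lifts this to the displayed statement.

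The main difficulty is conceptual rather than computational: the only thing to get right is that the splitting direction is sound \emph{unconditionally}, so that the step needs neither the index-disjointness side condition nor any semantic unfolding of the modality. As a sanity check, the two families $K_1$ and $K_2$ here in fact live on the disjoint indices $\I1$ and $\I2$, so the side condition would even be satisfied were one to later want the merging direction; but for this step it plays no role.
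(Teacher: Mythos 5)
Your proposal is correct and matches the paper's own route: the paper leaves this step unannotated precisely because it is the unconditional splitting direction of \ref{rule:c-and}, which it elsewhere (in the proof of \ref{rule:c-sure-proj-many}) notes follows from the standard laws of conjunction. Your explicit derivation via two applications of \ref{rule:c-cons} with the pointwise projections $K_1 \land K_2 \proves K_j$, followed by $\land$-introduction and framing of the prefix $\E\hat\prob.\,\pure{\hat\prob(R_1)=1} * (\cdot)$, is exactly the justification the paper takes for granted.
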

  \whichproves
\E {\hat\prob}.
  \pure{\hat\prob(R_1)=1} *
  \begin{conj}
    \CC{\hat\prob\circ\inv{\proj_1}} (v_1,v_2,v_3).
      \sure{\p x_i\at{\I1}=v_i}_{i\in\set{1,2,3}}
    \land
    \CC{\hat\prob\circ\inv{\proj_2}} (w_1,w_2,w_3).
      \sure{\p x_i\at{\I2}=w_i}_{i\in\set{1,2,3}}
  \end{conj}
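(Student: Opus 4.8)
The final entailment step rewrites each of the two conjuncts sitting under the shared existential $\E\hat\prob$ and the untouched pure fact $\pure{\hat\prob(R_1)=1}$: a \supercond\ modality $\CC{\hat\prob}$ that binds a \emph{pair} of triples $((v_1,v_2,v_3),(w_1,w_2,w_3))$ is replaced by a modality conditioning only on a \emph{marginal} of $\hat\prob$. The plan is to keep the existential witness $\hat\prob$ and the pure conjunct fixed and justify the two rewrites of the inner conjunction independently. Both are instances of the projection rule \ref{rule:c-sure-proj-many}, which states exactly that when the body of a \supercond\ constrains only variables that can be read off from one factor of the conditioning space, the modality may be replaced by conditioning on the pushforward of $\hat\prob$ along the corresponding projection.

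For the first conjunct I would present $\hat\prob$ as a distribution over $A\times B$, with $A$ the space of index-$\I1$ triples and $B$ the space of index-$\I2$ triples, and observe that the body $\sure{\p x_i\at{\I1}=v_i}_{i\in\set{1,2,3}}$ mentions only the variables in $X=\set{\p x_1\at{\I1},\p x_2\at{\I1},\p x_3\at{\I1}}$, each of which is determined by the first component $(v_1,v_2,v_3)$. Rule \ref{rule:c-sure-proj-many} with projection $\proj_1$ then yields exactly $\CC{\hat\prob\circ\inv{\proj_1}} (v_1,v_2,v_3).\,\sure{\p x_i\at{\I1}=v_i}_{i\in\set{1,2,3}}$. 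The second conjunct is symmetric, but its body constrains the index-$\I2$ variables, read off from the \emph{second} component $(w_1,w_2,w_3)$; since \ref{rule:c-sure-proj-many} is stated for the first projection, I would first swap the two factors using \ref{rule:c-transf} with the probability-preserving bijection $f(a,b)=(b,a)$, turning $\CC{\hat\prob}$ into $\CC{\hat\prob\circ\inv{f}}$ in which the index-$\I2$ triple now sits first, and then apply \ref{rule:c-sure-proj-many}. Because $(\hat\prob\circ\inv{f})\circ\inv{\proj_1}=\hat\prob\circ\inv{\proj_2}$, this produces the desired $\CC{\hat\prob\circ\inv{\proj_2}} (w_1,w_2,w_3).\,\sure{\p x_i\at{\I2}=w_i}_{i\in\set{1,2,3}}$.

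The step carries no genuine difficulty: conceptually it is just marginalization, recovering the per-index marginals of the secrets from the joint coupling $\hat\prob$ that witnesses the relational lifting $\cpl{R_1}$. The only point requiring care is the directionality of \ref{rule:c-sure-proj-many}, which forces the coordinate swap for the $\proj_2$ marginal; I would verify that $f$ meets the side condition of \ref{rule:c-transf} (bijective on the support and preserving each outcome's probability) and that, in each conjunct, every program variable in the almost-sure body is functionally determined by the retained factor, so that the shape $\sure{\ip{x}{i}=\m{v}(\ip{x}{i})}_{\ip{x}{i}\in X}$ demanded by \ref{rule:c-sure-proj-many} is matched. With these two marginalizations in hand the conjunction becomes the displayed one, readying the subsequent extraction of the unary distributions $\prob_1,\prob_2$ whose images $\tilde\prob_1,\tilde\prob_2$ then coincide as required by~\eqref{multiparty:u2b:prec}.
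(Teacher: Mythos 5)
Your proposal is correct and follows essentially the same route as the paper: the paper justifies this step by the projection pattern of \ref{rule:rl-eq-dist}, i.e., by the marginalization rules \ref{rule:c-sure-proj}/\ref{rule:c-sure-proj-many}, applied to each conjunct independently while keeping $\hat\prob$ and the pure fact $\pure{\hat\prob(R_1)=1}$ fixed. Your additional use of \ref{rule:c-transf} with the coordinate swap to obtain the $\hat\prob\circ\inv{\proj_2}$ marginal merely makes explicit the symmetric instance of the projection rule that the paper leaves implicit, and your verification of its side conditions is accurate.
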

  \whichproves
\E \prob_1,\prob_2.
  \begin{conj}
    \distAs{(\p{x}_1, \p{x}_2, \p{x}_3)\at{\I1}}{\prob_1}
    \land
    \distAs{(\p{x}_1, \p{x}_2, \p{x}_3)\at{\I2}}{\prob_2}
  \end{conj}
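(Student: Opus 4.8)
The target is entailment \eqref{multiparty:u2b:prec}: from $\cpl{R_1}$, where $R_1$ equates $\p x_1$ and $\p x_2+\p x_3$ across the two indices, derive $\E\prob_1,\prob_2.\,(P(\prob_1)\at{\I1}\land P(\prob_2)\at{\I2})*\pure{\tilde\prob_1=\tilde\prob_2}$. The plan is to replay the derivation pattern of \ref{rule:rl-eq-dist} (see \cref{proof:rl-eq-dist}): a relational lifting of an almost-sure equality between per-index expressions decomposes into two distribution-ownership facts, one at each index, sharing a common pushforward marginal. First I would unfold $\cpl{R_1}$ by \cref{def:rel-lift}, producing a witness coupling $\hat\prob$ over the six secret values, presented as a distribution on $\Val^3\times\Val^3$ (first triple for the \I1 secrets, second for the \I2 secrets), together with $\pure{\hat\prob(R_1)=1}$ and the body $\CC{\hat\prob}(v_1,v_2,v_3,w_1,w_2,w_3).\bigl(\sure{\p x_i\at{\I1}=v_i}_{i\in\set{1,2,3}}\land\sure{\p x_i\at{\I2}=w_i}_{i\in\set{1,2,3}}\bigr)$.

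The core of the argument is the three moves already displayed in the excerpt. (i) Split the conjunction under the modality into two separate conditionings on $\hat\prob$; this is the backward direction of \ref{rule:c-and}, which is sound by the standard laws of conjunction. (ii) Marginalize each conditioning with \ref{rule:c-sure-proj-many}: the first conjunct's equalities mention only the \I1 coordinates $(v_1,v_2,v_3)$, so its conditioning may be taken over $\hat\prob\circ\inv{\proj_1}$, and, after a trivial reordering of the bound tuple via \ref{rule:c-transf}, the second over $\hat\prob\circ\inv{\proj_2}$. (iii) Fold each conditioned sure-equality back into a distribution assertion with \ref{rule:c-unit-r}, setting $\prob_1:=\hat\prob\circ\inv{\proj_1}$ and $\prob_2:=\hat\prob\circ\inv{\proj_2}$; this yields $\distAs{(\p x_1,\p x_2,\p x_3)\at{\I1}}{\prob_1}\land\distAs{(\p x_1,\p x_2,\p x_3)\at{\I2}}{\prob_2}$, exactly the last line shown.

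It then remains to recover the two features the displayed line leaves implicit. For the matching-marginal side condition, I would carry $\pure{\hat\prob(R_1)=1}$ along the derivation and observe, at the meta level, that pushing both marginals forward along $g(x_1,x_2,x_3)=(x_1,x_2+x_3)$ gives $\tilde\prob_1=\hat\prob\circ\inv{(g\circ\proj_1)}$ and $\tilde\prob_2=\hat\prob\circ\inv{(g\circ\proj_2)}$; since $R_1$ forces $g\circ\proj_1=g\circ\proj_2$ to hold $\hat\prob$-almost surely, these pushforwards coincide, establishing $\pure{\tilde\prob_1=\tilde\prob_2}$. For the permissions, the full precondition carries $\withp{\m\permap}$ with full permissions on the assigned variables, which are split across the two indices and distributed to the relevant variables by the routine pattern of \cref{ex:perm-triples}, supplying the annotations hidden in $P(\prob_1)\at{\I1}$ and $P(\prob_2)\at{\I2}$. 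The main obstacle is thus not in the logical skeleton, which is a near-verbatim replay of \ref{rule:rl-eq-dist}, but in this last step: correctly naming the two projections of the six-coordinate coupling and discharging the purely measure-theoretic equality of the conditioned-out marginals from the support constraint on $\hat\prob$.
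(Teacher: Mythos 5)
Your proposal matches the paper's own proof of \eqref{multiparty:u2b:prec}, which the paper explicitly carries out "in the same way one proves \ref{rule:rl-eq-dist}": unfold the lifting to get the coupling $\hat\prob$ with $\pure{\hat\prob(R_1)=1}$, split the conjunction under the modality, marginalize each conditioning to $\hat\prob\circ\inv{\proj_1}$ and $\hat\prob\circ\inv{\proj_2}$ via \ref{rule:c-sure-proj}, fold back with \ref{rule:c-unit-r} to obtain $\prob_1$ and $\prob_2$, and conclude $\tilde\prob_1=\tilde\prob_2$ at the meta level from the support constraint. Your extra care about reordering the bound tuple via \ref{rule:c-transf} before projecting the second conjunct, and your spelled-out pushforward argument for $\tilde\prob_1=\tilde\prob_2$, are details the paper leaves implicit but are exactly what its one-line justifications abbreviate.
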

  * \pure{\tilde\prob_1=\tilde\prob_2}
\end{eqexplain}
The last step is justified by letting
$\prob_1 = \hat\prob\circ\inv{\proj_1}$ and
$\prob_2 = \hat\prob\circ\inv{\proj_2}$,
noting that $ \hat\prob(R_1)=1 $ implies $\tilde\prob_1=\tilde\prob_2$.

Finally, we prove \eqref{multiparty:u2b:post}.
Under the assumption $\tilde\prob_1=\tilde\prob_2$
we know that there is some coupling $\hat\prob$
such that
$ \hat\prob\circ\inv{\proj_1} = \tilde\prob_1 =
  \tilde\prob_2 = \hat\prob\circ\inv{\proj_2} $, and
$ \hat\prob(R)=1 $ where
$ R = \set{ ((v_1,v_{23}), (v_1,v_{23})) | v_1,v_{23} \in \Zp } $.
\begin{eqexplain}
  &
  \begin{grp}
    \CC {\tilde{\prob}_1} {(v_1, v_{23})}.
    \begin{grp}
      \sure{\p x_1\at{\I1} = v_1 \land (\p x_2 + \p x_3)\at{\I1}=v_{23}} * {}
      \\
      \distAs{\p{view}_1\at{\I1}}{U(v_1,v_{23})} *
      \E \prob_{23}.\distAs{(\p x_2,\p x_3)\at{\I1}}{\prob_{23}}
    \end{grp}
    \\
    \CC {\tilde{\prob}_2} {(w_1, w_{23})}.
    \begin{grp}
      \sure{\p x_1\at{\I2} = w_1 \land (\p x_2 + \p x_3)\at{\I2}=w_{23}} * {}
      \\
      \distAs{\p{view}_1\at{\I2}}{U(w_1,w_{23})} *
      \E \prob_{23}.\distAs{(\p x_2,\p x_3)\at{\I2}}{\prob_{23}}
    \end{grp}
  \end{grp}
  \whichproves
\begin{grp}
    \CC {\hat\prob} \svec{
      v_1,& v_{23}\\w_1,& w_{23}
    }.
    \begin{grp}
      \sure{\p x_1\at{\I1} = v_1 \land (\p x_2 + \p x_3)\at{\I1}=v_{23}} * {}
      \\
      \distAs{\p{view}_1\at{\I1}}{U(v_1,v_{23})} *
      \E \prob_{23}.\distAs{(\p x_2,\p x_3)\at{\I1}}{\prob_{23}}
    \end{grp}
    \\
    \CC {\hat\prob} \svec{
      v_1,& v_{23}\\w_1,& w_{23}
    }.
    \begin{grp}
      \sure{\p x_1\at{\I2} = w_1 \land (\p x_2 + \p x_3)\at{\I2}=w_{23}} * {}
      \\
      \distAs{\p{view}_1\at{\I2}}{U(w_1,w_{23})} *
      \E \prob_{23}.\distAs{(\p x_2,\p x_3)\at{\I2}}{\prob_{23}}
    \end{grp}
  \end{grp}
  \byrule{c-sure-proj}
  \whichproves
\CC {\hat\prob} \svec{
    v_1,& v_{23}\\w_1,& w_{23}
  }.
  \begin{grp}
    \sure{\p x_1\at{\I1} = v_1 \land (\p x_2 + \p x_3)\at{\I1}=v_{23}}
    \\
    \sure{\p x_1\at{\I2} = w_1 \land (\p x_2 + \p x_3)\at{\I2}=w_{23}}
    \\
    \distAs{\p{view}_1\at{\I1}}{U(v_1,v_{23})} *
    \E \prob_{23}.\distAs{(\p x_2,\p x_3)\at{\I1}}{\prob_{23}}
    \\
    \distAs{\p{view}_1\at{\I2}}{U(w_1,w_{23})} *
    \E \prob_{23}.\distAs{(\p x_2,\p x_3)\at{\I2}}{\prob_{23}}
    \\
    \pure{v_1=w_1 \land v_{23}=w_{23}}
  \end{grp}
  \byrule{c-and}
  \whichproves
\E \krnl_1, \krnl_2.
  \CC {\hat\prob} \svec{
    v_1,& v_{23}\\w_1,& w_{23}
  }.
  \begin{grp}
    \sure{\p x_1\at{\I1} = v_1 \land (\p x_2 + \p x_3)\at{\I1}=v_{23}}
    \\
    \sure{\p x_1\at{\I2} = w_1 \land (\p x_2 + \p x_3)\at{\I2}=w_{23}}
    \\
\cpl{\p{view}_1\at{\I1}=\p{view}_1\at{\I2}}
    \\
\distAs{(\p x_2,\p x_3)\at{\I1}}{\krnl_1(v_1, v_{23},w_1,w_{23})}
    \\
    \distAs{(\p x_2,\p x_3)\at{\I2}}{\krnl_2(v_1, v_{23},w_1,w_{23})}
    \\
    \pure{v_1=w_1 \land v_{23}=w_{23}}
  \end{grp}
  \byrules{c-skolem,coupling}
  \whichproves
\E \krnl_1, \krnl_2.
  \CC {\hat\prob} \svec{
    v_1,& v_{23}\\w_1,& w_{23}
  }.
  \begin{grp}
    \sure{\p x_1\at{\I1} = v_1 \land (\p x_2 + \p x_3)\at{\I1}=v_{23}}
    \\
    \sure{\p x_1\at{\I2} = w_1 \land (\p x_2 + \p x_3)\at{\I2}=w_{23}}
    \\
    \cpl{\p{view}_1\at{\I1}=\p{view}_1\at{\I2}}
    \\
    \CC{\krnl_1(v_1, v_{23},w_1,w_{23})} (v_2,v_3).
    \sure{(\p x_2,\p x_3)\at{\I1} = (v_1,v_2)}
    \\
    \CC{\krnl_2(v_1, v_{23},w_1,w_{23})} (w_2,w_3).
    \sure{(\p x_2,\p x_3)\at{\I2} = (w_1,w_2)}
    \\
    \pure{v_1=w_1 \land v_{23}=w_{23}}
  \end{grp}
  \byrules{c-unit-r}
  \whichproves
\E \hat\prob_1.
  \CC {\hat\prob_1} \svec{
    v_1,& v_{23},& v_2,& v_3\\w_1,& w_{23},& w_2,& w_3
  }.
  \begin{grp}
    \sure{\p x_1\at{\I1} = v_1 \land (\p x_2 + \p x_3)\at{\I1}=v_{23}}
    \\
    \sure{\p x_1\at{\I2} = w_1 \land (\p x_2 + \p x_3)\at{\I2}=w_{23}}
    \\
    \cpl{\p{view}_1\at{\I1}=\p{view}_1\at{\I2}}
    \\
    \sure{(\p x_2,\p x_3)\at{\I1} = (v_1,v_2)}
    \\
    \sure{(\p x_2,\p x_3)\at{\I2} = (w_1,w_2)}
    \\
    \pure{v_1=w_1 \land v_{23}=w_{23}}
  \end{grp}
  \byrules{c-frame,c-assoc}
  \whichproves
\E \hat\prob_2.
  \CC {\hat\prob_2} \svec{
    v_1,& v_2,& v_3\\w_1,& w_2,& w_3
  }.
  \begin{grp}
    \sure{\p x_i\at{\I1} = v_i}_{i\in\set{1,2,3}}
    \land
    \sure{\p x_i\at{\I2} = w_i}_{i\in\set{1,2,3}}
    \\
    \cpl{\p{view}_1\at{\I1}=\p{view}_1\at{\I2}}
    \\
    \pure{v_1=w_1 \land v_2+v_3=w_2+w_3}
  \end{grp}
  \byrules{c-transf}
  \whichproves
\E \hat\prob_2.
  \CC {\hat\prob_2} (\m{v},\m{w}).
  \begin{grp}
    \sure{\p x_i\at{\I1} = v_i}_{i\in\set{1,2,3}}
    \land
    \sure{\p x_i\at{\I2} = w_i}_{i\in\set{1,2,3}}
    \\
    \E {\hat\nu}.
      \CC{\hat\nu} (\m{u}_1,\m{u}_2).
        \sure{\p{view}_1\at{\I1}=\m{u}_1}\land
        \sure{\p{view}_1\at{\I2}=\m{u}_2}\land
        \pure{\m{u}_1=\m{u}_2}
    \\
    \pure{v_1=w_1 \land v_2+v_3=w_2+w_3}
  \end{grp}
  \bydef
  \whichproves
\E \hat\prob_3.
  \CC {\hat\prob_3} (\m{v},\m{w},\m{u}_1,\m{u}_2).
  \begin{grp}
    \sure{\p x_i\at{\I1} = v_i}_{i\in\set{1,2,3}}
    \land
    \sure{\p x_i\at{\I2} = w_i}_{i\in\set{1,2,3}}
    \\
    \sure{\p{view}_1\at{\I1}=\m{u}_1}\land
    \sure{\p{view}_1\at{\I2}=\m{u}_2}\land
    \\
    \pure{v_1=w_1 \land v_2+v_3=w_2+w_3 \land \m{u}_1=\m{u}_2}
  \end{grp}
  \byrules{c-skolem,c-assoc}
  \whichproves
\cpl*{
    \begin{conj*}
      \p x_1\at{\I1} = \p x_1\at{\I2}
      \land
      (\p x_2+\p x_3)\at{\I1} = (\p x_2+\p x_3)\at{\I2}
      \land
      \p{view}_1\at{\I1} = \p{view}_1\at{\I2}
    \end{conj*}
  }
  \bydef
\end{eqexplain}

\paragraph{From relational to unary}
The proof in the reverse direction follows the same strategy as the unary to relational direction, except the entailments between preconditions and postconditions are reversed; their proof steps in the previous section are however reversible, so we do not repeat the proof here.
There is one niggle:
what we can derive from the relational specification is the judgment
$
  P(\prob_0)\at{\I1} \land P(\prob_0)\at{\I2}
  \proves
  \WP {\m[\I1: \p{MPSAdd}, \I2: \p{MPSAdd}]}*{
    Q(\prob_0)\at{\I1} \land Q(\prob_0)\at{\I2}
  }
$
which in fact implies the unary specification (by ignoring component \I2),
but we have not provided \thelogic\ with rules to eliminate components.
The issue has been solved in LHC~\cite{d2022proving} with
the \emph{projection modality}
$ \P i. P = \fun a. \exists b\st P(a\m[i: b]) $,
which supports the principle
\[
  \infer*[lab=wp-proj]{
    P \proves \WP {\m{t}} {Q}
  }{
    \P i. P \proves \WP {(\m{t}\setminus i)} {\P i.Q}
  }
\]
The rule is sound in \thelogic\ model and could be used to fill this (small) gap
since
$
  P(\prob_0)\at{\I1}
  \proves
  \P\I2.\bigl(P(\prob_0)\at{\I1} \land P(\prob_0)\at{\I2}\bigr)
$ and
$
  \P\I2.\bigl(Q(\prob_0)\at{\I1} \land Q(\prob_0)\at{\I2}\bigr)
  \proves
  Q(\prob_0)\at{\I1}
$.
We did not emphasize projection in this paper to avoid distracting from the main novel contributions.

\subsection{Von Neumann Extractor}
\label{sec:appendix:ex:von-neumann}

  A randomness extractor is a mechanism that transforms a stream of
``low-quality'' randomness sources into a stream of ``high-quality''
randomness sources.
The Von Neumann extractor is perhaps the earliest instance of such mechanism,
and it converts a stream of independent coins with the same bias~$p$
into a stream of independent \emph{fair} coins.

In our language we can model the extractor, up to $N \in \Nat$ iterations,
as shown in \cref{fig:von-neumann}.
The program repeatedly flips two biased coins, and outputs the outcome of the first coin if the outcomes where different, otherwise it retries.
What we can prove is that the bits produced in \p{out} are independent fair coin flips: if we produced~$\ell$ bits we should be able to prove the postcondition
\[
  \var{Out}_\ell \is
  \distAs{\p{out}[0]\at{\I1}}{\Ber{\onehalf}} *
  \dots *
  \distAs{\p{out}[\ell-1]\at{\I1}}{\Ber{\onehalf}}.
\]
To know how many bits were produced, however,
we need to condition on \p{len}
obtaining the specification:
\[
  \gproves \WP {\m[\I1: \p{vn}(N)]} {
    \E \prob. \CC \prob \ell. \bigl(
      \sure{\Ip{len}{1} = \ell \leq N} *
      \var{Out}_\ell
    \bigr)
  }
\]
(Recall $ P \gproves Q \is P \land \ownall \proves Q \land \ownall $)

\begin{mathfig}[\small]
  \begin{proofoutline}
  \PREC{\ownall}\\
  \CODE{len:=0}\\
  \ASSR{\sure{\Ip{len}{1} = 0}}\\
  \ASSR{\E \prob. \CC \prob \ell. \bigl(
      \sure{\Ip{len}{1} = \ell \leq 0}
    \bigr)}
    \TAG[von-neumann:P0]
  \\
  \begin{proofjump}[rule:wp-loop,"invariant $P(i) =
    \E \prob. \CC \prob \ell. \bigl(
      \sure{\Ip{len}{1} = \ell \leq i} *
      \var{Out}_\ell
    \bigr)$"
  ]
  \CODE{repeat\ $N$:}\\
  \begin{proofindent}
\ASSR{P(i)}\\
  \CODE{coin_1 :~ Ber($p$)}\\
  \CODE{coin_2 :~ Ber($p$)}\\
  \ASSR{
    P(i)
    * \distAs{\p{coin}_1\at{\I1}}{\Ber{p}}
    * \distAs{\p{coin}_2\at{\I1}}{\Ber{p}}
  }\\
  \ASSR{
    \CC \prob \ell. \CC \beta b.
    \begin{pmatrix}
      \sure{\Ip{len}{1}=\ell\leq i} *
      \sure{(\p{coin}_1 \ne \p{coin}_2)\at{\I1} = b}
      \\ {}
      * \var{Out}_\ell *
      (\pure{b=1} \implies \distAs{\p{coin}_1\at{\I1}}{\Ber{\onehalf}})
    \end{pmatrix}
  }
  \TAG[von-neumann:ClCb]
  \\
  \begin{proofjump}[rule:c-wp-elim]
    \ASSR{
      \sure{\Ip{len}{1}=\ell\leq i} *
      \sure{(\p{coin}_1 \ne \p{coin}_2)\at{\I1} = b}
      \\ {}
      * \var{Out}_\ell *
      (\pure{b=1} \implies \distAs{\p{coin}_1\at{\I1}}{\Ber{\onehalf}})
    }\\
    \CODE{if coin_1 != coin$_2\ $ then:}\\
    \begin{proofindent}
      \ASSR{
        \sure{\Ip{len}{1}=\ell\leq i} *
        \sure{(\p{coin}_1 \ne \p{coin}_2)\at{\I1}}
        \\ {}
        * \var{Out}_\ell *
        \distAs{\p{coin}_1\at{\I1}}{\Ber{\onehalf}}
      }\\
      \CODE{out[len] := coin_1}\\
      \CODE{len := len+1}\\
      \ASSR{
        \sure{\Ip{len}{1}=\ell+1\leq i+1} *
        \sure{(\p{coin}_1 \ne \p{coin}_2)\at{\I1}}
        \\ {}
        * \var{Out}_\ell *
        \distAs{\p{coin}_1\at{\I1}}{\Ber{\onehalf}}
        * \sure{(\p{out[len]} = \p{coin}_1)\at{\I1}}
      }\\
    \end{proofindent}
    \\
    \ASSR{
      \sure{(\p{coin}_1 \ne \p{coin}_2)\at{\I1}=b}
      * \var{Out}_\ell \\{} *
      \begin{cases}
        \sure{\Ip{len}{1}=\ell+1\leq i+1} *
        \distAs{\p{out[len]}\at{\I1}}{\Ber{\onehalf}}
        \CASE b=1
        \\
        \sure{\Ip{len}{1}=\ell\leq i+1}
        \CASE b=0
      \end{cases}
    }
  \end{proofjump}
  \\
  \ASSR{
    \CC \prob \ell.\CC \beta b.
      \sure{(\p{coin}_1 \ne \p{coin}_2)\at{\I1}=b}
      * \var{Out}_\ell * \dots
  }
  \\
  \ASSR{
    \CC {\prob'} {\ell'}. \bigl(
      \sure{\Ip{len}{1} = \ell' \leq i+1} *
      \var{Out}_{\ell'}
    \bigr)
  }
  \quad\TAG[von-neumann:Pi+1]
  \end{proofindent}
  \end{proofjump}
  \\
  \POST{\E \prob. \CC \prob \ell. \bigl(
        \sure{\Ip{len}{1} = \ell \leq N} *
        \var{Out}_\ell
      \bigr)}
  \end{proofoutline}
  \caption{Proof outline of the Von Neumann extractor example.}
  \label{fig:von-neumann-outline}
\end{mathfig}

The \thelogic\ proof of this specification is shown
in the outline in \cref{fig:von-neumann-outline}.
The postcondition straightforwardly generalizes to a loop invariant
\[
  P(i) =
  \E \prob. \CC \prob \ell. \bigl(
    \sure{\Ip{len}{1} = \ell \leq i} *
    \var{Out}_\ell
  \bigr)
\]
At step~\eqref{von-neumann:P0} we show, by
using \ref{rule:c-unit-l} and the definition of $\sure{\hole}$,
that we can obtain the loop invariant with $i=0$:
$P(0) = \E \prob. \CC \prob \ell. \bigl(
      \sure{\Ip{len}{1} = \ell \leq 0} *
      \var{Out}_0
    \bigr) = \E \prob. \CC \prob \ell. \bigl(
      \sure{\Ip{len}{1} = \ell \leq 0}
    \bigr). $

For the proof of the body of the loop we can assume~$P(i)$ and we need to prove
the postcondition~$P(i+1)$.
After sampling the two coins,
we reach the point where we apply the fundamental insight behind
the extractor, at step~\eqref{von-neumann:ClCb}.
The key idea is that with some probability~$q$ the two coins will be different,
in which case the outcomes of the two coins can be either $(0,1)$ or $(1,0)$,
which both have the same probability $p(1-p)$.
Therefore, if the coins are different, $\p{coin}_1=0$ and $\p{coin}_1=1$
have the same probability, \ie $\p{coin}_1$ looks like a fair coin.

\thelogic\ is capable of representing this reasoning as follows.

We start with two independent biased coins, which we can combine
into a random variable $(\p{coin}_1 \ne \p{coin}_2,\p{coin}_1)$
recording whether the two outcomes where different and the outcome
of the first coin;
it is easy to derive (using \ref{rule:prod-unsplit} and \ref{rule:dist-fun}):
\[
  \distAs{\p{coin}_1\at{\I1}}{\Ber{p}} *
  \distAs{\p{coin}_2\at{\I1}}{\Ber{p}}
  \proves
  \distAs{(\p{coin}_1 \ne \p{coin}_2,\p{coin}_1)\at{\I1}}{\prob_0}
\]
where
$
  \prob_0 \is \left(\DO{
    c_1 <- \Ber{p};
    c_2 <- \Ber{p};
    \return (c_1 \ne c_2, c_1)
  }\right).
$
Now the crucial observation of the extractor can be phrased as
a reformulation of~$\prob_0$:
\begin{align*}
  \prob_0 &= \beta \fuse \krnl
  &
  \beta &\is \Ber{q}
  &
  \krnl(1) &\is \Ber{\onehalf} &
  \krnl(0) &\is \Ber{q'}\end{align*}
Here one first determines
(with some probability~$q$ which is a function of~$p$)
whether the two coins will be different or equal,
and then generates $c_1$ accordingly:
in the ``different'' branch ($b=1$) the first coin is distributed as $\Ber{\onehalf}$ while in the ``equal'' branch ($b=0$) the first coin is distributed with some bias~$q'$ (also a function of~$p$).

So using $ \prob_0 = \beta \fuse \krnl $ we derive:
\begin{eqexplain}
  &
  \distAs{(\p{coin}_1 \ne \p{coin}_2,\p{coin}_1)\at{\I1}}{(\beta \fuse \krnl)}
\whichproves
\CC {\beta \fuse \krnl} {(b,c_1)}.
    \sure{(\p{coin}_1 \ne \p{coin}_2)\at{\I1} = b \land \p{coin}_1\at{\I1}=c_1}
  \byrules{c-unit-r}
\whichproves
\CC \beta b.\CC {\krnl(b)} {c_1}.
    \sure{(\p{coin}_1 \ne \p{coin}_2)\at{\I1} = b} *
    \sure{\p{coin}_1\at{\I1}=c_1}
  \byrules{c-fuse}
\whichproves
\CC \beta b. \bigl(
    \sure{(\p{coin}_1 \ne \p{coin}_2)\at{\I1} = b} *
    \CC {\krnl(b)} {c_1}.
      \sure{\p{coin}_1\at{\I1}=c_1}
  \bigr)
  \byrules{sure-str-convex}
\whichproves
\CC \beta b. \bigl(
    \sure{(\p{coin}_1 \ne \p{coin}_2)\at{\I1} = b} *
    \pure{b=1} \implies
    \CC {\Ber{\onehalf}} {c_1}.
      \sure{\p{coin}_1\at{\I1}=c_1}
  \bigr)
  \byrules{c-cons}
\whichproves
\CC \beta b. \bigl(
    \sure{(\p{coin}_1 \ne \p{coin}_2)\at{\I1} = b} *
    \pure{b=1} \implies
      \distAs{\p{coin}_1\at{\I1}}{\Ber{\onehalf}}
  \bigr)
  \byrules{c-unit-r}
\end{eqexplain}

The application of \ref{rule:c-fuse}
allows us to first condition on $\p{coin}_1 \ne \p{coin}_2$,
and then the first coin.
We can then weaken the case where $b=0$ and only record that
if $b=1$ then $\p{coin}_1$ is a fair coin.

This takes us through step~\eqref{von-neumann:ClCb} of \cref{fig:von-neumann-outline}.
Now the precondition of the if statement is conditional on \p{len} and
$\p{coin}_1 \ne \p{coin}_2$.
Intuitively, we want to evaluate the effects of the if statement
in the two possible outcomes and put together the results.
This is precisely the purpose of the \ref{rule:c-wp-swap} rule,
which together with \ref{rule:c-cons} gives us the derived rule:

\begin{proofrule}
  \infer*[lab=c-wp-elim]{
    \forall v\in\psupp(\prob) \st
    P(v) \gproves \WP {\m{t}} {Q(v)}
  }{
    \CC \prob v.P(v) \gproves \WP {\m{t}} {\CC \prob v.Q(v)}
  }
  \label{rule:c-wp-elim}
\end{proofrule}

By applying the rule twice
(once on the conditioning on \p{len},
and the on the conditioning on $\p{coin}_1 \ne \p{coin}_2$),
we can process the if statement case by case,
and then combine, under the same conditioning, the postconditions
we obtained in each case.
The ``else'' branch is a \code{skip} (omitted) so it preserves the precondition
in the $b=0$ branch.
In the ``then'' branch we can assume $b=1$; we apply \ref{rule:wp-rl-assign}
to the assignments and combine the result with the ``else'' branch by making
the overall postcondition of the if statement to be parametric
on the value of~$b$ (and~$\ell$).

The last non-obvious step is~\eqref{von-neumann:Pi+1} in \cref{fig:von-neumann-outline},
where we show that the conditional postcondition of the if statement
implies the loop invariant $P(i+1)$.
Let
\[
  K(\ell,b) =
  \begin{cases}
    \sure{\Ip{len}{1}=\ell+1\leq i+1} *
    \distAs{\p{out[len]}\at{\I1}}{\Ber{\onehalf}}
    \CASE b=1
    \\
    \sure{\Ip{len}{1}=\ell\leq i+1}
    \CASE b=0
  \end{cases}
\]
then the step is proven as follows:
\begin{eqexplain}
  &
  \CC \prob \ell.\CC \beta b. \bigl(
    \sure{(\p{coin}_1 \ne \p{coin}_2)\at{\I1}=b}
    * \var{Out}_\ell * K(\ell,b)
  \bigr)
  \whichproves
\CC {\prob \pprod \beta} (\ell,b).\bigl(
    \sure{(\p{coin}_1 \ne \p{coin}_2)\at{\I1}=b}
    * \var{Out}_\ell * K(\ell,b)
  \bigr)
  \byrule{c-assoc}
  \whichproves
\CC {\prob \pprod \beta} (\ell,b).\bigl(
    \var{Out}_\ell * K(\ell,b)
  \bigr)
  \byrules{c-cons}
  \whichproves
\CC {\prob''} (\ell',\ell).
    \begin{cases}
      \sure{\Ip{len}{1}=\ell'\leq i+1} *
      \var{Out}_{\ell'-1} *
      \distAs{\p{out[len]}\at{\I1}}{\Ber{\onehalf}}
      \CASE \ell'=\ell+1
      \\
      \sure{\Ip{len}{1}=\ell'\leq i+1} *
      \var{Out}_{\ell'}
      \CASE \ell'=\ell
    \end{cases}
  \byrules{c-transf}
  \whichproves
\CC {\prob''\circ\inv{\proj_1}} \ell'.\bigl(
    \sure{\Ip{len}{1}=\ell'\leq i+1} *
    \var{Out}_{\ell'}
  \bigr)
  \byrules{c-dist-proj}
  \whichproves
\E \prob'.
  \CC {\prob'} \ell'.\bigl(
    \sure{\Ip{len}{1}=\ell'\leq i+1} *
    \var{Out}_{\ell'}
  \bigr)
\end{eqexplain}

The application of \ref{rule:c-transf}
uses the function $f(\ell,b) = (\ell+b, \ell)$ to introduce the new $\ell'$
and then we project away the unused~$\ell$ using the derived \ref{rule:c-dist-proj} (note that the rule applies to $\sure{\hole}$ assertions and multiple ownership assertions in a separating conjunction thanks to \ref{rule:prod-split} and \ref{rule:prod-unsplit}).

\subsection{Monte Carlo: $\p{BETW\_SEQ} \leq \p{BETW}$}
\label{sec:appendix:ex:monte}

  Recall the example sketched in \cref{sec:intro}
where one wants to compare the accuracy of variants of a  Monte Carlo algorithm
(in \cref{fig:between-code})
to estimate whether a number~$x$ is within the extrema of some set~$S$.
\Cref{fig:between-code-repeat} reproduces the code here for convenience,
with the self-assignments to~\p{l} and~\p{r} expanded to their form with
a temporary (primed) variable storing the old value of the assigned variable.

The verification task we accomplish in this section is to compare
the accuracy of the two Monte Carlo algorithms
\p{BETW\_SEQ} and \p{BETW} (the optimized one).

This goal can be encoded as the judgment:
\[
  \cpl{\Ip{l}{1}=\Ip{r}{1}=\Ip{l}{2}=\Ip{r}{2}=0}\withp{\m{\permap}}
  \proves
  \WP {\m[
    \I1: \code{BETW_SEQ($x, S$)},
    \I2: \code{BETW($x, S$)}
  ]} {
    \cpl{\Ip{d}{1} \leq \Ip{d}{2}}
  }
\]
where
$\m{\permap}$ contains full permissions for all the variables.
The judgment states, through the relational lifting, that it is more likely
to get a positive answer from \p{BETW} than from \p{BETW\_SEQ}.
The challenge is implementing the intuitive relational argument
sketched in \cref{sec:intro},
in the presence of very different looping structures.

By \ref{rule:wp-rl-assign}, it is easy to prove that
\[
  \cpl{\Ip{l}{1}\leq\Ip{l}{2} \land
       \Ip{r}{1}\leq\Ip{r}{2}}\withp{\m{\permap}}
  \proves
  \WP {\m[
    \I1: \code{d := r&&l},
    \I2: \code{d := r&&l}
  ]} {
    \cpl{\Ip{d}{1} \leq \Ip{d}{2}}
  }
\]
Therefore we will focus on proving that the loops produce distributions
satisfying $
  Q = \cpl{\Ip{l}{1}\leq\Ip{l}{2} \land
       \Ip{r}{1}\leq\Ip{r}{2}}.
$

Now the main obstacle is that we have a single loop at component~$\I2$
looping $2N$ times, and two sequentially composed loops in $\I1$,
each running $N$ iterations.
In a standard coupling-based logic like pRHL,
such structural differences are usually bridged by invoking a
syntactic transformation (\eg loop splitting) that is provided
by a library of transformations that were proven separately, using meta-reasoning directly on the semantic model,
by the designer of the logic.
In \thelogic\ we aim at:
\begin{itemize}
  \item Avoiding resorting to syntactic transformations;
  \item Avoiding relying on an ad-hoc (incomplete) library of transformations;
  \item Avoiding having to argue for correctness of transformations semantically.
\end{itemize}
To achieve this, we formulate the loop-splitting pattern as a \emph{rule}
which allows to consider $N$ iterations of component $\I2$ against the first loop of $\I1$, and the rest against the second loop of $\I1$.
\begin{proofrule}
  \infer*[lab=wp-loop-split]{
  P_1(N_1) \proves P_2(0)
  \\\\
  \forall i < N_1 \st
    P_1(i) \proves \WP{\m[\I1: t_1, \I2: t]}{P_1(i+1)}
  \\\\
  \forall j < N_2 \st
    P_2(j) \proves \WP{\m[\I1: t_2, \I2: t]}{P_2(j+1)}
}{
  P_1(0) \proves
  \WP{\m[
    \I1: (\Loop{N_1}{t_1}\p;\Loop{N_2}{t_2}),
    \I2: \Loop{(N_1+N_2)}{t}
  ]}{P_2(N_2)}
}   \relabel{rule:wp-loop-split}
\end{proofrule}

Most importantly, such rule is \emph{derivable} from the primitive rules of \thelogic, avoiding semantic reasoning all together.
Once this rule is proven, it can be used any time need for such pattern arises.
Before showing how this rule is derivable,
which we do in \cref{proof:wp-loop-split},
let us show how to use it to close our example.

We want to apply \ref{rule:wp-loop-split} with $N_1=N_2=N$,
$t_1$ as the body of the loop of \p{BelowMax},
$t_2$ as the body of the loop of \p{AboveMin},
and
$t$ as the body of the loop of \p{BETW}.
We define the two loop invariants as follows:
\begin{align*}
  P_1(i) &\is
    \cpl{
      \Ip{r}{1}\leq\Ip{r}{2}
      \land
      \Ip{l}{1}=0\leq\Ip{l}{2}
    }
  &
  P_2(j) &\is
    \cpl{
      \Ip{r}{1}\leq\Ip{r}{2}
      \land
      \Ip{l}{1}\leq\Ip{l}{2}
    }
\end{align*}
Note that they both ignore the iteration number.
Clearly we have:
\begin{align*}
  P_0 &\proves P_1(0)
  &
  P_1(N) &\proves P_2(0)
  &
  P_2(N) &\proves Q
\end{align*}

By applying \ref{rule:wp-loop-split} we reduce the goal to the triples:
\begin{align*}
  \cpl{
    \Ip{r}{1}\leq\Ip{r}{2}
    \land
    \Ip{l}{1}=0\leq\Ip{l}{2}
  }
  &\proves
  \WP{\m[
    \I1: t_1,
    \I2: t
  ]}{
    \cpl{
      \Ip{r}{1}\leq\Ip{r}{2}
      \land
      \Ip{l}{1}=0\leq\Ip{l}{2}
    }
  }
  \\
  \cpl{
    \Ip{r}{1}\leq\Ip{r}{2}
    \land
    \Ip{l}{1}\leq\Ip{l}{2}
  }
  &\proves
  \WP{\m[
    \I1: t_2,
    \I2: t
  ]}{
    \cpl{
      \Ip{r}{1}\leq\Ip{r}{2}
      \land
      \Ip{l}{1}\leq\Ip{l}{2}
    }
  }
\end{align*}
which are easy to obtain by replicating the standard coupling-based reasoning
steps, using \ref{rule:coupling} and \ref{rule:wp-rl-assign}.

\medskip
As promised, we now prove \ref{rule:wp-loop-split} is derivable.
\begin{lemma}
\label{proof:wp-loop-split}
  \Cref{rule:wp-loop-split} is sound.
\end{lemma}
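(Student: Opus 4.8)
The plan is to avoid any syntactic loop-splitting transformation and instead discharge the whole goal with a \emph{single} application of the unary invariant rule \ref{rule:wp-loop} to the loop $\Loop{(N_1+N_2)}{t}$ at index $\I2$, using an invariant that ``stores up'' the remaining work of the $\I1$ program as a nested weakest precondition. This mirrors the derivation of \ref{rule:wp-loop-lockstep}, but with a \emph{two-phase} invariant to account for the fact that the $\I1$ side switches from $t_1$ to $t_2$ after $N_1$ iterations. Concretely, for $0 \leq k \leq N_1+N_2$ I would define
\[
  R(k) \is
  \begin{cases}
    \WP{\m[\I1: \Loop{k}{t_1}]}{P_1(k)}
      & 0 \leq k \leq N_1, \\
    \WP{\m[\I1: \Loop{N_1}{t_1}]}{\WP{\m[\I1: \Loop{(k-N_1)}{t_2}]}{P_2(k-N_1)}}
      & N_1 < k \leq N_1+N_2.
  \end{cases}
\]
Keeping the second phase as a \emph{nested} pair of $\I1$-WPs (rather than a sequential composition) is the crucial design choice: it ensures I only ever need \ref{rule:wp-seq} in its available ``nested $\proves$ composed'' direction, and never its (unavailable) converse.

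The boundary conditions are straightforward. The value $R(0) = \WP{\m[\I1:\Loop{0}{t_1}]}{P_1(0)}$ follows from $P_1(0)$ by \ref{rule:wp-loop-0}, and the value $R(N_1+N_2) = \WP{\m[\I1:\Loop{N_1}{t_1}]}{\WP{\m[\I1:\Loop{N_2}{t_2}]}{P_2(N_2)}}$ entails $\WP{\m[\I1:(\Loop{N_1}{t_1}\p;\Loop{N_2}{t_2})]}{P_2(N_2)}$ by a single application of \ref{rule:wp-seq}. Hence \ref{rule:wp-loop} yields $P_1(0) \proves \WP{\m[\I2:\Loop{(N_1+N_2)}{t}]}{R(N_1+N_2)}$, and \ref{rule:wp-cons} together with \ref{rule:wp-nest} repackages this as the desired binary WP with conclusion $\WP{\m[\I1:(\Loop{N_1}{t_1}\p;\Loop{N_2}{t_2}), \I2:\Loop{(N_1+N_2)}{t}]}{P_2(N_2)}$.

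The heart of the argument is the step obligation $R(k) \proves \WP{\m[\I2:t]}{R(k+1)}$ for each $k < N_1+N_2$, which splits into three cases. For $k < N_1$ the reasoning is identical to the \ref{rule:wp-loop-lockstep} derivation: rewrite the second premise by \ref{rule:wp-nest}, apply it under the stored $\WP{\m[\I1:\Loop{k}{t_1}]}{\cdot}$ via \ref{rule:wp-cons}, absorb the new $t_1$-iteration with \ref{rule:wp-loop-unf}, and swap indices back with \ref{rule:wp-nest}. The symmetric case $N_1 < k$ works the same way, using the third premise and absorbing into $\Loop{(k-N_1)}{t_2}$.

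The main obstacle is the transition case $k = N_1$, where the $\I1$ program passes from the first loop to the second: here I would chain the bridging premise $P_1(N_1) \proves P_2(0)$ with the $j=0$ instance of the third premise, and turn the single body $t_2$ into $\Loop{1}{t_2}$ using \ref{rule:wp-loop-0} followed by \ref{rule:wp-loop-unf} (which, composed, give $\WP{\m[\I1:t_2]}{Y} \proves \WP{\m[\I1:\Loop{1}{t_2}]}{Y}$), before swapping indices with \ref{rule:wp-nest}. Getting this gluing right --- and, more generally, maintaining the discipline of carrying the $\I1$ program as a nested WP so that every use of \ref{rule:wp-seq} and \ref{rule:wp-loop-unf} respects their one-directional form --- is the only delicate part; everything else is a routine replay of the lockstep pattern, and the whole derivation stays inside the logic with no appeal to the program semantics.
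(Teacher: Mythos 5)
Your proposal is correct and follows essentially the same route as the paper's own derivation: the same two-phase invariant (the paper's $P(k)$ is your $R(k)$), the same single application of \ref{rule:wp-loop} to the $\I2$ loop, the same three-way case split on the step obligation, and the same gluing at $k=N_1$ via $P_1(N_1)\proves P_2(0)$ and a $\Loop{1}{t_2}$ wrapper. The only differences are cosmetic (the paper obtains $\WP{\m[\I1: t_2]}{Q}\proves\WP{\m[\I1:\Loop{1}{t_2}]}{Q}$ by a one-iteration instance of \ref{rule:wp-loop} rather than your \ref{rule:wp-loop-0}-then-\ref{rule:wp-loop-unf} composition, and it performs the \ref{rule:wp-nest}/\ref{rule:wp-seq} repackaging at the start rather than the end).
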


\begin{proof}
  Assume:
  \begin{gather}
  P_1(N_1) \proves P_2(0)
  \label{wp-loop-split:P1-P2}
  \\
  \forall i < N_1 \st
    P_1(i) \proves \WP{\m[\I1: t_1, \I2: t]}{P_1(i+1)}
  \label{wp-loop-split:loop1}
  \\
  \forall j < N_2 \st
    P_2(j) \proves \WP{\m[\I1: t_2, \I2: t]}{P_2(j+1)}
  \label{wp-loop-split:loop2}
  \end{gather}
  We want to show:
  \[
    P_1(0) \proves
    \WP{\m[
      \I1: (\Loop{N_1}{t_1}\p;\Loop{N_2}{t_2}),
      \I2: \Loop{(N_1+N_2)}{t}
    ]}{P_2(N_2)}
  \]
  First, by using \ref{rule:wp-nest} and \ref{rule:wp-seq},
  we can reduce the goal to:
  \[
    P_1(0) \proves
    \WP{\m[\I2: \Loop{(N_1+N_2)}{t}]}[\big]{
      \WP{\m[\I1: \Loop{N_1}{t_1}]}{
        \WP{\m[\I1:\Loop{N_2}{t_2}]}{P_2(N_2)}
       }
    }
  \]
  Now define:
  \[
    P(k) =
    \begin{cases}
    \WP {\m[\I1: \Loop{k}{t_1}]} { P_1(k) }
    \CASE k \leq N_1
    \\
    \WP {\m[\I1: \Loop{N_1}{t_1}]}[\big]{
      \WP {\m[\I1: \Loop{(k-N_1)}{t_2}]} { P_2(k-N_1) }
    }
    \CASE k > N_1
    \end{cases}
  \]
  We have:
  \begin{gather}
    P_1(0) \proves P(0)
    \label{wp-loop-split:P1-P}
    \\
    P(N_1+N_2) \proves
    \WP{\m[\I1: \Loop{N_1}{t_1}]}{
      \WP{\m[\I1:\Loop{N_2}{t_2}]}{P_2(N_2)}
     }
    \label{wp-loop-split:P-P2}
  \end{gather}
  Entailment~\eqref{wp-loop-split:P1-P} holds by \ref{rule:wp-loop-0},
  and \eqref{wp-loop-split:P-P2} holds by definition.
  Therefore, using \ref{rule:wp-cons} we reduced the goal to
  \[
    P(0) \proves \WP{\m[\I2: \Loop{(N_1+N_2)}{t}]}{P(N_1+N_2)}
  \]
  which we can make progress on using \ref{rule:wp-loop}.
  We are left with proving:
  \[
    \forall k < N_1+N_2\st
    P(k) \proves
    \WP {\m[\I2: t]} {P(k+1)}
  \]
  We distinguish three cases:
  \begin{casesplit}
  \case[$k<N_1$] By unfolding the definition of~$P$ we obtain:
    \[
      \WP {\m[\I1: \Loop{k}{t_1}]} { P_1(k) }
      \proves
      \WP {\m[\I2: t]}[\big]{
        \WP {\m[\I1: \Loop{(k+1)}{t_1}]} { P_1(k+1) }
      }
    \]
    Using \ref{rule:wp-loop-unf} on the inner WP we obtain:
    \[
      \WP {\m[\I1: \Loop{k}{t_1}]} { P_1(k) }
      \proves
      \WP {\m[\I2: t]}[\big]{
        \WP {\m[\I1: \Loop{(k)}{t_1}]} { \WP{\m[\I1: t_1]}{P_1(k+1)} }
      }
    \]
    By \ref{rule:wp-nest} we can swap the two topmost WPs:
    \[
      \WP {\m[\I1: \Loop{k}{t_1}]} { P_1(k) }
      \proves
      \WP {\m[\I1: \Loop{k}{t_1}]}[\big]{
        \WP {\m[\I2: t]} { \WP{\m[\I1: t_1]}{P_1(k+1)} }
      }
    \]
    Finally, by \ref{rule:wp-cons} we can eliminate the topmost WP from both sides:
    \[
      P_1(k)
      \proves
      \WP {\m[\I2: t]} { \WP{\m[\I1: t_1]}{P_1(k+1)} }
    \]
    which by \ref{rule:wp-nest} is our assumption~\eqref{wp-loop-split:loop1}
    with $i=k$.

  \case[$k=N_1$] By unfolding the definition of~$P$ we obtain:
    \[
      \WP {\m[\I1: \Loop{N_1}{t_1}]} { P_1(N_1) }
      \proves
      \WP {\m[\I2: t]}*{
        \WP {\m[\I1: \Loop{N_1}{t_1}]}[\big]{
          \WP {\m[\I1: \Loop{1}{t_2}]} { P_2(0) }
        }
      }
    \]
    By a trivial application of \ref{rule:wp-loop}
    we have $
      \WP {\m[\I1: t]} { Q }
      \proves
      \WP {\m[\I1: \Loop{1}{t}]} { Q }
    $, so we can simplify the innermost WP to:
    \[
      \WP {\m[\I1: \Loop{N_1}{t_1}]} { P_1(N_1) }
      \proves
      \WP {\m[\I2: t]}*{
        \WP {\m[\I1: \Loop{N_1}{t_1}]}[\big]{
          \WP {\m[\I1: t_2]} { P_2(1) }
        }
      }
    \]
    Then by \ref{rule:wp-nest} we can swap the topmost WPs:
    \[
      \WP {\m[\I1: \Loop{N_1}{t_1}]} { P_1(N_1) }
      \proves
      \WP {\m[\I1: \Loop{N_1}{t_1}]}*{
        \WP {\m[\I2: t]}[\big]{
          \WP {\m[\I1: t_2]} { P_2(1) }
        }
      }
    \]
    By \ref{rule:wp-cons} we can eliminate the topmost WP from both sides:
    \[
      P_1(N_1)
      \proves
      \WP {\m[\I2: t]}[\big]{
        \WP {\m[\I1: t_2]} { P_2(1) }
      }
    \]
    Using assumption~\eqref{wp-loop-split:P-P2} we can reduce this to:
    \[
      P_2(0)
      \proves
      \WP {\m[\I2: t]}[\big]{
        \WP {\m[\I1: t_2]} { P_2(1) }
      }
    \]
    which by \ref{rule:wp-nest} is our assumption~\eqref{wp-loop-split:loop2}
    with $j=0$.

  \case[$k>N_1$] By unfolding the definition of~$P$ we obtain:
    \begin{align*}
      &\WP {\m[\I1: \Loop{N_1}{t_1}]}{
        \WP {\m[\I1: \Loop{(k-N_1)}{t_2}]} { P_2(k-N_1) }
      }
      \\ {}\proves{}&
      \WP {\m[\I2: t]}*{
        \WP {\m[\I1: \Loop{N_1}{t_1}]}[\big]{
          \WP {\m[\I1: \Loop{(k-N_1+1)}{t_2}]} { P_2(k-N_1+1) }
        }
      }
    \end{align*}
    Using \ref{rule:wp-loop-unf} on the inner WP we obtain:
    \begin{align*}
      &\WP {\m[\I1: \Loop{N_1}{t_1}]}[\big]{
        \WP {\m[\I1: \Loop{(k-N_1)}{t_2}]} { P_2(k-N_1) }
      }
      \\ {}\proves{}&
      \WP {\m[\I2: t]}*{
        \WP {\m[\I1: \Loop{N_1}{t_1}]}[\big]{
          \WP {\m[\I1: \Loop{(k-N_1)}{t_2}]} {
            \WP{\m[\I1: t_2]}{P_2(k-N_1+1)}
          }
        }
      }
    \end{align*}
    By \ref{rule:wp-nest} we can push the topmost WP inside:
    \begin{align*}
      &\WP {\m[\I1: \Loop{N_1}{t_1}]}[\big]{
        \WP {\m[\I1: \Loop{(k-N_1)}{t_2}]} { P_2(k-N_1) }
      }
      \\ {}\proves{}&
      \WP {\m[\I1: \Loop{N_1}{t_1}]}*{
        \WP {\m[\I1: \Loop{(k-N_1)}{t_2}]}[\big]{
          \WP {\m[\I2: t]} {
            \WP{\m[\I1: t_2]}{P_2(k-N_1+1)}
          }
        }
      }
    \end{align*}
    Finally, by \ref{rule:wp-cons} we can eliminate the topmost WPs from both sides:
    \begin{equation*}
    P_2(k-N_1)
    \proves
    \WP {\m[\I2: t]} {
      \WP{\m[\I1: t_2]}{P_2(k-N_1+1)}
    }
    \end{equation*}
    which by \ref{rule:wp-nest} is our assumption~\eqref{wp-loop-split:loop2}
    with $j=k-N_1$.
  \qedhere
  \end{casesplit}
\end{proof}  

\subsection{Monte Carlo: Equivalence Between \p{BETW\_MIX} and \p{BETW\_SEQ}}

  \Cref{fig:between-code-repeat} shows another way in which one can
approximately compute the ``between'' function: \p{BETW\_MIX}.
In this example we want to prove the equivalence between
\p{BETW\_MIX} and \p{BETW\_SEQ}.
Again, the main obstacle to overcome in the proof is that
the structure of the two programs is very different.
\p{BETW\_SEQ} has two loops of~$N$ iterations, with one sample per iteration.
\p{BETW\_MIX} has a single loop of~$N$ iterations, but it samples twice
per iteration.
Note that the equivalence cannot be understood as a generic program transformation: the order in which the samples are taken in the two programs
is drastically different; they are only equivalent because the calculations
done on each of these independent samples are independent from one another.

Intuitively, we want to produce a proof that aligns each iteration
of the first loop of \p{BETW\_SEQ} with half of each iteration of \p{BETW\_MIX},
and each iteration of the second loop of \p{BETW\_SEQ} with the second half of each iteration of \p{BETW\_MIX}.
In the same vein as the previous example, we want to formalize the
proof pattern as a rule that aligns the loops as desired,
prove the rule is derivable, and apply it to the example.
A rule encoding the above pattern is the following:
\begin{proofrule}
  \infer*[lab=wp-loop-mix]{
  \forall i < N \st
    P_1(i) \proves \WP{\m[\I1: t_1, \I2: t_1']}{P_1(i+1)}
  \\
  \forall i < N \st
    P_2(i) \proves \WP{\m[\I1: t_2, \I2: t_2']}{P_2(i+1)}
}{
  P_1(0) * P_2(0)
  \proves
  \WP{\m[
    \I1: (\Loop{N}{t_1}\p;\Loop{N}{t_2}),
    \I2: \Loop{N}{(t_1';t_2')}
  ]}{P_1(N) * P_2(N)}
}   \relabel{rule:wp-loop-mix}
\end{proofrule}
Before showing how this rule is derivable,
which we do in \cref{proof:wp-loop-mix},
let us show how to use it to close our example.

We want to prove the goal:
\[
  \begin{conj}
    \sure{\Ip{r}{1} = \Ip{l}{1} = 0} \land
    \sure{\Ip{r}{2} = \Ip{l}{2} = 0}
  \end{conj}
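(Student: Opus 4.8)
The statement is the relational triple
\[
  \bigl(\sure{\Ip{r}{1}=\Ip{l}{1}=0} \land \sure{\Ip{r}{2}=\Ip{l}{2}=0}\bigr)\withp{\m{\permap}}
  \proves
  \WP{\m[\I1: \code{BETW_SEQ($x, S$)}, \I2: \code{BETW_MIX($x, S$)}]}{\cpl{\Ip{d}{1}=\Ip{d}{2}}},
\]
with $\m{\permap}$ granting full permissions on all relevant variables. The plan is to carry the coupling $\cpl{\Ip{l}{1}=\Ip{l}{2}} * \cpl{\Ip{r}{1}=\Ip{r}{2}}$ through the loops and then let the trailing $\p{d}$-assignments transport equality of $\p{l}$ and $\p{r}$ to equality of $\p{d}$.

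First I would use \ref{rule:wp-nest} and \ref{rule:wp-seq} to peel off the trailing assignment to $\p{d}$ on both components, reducing the goal to establishing $\cpl{\Ip{l}{1}=\Ip{l}{2} \land \Ip{r}{1}=\Ip{r}{2}}$ after the two loops and then closing with \ref{rule:wp-rl-assign} (one application per component) followed by \ref{rule:rl-cons} to project onto $\cpl{\Ip{d}{1}=\Ip{d}{2}}$. To prime the loop invariants I would rewrite the precondition into $P_1(0) * P_2(0)$, where $P_1 \is \cpl{\Ip{l}{1}=\Ip{l}{2}}$ and $P_2 \is \cpl{\Ip{r}{1}=\Ip{r}{2}}$: by \ref{rule:sure-merge}, \ref{rule:and-to-star} and \ref{rule:sure-dirac} the precondition splits into $\sure{\Ip{l}{1}=0} * \sure{\Ip{l}{2}=0} * \sure{\Ip{r}{1}=0} * \sure{\Ip{r}{2}=0}$, and \ref{rule:coupling} with the Dirac coupling $\dirac{(0,0)}$ produces $\cpl{\Ip{l}{1}=\Ip{l}{2}}$ and $\cpl{\Ip{r}{1}=\Ip{r}{2}}$; since these two liftings constrain the disjoint variable groups $\set{\p{l}}$ and $\set{\p{r}}$, their combination is exactly the separating conjunction $P_1(0) * P_2(0)$ demanded by the loop rule.

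The core of the argument is the derived rule \ref{rule:wp-loop-mix}, applied with invariants $P_1,P_2$, with $t_1,t_2$ the loop bodies of \p{AboveMin} and \p{BelowMax} on $\I1$, and $t_1',t_2'$ the two halves of \p{BETW\_MIX}'s loop body on $\I2$. The hard part will be an alignment mismatch: \ref{rule:wp-loop-mix} pairs the first loop on $\I1$ with the first half of the $\I2$ body, yet \p{BETW\_SEQ} runs \p{BelowMax} (which writes $\p{r}$) before \p{AboveMin} (which writes $\p{l}$), whereas \p{BETW\_MIX} updates $\p{l}$ before $\p{r}$ inside each iteration. The naive pairing would therefore align an $\p{r}$-update with an $\p{l}$-update, for which no coupling of a single $\prob_S$-sample can equate the two sides. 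I would resolve this by observing that \p{BelowMax} and \p{AboveMin} have disjoint read/write footprints ($\set{\p{r}}$ versus $\set{\p{l}}$) and hence commute, so the loop portion of \p{BETW\_SEQ} denotes the same store-to-store map as \p{AboveMin} followed by \p{BelowMax}; after this reordering the pairing becomes (\p{AboveMin}-body, $\p{l}$-half) and (\p{BelowMax}-body, $\p{r}$-half), each pair touching a single shared variable, and \ref{rule:wp-loop-mix} applies directly.

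Discharging the two per-iteration premises is then routine: for the $\p{l}$-pair I would couple the two $\prob_S$-samples identically with \ref{rule:coupling}, so that, given $\Ip{l}{1}=\Ip{l}{2}$ on entry, the two $\p{l}$-assignments yield equal values and \ref{rule:wp-rl-assign} re-establishes $P_1$; the $\p{r}$-pair is symmetric and re-establishes $P_2$. The conclusion of \ref{rule:wp-loop-mix} gives $P_1(N) * P_2(N) = \cpl{\Ip{l}{1}=\Ip{l}{2}} * \cpl{\Ip{r}{1}=\Ip{r}{2}}$, which \ref{rule:rl-merge} fuses into $\cpl{\Ip{l}{1}=\Ip{l}{2} \land \Ip{r}{1}=\Ip{r}{2}}$; feeding this through the trailing $\p{d}$-assignments with \ref{rule:wp-rl-assign} and weakening with \ref{rule:rl-cons} delivers $\cpl{\Ip{d}{1}=\Ip{d}{2}}$, closing the goal. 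Compared to the $\p{BETW\_SEQ} \leq \p{BETW}$ case study, the only structural novelty is the reordering step above; the couplings themselves are simpler here because equalities, rather than inequalities, are being propagated.
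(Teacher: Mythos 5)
Your overall plan coincides with the paper's proof of this goal: convert the precondition into two Dirac couplings via \ref{rule:sure-merge}, \ref{rule:and-to-star}, \ref{rule:sure-dirac} and \ref{rule:coupling}; run \ref{rule:wp-loop-mix} with the per-variable invariants $\cpl{\Ip{l}{1}=\Ip{l}{2}}$ and $\cpl{\Ip{r}{1}=\Ip{r}{2}}$; discharge each per-iteration premise by coupling the two $\prob_S$ samples identically via \ref{rule:coupling} and \ref{rule:wp-rl-assign}; and fuse the results with \ref{rule:rl-merge}. (The paper's formal statement factors the trailing \p{d}-assignments out of the goal, so your extra \ref{rule:wp-nest}/\ref{rule:wp-seq}/\ref{rule:wp-rl-assign} bookkeeping at both ends is sound but not required.)

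The step that does not hold up is the ``reordering'' you single out as the hard part, for two reasons. First, it attacks a problem the statement does not have: the paper formalizes the goal with the single loop's body written as $t_{\p{r}}\p; t_{\p{l}}$, i.e.\ with its halves in the same order as the two loops at index $\I1$, so \ref{rule:wp-loop-mix} instantiates directly with $t_1=t_1'=t_{\p{r}}$ and $t_2=t_2'=t_{\p{l}}$ and no crossed pairing ever arises. (You are right that the code figure shows \p{BETW\_MIX} with the \p{l}-half first; the paper's proof simply targets the pre-aligned form of the goal.) Second, and more seriously, your resolution --- replacing the $\I1$ program by ``the same store-to-store map'' because \p{BelowMax} and \p{AboveMin} commute --- is a meta-level semantic rewriting, not a derivation in \thelogic: no rule of the logic permits substituting a denotationally equal program inside a WP, the commutation claim is itself an unproven Fubini-style lemma about probabilistic programs (and the footprints are not literally disjoint, since both read $x$), and it is exactly the kind of externally justified code transformation this case study is explicitly designed to eliminate. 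Had the crossed alignment been real, the in-logic answer would be to derive a crossed variant of \ref{rule:wp-loop-mix} --- pairing the first loop at $\I1$ with the \emph{second} half of the body at $\I2$ --- by the same \ref{rule:wp-frame}-based argument that derives \ref{rule:wp-loop-mix} itself and \ref{rule:seq-swap}: since the two invariants are separated, each half of the mixed body can be played against either loop's nested WP, irrespective of the order in which the halves occur.
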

  \withp{\m{p}}
  \proves
  \WP{\m<
    \I1: \Loop{N}{t_{\p{r}}}\p; \Loop{N}{t_{\p{l}}},
    \I2: \Loop{N}{(t_{\p{r}}\p; t_{\p{l}})}
  >}*{
    \cpl*{
    \begin{conj*}
      \Ip{r}{1} = \Ip{r}{2} \land
      \Ip{l}{1} = \Ip{l}{2}
    \end{conj*}
    }
  }
\]
where
  $\m{p}$ has full permissions for all the relevant variables,
  $t_{\p{r}}$ is the body of the loop of \p{BelowMax}, and
  $t_{\p{l}}$ is the body of the loop of \p{AboveMin}.

As a first manipulation, we use \ref{rule:rl-merge} in the postcondition,
and \cref{rule:coupling} (via \ref{rule:sure-dirac}) to the precondition,
to obtain:
\[
  \begin{conj}
    \cpl{\Ip{r}{1} = \Ip{r}{2}}\withp{\m{p}_{\p{r}}} * {}\\
    \cpl{\Ip{l}{1} = \Ip{l}{2}}\withp{\m{p}_{\p{l}}}
  \end{conj}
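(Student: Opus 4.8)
The plan is to reduce the stated judgment to two \emph{independent} single-iteration coupling obligations by invoking the derived rule \ref{rule:wp-loop-mix}, which is tailored to align the serial $r$-then-$l$ loops of \I1 with the interleaved $(t_{\p{r}};t_{\p{l}})$ body of \I2. Before that rule can fire, both the precondition and the postcondition must be massaged into the shape $\cpl{\Ip{r}{1}=\Ip{r}{2}} * \cpl{\Ip{l}{1}=\Ip{l}{2}}$ demanded by \ref{rule:wp-loop-mix}, which is precisely the first manipulation the excerpt has begun.

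For the postcondition I would use \ref{rule:wp-cons} together with \ref{rule:rl-merge} to reduce the goal of establishing $\cpl{\Ip{r}{1}=\Ip{r}{2} \land \Ip{l}{1}=\Ip{l}{2}}$ to establishing the \emph{separated} pair $\cpl{\Ip{r}{1}=\Ip{r}{2}} * \cpl{\Ip{l}{1}=\Ip{l}{2}}$; this is exactly where probabilistic independence is load-bearing, since $\cpl{R_1}*\cpl{R_2}$ is strictly stronger than $\cpl{R_1}\land\cpl{R_2}$. For the precondition, the initial store pins every variable to $0$, so each of $\Ip{r}{i}$ and $\Ip{l}{i}$ is deterministic; viewing them as $\dirac{0}$-distributed through \ref{rule:sure-dirac} and forming the Dirac coupling $\dirac{(0,0)}$ via \ref{rule:coupling} yields $\cpl{\Ip{r}{1}=\Ip{r}{2}}$ and $\cpl{\Ip{l}{1}=\Ip{l}{2}}$. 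Because these live over disjoint variables and trivial probability spaces, their independent product exists and the two liftings separate cleanly, distributing the full permissions $\m{p}$ into $\m{p}_{\p{r}}$ and $\m{p}_{\p{l}}$.

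With the precondition now in the form $P_1(0)*P_2(0)$ where $P_1 = \cpl{\Ip{r}{1}=\Ip{r}{2}}$ and $P_2 = \cpl{\Ip{l}{1}=\Ip{l}{2}}$, I would apply \ref{rule:wp-loop-mix} with $t_1=t_1'=t_{\p{r}}$ and $t_2=t_2'=t_{\p{l}}$. This dispatches the entire loop structure and leaves two per-iteration obligations: that $t_{\p{r}}$ run side by side at both indices preserves $\cpl{\Ip{r}{1}=\Ip{r}{2}}$, and symmetrically that $t_{\p{l}}$ preserves $\cpl{\Ip{l}{1}=\Ip{l}{2}}$. Each is a routine coupling step mirroring the \p{BETW\_SEQ}$\,\le\,$\p{BETW} development: sample the shared distribution $\prob_S$ at both indices with \ref{rule:wp-samp}, couple the two fresh samples to be \emph{equal} using \ref{rule:coupling}, merge the new lifting with the incoming invariant by \ref{rule:rl-merge}, and finally push the effect of the assignment to $\p{r}$ through with \ref{rule:wp-rl-assign}; since $\Ip{r}{1}=\Ip{r}{2}$ and the coupled samples agree, the updated values still agree (and the overwritten temporary $\p{r'}$ need not appear in the invariant).

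The genuinely delicate point is the \emph{use of independence}, not the coupling arithmetic. Treating \I1's serial schedule as equivalent to \I2's interleaved one is sound only because the $r$- and $l$-computations touch disjoint, independent resources, and this is exactly what the separating conjunction $P_1(0)*P_2(0)$ in the hypothesis of \ref{rule:wp-loop-mix} encodes, letting each invariant be maintained obliviously to the other's sampling order. I therefore expect the main care to go into the permission bookkeeping that certifies the separation (ensuring $t_{\p{r}}$ owns the write permissions on the $r$-variables and $t_{\p{l}}$ on the $l$-variables, with no read-after-write clash), rather than into the per-iteration couplings. Once both obligations are closed, \ref{rule:wp-loop-mix} delivers the postcondition $\cpl{\Ip{r}{1}=\Ip{r}{2}}*\cpl{\Ip{l}{1}=\Ip{l}{2}}$, and the earlier \ref{rule:rl-merge}/\ref{rule:wp-cons} step converts it to the required $\cpl{\Ip{r}{1}=\Ip{r}{2}\land\Ip{l}{1}=\Ip{l}{2}}$, completing the proof of this loop sub-goal (the surrounding \p{d} assignments being dispatched separately by \ref{rule:wp-rl-assign}).
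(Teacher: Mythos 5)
Your proposal is correct and follows essentially the same route as the paper: the precondition is rewritten into the separated pair of liftings via \ref{rule:sure-dirac} and \ref{rule:coupling}, the postcondition is weakened via \ref{rule:wp-cons} and \ref{rule:rl-merge}, and \ref{rule:wp-loop-mix} (instantiated with $t_1=t_1'=t_{\p{r}}$, $t_2=t_2'=t_{\p{l}}$ and invariants $\cpl{\Ip{r}{1}=\Ip{r}{2}}$, $\cpl{\Ip{l}{1}=\Ip{l}{2}}$) reduces the goal to the two per-iteration triples. The paper dispatches those triples with the same ``standard coupling argument'' you sketch, so the only difference is that you spell out the \ref{rule:wp-samp}, \ref{rule:coupling}, \ref{rule:rl-merge}, \ref{rule:wp-rl-assign} steps and the permission bookkeeping in more detail.
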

  \proves
  \WP{\m<
    \I1: \Loop{N}{t_{\p{r}}}\p; \Loop{N}{t_{\p{l}}},
    \I2: \Loop{N}{(t_{\p{r}}\p; t_{\p{l}})}
  >}*{
    \begin{conj}
      \cpl{\Ip{r}{1} = \Ip{r}{2}}\withp{\m{p}_{\p{r}}} * {}\\
      \cpl{\Ip{l}{1} = \Ip{l}{2}}\withp{\m{p}_{\p{l}}}
    \end{conj}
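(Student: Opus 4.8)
The plan is to observe that the remaining goal is exactly an instance of the derived rule \ref{rule:wp-loop-mix}, and then to discharge its two per-iteration premises by standard coupling-based reasoning. Concretely, I would instantiate \ref{rule:wp-loop-mix} with $t_1 = t_1' = t_{\p{r}}$ and $t_2 = t_2' = t_{\p{l}}$, and with the \emph{constant} (iteration-independent) loop invariants $P_1(i) \is \cpl{\Ip{r}{1}=\Ip{r}{2}}$ and $P_2(i) \is \cpl{\Ip{l}{1}=\Ip{l}{2}}$. Since these invariants do not mention the iteration counter, $P_1(0) * P_2(0)$ and $P_1(N) * P_2(N)$ are syntactically the precondition and postcondition of the goal (up to the permission annotations $\m{p}_{\p{r}}$ and $\m{p}_{\p{l}}$, threaded through as usual). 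The separation $P_1 * P_2$ demanded by the precondition of \ref{rule:wp-loop-mix} is precisely what forces the \p{r}- and \p{l}-computations to be probabilistically independent; this holds here because the two invariants, and the samples each half of the body draws, concern disjoint program variables.

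Applying \ref{rule:wp-loop-mix} reduces the goal to its two per-iteration lockstep triples, which are symmetric under exchanging \p{r} and \p{l}, so I would prove the \p{r}-triple and obtain the \p{l}-triple identically. For the \p{r}-triple, $\cpl{\Ip{r}{1}=\Ip{r}{2}} \proves \WP {\m[\I1: t_{\p{r}}, \I2: t_{\p{r}}]} {\cpl{\Ip{r}{1}=\Ip{r}{2}}}$, where $t_{\p{r}}$ samples $\Ip{q}{i}$ from $\prob_S$ and then performs the two self-assignments to \p{r} through a primed temporary. First I would use the binary lockstep decomposition built from \ref{rule:wp-nest} and \ref{rule:wp-seq} to process the two samplings: framing $\cpl{\Ip{r}{1}=\Ip{r}{2}}$ past the sampling steps with \ref{rule:wp-samp} and \ref{rule:wp-frame} yields $\cpl{\Ip{r}{1}=\Ip{r}{2}} * \distAs{\Ip{q}{1}}{\prob_S} * \distAs{\Ip{q}{2}}{\prob_S}$. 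I then couple the two samples diagonally via \ref{rule:coupling} (the coupling returning equal outcomes, legitimate as both marginals are $\prob_S$ and the variables live at distinct indices) to obtain $\cpl{\Ip{q}{1}=\Ip{q}{2}}$, and merge the two liftings with \ref{rule:rl-merge} into $\cpl{\Ip{r}{1}=\Ip{r}{2} \land \Ip{q}{1}=\Ip{q}{2}}$.

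Threading the two self-assignments through with \ref{rule:wp-rl-assign} (one index at a time, interleaving via \ref{rule:wp-nest}) records, at each index, that the primed temporary equals the old value of \p{r} and that the new \p{r} equals the disjunction of that temporary with $\Ip{q}{i} \geq \Ip{x}{i}$. Combining the temporaries' equalities with the precondition $\Ip{r}{1}=\Ip{r}{2}$ shows the two temporaries are equal; together with $\Ip{q}{1}=\Ip{q}{2}$ and $\Ip{x}{1}=\Ip{x}{2}$ (as $x$ is a read-only parameter shared by both runs, carried in the invariant), the final relation entails $\Ip{r}{1}=\Ip{r}{2}$, so \ref{rule:rl-cons} delivers the desired postcondition $\cpl{\Ip{r}{1}=\Ip{r}{2}}$.

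I expect the main obstacle to be this last step rather than the top-level loop alignment, which \ref{rule:wp-loop-mix} already packages cleanly. The self-assignments through the primed temporaries are exactly what make the argument go through: they capture the pre-assignment value of \p{r} in the temporary, letting the post-relation be stated over the current variables without a dangling reference to the old value of \p{r}, and thereby respecting the side condition $\ip{x}{i} \notin \pvar(\expr\at{i})$ of \ref{rule:wp-rl-assign}. The accompanying permission bookkeeping---splitting $\m{p}_{\p{r}}$ and $\m{p}_{\p{l}}$ so that full write permission reaches each assignment while the separation $P_1 * P_2$ is preserved---is routine but must be carried out carefully, following the pattern of \cref{ex:perm-triples}.
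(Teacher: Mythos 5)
Your proposal matches the paper's proof essentially step for step: the paper likewise instantiates \ref{rule:wp-loop-mix} with $t_1=t_1'=t_{\p{r}}$, $t_2=t_2'=t_{\p{l}}$ and the constant invariants $\cpl{\Ip{r}{1}=\Ip{r}{2}}\withp{\m{p}_{\p{r}}}$ and $\cpl{\Ip{l}{1}=\Ip{l}{2}}\withp{\m{p}_{\p{l}}}$, and then dismisses the two per-iteration triples as ``trivially proved using a standard coupling argument.'' Your expansion of that last step (diagonal coupling of the $\prob_S$ samples via \ref{rule:coupling}, \ref{rule:rl-merge}, \ref{rule:wp-rl-assign} through the primed temporaries, then \ref{rule:rl-cons}) is exactly the intended argument, just spelled out in more detail than the paper bothers to.
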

  }
\]
where
$\m{p}_{\p{r}} = \m[\Ip{r}{1}:1, \Ip{r}{2}:1, \Ip{q}{1}:1, \Ip{q}{2}:1] $, and
$\m{p}_{\p{l}} = \m[\Ip{l}{1}:1, \Ip{l}{2}:1, \Ip{p}{1}:1, \Ip{p}{2}:1] $.
Then \ref{rule:wp-loop-mix} applies and we are left with the two triples
\begin{gather*}
  \cpl{\Ip{r}{1} = \Ip{r}{2}}
  \withp{\m{p}_{\p{r}}}
  \proves
  \WP{\m[
    \I1: t_{\p{r}},
    \I2: t_{\p{r}}
  ]}*{
    \cpl{\Ip{r}{1} = \Ip{r}{2}}
    \withp{\m{p}_{\p{r}}}
  }
  \\
  \cpl{\Ip{l}{1} = \Ip{l}{2}}
  \withp{\m{p}_{\p{l}}}
  \proves
  \WP{\m[
    \I1: t_{\p{l}},
    \I2: t_{\p{l}}
  ]}*{
    \cpl{\Ip{l}{1} = \Ip{l}{2}}
    \withp{\m{p}_{\p{l}}}
  }
\end{gather*}
which are trivially proved using a standard coupling argument.

\medskip
As promised, we now prove \ref{rule:wp-loop-mix} is derivable,
concluding the example.
\begin{lemma}
\label{proof:wp-loop-mix}
  \Cref{rule:wp-loop-mix} is sound.
\end{lemma}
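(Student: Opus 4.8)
The plan is to drive the whole derivation by the single interleaved loop at index~$\I2$, treating the yet-to-be-run index-$\I1$ code as \emph{accumulated} weakest-precondition obligations, and to use framing to keep the two ``threads'' --- the $P_1$-thread formed by $t_1,t_1'$ and the $P_2$-thread formed by $t_2,t_2'$ --- on separated resources. This mirrors the strategy behind \ref{rule:seq-swap} and \ref{rule:wp-loop-split}: the arity- and order-changing rules \ref{rule:wp-nest} and \ref{rule:wp-seq} let me reshuffle the nesting, while \ref{rule:wp-frame} exploits the independence recorded by the $*$ in the precondition $P_1(0)*P_2(0)$ to slide each thread's work past the other's. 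The use of $*$ rather than $\land$ is exactly what will make this framing sound.

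First I would prove, for each $k<N$, two single-step \emph{advance} lemmas, one per thread. For the $P_1$-thread:
\[
  \WP{\m[\I1: \Loop{k}{t_1}]}{P_1(k)}
  \proves
  \WP{\m[\I2: t_1']}[\big]{\WP{\m[\I1: \Loop{(k+1)}{t_1}]}{P_1(k+1)}}
\]
and symmetrically for the $P_2$-thread with $t_2,t_2',P_2$. To derive this, rewrite the lockstep hypothesis $P_1(k) \proves \WP{\m[\I1: t_1, \I2: t_1']}{P_1(k+1)}$ into the $\I2$-first form $P_1(k)\proves \WP{\m[\I2: t_1']}{\WP{\m[\I1: t_1]}{P_1(k+1)}}$ by \ref{rule:wp-nest}, wrap both sides in $\WP{\m[\I1: \Loop{k}{t_1}]}{\cdot}$ via \ref{rule:wp-cons}, commute the resulting $\WP{\m[\I1:\Loop{k}{t_1}]}{\WP{\m[\I2:t_1']}{\cdot}}$ into $\WP{\m[\I2:t_1']}{\WP{\m[\I1:\Loop{k}{t_1}]}{\cdot}}$ (again \ref{rule:wp-nest}, the indices being disjoint), and finally absorb the extra $t_1$ into the loop with \ref{rule:wp-loop-unf}.

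Next I would take as loop invariant for the index-$\I2$ loop
$\Psi(k) \is \WP{\m[\I1: \Loop{k}{t_1}]}{P_1(k)} * \WP{\m[\I1: \Loop{k}{t_2}]}{P_2(k)}$.
The base case $P_1(0)*P_2(0) \lequiv \Psi(0)$ is immediate from \ref{rule:wp-loop-0}. For the inductive step $\Psi(k) \proves \WP{\m[\I2: (t_1';t_2')]}{\Psi(k+1)}$, I would apply the $P_1$-advance lemma to the first conjunct and frame the second conjunct across $t_1'$ with \ref{rule:wp-frame}; then, inside the resulting $\WP{\m[\I2:t_1']}{\cdot}$, apply the $P_2$-advance lemma and frame the now-advanced first conjunct across $t_2'$; and finally merge the two nested index-$\I2$ steps with \ref{rule:wp-seq}. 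An application of \ref{rule:wp-loop} then yields $\Psi(0) \proves \WP{\m[\I2: \Loop{N}{(t_1';t_2')}]}{\Psi(N)}$. To discharge $\Psi(N)$, I would push $\Loop{N}{t_1}$ and $\Loop{N}{t_2}$ into sequence at index~$\I1$ using two more applications of \ref{rule:wp-frame} (framing each thread's obligation across the other's index-$\I1$ loop) together with \ref{rule:wp-seq}, obtaining $\Psi(N) \proves \WP{\m[\I1: \Loop{N}{t_1}\p;\Loop{N}{t_2}]}{P_1(N)*P_2(N)}$. Chaining everything by \ref{rule:wp-cons} and collapsing the two nested WPs with \ref{rule:wp-nest} gives the conclusion.

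The main obstacle is the inductive step, specifically the two framing moves: I must argue that running $t_1'$ at index~$\I2$ leaves the $P_2$-thread obligation $\WP{\m[\I1:\Loop{k}{t_2}]}{P_2(k)}$ intact, and dually for $t_2'$ and the $P_1$-obligation. This is precisely where the independence encoded by $*$ is essential: \ref{rule:wp-frame} applies only because the two threads live on separated resources, so $t_1'$ cannot disturb the probability space or the variables on which the $P_2$-thread depends. As in \ref{rule:seq-swap}, soundness here also leans on the permission discipline ensuring $t_1'$ and $t_2'$ write to disjoint variables; with $\land$ in place of $*$ the framing --- and hence the whole rule --- would fail.
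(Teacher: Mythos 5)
Your proposal is correct and follows essentially the same route as the paper's proof: the invariant $\Psi(k)$ is exactly the paper's $P(k)=Q_1(k)*Q_2(k)$, your per-thread ``advance lemmas'' are precisely the reduced triples $Q_j(i)\proves\WP{\m[\I2:t_j']}{Q_j(i+1)}$ the paper derives via \ref{rule:wp-nest}, \ref{rule:wp-loop-unf} and \ref{rule:wp-cons}, and the framing/\ref{rule:wp-seq} steps in the inductive case and the final discharge of $\Psi(N)$ match the paper's (which merely performs the same manipulations on the goal rather than forward). No gaps.
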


\begin{proof}
  Assume:
  \begin{gather}
    \forall i < N \st
      P_1(i) \proves \WP{\m[\I1: t_1, \I2: t_1']}{P_1(i+1)}
    \label{wp-loop-mix:P1}
    \\
    \forall i < N \st
      P_2(i) \proves \WP{\m[\I1: t_2, \I2: t_2']}{P_2(i+1)}
    \label{wp-loop-mix:P2}
  \end{gather}
  Our goal is to prove:
  \[
    P_1(0) * P_2(0)
    \proves
    \WP{\m[
      \I1: (\Loop{N}{t_1}\p;\Loop{N}{t_2}),
      \I2: \Loop{N}{(t_1';t_2')}
    ]}{P_1(N) * P_2(N)}
  \]
  We first massage the goal to split the sequential composition at \I1.
  By \ref{rule:wp-seq} and \ref{rule:wp-nest} we obtain
  \[
    P_1(0) * P_2(0)
    \proves
    \WPv{\m[\I2: \Loop{N}{(t_1';t_2')}]}{
      \WP{\m[\I1: \Loop{N}{t_1}]}{
        \WP{\m[\I1: \Loop{N}{t_2}]}{
          P_1(N) * P_2(N)
        }
      }
    }
  \]
  Now by applying \ref{rule:wp-frame} in the postcondition (twice) we obtain
  \begin{equation}
    P_1(0) * P_2(0)
    \proves
    \WP{\m[\I2: \Loop{N}{(t_1';t_2')}]}*{
      \begin{conj}
      \WP{\m[\I1: \Loop{N}{t_1}]}{P_1(N)} * {}\\
      \WP{\m[\I1: \Loop{N}{t_2}]}{P_2(N)}
      \end{conj}
    }
  \label{wp-loop-mix:goal-loop}
  \end{equation}
  Define
  \begin{align*}
    P(i) &\is Q_1(i) * Q_2(i)
    &
    Q_1(i) &\is \WP{\m[\I1: \Loop{i}{t_1}]}{P_1(i)}
    &
    Q_2(i) &\is \WP{\m[\I1: \Loop{i}{t_2}]}{P_2(i)}
  \end{align*}
  Clearly we have
  $ P_1(0) * P_2(0) \proves P(0) $ (by \ref{rule:wp-loop-0})
  and $ P(N) $ coincides with the postcondition
  of our goal~\eqref{wp-loop-mix:goal-loop}, which is now rewritten to:
  \[
    P(0)\proves \WP{\m[\I2: \Loop{N}{(t_1'\p;t_2')}]}{P(N)}
  \]
  Now we can apply \ref{rule:wp-loop} with invariant~$P$ and reduce the goal to
  the triples:
  \[
    \forall i < N \st
      Q_1(i)*Q_2(i) \proves \WP{\m[\I2: (t_1'\p;t_2')]}{Q_1(i+1)*Q_2(i+1)}
  \]
  By \ref{rule:wp-seq} and \ref{rule:wp-frame} we can reduce the goal to
  \[
    Q_1(i)*Q_2(i) \proves
    \WP{\m[\I2: t_1']}{Q_1(i+1)} *
    \WP{\m[\I2: t_2']}{Q_2(i+1)}
  \]
  which we can prove by showing the two triples:
  \begin{align*}
    Q_1(i) &\proves
    \WP{\m[\I2: t_1']}{Q_1(i+1)}
    &
    Q_2(i) &\proves
    \WP{\m[\I2: t_2']}{Q_2(i+1)}
  \end{align*}
  We focus on the former as the latter can be dealt with symmetrically.
  By unfolding $Q_1$ we obtain:
  \[
    \WP{\m[\I1: \Loop{i}{t_1}]}{P_1(i)}
    \proves
    \WP{\m[\I2: t_1']}[\big]{\WP{\m[\I1: \Loop{(i+1)}{t_1}]}{P_1(i+1)}}.
  \]
  We then apply \ref{rule:wp-loop-unf} to the innermost WP and \ref{rule:wp-nest}to swap the two WPs in the conclusion:
  \[
    \WP{\m[\I1: \Loop{i}{t_1}]}{P_1(i)}
    \proves
    \WP{\m[\I1: \Loop{i}{t_1}]}[\big]{\WP{\m[\I2: t_1', \I1: t_1]}{P_1(i+1)}}.
  \]
  Finally, by \ref{rule:wp-cons} we can eliminate the topmost WPs
  on both sides and reduce the goal to assumption~\eqref{wp-loop-mix:P1}.
\end{proof}

\subsection{Randomized Cache Management}

Cache is a part of computer memory fast to access but limited in its size.
Cache management algorithms determine which elements are to keep or
evict upon a sequence of requests.
If the computer knows the entire sequence of requests, then its best strategy is
to always evict the element that is next requested furtherest in the future.
We call this strategy the \emph{offline optimal algorithm}.
However, a realistic cache algorithm needs to make decisions
online, which make it impossible to consider requests in the future.
While for some applications deterministic
cache algorithms, such as First-In-First-Out (FIFO), Last-In-First-Out (LIFO),
and Least Recently Used (LRU), perform relatively well, they suffer when the
sequence of requests are produced adversarially to maximize their misses. It is
proved that, for a cache that holds $k$ items, no deterministic cache algorithm
is $< k$-competitive; that means for any deterministic cache algorithm, there
exists some sequence on which the algorithm produces at least $k$ times misses
of the offline optimal algorithm's number of misses.

Randomization could help a lot in this scenario of adversarial requests.
The \emph{Marker algorithm}, a randomized algorithm also known as 1-bit LRU algorithm,
is shown to be $\log(k)$-competitive to the offline optimal.
While the algorithm is relatively intuitive, its analysis involves many parts.
Though we do not have an end-to-end formal proof of that analysis,
we can verify some important components in our logic.

The Marker algorithm attaches each slot of its cache with an 1-bit marker.
Initially, for any $1 \leq i \leq k$, the mark
$\code{cache[$i$][mark]}$ is set to 0.
Then it operates in phases.
At the beginning of each phase, the markers are all set to 0.
Every time it encounters a new request that
it has not seen in the current phase,
either the requested element is already in the cache and then
it simply changes that slot's mark to 1,
or the requested element is not in the cache,
then it replaces the item on a randomly chosen 0-marked slot with the request
and marks that slot to~1.
Thus, after processing $k$ distinct requests in a phase,
all the slots in the cache are marked with~1.
 The algorithm then reset all
these marks to 0 and start a new phase.

We show the details of the implementation of each phase in~\cref{fig:marker:algorithm}.
(The part resetting all marks to 0 and starts a new phase is not modelled.)
A crucial step in the implementation is to sample the eviction index $\code{ev}$
uniformly randomly from all index marked to 0.
There are many ways to implement that. We use a memory-efficient reservoir sampling algorithm~\cite{vitter1985random}, which can also be used in many other applications.
Throughout this example, in both the code and the proof, we use $[a, b]$ to denote the set of integers
that are at least $a$ and at most $b$.

\begin{figure}
\setlength\tabcolsep{0pt}\begin{tabular*}{\textwidth}{
    @{\extracolsep{\fill}}
    *{4}{p{\textwidth/4}}@{}
  }
  \begin{sourcecode*}
  def Marker(seq, N, cache):
    cost := 0
    i := 1
    repeat N:
       hit := False
       j := 1
       repeat k:
	       if cache[j]  == seq[i]:
		        cache[j] := (seq[i], 1)
		        hit := True
	       j := j+1
	    if hit:
		    skip
		  else:
        ev:~Reservoir_Samp(k, cache)
		  	cache[ev] := (seq[i], 1)
		    cost := cost + 1
      i := i+1
  \end{sourcecode*}
  &
  \begin{sourcecode*}
    def Reservoir-Samp(k, cache):
        c = 0
        j = 0
        ev := None
        repeat k:
          c := c + 1
          if cache[j][mark] = 0 then
            j := j + 1
            x :~ unifInt [1,j]
            if x <= 1:
              ev := c
  \end{sourcecode*}
\end{tabular*}
  \caption{The Marker algorithm.}
  \label{fig:marker:algorithm}
\end{figure}

\subsubsection{Reservoir Sampling}

The first component that we verify is the reservoir sampling.
We want to show that for any input $\code{cache}$,
the output $\code{ev}$ is distributed uniformly
among indices $1 \leq l \leq k$  with $\code{cache[$l$][mark]} = 0$.
We state that goal in our logic as
\begin{align*}
  & \forall S.
  \sure{l \in S \iff \code{cache[$l$][mark]} = 1} \\
  \gproves
  &\WP{1: \code{Reservoir-Samp($k$, cache)}}{
    \sure{l \in S \iff \code{cache[$l$][mark]} = 1}
    \ast \distAs{\code{ev}}{\Unif{[1, k] \setminus S}}
  }
\end{align*}
The condition
$\sure{l \in S \iff \code{cache[$l$][mark]} = 1}$
 makes sure that $S$ contains exactly the cache indices
 marked as 0.

The program uses $\p{c}$ to track the number of
slots that we iterate over and \code{j} to count the number of
cache indices with mark 0. It initializes both
$\p{c}$ and $\code{j}$ to 0, and $\code{ev}$ to a non-integer
value \code{None}.
The program then iterates over the entries \code{cache[$c$]} for
$1 \leq c \leq k$ and in each iteration update $\code{ev}$ and \code{j}
to maintain that the loop invariants that
1) $\code{ev}$ is distributed uniformly among
indices we have iterated over that are marked 0, i.e.,
$\distAs{\code{ev}}{\Unif{[1,c] \setminus S}}$
(the uniform distribution over empty set does not make sense,
so we arbitrarily interpret $\Unif{\emptyset}$ as the Dirac distribution
over the value \code{None});
2) the number of element in $\Unif{[1,c] \setminus S}$ is $\code{j}$.
Specifically, in the $c^{th}$ iteration,
if $\code{cache[$c$]} = 1$, we just increase $c$ by 1 and enter the next iteration.
If $\code{cache[$c$]} = 0$,
then we increase both $\code{c, j}$ by 1
and sample an integer \code{x} uniformly from the interval $[1, j]$.
If $\code{x} = 1$, which happens with probability $\frac{1}{j}$,
then we update $\code{ev}$ to the current index $c$;
if $\code{x} \neq 1$, which happens with $\frac{j - 1}{j}$ of the case,
we keep the previous $\code{ev}$, which has equal probability
to be any one of the $j - 1$ elements from $[1,c - 1] \setminus S$.
Thus, the distribution of $\code{ev}$ after this iteration
is a convex combination of these two branches, which we can
show to be $ \Unif{[1,c] \setminus S}$, thus reestablishing the loop invariants.

We outline the proof in~\ref{fig:discrete-reservoir}.
We omit $\ownall$ and permission 1 on all variables get assigned to in all assertions.
Also, because we need to include $\sure{l \in S \iff \code{cache[$l$][mark]} = 1}$
in all the assertions,
we abbreviate it into $\varphi$ to minimize the visual overhead.

Define $P(\p{c})$ to be
\[
  \varphi \ast \distAs{\code{ev}}{\Unif{[1, \p{c}] \setminus S}} \ast \sure{\card{[1, \p{c}] \setminus S} = j}
\]
At the first few steps, by~\ref{rule:wp-assign},
we get
\[
  \varphi \ast \sure{\p{c} = 0} \ast \sure{\code{j} = 0} \ast \sure{\code{ev} = \code{None}}
\]
before entering the first iteration.
This propositionally implies that $\sure{[1, \p{c}] \setminus S = \emptyset}$,
and by~\ref{rule:c-unit-l} and~\ref{rule:c-unit-r},
\[
\sure{\code{ev} = \code{None}} \vdash \distAs{\code{ev}}{\delta_{\code{None}}}
\]
and thus we have $P(0)$.

\begin{mathfig}[\small]
  \begin{proofoutline}
    \PREC{ \sure{l \in S \iff \code{cache[$l$][mark]} = 1}}\\
    \CODE{c = 0}\\
    \CODE{j = 0}\\
    \CODE{ev := None} \\
    \ASSR{\varphi \ast \sure{\p{c} = 0} \ast \sure{\code{j} = 0} \ast \sure{\code{ev} = \code{None}}}\\
    \CODE{repeat k:}\\
    \begin{proofindent}
      \ASSR{P(c)}\\
      \ASSR{
        \varphi \ast \distAs{\code{ev}}{\Unif{[1, \p{c}] \setminus S}} \ast \sure{\card{[1, \p{c}] \setminus S} = j}
      }\\
      \CODE{c := c + 1}\\
      \ASSR{
        \varphi
        \ast \distAs{\code{ev}}{\Unif{[1, \p{c} - 1] \setminus S}} \ast \sure{\card{[1, \p{c} - 1] \setminus S} = j}
      }\\
      \CODE{if cache[c][mark] = 0 then} \\
      \begin{proofindent}
        \CODE{j := j+1}\\
        \ASSR{
          \sure{\code{cache[c][mark]} = 0} \ast \varphi \ast
          \distAs{\code{ev}}{\Unif{[1, \p{c} - 1] \setminus S}} \ast \sure{\card{[1, \p{c} - 1] \setminus S} = j - 1}}\\
        \CODE{x :~ unifInt [1,j]} \\
        \ASSR{
                \sure{\code{cache[c][mark]} = 0} \ast
                \varphi \ast \distAs{\code{ev}}{\Unif{[1, \p{c} - 1] \setminus S}} \ast \sure{\card{[1, \p{c} - 1] \setminus S} = j - 1} \ast \distAs{x}{\Unif{[1, j]}}}\\
\CODE{if x = 1:}\\
            \begin{proofindent}
              \CODE{ev := c}\\
            \end{proofindent}\\
            \ASSR{
              \CMod{\Unif{[1, j]}} w.
              \begin{pmatrix}
                \sure{w \neq 1} \implies \CMod{\Unif{[1, \p{c}-1] \setminus S}} v.\sure{\code{ev} = v}\\
                \sure{w = 1} \implies \sure{\code{ev} = c}
              \end{pmatrix}
              \ast
              \begin{pmatrix}
              \sure{\code{cache[c][mark]} = 0} \\\ast
              \varphi \ast \sure{\card{[1, \p{c} - 1] \setminus S} = j - 1}
              \end{pmatrix}
            }\\
            \ASSR{
              \sure{\code{cache[c][mark]} = 0} \ast
              \varphi \ast \distAs{\code{ev}}{\Unif{[1, \p{c}] \setminus S}} \ast \sure{\card{[1, \p{c}] \setminus S} = j}
            }\TAG[reservoir:end]\\
      \end{proofindent}\\
            \ASSR{
              \begin{pmatrix}
                \sure{\code{cache[c][mark]} = 0} \implies
                \varphi \ast \distAs{\code{ev}}{\Unif{[1, \p{c}] \setminus S}} \ast \sure{\card{[1, \p{c}] \setminus S} = j} \\
                \sure{\code{cache[c][mark]} = 1} \implies
                \varphi \ast \distAs{\code{ev}}{\Unif{[1, \p{c}] \setminus S}} \ast \sure{\card{[1, \p{c}] \setminus S} = \card{[1, \p{c} - 1] \setminus S} = j} \\
              \end{pmatrix}
            } \TAG[reservoir:wp-if-prim]\\
            \ASSR{
                \varphi \ast
                \distAs{\code{ev}}{\Unif{[1, \p{c}] \setminus S}} \ast \sure{\card{[1, \p{c}] \setminus S} = j}
            }\\
    \end{proofindent}\\
    \ASSR{
                \varphi \ast
                \distAs{\code{ev}}{\Unif{[1, k] \setminus S}} \ast \sure{\card{[1, k] \setminus S} = j}
            }
  \end{proofoutline}
  \caption{Reservoir Sampling}
  \label{fig:discrete-reservoir}
\end{mathfig}

We then want to use~\ref{rule:wp-if-prim} to prove~\eqref{reservoir:wp-if-prim}.
The $\sure{\code{cache[c][mark]} = 1}$ branch is trivial.
For the $\sure{\code{cache[c][mark]} = 0}$ branch,
we first apply~\ref{rule:wp-assign} and~\ref{rule:wp-samp} for the assignment and
the sampling. We then apply~\ref{rule:wp-if-unary} and~\ref{rule:wp-assign}
for the $\code{if x = 1}$ branching.
The most crucial step is ~\eqref{reservoir:end}, which we prove in the follow.
For visual presentation,
we abbreviate $\sure{\code{cache[c][mark]} = 0} \ast \varphi$ as $\psi$.
\begin{eqexplain}
&\CMod{\Unif{[1, j]}} w.
              \begin{grp}
                \sure{w \neq 1} \implies \CMod{\Unif{[1, \p{c}-1] \setminus S}} v.\sure{\code{ev} = v}\\
                \sure{w = 1} \implies \sure{\code{ev} = c}
              \end{grp}
              \ast \psi
              \ast \sure{\card{[1, \p{c} - 1] \setminus S} = j - 1}
      \whichproves
\CMod{\Unif{[1, j]}} w.
              \begin{grp}
                \sure{w \neq 1} \implies \CMod{\Unif{[1, \p{c}-1] \setminus S}} v.\sure{\code{ev} = v}\\
                \sure{w = 1} \implies \CMod{\Unif{[1, \p{c}-1] \setminus S}} v. \sure{\code{ev} = \p{c}}
              \end{grp}
              \ast \psi
              \ast \sure{\card{[1, \p{c} - 1] \setminus S} = j - 1}
      \byrules{c-true, c-frame}
      \whichproves
\CMod{\Unif{[1, j]}} w.
              \CMod{\Unif{[1, \p{c}-1] \setminus S}} v. \sure{\code{ev} =
                \ITE{w = 1}{\p{c}}{v} }
              \ast \psi
              \ast \sure{\card{[1, \p{c} - 1] \setminus S} = j - 1}
      \byrule{c-cons}
      \whichproves
\CMod{\Unif{[1, j]} \pprod \Unif{[1, \p{c}-1] \setminus S}} (w, v).  \sure{\code{ev} = \ITE{w = 1}{\p{c}}{v} }
              \ast \psi
              \ast \sure{\card{[1, \p{c} - 1] \setminus S} = j - 1}
      \byrule{c-fuse}
\end{eqexplain}
Because  $\sure{\card{[1, \p{c} - 1] \setminus S} = j - 1}$,
there exists some bijection $h$ between $[1, \p{c} - 1] \setminus S$
and $[1, j-1]$.
For $(x, y) \in [1, j - 1] \times (([1, \p{c} - 1]) \setminus S) \cup \{\p{c}\}$,
we define
\begin{align*}
f(x, y) &= \ITE{y = \p{c}}{1}{x + 1} \\
g(x, y) &= \ITE{y = \p{c}}{h^{-1}(x)}{y}
\end{align*}
Pure reasoning yields that the map
$\langle f, g \rangle : (w, v) \mapsto (f(w,v),g(w,v))$
is a bijection between
$[1, j - 1] \times (([1, \p{c} - 1]) \setminus S) \cup \{\p{c}\}$ and
$[1,j] \times ([1, \p{c} - 1] \setminus S)$.
Furthermore, for any $(x, y)$,
\[
\Unif{[1, j]} \pprod \Unif{[1, \p{c}-1] \setminus S}(f(x), g(y))
=  \frac{1}{j \cdot (j - 1)}
= \Unif{[1, j - 1]} \pprod \Unif{[1, (\p{c}-1] \setminus S) \cup \{c\} }(x, y) ,
\]
Thus, we can apply~\ref{rule:c-transf} with the bijection $\langle f, g \rangle$
to derive that
\begin{align*}
  &\CMod{\Unif{[1, j]} \pprod \Unif{[1, \p{c}-1] \setminus S}} (w, v).
  \sure{\code{ev} = \ITE{w = 1}{\p{c}}{v} }
              \ast \psi
              \ast \sure{\card{[1, \p{c} - 1] \setminus S} = j - 1} \\
  \proves
  &\CMod{\Unif{[1, j - 1]} \pprod \Unif{([1, \p{c}-1] \setminus S) \cup \{c\} }} (x, y).
    \begin{conj}
    \sure{
      \code{ev} =
      \ITE{f(x, y) = 1}{\p{c}}{g(x,y)}
    }\\
    \ast \psi
    \ast \sure{\card{[1, \p{c} - 1] \setminus S} = j - 1}.
    \end{conj}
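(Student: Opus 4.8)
The plan is to establish the reservoir-sampling judgment by a single application of \ref{rule:wp-loop} with the loop invariant
\[
  P(\p{c}) \is \varphi \ast \distAs{\code{ev}}{\Unif{[1,\p{c}]\setminus S}} \ast \sure{\card{[1,\p{c}]\setminus S} = j},
\]
where $\varphi$ abbreviates $\sure{l \in S \iff \code{cache[$l$][mark]} = 1}$. The predicate $\varphi$ pins $S$ to be exactly the set of $0$-marked slots, and since the cache contents are never overwritten inside \code{Reservoir-Samp}, $\varphi$ is trivially preserved throughout. First I would discharge the three initializing assignments with \ref{rule:wp-assign}, reaching $\varphi \ast \sure{\p{c}=0} \ast \sure{j=0} \ast \sure{\code{ev}=\code{None}}$; because $[1,0]\setminus S = \emptyset$ and $\Unif{\emptyset}$ is read as $\dirac{\code{None}}$, I can convert $\sure{\code{ev}=\code{None}}$ into $\distAs{\code{ev}}{\Unif{[1,0]\setminus S}}$ using \ref{rule:c-unit-r} (and \ref{rule:c-unit-l}), which yields $P(0)$.

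Applying \ref{rule:wp-loop} then reduces the goal to the body-preservation obligation: from $P(\p{c})$ the loop body must re-establish the invariant with $\p{c}$ incremented. After \code{c := c+1} I case split on the guard \code{cache[c][mark] = 0} with \ref{rule:wp-if-prim}; the guard is deterministic given $\varphi$, so the branches can be analysed independently. The \code{mark = 1} case is immediate, since $\p{c}\in S$ forces $[1,\p{c}]\setminus S = [1,\p{c}-1]\setminus S$ and nothing is sampled. In the \code{mark = 0} case I process \code{j := j+1} and the uniform sampling of $x$ from $[1,j]$ with \ref{rule:wp-samp}, then handle the conditional assignment that sets \code{ev} to \p{c} exactly when $x = 1$ via \ref{rule:wp-if-unary}. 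This produces a nested conditioning $\CC{\Unif{[1,j]}} w.(\dots)$ recording that $\code{ev} = \p{c}$ when $w=1$ and that $\code{ev}$ is still uniform over $[1,\p{c}-1]\setminus S$ when $w\neq 1$. Using \ref{rule:c-true}, \ref{rule:c-frame}, \ref{rule:c-cons} and \ref{rule:c-fuse} I would fuse the two modalities into a single one over the product $\Unif{[1,j]}\pprod\Unif{[1,\p{c}-1]\setminus S}$, under a binder $(w,v)$ with $\code{ev} = \ITE{w=1}{\p{c}}{v}$.

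The crux, and the step I expect to be the \emph{main obstacle}, is the change of variables via \ref{rule:c-transf} that rewrites this product as $\Unif{[1,j-1]}\pprod\Unif{([1,\p{c}-1]\setminus S)\cup\{\p{c}\}}$. Exploiting $\sure{\card{[1,\p{c}-1]\setminus S} = j-1}$, I fix a bijection $h$ between $[1,\p{c}-1]\setminus S$ and $[1,j-1]$ and define $\langle f,g\rangle$ by $f(x,y) = \ITE{y=\p{c}}{1}{x+1}$ and $g(x,y) = \ITE{y=\p{c}}{h^{-1}(x)}{y}$; the delicate part is the purely combinatorial verification that $\langle f,g\rangle$ is a bijection from $[1,j-1]\times\bigl(([1,\p{c}-1]\setminus S)\cup\{\p{c}\}\bigr)$ onto $[1,j]\times([1,\p{c}-1]\setminus S)$ preserving each point's weight $\frac{1}{j(j-1)}$, so that the two premises of \ref{rule:c-transf} hold. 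Once applied, the conditioned value becomes $\code{ev} = \ITE{f(x,y)=1}{\p{c}}{g(x,y)}$, which simplifies to $\code{ev} = y$ in both sub-cases (when $y=\p{c}$ one has $f=1$ and $\code{ev}=\p{c}=y$; otherwise $f\neq 1$ and $\code{ev}=g(x,y)=y$), with $y$ ranging over $([1,\p{c}-1]\setminus S)\cup\{\p{c}\} = [1,\p{c}]\setminus S$. Projecting away the now-inert $x$-coordinate (by \ref{rule:c-sure-proj}, after reordering, or \ref{rule:c-dist-proj}) and folding with \ref{rule:c-unit-r} recovers $\distAs{\code{ev}}{\Unif{[1,\p{c}]\setminus S}}$ and the updated cardinality fact, re-establishing $P(\p{c})$ with the incremented $j$. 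Closing the loop, $P(k)$ delivers the desired postcondition $\varphi \ast \distAs{\code{ev}}{\Unif{[1,k]\setminus S}}$.
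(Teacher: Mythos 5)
Your proposal is correct and follows essentially the same route as the paper: the same loop invariant, the same fusion of the two conditioning modalities into a product, and — for the step in question — the identical bijection $\langle f,g\rangle$ with $f(x,y)=\ITE{y=\p{c}}{1}{x+1}$ and $g(x,y)=\ITE{y=\p{c}}{h^{-1}(x)}{y}$, the same weight check $\frac{1}{j(j-1)}$ justifying \ref{rule:c-transf}, and the same subsequent simplification of $\ITE{f(x,y)=1}{\p{c}}{g(x,y)}$ to $y$ followed by projection and refolding via the unit law. No gaps.
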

\end{align*}
Propositional reasoning gives the result that
\begin{align*}
&\ITE{f(x, y) = 1}{\p{c}}{g(x,y)} \\
{}={}
&\ITE{y = \p{c}}{\p{c}}{\left(
    \ITE{y = \p{c}}{h^{-1}(x)}{y}
  \right)}\\
{}={}&  y.
\end{align*}
Thus, we can simplify the assertion above into
\begin{align*}
  &\CMod{\Unif{[1, j - 1]} \pprod \Unif{([1, \p{c}-1] \setminus S) \cup \{c\} }} (x, y).
    \sure{
      \code{ev} = y
    }
    \ast \psi
    \ast \sure{\card{[1, \p{c} - 1] \setminus S} = j - 1}.
\end{align*}
Now recall that $\psi$ is $\sure{\code{cache[c][mark]} = 0} \ast
    \sure{l \in S \iff \code{cache[$l$][mark]} = 1}$,
which implies that the set $([1, \p{c}-1] \setminus S) \cup \{c\}$ is the equivalent to
$([1, \p{c}] \setminus S)$.
Thus, we can further simplify the assertion above into
\begin{align*}
  &\CMod{\Unif{[1, j - 1]} \pprod \Unif{[1, \p{c}] \setminus S}} (x, y).  \sure{\code{ev} = y} \ast \psi
  \ast \sure{\card{[1, \p{c}] \setminus S} = j},
\end{align*}
Then, we apply~\ref{rule:c-sure-proj} to project out the unused part
of the distribution under conditioning:
\begin{align*}
  &\CMod{\Unif{[1, \p{c}] \setminus S}} (x, y).
  \begin{conj}
  \sure{\code{ev} = y}
  \ast \sure{\code{cache[c][mark]} = 0}
  \\
  \ast
    \sure{l \in S \iff \code{cache[$l$][mark]} = 1}  \ast \sure{\card{[1, \p{c}] \setminus S} = j}.
  \end{conj}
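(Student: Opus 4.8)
The plan is to recognise that the assertion under the modality is independent of the sampled value $x \in [1,j-1]$, so that the $\Unif{[1,j-1]}$ factor of the conditioning distribution can simply be marginalised away. Concretely, writing $\psi = \sure{\code{cache[c][mark]}=0} \ast \sure{l \in S \iff \code{cache[$l$][mark]}=1}$ as before, the body $\sure{\code{ev}=y} \ast \psi \ast \sure{\card{[1,\p{c}]\setminus S}=j}$ is a separating conjunction of four $\sure$ assertions, only the first of which mentions the bound variable, and none of which mentions $x$. The whole step is therefore essentially a single application of the projection rule \ref{rule:c-sure-proj}, preceded by a little bookkeeping to put it into the exact shape that rule expects.

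First I would collapse the four conjuncts into one $\sure$ assertion. Repeated use of \ref{rule:sure-merge} gives
\[
  \sure{\code{ev}=y} \ast \psi \ast \sure{\card{[1,\p{c}]\setminus S}=j}
  \lequiv
  \sure{\Phi(y)},
\]
where $\Phi(y)$ is the conjunction $\code{ev}=y \land \code{cache[c][mark]}=0 \land (l\in S \iff \code{cache[$l$][mark]}=1) \land \card{[1,\p{c}]\setminus S}=j$, viewed as a predicate on the store parameterised by the conditioning value. Merging serves two purposes: it exposes the body as a single assertion of the form $\sure{\aexpr(\cdot)\at{i}}$, which is precisely what \ref{rule:c-sure-proj} acts on, and it makes manifest that $\Phi$ depends on the bound value only through $y$, never through $x$.

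Next I would handle the component-ordering mismatch, which is where the only real care is needed. The rule \ref{rule:c-sure-proj} marginalises onto the \emph{first} coordinate, whereas in the product $\Unif{[1,j-1]} \pprod \Unif{[1,\p{c}]\setminus S}$ the relevant variable $y$ sits in the \emph{second}. I would therefore first commute the two factors: using the right-to-left direction of \ref{rule:c-fuse} to split the product into nested modalities, then \ref{rule:c-swap} to exchange them, and \ref{rule:c-fuse} once more to re-fuse, the distribution is re-presented as $\Unif{[1,\p{c}]\setminus S} \pprod \Unif{[1,j-1]}$ over the reordered pair. This is exactly the product symmetry $\prob_1\pprod\prob_2 = \prob_2\fuse(\fun\wtv.\prob_1)$ already exploited in the proof of \ref{rule:c-swap}, and since $\sure{\Phi(y)}$ does not mention the second coordinate it is untouched by the reshuffling.

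With $y$ now in the leading position and the body in the form $\sure{\Phi(y)}$, a single application of \ref{rule:c-sure-proj} projects the conditioning distribution onto its first marginal, $\big(\Unif{[1,\p{c}]\setminus S} \pprod \Unif{[1,j-1]}\big)\circ\inv{\proj_1} = \Unif{[1,\p{c}]\setminus S}$, eliminating the $x$-component and yielding $\CC{\Unif{[1,\p{c}]\setminus S}} y.\,\sure{\Phi(y)}$. One more (backward) application of \ref{rule:sure-merge} then re-splits $\sure{\Phi(y)}$ into the displayed separating conjunction $\sure{\code{ev}=y} \ast \sure{\code{cache[c][mark]}=0} \ast \sure{l\in S\iff\code{cache[$l$][mark]}=1} \ast \sure{\card{[1,\p{c}]\setminus S}=j}$. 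I expect the main obstacle to be none of the probabilistic content — that is entirely discharged by the marginalisation baked into \ref{rule:c-sure-proj}, whose soundness rests on \cref{proof:c-proj} together with convexity of $\sure$ assertions (\ref{rule:sure-convex}) — but rather the purely syntactic matching of coordinates, namely recognising that the variable to keep is the second factor and reordering the product so that the first-coordinate projection rule applies cleanly.
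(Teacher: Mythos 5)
Correct, and it is essentially the paper's own proof: the paper discharges this step with a single application of \ref{rule:c-sure-proj}, marginalising away the unused coordinate $x$, which is exactly the core of your argument. The bookkeeping you add --- collapsing the four $\sure{\hole}$ conjuncts with \ref{rule:sure-merge} and commuting the product via \ref{rule:c-fuse}/\ref{rule:c-swap} so that the retained variable $y$ sits in the first coordinate --- is precisely what the paper leaves implicit (its \ref{rule:c-sure-proj} is stated for a single $\sure{\hole}$ body and projects onto the \emph{first} component, whereas here $y$ is the second), so your rendering is the same argument made syntactically precise.
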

\end{align*}
Applying~\ref{rule:sure-str-convex}, we can pull out almost sure assertions
under the conditioning modality and get
\begin{align*}
  &\left( \CMod{\Unif{[1, \p{c}] \setminus S}} (x, y).
  \sure{\code{ev} = y} \right)
  \ast
  \begin{conj}
  \sure{\code{cache[c][mark]} = 0}
  \\
  \ast
    \sure{l \in S \iff \code{cache[$l$][mark]} = 1}  \ast \sure{\card{[1, \p{c}] \setminus S} = j}.
  \end{conj}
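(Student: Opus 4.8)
The plan is to read off the displayed goal as a one-step consequence of the preceding assertion
\[
  \CMod{\Unif{[1, \p{c}] \setminus S}} (x, y).\bigl(
    \sure{\code{ev} = y}
    \ast \sure{\code{cache[c][mark]} = 0}
    \ast \sure{l \in S \iff \code{cache[$l$][mark]} = 1}
    \ast \sure{\card{[1, \p{c}] \setminus S} = j}
  \bigr),
\]
by pulling out of the conditioning modality precisely those almost-sure conjuncts that are constant in the bound variable. Only $\sure{\code{ev} = y}$ mentions the conditioning outcome~$y$; the remaining three conjuncts—the branch condition $\sure{\code{cache[c][mark]} = 0}$, the invariant $\sure{l \in S \iff \code{cache[$l$][mark]} = 1}$, and the cardinality fact $\sure{\card{[1, \p{c}] \setminus S} = j}$—refer only to the program store together with the fixed parameters $S$, $\p{c}$, $j$, and so survive the extraction unchanged.

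First I would reshape the body of the modality into the form $K((x,y)) \ast \sure{\aexpr\at{\I1}}$ required by \ref{rule:sure-str-convex}, taking $K((x,y)) = \sure{\code{ev} = y}$ and letting $\aexpr$ abbreviate the conjunction of the three constant predicates. To build the single $\sure{\aexpr\at{\I1}}$ I apply \ref{rule:sure-merge} twice (associatively), coalescing the three constant $\sure{\cdot}$ facts into one. A single application of \ref{rule:sure-str-convex} then moves $\sure{\aexpr\at{\I1}}$ outside the modality, leaving the conditioning acting on $\sure{\code{ev} = y}$ alone; crucially, the rule delivers this with the extracted fact \emph{separated} from the residual conditioning, which is exactly the $\ast$-structure the goal asserts. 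Finally I split $\sure{\aexpr\at{\I1}}$ back into its three conjuncts via the reverse direction of \ref{rule:sure-merge}, yielding the stated separating conjunction.

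There is no genuine obstacle here: the step is purely a rearrangement of separating conjuncts so that \ref{rule:sure-str-convex} becomes applicable. The one point worth checking is the implicit hypothesis of \ref{rule:sure-str-convex}, namely that the extracted almost-sure fact does not depend on the modality's bound variable; this holds immediately because none of $\code{cache[c][mark]}$, the membership predicate, or the cardinality involves $x$ or $y$. The soundness of the move—that convexity leaves these deterministic facts, and their independence from the conditioned variable, intact—is already discharged by \cref{proof:sure-str-convex}, so nothing beyond this bookkeeping remains.
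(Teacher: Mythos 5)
Your proposal is correct and matches the paper's own step: the paper justifies this assertion by exactly one appeal to \ref{rule:sure-str-convex}, pulling the outcome-independent almost-sure facts out of the $\CMod{\Unif{[1,\p{c}]\setminus S}}$ modality, which is precisely your strategy. Your additional bookkeeping—coalescing the three constant conjuncts with \ref{rule:sure-merge}, extracting once, then splitting back using the $\lequiv$ direction of \ref{rule:sure-merge}—is just the explicit elaboration of what the paper leaves implicit (one could equally iterate \ref{rule:sure-str-convex} three times), so nothing differs in substance.
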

\end{align*}
Last, we apply~\ref{rule:c-unit-l} to obtain~\eqref{reservoir:end},
and ~\eqref{reservoir:wp-if-prim} follows from~\ref{rule:wp-if-unary}.

By simplifying~\eqref{reservoir:wp-if-prim},
we close the loop invariant $P(c)$.
Finally, applying the loop rule~\ref{rule:wp-loop}, we get
\begin{align*}
    \sure{l \in S \iff \code{cache[$l$][mark]} = 1} \ast
    \distAs{\code{ev}}{\Unif{[1, \p{c}] \setminus S}} \ast \sure{\card{[1, \p{c}] \setminus S} = j}.
\end{align*}

\subsubsection{Marker algorithm within a phase}
When reasoning about Marker's algorithm, the pen-and-paper proof uses a
reduction-flavored technique. Given the actual sequence of requests
$\code{seq}$, we compute a modified sequence $\code{worseseq}$ by moving
requests that are not in the initial cache, a.k.a., the clean request, ahead of
those that are in the cache, a.k.a., the dirty
requests.
Each clean request incurs cost 1 wherever it appears,
and dirty request incurs cost 1 if the cache slot holding its value
has been evicted in the current phase and incur cost 0 otherwise.
Thus, the clean requests in \code{seq} and \code{worseseq} incur
the same cost, and the dirty requests in \code{worseseq}
incur at least as much cost as in \code{seq}.
This is an inherently relational argument. The rest part of the pen-and-paper
analysis uses a lot of unary reasoning to upper bound the cost of Marker on
$\code{worseseq}$, and concludes that it also upper bounds the cost of processing \code{seq}.
Though we are not able to prove every step in this analysis formally,
it suggests the need to combine unary and n-nary reasoning.
In the following, we verify an important component of the
unary analysis:
when Marker operates on $M$ clean requests
in the beginning of a phase, the $M$ distinct clean requests incurs exactly cost $m$
and the evicted slots are chosen uniformly.

At the beginning of the phase,
it holds that
\[
  \forall 1 \leq l \leq k. \sure{\code{cache[$l$][mark] = 0}}.
\]
We also assume that $\code{cost}$ is $c$ and $\code{cache}$ is $h$ in the beginning
of the phase.
To express that the Marker is handling requests $\code{seq}[i]$ that are not currently in the cache,
we assert that
\[
\forall 1 \leq n \leq M.\  \forall 1 \leq l \leq k. \sure{\code{seq[n]} \neq h\code{[$l$][value]}}.
\]
The cost increases by exactly $M$ only if these $M$ clean requests are distinct,
so we also require
\[
  \forall 1 \leq n' < n \leq M. \sure{\code{seq[n]} \neq \code{seq[n']}}
\]
We combine these into the precondition and outline the proof at~\Cref{fig:cache-clean-phase}.
The assertions throughout should be conjuncted with $\ownall$ with $\land$ and
we omit it for visual simplicity.
For the loop, we use the loop invariant
\[
  P(i) :=
  \CMod{\kappa(i-1)} S.
        \begin{pmatrix}
             \sure{\code{cost} = c + i - 1}
           \ast \left( \forall i \leq n \leq M.\  \forall 1 \leq l \leq k. \sure{\code{seq[n]} \neq h[l][\code{value}]} \right) \\
             {} \ast \left( \forall l. \sure{l \in S \iff \code{cache[$l$][mark]} = 1 \iff \code{cache}[l][\code{value}] \neq h[l][\code{value}]} \right) \\
             {} \ast \left( \forall 1 \leq n' < n \leq M. \sure{\code{seq[n]} \neq \code{seq[n']}} \right)\\
        \end{pmatrix},
\]
where the kernel $\kappa$ is defined such that for any $1 \leq i \leq m$,
$\kappa(i) = \Unif{\{S \subseteq [1, k] \mid \textsf{size}(S) = i\}}$ --
so $\kappa(i- 1) = \Unif{\big\{S \subseteq [1, k] \mid \textsf{size}(S) = i - 1 \big\}}$.
And we write $\sure{A \iff B \iff C}$
as shorthand for $\sure{A \iff B} \ast \sure{B \iff C}$.

\begin{mathfig}[\small]
  \begin{proofoutline}
  \PREC{
        \sure{\code{cost} = c} \ast
        \begin{pmatrix}
        \forall 1 \leq n\leq M.\  \forall 1 \leq l \leq k. \sure{\code{seq[n]} \neq h[l][\code{value}]}\\
        \ast {} \forall 1 \leq l \leq k. \sure{\code{cache[$l$][mark] = 0}}\\
        \forall 1 \leq n' < n \leq M. \sure{\code{seq[n]} \neq \code{seq[n']}}\\
        \end{pmatrix}
      }\\
  \CODE{i := 1}\\
\CODE{repeat M:}\\
  \begin{proofindent}
    \ASSR{P(i)} \TAG[cache:clean:P(i)]\\
\CODE{hit := 0}\\
    \CODE{j := 1}\\
    \ASSR{\CMod{\kappa(i-1)} S.
        \begin{pmatrix}
             \sure{\code{cost} = c + i - 1} \ast \forall i \leq n\leq M.\  \forall 1 \leq l \leq k. \sure{\code{seq[n]} \neq h[l][\code{value}]}\\
             {} \ast \forall l. \sure{l \in S \iff \code{cache[$l$][mark]} = 1 \iff \code{cache}[l][\code{value}] \neq h[l][\code{value}]} \\
             {} \ast \sure{\code{hit} = 0} \ast \sure{\code{j} = 1}
             \ast \forall 1 \leq n' < n \leq M. \sure{\code{seq[n]} \neq \code{seq[n']}}\\
        \end{pmatrix}
      }\\
    \CODE{repeat k:}\\
    \begin{proofindent}
      \ASSR{Q(j)} \TAG[cache:clean:Q(j)]\\
    \CODE{if cache[j][value]  = seq[i]:}\\
      \begin{proofindent}
        \CODE{cache[j][mark] := 1; hit := 1} \\
      \end{proofindent}\\
      \CODE{j := j+1}\\
    \end{proofindent}\\
    \ASSR{
      \CMod{\kappa(i-1)} S.
        \begin{pmatrix}
          \sure{\code{cost} = c + i - 1} \ast \forall i \leq n\leq M.\  \forall 1 \leq l \leq k. \sure{\code{seq[n]} \neq h[l][\code{value}]}\\
             {} \ast \forall l. \sure{l \in S \iff \code{cache[$l$][mark]} = 1 \iff \code{cache}[l][\code{value}] \neq h[l][\code{value}]} \\
             {} \ast \sure{\code{hit} = 0}
             \ast \forall 1 \leq n' < n \leq M. \sure{\code{seq[n]} \neq \code{seq[n']}}\\
        \end{pmatrix}
      } \TAG[cache:clean:6] \\
    \CODE{if $\ \neg$ hit then:} \\
      \begin{proofindent}
        \CODE{ev:~Reservoir-Samp(k, cache)}\\
        \ASSR{
          \CMod{\kappa(i-1)} S.
          \begin{pmatrix}
          \sure{\code{cost} = c + i - 1} \ast \forall i \leq n\leq M.\  \forall 1 \leq l \leq k. \sure{\code{seq[n]} \neq h[l][\code{value}]}\\
             {} \ast \forall l. \sure{l \in S \iff \code{cache[$l$][mark]} = 1 \iff \code{cache}[l][\code{value}] \neq h[l][\code{value}]} \\
             {} \ast \forall 1 \leq n' < n \leq M. \sure{\code{seq[n]} \neq \code{seq[n']}}\\
             {} \ast \sure{\code{hit} = 0} \ast  \distAs{\code{ev}}{\Unif{[1,k] \setminus S}}
          \end{pmatrix}
         }  \TAG[cache:clean:5] \\
        \ASSR{
          \CMod{\kappa(i-1)} S.
          \CMod{\Unif{[1,k] \setminus S}} u.
          \begin{pmatrix}
          \sure{\code{cost} = c + i - 1} \ast \forall i \leq n\leq M.\  \forall 1 \leq l \leq k. \sure{\code{seq[n]} \neq h[l][\code{value}]}\\
             {} \ast \forall l. \sure{l \in S \iff \code{cache[$l$][mark]} = 1 \iff \code{cache}[l][\code{value}] \neq h[l][\code{value}]} \\
             {} \ast \forall 1 \leq n' < n \leq M. \sure{\code{seq[n]} \neq \code{seq[n']}}\\
             {} \ast \sure{\code{hit} = 0} \ast \sure{\code{ev} = u}
          \end{pmatrix}
          }  \TAG[cache:clean:3]\\
\CODE{cache[ev][value] := seq[i]} \\
        \ASSR{
           \CMod{\kappa(i-1)} S.
          \CMod{\Unif{[1,k] \setminus S}} u.
          \begin{pmatrix}
             \sure{\code{cost} = c + i - 1}
             \ast \left(\forall i < n \leq M.\  \forall 1 \leq l \leq k.
           \sure{\code{seq[n]} \neq h[l][\code{value}]}\right)\\
           {} \ast
           \begin{grp}
             \forall l. \sure{l \in S \iff \code{cache[$l$][mark]} = 1}\\
             \ast \sure{l \in S \cup \{u\} \iff \code{cache}[l][\code{value}] \neq h[l][\code{value}]}\\
           \end{grp} \\
             {} \ast \forall 1 \leq n' < n \leq M. \sure{\code{seq[n]} \neq \code{seq[n']}}\\
             {} \ast \sure{\code{hit} = 0} \ast \sure{\code{ev} = u}
          \end{pmatrix}
        } \TAG[cache:clean:7] \\
          \CODE{cache[ev][mark] := 1} \\
        \ASSR{
          \CMod{\kappa(i-1)} S.
          \CMod{\Unif{[1,k] \setminus S}} u.
          \begin{pmatrix}
          \sure{\code{cost} = c + i - 1} \ast \forall i \leq n\leq M.\  \forall 1 \leq l \leq k. \sure{\code{seq[n]} \neq h[l][\code{value}]}\\
             {} \ast \forall l. \sure{l \in (S \cup \{u\}) \iff \code{cache[$l$][mark]} = 1 \iff \code{cache}[l][\code{value}] \neq h[l][\code{value}]} \\
             {} \ast \forall 1 \leq n' < n \leq M. \sure{\code{seq[n]} \neq \code{seq[n']}}\\
             {} \ast \sure{\code{hit} = 0} \ast \sure{\code{ev} = u}
          \end{pmatrix}
        } \TAG[cache:clean:8] \\
          \ASSR{
          \CMod{\kappa(i)} S.
          \begin{pmatrix}
                \sure{\code{cost} = c + i - 1}  \ast
                \forall i + 1 \leq n \leq M.\  \forall 1 \leq l \leq k. \sure{\code{seq[n]} \neq h[l][\code{value}]} \\
                 {} \ast (\forall l. \sure{l \in (S \cup \{u\}) \iff \code{cache[$l$][mark]} = 1 \iff \code{cache}[l][\code{value}] \neq h[l][\code{value}]}) \\
                 {} \ast \forall 1 \leq n' < n \leq M. \sure{\code{seq[n]} \neq \code{seq[n']}}\\
          \end{pmatrix}
          }
          \TAG[cache:clean:4]\\
        \CODE{cost := cost + 1}\\
        \ASSR{
          \CMod{\kappa(i)} S.
          \begin{pmatrix}
            \sure{\code{cost} = c + i }  \ast
            \forall i + 1 \leq n \leq M.\  \forall 1 \leq l \leq k. \sure{\code{seq[n]} \neq h[l][\code{value}]} \\
                 {} \ast (\forall l. \sure{l \in (S \cup \{u\}) \iff \code{cache[$l$][mark]} = 1 \iff \code{cache}[l][\code{value}] \neq h[l][\code{value}]}) \\
                 {} \ast \forall 1 \leq n' < n \leq M. \sure{\code{seq[n]} \neq \code{seq[n']}}\\
          \end{pmatrix}
          }\\
      \end{proofindent} \\
      \CODE{i := i+1}\\
      \end{proofindent}\\
      \ASSR{\CMod{\kappa(M)} S.
        \begin{pmatrix}
             \sure{\code{cost} = c + M}
              \ast \forall 1 \leq n' < n \leq M. \sure{\code{seq[n]} \neq \code{seq[n']}}
             \\
             {} \ast \forall l. \sure*{
             \begin{conj*}
               l \in S \iff \code{h[$l$][mark]} = 1
               \\\iff \code{cache}[l][\code{value}] \neq h[l][\code{value}]
             \end{conj*}
             }
        \end{pmatrix}
      }
  \end{proofoutline}
  \caption{Cache: clean phase}
  \label{fig:cache-clean-phase}
\end{mathfig}

Starting from $P(i)$, we first apply~\ref{rule:wp-assign} twice
and then enter the inner loop with
\begin{align*}
  Q(j) =
      \CMod{\kappa(i-1)} S.
        \begin{pmatrix}
          \sure{\code{cost} = c + i - 1} \ast \left( \forall 1 \leq n\leq M.\  \forall 1 \leq l \leq k. \sure{\code{seq[n]} \neq \code{h[$l$][value]}} \right)\\
             {} \ast \left( \forall l. \sure{l \in S \iff \code{cache[$l$][mark]} = 1 \iff \code{cache}[l][\code{value}] \neq h[l][\code{value}]}\right) \\
             {} \ast \sure{\code{hit} = 0}
             \ast \left(\forall 1 \leq n' < n \leq M. \sure{\code{seq[n]} \neq \code{seq[n']}} \right)\\
        \end{pmatrix}
\end{align*}
To reestablish $Q(j)$ in~\eqref{cache:clean:Q(j)},
we apply~\ref{rule:c-wp-swap} to reason from fixed $S$. Then we
use~\ref{rule:wp-if-prim}:
the condition $ \forall 1 \leq n\leq M.\  \forall 1 \leq l \leq k. \sure{\code{seq[n]} \neq \code{cache[$l$][value]}}$ implies that we do not enter \code{if} block and
the assertion $Q(j)$ still holds afterwards. Last, we
apply~\ref{rule:wp-assign} to derive that $Q(j)$ holds
after the assignment $\code{j := j+1}$. Since $Q(j) = Q(j+1)$,
$Q(j + 1)$ holds as well.
Thus by~\ref{rule:wp-loop} we have~\eqref{cache:clean:6}.

The step~\eqref{cache:clean:5} is derived using~\ref{rule:c-wp-swap}
and the specs of $\code{Reservoir-Samp}$.
For the step~\eqref{cache:clean:3}, we apply~\ref{rule:c-unit-r} to rewrite
$\distAs{\code{ev}}{\Unif{[1,k] \setminus S}}$ as $\CMod{\Unif{[1,k] \setminus
S}} u. \sure{\code{ev} = u}$ and we use~\ref{rule:c-frame} to pull
$\CMod{\Unif{[1,k] \setminus S}} u.$ ahead. In the next step,
using~\ref{rule:wp-assign}, we get
\begin{align*}
  \CMod{\kappa(i-1)} S.
          \CMod{\Unif{[1,k] \setminus S}} u.
          \begin{pmatrix}
             \sure{\code{cost} = c + i - 1}
             \ast \left(\forall i < n \leq M.\  \forall 1 \leq l \leq k.
              \sure{\code{seq[n]} \neq \code{h[$l$][value]}}\right)\\
           {} \ast
           \begin{grp}
             \forall l. \sure{l \in S \iff \code{cache[$l$][mark]} = 1}\\
             \ast \sure{l \in S \land l \neq \code{ev} \implies \code{cache}[l][\code{value}] \neq h[l][\code{value}]}\\
             \ast \sure{ \code{cache}[l][\code{value}] \neq h[l][\code{value}]  \land l \neq \code{ev} \implies l \in S } \\
           \end{grp} \\
             {} \ast \forall 1 \leq n' < n \leq M. \sure{\code{seq[n]} \neq \code{seq[n']}}\\
             {} \ast \sure{\code{hit} = 0} \ast \sure{\code{ev} = u}
          \end{pmatrix}
\end{align*}
With propositional reasoning, we can replace
 $
 \sure{ \code{cache}[l][\code{value}] \neq h[l][\code{value}]  \land l \neq \code{ev} \implies l \in S }
 $
with
$
  \sure{ \code{cache}[l][\code{value}] \neq h[l][\code{value}]  \implies l \in \{ S \cup u\} }
$
under the condition
$\sure{\code{ev} = u}$;
also, because $\forall i < n \leq M.\  \forall 1 \leq l \leq k.
              \sure{\code{seq[n]} \neq \code{h[$l$][value]}}$,
we can strengthen the assertion
$\sure{l \in S \land l \neq \code{ev} \implies \code{cache}[l][\code{value}] \neq h[l][\code{value}]}$
into
$\sure{l \in S \cup \{u\} \implies \code{cache}[l][\code{value}] \neq h[l][\code{value}]}$.
Thus, we have~\eqref{cache:clean:7}.

Similarly, applying~\ref{rule:wp-assign} and using propositional reasoning,
we can derive~\eqref{cache:clean:8}.
For the step~\eqref{cache:clean:4}, we apply~\ref{rule:c-fuse} to combine
$\CMod{\kappa(i - 1)} S. \CMod{\Unif{[1,k] \setminus S}} u.$ into
$\CMod{\kappa(i - 1) \fuse \Unif{[1,k] \setminus S}} (S, u)$.
We then apply~\ref{rule:c-transf} based on the bijection between
$(S, u)$ and $(S \cup \{u\}, u)$ to get that
such that
\begin{align*}
  \CMod{d} (S', u).
          \begin{pmatrix}
          \sure{\code{cost} = c + i - 1}  \ast \forall i + 1 \leq n \leq M.\  \forall 1 \leq l \leq k. \sure{\code{seq[n]} \neq h[l][\code{value]}} \\
                 {} \ast  (\forall l. \sure{l \in S' \iff \code{cache[$l$][mark]} = 1 \iff \code{cache}[l][\code{value}] \neq h[l][\code{value}]}) \\
                 {} \ast \forall 1 \leq n' < n \leq M. \sure{\code{seq[n]} \neq \code{seq[n']}}\\
          \end{pmatrix},
\end{align*}
where
\begin{align*}
  d &= \bind(\kappa(i - 1), \fun S. \bind(\Unif{[1,k] \setminus S}, \fun u. \dirac{(S \cup \{u\},  u)})) \\
    &= \bind(\kappa(i ), \fun S. \bind(\Unif{[1,k] \setminus S}, \fun u. \dirac(S ,  u)))
\end{align*}
Since $u$ is not used,
we could use~\ref{rule:c-sure-proj} to project out $u$ and
derive that
\begin{align*}
  \CMod{\kappa(i)} S'.
          \begin{pmatrix}
                 \sure{\code{cost} = c + i - 1}  \ast  \forall i + 1 \leq n \leq M.\  \forall 1 \leq l \leq k. \sure{\code{seq[n]} \neq h\code{[$l$][value]}} \\
                 {} \ast (\forall l. \sure{l \in S' \iff \code{cache[$l$][mark]} = 1 \iff \code{cache}[l][\code{value}] \neq h[l][\code{value}]}) \\
                 {} \ast \forall 1 \leq n' < n \leq M. \sure{\code{seq[n]} \neq \code{seq[n']}}\\
          \end{pmatrix}
\end{align*}

Then, by~\ref{rule:wp-assign} and~\ref{rule:wp-loop}, we can derive the post-condition
\begin{align*}
  \CMod{\kappa(M)} S.
        \begin{pmatrix}
             \sure{\code{cost} = c + M} \ast \forall 1 \leq n' < n \leq M. \sure{\code{seq[n]} \neq \code{seq[n']}} \\
             {} \ast \forall l.
             \sure*{
               l \in S
               \iff
               \code{cache[$l$][mark]} = 1
               \iff \code{cache}[l][\code{value}] \neq h[l][\code{value}]
             }
        \end{pmatrix}
\end{align*}
Last, with~\ref{rule:sure-str-convex}, we can establish
\begin{align*}
 \sure{\code{cost} = c + M} \ast
  \CMod{\kappa(M)} S.
             \forall l.
             \sure*{
               l \in S
               \iff
               \code{cache[$l$][mark]} = 1
               \iff \code{cache}[l][\code{value}] \neq h[l][\code{value}]
             }
\end{align*}
which states our goal that $\code{cost}$ increases by exactly $M$ and
the set of evicted indices $S$ is picked uniformly.

\end{document}